\newtheorem{theorem}{Theorem}[section]
\newtheorem*{remark*}{Remark}
\newtheorem*{example}{Example}
\newtheorem{theorem*}{Theorem}
\newtheorem{corollary}[theorem]{Corollary}
\newtheorem{lemma}[theorem]{Lemma}
\newtheorem{proposition}[theorem]{Proposition}
\theoremstyle{definition}
\newtheorem{definition}[theorem]{Definition}
                        \renewcommand{\headrulewidth}{0pt}      
                        \renewcommand{\footrulewidth}{0.4pt}    
\definecolor{sapienza}{RGB}{130,36,51} 
\definecolor{cust1}{RGB}{85,85,85}
\definecolor{cust2}{RGB}{212,212,212}
\begin{document}
\frontmatter 	   
\pagestyle{empty}  


\IfFileExists{\jobname-frn.pdf}{}{%
\immediate\write18{pdflatex \jobname-frn}}

\includepdf[lastpage=1]{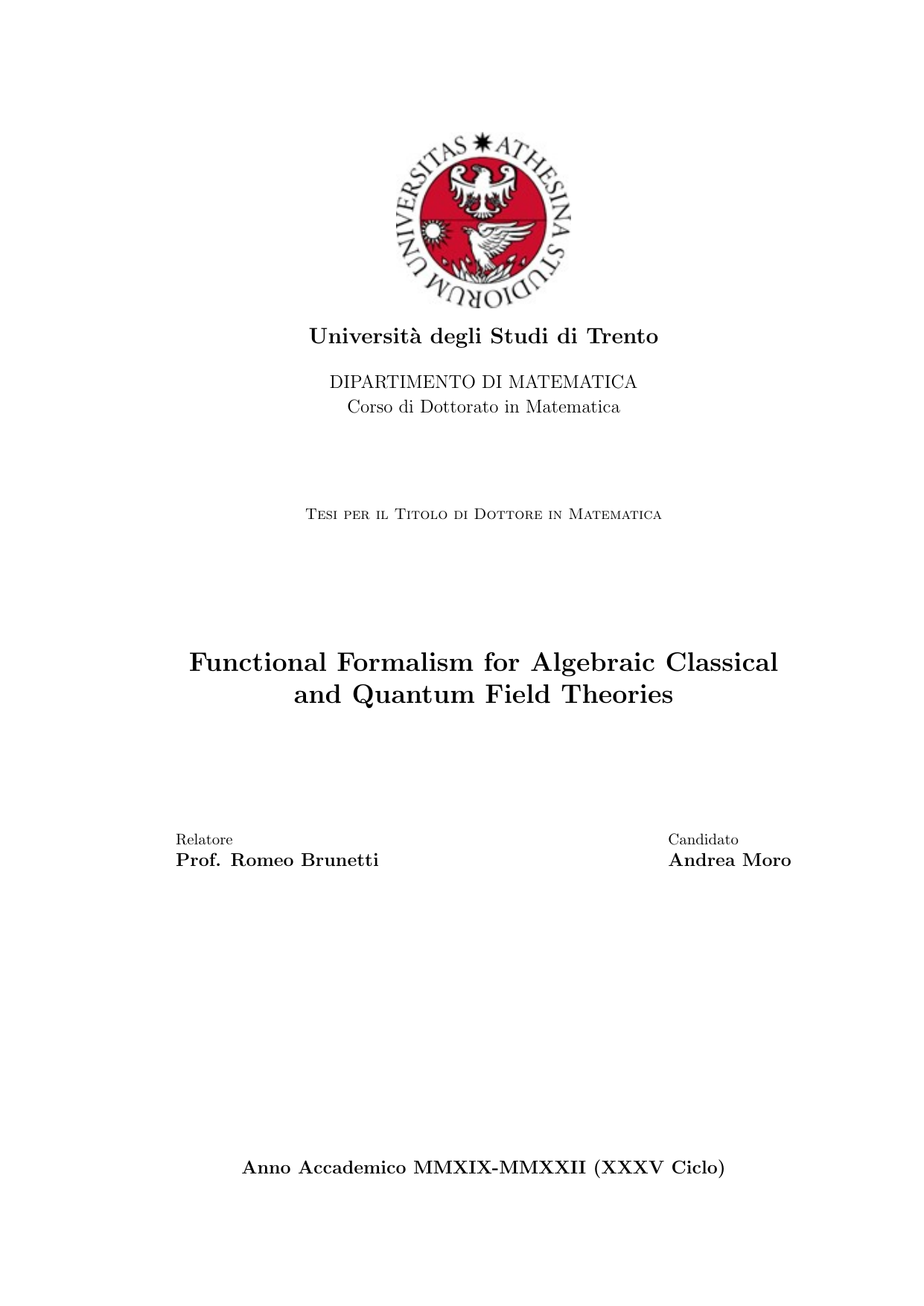}
\newpage
\afterpage{\null\thispagestyle{empty}\clearpage} 
\thispagestyle{plain}			
\setlength{\parskip}{0pt plus 1.0pt}
\section*{Abstract}
In the first part of this thesis we study the generalization of the recent algebraic approach to classical field theory by proposing a more general setting based on the manifold of smooth sections of a non-trivial fiber bundle. Central is the notion of observables/functionals over such sections, \textit{i.e.} appropriate smooth functions on them. The kinematic will be further specified by means of Peierls brackets, which in turn are defined via the causal propagators of linearized field equations. In the second part we implement deformation quantization of the algebras obtained above in the simpler setting of scalar field theory. Wick powers and time ordered products for quantum field theories in curved spacetimes are defined by giving a set of axioms which, when implemented, defines uniquely, up to some classifiable ambiguities, the aforementioned quantities. Those ambiguities are known to be tightly restrained by locality, covariance and other regularity conditions. One of the additional constraints used was to require continuous and analytic dependence on the metric and coupling parameters. It was recently shown that this rather strong requirement could be weakened, in the case of Wick powers, to the so-called parametrized microlocal spectrum condition. We therefore show the existence of Wick powers satisfying the above condition and extend this axiom to time ordered products, while reestablishing the usual uniqueness and existence results in light of the new constraint.

\vfill

\thispagestyle{empty}
\mbox{}

\begin{singlespace}
 \tableofcontents 	

 \printnomenclature
 \printglossaries

\end{singlespace}

\afterpage{\null\thispagestyle{empty}\clearpage} 

\mainmatter	  
\clearpage
\pagestyle{fancy} 
\renewcommand{\chaptermark}[1]{\markright{\chaptername\ \thechapter.\ #1}{}}
\renewcommand{\sectionmark}[1]{\markright{\thesection.\ #1}}
\lhead{} 
\chead{}                   
\rhead{\slshape \rightmark} 
\fancyfoot{} 
\fancyfoot[LE,RO]{\thepage}
\fancyfoot[LO,RE]{Andrea Moro}

\renewcommand{\headrulewidth}{0.4pt} 
\renewcommand{\footrulewidth}{0.4pt} 



\addcontentsline{toc}{chapter}{Introduction}

\chapter*{Introduction}

\thispagestyle{plain}

\section*{Classical algebraic approach to field theories: state of the art}

The mathematical treatments of classical field theories find inspiration and draw ideas from two main sources: the Hamiltonian and Lagrangian formalisms of classical mechanics. If one intends also to include relativistic phenomena then there remain essentially only two rigorous frameworks, both emphasizing the geometric viewpoint: the multisymplectic approach (see \cite{gotay1998momentum}, \cite{gotay2004momentum}, \cite{forger2005covariant}) and another related to the formal theory of partial differential equations (see \cite{fati}, \cite{krupka}). They have several points in common and there is now a highly developed formalism leading to rigorous calculus of variations. The concept of \textit{dynamics} plays a distinguished role here: usually it is assigned by some PDEs called \textit{equations of motion}. It has to be regarded as a lucky coincidence that equations of motions in physical theories can be obtained via the principle of \textit{least action} once a certain a priori object, the Lagrangian, has been fixed. In this geometrical setting the notion of solutions is implemented as a submanifold of an appropriate jet bundle. We stress that this formalism carefully avoids treating the intrinsic infinite dimensional degrees of freedom of the configuration spaces.\\
On the other hand, there exists another treatment of classical mechanics that emphasises more the algebraic and the analytic structures and is intrinsically infinite dimensional, which is named after the pioneering works of von Neumann (\cite{neumann1932operatorenmethode}) and Koopman (\cite{koopman1931hamiltonian}) and works directly in Hilbert spaces. We could use another independent mathematical viewpoint which, combining algebraic geometry and quantum mechanics, naturally leads to an infinite dimensional setting: one can show that
\noindent
\begin{center}
``\textit{if $M$ is a second countable differentiable manifold then the ring of morphisms $\mathrm{Hom}\left( C^{\infty}(M,\mathbb{R}),\mathbb{R}\right)$ is canonically isomorphic to $M$}''\footnote{This result can be found \textit{e.g.} in Problem 1-C pp. 11-12 of \cite{milnor} and will be therefore called \textit{"Milnor's exercise"}. It was originally proved in \cite{pursell1953algebraic}. A more recent proof can be found in Lemma 35.8 and Corollary 35.9 pp. 300-301 in \cite{kolar2013natural}. We also remark that a consequence of the above result (see \textit{e.g.} in Corollary 35.9 pp. 301 in \cite{kolar2013natural}) is the following: for any two manifolds $M$, $N$ there is a bijective mapping 
$$
C^{\infty}(M,N) \to \mathrm{Hom}\big( C^{\infty}(M,\mathbb R) , C^{\infty}(N,\mathbb R)\big), \quad f\to f^*
$$
where $f^*(g)= g\circ f$ is the pull-back mapping.} ;
\end{center}
\noindent
this implies that properties of the manifold $M$ might be encoded into the algebra of smooth functions $C^{\infty}(M,\mathbb{R})$. These ideas converged in the last decade when another treatment was developed: it also drew inspirations from perturbative quantum field theories in the algebraic fashion (see \textit{e.g.} \cite{BDF09}). This approach is close in spirit to the von Neumann-Koopman formalism and emphasises more the observables point of view by dealing directly with the configuration space as an infinite dimensional space.\\
It is then natural to endow the infinite dimensional space of field configurations with a manifold structure and later define observables as smooth functions therein. If one is willing to generalise the setting of \cite{koopman1931hamiltonian, neumann1932operatorenmethode} to field theories finds immediately an insurmountable difficulty, namely, a result by Eells and Elworthy (see \cite{ely}, \cite{ely2}) constrains a configuration space, viewed as a second countable Hilbert manifold, to be smoothly embedded into its ambient space, \textit{i.e.}, it is just an open subset of the ambient Hilbert space. If, instead of Hilbert manifolds, one allows for Banach manifolds other difficulties emerge: by a result of \cite{omori}, if $G$ is a Lie Group with Banach manifold structure and acts transitively, effectively (\textit{i.e.} the only element of the isotropy group of $G$ is the identity) and smoothly on a compact finite dimensional manifold $M$, then $G$ has to be finite dimensional. This is problematic since the action of $\mathrm{Diff}(M)$ on $M$ is transitive and effective; however, the above result cannot be smooth in the Banach smooth structure. Hence, we need to bypass those facts of life and find a clever replacement. The solution is offered regarding the configuration space as an infinite dimensional manifold modelled over locally convex spaces (see \textit{e.g.} \cite{michor1980manifolds, convenient, hamilton1979inverse}). Here, one relies heavily on the clarifications given in the last thirty years about the most appropriate calculus on such spaces (see, \textit{e.g.}, \cite{mb,convenient}).\\
Based on the ideas exposed above, and using as inputs also some crucial notions belonging to microlocal analysis, the authors of \cite{acftstructure} described the case of scalar field theories on globally hyperbolic spacetimes. Observables (or equivalently functionals) are implemented as spacetime compactly supported smooth functions over the infinite dimensional manifold of scalar field configurations endowed with Bastiani smooth calculus and are classified with respect to regularity conditions typical of microlocal analysis. Instead of introducing dynamics by variation of an action functional, the latter is equivalently introduced by differentiation, in the smooth structure described above, of \textit{microlocal functionals}. The old variational setting is essentially recovered using the characterization of microlocal functionals \cite[Proposition 2.2]{acftstructure}. It is important to notice that observables treated henceforth are \textit{off-shell}, \textit{i.e.} no extra dynamical condition is imposed on them; this is especially relevant in the context of quantum field theory where often quantum fields are studied on-shell. In this thesis we avoid working on shell focusing on the more general off-shell approach.\\
Once a dynamic, that is a family of microlocal functionalsinducing the same normally hyperbolic linearized equation, is chosen, the authors introduce a Poisson structure on the space of functionals by the Peierls bracket (c.f \cite[Definition 3.5]{acftstructure}). It can be showed the the Poisson bracket of two microlocal functionals is well defined however, it fails to be microlocal, therefore the space of microlocal functionals is enlarged to the bigger space of microcausal functionals (\textit{c.f.} \cite[Definition 3.6]{acftstructure}) by optimally constraining the singularity structure of such observables. It is then shown (see \cite[Proposition 3.4 and Corollary 3.3]{acftstructure}) that the Poisson bracket of two microcausal functionals is well defined, closed and satisfies the Jacobi identities. \\
The space of microcausal functionals has many properties: it can be given the topology of a nuclear locally convex space (see \cite[Theorem 4.1 and the discussion in Remark 4.3]{acftstructure}), has the $C^{\infty}$-ring structure\footnote{An algebra $\mathcal{A}$ has the $C^{\infty}$-ring structure if, given any $a_1,\ldots,a_n\in \mathcal{A}$, $f\in C^{\infty}(\mathbb K^n,\mathbb K)$, there are mappings $\rho_f:\times^n\mathcal{A}\to \mathcal{A}$ such that if $g\in C^{\infty}(\mathbb K^m,\mathbb K)$, $f_i\in C^{\infty}(\mathbb K^n,\mathbb K)$ with $i=1,\ldots,m$ then $$\rho_h\big(\rho_{f_1}(a_1,\ldots,a_n),\ldots,\rho_{f_m}(a_1,\ldots,a_n)\big)=\rho_{h\circ (f_1,\ldots, f_m)}(a_1,\ldots,a_n).$$The field $\mathbb K$ can either be $\mathbb R$ or $\mathbb C$.} (\cite[Theorem 4.2]{acftstructure}), by $(i)-(iii)$ in \cite[Proposiotion 4.1]{acftstructure} one can generate partitions of unity with microcausal functionals on the configuration space and, by $(iv)-(v)$ in the same proposition, an analogue of \textit{Milnor's exercise} (see \textit{e.g.} Problem C-1 pp. 11 in \cite{milnor}) in the infinite dimensional setting holds.\\
On-shell observables are finally defined as elements of the quotient of the space of microcausal functionals (\textit{i.e.} off-shell observables) by the ideal (with respect to the pointwise product) of functionals vanishing on solutions of the dynamic. An alternative is to define this ideal as the set of functionals which are kinematical derivations of the (family) microlocal action inducing the dynamic. The latter is also an ideal for the Poisson structure (\cite[Proposition 4.2]{acftstructure}) but it is an open problem to determine the relation between those two definitions.\\

\section*{Functional approach to quantum field theories: state of the art}
Algebraic quantum field theory is an axiomatic, mathematically rigorous framework to study quantum field theories originating from \cite{haag1964algebraic}. The idea is to assign a suitable $*$-algebra with product $\star$ to the physical system under consideration, then once a state on this algebra has been assigned, one can use the GNS reconstruction theorem to recover the standard quantum mechanical approach. The above $*$-algebra must also be endowed with an additional product: the time-ordered product $\cdot_T$. The latter is easily defined for regular functionals, and can be extended to microlocal functionals, however it cannot be defined for general microcausal functionals. The $\star$ product is the deformation quantization product, the other defines the $S$-matrix which is crucial in perturbation theories to treat interactions. The definition of the $\cdot_T$ product is plagued by various problems such as the $IR$ and $UV$ divergences as well as the need to work in formal series of the quantum deformation parameter. The latter is generally tolerated, the first is cured by assuming the interactions have compact support whereas to cure the \textit{UV-divergence} problem requires some work. For instance, a general framework based on the the ideas of Bogoliubov, Parasiuk, Hepp, Zimmermann (see for details \cite{bogoliubov1957multiplication, hepp1966proof, zimmermann1968power,zlmmbrmann2000convergence}) which form the backbone of the \textit{BPHZ-renormalization scheme} aims at eliminating divergences at each perturbative order in $\hbar$ by expressing time-ordered products in terms of Feynman diagrams and then subtracting ad-hoc counterterms. Later on, in \cite{epstein1973role}, Epstein and Glaser implemented a renormalization scheme with extra physically relevant constraints (\textit{e.g.} unitarity and causality) which are shown to be conserved at each renormalization step. These renormalization scheme, \textit{i.e.} the subtractions of terms, is not unique and was shown in \cite{popineau2016pedagogical} that the \textit{renormalization freedom} is tightly constrained by the requirements made in the Epstein-Glaser framework; moreover, the different renormalization terms were precisely those of the St\"uckelberg-Petermann renormalization group (see \cite{stueckelberg1953normalisasion}). \\
The approach described above uses many features of Minkowski spacetime, moreover renormalization is carried out in Fock spaces. In generalizing this approach to curved spacetimes, one runs into a set of problems: first of all translational invariance (more generally Poincaré covariance) is generally lost, no vacuum state is present, it is not possible to use Fourier transform methods to regularize Feynman diagrams and there is no general connection between the Euclidean and Lorentzian theory such as Osterwalder-Schrader theory (see \cite{osterwalder1973axioms} for details). Those problems where given a solution by using ideas from Epstein-Glaser, the introduction of microanalytical techniques that substituted the Fourier transform with the analysis of singularities by means of the wave front set. In particular, Radzikowski showed in \cite{radzikowski1996micro} that microlocal analysis allows a key characterization of the spectrum condition in terms of wave front sets. Building on these results Brunetti, Fredenhagen and collaborators produced a framework, (see \cite{brunetti1996microlocal,BF97}) to study quantum field theory in curved (globally hyperbolic) spacetimes which did not require a vacuum state or any Fourier transform methods. This framework was reminiscent of the one provided by \cite{haag1964algebraic} for Minkowski spacetime, which stated a set of axioms describing the properties of a $C^*$-\textit{algebra of observables} associated to each spacetime region. Building on this approach, but dropping the rather strong requirement of $C^*$-algebra in favor of the weaker $*$-algebra of formal power series in $\hbar$, the authors of \cite{BF97, BDF09} derived what is now called \textit{perturbative algebraic quantum field theory} (pAQFT for short). The next step was to replace Poincaré covariance: this was accomplished by introducing the principle of locality and covariance for quantum field theories (\cite{BFV03,hollands2001local}) which was then extended to time-ordered products in \cite{hollands2001local,hollands2002existence,hollands2005conservation}. The locally covariant formulation of quantum field theories is set up in categorical terms by requiring that quantum fields to be natural transformations; however, at its core, this formulation implies that quantum processes ought to be localizable in spacetime and that the setting of the theory is not a just a fixed spacetime, but a family of spacetimes isometrically embedded into one another. 
In Chapters \ref{chapter_Wick} and \ref{chapter_TO} we employ the off-shell functional formalism of \cite{BDF09,acftstructure} to the end of completely removing the analyticity requirements from the renormalization scheme of \cite{hollands2001local, hollands2002existence}. More precisely, we shall generalize the microlocal spectrum condition to time-ordered products while reestablishing the existence and uniqueness results. To start the Epstein-Glaser induction step for perturbative renormalization it is fundamental to show the existence of Wick powers satisfying the weak regularity requirement of \cite{khavkine2016analytic}.\\

\section*{Structure of the thesis and main results}

\subsection*{Chapter 1: Preliminaries}

In this chapter we collect all the notions that will be employed in the subsequent chapters. In particular, we shall focus on some differential geometry (\Cref{section_Lorentzian}, \Cref{section_fiber_bundles}) and on locally convex spaces and two notions of calculus therein (\Cref{section_LCS&calculus}, \Cref{section_Bastiani}, \Cref{section_convenient}).

\subsection*{Chapter 2: Non-trivial bundles and algebraic classical field theory} 

The aim of this chapter is to extend the classical algebraic approach of \cite{acftstructure} from scalar field to fields of generic non-trivial bundles. In \Cref{section_topology_map_manifold} and \ref{section_map_manifold} we introduce the topology and the infinite dimensional manifold structure for spaces of mappings. In \Cref{section_observables} we describe the notions of observables/functionals as compactly \textit{space-time supported} (\Cref{def_1_func_support}), smooth (\Cref{def_1_func_differentiability}) functions on the manifold $\Gamma^{\infty}(M\leftarrow B)$. Classes of functionals are presented (\Cref{def_1_func_classes}) and characterized in more intuitive ways (\Cref{prop_1_additivity} and \ref{porop_1_muloc_charachterization}). Then we study a general class of action functionals which admits normally hyperbolic linearized differential equation and define, in \ref{def_1_Peierls}, the Peierls bracket, to make the letter a closed operation, in \Cref{def_1_WF_mucaus} and \Cref{thm_1_mucaus_1}, we enlarge the space of microlocal functionals to microcausal functionals: $\mathcal{F}_{\mu c}(B,\mathcal{U},g)$. Finally we show that the Peierls bracket satisfies the Jacobi identity (\Cref{thm_1_jacobi}). Finally, we characterize $\mathcal{F}_{\mu c}(B,\mathcal{U},g)$ with the structure of a nuclear locally convex space (\Cref{thm_1_mucaus_top}), show that it has the $C^{\infty}$ ring property (\Cref{prop_1_C-infty_ring}) and other additional properties in \Cref{prop_1_mu_caus_top_prop}. In \Cref{lemma_1_regular_density}, we show, for the case of scalar field theories, the density of regular functionals into microcausal functionals.

\subsection*{Chapter 3: Functional formalism in quantum field theory: Wick powers} 

The aim of this chapter is to generalize the definition of Wick powers in \cite{khavkine2016analytic} to off-shell Wick powers and show their existence. Specifically, in \Cref{section_natural_bndl&background_geometries} we introduce the notion of natural bundle together with the example of the bundle of background geometries which we use throughout the rest of the thesis. Crucial for the characterization results of \Cref{thm_2_Moretti_Kavhkine} are \Cref{proposition_2_covariant_coordinates} and \Cref{thm_2_covariant_identity}. In \Cref{section_Hadamard_parametrix} we derive the local expression of the Hadamard parametrix. In \Cref{section_Wick_powers} we introduce the algebraic functional formalism for quantum field theories. This leads to the creation of the abstract algebra of microcausal functionals \eqref{eq_2_abstract_muc_algebra} which possess a deformation quantization product $\star$. We define off-shell Wick powers in \Cref{def_2_Wick_quantum_powers}, along the lines of \cite{khavkine2016analytic} we show their characterization result in \Cref{thm_2_Moretti_Kavhkine}. Finally, using the equivalence of weak regularity and convenient smoothness in Proposition \ref{prop_2_equivalence_conv-smooth_w-regularity} we show existence of Wick powers (\Cref{thm_2_existence_Wick_polynomials}).

\subsection*{Chapter 4: Functional formalism in quantum field theory: Time ordered products} 

The aim of this chapter is to extend the \textit{parameterized microlocal spectrum condition} (i.e. condition $(iv)$ in \cite{khavkine2016analytic}) to time-ordered products, then show existence and uniqueness results with the new definition given. We start by collecting in \Cref{section_dang} the results concerning the extensions of distributions to submanifolds. Then in \Cref{section_TO} we present the definition of time-ordered products \textit{c.f.} \Cref{def_3_TO_products}. Our definition does use the functional formalism (\textit{e.g.} to prescribe the singularity structure) and is off-shell. In \Cref{section_TO_existence}, we present the inductive construction to establish the existence of time-ordered products. This relies on the fact that renormalization is equivalent to extending certain distributions, and culminates with \Cref{thm_3_TO_renormalization_1} and \ref{thm_3_TO_renormalization_2}. Finally in \Cref{thm_2_uniqueness_TO}, as in the case of Wick powers, we characterize how two families of time-ordered products may differ, and in Corollary \ref{coro_2_TO_uniqueness} we further specialize this result.

\chapter{Preliminaries}
\thispagestyle{plain}

We describe the content of this chapter:\\

In \Cref{section_Lorentzian} we shall recall some topological properties of a Lorentzian manifold related to the causality concept. Also a notion of Cauchy hypersurface (Definition \ref{def_1_Cauchy_hypersurface}) is introduced together with the notion (Definition \ref{def_1_global_hyperbolicity}) and properties (\Cref{thm_1_global_hyperbolicity}) of globally hyperbolic manifolds.\\

In \Cref{section_fiber_bundles} we introduce the notion of bundle (Definition \ref{def_0_fiber_bundles}), which has a geometric relevance on its own, but, on top of that, gives Physics a toolkit to place physical entities such as \textit{fields} (see \eqref{eq_0_space_of_smooth_sections} defining the space of fields). We will also specify certain special classes of bundles and introduce connections. Then we describe the Peetre-Slov\'ak Theorem (\Cref{thm_A_Peetre_Slovak}) which is a characterization of finite order differential operators between sections of bundles. The relevance of this theorem lies in its applications: we use it in Proposition \ref{porop_1_muloc_charachterization} to characterize microlocal functionals as \textit{local integrals of Lagrangians}, and in \Cref{thm_2_Moretti_Kavhkine}, \Cref{thm_2_uniqueness_TO} to characterize families of Wick powers and Time-ordered products. Among finite order differential operators we shall pay close attention to second order globally hyperbolic operators \textit{i.e.} wave equations (\textit{c.f.} Definition \ref{def_1_global_hyperbolicity}) and how, in this special case, we can derive a strong framework for solving wave equations geometrically (\textit{c.f.} Theorem \ref{thm_1_properties_of_Green_functions}).\\

In \Cref{section_LCS&calculus} we recall same basics facts about locally convex spaces and detail the constructions of the topological vector spaces structures on the spaces of smooth functions $C^{\infty}(M)$, $C^{\infty}_c(M)$, generalizing those to spaces of sections of vector bundles $\Gamma^{\infty}(M\leftarrow E)$, $\Gamma^{\infty}_c(M\leftarrow E)$ and their duals: the spaces of distributions. \textit{En passant}, we also briefly recall the notion of wave front set for distributions and how this can be used to compose integral kernels with singularities. Handling wave front sets is of paramount importance throughout the thesis and constitutes an essential part of the microanalytical formalism for quantum field theories.
Finally we will describe two possible notions of calculus on locally convex spaces. The first one is Bastiani calculus which is a generalization of calculus in Banach spaces (for details, we refer to \cite{mb,michal1938differential} and \cite{hamilton1979inverse} for the case of Fréchet spaces). On the one hand this calculus is very natural for it enjoys many of the properties of calculus: being smooth is stable under compositions, the chain rule holds and one can also define integrals. On the other hand, Bastiani calculus does not posses the following Cartesian closedness property: if $E, \ F, \ G$ are locally convex spaces, then
\begin{equation}\label{eq_0_cartesian_closedness}
    C^{\infty}\big( E, C^{\infty}(F,G)\big) \simeq C^{\infty}\big( E\times F,G\big) .
\end{equation}
Beyond being a somewhat natural categorical request for a calculus in locally convex spaces, it has great practical utility in our case due to the joint smoothness requirement that enters the definition of Wick powers (see $(iv)$ in Definition \ref{def_2_Wick_quantum_powers}). A suitable candidate satisfying \eqref{eq_0_cartesian_closedness}, called \textit{convenient calculus}, is provided in \cite{convenient}: it satisfies most of the natural properties of calculus (\textit{e.g.} smoothness is stable under compositions, the chain rule holds and integrals of curves are well defined), however, it has quite a drawback: smooth mappings may fail to be continuous (see Proposition 2.2 and $\S$2 of \cite{glockner2005discontinuous}). In the particular case of Fréchet spaces however, convenient calculus and Bastiani calculus are equivalent (see $(iv)$ Proposition \ref{prop_A_prop_of_c_infty_top}).

\section{Selected topics in Lorentzian geometry}\label{section_Lorentzian}

Let $(M,g)$ be a Lorentzian manifold. At each $x\in M$ we can choose coordinates for which $g_x=\mathrm{diag}(-1,1, \ldots,1)=\eta$.A non-zero tangent vector $v_x \in T_x M$ is called \textit{timelike} if $g_x(v_x,v_x) < 0$, \textit{lightlike} or \textit{null} if $g_x(v_x,v_x) = 0$, \textit{spacelike} if $g_x(v_x,v_x) > 0$.\\

At each point of $M$ we can define a \textit{direction of time}, more precisely, we seek a way to give a consistent notion of future-directed and past-directed for any timelike tangent vector on $TM$. Since the latter set is the interior of a cone with two disconnected components, we can consistently pick one of those two cones and call it \textit{future-directed} if there is a global timelike vector field $X$. From now on we shall assume that whenever we work with a Lorentzian manifold, it will have a time orientation, i.e. a global timelike vector field $X$ defining a future orientation.\\

If $x$, $y \in M$, we write $x \ll y$ if there exist a future directed, timelike smooth curve $\gamma$ joining $x$ to $y$, $x \leq y$ if either $x=y$ or there is a future directed, causal smooth curve $\gamma$ joining $x$ to $y$; in the latter case we write $x<y$. The sets $I^{\pm}(x)= \lbrace y \in M : x \ll y \rbrace$, $J^{\pm}(x)= \lbrace y \in M : y \leq x \rbrace$ are respectively called the \textit{chronological future/past} of $x$, \textit{causal future/past} of $x$. Analogously, if $U \subset M $ is open, we denote by $I^{\pm}(U)$ (resp. $J^{\pm}(U)$) the sets $\cup_{x\in U}I^{\pm}(x)$ (resp. $\cup_{x\in U}J^{\pm}(x)$). Note that the sets $J^{\pm}(U)$ are in general nor open neither closed, for example, consider spacetime $(\mathbb{R}^2 \backslash [ -2,2 ] \times 1, \mathrm{diag}(-1,1) )$ and $J^+((2,0))$. However, one can show that the sets $I^{\pm}(x)$ are always open; therefore, taking intersections of the form $I^{+}(x) \cap I^{-}(y)$ we get a family of open sets, which we call \textit{chronological open diamonds} by the shape they have on Minkowski spacetime. Such a family defines a topology on $M$, called the \textit{Alexandrov topology}. By construction (those sets are already open in $M$ with its standard topology) the Alexandrov topology is coarser than the original one. In the same spirit one can construct a family of \textit{causal diamonds} by $$	J(x,y) \doteq J^{+}(x) \cap J^{-}(y).$$If $A \subseteq M$, is called \textit{future (past) compact} if $J^+_M(x) \cap A$ ($J^-_M(x) \cap A$) is compact $\forall x \in M$.\\

Next we present the necessary tools to define globally hyperbolic spacetimes, \textit{i.e.} those spacetimes where the Cauchy problem can be defined in a meaningful way. If $A$ is a subset of $M$, we call it \textit{achronal} if every timelike curve intersects it at most once, \textit{acausal} if every causal curve intersects $A$ at most once. Any achronal subset is acausal, while the opposite is not always true, for example, in any closed lightlike cone on Minkowski spacetime, which is achronal, a light ray passing through the vertex and any other point of the boundary is a causal curve intersecting $A$ infinitely many times.

\begin{definition}\label{def_1_Cauchy_hypersurface}
A subset $\Sigma$ of $M$ is called Cauchy hypersurface if every inextendible timelike curve on $M$ meets $\Sigma$ in exactly one point.
\end{definition}

If $M$ admits a Cauchy surface $\Sigma $, given any $y \in M$ and any maximally extended timelike curve $\gamma$ through $y$, then $\gamma$ will intersect $\Sigma$ at some point $x$, and we will have either $x=y$ or $x \ll y$, $y \ll x$. As a result, we can write $M$ as the disjoint union $M=I^+(\Sigma) \bigsqcup \Sigma \bigsqcup I^-(\Sigma)$. Intuitively, a Cauchy hypersurface is the locus of points on the manifold where all the past is known, (in the sense that past particles and light rays will hit the surface), and the future is predictable given some notion of evolution.

\begin{definition}\label{def_1_causal_spacetime}
Let $M$ be a time-oriented Lorentzian manifold, then
\begin{itemize}
\item[$(i)$] $M$ is called \textit{causal} if there are not closed causal curve on $M$,
\item[$(ii)$] $U \subseteq M$ open is called \textit{causally convex} if any causal curve connecting points of $U$ is contained in $U$,
\item[$(iii)$] $M$ is \textit{strongly causal at} $x\in M$ if every open neighborhood of $x$ contains an open causally convex neighborhood. If this condition holds in very point, then $M$ is called strongly causal.
\end{itemize}
\end{definition}

\begin{theorem}{\textbf{(Kronheimer, Penrose)}}
A time oriented Lorentz manifold $(M,g)$ is strongly causal if and only if the Alexandrov topology coincides with the original topology on $M$.
\end{theorem}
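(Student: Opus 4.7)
The plan is to prove the two implications separately. I would start by observing that chronological diamonds $I^+(x)\cap I^-(y)$ are always open in the original manifold topology, since both $I^+(x)$ and $I^-(y)$ are open; consequently the Alexandrov topology is automatically coarser than the original. The theorem therefore reduces to the question of whether the diamonds form a basis of neighborhoods around each point, and I would use causal convexity as the bridge between this basis condition and the definition of strong causality.

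For the forward implication ($\Rightarrow$), assume strong causality and fix $p\in M$ with an open neighborhood $U$. Strong causality at $p$ provides a causally convex open neighborhood $V\subseteq U$ of $p$. Picking a future-directed timelike curve through $p$ and choosing points $x,y$ on it sufficiently close to $p$, one arranges $x,y\in V$ with $x\ll p\ll y$. The diamond $I^+(x)\cap I^-(y)$ then contains $p$, and the key step is to show it is contained in $V$: any $q$ in the diamond is joined to $x$ and to $y$ by timelike segments, whose concatenation is a causal curve from $x\in V$ to $y\in V$ passing through $q$, and causal convexity of $V$ forces $q\in V\subseteq U$. Hence the chronological diamonds form a basis at $p$, and the two topologies agree.

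For the reverse implication ($\Leftarrow$), assume the topologies coincide. Given $p$ and any open $U\ni p$, $U$ is Alexandrov-open, so there exist $x,y$ with $p\in I^+(x)\cap I^-(y)\subseteq U$. I would then verify that the diamond itself is causally convex: for $q_1,q_2$ in the diamond and any intermediate $r$ on a causal curve from $q_1$ to $q_2$, the chains $x\ll q_1\leq r$ and $r\leq q_2\ll y$ upgrade, by the push-up lemma, to $x\ll r\ll y$, so $r$ remains in the diamond. This produces a causally convex open neighborhood of $p$ sitting inside $U$, establishing strong causality at $p$.

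The main technical input, used in both halves, is the standard causal \emph{push-up lemma}: $a\ll b\leq c$ or $a\leq b\ll c$ implies $a\ll c$. This is the only nontrivial causal-analytic ingredient: in the reverse direction it is what promotes $x\leq r\leq y$ to $x\ll r\ll y$ and so keeps $r$ inside the chronological diamond, while in the forward direction it underlies the harmless fact that concatenating two timelike segments yields a curve which is still causal between $x$ and $y$, matching the hypothesis of causal convexity of $V$. Everything else is bookkeeping with the definitions of $I^{\pm}$, $J^{\pm}$ and causal convexity, so I would expect no further obstacles beyond being careful about the timelike versus causal distinction at each step.
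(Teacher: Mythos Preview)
The paper states this classical theorem without proof, simply attributing it to Kronheimer and Penrose, so there is no argument to compare against. Your proof is correct and is the standard one: the forward direction uses causal convexity of a small neighborhood to trap a chronological diamond inside it, and the reverse direction verifies that chronological diamonds are themselves causally convex via the push-up lemma. One minor remark: in the forward direction you do not actually need the push-up lemma---the concatenation of two future-directed timelike curves is trivially a future-directed causal (in fact timelike) curve between points of $V$, which is all causal convexity asks for; the push-up lemma is genuinely needed only in the reverse direction to promote $x\ll q_1\leq r\leq q_2\ll y$ to $x\ll r\ll y$.
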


\begin{definition}\label{def_1_global_hyperbolicity}
A time-oriented Lorentzian manifold $(M,g)$ is called \textit{globally hyperbolic} if
\begin{itemize}
\item[$(i)$] $M$ is causal,
\item[$(ii)$] $\forall x, \ y \in M$ all diamonds $J_M(x,y)$ are compact.
\end{itemize}
\end{definition}

\begin{definition}\label{def_1_time_function}
Let $(M,g)$ time-oriented Lorentzian manifold and $t :M \rightarrow \mathbb{R} $ a continuous map, then $t$ is called a
\begin{itemize}
\item[$(i)$] \textit{time function} if it is strictly increasing along all future directed causal curves,
\item[$(ii)$] \textit{temporal function} if it is smooth with $\mathrm{grad}(t)$ is future directed and timelike,
\item[$(iii)$] \textit{Cauchy time function} if it is a time function whose level sets are Cauchy hypersurfaces,
\item[$(iv)$] \textit{Cauchy temporal function} if it is a temporal function whose level sets are Cauchy hypersurfaces.
\end{itemize}
\end{definition}

We remark how computing the flow of the gradient of a Cauchy temporal function $t$ gives us a diffeomorphism between hypersurfaces, moreover, since timelike curves intersects a given level set of $t$ (which is a Cauchy hypersurface) exactly once then we create a diffeomorphism $$M \simeq t(M) \times \Sigma_{t_0} $$for some reference time $t_0 \in t(M)$ and $\Sigma_{t_0}=t^{-1}(t_0)$. This gives us a very strong necessary topological condition for $M$ to admit a Cauchy temporal function, and as we will immediately see for global hyperbolicity.
\begin{theorem}\label{thm_1_global_hyperbolicity}
Let $(M,g)$ be a connected time oriented Lorentz manifold, then the following are equivalent
\begin{itemize}
\item[$(i)$] $(M,g)$ is globally hyperbolic,
\item[$(ii)$] There exist a topological Cauchy hypersurface,
\item[$(iii)$] There exist a smooth spacelike Cauchy hypersurface.
\end{itemize}
In this case there is a Cauchy temporal function $t$ and $(M,g)$ is isometrically diffeomorphic to the product manifold $\left( \mathbb{R} \times \Sigma, \beta dt^2-g_t \right)$ where $\beta \in C^{\infty}(\mathbb{R} \times \Sigma)$ is a positive function and $g_t \in \Gamma^{\infty}(S^2T^{*}\Sigma)$ is a Riemannian metric on $\Sigma$ depending smoothly on $t$.
Moreover, each level set $ \Sigma_t= \lbrace (t,\sigma) \in \mathbb{R} \times \Sigma \rbrace \subseteq M $ of the temporal function $t$ is a smooth spacelike Cauchy hypersurface.
\end{theorem}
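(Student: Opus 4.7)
The plan is to prove the three implications (iii) $\Rightarrow$ (ii) $\Rightarrow$ (i) $\Rightarrow$ (iii), and then derive the isometric splitting from the existence of a Cauchy temporal function. The implication (iii) $\Rightarrow$ (ii) is immediate, since a smooth spacelike Cauchy hypersurface is in particular a topological Cauchy hypersurface. For (ii) $\Rightarrow$ (i), I would fix a topological Cauchy hypersurface $\Sigma$ and argue first that $M$ is causal: a closed causal curve would meet $\Sigma$ infinitely many times, contradicting the defining property of a Cauchy hypersurface. For compactness of the diamonds $J_M(x,y)$, I would use the standard argument that any inextendible causal curve through a point of $J_M(x,y)$ must cross $\Sigma$ exactly once, which, combined with the limit curve lemma in Lorentzian geometry, forces $J_M(x,y)$ to be compact.

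The hard direction is (i) $\Rightarrow$ (iii), which combines Geroch's original topological splitting theorem with the smoothing results of Bernal and S\'anchez. First, following Geroch, I would choose a measure $\mu$ on $M$ with $\mu(M)<\infty$ (obtained from any smooth volume form multiplied by a suitable positive function), and define
\begin{equation*}
t_{\pm}(x)\doteq \mu\bigl(J^{\pm}(x)\bigr), \qquad \tau(x)\doteq \log\bigl(t_-(x)/t_+(x)\bigr).
\end{equation*}
Using global hyperbolicity (in the form of compactness of causal diamonds plus strong causality, which is a consequence of causality plus compactness of diamonds), one shows that $t_\pm$ are continuous, strictly monotone along future directed causal curves, and diverge at the causal past/future boundary of inextendible causal curves. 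Hence $\tau$ is a continuous Cauchy time function, and its level sets are topological Cauchy hypersurfaces. The second, more delicate step is to upgrade $\tau$ to a smooth temporal function whose gradient is timelike and whose level sets are smooth spacelike Cauchy hypersurfaces; this is precisely the content of the Bernal--S\'anchez theorem, whose proof proceeds by smoothing $\tau$ via convolution with a partition of unity argument adapted to the causal structure, then correcting the result by adding a small multiple of an auxiliary temporal function to restore the Cauchy property. I would cite this result rather than reprove it, as its detailed proof is lengthy and purely technical.

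Having obtained a Cauchy temporal function $t$, the isometric splitting follows by a classical flow argument. The vector field $X\doteq -\mathrm{grad}(t)/g(\mathrm{grad}(t),\mathrm{grad}(t))$ is future directed timelike and satisfies $X(t)=1$. Its flow $\Phi_s$ is complete, because integral curves are timelike and must cross every level set $\Sigma_{t_0}=t^{-1}(t_0)$ exactly once (otherwise they would be inextendible timelike curves failing the Cauchy condition). Setting $\Sigma\doteq \Sigma_{t_0}$, the map
\begin{equation*}
\Psi:\mathbb{R}\times \Sigma \longrightarrow M, \qquad (s,\sigma)\longmapsto \Phi_{s}(\sigma)
\end{equation*}
is then a diffeomorphism with $t\circ \Psi(s,\sigma)=s+t_0$. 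Pulling back $g$ via $\Psi$ and using that $\partial_s$ is tangent to the flow of $X$ while each $\Sigma_t$ is spacelike, a standard computation yields the decomposition $\Psi^{*}g = \beta\, dt^2 - g_t$ with $\beta>0$ smooth and $g_t$ a smoothly $t$-dependent Riemannian metric on $\Sigma$. The main obstacle in the whole proof is the smoothing step of Bernal--S\'anchez; every other step is either a direct geometric argument or follows from standard results on the Lorentzian causal structure recalled earlier in this section.
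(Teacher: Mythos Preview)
The paper does not prove this theorem at all: immediately after the statement it simply records that the equivalence of $(i)$ and $(ii)$ is due to Geroch and that the remaining claims are due to Bernal and S\'anchez, with no further argument. Your proposal is therefore not a variant of the paper's proof but a faithful outline of the very results the paper cites---Geroch's volume--function construction of a continuous Cauchy time function for $(i)\Rightarrow(ii)$, the Bernal--S\'anchez smoothing for the passage to a temporal function and hence $(iii)$, and the standard flow argument for the metric splitting.

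Two small points worth tightening. First, in your argument for $(ii)\Rightarrow(i)$, note that the paper's Definition~\ref{def_1_Cauchy_hypersurface} only requires that inextendible \emph{timelike} curves meet $\Sigma$ once; a closed causal curve that is merely null does not directly violate this, so the passage from ``Cauchy hypersurface exists'' to ``$M$ is causal'' needs a short extra step (e.g.\ observing that a closed causal curve in a spacetime admitting a Cauchy hypersurface can be perturbed or opened up to produce a violation, or invoking that $M=I^+(\Sigma)\sqcup\Sigma\sqcup I^-(\Sigma)$ already forces strong causality). Second, in the flow argument there is a sign to watch: with the paper's signature $(-,+,\dots,+)$ and the usual convention $g(\mathrm{grad}(t),Y)=Y(t)$, the gradient of an increasing time function is past directed, so the normalization of $X$ should be checked to ensure $X$ is future directed with $X(t)=1$; this is cosmetic and does not affect the structure of the argument.
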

The equivalence of $(i)$ and $(ii)$ was shown in \cite{geroch1970domain}, whereas the remaining claims in \cite{bernal2003smooth, bernal2005smoothness}.

\section{Fiber bundles}\label{section_fiber_bundles}

In this section we introduce the notion of bundle (Definition \ref{def_0_fiber_bundles}), which has a geometric importance of its own, but, on top of that, gives physics a toolkit to place physical entities such as \textit{fields} (see \eqref{eq_0_space_of_smooth_sections} defining the space of fields). We will also specify certain special classes of bundles and introduce connections.
\begin{definition}\label{def_0_fiber_bundles}
    A \textit{fiber bundle} is a quadruple $(B,\pi,M,F)$, where $B$, $M$, $F$ are smooth manifold called respectively the \textit{bundle}, the \textit{base} and the \textit{typical fiber}, such that:
\begin{enumerate}
    \item[$(i)$] $\pi:B \rightarrow M$ is a smooth surjective submersion;
    \item[$(ii)$] there exists an open covering of the base manifold $M$, $\lbrace U_{\alpha} \rbrace_{\alpha \in A}$ admitting, for each $\alpha \in A$, diffeomorphisms $t_{\alpha}: \pi^{-1}(U_{\alpha})\rightarrow U_{\alpha}\times F$, called trivializations, which are fiber respecting \textit{i.e.} $\mathrm{pr}_1 \circ t_{\alpha} = \pi|_{\pi^{-1}(U_{\alpha})}$.
\end{enumerate}
Given $U_{\alpha}$, $U_{\beta}$ subsets of $M$, we can define the transition functions $g_{\alpha\beta}: U_{\alpha\beta}\times F \to F :(x,y) \mapsto g_{\alpha\beta}(x,y)\doteq \mathrm{pr}_2 \circ t_{\alpha}\circ t_{\beta}^{-1}(x,y)$.
\end{definition}

We remark that $\mathrm{pr}_1 \circ t_{\alpha} = \pi|_{\pi^{-1}(U_{\alpha})}$ implies $\mathrm{pr}_1 = \pi \circ t_{\alpha}^{-1}$, thus $t_{\alpha}(\pi^{-1}(x))\simeq F$ for all $x\in M$, moreover, by the very definition of trivializations, for any $x\in M$ the transition functions $g_{\alpha\beta}$ evaluated at $x$ are elements of $\mathrm{Diff}(F)$. Furthermore, the mappings $g_{\alpha\beta}$ do satisfy cocycle\footnote{More precisely they satisfy a $\mathrm{Diff}(F)$ valued \u Cech $2$-cocycle. For the definition of the latter, let $\mathcal{F}$ be a (pre)sheaf of $R$-modules over a topological space $M$ and $\{U_{\alpha}\}_{\alpha \in A}$ a cover of $M$. Setting $U_{\alpha_0, \ldots, \alpha_p} \doteq U_{\alpha_0}\cap \cdots \cap U_{\alpha_p}$ we define the \textit{\u Cech $p$-cochain} as
$$
\check{C}^p(M,\mathcal{F})\doteq \prod_{\alpha_0,\ldots,\alpha_p \in A^{p+1}} \mathcal{F}(U_{\alpha_0, \ldots, \alpha_p}).$$In addition, consider the mapping
$$
\delta_{\mathcal{F}}^p : \check{C}^p(M,\mathcal{F}) \to \check{C}^{p+1}(M,\mathcal{F}), \quad \delta_{\mathcal{F}}^p=\sum_{i=1}^{p+1}(-1)^p \mathcal{F}(\rho^{\alpha_i}_{\alpha_0,\ldots,\alpha_{p+1}})$$where $\rho^{\alpha_i}_{\alpha_0,\ldots,\alpha_{p+1}}$ is the restriction mapping from $U_{\alpha_0,\ldots\widehat{\alpha_i},\ldots,\alpha_{p+1}}$ to $U_{\alpha_0,\ldots,\alpha_{p+1}}$. For example, if $p=1$, then
$$
\delta^1_{\mathcal{F}}(f)_{\alpha\beta\gamma}= f_{\beta\gamma}|_{U_{\alpha\beta\gamma}} -f_{\alpha\gamma}|_{U_{\alpha\beta\gamma}} + f_{\alpha\beta}|_{U_{\alpha\beta\gamma}}.$$An element $f\in \check{C}^p(M,\mathcal{F})$ is called $p$-\textit{cocycle} if $\delta^p_{\mathcal{F}}(f)=0$. In our case, the sheaf is $C^{\infty}\big(\cdot,\mathrm{Diff}(F)\big)$, the group operation is, instead of addition, the composition of elements of $\mathrm{Diff}(F)$ and the resulting \u Cech $2$ cocycles are elements $g_{\alpha\beta}:U_{\alpha\beta} \to \mathrm{Diff}(F)$ satisfying
$$
g_{\alpha\beta}|_{U_{\alpha\beta\gamma}} \circ (g_{\gamma\beta})^{-1}|_{U_{\alpha\beta\gamma}} \circ g_{\gamma\alpha}|_{U_{\alpha\beta\gamma}} = id_{\mathrm{Diff}(F)}.$$} relations, that is for any $x \in U_{\alpha}\cap U_{\beta}\cap U_{\gamma} \equiv U_{\alpha\beta\gamma} $, by substituting the definition of $g_{\cdots}(x,\cdot)$, we have
$$
g_{\alpha\beta}(x,\cdot)\circ g_{\beta\gamma}(x,\cdot) \circ g_{\gamma\alpha}(x,\cdot) = id_{F}.
$$
Usually via trivialization it is possible to construct charts of $B$ via those of $M$ and $F$. We call those \textit{fibered coordinates} and denote them by $(x^{\mu},y^i)$ with the understanding that Greek indices denote the base coordinates and Latin indices the standard fiber coordinates. The product of two manifolds is always a bundle which is called \textit{trivial}, whereas general bundles are not trivial. For notational simplicity, we also denote by $B|_x$ the set $\pi^{-1}(x), \ x\in M$. \\

Given two fiber bundles $(B_i,\pi_i,M_i,F_i)$, $i=1,2$, we define a \textit{fibered morphism} as a pair $(\Psi,\psi)$, where $\Psi:B_1 \rightarrow B_2$, $\psi:M_1 \rightarrow M_2$ are smooth mappings, such that $\pi_2 \circ \Psi= \psi \circ \pi_1$, sometimes $\psi$ is referred to as the \textit{base projection} of $\Psi$.
\begin{proposition}\label{prop_0_construction_of_bundles}
\noindent
\begin{itemize}
\item[$(i)$] Let $\pi:B\to M$ be a proper surjective submersion and $M$ be connected, then $B$ is a fiber bundle with projection $\pi$ and base $M$.
\item[$(ii)$] Let $M$, $F$ be manifolds, $\{U_{\alpha}\}_{\alpha\in A}$ an open covering of $M$ and $g_{\alpha\beta}:U_{\alpha\beta}\to \mathrm{Diff}(F)$ be a cocycle. Then there exists a unique (modulo isomorphisms) fiber bundle $B$ with base $M$, fiber $F$ and $g_{\alpha\beta}$ as transition functions.
\end{itemize}
\end{proposition}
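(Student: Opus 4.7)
For part $(i)$, my plan is to follow the classical proof of Ehresmann's fibration theorem. Around any $x_0 \in M$, pick a chart $U \subset M$ diffeomorphic to an open ball in $\mathbb{R}^n$ with coordinate vector fields $\partial_\mu$. Since $\pi$ is a submersion, a smooth horizontal distribution $H \subset TB$ complementary to $\ker(d\pi)$ on $\pi^{-1}(U)$ (an Ehresmann connection) can be produced by a partition-of-unity argument, and each $\partial_\mu$ then lifts to a horizontal field $\widetilde{\partial}_\mu$. The crucial role of properness is that $\pi^{-1}(K)$ is compact for every compact $K \subset U$, giving completeness of the lifted flows. The trivialization $t_U: \pi^{-1}(U) \to U \times \pi^{-1}(x_0)$ is then built by flowing each $b \in \pi^{-1}(U)$ backwards along the composed $\widetilde{\partial}_\mu$-flows into the reference fiber $\pi^{-1}(x_0)$. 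Connectedness of $M$ ensures all fibers are pairwise diffeomorphic, yielding a common typical fiber $F$.

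For part $(ii)$, my plan is the standard reconstruction of a bundle from cocycle data. Form the disjoint union $\widetilde{B} \doteq \bigsqcup_{\alpha \in A} \{\alpha\} \times U_\alpha \times F$ and introduce the relation
\[
(\alpha, x, f) \sim (\beta, y, h) \text{ iff } x = y \in U_{\alpha\beta} \text{ and } h = g_{\beta\alpha}(x, f).
\]
Verification that this is an equivalence relation uses exactly the cocycle conditions: taking $\alpha = \beta = \gamma$ yields $g_{\alpha\alpha} = \mathrm{id}_F$ (reflexivity), $\gamma = \alpha$ yields $g_{\beta\alpha} = g_{\alpha\beta}^{-1}$ (symmetry), and the full cocycle identity is transitivity. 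Set $B \doteq \widetilde{B}/\!\!\sim$ with $\pi([\alpha, x, f]) \doteq x$; the natural maps $U_\alpha \times F \to B$ induce trivializations $t_\alpha$ whose transition functions are precisely the prescribed $g_{\alpha\beta}$, and the smooth structure on $B$ is transported from the products $U_\alpha \times F$ thanks to smoothness of the $g_{\alpha\beta}$. For uniqueness, given two bundles $B, B'$ sharing the cocycle via trivializations $t_\alpha, t'_\alpha$, the local isomorphisms $\Phi_\alpha \doteq (t'_\alpha)^{-1} \circ t_\alpha$ agree on double overlaps since both sides transform by the same $g_{\alpha\beta}$, and hence glue to a global fibered diffeomorphism $B \to B'$.

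The main obstacle I foresee is twofold. In $(i)$, the lifts $\widetilde{\partial}_\mu$ need not commute unless $H$ is integrable, so one must either flow radially along a single lifted Euler field on $U \cong \mathbb{R}^n$ or verify that the ordered composition of flows still yields a well-defined, injective diffeomorphism onto $U \times \pi^{-1}(x_0)$; in either case, properness is indispensable to guarantee that the flows exist long enough to cover all of $\pi^{-1}(U)$. In $(ii)$, the delicate verification is Hausdorffness and second countability of the quotient $B$, which follow from the corresponding properties of $M$ and $F$ together with the local-only nature of the identifications imposed by $\sim$.
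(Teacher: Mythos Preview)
The paper does not actually provide a proof of this proposition; it is stated as a preliminary result in Section~\ref{section_fiber_bundles} and then immediately followed by the remark ``We remark how Proposition \ref{prop_0_construction_of_bundles} gives us two methods to construct bundles\ldots'' with no intervening argument. So there is nothing to compare against.

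That said, your proposal is correct and follows the standard textbook route for both parts. Part $(i)$ is exactly Ehresmann's fibration theorem, and your sketch captures the essential mechanism: horizontal lifts via an Ehresmann connection, completeness of the lifted flows from properness, and diffeomorphic fibers from connectedness of $M$. Your self-identified obstacle about non-commuting lifts is real but your proposed fix (flowing along a single radial field on a ball chart) is the cleanest resolution and avoids any subtlety. Part $(ii)$ is the classical clutching/cocycle construction; your quotient $\widetilde{B}/\!\sim$ with the trivializations inherited from the inclusions $U_\alpha \times F \hookrightarrow \widetilde{B}$ is exactly how one builds the bundle, and your uniqueness argument via gluing $(t'_\alpha)^{-1} \circ t_\alpha$ is the standard one. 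The Hausdorffness check you flag is genuine but routine: two inequivalent points either lie over distinct base points (separate them via $\pi$) or over the same $x$ in the same fiber (separate them inside a single $U_\alpha \times F$). Nothing is missing.
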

We remark how Proposition \ref{prop_0_construction_of_bundles} gives us two methods to construct bundles: in the first, one considers a fibered manifold\footnote{A fibered manifold is a generalization of fiber bundles in which however different fibers are not required to be diffeomorphic.} and then adds extra hypothesis ($M$ is connected and $\pi$ is proper) to force all fibers to be diffeomorphic; in the second, one construct the bundle out of the base and fiber as a quotient set with equivalence relations given by the transition functions. Because of $(i)$ in Prop \ref{prop_0_construction_of_bundles} we can denote fiber bundles by either $(B,\pi,M)$ or $\pi:B \to M$ understanding that the typical fiber can be taken, modulo diffeomorphism, as $\pi^{-1}(x)$ for some $x\in M$. In the sequel we shall prefer the latter notations unless we need to specify the typical fiber.\\

In the sequel we will often use {pullback bundles}: given a bundle $\pi:B\to M$, a manifold $N$ and a smooth mapping $\psi:N\to M$, the \textit{pullback bundle} $\varphi^*B$ is defined via the commutative diagram
\begin{center}
\begin{tikzpicture}[every node/.style={midway}]\matrix[column sep={10em,between origins}, row sep={3em}] at (0,0){ \node(A){$\psi^*B$} ; & \node(B) {$B$};\\ \node(C) {$N$}; & \node(D) {$M$};\\};\draw[->] (A) -- (B) node[anchor=south] {$\psi^*\pi$};\draw[->] (A) -- (C) node[anchor=east] {$\pi^*\psi$};\draw[->] (B) -- (D) node[anchor=west] {$\pi$};\draw[->] (C) -- (D) node[anchor=north] {$\psi$};
\end{tikzpicture}
\end{center}
as the set of points $\{(y,b)\in N\times B:\ \psi(y)=\pi(b)\}$.\\

We now list some relevant classes of bundles, which are related to special choices of the typical fiber $F$. First we introduce {vector bundles}, the latter are particular fiber bundles whose standard fiber is a vector space.

\begin{definition}\label{def_0_vector_bundles}
    A bundle $(B,\pi,M,V)$ is a \emph{vector bundle} if the standard fiber $V$ is a vector space and the transition functions $g_{\alpha\beta}$ are $\mathrm{Gl}(V)$-valued for each $\alpha, \beta$.
\end{definition}

We denote fibered coordinates on vector bundles by $\lbrace x^{\mu}, v^i\rbrace$. Given $(E,\pi,M,V)$, $(F,\rho,M,W)$ vector bundles over the same base manifold, it is possible to construct a third vector bundle $(E\otimes F,\pi\otimes \rho,M,V\otimes W)$ called the tensor product bundle whose standard fiber is the tensor product of the standard fibers of the starting bundles.\\

As any other manifolds, $B$ admits vector fields $X\in \mathfrak{X}(B)$, we will often use a particular kind of vector fields: they are called \textit{vertical vector fields} and are defined as the set $\mathfrak{X}_{\mathrm{vert}}(B)=\lbrace X \in \Gamma^{\infty}(TB \rightarrow B) : T \pi(X)=0 \rbrace$, where $\pi: B \rightarrow M$ is the bundle projection and $T\pi:TB \to TM$ denotes the its tangential lift. We will denote by $\Phi^{X}_t : B \rightarrow B$ the flow of any vector field on $B$, and assume in the rest of this work that the parameter $t$ varies in an appropriate interval which has been maximally extended. Note that if $X \in \mathfrak{X}_{\mathrm{vert}}(B)$, then $\Phi^{X}_t$ is a fibered morphism whose base projection $\phi$ is the identity over $M$. Vertical vector fields can be seen as sections of the \textit{vertical vector bundle}, $(VB\doteq \mathrm{ker}(T\pi),\tau_V,B)$ which is easily seen to carry a vector bundle structure over $B$.

\begin{definition}\label{def_0_principal_bundles}
    A bundle $(P,\pi,M,G)$ is a principal bundle if the standard fiber is a Lie group and the transition functions $g_{\alpha\beta}$ acts on $G$ as left translations \textit{i.e.} $g_{\alpha\beta}(x):G\to G : h \mapsto g_{\alpha\beta}(x) \cdot h$ for each $x\in U_{\alpha\beta}$ and each $h\in G$.
\end{definition}

With such a trivialization we can decompose points $p \in P$ as $(x,g) \in U_{\alpha}\times G$, defining $r_h:P\rightarrow P, \ p = (x,g) \mapsto p \cdot h =(x,g \cdot h)$ yields a smooth global right action of $G$ on $P$ called the \textit{principal right action}. Similarly to Proposition \ref{prop_0_construction_of_bundles}, one has an alternative definition of principal bundles:

\begin{lemma}\label{lemma_0_principal_bundles}
    Let $\pi:P\to M$ be a surjective submersion, $G$ a Lie group acting freely on $P$ on the right such that the orbits of the action are exactly the fibers $\pi^{-1}(x)$. Then $P\to M$ is a principal bundle with structure group $G$.
\end{lemma}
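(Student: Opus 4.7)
The plan is to construct local trivializations directly from local sections of $\pi$ and then show that the resulting transition functions act by left translation, matching Definition \ref{def_0_principal_bundles}.

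\textbf{Step 1: Existence of local sections.} Since $\pi:P\to M$ is a surjective submersion, around any point of $M$ there is an open neighbourhood $U_\alpha$ admitting a smooth section $s_\alpha:U_\alpha\to P$ with $\pi\circ s_\alpha=\mathrm{id}_{U_\alpha}$; this is the standard local normal form for submersions (implicit function theorem). Pick such a covering $\{U_\alpha,s_\alpha\}_{\alpha\in A}$ of $M$.

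\textbf{Step 2: Candidate trivializations.} For each $\alpha$ define
\[
\Phi_\alpha:U_\alpha\times G\longrightarrow \pi^{-1}(U_\alpha),\qquad \Phi_\alpha(x,g)\doteq s_\alpha(x)\cdot g .
\]
This is smooth because the action and $s_\alpha$ are smooth, and fibre-respecting by construction. Bijectivity is where the hypotheses on the action enter: surjectivity follows from the orbit of $s_\alpha(x)$ being $\pi^{-1}(x)$ (orbits equal fibres), and injectivity follows from freeness of the action, since $s_\alpha(x)\cdot g = s_\alpha(x')\cdot g'$ forces $x=x'$ (apply $\pi$) and then $g=g'$.

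\textbf{Step 3: Smoothness of the inverse (the main point).} I would prove that $T_{(x,g)}\Phi_\alpha$ is an isomorphism. By a dimension count, $\dim(U_\alpha\times G)=\dim M+\dim G=\dim P$, the last equality because $\pi$ is a submersion whose fibres are $G$-orbits (hence of dimension $\dim G$, since freeness makes $g\mapsto s_\alpha(x)\cdot g$ an injective immersion of $G$ into the fibre). Thus it suffices to verify injectivity of the differential. Suppose $T_{(x,g)}\Phi_\alpha(v,w)=0$; composing with $T\pi$ and using $\pi\circ\Phi_\alpha=\mathrm{pr}_1$ gives $v=0$, and then $w$ generates a vanishing fundamental vector field at $s_\alpha(x)\cdot g$, which by freeness of the $G$-action forces $w=0$. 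Hence $\Phi_\alpha$ is a smooth bijection whose differential is everywhere invertible, so it is a diffeomorphism and $t_\alpha\doteq\Phi_\alpha^{-1}$ is a smooth fibre-respecting trivialization.

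\textbf{Step 4: Transition functions are left translations.} On $U_{\alpha\beta}$ the sections $s_\alpha$ and $s_\beta$ land in the same fibre over each $x$, so by freeness there is a unique $h_{\alpha\beta}(x)\in G$ with $s_\beta(x)=s_\alpha(x)\cdot h_{\alpha\beta}(x)$; smoothness of $h_{\alpha\beta}$ follows from Step 3 applied to $t_\alpha\circ s_\beta$. A direct computation
\[
t_\alpha\circ t_\beta^{-1}(x,g)=t_\alpha\bigl(s_\beta(x)\cdot g\bigr)=t_\alpha\bigl(s_\alpha(x)\cdot h_{\alpha\beta}(x)g\bigr)=\bigl(x,h_{\alpha\beta}(x)\cdot g\bigr)
\]
shows that the transition function $g_{\alpha\beta}(x)$ acts on $G$ by left translation by $h_{\alpha\beta}(x)$, which is precisely the requirement in Definition \ref{def_0_principal_bundles}.

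The delicate step is Step 3: reducing smoothness of $\Phi_\alpha^{-1}$ to an infinitesimal statement and then using freeness of the action together with the submersion/orbit hypothesis to close the dimension count. Everything else is bookkeeping.
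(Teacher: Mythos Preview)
Your proof is correct and complete. The paper does not actually supply a proof of this lemma---it is stated and then the exposition moves directly on to principal morphisms---so there is nothing to compare against. Your argument via local sections of the submersion, the map $\Phi_\alpha(x,g)=s_\alpha(x)\cdot g$, and the dimension count plus injectivity of the differential to conclude that $\Phi_\alpha$ is a diffeomorphism is the standard route (cf.\ \cite[Lemma 10.3]{kolar2013natural}). One small remark on phrasing in Step~3: when you write that ``$w$ generates a vanishing fundamental vector field'', what you are really using is that the orbit map $g\mapsto s_\alpha(x)\cdot g$ is an immersion at every $g\in G$; this follows from freeness at the identity and then propagates to all of $G$ by equivariance of the orbit map under right translations. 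With that clarification the argument is airtight.
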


If $(Pi,\pi_i,M_i,G_i)$, $i=1,2$ are principal bundles, a principal $\theta$-morphism between them is a triplet $(\Psi,\psi, \theta)$, where $\Psi:P_1 \rightarrow P_2,\ \psi:M_1\to M_2$ is a fiber bundle morphism and $\theta:G_1 \rightarrow G_2$ a homomorphism of Lie groups, satisfying
$$
r_{\theta(h)}\circ \Psi = \Psi \circ r_h \space\ \forall h \in G.
$$
When $G_1=G_2 \equiv G$ and $\theta =id_G$ then we simply call it principal morphism. An intuitive example of principal bundle is constructed as follows: let $G$ be a Lie group acting freely on a manifold $M$, then $M/G$ is a well defined manifold and, by Lemma \ref{lemma_0_principal_bundles}, $(M,\pi,M/G,G)$ is a principal bundle.\\

We denote by 
\begin{equation}\label{eq_0_space_of_smooth_sections}
	\Gamma^{\infty}(M\leftarrow  B)=\lbrace \varphi:M \rightarrow B, \space\ \mathrm{smooth} \space\ : \pi \circ \varphi =id_M \rbrace
\end{equation}

the space of \textit{sections} of the bundle. In physical terminology the latter are also-called \textit{field configurations} or simply \textit{fields}. We shall later on put a topology and then an infinite dimensional manifold structure on $\Gamma^{\infty}(M\leftarrow B)$. We stress that for a generic bundle the space of global smooth sections might be empty, for example when considering nontrivial principal bundles, therefore we assume that whenever we invoke the presence of fields, $\Gamma^{\infty}(M\leftarrow  B)$ is not empty. This is always the case, for example, when dealing with vector or trivial bundles.\\

Another notion of interest, necessary for calculus of variations, are jet bundles. Heuristically they geometrically formalize PDEs. For general references see \cite{kolar2013natural} chapter IV section $12$ or \cite{jet}. Rather than giving the most general definition, we simply recall the bundle case. Given any fiber bundle $(B,\pi,M,F)$, two sections, $\varphi_1$, $\varphi_2$ are $k$th-order equivalent in $x \in M$, which we write $\varphi_1 \sim_{x}^k \varphi_2$, if for all $f \in C^{\infty}(B)$, $\gamma \in C^{\infty}(\mathbb{R},M)$ having $\gamma(0)=x$, the Taylor expansion at $0$ of order $k$ of $f\circ \varphi_1 \circ \gamma$ and $f\circ \varphi_2 \circ \gamma$ coincide. The relation $ \sim_{x}^k $ becomes then an equivalence relation and we denote by $j^k_x\varphi$ the equivalence class with respect to $\varphi$. Setting $J^k_xB \doteq \Gamma^{\infty}_x(M\leftarrow B)/\sim_{x}^k$, where $\Gamma^{\infty}_x(M\leftarrow B)$ are the germs of local sections of $B$ defined on a neighborhood of $x$. The $k$th order \textit{jet bundle} is 
$$
	J^kB \doteq \bigsqcup_{x\in M} J^k_xB. 
$$
The latter inherits the structure of a fiber bundle with base either $M$, $B$ or any $J^lB$ with $l<k$. If $\lbrace x^{\mu},y^j \rbrace$ are fibered coordinates on $B$, we induce fibered coordinates $ \lbrace x^{\mu},y^j,j^j_{\mu},\ldots,y^j_{\mu_1 \ldots \mu_k} \rbrace$ where Greek indices are symmetric. The latter coordinates embody the geometric notion of PDEs. The differential $d$ of the manifold $J^kB$ can be the split into a horizontal $d_h$ and a vertical $d_v$ component in $J^{k+1}B$, this defines a double complex for the differential forms of $\Omega^{\bullet}(J^kB)$. In particular the horizontal differential can be written in coordinates as 
$$
    d_{\mu} = \frac{\partial}{\partial x^{\mu}} + y^i_{\mu}\frac{\partial}{\partial y^i} + \ldots + y^i_{\mu \lambda_1\cdots \lambda_k} \frac{\partial}{\partial y^i_{\lambda_1\cdots \lambda_k}}.
$$
Given the family $\lbrace (J^rB,\pi^r) \rbrace_{r \in \mathbb{N}}$ with $\pi^r:J^rB\rightarrow M$ we can calculate its inverse limit $(J^{\infty}B,\pi^{\infty})$ called the \textit{infinite jet bundle} over $M$, it can be seen as a fiber bundle where the standard fiber $\mathbb{R}^{\infty}$ is a Fréchet topological vector space. Its sections denoted by $j^{\infty}\varphi$ are called \textit{infinite jet prolongations}.\\

A \textit{principal connection} on a principal fiber bundle $(P,\pi,M,G)$, is a mapping $\Gamma: TP \rightarrow VP$ such that $\Gamma \circ \Gamma = \Gamma$, $\mathrm{Im}(\Gamma)=VP$, $T_p r_g \circ \Gamma=\Gamma \circ T_p r_g$ for each $p \in P$ and $g\in G$. Usually for applications we write the connection in components with
$$
	\omega = dx^{\mu} \otimes \left( \partial_{\mu} - \omega^A_{\mu}(x) r_A \right)
$$
where $\lbrace r_A \rbrace_{A=1,\ldots,\mathrm{dim}(G)}$ is a basis of the Lie algebra $\mathfrak{g}$ of $G$ made by right invariant vector fields. It is possible to show that any principal fiber bundle admits a principal connection (see \cite[Lemma 11.3]{kolar2013natural}). Using the jet bundle formalism, one can define the so-called bundle of connection as $\mathrm{Conn}(P)\doteq J^1P / G$ where the quotient is taken over the equivalence relation induced by the jet prolongation of order one of the principal right action. Choosing a connection amounts to selecting a section of the latter bundle over $M$. Given a connection $\omega$ one can define a notion of covariant derivative of a section of the principal bundle, $\rho$, along a vector field $\xi \in \mathfrak{X}(M)$ as follows:
$$
	\nabla_{\xi}\rho \doteq T \rho (\xi) - \hat{\omega}(\xi) \circ \rho
$$
where $\hat{\omega}$ is the horizontal lift of vector fields of $M$ to horizontal bundle $HP \doteq \mathrm{ker}(\omega) \subset TP$. This latter application is uniquely determined once a connection $\omega$ is chosen.\\

\subsection{Peetre-Slov\'ak's Theorem}\label{section_Peetre_slovak}

A \textit{differential operator} is a mapping between sections of smooth bundles $P:\Gamma^{\infty}(M \leftarrow B)\to \Gamma^{\infty}(M \leftarrow C)$. Generally speaking, the mapping $P$ can be as complicated as possible, the most simple example of differential operator we can think of is the one induced by a fibered morphism 
\begin{center}
\begin{tikzcd}
		 & J^rB \arrow[r,"\Psi"] \arrow[d,"\pi^k"]  & C \arrow[d, "\rho"] \\
		 & M\arrow[r,"\psi"] & M
\end{tikzcd}
\end{center}
Given a section $\varphi$ of $B\to M$ we can manufacture a differential operator $P$ by requiring 
\begin{equation}\label{eq_A_finite_order_diff_op}
    P[\varphi](x)= \Psi \circ j^r\varphi (x) .
\end{equation}
Whereas general differential operators might be very complicated, this particular one is quite nice since it is induced at the bundle level; it is clear however that not all differential operators can have this form. We therefore seek an answer to the question of what conditions one needs to impose on $P$ to guarantee that it takes the form \eqref{eq_A_finite_order_diff_op}. 
A necessary condition is readily apparent: $P$ ought to be \textit{local}, that is $P[\varphi](x)$ depends only on the germ of $\varphi $ at $x$. The Peetre-Slov\'ak theorem gives sufficient conditions for differential operators $P$ to have the form \eqref{eq_A_finite_order_diff_op}.

\begin{definition}
    We say that $P:\Gamma^{\infty}(M \leftarrow B)\to \Gamma^{\infty}(M \leftarrow C)$ is a differential operator of \textit{globally bounded order} if there is $r\in \mathbb{N}$ and a fibered morphism $(\Psi,\mathrm{id}_M):J^rB\to C$ such that $P[\varphi](x)= \Psi\circ j^r\varphi (x)$; furthermore $P$ is a differential operator of \textit{locally bounded order} if, given any $x\in M$ and $\varphi_0\in \Gamma^{\infty}(M \leftarrow B)$, there are a compact neighborhood $K \ni x$, an open neighborhood $\mathcal{U}=\{\varphi : j^r\varphi(K)\subset V\subset J^rB \} $ of $\varphi_0$ in the $CO^{\infty}$-topology \footnote{The $CO^{\infty}$ topology on $C^{\infty}(M,N)$, where $M,\ N$ are differentiable manifolds, is the topology for which $j^{\infty}:C^{\infty}(M,N) \to \big( C(M,J^{\infty}(M,N), \tau_{CO}\big)$ is an embedding where $\tau_{CO}$ is the compact open topology generated by subsets of the form \eqref{eq_1_CO_open}. Then since $\Gamma^{\infty}(M \leftarrow B)\subset C^{\infty}(M,B)$ we can endow it with the subset $CO^{\infty}$-topology.} and a smooth mapping $\Psi: V \to C$ such that $P[\varphi](x)=\Psi \circ j^r_x \varphi$ for all $\varphi\in \mathcal{U}$.
\end{definition}

Before stating the theorem, let us mention the other condition entering as hypothesis besides locality. Consider a jointly smooth family of mappings $\Phi:\mathbb{R}\times M \to B$ such that
\begin{itemize}
    \item for each $t\in \mathbb{R}$ fixed, $\Phi_t:M\to B$ is a smooth section;
    \item there is a compact subset $H \subset M$ such that for all $x \notin H$, $\Phi_t(x)$ is constant in $t$.
\end{itemize}
We call such a family of mappings a one parameter \textit{compactly supported variation}. We say that a differential operator $P:\Gamma^{\infty}(M \leftarrow B)\to \Gamma^{\infty}(M \leftarrow C)$ is \textit{weakly regular} if given any compactly supported variation $\Phi$, the mapping $P[\Phi]:\mathbb{R}\times M \to C$, defined by $P[\Phi](t,x)=P[\Phi_t](x)$, is again a compactly supported variation.

\begin{theorem}[Peetre-Slov\'ak's Theorem]\label{thm_A_Peetre_Slovak}
    Let $B\to M$, $C\to M$ be bundles and $P:\Gamma^{\infty}(M \leftarrow B)\to \Gamma^{\infty}(M \leftarrow C)$ be a local, weakly regular mapping, then $P$ is of locally bounded order.
\end{theorem}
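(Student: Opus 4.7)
The strategy is to argue by contradiction: assume $P$ is local and weakly regular but fails to be of locally bounded order at some pair $(x_0,\varphi_0)$, and then manufacture a compactly supported one-parameter variation whose image under $P$ violates weak regularity.

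First I would reduce to a purely local problem. Using locality of $P$, a coordinate ball around $x_0$ and trivializations of $B$ and $C$ over it, one may assume $M=\mathbb{R}^n$ and that $B$, $C$ are product bundles, so that sections become smooth maps into fixed manifolds (in fact into vector spaces, after a further local chart in the fiber). A standard bump-function argument built into locality shows that $P[\varphi](x)$ depends only on the germ of $\varphi$ at $x$, so the jet of $\varphi$ at $x$ to arbitrary order fully determines $P[\varphi](x)$.

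Next I would unpack the negation of ``locally bounded order'' into a usable sequence. Denying the property at $(x_0,\varphi_0)$ means that for every compact $K\ni x_0$, every $CO^\infty$-neighborhood $\mathcal{U}$ of $\varphi_0$, and every $r\in\mathbb{N}$ one can find sections $\varphi_1^{(r)},\varphi_2^{(r)}\in\mathcal{U}$ and a point $x_r\in K$ with $j^r_{x_r}\varphi_1^{(r)}=j^r_{x_r}\varphi_2^{(r)}$ yet $P[\varphi_1^{(r)}](x_r)\neq P[\varphi_2^{(r)}](x_r)$. By shrinking $K$ and $\mathcal{U}$ along a cofinal sequence I can additionally arrange $x_r\to x_0$, that $\varphi_i^{(r)}$ coincides with $\varphi_0$ outside a ball $B(x_r,\rho_r)$ with $\rho_r\downarrow 0$, and that the difference $\psi_r\doteq\varphi_2^{(r)}-\varphi_1^{(r)}$ (in the chosen local chart) is supported in $B(x_r,\rho_r)$ and vanishes to order $r$ at $x_r$.

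The core of the proof is a diagonal gluing. Choose pairwise disjoint balls $B(x_r,\rho_r)$ accumulating only at $x_0$, a cut-off $\chi\in C^\infty_c(\mathbb{R},[0,1])$ with $\chi(0)=0$ and $\chi(1)=1$, parameters $t_r\to 0$ and widths $\varepsilon_r>0$, and set
\begin{equation*}
\Phi(t,x)=\varphi_1^{(r)}(x)+\chi\!\left(\tfrac{t-t_r}{\varepsilon_r}\right)\psi_r(x)\quad\text{for }x\in B(x_r,\rho_r),
\end{equation*}
and $\Phi(t,x)=\varphi_0(x)$ otherwise. Then $\Phi$ is jointly smooth, agrees with $\varphi_0$ outside a fixed compact set of $M$, and is constant in $t$ outside a compact subset of $\mathbb{R}\times M$, so $\Phi$ is a compactly supported variation. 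However, by construction $P[\Phi](t_r,x_r)=P[\varphi_2^{(r)}](x_r)$ and $P[\Phi](0,x_r)=P[\varphi_1^{(r)}](x_r)$ differ, while $(t_r,x_r)\to(0,x_0)$; consequently $P[\Phi]$ is not constant in $t$ on any complement of a compact set, contradicting weak regularity of $P$.

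The main obstacle, and where real care is needed, is verifying that the glued family $\Phi$ is genuinely $C^\infty$ jointly in $(t,x)$: the cut-off in $t$ costs derivatives like $\varepsilon_r^{-k}$, and the rescaling of $\psi_r$ on a ball of radius $\rho_r$ can cost derivatives in $x$ as well, so the $C^k$-norms of the pieces must be forced to zero fast enough as $r\to\infty$ for every fixed $k$. This is the classical rescaling trick of Slov\'ak's nonlinear extension of Peetre's theorem; the crucial input is that $\psi_r$ vanishes to order $r$ at $x_r$, which gives Taylor-remainder estimates strong enough to choose $\rho_r$, $\varepsilon_r$ and the sup-norm of $\psi_r$ simultaneously decaying to produce a smooth limit. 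Once that technical estimate is in place, the contradiction above closes the argument.
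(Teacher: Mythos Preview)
The paper does not prove this theorem; it only cites \cite[\S 19.7]{kolar2013natural}, \cite{slovak1988peetre} and \cite{navarro2014peetre}. Your overall scheme (localize, negate, glue into a one-parameter compactly supported variation) is exactly the standard one used in those references.

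However, the contradiction you draw is the wrong one. From $P[\Phi](t_r,x_r)\ne P[\Phi](0,x_r)$ with $(t_r,x_r)\to(0,x_0)$ you conclude that ``$P[\Phi]$ is not constant in $t$ on any complement of a compact set''; but all the $x_r$ lie in a fixed compact neighbourhood of $x_0$, and by locality of $P$ the output $P[\Phi_t](x)$ \emph{is} constant in $t$ for every $x$ outside the compact closure of $\bigcup_r B(x_r,\rho_r)$. So the compact-support clause of weak regularity is satisfied by your $P[\Phi]$, and no contradiction arises there.

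What must fail is the \emph{joint smoothness} of $(t,x)\mapsto P[\Phi_t](x)$, and for this you need more than $d_r\doteq|P[\varphi_2^{(r)}](x_r)-P[\varphi_1^{(r)}](x_r)|>0$: if $d_r\to 0$ there is nothing to prevent $P[\Phi]$ from being smooth. One way to close the gap is to exploit the remaining freedom in the order of choices: once $x_r$ and $\varphi_i^{(r)}$ (hence $d_r$) are fixed, pick the $t$-widths with $\varepsilon_r/d_r\to 0$; the mean value theorem then forces $|\partial_t P[\Phi_t](x_r)|\ge d_r/(2\varepsilon_r)\to\infty$ along a sequence converging to $(0,x_0)$, contradicting $C^1$-regularity of $P[\Phi]$. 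The delicate point you already flag --- keeping $\Phi$ itself jointly smooth despite $\varepsilon_r\to 0$ --- is then handled by choosing the ball radii $\rho_r$ \emph{after} $\varepsilon_r$: since $P$ is local, shrinking the support of $\psi_r$ (while keeping it unchanged on a smaller neighbourhood of $x_r$) does not alter $d_r$, while the order-$r$ vanishing gives $\|\psi_r\|_{C^m}\lesssim\rho_r^{\,r-m}$, so one may force $\varepsilon_r^{-j}\|\psi_r\|_{C^m}\to 0$ for every fixed $j,m$. In short, your sketch aims at the wrong clause of weak regularity and is missing control of the size of the $P$-oscillation relative to the time scale.
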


The proof of this theorem can be found in \textit{e.g.} \cite[$\S$ 19.7]{kolar2013natural}, \cite{slovak1988peetre} or \cite{navarro2014peetre}. To get more out of this result, \textit{i.e.} globally bounded order of the operator one needs extra hypothesis, some of which are discussed in \cite{zajtz1999nonlinear}.

\subsection{Differential operators}\label{section_differential_operators}

Lastly, we further study differential operators of global bounded order between sections of vector bundles (Definition \ref{def_1_differential_operator}) among those we are interested in normally hyperbolic operators (Definition \ref{def_1_normal_hyp}). Those are special since are almost invertible: they admit retarded and advanced Green operators (\textit{c.f.} in \Cref{thm_1_properties_of_Green_functions}).\\

Let $E \to M$ and $ F \rightarrow M$ be vector bundles over the same $n$ dimensional manifold $M$.

\begin{definition}\label{def_1_differential_operator}
A linear mapping $D: \Gamma^{\infty}(M\leftarrow E) \rightarrow \Gamma^{\infty}(M \leftarrow F)$ is a differential operator of order $k$ if:
\begin{itemize}
\item[$(i)$] for any open subset $U\subset M$, $D$ can be restricted to a linear map $D_U: \Gamma^{\infty}(U\leftarrow E|_U) \rightarrow \Gamma^{\infty}(U\leftarrow F|_U)$ such that $ \forall \sigma \in \Gamma^{\infty}(U\leftarrow E)$
$$
	D_U \left( \sigma_U \right)= \left( D\sigma \right)_U;
$$
\item[$(ii)$] given any local chart $(U,x)$ there exists smooth coefficients $D^{i_1, \ldots, i_r , \beta}_{\alpha}$ totally symmetric in the Latin indices $i_{\cdot}$, such that
$$
	Ds|_U = \sum_{0 \leq r\leq k} D^{\mu_1, \ldots, \mu_r , j}_{k} \frac{\partial^{r}s^{k}}{\partial x^{\mu_1} \ldots \partial x^{\mu_r} }f_{j},
$$
where $e_{k}\in \Gamma^{\infty}(U\leftarrow E|_U)$, $f_{j}\in \Gamma^{\infty}(U\leftarrow F|_U)$ are the local base sections of the two vector bundles.
Denote by $\mathrm{DiffOp}^k(E,F)$ the set of $k$-th order differential operators.
\end{itemize}
\end{definition}

If $D \in \mathrm{DiffOp}^k(E,F)$ then the \textit{principal symbol}
\begin{equation}\label{eq_1_principal_symbol}
     \sigma_k(D)=D^{\mu_1, \ldots, \mu_k , j}_{i} \frac{\partial}{\partial x^{\mu_1} }\odot \ldots \odot \frac{\partial}{\partial x^{\mu_k}} \otimes e^{i} \otimes f_{j} 
\end{equation}
is a globally defined smooth section of $\Gamma^{\infty}(S^k TM \otimes E^{*} \otimes F)$ called the leading symbol or principal symbol of $D$.

\begin{lemma}
Let $\sigma\in \Gamma^{\infty}(M\leftarrow E) $ and $\omega \in \Gamma^{\infty}_c(M\leftarrow E^{*} \otimes \Lambda_m (M))$, define a pairing 
\begin{equation}\label{eq_1_pairing} 
    (\sigma,\omega) \mapsto \langle s, \omega \rangle= \int_M \omega (\sigma).
\end{equation}
This pairing is bilinear and non-degenerate. Furthermore $\langle \sigma,f \omega \rangle=\langle f \sigma, \omega \rangle$
\end{lemma}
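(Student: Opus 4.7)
The strategy is threefold: well-definedness of the integral and the two algebraic properties (bilinearity and the $f$-identity) are formal consequences of fiberwise contraction and linearity of $\int_M$, while non-degeneracy is the substantive part and is handled by a bump-function argument in a local trivialization.

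First I would verify the pairing is well-defined. Fiberwise, the natural contraction $E_x^*\otimes \Lambda^m T_x^*M \otimes E_x \to \Lambda^m T_x^* M$ is smooth and bilinear, so $\omega(\sigma)$ is a smooth $m$-form on $M$ whose support lies in $\operatorname{supp}\omega$; it is therefore compactly supported and integrable (fixing an orientation on $M$, which is automatic in the Lorentzian setting of \Cref{section_Lorentzian}; for non-orientable bases one reads $\Lambda_m(M)$ as the density bundle and the argument below is unchanged). Bilinearity of $\langle \cdot,\cdot\rangle$ is then immediate from bilinearity of the contraction and linearity of the integral, while the identity $\langle \sigma, f\omega\rangle = \langle f\sigma, \omega\rangle$ reduces to the pointwise equality $(f\omega)(\sigma)=f\cdot \omega(\sigma) = \omega(f\sigma)$, which holds because the contraction is $C^{\infty}(M)$-bilinear in the two outer factors.

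Non-degeneracy is the heart of the lemma and I would prove the two slots symmetrically by constructing explicit local test sections. For the first slot, assume $\sigma(x_0)\neq 0$. Picking a chart $(U,x^1,\ldots,x^m)$ around $x_0$ that trivializes $E$ with local frames $(e_i)$ and dual frames $(e^i)$, write $\sigma=\sigma^i e_i$ on $U$ and fix $i_0$ with $\sigma^{i_0}(x_0)\neq 0$; by continuity $\sigma^{i_0}$ keeps a constant sign on some relatively compact neighborhood $V\Subset U$ of $x_0$. Taking a non-negative $\chi\in C^{\infty}_c(V)$ with $\chi(x_0)>0$, the section
$$\omega \doteq \operatorname{sgn}\bigl(\sigma^{i_0}(x_0)\bigr)\,\chi\, e^{i_0}\otimes dx^1\wedge\cdots\wedge dx^m,$$
extended by zero outside $V$, lies in $\Gamma^{\infty}_c(M\leftarrow E^*\otimes \Lambda_m(M))$ and satisfies $\omega(\sigma)=\chi\,|\sigma^{i_0}|\,dx^1\wedge\cdots\wedge dx^m \geq 0$ with strict positivity near $x_0$, so $\langle \sigma,\omega\rangle>0$, contradicting vanishing on every test section. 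The argument for the second slot is symmetric: if $\omega_{i_0}(x_0)\neq 0$ in some local component, take $\sigma \doteq \operatorname{sgn}\bigl(\omega_{i_0}(x_0)\bigr)\,\chi\, e_{i_0}$, yielding $\omega(\sigma)=\chi\,|\omega_{i_0}|\,dx^1\wedge\cdots\wedge dx^m$ and again a strictly positive integral.

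The main obstacle, modest here, is the sign book-keeping in the non-degeneracy step: the test section must be chosen so that $\omega(\sigma)$ is a \emph{definite} top form, lest the integral cancel against itself. The $\operatorname{sgn}$ prefactors above are precisely what enforces this. Beyond local trivializations and bump functions, no deeper analytic input is needed.
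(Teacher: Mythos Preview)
Your proof is correct and follows essentially the same approach as the paper: both reduce non-degeneracy to a local bump-function construction producing a test section for which the paired density is pointwise non-negative and strictly positive somewhere. Your version is in fact more carefully executed---you spell out the choice of frame component $e^{i_0}$ and the sign prefactor needed to make the contraction definite, whereas the paper's proof is terse to the point of eliding the $E^*$-factor entirely (it pairs $\sigma$ against a bare density $\psi\theta$, which strictly speaking lives in the wrong bundle).
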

\begin{proof}
Suppose that $\langle \sigma, \omega \rangle=0$ for all $\omega\in \Gamma^{\infty}_c(M\leftarrow E^{*} \otimes \Lambda_m (M))$ if $\sigma(x)\neq 0$ for some $x$, then $\sigma\neq 0$ for a neighborhood $U$ of $x$ as well. Let $\psi$ be a positive bump function $\psi$ supported in $U$, if $\theta$ is a positive density with $\theta|_x\neq 0$, $\langle \sigma, \psi \theta \rangle > 0$. Thus $\sigma=0$. Similarly if $\langle \sigma, \omega \rangle=0$ for all $\sigma\in \Gamma^{\infty}(M\leftarrow E) $, then $\omega=0$. The other claim is straightforward.
\end{proof}

\begin{proposition}\label{prop_1_adjoint_differential_operator}
    Let $D \in \mathrm{DiffOp}^k(E,F)$, then its restriction $D: \Gamma^{\infty}_c(M\leftarrow E) \rightarrow \Gamma^{\infty}_c(M\leftarrow F)$ admits a unique adjoint representation with respect to the pairing defined above. Moreover, the adjoint 
    $$
        D^{T}: \Gamma^{\infty}(M\leftarrow F^{*} \otimes \Lambda_m (M )) \rightarrow \Gamma^{\infty}(M\leftarrow E^{*} \otimes  \Lambda_m (M))
    $$ 
    a differential operator of order $k$.
\end{proposition}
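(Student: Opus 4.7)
The strategy is local-to-global: define $D^T$ chart-by-chart via integration by parts, and then glue using the uniqueness forced by the non-degeneracy of the pairing. Throughout, compact supports guarantee that all integrations are well-defined (the defining identity $\langle D\sigma, \omega\rangle = \langle \sigma, D^T\omega\rangle$ involves $D\sigma$ compactly supported and $\sigma$ compactly supported, so neither side needs $\omega$ to have compact support).

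\textbf{Uniqueness.} Suppose $D^T_1, D^T_2$ both satisfy the adjunction identity. Then for every $\omega \in \Gamma^{\infty}(M \leftarrow F^* \otimes \Lambda_m M)$ and every $\sigma \in \Gamma^{\infty}_c(M \leftarrow E)$,
$$
\langle \sigma, (D^T_1 - D^T_2)\omega \rangle = 0.
$$
By the non-degeneracy half of the preceding lemma (applied in the roles $\omega \leftrightarrow (D^T_1 - D^T_2)\omega$ tested against compactly supported $\sigma$), we conclude $(D^T_1-D^T_2)\omega = 0$, so $D^T_1 = D^T_2$.

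\textbf{Local existence.} Pick an atlas $\{(U_\alpha, x_\alpha)\}$ over which both $E$ and $F$ trivialize, with local bases $\{e_k\}$, $\{f_j\}$, dual bases $\{e^k\}$, $\{f^j\}$, and coordinate volume form $d^n x$. Writing $\omega|_{U_\alpha} = \omega_j\, f^j \otimes d^n x$ and using the local expression $D\sigma = D^{\mu_1\cdots\mu_r,j}_k \,\partial_{\mu_1}\cdots \partial_{\mu_r}\sigma^k\, f_j$, for any $\sigma \in \Gamma^{\infty}_c(U_\alpha \leftarrow E|_{U_\alpha})$ iterated integration by parts (with no boundary terms, since $\sigma$ is compactly supported in $U_\alpha$) gives
$$
\int_{U_\alpha} \omega(D\sigma)
= \sum_{r \le k} (-1)^r \int_{U_\alpha} \partial_{\mu_1}\cdots\partial_{\mu_r}\!\bigl(\omega_j\, D^{\mu_1\cdots\mu_r,j}_k\bigr)\, \sigma^k\, d^n x.
$$
This identifies the local adjoint
$$
(D^T_\alpha \omega)|_{U_\alpha}
= \sum_{r \le k} (-1)^r \,\partial_{\mu_1}\cdots\partial_{\mu_r}\!\bigl(\omega_j\, D^{\mu_1\cdots\mu_r,j}_k\bigr)\, e^k \otimes d^n x,
$$
which is manifestly a differential operator of order $k$ in the sense of Definition \ref{def_1_differential_operator}; moreover its principal symbol is $(-1)^k$ times the fiberwise transpose of $\sigma_k(D)$ from \eqref{eq_1_principal_symbol}, so its order is exactly $k$.

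\textbf{Globalization and the main obstacle.} The step requiring the most care is verifying that the locally defined $D^T_\alpha$ glue to a global section. This is where uniqueness earns its keep: for any $x \in U_\alpha \cap U_\beta$ and any $\sigma \in \Gamma^{\infty}_c(U_\alpha \cap U_\beta \leftarrow E)$, both $D^T_\alpha\omega$ and $D^T_\beta\omega$ satisfy $\langle D\sigma,\omega\rangle = \langle \sigma, D^T_\cdot \omega\rangle$ on the overlap; by the non-degeneracy argument above, applied on the open set $U_\alpha\cap U_\beta$, they agree pointwise there. Hence the assignments fit together into a single smooth section $D^T\omega \in \Gamma^{\infty}(M \leftarrow E^* \otimes \Lambda_m M)$, which by construction is $C^{\infty}(M)$-linear in $\omega$, of order $k$, and satisfies the adjunction identity for arbitrary $\sigma \in \Gamma^{\infty}_c(M \leftarrow E)$ via a partition of unity subordinate to $\{U_\alpha\}$ decomposing $\sigma$ into compactly supported pieces within single charts.
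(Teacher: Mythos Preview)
Your proof is correct and rests on the same core computation as the paper's: local integration by parts in trivializing charts. The organizational difference is that the paper localizes the integral directly via a partition of unity $\{\chi_i\}$ applied to $\omega$, defining $D^\dagger\omega = \sum_i \omega_{(i)}$ as a sum of compactly supported local pieces, whereas you define chart-local adjoints $D^T_\alpha$ and invoke uniqueness (non-degeneracy of the pairing) to show they agree on overlaps, obtaining a single global section without summing. Your route makes well-definedness (independence from the choice of atlas or partition) explicit, and your observation that the principal symbol of $D^T$ is $(-1)^k$ times the transpose of $\sigma_k(D)$ sharpens the order claim; the paper leaves both of these implicit.
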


\begin{proof}
Fix a partition of unity $\lbrace \chi_i \rbrace_{i\in I} $ subordinate to a covering of charts $\lbrace (U_i, x^{\mu})\rbrace_{i\in I} $, then
$$ 
    D\sigma |_{U_i} = \sum_{0 \leq r\leq k} \frac{1}{r!} D^{\mu_1, \ldots, \mu_r , j}_{k} |_{U_i} \frac{\partial^{r}\sigma^{k}}{\partial x^{\mu_1} \ldots \partial x^{\mu_r} }f_{j}.
$$ 
Consider an element of $\Gamma^{\infty}(F^{*} \otimes \Lambda_m (M))$ of the form $\omega |_{U_i}= \omega_{j} \epsilon_{\mu_1 \ldots \mu_n} f^{j} \otimes dx^{\mu_1} \wedge \ldots \wedge dx^{\mu_n}$, where $ \epsilon_{\mu_1 \ldots \mu_n}$ is the well known Levi-Civita tensor and $\omega_{j}\in C^{\infty}(U)$; then 
$$
	\langle D\sigma, \omega \rangle= \int_M \omega(D\sigma) = \sum_i \int_{U_i} \chi_i \omega(D\sigma)
$$
using the partition of unity we can write in coordinates the above integral and then integrate by parts, obtaining
$$
\begin{aligned}
	&\sum_{\substack{i \in I\\ 0 \leq r\leq k}} \int_{U_i} \frac{1}{r!} \chi_i \omega_{j}   D^{\mu_1, \ldots, \mu_r , j}_{k} \frac{\partial^{r}\sigma^{k}}{\partial x^{\mu_1} \ldots \partial x^{\mu_r} } \epsilon_{\mu_1 \ldots \mu_n} dx^{\mu_1} \wedge \ldots \wedge dx^{\mu_n}\\
	& \quad = \sum_{\substack{i\in I \\ 0 \leq r\leq k}} \int_{U_i} \frac{(-1)^r}{r!}  \sigma^{k}  \frac{\partial^{r} \left( \chi_i \omega_{j}   D^{\mu_1, \ldots, \mu_r , j}_{k} \right)}{\partial x^{\mu_1} \ldots \partial x^{\mu_r} } \epsilon_{\mu_1 \ldots \mu_n} dx^{\mu_1} \wedge \ldots \wedge dx^{\mu_n},
\end{aligned}
$$
where we wrote the density measure of $M$ induced by the coordinates $\{x^{\mu}\}$ using Levi-Civita symbol $\epsilon_{\mu_1\ldots \mu_n}$. Clearly 
$$
\begin{aligned}
    {\mu}_{(i)} &\doteq \sum_{\substack{ 0 \leq r\leq k}}  \frac{(-1)^r}{r!}\frac{\partial^{r} }{\partial x^{\mu_1} \ldots \partial x^{\mu_r} } \left( \chi_i \omega_{j}   D^{\mu_1, \ldots, \mu_r , }_{k} \right) \mu_{} e^{k} \epsilon_{\mu_1 \ldots \mu_m} dx^{\mu_1} \wedge \ldots \wedge dx^{\mu_m}\\
    &  = \sum_{\substack{ 0 \leq r\leq k}}  \frac{(-1)^r}{r!}\frac{\partial^{r} }{\partial x^{\mu_1} \ldots \partial x^{\mu_r} } \left( \chi_i \omega_{j}   D^{\mu_1, \ldots, \mu_r , }_{k} \right) \mu_{} e^{k} \epsilon_{\mu_1 \ldots \mu_m} dx^{\mu_1} \wedge \ldots \wedge dx^{\mu_m}
\end{aligned}
$$
is a section of $\Gamma^{\infty}_{c}(U_i\leftarrow (E^{*} \otimes  \Lambda_n (M))|_{U_i} )$, furthermore, it can be checked directly that $\sum_i\widetilde{\mu}_i\in \Gamma^{\infty}_{c}(M\leftarrow E^{*} \otimes  \Lambda_n (M) )$. Setting
\begin{equation}\label{eq_1_adjoint_diff_operator}
    D^{\dagger}\omega = \sum_i {\omega}_{(i)},
\end{equation}
we get
$$
	\langle D\sigma, \omega \rangle = \langle \sigma, D^{\dagger} \omega \rangle.
$$
It is clear from the local form that $D^{\dagger}$ is of order $k$. 
\end{proof}

\begin{definition}\label{def_1_normal_hyp}
Let $E\to M$ be a vector bundle with base a Lorentzian manifold, a second order differential operator $D:\Gamma^{\infty}(M\leftarrow E) \rightarrow \Gamma^{\infty}(M\leftarrow E)$ is called \textit{normally hyperbolic} if its principal symbol can be written as 
$$
	\sigma_2(D) =\frac{1}{2} g^{-1}\otimes \mathrm{id}_E
$$
where $g$ is the Lorentzian metric on $M$.
\end{definition}

\begin{lemma}\label{lemma_0_wave_eq_form_of_normally_hyp_op}
    Let $D$ be a normally hyperbolic differential operator, then one can always choose an appropriate fiber connection on $E$ such that for all $\sigma\in \Gamma^{\infty}(M\leftarrow E)$,
    $$
        D\sigma= \Big(g^{\mu\nu} \nabla_{\mu\nu}\sigma^i+ B_i^j\sigma^i\Big)e_j,
    $$
    where $\nabla$ is the covariant derivative induced from the connection.
\end{lemma}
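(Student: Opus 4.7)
My strategy is to start from an arbitrary fiber connection on $E$ and then modify it by an $\mathrm{End}(E)$-valued one-form designed to absorb the sub-principal part of $D$.

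First, I fix any fiber connection $\nabla^{E,0}$ on $E$; such connections always exist by the standard partition-of-unity argument applied to local trivializations. Together with the Levi-Civita connection on $T^*M$, $\nabla^{E,0}$ induces a connection on $T^*M\otimes E$, and I form the associated connection Laplacian $\Box_g^{E,0}\sigma \doteq g^{\mu\nu}\nabla^{E,0}_{\mu\nu}\sigma$, which by construction is normally hyperbolic with the same principal symbol as $D$ (cf.\ Definition \ref{def_1_normal_hyp}). Consequently
$$E\doteq D-\Box_g^{E,0}$$
is a linear differential operator of order at most one on $\Gamma^{\infty}(M\leftarrow E)$, and relative to $\nabla^{E,0}$ it admits the canonical splitting $E\sigma = Y^\mu\,\nabla^{E,0}_\mu\sigma + Z\sigma$, in which $Y\in \Gamma^{\infty}(M\leftarrow TM\otimes \mathrm{End}(E))$ is the invariantly defined principal symbol of $E$ (compare \eqref{eq_1_principal_symbol}) and $Z\in \Gamma^{\infty}(M\leftarrow \mathrm{End}(E))$.

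Next, I use the non-degeneracy of $g$ to lower the index of $Y$, defining the $\mathrm{End}(E)$-valued one-form $\eta_\nu \doteq \tfrac{1}{2}g_{\nu\mu}Y^\mu$, and then set $\nabla^E\doteq \nabla^{E,0}+\eta$. A direct local computation, expanding $\nabla^E_\mu\nabla^E_\nu\sigma$ in fibered coordinates and using the symmetry of $g^{\mu\nu}$, yields
$$\Box_g^E\sigma - \Box_g^{E,0}\sigma \;=\; 2g^{\mu\nu}\eta_\nu\,\nabla^{E,0}_\mu\sigma \;+\; R\sigma,$$
for some $R\in \Gamma^{\infty}(M\leftarrow \mathrm{End}(E))$ involving $\nabla^{E,0}\eta$ and $\eta\wedge\eta$. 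By the choice of $\eta$, the first-order piece on the right coincides with $Y^\mu\,\nabla^{E,0}_\mu\sigma$; consequently $B\doteq D-\Box_g^E = E - (\Box_g^E-\Box_g^{E,0})$ has no first- or second-order part and must therefore be a section of $\mathrm{End}(E)$. Writing this in the local frame $\{e_j\}$ is exactly the claimed identity.

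The only step above that is more than routine bookkeeping is identifying $Y$ with an invariantly defined tensor, which ensures that lowering its index with $g$ produces a bona fide global connection one-form $\eta$; this is the standard fact that the principal symbol of a differential operator between vector bundles lives in $\Gamma^{\infty}(M\leftarrow TM\otimes \mathrm{End}(E))$, independent of any choice of chart or frame. Note that the pair $(\nabla^E,B)$ resulting from this construction is not canonical---different reference connections $\nabla^{E,0}$ yield different decompositions---but the lemma asserts only existence, which the explicit modification above delivers.
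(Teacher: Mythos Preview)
Your argument is correct and follows the same underlying idea as the paper's proof: adjust a connection so that the first-order part of $D$ relative to it vanishes. The paper carries this out by writing $D$ in local fibered coordinates, expanding $g^{\mu\nu}\nabla_{\mu\nu}\sigma^i$ for an unspecified connection, and then solving algebraically for the Christoffel symbols $\Gamma^j_{k\mu}$ that kill the coefficient of $\nabla_\mu\sigma^i$; it then remarks that one can check ``rather laboriously'' that the resulting local expressions patch to a global connection. Your version is more invariant: you start from an arbitrary reference connection, identify the first-order discrepancy $Y^\mu$ as the principal symbol of $D-\Box_g^{E,0}$ (hence automatically a global section of $TM\otimes\mathrm{End}(E)$), and correct by the global one-form $\eta=\tfrac12 g^\flat Y$. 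This makes the well-definedness of the new connection immediate rather than a laborious afterthought, which is a genuine gain.

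One small correction to your closing remark: the pair $(\nabla^E,B)$ is in fact canonical. If $D=\Box_g^{(1)}+B^{(1)}=\Box_g^{(2)}+B^{(2)}$ with $\nabla^{(1)}=\nabla^{(2)}+\eta$, then your own computation shows $\Box_g^{(1)}-\Box_g^{(2)}$ has first-order part $2g^{\mu\nu}\eta_\nu\nabla^{(2)}_\mu$, and since this difference must be zeroth order, non-degeneracy of $g$ forces $\eta=0$. So different reference connections $\nabla^{E,0}$ all lead to the same $\nabla^E$; this does not affect your existence argument, but the uniqueness is worth knowing.
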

\begin{proof}
    Fix some local chart $(x^{\mu},e^i)$ on $E$, then we can represent $D\sigma$ as 
$$
	\Big(\frac{1}{2}D^{\mu \nu j }_i \partial_{\mu \nu}\sigma^i + D^{\mu j}_i\partial_{\mu}\sigma^i + D_i^j\sigma^i \Big)e_j.
$$
Any connection on $E$ induces a covariant derivative $\nabla$: given $X\in \mathfrak{X}(M), \ \sigma \in \Gamma^{\infty}(M\leftarrow E)$ we have $\nabla_X\sigma=X^{\mu}\nabla_{\mu}\sigma^ie_i$ where $\nabla_{\mu}s^i=\partial_{\mu}s^i +\Gamma^i_{j \mu}s^j$. Using a torsionless connection on $M$ (\textit{e.g.} the Levi-Civita connection of the spacetime metric $g$) it is possible to calculate 
$$
	\nabla_{\mu}\nabla_{\nu}s^i=\partial_{\mu \nu}s^i+\Gamma^i_{j \mu} \nabla_{\nu}s^j- \Gamma^{\alpha}_{\mu \nu}\nabla_{\alpha}s^i+ \partial_{\mu}\Gamma^i_{j \nu}s^j + \Gamma^i_{j \nu} \nabla_{\mu}s^j- \Gamma^i_{j \nu} \Gamma^j_{k \mu}s^k.
$$
Substituting into the expression for $D\sigma$, we find 
$$
\begin{aligned}
    D\sigma=\Big(\frac{1}{2} D^{\mu \nu j}_{i}\nabla_{\mu}\nabla_{\nu}\sigma^i + \Big[\frac{1}{2}D^{\alpha\beta j}_{i}\Gamma^{\mu}_{\alpha\beta}+D^{\mu j}_i- \Gamma^i_{k\nu}D^{\mu\nu j}_i\Big]\nabla_{\mu}\sigma^i +\Big[\Gamma^l_{k\mu}\Gamma^k_{i\nu}D^{\mu\nu j}_l- D^{\mu j}_k\Gamma^k_{i\mu}+ D^{\mu}_i\Big]\sigma^i \Big)e_j.
\end{aligned}
$$
Since $D^{\mu\nu j}_i=g^{\mu\nu} \delta^j_i$, we can set
$$
    \Gamma^j_{k\mu}\doteq g_{\mu\nu}\Big(\frac{1}{2}D^{\alpha\beta j}_k + D^{\nu j}_k\Big)
$$
It can be shown, rather laboriously, that the above equation defines the Christoffel symbols of a connection on $E$. 
\end{proof}

Having developed the notion of Cauchy hypersurface in Definition \ref{def_1_Cauchy_hypersurface}, we can give a precise formulation of the Cauchy problem. One the one hand, a Cauchy surface on a globally hyperbolic spacetime will secure the well-posedness for any evolution type problem on $M$, on the other, to talk about evolution of quantities (that is fields) in $M$ we need some dynamic: fortunately enough, we have discussed differential operators. As usual we assume that $(M,g)$ is a Lorentzian, time-oriented globally hyperbolic manifold and $\Sigma$ is a spacelike Cauchy hypersurface. Since at any point $x \in \Sigma$, $T_x \Sigma \subseteq T_xM$ is spacelike, there is a unique unitary, timelike, future directed vector $ \mathfrak{n}_x \in T_x M$, that is 
$$
	g(\mathfrak{n}_x,v)=0 \space\  \space\ \forall  \space\ v \in T_x\Sigma,
$$
$$
	g(\mathfrak{n}_x,\mathfrak{n}_x)=1.
$$
The vector field $\mathfrak{n}: \Sigma \ni x \mapsto \mathfrak{n}_x $ is called the \textit{future directed normal vector field} of $\Sigma$. \\

Let $D: \Gamma^{\infty}(M\leftarrow E) \rightarrow \Gamma^{\infty}(M\leftarrow E) $ be a normally hyperbolic differential operator. We consider the wave equation
$$
	D\sigma = g^{\mu\nu}\nabla_{\mu\nu} \sigma + B\sigma =\theta 
$$
$\sigma, \theta \in \Gamma^{\infty}(M\leftarrow E) $ and the related Cauchy problem
\begin{align*}
	D\sigma &=\theta,\\
	\sigma|_{\Sigma}&=\sigma_0 \in \Gamma^{\infty}(M\leftarrow E |_{\Sigma}),\\
	\nabla_{\mathfrak{n}}\sigma |_{\Sigma} &= \sigma_1 \in \Gamma^{\infty}(M\leftarrow E|_{\Sigma}).
\end{align*}

We try to solve this problem by constructing distributional solution $F_x \in \Gamma^{- \infty}(M\leftarrow E) \otimes E^{*}_x \otimes  \Lambda_n (M)_x$ such that
$$
	D F_x=\delta_x
$$ 
where $\delta_x$ is viewed as a $E^{*}_x \otimes \Lambda_m (M)_p$-valued distributional section of $E$, that is if $\mu \in \Gamma_0^{\infty}(E^{*} \otimes \Lambda_m (M))$, then $\delta_x(\mu)= \mu(x) \in E^{*}_x \otimes  \Lambda_n (M)_x$. If $F_p$ satisfies $D F_x=\delta_x$ is called \textit{fundamental solution of} $D$ at $x\in M$. If, moreover, there exists fundamental solutions $F^{\pm}_x $ satisfying
$$
	\mathrm{supp} (F^{\pm}_x) \subseteq J^{\pm}(x),
$$
then $F^{\pm}_x$ is called \textit{advanced} or \textit{retarded fundamental solution} of $D$ at $x\in M$.

\begin{definition}\label{def_1_Green_functions}
Let $(M,g)$ be a globally hyperbolic Lorentzian manifold, $D$ a normally hyperbolic differential operator, then a continuous linear map 
$$
    G^{\pm}:\Gamma_c^{\infty}(M\leftarrow E) \rightarrow \Gamma^{\infty}(M\leftarrow E) 
$$
satisfying
\begin{itemize}
\item[$(i)$] $D\circ G^{\pm}=\mathrm{id}_{\Gamma_c^{\infty}(M\leftarrow E)}$,
\item[$(ii)$] $G^{\pm} \circ \left. D \right|_{\Gamma_c^{\infty}(M\leftarrow E)}=\mathrm{id}_{\Gamma_c^{\infty}(M\leftarrow E)}$,
\item[$(iii)$] $\mathrm{supp}(G^{\pm}(\sigma)) \subseteq \overline{ J^{\pm}(\mathrm{supp}(\sigma))}$ for all $\sigma \in \Gamma_c^{\infty}(M\leftarrow E)$,
\end{itemize}
is called \textit{advanced/retarded Green operator} for $D$.
\end{definition}

\begin{theorem}\label{thm_1_existence_adv/ret_fundamental_solutions}
Let $(M,g)$ be globally hyperbolic, and $D$ in $\mathrm{DiffOp}^2(E)$ normally hyperbolic. For every $x \in M$ there is a unique global advanced and retarded fundamental solution 
$$
    F^{\pm}: \Gamma^{\infty}_c(M\leftarrow E^{*}) \ni \mu \mapsto F^{\pm}( \mu) \in \Gamma^{\infty}(M\leftarrow E^{*})
$$
of $D^{\dagger}$ at $x$, that is 
$$ 
    D^{\dagger} F^{\pm}( \mu) = \mu 
$$
is a continuous mapping in the appropriate topologies.
\end{theorem}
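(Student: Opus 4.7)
The plan is to follow the now-classical Leray–Choquet-Bruhat strategy as developed in B\"ar--Ginoux--Pf\"affle. Proposition \ref{prop_1_adjoint_differential_operator} tells us that $D^{\dagger}$ is a second-order differential operator on $E^{*}\otimes \Lambda_n(M)$, and an inspection of formula \eqref{eq_1_adjoint_diff_operator} for the leading part shows that its principal symbol differs from that of $D$ only by transposition, so $D^{\dagger}$ is again normally hyperbolic with respect to the same metric $g$. It therefore suffices to establish existence, uniqueness and continuity of advanced/retarded Green operators for an abstract normally hyperbolic operator $P$ on a vector bundle over a globally hyperbolic spacetime, applied to $D^{\dagger}$. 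By Lemma \ref{lemma_0_wave_eq_form_of_normally_hyp_op} we may even assume $P=g^{\mu\nu}\nabla_{\mu\nu}+B$.

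First I would treat the local theory. On a causally convex, geodesically star-shaped neighbourhood $U$ of a point $x_0$, trivialize $E$, introduce synchronous normal coordinates centred at $x_0$ and build a formal Hadamard parametrix
\begin{equation*}
    H^{\pm}_N(x,y)=\sum_{k=0}^{N} V_k(x,y)\,R_{\pm}(2+2k,x,y),
\end{equation*}
where $R_{\pm}(\alpha,x,\cdot)$ are the Riesz distributions supported in $J^{\pm}(x)\cap U$ and the coefficients $V_k$ are fixed by the transport equations derived from requiring $P_y H^{\pm}_N(x,\cdot)=\delta_x+\text{smooth error of order }N$. For $N$ large enough the error term becomes arbitrarily smooth; iterating this via a Volterra-type integral equation and summing produces a genuine local fundamental solution $F^{\pm}_{U,x_0}$ with support in $J^{\pm}(x_0)\cap U$ and explicit continuity estimates with respect to $x_0$.

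Next I would globalize using the well-posedness of the Cauchy problem. By \Cref{thm_1_global_hyperbolicity} the manifold splits as $\mathbb{R}\times\Sigma$ with spacelike Cauchy slices $\Sigma_t$. One covers $M$ by a locally finite family of such local neighbourhoods, glues the local parametrices with a subordinate partition of unity, and kills the remaining smooth error by solving an inhomogeneous Cauchy problem with zero initial data on a Cauchy hypersurface lying to the past (resp.~future) of the support of $\mu\in\Gamma^{\infty}_c(M\leftarrow E^{*})$. The finite propagation speed for normally hyperbolic Cauchy problems, together with the compactness of causal diamonds (Definition \ref{def_1_global_hyperbolicity}) and past/future compactness of $J^{\pm}(K)$ for compact $K$, guarantees that the correction preserves the correct support property, producing $G^{\pm}\mu$ with $\mathrm{supp}(G^{\pm}\mu)\subset J^{\pm}(\mathrm{supp}\,\mu)$ as required in Definition \ref{def_1_Green_functions}. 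Uniqueness follows by a standard duality argument: if $G^{+}_1,G^{+}_2$ are two retarded Green operators, then $v=(G^{+}_1-G^{+}_2)\mu$ solves $D^{\dagger}v=0$ with past-compact support; pairing with $G^{-}$ applied to a test section and using Stokes' theorem on $J(\mathrm{supp}\,\mu)$ forces $v=0$. Continuity of $F^{\pm}$ in the Schwartz-type LF/Fr\'echet topologies on $\Gamma^{\infty}_c$ and $\Gamma^{\infty}$ then follows either from the Volterra estimates obtained locally or abstractly via the closed graph theorem, as $F^{\pm}$ is the inverse of a continuous operator on nuclear spaces.

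The main obstacle, as usual, is the globalization step: one must simultaneously (i) control the support of the glued parametrix so that no information leaks out of $J^{\pm}(x)$, and (ii) ensure that solving away the smooth remainder by the Cauchy problem again respects causality. Both rely essentially on global hyperbolicity and on the past/future-compactness of $J^{\pm}(K)$ guaranteed by \Cref{thm_1_global_hyperbolicity}; without it, the Volterra iteration for the correction term would not converge and the support condition $(iii)$ of Definition \ref{def_1_Green_functions} could fail.
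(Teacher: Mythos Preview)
The paper states this theorem without proof; it is quoted as a standard result from the theory of wave equations on globally hyperbolic spacetimes (the surrounding material is a review section, and the companion Theorem~\ref{thm_1_properties_of_Green_functions} is likewise stated with a reference to the literature rather than proved). There is therefore no paper proof to compare against.

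Your sketch is the correct and standard one: it is precisely the B\"ar--Ginoux--Pf\"affle strategy the paper is implicitly invoking. The reduction to $D^{\dagger}$ being normally hyperbolic is right (the principal symbol of the formal adjoint coincides with that of $D$ up to the canonical identification, since only the top-order part matters and integration by parts does not change it), the local Hadamard/Riesz construction with transport equations for the $V_k$ is the right local step, and the globalization via well-posedness of the Cauchy problem on a foliation by Cauchy hypersurfaces, together with the causal-diamond compactness from Definition~\ref{def_1_global_hyperbolicity}, is exactly how the support condition and uniqueness are obtained. Your identification of the main difficulty (controlling supports during globalization) is also accurate. Nothing is missing at the level of a sketch; if you were to write this out in full you would essentially be reproducing the relevant chapter of B\"ar--Ginoux--Pf\"affle.
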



\begin{theorem}
Let $(M,g)$ be globally hyperbolic, $D \in \mathrm{DiffOp}^2(E)$ be normally hyperbolic. Assume that $\lbrace F^{\pm}_x \rbrace_{x \in M}$ is a family of global advanced and retarded fundamental solutions of $D^{\dagger}$, with the properties of Theorem \ref{thm_1_existence_adv/ret_fundamental_solutions}, then given $\sigma \in \Gamma^{\infty}_c(M\leftarrow E)$ define, using the pairing \eqref{eq_1_pairing}, 
$$ 
    \mu \left( G^{\pm} \sigma \right)=  F^{\mp} (\mu)( \sigma ).
$$
The mappings $G^{\pm}:\Gamma_c^{\infty}(M\leftarrow E) \rightarrow \Gamma^{\infty}(M\leftarrow E)$ defined above are Green operators for $D$. 
\end{theorem}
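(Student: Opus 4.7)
The plan is to verify the three properties of Definition \ref{def_1_Green_functions} for the operator $G^{\pm}$, after first making sense of $G^{\pm}\sigma$ as a section. The defining relation $\mu(G^{\pm}\sigma)\doteq F^{\mp}(\mu)(\sigma)$ should be read as specifying $G^{\pm}\sigma$ through its pairing against every test section $\mu$; by the non-degeneracy of \eqref{eq_1_pairing}, this uniquely determines $G^{\pm}\sigma$ as an initially distributional section, with continuity in $\sigma$ inherited from that of $F^{\mp}$ provided by Theorem \ref{thm_1_existence_adv/ret_fundamental_solutions}.

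I would then tackle the support axiom $(iii)$ first, as it is the cleanest. Given $x\notin \overline{J^{\pm}(\mathrm{supp}(\sigma))}$, pick $\mu$ compactly supported in a neighborhood $U\ni x$ disjoint from $\overline{J^{\pm}(\mathrm{supp}(\sigma))}$. By hypothesis on $F^{\mp}$, $\mathrm{supp}(F^{\mp}(\mu))\subseteq J^{\mp}(\mathrm{supp}(\mu))$; any $y$ in its intersection with $\mathrm{supp}(\sigma)$ would lie in $J^{\mp}(z)$ for some $z\in\mathrm{supp}(\mu)\subseteq U$, whence $z\in J^{\pm}(y)\subseteq J^{\pm}(\mathrm{supp}(\sigma))$, contradicting the choice of $U$. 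Thus $F^{\mp}(\mu)(\sigma)=0$, so $G^{\pm}\sigma$ vanishes on $U$.

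For $(ii)$, integration by parts through Proposition \ref{prop_1_adjoint_differential_operator} (legitimate because $\sigma$ has compact support) and the defining property $D^{\dagger}F^{\mp}(\mu)=\mu$ yield
$$
\mu(G^{\pm}D\sigma)=F^{\mp}(\mu)(D\sigma)=\langle D^{\dagger}F^{\mp}(\mu),\sigma\rangle=\langle \mu,\sigma\rangle=\mu(\sigma),
$$
hence $G^{\pm}D\sigma=\sigma$ by non-degeneracy. For $(i)$ the mirror computation
$$
\mu(DG^{\pm}\sigma)=\langle G^{\pm}\sigma,D^{\dagger}\mu\rangle=F^{\mp}(D^{\dagger}\mu)(\sigma)
$$
reduces to establishing $F^{\mp}(D^{\dagger}\mu)=\mu$, and this is where the main obstacle sits: Theorem \ref{thm_1_existence_adv/ret_fundamental_solutions} only asserts that $F^{\mp}$ is a right inverse of $D^{\dagger}$. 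I would handle it by observing that $u\doteq F^{\mp}(D^{\dagger}\mu)-\mu$ solves $D^{\dagger}u=0$ with $\mathrm{supp}(u)\subseteq J^{\mp}(\mathrm{supp}(\mu))$; since $\mathrm{supp}(\mu)$ is compact and $(M,g)$ is globally hyperbolic (Theorem \ref{thm_1_global_hyperbolicity}), uniqueness of the Cauchy problem for the normally hyperbolic $D^{\dagger}$, together with standard energy estimates, forces $u=0$.

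The remaining task is to upgrade $G^{\pm}\sigma$ from a distributional to a genuinely smooth section. Either propagation of regularity for the normally hyperbolic $D$ (using $DG^{\pm}\sigma=\sigma\in\Gamma^{\infty}$, already established in $(i)$) does the job, or, more directly, $G^{\pm}$ is by construction the transpose of the continuous linear map $F^{\mp}:\Gamma^{\infty}_c\to\Gamma^{\infty}$, so its image lies in $\Gamma^{\infty}$ and continuity in the appropriate topologies is automatic.
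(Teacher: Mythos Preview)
Your approach is correct and matches the paper's: verify the three axioms of Definition \ref{def_1_Green_functions} via the defining pairing and the properties of $F^{\mp}$. In fact you are more careful than the paper on one point: the paper simply writes $F^{\mp}(D^{\dagger}\mu)(\sigma)=\mu(\sigma)$ without comment, whereas you correctly flag that Theorem \ref{thm_1_existence_adv/ret_fundamental_solutions} only supplies the \emph{right}-inverse property $D^{\dagger}F^{\mp}(\mu)=\mu$, and you fill the gap with the standard past/future-compact uniqueness argument on a globally hyperbolic spacetime.

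One small caveat on your final paragraph: the second suggestion for smoothness --- that the transpose of the continuous map $F^{\mp}:\Gamma^{\infty}_c\to\Gamma^{\infty}$ automatically has image in $\Gamma^{\infty}$ --- does not follow as stated. The transpose of such a map sends compactly supported distributional sections to distributional sections, not to smooth ones. Your first suggestion (regularity for the normally hyperbolic $D$, or equivalently uniqueness of the advanced/retarded solution with smooth compactly supported source, which forces the a priori distributional $G^{\pm}\sigma$ to coincide with the smooth solution) is the correct route; the paper does not address this point explicitly either.
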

\begin{proof}
From
$$
	\mu \left( D  G^{\pm} \sigma \right)= D^{\dagger} (\mu) \left(  G^{\pm} \sigma \right) = F^{\mp} (D^{\dagger} \mu) \sigma = \mu (\sigma)
$$
and 
$$
	\mu\left(   G^{\pm} D\sigma\right)= F^{\mp} (\mu) \left(  D \sigma \right) = D^{\dagger} F^{\mp} (\mu) \sigma = \mu (\sigma)
$$
we immediately get the first two properties of Green operators. The continuity follows from the fact that $F^{\mp}$ is continuous and the above calculations. We then look at supports which explains the flip from $\pm$ to $\mp$. Fix a section $\mu \in \Gamma^{\infty}_{c}(M\leftarrow E^{*}\otimes  \Lambda_n (M))$, then  $\mathrm{supp}(F^{\mp}(\mu)) \subseteq J^{\mp}(\mathrm{supp}(\mu))$, so $F^{\mp} (\mu)( \sigma )$ vanish if $ J^{\mp}(\mathrm{supp}(\mu)) \cap \mathrm{supp}(\sigma)=\emptyset$ which is equivalent to say $ J^{\pm}(\mathrm{supp}(\sigma)) \cap \mathrm{supp}(\mu)=\emptyset$. Thus, if $x \notin J^{\pm}(\mathrm{supp}(\sigma))$ and we choose some $\mu$ with  $ J^{\pm}(\mathrm{supp}(\sigma)) \cap \mathrm{supp}(\mu)=\emptyset $ then $\mu \left( G^{\pm} \sigma \right)=  F^{\mp} (\mu)( \sigma )=0$, so $ G^{\pm} \sigma=0$, which proves that $\mathrm{supp}(G^{\pm} \sigma) \subseteq  J^{\pm} (\mathrm{supp}(\sigma))$. The other statement is analogous.  
\end{proof}


The final result sums up everything there is to know for Green functions:

\begin{theorem} \label{thm_1_properties_of_Green_functions}
Let $E\to M$ be a vector bundle with base a globally hyperbolic Lorentzian manifold and let $D \in \mathrm{DiffOp}^2(E)$ be a normally hyperbolic differential operator. Then $D$ admits global Green operators $G_M^{\pm}: \Gamma^{\infty}_c(M\leftarrow E) \rightarrow \Gamma^{\infty}(M\leftarrow E)$ and their causal propagator $G = G^+-G^-$, satisfying the following properties:
\begin{itemize}
\item[$(i)$] \textbf{Continuity.} $G^{\pm}$, $G$ are continuous linear operators where $\Gamma^{\infty}_c(M\leftarrow E)$ has the LF-topology described in \ref{ex_A_Frechet_structure_comp_supp_smooth_funct} and $\Gamma^{\infty}(M\leftarrow E) $ the Fréchet topology described in \ref{ex_A_smooth_function_frechet_space}; and admit a continuous and linear extension to the spaces $\Gamma^{-\infty}_{\pm}(M\leftarrow E)$ topological dual to $\Gamma^{\infty}_{\mp}(M\leftarrow E)=\lbrace \sigma \in \Gamma^{\infty}(M\leftarrow E) :\space\ \forall x\in M \space\ \mathrm{supp}(\vec{u})\cap J^{\mp}(x) $ is compact $\rbrace$. 
\item[$(ii)$] \textbf{Support Properties.} 
$$
	\mathrm{supp}(G^{\pm} u) \subset J^{\pm}(\mathrm{supp}(u)
$$
for all $u \in \Gamma^{-\infty}_{\pm}(M\leftarrow E)$.
\item[$(iii)$] \textbf{Cauchy problem.} For every $v \in \Gamma_c^{-\infty}(M\leftarrow E)$ and spacelike Cauchy hypersurface $i:\Sigma \hookrightarrow M$ with $\mathrm{supp}(v) \subseteq I^{\pm}(\Sigma)$ and all $u_0$, $\dot{u}_0$ $\in \Gamma_c^{\infty}(M\leftarrow i^{*}E)$ there is a unique $u_{\pm} \in \Gamma^{-\infty}(M\leftarrow E)$ with
\begin{gather*}
	Du_{\pm}=v ,\\
	\mathrm{supp}(u_{\pm}) \subseteq J^{\pm}\big(\mathrm{supp}(u_0) \cup \mathrm{supp}(\dot{u_0})\big) \cup J^{\pm}(\mathrm{supp}(v)).
\end{gather*}
The section $u_{\pm}$ also depends continuously on $v$, $u_0$ and $\dot{u}_0$.
\item[$(iv)$] \textbf{Wave Front Sets.} 
\begin{align*}
	\mathrm{WF}(G^{\pm}) &= \{ (x,x;\xi,-\xi)\in \dot{T}^{*}(M \times M) \, \, \mathrm{with} \ (x;\xi) \in \dot{T}^{*}M  \} \\ &\quad \cup \{ (x_1,x_2;\xi_1,\xi_2) \in \dot{T}^{*}(M\times M) \, \mathrm{with} \, (x_1,x_2;\xi_1,-\xi_2) \in \mathrm{BiCh}^{g}_{\gamma} \}, \\
	\mathrm{WF}(G_M) &= \{ (x_1,x_2;\xi_1,\xi_2)\in \dot{T}^{*}(M\times M)\, \, \mathrm{with} \,\,  (x_1,x_2;\xi_1,-\xi_2) \in \mathrm{BiCh}^{g}_{\gamma} \};
\end{align*}
where $ \mathrm{BiCh}^{g}_{\gamma}$ is the bicharacteristic strip of the lightlike geodesic $\gamma$, \textit{i.e.} the set of points $(x_1,x_2;\xi_1,\xi_2)$ such that there is an interval $[0,\Lambda] \subset \mathbb{R}$ for which $(x_1,\xi_1)=(\gamma(0), g^{\flat}\dot{\gamma}(0))$ and $(x_2,-\xi_2)=(\gamma(\Lambda), g^{\flat}\dot{\gamma}(\Lambda))$.
\item[$(v)$] \textbf{Propagation of Singularities.} Given $u \in \Gamma^{-\infty}_{\pm}(M \leftarrow E)$, $(x,\xi) \in \mathrm{WF}(G^{\pm}(u))$ if either $(x,\xi) \in \mathrm{WF}(u)$ or there is a lightlike geodesic $\gamma$ and some $(y;\eta) \in \mathrm{WF}(u)$ such that $(x,y;\xi,-\eta) \in \mathrm{BiCh}^{g}_{\gamma}$. Similarly, $(x,\xi) \in \mathrm{WF}(G(u))$ if there is a lightlike geodesic $\gamma$ and some $(y,\eta) \in \mathrm{\mathrm{WF}}(u)$ such that $(x,y;\xi,-\eta) \in \mathrm{BiCh}^{g}_{\gamma}$.
\end{itemize}
\end{theorem}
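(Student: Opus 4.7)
The overall plan is to build systematically on the preceding theorem, which already produces Green operators $G^{\pm}:\Gamma^{\infty}_c(M\leftarrow E)\to \Gamma^{\infty}(M\leftarrow E)$ satisfying the conditions in Definition \ref{def_1_Green_functions}; the work then lies in extending these operators to the distributional setting, solving the Cauchy problem, and computing the microlocal data. For \textbf{(i)}, I would first verify continuity on the smooth side by noting that $G^{\pm}$, composed with the projection to $\Gamma^{\infty}_{\pm}(M\leftarrow E)$, is forced to be continuous by the closed graph theorem once one establishes that its graph is sequentially closed in the appropriate LF/Fr\'echet topologies, using Theorem \ref{thm_1_existence_adv/ret_fundamental_solutions} and the support property of the associated fundamental solutions $F^{\pm}$. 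To extend to $\Gamma^{-\infty}_{\pm}(M\leftarrow E)$, I would define $G^{\pm}u$ by transposition: for $u\in \Gamma^{-\infty}_{\pm}$ and test section $\mu\in \Gamma^{\infty}_c(M\leftarrow E^{*}\otimes \Lambda_n(M))$, set $(G^{\pm}u)(\mu)\doteq u(G^{\mp\dagger}\mu)$, noting that $G^{\mp\dagger}\mu$ lies in $\Gamma^{\infty}_{\mp}$, which is precisely the predual of $\Gamma^{-\infty}_{\pm}$. Continuity of this extension is automatic from continuity of $G^{\mp\dagger}$.

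For \textbf{(ii)}, the support property at the smooth compactly supported level is built into the definition; for $u\in \Gamma^{-\infty}_{\pm}(M\leftarrow E)$, I would argue that if $\mu$ is supported away from $J^{\pm}(\mathrm{supp}(u))$, then $\mathrm{supp}(G^{\mp\dagger}\mu)\subseteq J^{\mp}(\mathrm{supp}(\mu))$ is disjoint from $\mathrm{supp}(u)$, so $(G^{\pm}u)(\mu)=u(G^{\mp\dagger}\mu)=0$, yielding the claim. For \textbf{(iii)}, I would follow the classical Leray-type scheme adapted to normally hyperbolic operators on globally hyperbolic backgrounds: first reduce to zero initial data by subtracting a smooth compactly supported extension of $(u_0,\dot u_0)$ off $\Sigma$, which modifies the source by a compactly supported distribution; then set $u_{\pm}\doteq G^{\pm}$ applied to the modified source. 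Uniqueness follows from standard energy estimates on lens-shaped domains between two Cauchy hypersurfaces (using Lemma \ref{lemma_0_wave_eq_form_of_normally_hyp_op} to recast $D$ as a connection wave operator plus a zeroth-order term), and continuous dependence on the data follows from the continuity established in \textbf{(i)} together with the continuity of the restriction and covariant normal derivative maps.

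For \textbf{(iv)}, this is the genuinely microlocal step and where I expect most of the technical difficulty. I would exploit the defining identity $DG^{\pm}=\mathrm{id}$: as an equation of distributional kernels on $M\times M$ this reads $(D\otimes \mathrm{id})K_{G^{\pm}} = K_{\mathrm{id}}$, whose right-hand side has wave front set equal to the conormal bundle of the diagonal, i.e.\ the first set $\{(x,x;\xi,-\xi)\}$ appearing in the claim. Since $D$ is of real principal type with principal symbol $\tfrac12 g^{-1}\otimes \mathrm{id}_E$, H\"ormander's propagation-of-singularities theorem forces any additional wave front elements of $K_{G^{\pm}}$ to lie over bicharacteristics of $D\otimes \mathrm{id}$, i.e.\ pairs $(x_1,x_2;\xi_1,-\xi_2)$ on a common lightlike bicharacteristic strip, precisely $\mathrm{BiCh}^g_{\gamma}$. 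The support conditions on $G^{\pm}$ then select which half of each bicharacteristic actually appears. For $G=G^{+}-G^{-}$, the diagonal conormal piece cancels identically because $DG=0$ implies $K_G$ is a bisolution, leaving only the bicharacteristic contributions. Finally, \textbf{(v)} is a direct corollary of \textbf{(iv)} via H\"ormander's composition theorem for wave front sets: $\mathrm{WF}(G^{\pm}u)\subseteq \mathrm{WF}(G^{\pm})'\circ \mathrm{WF}(u)\cup \mathrm{WF}(u)$, which after unpacking yields the stated characterization in terms of transport along null geodesics. The hardest individual step will be the wave front computation in \textbf{(iv)}, which I would most likely carry out by appealing to the Duistermaat--H\"ormander parametrix construction and Radzikowski's analysis rather than re-deriving propagation of singularities from scratch.
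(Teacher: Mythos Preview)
The paper does not actually prove this theorem. It is introduced with ``The final result sums up everything there is to know for Green functions'' and is stated as a summary of known results, followed only by a remark about notation for distributional sections and wave front sets of vector-valued distributions. The preceding material establishes existence of $G^{\pm}$ from the fundamental solutions $F^{\pm}$ of $D^{\dagger}$ (via Theorem~\ref{thm_1_existence_adv/ret_fundamental_solutions}) but does not treat continuity of the extension, the Cauchy problem, or the wave front computations; those are implicitly deferred to the standard literature (B\"ar--Ginoux--Pf\"affle, Duistermaat--H\"ormander, Radzikowski, etc.).

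Your outline is a reasonable sketch of how these results are typically proved, and is consistent with the preceding theorems in the paper. One point worth sharpening in \textbf{(iv)}: the argument that ``$DG=0$ implies $K_G$ is a bisolution, so the diagonal conormal piece cancels'' is incomplete as stated. The bisolution property only forces $\mathrm{WF}(K_G)$ into the joint characteristic set $\{g^{-1}(\xi_1)=0\}\cap\{g^{-1}(\xi_2)=0\}$, which still contains diagonal points $(x,x;\xi,-\xi)$ with $\xi$ null. The cleaner argument is that $G^{+}$ and $G^{-}$ have identical singular behaviour on the diagonal (both are parametrices for the same $D$), so the diagonal conormal contribution cancels in the difference; alternatively, combine propagation of singularities with the disjointness of $\mathrm{supp}(G^{+})$ and $\mathrm{supp}(G^{-})$ away from the diagonal to rule out such points. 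Otherwise there is nothing in the paper to compare your proposal against.
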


We remark that in the above theorem the notation $\Gamma^{-\infty}(M\leftarrow E)$ denotes distributional sections of the vector bundle $E$, \textit{i.e.} continuous linear mappings $\Gamma^{\infty}_c(M\leftarrow E) \rightarrow \mathbb{C} $, where the first space is endowed with the usual limit Fréchet topology. We also recall that the wave front set at $x\in M$ of a distributional section ${u}$ of a vector bundle $E$ of rank $k$ is calculated as follows: fixing a trivialization $(U_{\alpha},t_{\alpha})$ on $E$, then locally ${u}$ is represented by $k$ distributions ${u}^i\in \mathcal{D}'(U_{\alpha})$, each of which will have its own wave front set. 
Then we set
\begin{equation}\label{eq_1_def_WF_sections}
\mathrm{WF}({u})\doteq \bigcup_{i=1}^k \mathrm{WF}\big({u}^i\big).
\end{equation}
It is possible to show that choosing a different trivialization in the vector bundle give rise to a smooth vertical fibered morphism which does not alter the wave front set, as a result \eqref{eq_1_def_WF_sections} is independent of the trivialization chosen and hence well defined.

\section{Locally convex vector spaces and calculus on them}\label{section_LCS&calculus}

\subsection{Locally convex spaces}

Generally speaking a \textit{locally convex space} (LCS) $E$ is a \textit{topological vector space} (TVS), \textit{i.e.}, a vector space $E$ with a topology compatible with vector addition and scalar multiplication, such that the basis of neighborhoods of $0\in E$ is made by convex subsets. The reason for this definition is motivated by the fact that it would be desirable to have, as neighborhoods of $0$, the closest things possible to balls. Convexity is therefore the condition allowing this construction; however, it should be noted that in general TVSs there is no guarantee of finding such a basis. It can be shown (\textit{e.g.} in \cite[Proposition 7.6]{treves2016topological}) that a LCS is equivalently described by a family of \textit{seminorms} generating the basis of the topology. A seminorm is a mapping $p:E \to \mathbb{R}$ continuous in the vector space topology of $E$ satisfying
\begin{itemize}
    \item[$(i)$] $p(x)\geq 0 $ for all $x\in E$;
    \item[$(ii)$] $p(x+y)\leq p(x)+p(y)$ for all $x, \ y \in E$;
    \item[$(iii)$] $p(\alpha x)=|\alpha|p(x)$ for all $x\in E, \ \alpha\in \mathbb{R}$.
\end{itemize}
Most of the times we do not describe locally convex spaces by giving a TVS structure admitting a base of convex subsets for the topology, instead, we give a family of seminorms and then induce a topology from this family checking that the latter is compatible with the vector space structure. Clearly each Banach space is also a locally convex one, the next best things are \textit{Fréchet spaces}. Those are complete, metrizable locally convex spaces. Before giving some relevant examples, we describe two more properties that locally convex spaces can have. A LCS $E$ is \textit{barreled} if every absorbing, balanced, closed, convex subset $U$\footnote{A subset $U$ of $E$ is called: absorbent if given any $x\in E$, there is $\lambda\in \mathbb R$ such that $x\in \lambda U$; balanced if $x\in U \Rightarrow -x \in U$.} is a neighborhood of $0\in E$. It is a \textit{Montel} space if it is Hausdorff, barreled and if every closed bounded subset is compact. Montel spaces are very fascinating: indeed, the Banach-Alaoglu theorem states that, in Banach spaces, closed and bounded subsets of the strong dual are never compact, but only weakly compact and therefore weakly convergent sequences in dual are generally not strongly convergent; however, in Montel spaces closed bounded subsets of the strong dual are compact (\cite[Proposition 34.6]{treves2016topological}) and therefore weak and strong convergence in dual of Montel spaces are equivalent. The second property we wish to introduce is \textit{nuclearity}. We can introduce this property by trying to give a topology to the space $E\otimes F$ given locally convex spaces $E, \ F$. There are essentially two ways to do this: on the one hand, we can regard $E\times F \to E\otimes F$ as a quotient space and endow it with the quotient topology, we denote it by $E\otimes_{\pi} F$; on the other hand, if we denote by $E'_{\sigma}$ the dual of $E$ with the topology of pointwise convergence, then one can show (see for instance \cite[Proposition 42.4]{treves2016topological}) that $E\otimes F$ is isomorphic to the space $\mathfrak{B}(E'_{\sigma}, F'_{\sigma})$ of continuous bilinear mappings $E'_{\sigma}\times F'_{\sigma}\to \mathbb{R}$. Then one can induce a topology on $E\otimes F$ by inducing, on $\mathfrak{B}(E'_{\sigma}, F'_{\sigma})$, the topology of uniform convergence on products of equicontinuous\footnote{A subset $H$ of the space of linear mappings $\mathfrak L (E,\mathbb R)$ is equicontinuous if for each neighborhood $V$ of $0\in \mathbb R$, there is a neighborhood $U$ of $0\in E$ such that for all $u\in H$, $u(U)\subset V$.} set of $E'$ and $F'$, we then denote this topology by $E\otimes_{\epsilon} F$. In general the two are not equivalent and the $\pi$-topology is finer, however, when doing tensor product of locally convex spaces it is desirable to have a unique topology and not choose between the two. We therefore call $E$ \textit{nuclear} if for any other locally convex space $F$ the completions of the tensor products $E\otimes_{\pi} F$ and $E\otimes_{\epsilon} F$ are topologically isomorphic. For additional details on nuclear spaces see \cite[Chapters 50, 51]{treves2016topological} as well as \cite{nlcs}.\\

A very important example of Fréchet space, which we use throughout this thesis, is the space $C^{\infty}(M)$ of smooth functions on a differentiable manifold $M$. To differentiate from other topologies one can give the latter space (see for instance the $WO^{\infty}$-topology in \Cref{section_topology_map_manifold}), we shall denote it by $\mathcal{E}(M)$ when endowed with its Fréchet structure.

\begin{example}[Fréchet space structure of $\mathcal{E}(M)$]\label{ex_A_smooth_function_frechet_space}
\end{example}
    We start by considering the two families of seminorms
    \begin{enumerate}
        \item[(A)] given an atlas $\{(U_i,u_i)\}_{i\in \mathbb I}$ for $M$ and $K\subset M$ compact, consider the seminorms
        \begin{equation}
            p_{K,n,U_i,u_i}: f \mapsto \sup_{i \in I} \sup_{x\in K\cap U_i, \ j\leq n} |\partial^j (f\circ u_i^{-1})(x)|;
        \end{equation}
        \item[(B)] given any (Riemannian) metric on $M$, denote by $\nabla$ its covariant derivative, then consider the seminorms
        \begin{equation}
            p_{K,n,\nabla}: f \mapsto \sup_{x\in K, \ j\leq n} |\nabla^j f(x)|,
        \end{equation}
        with $K\subset M$ compact subset.
    \end{enumerate}
    That those families are indeed seminorms can be inferred by employing basic estimates. The seminorms of (A) are clearly well defined since only finitely many $U_i$ intersect each $K$, the drawback is that they depend, a priori, on the differential structure chosen. Those of (B), instead, do not depend on the smooth structure but on the choice of some connection to carry out the covariant differentiation.
    \begin{lemma}\label{lemma_A_Frechet_equiv_seminorms}
        The family of seminorms in (A)(respectively (B)) do not depend on the smooth structure (respectively on the connection) chosen. Moreover the two families are equivalent, \textit{i.e.} there are positive real constants $c_1$, $c_2$ for which
        \begin{equation}
            c_1 p_{K,n,\nabla}(f)\leq p_{K,n,U_i,u_i}(f) \leq c_2 p_{K,n,\nabla}(f)
        \end{equation}
        for all $f \in C^{\infty}(M)$.
    \end{lemma}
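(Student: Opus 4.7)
The plan proceeds in three steps that mirror the three assertions of the lemma. Throughout, I would rely on paracompactness of $M$ to assume the atlas appearing in family (A) is locally finite, so that for each compact $K$ only finitely many $U_i$ intersect $K$ and the supremum defining $p_{K,n,U_i,u_i}$ is genuinely finite. The core analytic input everywhere is simply that a smooth function or tensor field on $M$ is uniformly bounded on any compact subset.

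\textbf{Step 1 (independence of (A) from the atlas).} Given two atlases $\{(U_i,u_i)\}$ and $\{(V_j,v_j)\}$ compatible with the smooth structure, the transition maps $u_i\circ v_j^{-1}$ are $C^{\infty}$-diffeomorphisms between open sets of $\mathbb{R}^n$. For $f\in\mathcal{E}(M)$ the identity $f\circ v_j^{-1}=(f\circ u_i^{-1})\circ(u_i\circ v_j^{-1})$ together with Faà di Bruno's formula expresses each partial derivative of order $\leq n$ of $f\circ v_j^{-1}$ as a polynomial in partial derivatives of $f\circ u_i^{-1}$ of order $\leq n$, with coefficients built from partial derivatives of the transition map. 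Since $K\cap U_i\cap V_j$ is relatively compact in $V_j$, those coefficients are uniformly bounded by some $C_{K,n}$, yielding $p_{K,n,V_j,v_j}(f)\leq C_{K,n}\sum_i p_{K,n,U_i,u_i}(f)$; the reverse estimate follows by symmetry.

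\textbf{Step 2 (independence of (B) from the connection).} For two metric connections $\nabla,\widetilde\nabla$, the difference tensor $S\doteq\widetilde\nabla-\nabla$ is a smooth $(1,2)$-tensor on $M$. A straightforward induction on $n$ using the Leibniz rule gives a universal expansion
\begin{equation*}
\widetilde\nabla^{n}f \;=\; \nabla^{n}f \;+\; \sum_{k<n} P_{n,k}\bigl(\nabla^{\ell}S:\ell\leq n-1\bigr)\,\nabla^{k}f,
\end{equation*}
with $P_{n,k}$ universal polynomial expressions. Each $\nabla^{\ell}S$ is continuous, hence bounded on $K$, so one obtains positive constants $c_1,c_2$ with $c_1\,p_{K,n,\nabla}(f)\leq p_{K,n,\widetilde\nabla}(f)\leq c_2\,p_{K,n,\nabla}(f)$.

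\textbf{Step 3 (equivalence of (A) and (B)).} Cover $K$ by finitely many relatively compact coordinate domains $(U_{i_1},u_{i_1}),\dots,(U_{i_N},u_{i_N})$ and set $K_{\alpha}\doteq K\cap\overline{U_{i_{\alpha}}}$, still compact. In any such chart $\nabla_{\mu}f=\partial_{\mu}f$, and for $n\geq 2$ iteration of $\nabla$ expresses $\nabla_{\mu_1}\cdots\nabla_{\mu_n}f$ as $\partial_{\mu_1}\cdots\partial_{\mu_n}f$ plus a linear combination of partial derivatives of orders $<n$ whose coefficients are polynomials in the Christoffel symbols and their partial derivatives of orders $\leq n-2$. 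Inverting this triangular system by induction on $n$ expresses each partial derivative as an analogous combination of covariant derivatives. On each $K_{\alpha}$ the Christoffel symbols and their derivatives are bounded, so summing over $\alpha$ produces two-sided estimates between $p_{K,n,\nabla}(f)$ and $\sum_{\alpha}p_{K_{\alpha},n,U_{i_{\alpha}},u_{i_{\alpha}}}(f)$, which is in turn equivalent to $p_{K,n,U_i,u_i}(f)$ by Step 1 applied to the chosen finite subcover versus an arbitrary atlas.

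The main obstacle, such as it is, lies in the combinatorial bookkeeping for the higher-order chain rule and for the polynomial expansion of $\nabla^{n}$ in Christoffel symbols; once those universal formulas are granted, the analytic content reduces in every step to uniform boundedness of a smooth tensor field on a compact subset.
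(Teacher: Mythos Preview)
Your proposal is correct and follows essentially the same approach as the paper: atlas independence via the chain rule on transition maps, connection independence via induction using the difference tensor $\widetilde\nabla-\nabla$, and equivalence of (A) and (B) via the triangular relation between covariant and partial derivatives with Christoffel symbols bounded on compacts. The only cosmetic differences are that the paper phrases Step~2 in local Christoffel notation rather than the global difference tensor and uses a partition of unity rather than a finite subcover in Step~3.
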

    We stress that the main consequence of Lemma \ref{lemma_A_Frechet_equiv_seminorms} is that on $C^{\infty}(M)$ the topologies induced by the families of seminorms (A) or (B) do coincide and therefore do not depend on the atlas or the connection chosen to perform calculations but only on the compact subset and the differentiation order. We shall henceforth denote $p_{n,K}$ the seminorms on $\mathcal{D}(M)$.
    \begin{proof}
        That two families with two different atlases in (A) are equivalent follows directly from the fact that in the intersection $U_i\cap \widetilde{U}_j$ the transition mapping $u_i \circ \widetilde{u}_j^{-1}$ is a smooth diffeomorphism and is therefore bounded in $K\cap U_i\cap \widetilde{U}_j$. That two different connections induces equivalent families in (B) follows from induction:
        \begin{itemize}
            \item if $n=1$, then $\nabla f\equiv \partial f \equiv \widetilde{\nabla} f$;
            \item if $n$ is arbitrary then we can write 
            $$
                \nabla^nf = \partial (\nabla^{n-1}f) + \sum \Gamma(\nabla^{n-1}f)  
            $$
            where the sum is linear in the Christoffel symbols $\Gamma $ of the connection and features $n$ terms. Then
            $$
            \begin{aligned}
                \nabla^nf-\widetilde{\nabla}^nf&= \partial \big(\nabla^{n-1}f- \widetilde{\nabla}^{n-1}f\big)+ \sum (\Gamma-\widetilde{\Gamma})\nabla^{n-1}f + \sum \widetilde{\Gamma} \big(\nabla^{n-1}f- \widetilde{\nabla}^{n-1}f\big) 
            \end{aligned}
            $$
            by the induction hypothesis one can estimate the coefficients $\big(\nabla^{n-1}f- \widetilde{\nabla}^{n-1}f\big) $ appropriately, $(\Gamma-\widetilde{\Gamma})$ are the smooth coefficients of a $(1,2)$ tensor which can be estimated by a constant in the compact $K$.
        \end{itemize}
        Finally, let $\{\psi_i\}$ be a partition of unity subordinated to the atlas $\{(U_i,u_i)\}$; recall that in a chart $\nabla^n f = \partial^n f + \sum q(\partial,\Gamma)f$ where $q$ is a polynomial of order at most $n-1$, then we can estimate 
        $$
        \begin{aligned}
            p_{K,n,\nabla}(f)= & \sup_{j\leq n,K} \big|\nabla^j\big(\sum_i\psi_i f\big)\big|\leq \sup_{j\leq n, K}\sum_{i}\big|\nabla^j(\psi_i f)\big| \\ 
            \leq & \sup_{j\leq n, K}  \sum_i\Big| \big( \partial^j(\psi_if\circ u_i^{-1})\big)+ \sum_{0\leq k<j} \Gamma^{j-k}\partial^k(\psi_if\circ u_i^{-1}) \big)\Big|\\
            \leq & \sum_i p_{K,n,U_i,u_i} (\psi_i f) +\sum_{i, \ 0\leq k \leq j \leq n} C(\Gamma, j-k) p_{K,j,U_i,u_i}(\psi_if)\\
            \leq & \sum_i C_i(\Gamma)\big( \sup_{0\leq j\leq n } p_{K,j,U_i,u_i}(\psi_if)\big) \leq c_1p_{K,n,U_i,u_i}(f).
        \end{aligned}
        $$
        The other estimate follows similarly if we notice that $\nabla^nf = \partial (\nabla^{n-1}f) + \sum \Gamma(\nabla^{n-1}f)$ so that we can estimate $p_{K,1,U_i,u_i}(\nabla^{n-1}f)\leq C p_{K,n,\nabla}(f)$, iterating this procedure and we arrive at
        $$
            p_{K,n,U_i,u_i}(f) \leq c_2 p_{K,n,\nabla}(f).
        $$
    \end{proof}
    \begin{remark*}
        To check that the topology induced by any of the two seminorms in (A) or (B) is compatible with the vector space structure of $C^{\infty}(M)$, we can check that it satisfies conditions (3.1)-(3.5) in \cite[Theorem 3.1]{treves2016topological}. (3.1) holds since the zero function belongs to any convex ball $\{p_{n,K}\leq \epsilon\}$, (3.2) follows from property $(ii)$ of the seminorm, (3.3) from $\lambda \{ p_{n,K}\leq \epsilon\}=\{p_{n,K}\leq \lambda \epsilon\}$ for any $\lambda \in \mathbb{R}$ and finally, by (iii) any ball $\{p_{n,K}\leq 1\}$ is absorbing and balanced thus satisfying conditions $(3.4)$ and (3.5).

    \end{remark*}

That $C^{\infty}(M)$ is metrizable follows from that fact than since $M$ is locally compact and second countable, it admits an exhaustion by compact subsets $K_l$, then $p_{n,K_l}$ is a countable basis of the above locally compact topology. 

Finally completeness of $C^{\infty}(M)$ can be obtained by showing:

\begin{lemma}\label{lemma_0_CO_top_is_Frechet}
    The topology of $\mathcal{E}(M)$ described above coincides with the topology of uniform convergence with all derivatives on compact subsets.
\end{lemma}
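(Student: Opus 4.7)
The plan is to verify that the two topologies coincide by exhibiting a mutually continuous family of seminorms. By definition, the topology of uniform convergence with all derivatives on compact subsets is the locally convex topology generated by the seminorms
\[
q_{K,j}(f) \doteq \sup_{x\in K}|\nabla^j f(x)|,
\]
indexed by compact subsets $K\subset M$ and non-negative integers $j$; equivalently, a net $(f_\alpha)$ converges to $f$ in this topology iff $\nabla^j f_\alpha \to \nabla^j f$ uniformly on every compact $K$, for every $j\in\mathbb{N}$. The Fréchet topology already constructed on $\mathcal{E}(M)$ is, by the analysis preceding the lemma, the one generated by the seminorms $p_{K,n,\nabla}$.

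The observation is then that the two families are tautologically comparable: we have $p_{K,n,\nabla}(f) = \max_{0\leq j\leq n} q_{K,j}(f)$, so every $p_{K,n,\nabla}$ is a continuous seminorm in the $q$-topology (being a finite maximum of $q$-seminorms). Conversely $q_{K,j}(f)\leq p_{K,j,\nabla}(f)$ for all $f\in C^\infty(M)$, so every $q_{K,j}$ is continuous in the Fréchet topology. By the standard criterion for locally convex topologies generated by families of seminorms (see \textit{e.g.}\ \cite[Proposition 7.7]{treves2016topological}), the identity map between the two topologies is continuous in both directions, so the topologies coincide.

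The main subtlety, rather than an obstacle, is making sure the $q$-topology on $\mathcal{E}(M)$ does not depend on the auxiliary structures used to define the derivatives (a chart on $M$ or a connection). This, however, has already been settled in Lemma~\ref{lemma_A_Frechet_equiv_seminorms}, where the equivalence of the chart-based and connection-based families of seminorms was established, and the same argument (or a direct application of that lemma) shows that the $q_{K,j}$ built from any two choices generate the same topology. Once this identification is in place, the equivalence with the Fréchet topology reduces to the short comparison above.
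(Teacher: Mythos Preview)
Your proof is correct and takes essentially the same approach as the paper: both arguments amount to observing that the seminorms $p_{K,n,\nabla}$ are exactly $\max_{0\le j\le n} q_{K,j}$, so the two families generate the same topology. The paper phrases this at the level of convergence of sequences (if $f_n\to f$ for one family of seminorms then also for the other), while you phrase it directly as a mutual domination of seminorms; the content is identical.
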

\begin{proof}
If $f_n \to f$ in $\mathcal{E}(M)$, then each neighborhood $\{g\in C^{\infty}(M):\ p_{n,K}(g-f)\leq \epsilon\}$ of $f$ must contain each $f_n$ for $n $ big enough, therefore in the compact subset $K$ the functions $f_n$ and $f$ differ uniformly (together with all derivatives) by at most $\epsilon$, thus $f_n\to f$ converges uniformly in $K$ with all its derivatives. Viceversa, if $j^nf_n\to j^nf$ converges uniformly in $K$, then, for $n$ sufficiently big, $f_n \in B_{1/k}(f)\{g\in \mathcal{D}(M):\ p_{n,K}(g-f)\leq 1/k \}$.
\end{proof}

Notice also that by \cite[Theorem 51.5 together with its Corollary]{treves2016topological} $C^{\infty}(M)$ is nuclear.

\begin{example}[LF space structure of $\mathcal{D}(M)$]\label{ex_A_Frechet_structure_comp_supp_smooth_funct}
\end{example}
Another space of significative interest is the space $C^{\infty}_c(M)$ of compactly supported functions on a manifold $M$.\footnote{Here we assume that the manifold is not compact, otherwise $C^{\infty}_c(M)=C^{\infty}(M)$ can be endowed with the Fréchet space structure of Example \ref{ex_A_smooth_function_frechet_space}.} We topologize it as follows: first, note that for any compact subset $K\subset M$ the space $C^{\infty}_K(M)$ of smooth functions supported inside $K$ can be given a Fréchet space structure with seminorms $\{p_{n,K}\}_{n\in \mathbb{N}}$ analogously to Example \ref{ex_A_smooth_function_frechet_space}; secondly, given any exhaustion $\{K_n\}_{n\in \mathbb{N}}$ by compact subsets of $M$, we can see $C^{\infty}_c(M)$ as the strict countable inductive limit 
$$
    C^{\infty}_c(M)=\lim_{\substack{\longrightarrow \\ n\in \mathbb{N}}} C^{\infty}_{K_n}(M)
$$
and endow the former space with the final topology with respect to the mappings $i_n: C^{\infty}_{K_n}(M) \hookrightarrow C^{\infty}_{c}(M)$. In particular, a neighborhood $U$ of $0\in C^{\infty}_c(M)$ will be open if and only if $U\cap C^{\infty}_{K_n}(M) $ is an open neighborhood of $0$ for each $n\in \mathbb{N}$. Any such space is called a \textit{limit Fréchet} space (or LF-space for short) and by \cite[Theorem 13.1]{treves2016topological} it is complete, moreover (see for instance \cite[Proposition 3.39]{osborne2014locally}) the final topology on $C^{\infty}_c(M)$ does not depend on the exhaustion chosen. Although $C^{\infty}_c(M)$ is not Banach nor Fréchet, one can show that it is Montel and nuclear (see \cite[Proposition 34.4 and Theorem 51.5 together with its Corollary]{treves2016topological}). We will denote $\mathcal{D}(M)$ the space $C^{\infty}_c(M)$ with the final topology induced by mappings $ C^{\infty}_{K_n}(M) \hookrightarrow C^{\infty}_c(M)$ .

We stress that both examples can be straightforwardly generalized to encompass the spaces of smooth sections of vector bundles. Therefore, if $E\to M $ is a vector bundle, $\Gamma^{\infty}(M \leftarrow E)$ can be given the structure of a nuclear Fréchet space, whereas $\Gamma^{\infty}_c(M \leftarrow E)$ becomes a nuclear, Montel, LF-space. 

\subsection{Distributions}\label{section_distributions}


A \textit{distribution} $u$ on $M$ is a linear continuous functional 
$$ 
    u:\mathcal{D}(M) \rightarrow \mathbb{R} 
$$
where $\mathcal{D}(M)$ is endowed with the LF-space structure. This implies that an element $u$ of the algebraic dual is a distribution if and only if 
$$
    u:C^{\infty}_K(M) \rightarrow \mathbb{R}
$$
is continuous for each compact $K\subset M$, or, equivalently, that for each $K\subset M$ compact there is $n\in \mathbb{N}$, a positive constant $c$ such that
$$
	|u(f)|\leq c \sup_{j\leq n} p_{j,K}(f), \ \ \  f \in C^{\infty}_K(M).
$$
We denote by $\mathcal{D}'(M)$ the spaces of distributions. We stress that since each $C^{\infty}_K(M)$ is a Fréchet space, then continuity of $u$ can also be tested by sequential continuity. Any smooth function $g \in C^{\infty}(M)$ induces a distribution by choosing a measure $d\mu$ on $M$ and defining
$$
    f\mapsto \int_M g(x)f(x)d\mu(x).
$$

This last example inspires us to define distributional sections of vector bundles. Let $E \to M$ be a vector bundle and $E'\to M$ its dual bundle here we can choose two different definitions of distributional sections: either
\begin{itemize}
    \item we require that $\Gamma^{\infty}(M\leftarrow E)$ embeds into the space of distributional sections,
\end{itemize}
or
\begin{itemize}
    \item we define it as the space of continuous linear mappings $\Gamma^{\infty}_c(M\leftarrow E)\to \mathbb{R}$.
\end{itemize}
In the second case we have that $\Gamma^{\infty}\big(M\leftarrow E'\otimes \Lambda_m(M)\big)$ embeds into the space of distributions instead of $\Gamma^{\infty}(M\leftarrow E)$. This is true since, given $\mu\in\Gamma^{\infty}\big(M\leftarrow E'\otimes \Lambda_m(M)\big)$, it is natural to define
$$
    X \in \Gamma^{\infty}(M\leftarrow E) \mapsto \int_M \langle X, \mu\rangle (x)\in \mathbb{R}.
$$

\begin{definition}\label{def_A_distributional_sections}
Let $E \rightarrow M$ be a vector bundle, a distributional section is a continuous linear map 
$$ 
    s: \Gamma^{\infty}_{c}(M\leftarrow E) \rightarrow \mathbb{R}.
$$
We denote by $ \Gamma^{-\infty}_c(M\leftarrow E)$ the space of distributional sections.
\end{definition}

As in the previous case, since $\Gamma^{\infty}_{c}(M\leftarrow E)$ is a LF space, continuity of $ s: \Gamma^{\infty}_{c}(M\leftarrow E) \rightarrow \mathbb{R}$ is to be tested on each $\Gamma^{\infty}_{K}(M\leftarrow E)$, thus $s$ is continuous if and only if for each compact $K \subset M$, there is a positive constant $c$ and $n\in \mathbb{N}$ such that 
$$ 
    | s(X) |\leq c \sup_{j\leq n} p_{j,K}(X).
$$
An important notion with distributions is their support:

\begin{definition}\label{def_A_distr._section_support}
Let $U \subseteq M$ be open and $s \in  \Gamma^{-\infty}(M\leftarrow E) $, the \textit{restriction of} $s$ \textit{to} $U$ is the distribution
$$ 
   s|_{U}(X)= s(X), \ \ X \in \Gamma^{\infty}_{c}(U\leftarrow E|_U).
$$
The \textit{support} of $s$ is the set
$$
    \mathrm{supp}(s)= \bigcap_{ \substack {A \subset M \mathrm{closed} \\ \left. s\right|_{M \backslash A}=0}}A.
$$
\end{definition}

We remark that playing with partitions of unity it is possible to endow $ \Gamma^{-\infty}(M\leftarrow E) $ with the structure of a \textit{fine sheaf}. This in particular implies the principle of localization: a distributional section is the zero section if and only if for every point $x\in M$ there is an open neighborhood $U \ni x$ such that $s|_U=0$. If we denote by $\Gamma^{-\infty}_c(M \leftarrow E)$ the set of compactly supported distributions, then one can show that it is isomorphic to the space of continuous linear mappings $s :\Gamma^{\infty}(M\leftarrow E) \to \mathbb{R}$, \textit{i.e.} the dual of the Fréchet space $\Gamma^{\infty}(M\leftarrow E)$.\\

Next we briefly recall the definition and properties of the wave front set that will be used throughout this thesis. Let $U\subset M$ be an open subset. We will henceforth denote, by $\dot T^*U$ the cotangent bundle minus the graph of the zero sections.  

\begin{definition}\label{def_0_WF_set_of_distributions}
Let $u \in \mathcal{D}'(U)$, the wavefront set of $u$, $WF(u)$, is the complement in $\dot T^{*}U $ of the set of points $(x_0,\xi_0) \in T^{*}U$ such that there are :
\begin{itemize}
\item[$(i)$] $\chi \in \mathcal{D}(U)$  with $\chi(x_0)=1$,
\item[$(ii)$] an open conic neighborhood ${V} \subset \mathbb R^n $ of $\xi_0$, with 
$$ 
    \sup_{\xi \in {V}} (1+  |\xi|)^{N}\left| \widehat{\chi u}(\xi) \right| < \infty
$$
for all $N \in \mathbb{N}$ and all $\xi \in {V}$.
\end{itemize} 
\end{definition}

Intuitively, Definition \ref{def_0_WF_set_of_distributions} tells us not only where the distribution $u$ is singular in $U$, but also, via the Fourier transform, in which direction on the cotangent bundle $T^*U$ $u$ fails to be smooth.

\begin{proposition}\label{prop_0_WF_properties_1}
Let $u \in  \mathcal{D}'(U)$.
\begin{itemize}
\item[$(i)$] If $\varphi \in C^{\infty}(U)$, then $$\mathrm{WF}(\varphi u) \subset \mathrm{WF}(u).$$
\item[$(ii)$] Let now $\alpha$ be a multi-index, then $$\mathrm{WF}(\partial^{\alpha} u) \subset \mathrm{WF}(u).$$
\item[$(iii)$] If $D$ is a differential operator with smooth coefficients then $$\mathrm{WF}(Du) \subset \mathrm{WF}(u).$$
\end{itemize}
\end{proposition}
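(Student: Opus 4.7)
The three items are progressive strengthenings, so I would prove (i), then (ii), and deduce (iii) by linearity, keeping in mind that the key inclusion $\mathrm{WF}(\cdot)\subset \mathrm{WF}(u)$ amounts to showing: if $(x_0,\xi_0)\notin \mathrm{WF}(u)$, then the same cutoff-plus-conic-decay test of Definition~\ref{def_0_WF_set_of_distributions} can be arranged for $\varphi u$, $\partial^{\alpha}u$, and $Du$.

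The technical backbone is a ``convolution-in-a-cone'' lemma that I would state first: if $\widehat{h}\in\mathcal{S}(\mathbb{R}^n)$ is Schwartz and $F$ is a tempered function of at most polynomial growth which in addition satisfies $\sup_{\eta\in V}(1+|\eta|)^N|F(\eta)|<\infty$ for every $N\in\mathbb{N}$ on some open cone $V$, then for every open cone $V'\Subset V$ (meaning $\overline{V'}\cap S^{n-1}\subset V$) the convolution $\widehat{h}*F$ decays rapidly on $V'$. The proof is the standard splitting $\widehat{h}*F(\xi)=\int_{\eta\in V}+\int_{\eta\notin V}$: in the first piece $F(\eta)$ is rapidly decaying and $\widehat{h}$ is Schwartz so one absorbs the polynomial growth; in the second piece $\xi\in V'$ and $\eta\notin V$ forces $|\xi-\eta|\gtrsim |\xi|+|\eta|$, and Schwartzness of $\widehat{h}$ beats the polynomial growth of $F$.

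For (i): let $(x_0,\xi_0)\notin\mathrm{WF}(u)$ and pick $\chi\in\mathcal{D}(U)$ with $\chi\equiv1$ on a neighborhood of $x_0$ and an open conic neighborhood $V$ of $\xi_0$ on which $\widehat{\chi u}$ is rapidly decaying. Choose $\chi_1\in\mathcal{D}(U)$ with $\chi_1(x_0)=1$ and $\mathrm{supp}(\chi_1)$ contained in the interior of $\{\chi=1\}$, so that $\chi_1\varphi u=(\chi_1\varphi)\,\chi u$. Since $\chi_1\varphi\in\mathcal{D}(U)$ its Fourier transform is Schwartz, hence
\[
\widehat{\chi_1\varphi u}=\widehat{\chi_1\varphi}*\widehat{\chi u}
\]
and the convolution-in-a-cone lemma gives rapid decay on any $V'\Subset V$ containing $\xi_0$, so $(x_0,\xi_0)\notin\mathrm{WF}(\varphi u)$.

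For (ii): with $\chi,\chi_1$ chosen as above (now with $\chi\equiv1$ on a neighborhood of $\mathrm{supp}(\chi_1)$), the vanishing of all derivatives of $\chi$ on $\mathrm{supp}(\chi_1)$ makes the Leibniz expansion of $\partial^{\alpha}(\chi u)$ collapse on $\mathrm{supp}(\chi_1)$, giving the identity $\chi_1\,\partial^{\alpha}u=\chi_1\,\partial^{\alpha}(\chi u)$. Taking Fourier transforms,
\[
\widehat{\chi_1\,\partial^{\alpha}u}=\widehat{\chi_1}*\bigl((i\cdot)^{\alpha}\,\widehat{\chi u}\bigr),
\]
and since multiplication by the polynomial $(i\xi)^{\alpha}$ preserves rapid decay on $V$, the same lemma applies and yields rapid decay of the left-hand side on any cone $V'\Subset V$, so $(x_0,\xi_0)\notin\mathrm{WF}(\partial^{\alpha}u)$. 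Part (iii) is then immediate: locally $D=\sum_{|\alpha|\le k}a_{\alpha}(x)\partial^{\alpha}$ with $a_{\alpha}\in C^{\infty}$, the wavefront set is stable under finite unions/linear combinations, and by (i)--(ii) each summand satisfies $\mathrm{WF}(a_{\alpha}\partial^{\alpha}u)\subset\mathrm{WF}(\partial^{\alpha}u)\subset\mathrm{WF}(u)$.

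The only mildly subtle step is the convolution-in-a-cone lemma; the rest is just choosing cutoffs so that the computation factors through $\chi u$, whose Fourier transform we already know how to control. I would expect the bookkeeping for the ``second piece'' of the convolution estimate (producing the inequality $|\xi-\eta|\ge c(|\xi|+|\eta|)$ uniformly for $\xi\in V'$, $\eta\notin V$) to be the only place where a small argument is needed beyond routine manipulation.
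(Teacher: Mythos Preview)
Your proof is correct and follows the standard argument found in H\"ormander's treatise (essentially Theorem~8.1.9 and its corollaries in \cite{hormanderI}). The paper, however, states this proposition without proof: it is quoted as a basic property of the wave front set in the preliminaries section and immediately followed by worked examples, with no argument given. So there is nothing to compare against; your write-up supplies exactly the kind of proof one would expect, and the convolution-in-a-cone lemma you isolate is indeed the only nontrivial ingredient.
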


Let us make some examples of calculations of wave fronts. We will calculate the wave fronts of the following distributions:
\begin{align*}
    \delta_0 &: C^{\infty}_c(\mathbb{R}) \rightarrow \mathbb{R}, \quad f \mapsto f(0),\\ 
    v_{\epsilon}=\lim_{\epsilon \rightarrow 0} \frac{1}{x+i\epsilon} & : C^{\infty}_c(\mathbb{R}) \rightarrow \mathbb{R}, \quad f \mapsto \lim_{\epsilon \rightarrow 0} \int_{\mathbb{R}}\frac{f(x)}{x+i \epsilon} \mathrm{d}x, \\
    \delta_{\Delta} &: C^{\infty}_c(\mathbb{R}^n) \rightarrow \mathbb{R}, \quad f \mapsto \int_{\mathbb{R}}f(x,\ldots,x) dx,
\end{align*}
where the notation $\Delta$ refers to the small diagonal $\{(x_1,\ldots,x_n)\in \mathbb R^n: x_1=\cdots=x_n\} $ of $\mathbb R^n$, and $\delta_{\Delta}$ is the \textit{diagonal delta}. For the Dirac delta we have that
$$
	\widehat{ \chi\delta_0}(\xi)= \frac{1}{\sqrt{2 \pi}}\langle \delta_0, \chi e^{i \xi x} \rangle =  \frac{\chi(0)}{\sqrt{2 \pi}},
$$
then either $\mathrm{supp}(\chi) \ni 0$ in which case $\widehat{\chi \delta_0}(\xi)$ is not rapidly decreasing, or $0 \notin \mathrm{supp}(\chi) $ which has $\widehat{\chi \delta_0}(\xi)\equiv 0$. So $\mathrm{WF}(\delta_0)=\{(0,\xi)\in \dot T^*\mathbb R\}$.\\

For the second distribution observe that it is smooth in all of $\mathbb{R}$ minus the origin, also
$$
	\widehat{ v_{\epsilon} }(\xi)= \frac{1}{\sqrt{2 \pi}} \lim_{\epsilon \rightarrow 0} \int_{\mathbb{R}}\frac{ e^{i\xi x}}{x+i \epsilon} \mathrm{d}x.
$$
Evaluating via the residue theorem yields
$$
	\widehat{v_{\epsilon} }(\xi)= -{\sqrt{2 \pi}}i \lim_{\epsilon \rightarrow 0} \theta(\xi)e^{-\xi \epsilon}= -{\sqrt{2 \pi}} i \theta(\xi),
$$
where $\theta$ is the Heaviside distribution. As a result, if $f$ is smooth and supported in a neighborhood of the origin
$$
	\widehat{f v_{\epsilon} }(\xi)= \int_{\mathbb{R}}\hat{f}(\eta)\hat{v}_{\epsilon}(\xi-\eta) \mathrm{d}\eta=-i \sqrt{2 \pi}\int_{-\infty}^{\xi} \hat{f}(\eta)\mathrm{d}\eta,
$$
if $\xi<0$, then $|\xi|<|\eta|$ and
$$
    \left| \int_{-\infty}^{\xi} \hat{f}(\eta)\mathrm{d}\eta\right|= \left| \int_{-\infty}^{\xi} \hat{f}(\eta)\frac{(1+|\eta|)^{N}}{(1+|\eta|)^{N}}\mathrm{d}\eta\right|  \leq (1+|\xi|)^{-N} \left| \int_{-\infty}^{\xi} \hat{f}(\eta){(1+|\eta|)^{N}}\mathrm{d}\eta\right|  \leq c_{f} (1+|\xi|)^{-N} 
$$
if $\xi>0$, the above estimate is not possible and we just have $\left| \int_{-\infty}^{\xi} \hat{f}(\eta)\mathrm{d}\eta\right|  \leq c'_f$. Thus $\mathrm{WF}(v_{\epsilon})=\lbrace(0,\xi)\in \dot T^*\mathbb R,\  \xi>0 \rbrace $.\\

Finally, consider the diagonal delta $\delta_{\Delta}$, let $f \in C^{\infty}_c(\mathbb{R}^n)$
$$
	\widehat{f \delta_{\Delta}}(\xi_1,\ldots,\xi_m)= \int_{\mathbb{R}}f(x,\ldots,x)e^{i(\xi_1+\ldots+\xi_m)x}.
$$
We immediately note that when $f$ is supported outside the diagonal the integral is identically zero, hence rapidly decreasing. On the other hand if $\mathrm{supp}(f)\cap \Delta_{n}\mathbb R \neq \emptyset$ we can, provided $\xi_1+\ldots+\xi_n\neq 0 $, integrate by parts, getting
$$
(1+|\xi|)^N \left|\widehat{f \delta_{\Delta}}(\xi_1,\ldots,\xi_n) \right| \leq \frac{(1+|\xi|)^N}{(\xi_1+\ldots+\xi_n)^M}\int_{\mathbb{R}}\left|\partial_x^Mf(x,\ldots,x) \right|dx.
$$ 
Choosing an appropriate $M \in \mathbb{N}$ makes the right hand side bounded by a constant. Finally, if $\xi_1+\ldots+\xi_n= 0 $ along the diagonal $\widehat{f \delta_{\Delta}}(\xi_1,\ldots,\xi_n)$ is a constant depending on $f$, which is not rapidly decreasing. To sum up we have $\mathrm{WF}(\delta_{\Delta})=\lbrace(x,\ldots,x;\xi_1,\ldots,\xi_n)\in \dot T^{*}\mathbb{R}^n : \xi_1+\ldots+\xi_n=0 \rbrace$.\\

With the notion of wave front set, the space of distributions $\mathcal{D}'(U)$ can be endowed with a topology which is finer than the usual dual topology of $\mathcal{D}(U)$. Let $\Upsilon \subset T^*U$ be a cone (\textit{i.e.} $\Upsilon_x \subset T^*_xU$ is a closed cone for any $x\in U$) and define $\mathcal{D}_{\Upsilon}'(U)$ as the space of distributions whose wave front set is contained in $\Upsilon$, then we induce a topology via the seminorms
\begin{equation}
   \begin{cases}
    p_{f}(u)=|u(f)|,\\
    p_{q,\chi,\Upsilon}(u)=\sup_{\xi \notin \Upsilon}(1+|\xi|)^q |\widehat{\chi u} (\xi)|,
\end{cases} 
\end{equation}
where $f, \chi \in \mathcal{D}(U)$, $q\in \mathbb N$. The latter is called \textit{H\"ormander topology}, it is finer then the subspace topology $\mathcal{D}_{\Upsilon}'(U)\subset \mathcal{D}'(U)$ and equal when $\Upsilon=T^*U$.\\

\begin{theorem}\label{thm_0_WF_pullback}
Let $\psi: U_1 \rightarrow U_2$ be a smooth function and $\Upsilon$ be a closed cone in $\dot T^{*}U_2$ with 
$$
    N^*{\psi}\doteq \left\lbrace(\psi(x), \xi) \in \dot T^{*}U_2 : \ x \in U_1 \  \psi^{*} \xi=0  \right\rbrace \cap \Upsilon = \emptyset,
$$
then there exist a unique sequentially continuous extension $ \mathcal{D}_{\Upsilon}(U_2)  \rightarrow \mathcal{D}_{\psi^{*}\Upsilon}(U_1) $ of $\psi^*: \mathcal{E}(U_2)\to \mathcal{E}(U_1)$.
\end{theorem}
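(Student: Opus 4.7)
The plan is to follow H\"ormander's classical strategy: construct the extension locally via an oscillatory integral/Fourier representation, prove rapid decay in the right cones using non-stationary phase, and then deduce both uniqueness and sequential continuity. Using partitions of unity and coordinate charts I would first reduce to the case in which $U_1 \subset \mathbb{R}^n$, $U_2 \subset \mathbb{R}^m$ are open, so that $\psi^* u$ may be tested against a compactly supported $\phi \in \mathcal{D}(U_1)$. Picking $\chi \in \mathcal{D}(U_2)$ with $\chi \equiv 1$ on a neighborhood of $\psi(\mathrm{supp}\,\phi)$, I would define, for smooth $u$,
\begin{equation*}
    \langle \psi^{*}u, \phi \rangle \;=\; (2\pi)^{-m} \int \widehat{\chi u}(\xi)\, I_{\phi,\psi}(\xi)\, d\xi, \qquad I_{\phi,\psi}(\xi) \doteq \int \phi(x)\, e^{i \psi(x)\cdot \xi}\, dx,
\end{equation*}
which coincides with $\phi \mapsto \int \phi(x)\, u(\psi(x))\, dx$ when $u$ is smooth, and take this same formula as the \emph{definition} of $\psi^* u$ for $u \in \mathcal{D}'_{\Upsilon}(U_2)$.

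The main work is the phase estimate for $I_{\phi,\psi}$. By the method of non-stationary phase, for any $N$ there is a constant $C_N$ such that on any closed cone $\Xi \subset \dot{T}^*_{\psi(x)} U_2$ disjoint from
$$
    N^{*}\psi_{|\mathrm{supp}\,\phi} \;=\; \{(\psi(x),\xi) : x\in \mathrm{supp}\,\phi,\ \psi^{*}\xi = 0\},
$$
one has $|I_{\phi,\psi}(\xi)| \leq C_N (1+|\xi|)^{-N}$; this is because on such a cone the phase gradient $\psi^{*}\xi$ stays bounded away from zero and one integrates by parts in $x$. The hypothesis $\Upsilon \cap N^{*}\psi = \emptyset$ allows one to pick a closed conic neighborhood $W$ of $\Upsilon$ (restricted to the relevant compact set) still disjoint from $N^{*}\psi$, so that $I_{\phi,\psi}$ decays rapidly on $\mathbb{R}^m \setminus W$, while $\widehat{\chi u}$ decays rapidly on $W^c \cap \dot T^*_y U_2 \setminus \Upsilon$ and is polynomially bounded by the Hörmander seminorms $p_{q,\chi,\Upsilon}$ on $W$. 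Splitting the $\xi$-integral over $W$ and its complement and combining the two estimates shows that the integral converges absolutely and is bounded by a seminorm of $u$ in $\mathcal{D}'_{\Upsilon}(U_2)$, which also yields the required sequential continuity.

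To identify the wave front set of $\psi^{*}u$ I would repeat the same oscillatory-integral analysis with $\phi$ replaced by $\chi' e^{-i x\cdot \eta}$ for a cutoff $\chi'\in\mathcal{D}(U_1)$, obtaining
\begin{equation*}
    \widehat{\chi' \psi^{*}u}(\eta) \;=\; (2\pi)^{-m}\int \widehat{\chi u}(\xi)\, I_{\chi',\psi}(\xi,\eta)\, d\xi,
\end{equation*}
with oscillatory factor of phase $\psi(x)\cdot \xi - x\cdot \eta$, whose critical set forces $\eta = \psi^{*}\xi$. Non-stationary phase in $x$ then shows $\widehat{\chi' \psi^{*}u}(\eta)$ is rapidly decreasing outside any conic neighborhood of $\psi^{*}\Upsilon$, giving $\mathrm{WF}(\psi^{*}u)\subset \psi^{*}\Upsilon$ and proving that $\psi^{*}u \in \mathcal{D}'_{\psi^{*}\Upsilon}(U_1)$.

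Finally, for uniqueness, any sequentially continuous extension must agree with this one on $\mathcal{E}(U_2)$, hence on its sequential closure in the Hörmander topology of $\mathcal{D}'_{\Upsilon}(U_2)$; the density of smooth functions is obtained by convolving $\chi u$ with a standard mollifier $\rho_\varepsilon$ and checking, via the same Fourier estimates, that $\chi u * \rho_\varepsilon \to \chi u$ in every seminorm $p_{q,\chi,\Upsilon}$ (the mollification tames only the smooth directions and leaves the $\Upsilon$-directions untouched up to controlled errors). The principal obstacle is the bookkeeping of the phase estimate for $I_{\phi,\psi}$ uniformly in a conic neighborhood, and patching the local constructions into a globally defined, chart-independent map on sections—this last point is handled by verifying that different choices of $\chi$ and different partitions of unity give the same distribution, using that the difference is supported where $\chi_1-\chi_2$ vanishes near $\psi(\mathrm{supp}\,\phi)$ and therefore contributes a smooth, rapidly decaying integrand.
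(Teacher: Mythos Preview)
The paper does not prove this theorem; it is stated without proof as the standard pullback theorem (H\"ormander, \emph{The Analysis of Linear Partial Differential Operators I}, Theorem~8.2.4) and then used to define wave front sets on manifolds. Your outline is essentially H\"ormander's argument and is structurally sound: localize, write the pairing as an oscillatory integral in the fibre variable, split the integration domain according to the cone geometry, and use non-stationary phase on the good region; uniqueness via density of $\mathcal{E}$ in $\mathcal{D}'_{\Upsilon}$ under mollification is also the standard route.

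There is one slip in your splitting. With $W$ a closed conic neighbourhood of $\Upsilon$ disjoint from $N^{*}\psi$, the roles of the two factors are the reverse of what you wrote: it is on $W$ that $I_{\phi,\psi}$ decays rapidly (because the phase gradient $\psi^{*}\xi$ is bounded away from zero there), while $\widehat{\chi u}$ is only polynomially bounded on $W$; on the complement $\mathbb{R}^m\setminus W$ it is $\widehat{\chi u}$ that decays rapidly (since $\mathrm{WF}(u)\subset\Upsilon\subset W$) and $I_{\phi,\psi}$ is merely bounded. With the regions exchanged the estimate goes through exactly as you intend, and the rest of the argument---the identification of $\mathrm{WF}(\psi^{*}u)$ via the critical set of the combined phase, and the sequential continuity from the uniformity of the estimates in the H\"ormander seminorms---is correct.
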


This result is of paramount importance for defining the wave front set for distributions on manifolds. Notice that if $\psi:U_1 \rightarrow U_2$ is a diffeomorphism, $N^*{\psi}$ is trivial, and by \Cref{thm_0_WF_pullback} $\mathrm{WF}(\psi^{*}u)=\psi^{*}\mathrm{WF}(u)$. Next we consider a $n$-dimensional manifold $M$ with charts $\{(U_{\alpha},\phi_{\alpha})\}_{\alpha\in A}$; given $u\in\mathcal{D}'(M)$, we can define its localization along charts $u_{\alpha}=(\phi_{\alpha}^{-1})^{*}u=(\phi_{\alpha})_*u$ and if $x\in U_{\alpha}$, set
\begin{equation}\label{eq_0_WF_manifolds}
	\mathrm{WF}_{x, \alpha}(u) \doteq \phi_{\alpha}^*\mathrm{WF}((\phi_{\alpha})_*u).
\end{equation}
Direct application of \Cref{thm_0_WF_pullback} shows that \ref{eq_0_WF_manifolds} is independent from the chart used.\\

A particular use of the wave front set is the product of distributions. For instance let $u$, $v$ $\in \mathcal{D}'(U)$, and consider the distribution $u \otimes v \in \mathcal{D}'(U \times U)$ defined by $$ u \otimes v (\phi, \psi) = u(\phi) v(\psi).$$

\begin{lemma}\label{lemma_8.2.10}
Let $u$, $v\in \mathcal{D}'(U)$, then
$$
    \mathrm{WF}(u \otimes v) \subset \mathrm{WF}(u) \times \mathrm{WF}(v) \cup \mathrm{WF}(u) \times \mathrm{supp}(v) \cup  \mathrm{supp}(u) \times \mathrm{WF}(v).
$$
Moreover if $i:U \rightarrow U : x \mapsto(x,x)$ is the diagonal immersion, and $\{(x,\xi)\in \mathrm{WF}(u)\}\cap \{(x,-\xi)\in \mathrm{WF}(v)\}=\emptyset$, the product $u\cdot v=  i^*(u\otimes v) $ is well defined and
$$
\mathrm{WF}(uv) \subset \mathrm{WF}(u) \cup \mathrm{WF}(v) \cup (\mathrm{WF}(u)+  \mathrm{WF}(v)),
$$
where $\mathrm{WF}(u)+  \mathrm{WF}(v)= \left\lbrace (x,\xi+ \eta): (x,\xi)\in \mathrm{WF}(u) , (x,\eta)\in \mathrm{WF}(v)\right\rbrace$.
\end{lemma}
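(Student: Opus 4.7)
The plan is to establish the first inclusion directly from the definition of the wavefront set, using the factorization of the Fourier transform of the tensor product, and then obtain the second by applying the pullback theorem \Cref{thm_0_WF_pullback} to the diagonal immersion $i$.

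For the tensor product statement, I would show that if $(x_0,y_0;\xi_0,\eta_0) \in \dot T^*(U\times U)$ lies outside the right-hand side, then it is not in $\mathrm{WF}(u\otimes v)$. For cutoffs $\chi, \psi \in \mathcal D(U)$ with $\chi(x_0),\psi(y_0) \neq 0$, one has
\begin{equation*}
    \widehat{(\chi \otimes \psi)(u\otimes v)}(\xi,\eta) = \widehat{\chi u}(\xi)\,\widehat{\psi v}(\eta),
\end{equation*}
so the task reduces to finding $\chi,\psi$ and a conic open neighborhood $V$ of $(\xi_0,\eta_0)$ on which this product is rapidly decreasing. I would split according to whether $\xi_0,\eta_0$ vanish. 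If both are nonzero and (say) $(x_0,\xi_0)\notin \mathrm{WF}(u)$, pick $\chi$ so that $\widehat{\chi u}$ decays to all orders on a conic neighborhood $V_1$ of $\xi_0$, and take any $\psi$ with $\psi(y_0)\ne 0$, so that $\widehat{\psi v}$ has polynomial growth $(1+|\eta|)^m$. The key step is to shrink the neighborhood to $V=(V_1\times\mathbb R^n)\cap\{|\eta'|\le C_0|\xi'|\}$ with $C_0>|\eta_0|/|\xi_0|$: then $(1+|\xi'|)^{-N}(1+|\eta'|)^m\le C(1+|\xi'|)^{m-N}$ and $1+|\xi'|+|\eta'|\le(1+C_0)(1+|\xi'|)$, which yields decay of arbitrary order in $|(\xi',\eta')|$. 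If $\eta_0=0,\xi_0\ne 0$, being outside the RHS forces either $y_0\notin\mathrm{supp}(v)$, whence $\psi$ can be chosen to kill $\psi v$ altogether, or $(x_0,\xi_0)\notin\mathrm{WF}(u)$, where the conic set $V_1\times\{|\eta'|<\delta|\xi'|\}$ plays the same role; the case $\xi_0=0,\eta_0\ne 0$ is symmetric.

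For the second inclusion, define $uv\doteq i^*(u\otimes v)$ via \Cref{thm_0_WF_pullback}. Since $i:x\mapsto(x,x)$ has differential $v\mapsto(v,v)$, one computes $i^*(\xi,\eta)=\xi+\eta$ and
\begin{equation*}
    N^*i=\{(x,x;\xi,-\xi):\xi\ne 0\}.
\end{equation*}
Combining the first inclusion with the hypothesis $\{(x,\xi)\in\mathrm{WF}(u)\}\cap\{(x,-\xi)\in\mathrm{WF}(v)\}=\emptyset$, a point $(x,x;\xi,-\xi)\in N^*i$ cannot lie in $\mathrm{WF}(u\otimes v)$: membership in $\mathrm{WF}(u)\times\mathrm{WF}(v)$ is excluded by hypothesis, while membership in $\mathrm{WF}(u)\times\mathrm{supp}(v)$ or $\mathrm{supp}(u)\times\mathrm{WF}(v)$ would force the vanishing covector to be $-\xi$ or $\xi$, i.e.\ $\xi=0$, contradicting $\xi\ne 0$. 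Hence $N^*i\cap\mathrm{WF}(u\otimes v)=\emptyset$, the pullback is well defined, and
\begin{equation*}
    \mathrm{WF}(uv)\subset i^*\mathrm{WF}(u\otimes v)=\{(x,\xi+\eta):(x,x;\xi,\eta)\in\mathrm{WF}(u\otimes v)\}.
\end{equation*}
Unfolding the three pieces of the bound on $\mathrm{WF}(u\otimes v)$ produces the three listed contributions: $\mathrm{WF}(u)+\mathrm{WF}(v)$ from the first, $\mathrm{WF}(u)$ from the second, and $\mathrm{WF}(v)$ from the third.

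The main obstacle is the first case above with both $\xi_0,\eta_0\ne 0$ but only one factor regular: naively multiplying rapid decay in $\xi'$ by polynomial growth in $\eta'$ does not give joint rapid decay, and one must thin the conic neighborhood of $(\xi_0,\eta_0)$ to a sector where $|\eta'|/|\xi'|$ stays bounded so that the growth in $\eta'$ is dominated by the decay in $\xi'$. The remaining work — checking that pulled-back covectors are nonzero and that $N^*i$ avoids the singular cone — is essentially bookkeeping once this ratio control is secured.
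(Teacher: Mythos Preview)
Your argument is correct and is essentially the standard proof of these facts (as in H\"ormander, Theorems 8.2.9--8.2.10). Note, however, that the paper does not actually prove this lemma: it is one of the results about wave front sets that the paper simply quotes, referring to \cite{hormanderI} for the proofs. So there is no ``paper's own proof'' to compare against here---your write-up supplies the argument that the paper omits, and the key technical point you isolate (thinning the conic neighborhood to a sector $\{|\eta'|\le C_0|\xi'|\}$ so that polynomial growth in one factor is absorbed by rapid decay in the other) is exactly the standard one.
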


\begin{corollary}
If $i: \Sigma \rightarrow U$ is an embedding, then $u\in \mathcal{D}'(U)$ can be restricted to $\Sigma$ if $N^{*}\Sigma=\lbrace (x,\xi)\in T^{*}U: x \in \Sigma, \xi(v)=0 \ \forall v \in T_x \Sigma\rbrace \cap \mathrm{WF}(u)= \emptyset$.
\end{corollary}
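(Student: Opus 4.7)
The plan is to deduce this directly from the pullback Theorem \ref{thm_0_WF_pullback} applied to the embedding $i:\Sigma \hookrightarrow U$, so the main task is to identify the set $N^*i$ appearing there with the conormal bundle $N^*\Sigma$ as defined in the statement.

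First I would set $\Upsilon = \mathrm{WF}(u)$, which is a closed conic subset of $\dot T^*U$ by the very definition of the wave front set (its complement is constructed as an open conic set in \Cref{def_0_WF_set_of_distributions}). By hypothesis $\mathrm{WF}(u)\cap N^*\Sigma = \emptyset$, so if I can show that the set
\[
    N^*i=\{(i(x),\xi)\in \dot T^*U : x\in \Sigma,\ i^*\xi=0\}
\]
coincides with $N^*\Sigma$, then the noncharacteristic condition required to invoke \Cref{thm_0_WF_pullback} is satisfied.

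Second, I would unpack the equality $i^*\xi = 0$. Since $i$ is an embedding, the tangential lift $T_xi:T_x\Sigma \to T_{i(x)}U$ is injective and its image is, by definition, the tangent space of $\Sigma$ viewed as a submanifold of $U$. The condition $i^*\xi=0$ means $\xi\circ T_xi=0$, i.e.\ $\xi(T_xi(v))=0$ for all $v\in T_x\Sigma$; modulo the identification of $\Sigma$ with $i(\Sigma)$ this is exactly the condition that $\xi$ annihilates $T_{i(x)}\Sigma$, which is the definition of $N^*\Sigma$ in the statement. Hence $N^*i = N^*\Sigma$ and the hypothesis of the pullback theorem holds with $\Upsilon=\mathrm{WF}(u)$.

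Finally, \Cref{thm_0_WF_pullback} then yields a unique sequentially continuous extension $\mathcal{D}'_{\mathrm{WF}(u)}(U)\to \mathcal{D}'_{i^*\mathrm{WF}(u)}(\Sigma)$ of the ordinary pullback $i^*:\mathcal{E}(U)\to \mathcal{E}(\Sigma)$. Applying this extension to $u\in \mathcal{D}'_{\mathrm{WF}(u)}(U)$ produces the desired distribution $i^*u\in \mathcal{D}'(\Sigma)$, which is what is meant by the restriction $u|_{\Sigma}$. I do not anticipate any real obstacle: the only subtle point is the identification $N^*i=N^*\Sigma$, which is essentially tautological once the embedding hypothesis is used to ensure $T_xi$ is injective.
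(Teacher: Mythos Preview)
Your proposal is correct and is exactly the intended derivation: the paper states this result as an immediate corollary of Theorem~\ref{thm_0_WF_pullback} without giving a separate proof, and your argument---identifying $N^*i$ with $N^*\Sigma$ via the injectivity of $T_xi$ for an embedding and then invoking the pullback theorem with $\Upsilon=\mathrm{WF}(u)$---is precisely how that corollary follows.
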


We conclude by mentioning some results about integral kernels and their wave front sets. For the proof of those statements we refer to \cite{hormanderI}.

\begin{theorem}\label{thm_8.2.12}
Let $V\subset \mathbb{R}^n$, $U\subset \mathbb{R}^n$ be open subsets, and $K \in \mathcal{D}'(V \times U)$, then for every $\phi \in C^{\infty}_c(U)$, $K(\cdot, \phi) \in \mathcal{D}'(V)$ has 
$$
    \mathrm{WF}(K(\cdot, \phi)) \subset \lbrace (x,\xi) \in T^{*}V : (x,y;\xi,0) \in \mathrm{WF}(K) \space\ \mathrm{ and } \space\ y \in \mathrm{supp}(\phi) \rbrace
$$
\end{theorem}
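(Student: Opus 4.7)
The plan is to reduce the statement to a direct Fourier-analytic computation via the identity
$$
\widehat{\chi\, K(\cdot,\phi)}(\xi)\;=\;\widehat{(\chi\otimes\phi)K}(\xi,0)\qquad\bigl(\chi\in\mathcal{D}(V),\ \xi\in\mathbb{R}^n\bigr),
$$
which follows by unwinding both sides to $\langle K,\chi(x)\phi(y)e^{-i\xi\cdot x}\rangle$. Thus testing whether $(x_0,\xi_0)\in\mathrm{WF}(K(\cdot,\phi))$ reduces to controlling the partial Fourier transform of $(\chi\otimes\phi)K$ along the slice $\eta=0$, for cutoffs $\chi$ concentrated near $x_0$.

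Fix $(x_0,\xi_0)\in\dot T^*V$ satisfying the hypothesis: for every $y\in\mathrm{supp}(\phi)$, $(x_0,y;\xi_0,0)\notin\mathrm{WF}(K)$. I would first establish the standard localization principle that, under this hypothesis, there exist an open product neighborhood $V_y\times U_y\subset V\times U$ of $(x_0,y)$ and an open conic neighborhood $\Gamma_y$ of $(\xi_0,0)$ in $\mathbb{R}^{2n}\setminus\{0\}$ such that $\widehat{\psi K}$ is rapidly decreasing on $\Gamma_y$ for \emph{every} $\psi\in\mathcal{D}(V_y\times U_y)$, not merely for the single cutoff furnished by Definition~\ref{def_0_WF_set_of_distributions}. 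This is done by writing $\psi K=\psi\cdot(\chi_y K)$ for a fixed cutoff $\chi_y$ equal to $1$ on $\mathrm{supp}(\psi)$, expressing $\widehat{\psi K}=\widehat{\psi}*\widehat{\chi_y K}$, and splitting the convolution integral according to whether the dummy variable lies in a slightly larger cone $\Gamma_y'\supset\Gamma_y$ or not: in the first case one uses the known rapid decay of $\widehat{\chi_y K}$ on $\Gamma_y'$, in the second the Schwartz decay of $\widehat{\psi}$ together with the polynomial bound that holds for $\widehat{\chi_y K}$ as a tempered distribution.

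By compactness of $\mathrm{supp}(\phi)$, finitely many product neighborhoods $V_{y_i}\times U_{y_i}$, $i=1,\ldots,N$, cover $\{x_0\}\times\mathrm{supp}(\phi)$. Pick $\chi\in\mathcal{D}\bigl(\bigcap_{i=1}^NV_{y_i}\bigr)$ with $\chi(x_0)=1$ and a partition of unity $\{\theta_i\}_{i=1}^N$ subordinate to $\{U_{y_i}\}$ on an open neighborhood of $\mathrm{supp}(\phi)$. Decomposing $\phi=\sum_i\theta_i\phi$ yields
$$
\widehat{\chi K(\cdot,\phi)}(\xi)\;=\;\sum_{i=1}^N\widehat{(\chi\otimes\theta_i\phi)K}(\xi,0),
$$
and each summand has its cutoff supported inside $V_{y_i}\times U_{y_i}$, so by the localization principle $\widehat{(\chi\otimes\theta_i\phi)K}(\xi,\eta)$ is rapidly decreasing on $\Gamma_{y_i}$. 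Since each $\Gamma_{y_i}$ is open, conic, and contains $(\xi_0,0)$, the slice $\{\xi\neq 0:(\xi,0)\in\Gamma_{y_i}\}$ is an open conic neighborhood of $\xi_0$ in $\mathbb{R}^n$; the finite intersection $\Gamma$ of these slices is again an open cone containing $\xi_0$, and summing the $N$ rapid decay estimates shows $\widehat{\chi K(\cdot,\phi)}$ is rapidly decreasing on $\Gamma$. Hence $(x_0,\xi_0)\notin\mathrm{WF}(K(\cdot,\phi))$, which proves the stated inclusion.

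The main obstacle is the localization principle of the second paragraph: the definition only furnishes a single cutoff witnessing the decay, and upgrading to \emph{every} sufficiently localized cutoff is the only genuinely analytic step, resting on the convolution/splitting estimate. Once this standard lemma is in hand, the remainder is bookkeeping with the partition of unity $\{\theta_i\}$ and the finite intersection of open cones.
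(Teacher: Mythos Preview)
Your proof is correct and follows the standard Fourier-analytic argument. The paper does not supply its own proof of this statement but refers the reader to \cite{hormanderI} (Theorem 8.2.12 there), and your argument is precisely the one given in that reference: the key identity $\widehat{\chi K(\cdot,\phi)}(\xi)=\widehat{(\chi\otimes\phi)K}(\xi,0)$, the localization principle upgrading a single witnessing cutoff to all sufficiently localized cutoffs via convolution splitting, and the compactness/partition-of-unity reduction over $\mathrm{supp}(\phi)$.
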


For some $K\in \mathcal{D}'(V \times U)$ we shall denote $\mathrm{WF}(K)_V \doteq \lbrace (x,\xi)\in T^{*}V :(x,y;\xi,0) \in \mathrm{WF}(K) \rbrace $ and similarly $\mathrm{WF}(K)_U \doteq \lbrace (y,\eta)\in T^{*}U :(x,y;0,\eta) \in \mathrm{WF}(K) \rbrace $. Let additionally $\mathrm{WF}'(K)$ be $\lbrace (x,y;\xi,-\eta)\in T^{*}(V\times U) :(x,y;\xi,\eta) \in \mathrm{WF}(K) \rbrace $. 


\begin{theorem}\label{thm_8.2.14}
Let $V\subset \mathbb{R}^n$, $U\subset \mathbb{R}^n$, $W\subset \mathbb{R}^p$ be open subsets, $K_1 \in \mathcal{D}'(V \times U)$, $K_2 \in \mathcal{D}'(U \times W)$. Suppose that $\mathrm{WF}(K_2)\cap \mathrm{WF}'(K_1) =\emptyset$ and that the map $\mathrm{pr}_2: \mathrm{supp}(K_2) \rightarrow W$ is proper, then there is a unique way of defining the composition distribution $K_1 \circ K_2 \in \mathcal{D}'(V \times W)$ such that
$$
\begin{aligned}
\mathrm{WF}'(K_1 \circ K_2 ) \subset & \left( V \times \lbrace0\rbrace \times \mathrm{WF}'(K_2)_W \right) \cup \left( \mathrm{WF}(K_1)_V \times W \times \lbrace0\rbrace\right) \\ & \cup \left\lbrace (x,z;\xi,\zeta) \in T^{*}(V \times W) : \exists (y,\eta) \in T^{*}U \right. \\ & \space\ \left. :  (x,y;\xi,-\eta)\in \mathrm{WF}(K_1) ,(y,z;\eta,-\zeta)\in \mathrm{WF}(K_2)\right\rbrace.
\end{aligned}
$$
\end{theorem}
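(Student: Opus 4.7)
The natural plan is the classical three-step construction of the composed kernel: form the external tensor product, restrict to the partial diagonal in the $U$-factors, and push forward along the middle projection. Concretely, I would start with $K_1\otimes K_2\in\mathcal{D}'(V\times U_1\times U_2\times W)$, where $U_1,U_2$ are relabelled copies of $U$. By \Cref{lemma_8.2.10} its wave front set is contained in the union of $\mathrm{WF}(K_1)\times \mathrm{WF}(K_2)$, $\mathrm{WF}(K_1)\times\mathrm{supp}(K_2)$ and $\mathrm{supp}(K_1)\times\mathrm{WF}(K_2)$, with the convention that points coming purely from $\mathrm{supp}(K_i)$ carry the zero covector in their factor.

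The pullback step uses the diagonal embedding $\iota:V\times U\times W\hookrightarrow V\times U_1\times U_2\times W$, $(x,y,z)\mapsto(x,y,y,z)$, whose conormal bundle is readily computed to be $N^*\iota=\{(x,y,y,z;0,\eta,-\eta,0):\eta\neq 0\}$. The transversality hypothesis $\mathrm{WF}(K_2)\cap\mathrm{WF}'(K_1)=\emptyset$ unpacks, after substitution into the three pieces of \Cref{lemma_8.2.10}, to the condition $N^*\iota\cap\mathrm{WF}(K_1\otimes K_2)=\emptyset$, so \Cref{thm_0_WF_pullback} produces a well-defined pullback $\iota^*(K_1\otimes K_2)\in\mathcal{D}'(V\times U\times W)$ together with the standard inclusion $\mathrm{WF}(\iota^*T)\subset\iota^*\mathrm{WF}(T)$. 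The pushforward step is along $\pi:V\times U\times W\to V\times W$, $(x,y,z)\mapsto(x,z)$; since $\mathrm{supp}(\iota^*(K_1\otimes K_2))\subset\mathrm{supp}(K_1)\cap_U\mathrm{supp}(K_2)$ and $\mathrm{pr}_2:\mathrm{supp}(K_2)\to W$ is proper, $\pi$ restricts to a proper map on this set, so one may integrate out the $y$ variable by a partition-of-unity argument (inserting any cutoff $\chi(y)$ equal to $1$ on the relevant compact fibre slice and checking independence of $\chi$). I then set $K_1\circ K_2\doteq\pi_*\iota^*(K_1\otimes K_2)$.

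To obtain the wave front set estimate I would combine the pullback inclusion with the elementary pushforward bound $(x,z;\xi,\zeta)\in\mathrm{WF}(\pi_*S)\Rightarrow\exists\,y\text{ with }(x,y,z;\xi,0,\zeta)\in\mathrm{WF}(S)$, and then split into cases according to which of the three terms in \Cref{lemma_8.2.10} the covector $(x,y,y,z;\xi,\eta_1,\eta_2,\zeta)$ with $\eta_1+\eta_2=0$ belongs to. The two cases in which one factor contributes only through its support (and hence the null covector on its side) yield the two one-sided terms $\mathrm{WF}(K_1)_V\times W\times\{0\}$ and $V\times\{0\}\times\mathrm{WF}'(K_2)_W$, while the remaining case, in which both factors contribute nontrivially, produces exactly the composition term described by the existence of a matching $(y,\eta)\in T^*U$. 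Uniqueness is forced by observing that when $K_1,K_2$ are smooth the construction reduces to ordinary Fubini integration $\int K_1(x,y)K_2(y,z)\,dy$, and smooth kernels are sequentially dense in $\mathcal{D}'_{\Upsilon}$ for any closed cone $\Upsilon$ disjoint from $N^*\iota$, so any other map continuous in the Hörmander topology and agreeing with Fubini must coincide with ours. The main obstacle I expect is the sign and support bookkeeping: carefully matching the $\mathrm{WF}'$ conventions used in the statement and making sure that the three pieces of the conclusion are assembled in precisely the right form from the three pieces produced by \Cref{lemma_8.2.10}, in particular ensuring that the ``null-covector'' contributions from $\mathrm{supp}(K_i)$ survive the pushforward and correctly reproduce the two one-sided terms.
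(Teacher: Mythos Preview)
The paper does not actually prove this statement: immediately before \Cref{thm_8.2.12} and \Cref{thm_8.2.14} it says ``For the proof of those statements we refer to \cite{hormanderI}'', so there is no in-paper argument to compare against. Your proposal is correct and is precisely the standard H\"ormander construction (tensor product, pullback along the partial diagonal using \Cref{thm_0_WF_pullback}, proper pushforward), including the three-case analysis coming from \Cref{lemma_8.2.10}; this is the argument one finds in the cited reference.
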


Similarly, we can define the notion of wave front set for distributions on vector bundles. Let $\pi:E\to M$ be a vector bundle and let $\{(\pi^{^-1}(U_{\alpha}),t_{\alpha})\}_{\alpha}$ be a family of trivializations of $E$. If $s\in \Gamma^{-\infty}_c(M\leftarrow E)$ is a distribution, $(t_{\alpha})_*s=(s^1,\ldots,s^k)$ where $k$ is the dimension of the fiber of $E$ and each $s^i\in \mathcal{D}'(M)$. Then we set
\begin{equation}\label{eq_WF_distributional_sections}
    \mathrm{WF}(s)=\cup_{i=1}^k \mathrm{WF}(s^i).
\end{equation}
By \Cref{thm_0_WF_pullback} we can show that the above definition does not depend on the trivialization chosen. Moreover we can straightforwardly generalize Lemma \ref{lemma_8.2.10}, \Cref{thm_8.2.12} and \Cref{thm_8.2.14} to distributional sections of $\Gamma^{-\infty}_c(M\leftarrow E)$.

\subsection{Bastiani calculus on locally convex spaces}\label{section_Bastiani}

The first notion of calculus in locally convex spaces we introduce is the so-called Bastiani calculus, its origin can be traced back to \cite{mb,michal1938differential}. We shall use it as a notion of calculus on the topological spaces of mapping we will use in \Cref{chapter_classical}. Here we introduce the basic definitions and properties.\\

In the sequel we will use complete locally convex spaces and denote them by capital letters $X$, $Y$, $Z$. 

\begin{definition}
    Let $U\subset X$ be an open subset, a mapping $P:U\subset X \to Y$ is Bastiani differentiable (or $P\in C^1_B(U,Y) $) if the following conditions hold:
    \begin{itemize}
        \item[$(i)$] $ \lim_{t\to 0} \frac{1}{t}\big(P(x+tv)-P(x)\big)=dP[x](v)$ exists for all $x\in U$, $v\in X$, giving rise to a mapping $dP:U\times X \to Y$ linear in the second entry; 
        \item[$(ii)$] The mapping $dP:U\times X \to Y, \ (x,v) \to dP[x](v)$ is jointly continuous. 
    \end{itemize}
\end{definition}


 A technically important result is the Riemann integral for continuous curves $\gamma:[a,b]\subset \mathbb{R} \to X$. 

\begin{theorem}\label{thm_A_Riemann_curve_integral}
    Let $\gamma:[a,b]\subset \mathbb{R} \to X$ be a continuous curve, then there exists a unique object $\int_a^b \gamma(t)dt\in X$ such that
    \begin{itemize}
        \item[$(i)$] for every continuous linear mapping $l:X\to \mathbb{R}$
        $$
            l\bigg(\int_a^b\gamma(t)dt\bigg) =\int_a^bl(\gamma(t))dt;
        $$
        \item[$(ii)$] for every seminorm $p_i$ on $X$, 
        $$
        p_i\bigg(\int_a^b\gamma(t)dt\bigg) \leq \int_a^bp_i(\gamma(t))dt;
        $$
        \item[$(iii)$] for all continuous curves $\gamma$, $\beta$, $\int_a^b \big(\gamma(t)+ \beta(t)\big)dt=\int_a^b \gamma(t)dt+\int_a^b \beta(t)\big)dt$;  
        \item[$(iv)$] for all $\lambda\in \mathbb{R}$, $\int_a^b \big(\lambda \gamma(t)\big)dt=\lambda\bigg(\int_a^b \gamma(t)dt\bigg)$;
        \item[$(v)$] for all $a\leq c \leq b$, $\int_a^b\gamma(t) dt=\int_a^c\gamma(t) dt  +\int_c^b\gamma(t) dt$. 
    \end{itemize}
\end{theorem}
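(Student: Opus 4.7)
The plan is to construct $\int_a^b \gamma(t)\,dt$ as the limit of Riemann sums, exactly as in the Banach case, exploiting completeness of $X$ and the fact that the topology is generated by the family of seminorms $\{p_i\}_{i\in I}$. For any tagged partition $P = (a = t_0 < t_1 < \dots < t_n = b; \, s_k \in [t_{k-1}, t_k])$ I form the Riemann sum
\begin{equation*}
    S(\gamma, P) \doteq \sum_{k=1}^n \gamma(s_k)\,(t_k - t_{k-1}) \in X.
\end{equation*}
The set of tagged partitions is directed by refinement, so $P \mapsto S(\gamma,P)$ is a net in $X$. The candidate integral will be its limit.

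First I would establish uniform continuity of $\gamma$: since $\gamma:[a,b]\to X$ is continuous and $[a,b]$ is compact, for each seminorm $p_i$ the real function $t \mapsto p_i(\gamma(t))$ is bounded, and the standard $\epsilon/\delta$ compactness argument gives, for every $\epsilon>0$ and every $i$, some $\delta>0$ with $|s-t|<\delta \Rightarrow p_i(\gamma(s)-\gamma(t))<\epsilon$. Using this, a direct estimate on a common refinement of two partitions $P, P'$ with mesh less than $\delta$ yields
\begin{equation*}
    p_i\bigl(S(\gamma,P) - S(\gamma,P')\bigr) \leq \epsilon\,(b-a)
\end{equation*}
for every $i$. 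Hence $\{S(\gamma,P)\}$ is a Cauchy net in $X$, and by completeness it converges to a unique element, which I denote $\int_a^b \gamma(t)\,dt$. The main obstacle here is nothing deep, just bookkeeping: one must check that the refinement-directed net is Cauchy with respect to \emph{every} seminorm simultaneously, which is the locally convex substitute for the usual norm estimate.

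Properties (i)--(v) then drop out from the analogous identities for Riemann sums. For (i), a continuous linear $l:X\to\mathbb{R}$ satisfies $l(S(\gamma,P)) = S(l\circ\gamma, P)$, and the right-hand side converges to the ordinary Riemann integral $\int_a^b l(\gamma(t))\,dt$; combined with continuity of $l$ this gives $l(\int\gamma) = \int l\circ\gamma$. For (ii), the triangle inequality for $p_i$ gives $p_i(S(\gamma,P)) \leq S(p_i\circ\gamma,P)$, and passing to the limit (using continuity of $p_i$) yields the seminorm estimate. Additivity (iii) and homogeneity (iv) follow from the corresponding linearity of Riemann sums and the continuity of addition and scalar multiplication in $X$, while (v) follows by splitting each partition at $c$ (or taking partitions that contain $c$ as a subdivision point, which is cofinal in the directed set of all partitions).

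Uniqueness of the object satisfying (i) is handled separately by Hahn--Banach: if $y_1, y_2 \in X$ both satisfy $l(y_j) = \int_a^b l(\gamma(t))\,dt$ for every continuous linear $l$, then $l(y_1 - y_2) = 0$ for all $l \in X'$, and since the continuous dual of a Hausdorff locally convex space separates points, this forces $y_1 = y_2$. This gives uniqueness of $\int_a^b\gamma(t)\,dt$ and, incidentally, another check that the limit of the net does not depend on the particular Cauchy subnet one extracts.
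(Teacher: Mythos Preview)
Your proof is correct and follows essentially the same route as the paper's sketch: Riemann sums, completeness of $X$, and Hahn--Banach for uniqueness. The only cosmetic difference is that the paper first defines the integral on the dense subspace of piecewise-linear curves and then extends by continuity, whereas you work directly with Riemann sums of $\gamma$ itself and show they form a Cauchy net; these are equivalent implementations of the same construction.
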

The proof of this result is quite standard (see \textit{e.g.} \cite[Theorem 2.2.1]{hamilton1979inverse}), it essentially follows from defined the integral as a Riemann summation in the interval $[a,b]$ whenever $\gamma$ is a piecewise straight continuous curves. Finally, noting that the latter space is dense in $C([a,b];X)$, we can extend the integral by continuity and get uniqueness by the Hahn-Banach theorem.

\begin{lemma}
    Let $U\subset X$, $V\subset Y$ be open subsets, let $P:U\to Y$ and $Q:V\to Z$ be Bastiani differentiable mappings such that $P(U) \subset V$, then $Q\circ P : U \to Z$ is Bastiani differentiable and $d(Q\circ P)[x](v)= dQ[P(x)]\big(dP[x](v)\big)$ for each $x\in U$, $v\in X$. 
\end{lemma}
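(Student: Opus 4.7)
The plan is to reduce the statement to a Bastiani-style fundamental theorem of calculus, derived from Theorem \ref{thm_A_Riemann_curve_integral} by pairing both sides with continuous linear functionals and invoking the classical one-dimensional result: for any Bastiani differentiable $R$ and any segment $\{a+sb : s\in[0,1]\}$ lying in the domain,
\begin{equation*}
R(a+b)-R(a) \;=\; \int_0^1 dR[a+sb](b)\,ds.
\end{equation*}
I would state and briefly prove this identity first, as it is the workhorse of everything that follows.

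Next I would establish a short preliminary lemma: Bastiani differentiability implies continuity. Applied to $P$ with $a=x_0$ and $b=h$, the identity above together with item (ii) of Theorem \ref{thm_A_Riemann_curve_integral} gives, for every continuous seminorm $p$ on $Y$, the estimate $p(P(x_0+h)-P(x_0))\leq \int_0^1 p(dP[x_0+sh](h))\,ds$, and the joint continuity of $dP$ at $(x_0,0)$ (with $dP[x_0](0)=0$ by linearity in the second slot) makes the right-hand side arbitrarily small as $h\to 0$. This lemma is needed so that for fixed $x\in U$, $v\in X$ and small $t$, the increment $w\doteq P(x+tv)-P(x)$ tends to $0$ in $Y$, and in particular the segment $\{P(x)+sw : s\in[0,1]\}$ is contained in the open set $V$.

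The core computation then runs as follows. Writing $y\doteq P(x)$, applying the fundamental theorem to $Q$ along the segment from $y$ to $y+w$, and using linearity of $dQ$ in the second slot to push $1/t$ inside,
\begin{equation*}
\frac{1}{t}\bigl(Q(P(x+tv))-Q(P(x))\bigr) \;=\; \int_0^1 dQ[y+sw]\!\left(\frac{w}{t}\right)\,ds.
\end{equation*}
As $t\to 0$, the second argument $w/t$ converges to $dP[x](v)$ by the very definition of Bastiani differentiability of $P$ and, crucially, is independent of $s$; the first argument $y+sw$ converges to $y$ uniformly in $s\in[0,1]$, since the preliminary lemma forces $w\to 0$ in $Y$. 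Joint continuity of $dQ$ then yields uniform convergence on $[0,1]$ of the integrand to the constant curve $s\mapsto dQ[y](dP[x](v))$, and item (ii) of Theorem \ref{thm_A_Riemann_curve_integral} allows the exchange of limit and integral, producing $d(Q\circ P)[x](v)=dQ[P(x)](dP[x](v))$.

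It remains to verify the joint continuity of $d(Q\circ P):U\times X\to Z$. This follows by composing three jointly continuous maps: $x\mapsto P(x)$ (continuous by the preliminary lemma), $(x,v)\mapsto dP[x](v)$ (jointly continuous by hypothesis), and $dQ:V\times Y\to Z$ evaluated at $(P(x),dP[x](v))$. The main obstacle I anticipate is precisely the uniform-in-$s$ passage to the limit under the integral; this is what forces us to use joint rather than separate continuity of $dP$ and $dQ$, and it is exactly the reason Bastiani's notion is tailored to make the chain rule hold.
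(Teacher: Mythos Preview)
Your proof is correct and follows essentially the same approach as the paper's: both rely on the integral representation $R(a+b)-R(a)=\int_0^1 dR[a+sb](b)\,ds$ derived from Theorem~\ref{thm_A_Riemann_curve_integral}. The paper packages this into an equivalent characterization of Bastiani differentiability (the existence of a jointly continuous map $L:U\times U\times X\to Y$, linear in the third slot, with $P(x_1)-P(x_2)=L[x_1,x_2](x_1-x_2)$ and $L[x,x]=dP[x]$), which then makes the chain rule a one-line composition $N[x_1,x_2]=M[P(x_1),P(x_2)]\circ L[x_1,x_2]$; you instead keep the integral explicit in the chain-rule step and verify continuity of $P$ and joint continuity of $d(Q\circ P)$ separately. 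The paper's packaging is slightly more reusable, while your version is more self-contained and makes the role of joint continuity in the limit exchange more transparent.
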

\begin{proof}
    We claim that $P\in C^1_B(U; Y)$ if and only if there is a continuous mapping $L:U\times U \times X \to Y$ linear in the third entry such that 
    $$
        P(x_1)-P(x_2)= L[x_1,x_2](x_1-x_2).
    $$
    In particular we have $L[x,x](v)=dP[x](v)$. The necessity condition follows by considering the smooth curve $\gamma(t)=x_1+t(x_2-x_1)$ with $t\in [0,1]$, then $dP[\gamma(t)](v)$ is a smooth curve for each $v\in X$, by \ref{thm_A_Riemann_curve_integral}, define
    $$
        L[x_1,x_2](v)\doteq \int_0^1 dP[\gamma(t)](v)dt.
    $$
    It is clear that $L$ is continuous and linear in the third entry, moreover since $\frac{d}{dt}P(\gamma(t))=dP[\gamma(t)](x_1-x_2)$ we get $L[x_1,x_2](x_1-x_2)= dP[x_1](x_1-x_2)$. For the sufficiency condition just note that $\frac{1}{t}\big(P(x+tv)-P(x)\big)=L(x, x+tv)[v]$, so taking the limit we get our claim. Next suppose that 
    $$
        P(x_1)-P(x_2)= L[x_1,x_2](x_1-x_2),
    $$
    $$
        Q(y_1)-Q(y_2)= M[y_1,y_2](y_1-y_2).
    $$
    Then $Q(P(x+tv))-Q(P(x))=M[P(x+tv),P(x)]\big(P(x+tv)-P(x)\big)=t M[P(x+tv),P(x)]\big(L[x+tv,x](v)\big)$ thus dividing by $t$ and taking the limit yields $d(Q\circ P)[x](v) \equiv M[P(x),P(x)]\big(L[x,x](v)\big)= dQ[P(x)]\big(dP[x](v)\big)$.
\end{proof} 

\begin{definition}\label{def_A_Bastiani_smooth_map}
    A mapping $f:U\subset X \to Y$ is $k$ times Bastiani differentiable if $d^{k-1}f:U\times X \cdots \times X \to Y$ is Bastiani differentiable. The $k$th derivative of $f$ at $x$ is defined by recursion
    \begin{equation}\label{eq_A_Bastiani_k_der}
        d^{k}f[x](v_1,\ldots,v_k) \doteq \lim_{t\to 0} \frac{1}{t}\big( d^{k-1}f[x+tv_k](v_1,\ldots,v_{k-1})-d^{k-1}f[x+tv_k](v_1,\ldots,v_{k-1})\big)  .
    \end{equation}
    Finally, we denote by $C^k_B(U,Y)$ the set of $k$ times Bastiani differentiable functions $f:U \to Y$. 
\end{definition}

Computing explicitly $d\big( d^{k-1}f\big)$ in $(x,v_1,\ldots,v_{k-1})$, by linearity we get
$$
    \begin{aligned}
         & d\big(d^{k-1}f\big)[x,v_1,\ldots,v_{k-1}](v_k,w_1,\ldots,w_{k-1})= \sum_{j=1}^{k-1} d^{k-1}f[x](v_1,\ldots, \widehat{v_j}, w_j, \ldots, v_{k-1})\\ & \qquad + \   \lim_{t\to 0} \frac{1}{t}\big( d^{k-1}f[x+tv_k](v_1,\ldots,v_{k-1})-d^{k-1}f[x+tv_k](v_1,\ldots,v_{k-1})\big) ,
    \end{aligned}
$$
we can then see that $d\big( d^{k-1}f\big)$ is Bastiani differentiable if and only if the limit of \eqref{eq_A_Bastiani_k_der} exists and is a continuous mapping $U\times X^k \to Y$. We can thus state 
\begin{lemma}
    A mapping $f : U \subset X \to Y$ is $k$ Bastiani differentiable if and only if for each $0\leq j \leq k$ all the derivative mappings $d^{j}f:U \times X^j \to Y$ exists and are jointly continuous. 
\end{lemma}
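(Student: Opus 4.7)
The plan is to prove both implications by induction on $k$, with the understanding that the heart of the argument is the explicit formula for $d(d^{k-1}f)$ already sketched by the author just before the lemma. Since each Gâteaux derivative step introduces a new \emph{linear} slot, I would first establish, as part of the same induction, that $d^{j}f[x](v_1,\ldots,v_j)$ is separately linear (hence multilinear) in $(v_1,\ldots,v_j)$; this multilinearity is what legitimizes the expansion used in the excerpt and is an almost free byproduct of Definition \ref{def_A_Bastiani_smooth_map}$(i)$ applied successively.

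For the ``only if'' direction, assuming $f \in C^k_B(U,Y)$, the recursive definition gives $d^{k-1}f \in C^1_B(U\times X^{k-1}, Y)$, so its Gâteaux differential $d(d^{k-1}f):(U\times X^{k-1})\times(X\times X^{k-1})\to Y$ exists and is jointly continuous. The key observation is that
\begin{equation*}
d^k f[x](v_1,\ldots,v_k)=d(d^{k-1}f)[x,v_1,\ldots,v_{k-1}](v_k,0,\ldots,0),
\end{equation*}
which is precisely \eqref{eq_A_Bastiani_k_der}. Being the restriction of a jointly continuous map to a closed linear subspace of the direction space, $d^k f$ exists and is jointly continuous on $U\times X^k$. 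The hypothesis on $d^j f$ for $0\le j\le k-1$ is supplied by the inductive hypothesis applied to $f\in C^{k-1}_B$.

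For the ``if'' direction, suppose each $d^j f$ exists and is jointly continuous on $U\times X^j$ for $0\le j\le k$; by the induction hypothesis, $f\in C^{k-1}_B$, so it only remains to upgrade $d^{k-1}f$ from $C^0_B$ to $C^1_B$. I would compute the difference quotient
\begin{equation*}
\frac{1}{t}\bigl(d^{k-1}f[x+tv_k](v_1+tw_1,\ldots,v_{k-1}+tw_{k-1})-d^{k-1}f[x](v_1,\ldots,v_{k-1})\bigr)
\end{equation*}
by expanding using multilinearity in the $v_j$-slots: the $x$-partial converges to $d^k f[x](v_1,\ldots,v_k)$, each first-order $w_j$ contribution converges to $d^{k-1}f[x](v_1,\ldots,w_j,\ldots,v_{k-1})$ by joint continuity of $d^{k-1}f$, and the remaining terms are $O(t)$. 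This yields the formula displayed before the lemma, exhibits a limit that is linear in $(v_k,w_1,\ldots,w_{k-1})$, and is jointly continuous as a finite sum of jointly continuous maps. Hence $d^{k-1}f\in C^1_B$, i.e.\ $f\in C^k_B$.

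The main obstacle I anticipate is the bookkeeping needed to justify the multilinear expansion cleanly, in particular guaranteeing that the cross terms are genuinely $O(t)$ in the locally convex sense, and treating the case where $t^2 w_i w_j$ type products appear. The cleanest way to handle this is with the integral representation of the mean value theorem obtained from Theorem \ref{thm_A_Riemann_curve_integral}, applied along the affine curve $s\mapsto (x+sv_k, v_1+sw_1,\ldots,v_{k-1}+sw_{k-1})$: joint continuity of the candidate differential then implies the remainder, bounded by any seminorm of $Y$, vanishes uniformly on bounded sets as $t\to 0$, which upgrades pointwise convergence to the joint continuity required by Definition \ref{def_A_Bastiani_smooth_map}.
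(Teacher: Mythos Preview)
Your proposal is correct and follows essentially the same approach as the paper: the paper simply displays the formula
\[
d\bigl(d^{k-1}f\bigr)[x,v_1,\ldots,v_{k-1}](v_k,w_1,\ldots,w_{k-1})=\sum_{j=1}^{k-1} d^{k-1}f[x](v_1,\ldots,w_j,\ldots,v_{k-1})+d^{k}f[x](v_1,\ldots,v_k)
\]
immediately before the lemma and then states the lemma as its direct consequence, whereas you carry out the same computation but spell out the induction and the multilinear expansion in more detail. Your use of the integral mean value theorem for the cross terms is a bit more machinery than needed---pure multilinearity already shows the higher-order terms carry a factor $t^2$---but the argument is sound.
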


\subsection{Convenient calculus on locally convex spaces}\label{section_convenient}

We introduce another calculus on locally convex spaces, called \textit{convenient calculus}, for details see \cite{convenient} and references therein. The main difference with Bastiani calculus is that smooth mappings need not be continuous, however, this is compensated by \Cref{thm_A_cart_closedness}, which asserts that $C^{\infty}(U,C^{\infty}(V,W))=C^{\infty}(U\times V,W)$ for any open subsets $U,\ V,\ W$ of locally convex spaces $X,\ Y,\ Z$. This is quite a strong property which will be very useful in \Cref{chapter_Wick}.

\begin{definition}
    Let $X$ be a locally convex space, a continuous curve $\gamma:(a,b)\subset \mathbb{R}\to X$ is differentiable is the limit
    $$
        \gamma'(s)\doteq\lim_{t\to 0} \frac{\gamma(s+t)-\gamma(s)}{t}
    $$
    exists and defines a continuous curve $\gamma':(a,b)\to X$. We say that $\gamma$ is $k$ times differentiable if $\big(\gamma^{(k-1)}\big)'$ exists and is a continuous curve. If $\gamma$ is differentiable for any $k \in \mathbb{N}$, then we say it is smooth.
\end{definition}

This notion of curve induces a topology on the space $X$ which following \cite{kriegl1997convenient} we call the $c^{\infty}$-topology. It is constructed as follows: it is the final topology on $X$ with respect to all smooth curves $\gamma\in C^{\infty}(\mathbb{R},X)$, that is the finest topology for which all smooth curves $\gamma:\mathbb{R} \to X$ becomes continuous. Its open subsets will be called $c^{\infty}$-open. 

Recall that a sequence $\{x_n\}\subset X$ is \textit{Mackey-Cauchy} if there exists a sequence $\{\mu_{n,m}\}\subset \mathbb{R}$ which diverges to infinity for which $\mu_{n,m}(x_n-x_m)\in B$ for a bounded subset $B \subset X$ and for all $n,m \in \mathbb{N}$. We say that $X$ is \textit{Mackey complete} if any Mackey-Cauchy sequence converges. Incidentally, we remark that by Lemma 2.2 pp.15 in \cite{kriegl1997convenient}, for a locally convex space $X$, every \textit{Mackey-Cauchy net} converges in $X$ if and only if every \textit{Mackey-Cauchy sequence} converges in $X$; therefore even if the topology of $X$ is generated by uncountably many seminorms, it is enough to check Mackey completeness for sequences.

\begin{definition}\label{def_A_convenient_vector_space}
A locally convex vector space $X$ is said to be \textit{convenient} or $c^{\infty}$\textit{-complete} if any of the following equivalent conditions hold:
\begin{itemize}
\item[$(i)$] $X$ is Mackey complete;
\item[$(ii)$] any curve $\gamma:\mathbb{R} \to X$ such that $l\circ \gamma:\mathbb{R} \to \mathbb{R}$ is smooth for all $l\in X'$, is itself smooth;
\item[$(iii)$] $X$ is $c^{\infty}$ closed.
\end{itemize}
\end{definition}

We observe that $(i)$ Definition \ref{def_A_convenient_vector_space} implies that $c^{\infty}$ completeness is a, quite mild, bornological\footnote{The bornology of a locally convex vector space $X$ is the family of its bounded subsets, \textit{i.e.} those subsets $B\subset X$ such that for all neighborhood $U$ of $0$ there is $\lambda \in \mathbb R$ which $B\subset \lambda U$. Saying that a convenient smoothness is \textit{bornological} refers to the fact that it depends on the bounded subsets of the topology. Therefore refining the topology while keeping the same bounded subsets does not alter the set of smooth curves $C^{\infty}(X)$.} condition; whereas $(ii)$ implies that whether a curve $\gamma$ is smooth can be tested by means of elements of the topological dual of $X$. This suggests a strong relationship between bornology and smoothness of curves. We shall expand this connection below when considering other properties of this topology.\\

The space $C^{\infty}(\mathbb{R},X)$ of smooth curves can be endowed with a locally convex space structure. The addition and multiplication by scalar are essentially inherited by those on $X$, the topology is the topology of uniform convergence on compact subsets of $\mathbb{R}$ in each derivative separately. If $J\subset \mathbb{R}$ is compact, then seminorms of this space are given by 
\begin{equation}\label{eq_A_top_unif_conv_on_smooth_curves}
    p_{i,J,n}(\gamma)=\sup_{t\in J, k\leq n} p_i\big(\gamma^{(k)}(t)\big)
\end{equation}
where $p_i$ is any seminorm in $X$. Equivalently this topology can be described as the initial topology with respect to linear mappings $C^{\infty}(\mathbb{R},X)\to l^{\infty}(J,X): $ . Notice that a set $\mathcal{B}$ in $ C^{\infty}(\mathbb{R },X)$ is bounded if and only if for all $\gamma\in \mathcal{B}$, $p_{i,J,n}(\gamma)\leq C $. For any continuous linear functional $l\in X'$, $l(\mathcal{B})\subset \mathbb{R}$ is bounded whenever $\mathcal{B}$ is. Since $l(\gamma^{(k)})=(l\circ\gamma)^{(k)}$, then we can state that $\mathcal{B}\subset C^{\infty}(\mathbb{R},X)$ is bounded if and only if $l(\mathcal{B})$ is bounded in $\mathbb{R}$ for each $l\in X'$.

\begin{proposition}\label{prop_A_Boman_general}
    Let $f:\mathbb{R}^2 \to X$ a mapping into a locally convex space (not necessarily $c^{\infty}$-complete) then the following assertions are equivalent:
    \begin{itemize}
        \item[$(i)$] $f\circ \gamma$ is smooth whenever $\gamma\in C^{\infty}(\mathbb{R},\mathbb{R}^2)$; 
        \item[$(ii)$] all iterated directional derivatives $df[p](v)$ exists and are locally bounded;
        \item[$(iii)$] all iterated partial derivatives of $f$ exists and are locally bounded;
        \item[$(iv)$] $f^{\vee}: \mathbb{R} \to C^{\infty}(\mathbb{R},X)$ is a smooth curve.
    \end{itemize}
\end{proposition}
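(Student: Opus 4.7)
The plan is to establish the cyclic chain $(\mathrm{i})\Rightarrow(\mathrm{ii})\Leftrightarrow(\mathrm{iii})\Rightarrow(\mathrm{iv})\Rightarrow(\mathrm{i})$, with the analytic core concentrated in the steps $(\mathrm{iii})\Rightarrow(\mathrm{iv})$ and $(\mathrm{iii})\Rightarrow(\mathrm{i})$, both of which rely crucially on the existence of the Riemann integral for continuous curves (\Cref{thm_A_Riemann_curve_integral}) and hence on Taylor's theorem with integral remainder in the LCS setting. The guiding idea is that the iterated partial derivatives control, through Taylor remainders, all the seminorms of $C^{\infty}(\mathbb R,X)$ from \eqref{eq_A_top_unif_conv_on_smooth_curves}, so pointwise local boundedness of partial derivatives is enough to deliver genuine smoothness in every sense appearing in the proposition.

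The equivalence $(\mathrm{ii})\Leftrightarrow(\mathrm{iii})$ is essentially bookkeeping: the partial derivatives are the iterated directional derivatives in the directions of the canonical basis $(e_1,e_2)$ of $\mathbb R^2$, and conversely any iterated directional derivative $d^kf[p](v_1,\ldots,v_k)$, once it exists, is multilinear in the $v_i$ and so a finite linear combination of partial derivatives; local boundedness transfers because finite linear combinations preserve bounded subsets of $X$. For $(\mathrm{i})\Rightarrow(\mathrm{ii})$ I would test $f$ against the affine curves $\gamma_{p,v}(t)=p+tv$ to extract $df[p](v)=\tfrac{d}{dt}\big|_{t=0}(f\circ\gamma_{p,v})$, then iterate with curves depending smoothly on auxiliary parameters to get higher directional derivatives; local boundedness follows by plugging in families of curves parametrised by a compact set of $(p,v)$ and using that smoothness via (i) sends such compactly-parametrised families to bounded families in $X$.

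For $(\mathrm{iii})\Rightarrow(\mathrm{iv})$ I would first note that the existence and local boundedness of the $s$-partials of $f(t,\cdot)$ force $s\mapsto f(t,s)$ to be smooth for every fixed $t$, i.e.\ $f^{\vee}(t)\in C^{\infty}(\mathbb R,X)$. To show that $t\mapsto f^{\vee}(t)$ is a smooth curve in $C^{\infty}(\mathbb R,X)$, the candidate derivative is $t\mapsto \partial_1 f(t,\cdot)$; I would verify this by writing the $k$-th order Taylor expansion of $f^{\vee}(t+h)-f^{\vee}(t)$ in the $t$-variable with integral remainder valued in $X$, and then estimate the remainder in each seminorm $p_{i,J,n}$ of \eqref{eq_A_top_unif_conv_on_smooth_curves}. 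Here is where local boundedness of all mixed partials $\partial_1^a\partial_2^b f$ enters: it bounds the integrand uniformly on the compact rectangle $J\times(\text{neighbourhood of }t)$ so that the integral remainder tends to zero in the required seminorms. An analogous Taylor argument along the curve $\gamma$ establishes $(\mathrm{iii})\Rightarrow(\mathrm{i})$. Finally $(\mathrm{iv})\Rightarrow(\mathrm{i})$ follows from the identity $f\circ\gamma(t)=\mathrm{ev}_{\gamma_2(t)}\bigl(f^{\vee}(\gamma_1(t))\bigr)$, once one checks that the bilinear evaluation $\mathrm{ev}:C^{\infty}(\mathbb R,X)\times\mathbb R\to X$ sends pairs of smooth curves to smooth curves, which is immediate from the definition of the topology \eqref{eq_A_top_unif_conv_on_smooth_curves}.

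The main obstacle I anticipate is the step $(\mathrm{iii})\Rightarrow(\mathrm{iv})$: one must translate the merely pointwise local boundedness of iterated partials into uniform control in every seminorm of $C^{\infty}(\mathbb R,X)$, simultaneously in the $s$-derivative order and the $t$-derivative order. Since the target is not assumed $c^{\infty}$-complete, one cannot simplify by scalarising with $X'$ and invoking the classical Boman theorem for real-valued functions; all estimates must be produced directly in $X$ via the integral of \Cref{thm_A_Riemann_curve_integral}. The bookkeeping is to choose Taylor expansions of sufficiently high order so that every seminorm $p_{i,J,n}$ on $C^{\infty}(\mathbb R,X)$ sees a remainder that vanishes as $h\to 0$, uniformly in $s\in J$.
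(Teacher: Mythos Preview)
Your approach is genuinely different from the paper's. The paper explicitly restricts to the convenient case and defers the general statement to \cite{kriegl1997convenient}; under $c^{\infty}$-completeness its route to $(\mathrm{iii})\Leftrightarrow(\mathrm{iv})$ is precisely the scalarisation you ruled out: by condition $(ii)$ of Definition~\ref{def_A_convenient_vector_space} one has $f^{\vee}$ smooth iff $l\circ f^{\vee}=(l\circ f)^{\vee}$ is smooth for every $l\in X'$, and then the classical real-valued Boman theorem (\Cref{thm_A_Boman}) finishes. The implications among $(\mathrm{i}),(\mathrm{ii}),(\mathrm{iii})$ are dismissed as ``essentially trivial'', with $(\mathrm{iii})\Rightarrow(\mathrm{i})$ via the Fa\`a di Bruno expansion $(f\circ\gamma)^{(k)}=\sum \partial_{i_1}\cdots\partial_{i_p}f(\gamma(t))\,\gamma^{(k_1)}(t)\cdots\gamma^{(k_p)}(t)$.

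Your direct Taylor-remainder strategy aims at the full generality of the statement and is in the spirit of the Kriegl--Michor argument the paper cites. What your plan buys is that it does not need scalarly-tested smoothness to imply smoothness, so it survives without Mackey completeness; what it costs is the bookkeeping you already flag. One point deserves to be made explicit: the Riemann integral of \Cref{thm_A_Riemann_curve_integral} requires \emph{continuous} integrands, whereas your hypothesis gives only locally bounded iterated partials. You therefore need an intermediate step upgrading ``locally bounded of one order higher'' to ``Lipschitz, hence continuous, at the current order'' before you can write down the integral remainder; this bootstrap is the actual engine of the general proof and should not be hidden inside the Taylor step. Your sketch of $(\mathrm{i})\Rightarrow(\mathrm{ii})$ via compactly-parametrised families of curves is also a bit loose: the standard way to extract local boundedness from $(\mathrm{i})$ is by contradiction, building a single smooth curve that visits a sequence witnessing unboundedness.
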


Prior to writing the proof, let us cite an important result in \cite{boman1967differentiability} known as Boman Theorem:
\begin{theorem}[Boman]\label{thm_A_Boman}
    Let $f:\mathbb{R}^2\to \mathbb{R}$ a mapping, then the following conditions are equivalent:
    \begin{itemize}
    \item[$(i)$] all iterated partial derivatives exists and are continuous;
    \item[$(ii)$] all iterated partial derivatives exists and are locally bounded;
    \item[$(iii)$] for each $v\in \mathbb{R}^2$ the iterated directional derivatives $d^nf[x](v)=\frac{d^n}{dt^n}\big\vert_0f(x+tv)$ exists and are locally bounded with respect to $x$;
    \item[$(iv)$] for all smooth curves $\gamma:\mathbb{R}\to \mathbb{R}^2$ the composition $f\circ \gamma$ is smooth.
    \end{itemize}
\end{theorem}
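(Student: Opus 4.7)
I would prove the equivalences via the cyclic chain $(i)\Rightarrow (ii)\Rightarrow (iii)\Rightarrow (iv)\Rightarrow (i)$. Three of the four arrows are essentially routine; the substantive content is $(iii)\Rightarrow (iv)$, which is the core of Boman's original lemma.

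\emph{Easy directions.} The arrow $(i)\Rightarrow (ii)$ is immediate, since continuous functions on $\mathbb{R}^2$ are locally bounded. For $(ii)\Rightarrow (iii)$ I would argue by induction on the order $n$, applying the scalar chain rule to the parametrisation $s\mapsto f(x+sv)$ to obtain the polynomial expression
$$d^n f[x](v) = \sum_{\alpha_1,\ldots,\alpha_n\in\{1,2\}} v_{\alpha_1}\cdots v_{\alpha_n}\, \partial^n_{\alpha_1\cdots\alpha_n}f(x),$$
so existence and local boundedness of the iterated directional derivatives transfer from the iterated partials. For $(iv)\Rightarrow (i)$, existence of the iterated partials follows by evaluating $(f\circ\gamma)^{(n)}$ on straight coordinate curves $\gamma(t)=x+te_i$ and polarising; continuity of each partial derivative is the delicate point, and I would obtain it by contradiction, in the spirit of Boman's bump-curve construction: if $\partial^n f$ were discontinuous at $x$, pick a sequence $x_k\to x$ along which the partial derivative keeps a uniform gap, then build, by smooth interpolation, a single smooth curve $\gamma$ passing through each $x_k$ at times $t_k\to 0$ with prescribed tangent $e_{\alpha_n}$, so that smoothness of $f\circ\gamma$ at $0$ via $(iv)$ forces the partial derivative values to converge, a contradiction.

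\emph{Main step $(iii)\Rightarrow (iv)$.} Fix $\gamma\in C^\infty(\mathbb{R},\mathbb{R}^2)$. By $(iii)$, the straight-line restriction $r\mapsto f(x+rw)$ is smooth of every order with derivatives locally bounded in $x$; the homogeneity $d^nf[x](sv)=s^n d^nf[x](v)$, obtained by rescaling the parameter, together with a polarisation argument, upgrades this to joint local boundedness in $(x,w)$ on compacta. Writing $w_s\doteq(\gamma(t+s)-\gamma(t))/s\to\gamma'(t)$, scalar Taylor with integral remainder along the segment from $\gamma(t)$ to $\gamma(t)+sw_s=\gamma(t+s)$ gives
$$\frac{f(\gamma(t+s))-f(\gamma(t))}{s}=df[\gamma(t)](w_s)+s\int_0^1(1-r)\,d^2f[\gamma(t)+rsw_s](w_s)\,dr,$$
and the remainder is $O(s)$ by the uniform bounds on $d^2f$. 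Passing to the limit in the first term needs continuity of $v\mapsto df[x](v)$, which is \emph{not} granted by the bare definition; this is the genuine obstacle. The resolution, which is also the device that powers the induction, is the Taylor identity
$$df[x](v)=f(x+v)-f(x)-\int_0^1(1-r)\,d^2f[x+rv](v)\,dr,$$
whose right-hand side is continuous in $(x,v)$ because $f$ is continuous (an immediate consequence of everywhere-existing, locally bounded directional derivatives in $\mathbb{R}^2$) and the integrand is dominated by the local bound on $d^2f$. This yields $(f\circ\gamma)'(t)=df[\gamma(t)](\gamma'(t))$, and the inductive step repeats the same Taylor-with-integral-remainder manipulation on higher-order difference quotients, producing a Faà di Bruno-type expansion
$$(f\circ\gamma)^{(k)}(t)=\sum_{\pi} c_\pi\, d^{|\pi|}f[\gamma(t)]\big(\gamma^{(\pi_1)}(t),\ldots,\gamma^{(\pi_{|\pi|})}(t)\big),$$
all of whose summands are continuous in $t$ by the same bootstrap; hence $f\circ\gamma$ is smooth. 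The main obstacle, to repeat, is the lack of \emph{a priori} regularity of $v\mapsto df[x](v)$, and it is defeated exactly by rewriting $df$ through the Taylor integral remainder driven by the locally bounded $d^2f$ supplied by $(iii)$.
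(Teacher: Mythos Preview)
The paper does not prove this theorem; it is cited from Boman's original 1967 article and used as a black box. As for your attempt: the implications $(i)\Rightarrow(ii)$ and $(ii)\Rightarrow(iii)$ are fine once you observe (via the mean value theorem on the first partials, using locally bounded second partials) that $(ii)$ already forces $f\in C^1$ and inductively $C^\infty$, so that your chain-rule formula is legitimate; and the curve-through-a-sequence sketch for $(iv)\Rightarrow(i)$ is in the right spirit.

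The argument for $(iii)\Rightarrow(iv)$, however, has a real gap. You assert that homogeneity together with ``a polarisation argument'' upgrades boundedness of $x\mapsto d^nf[x](v)$ for each fixed $v$ to joint local boundedness in $(x,v)$. Polarisation identities apply to homogeneous \emph{polynomials} (equivalently, symmetric multilinear forms); under $(iii)$ alone you do not know that $v\mapsto d^nf[x](v)$ is a polynomial in $v$ --- establishing that is essentially the content of the theorem. Without that structure, the per-direction bounds can blow up as $v$ ranges over the unit circle, and no uniform bound follows. This missing uniformity is exactly what you then invoke when bounding the Taylor remainder with the varying direction $w_s$, and again when claiming continuity of $(x,v)\mapsto df[x](v)$ from the integral representation; the same issue obstructs even your claim that $f$ is continuous, since line-wise Lipschitz estimates $|f(x+tv)-f(x)|\le C_v|t|$ with direction-dependent constants do not yield two-dimensional continuity. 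Boman's actual proof confronts this difficulty with a genuinely different and more delicate mechanism, and it cannot be short-circuited by the Taylor bootstrap you propose.
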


\begin{proof}[Proof of Proposition \ref{prop_A_Boman_general}]
    We shall prove this result for convenient vector spaces, the general proof can be found in \cite[pp. 29]{kriegl1997convenient}. The equivalence of $(i)$, $(ii)$, $(iii)$ is essentially trivial: clearly $(i) \Rightarrow (ii)$, also $(ii) \Rightarrow (iii)$ by testing in the cases where $v$ is any vector of the standard basis of $\mathbb{R}^2$, finally $(ii) \Rightarrow (i)$ since $(f\circ \gamma)^{(k)}$ can be always expressed as a finite combination of terms of the form $\partial_{i_1}\partial_{i_p}f(\gamma(t)) \gamma^{(k_1)}(t)\cdots \gamma^{(k_p)}(t) $ with $i_j=1,2$ and $\sum k_j=p$. To complete the proof, note that by $(ii)$ in Definition \ref{def_A_convenient_vector_space} $f^{\vee}\in C^{\infty}(\mathbb{R},X)$ if and only if $l\circ f^{\vee}=(l\circ f)^{\vee} \in C^{\infty}(\mathbb{R},\mathbb{R})$. Applying \Cref{thm_A_Boman} establishes the equivalence of $(iv)$ and $(iii)$. 
\end{proof}

\begin{definition}\label{def_A_convenient_smooth_map}
    A mapping $f:U\subset X \to Y $, where $U$ is $c^{\infty}$-open is \textit{smooth} if for any $\gamma \in C^{\infty}(\mathbb{R},U)$, $f\circ \gamma \in C^{\infty}(\mathbb{R},Y)$. We denote by $C^{\infty}(U,Y)$ the set of smooth mappings between $U$ and $Y$.
\end{definition}

Comparing Definition \ref{def_A_convenient_smooth_map} with Definition \ref{def_A_Bastiani_smooth_map} one immediately sees that each Bastiani smooth mapping is conveniently smooth as well. The other implication does not generally hold true (see for instance \cite[$\S2$ and Proposition 2.2]{glockner2005discontinuous} for a counterexample). 

The space $C^{\infty}(U,Y)$ is a vector space, we can endow it with a topology as follows: if $\gamma\in C^{\infty}(\mathbb{R},U)$, then $\gamma^{*}:C^{\infty}(U,Y) \to C^{\infty}(\mathbb{R},Y)$, therefore we give $C^{\infty}(U,Y)$ the initial topology with respect to the mappings $\gamma^{*}$. We claim that 
\begin{equation}\label{eq_A_char_smooth_conv_functions}
\begin{aligned}
    C^{\infty}(U,Y)&= \lim_{\substack{\longleftarrow\\ \gamma\in C^{\infty}(\mathbb{R},U)}} C^{\infty}(\mathbb{R},Y) \\ &= \bigg\{ \{f_{\gamma}\}_{\gamma}\in \prod_{\gamma \in C^{\infty}(\mathbb{R},U)}C^{\infty}(\mathbb{R},Y): f_{\gamma}\circ g =f_{\gamma\circ g} \ \forall g\in C^{\infty}(\mathbb{R},\mathbb{R})\bigg\}.
\end{aligned}
\end{equation}
Notice that any mapping $f_{\gamma}$ such that $f_{\gamma}\circ g =f_{\gamma\circ g}$ for all reparametrization $ g\in C^{\infty}(\mathbb{R},\mathbb{R})$ gives rise to a mapping $f:U\to Y$ by setting $f(x)=f_{\gamma_x}(t)$ where $\gamma_x:t\mapsto x$ is the constant curve, if $g$ is any reparametrization, then $f_{\gamma_x}\circ g =f_{\gamma_x\circ g}$ but $\gamma\circ g (t)\equiv x$ thus $f$ is not altered; finally $f$ is smooth since $f\circ \gamma \equiv f_{\gamma}\in  C^{\infty}(\mathbb{R},Y)$. On the other hand any smooth function $f:U\to Y$ gives rise to $\{f_{\gamma}\}_{\gamma}$ by setting $f_{\gamma}=f\circ \gamma$, then $f_{\gamma}\circ g= f\circ \gamma\circ g  =f_{\gamma\circ g}$.

\begin{theorem}\label{thm_A_cart_closedness}
    Let $U_i$ be $c^{\infty}$-open subsets of locally convex spaces $X_i$ for $i=1,2$ not necessarily Mackey complete. A mapping $f:U_1\times U_2 \to Y$ is smooth if and only if the mapping $f^{\vee}:U_1\to C^{\infty}(U_2,Y)$ is smooth.
\end{theorem}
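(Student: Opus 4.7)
The plan is to bootstrap from the two-variable Boman-type result already established as Proposition \ref{prop_A_Boman_general} and the characterization \eqref{eq_A_char_smooth_conv_functions} of $C^\infty(U,Y)$ as an initial topology. The key bridging fact is the equivalence $(i)\Leftrightarrow(iv)$ of Proposition \ref{prop_A_Boman_general}: a map $F:\mathbb{R}^2\to Y$ is smooth if and only if $F^{\vee}:\mathbb{R}\to C^{\infty}(\mathbb{R},Y)$ is a smooth curve. Throughout I will use that continuous linear maps between locally convex spaces are convenient-smooth (they send smooth curves to smooth curves by the chain rule for curves), and that by Definition \ref{def_A_convenient_smooth_map} checking smoothness of $f:U_1\times U_2 \to Y$ amounts to showing $f\circ(\gamma_1,\gamma_2) \in C^{\infty}(\mathbb{R},Y)$ for every pair $(\gamma_1,\gamma_2)\in C^{\infty}(\mathbb{R},U_1)\times C^{\infty}(\mathbb{R},U_2)$, since smooth curves into $X_1\times X_2$ correspond to pairs of smooth curves.

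For the forward direction, assume $f$ is smooth. Fix $\gamma\in C^{\infty}(\mathbb{R},U_1)$; I must show $f^{\vee}\circ\gamma$ is a smooth curve into $C^{\infty}(U_2,Y)$. By the initial-topology description \eqref{eq_A_char_smooth_conv_functions}, it suffices to prove that for every $\eta\in C^{\infty}(\mathbb{R},U_2)$ the composite
\[
\eta^{*}\circ f^{\vee}\circ\gamma : \mathbb{R}\longrightarrow C^{\infty}(\mathbb{R},Y),\qquad t\longmapsto \bigl(s\mapsto f(\gamma(t),\eta(s))\bigr),
\]
is smooth. But the associated two-variable map $F:\mathbb{R}^2\to Y$, $F(t,s)=f(\gamma(t),\eta(s))$, is precisely $f\circ(\gamma\times\eta)$ composed with the diagonal-free evaluation; equivalently it is the composition of $f$ with the smooth curve $(t,s)\mapsto(\gamma(t),\eta(s))$ in $U_1\times U_2$ (smoothness in the two variables reduces to smoothness along all smooth curves $\mathbb{R}\to\mathbb{R}^2$, again by Proposition \ref{prop_A_Boman_general}). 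Hence $F$ is smooth, and applying $(i)\Leftrightarrow(iv)$ of Proposition \ref{prop_A_Boman_general} yields that $F^{\vee}=\eta^{*}\circ f^{\vee}\circ\gamma$ is a smooth curve, as required.

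For the converse, assume $f^{\vee}:U_1\to C^{\infty}(U_2,Y)$ is smooth. Given smooth curves $\gamma_i\in C^{\infty}(\mathbb{R},U_i)$, the composition $t\mapsto f^{\vee}(\gamma_1(t))$ is a smooth curve into $C^{\infty}(U_2,Y)$. The evaluation-along-$\gamma_2$ map $\gamma_2^{*}:C^{\infty}(U_2,Y)\to C^{\infty}(\mathbb{R},Y)$ is, by the very definition of the initial topology on $C^{\infty}(U_2,Y)$, continuous and linear, and therefore smooth. Composing, $t\mapsto f^{\vee}(\gamma_1(t))\circ\gamma_2$ is a smooth curve $\mathbb{R}\to C^{\infty}(\mathbb{R},Y)$. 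Invoking the other direction of $(i)\Leftrightarrow(iv)$ of Proposition \ref{prop_A_Boman_general}, the associated map $G:\mathbb{R}^2\to Y$, $G(t,s)=f(\gamma_1(t),\gamma_2(s))$, is smooth. Restricting to the diagonal (which is itself a smooth curve $\mathbb{R}\to\mathbb{R}^2$) gives that $t\mapsto f(\gamma_1(t),\gamma_2(t))$ is smooth, proving smoothness of $f$.

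The main technical obstacle is the legitimacy of the step $(i)\Leftrightarrow(iv)$ of Proposition \ref{prop_A_Boman_general} in the current setting, because the target $Y$ is not assumed Mackey complete and the proposition needs to be read in its full generality; however, this is precisely the form in which it is stated, and all curves involved are built canonically from smooth curves and continuous linear maps, so no completion of $Y$ is required. A secondary subtlety is that the $c^{\infty}$-topology on $X_1\times X_2$ need not be the product topology, but this is harmless here because the smoothness test relies only on curves and the identification $C^{\infty}(\mathbb{R},X_1\times X_2)\cong C^{\infty}(\mathbb{R},X_1)\times C^{\infty}(\mathbb{R},X_2)$, which holds without any completeness assumption.
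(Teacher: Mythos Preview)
Your proof is correct and follows essentially the same approach as the paper: both reduce the statement to the equivalence $(i)\Leftrightarrow(iv)$ of Proposition \ref{prop_A_Boman_general} applied to $F(t,s)=f(\gamma_1(t),\gamma_2(s))$, combined with the initial-topology characterization \eqref{eq_A_char_smooth_conv_functions} of smooth curves into $C^{\infty}(U_2,Y)$. The paper presents this as a single chain of equivalences rather than two separate implications, but the content is identical.
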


\begin{proof}
    Suppose that $f^{\vee}:U_1\to C^{\infty}(U_2,Y)$ is smooth then, by Definition \ref{def_A_convenient_smooth_map}, this is equivalent to: $f^{\vee}\circ \gamma_1:\mathbb{R}\to C^{\infty}(U_2,Y)$ is smooth for any $\gamma_1\in C^{\infty}(\mathbb{R},U_1)$. By \eqref{eq_A_char_smooth_conv_functions}, the above statement is equivalent to $\gamma_2^*(f^{\vee}\circ \gamma_1):\mathbb{R} \to C^{\infty}(\mathbb{R},Y)$ is smooth for any curve $\gamma_2\in C^{\infty}(\mathbb{R},U_2)$. Finally, by Proposition \ref{prop_A_Boman_general}, the last assertion is equivalent to
    $$
        \big( \gamma_2^*(f^{\vee}\circ \gamma_1)\big)^{\wedge} : \mathbb{R}^2 \to Y \ \mathrm{is } \ \mathrm{smooth} \ \forall \gamma_1\in C^{\infty}(\mathbb{R},U_1),\ \gamma_2\in C^{\infty}(\mathbb{R},U_2).
    $$
    By construction, $\big( \gamma_2^*(f^{\vee}\circ \gamma_1)\big)^{\wedge} =f\circ (\gamma_1\times \gamma_2)$, however, by Definition \ref{def_A_convenient_smooth_map} this is once again equivalent to state that $f:U_1\times U_2 \to Y$ is smooth.
\end{proof}

A consequence of \ref{thm_A_cart_closedness} is that convenient calculus enjoys the basic properties of calculus that is
\begin{itemize}
    \item the chain rule holds, in particular let $f:U\subset X\to Y$ and $g_V\subset Y \to Z$ be conveniently smooth mapping with $f(U)\subset V$, then $d(g\circ f)[x](v)=dg[f(x)](df[x](v))$. Moreover, if we set $B(X;Y)$ the space of bounded (hence conveniently smooth) linear mappings equipped with its natural topology, the application $d:C^{\infty}(U,Y) \to C^{\infty}\big(U,B(X;Y)\big) $ is bounded and smooth;
    \item integration of smooth curves can be performed as in \Cref{thm_A_Riemann_curve_integral}.
\end{itemize}

In the proposition below we shall list some of the properties of the $c^{\infty}$-topology. As we stressed above, for any locally convex space $X$ the $c^{\infty}$ topology is a refinement of the initial locally convex topology, in general however $c^{\infty}X\equiv (X,\tau_{c^{\infty}})$ fails to be a topological vector space.

\begin{proposition}\label{prop_A_prop_of_c_infty_top}
    Let $X$ be a locally convex space, the bornologification of $X$ is the finest locally convex topology on $X$ possessing the same bounded subsets as the initial topology. We denote by $X_{b}$ this locally convex space. Then
    \begin{itemize}
        \item[$(i)$] $U\subset X_b$ is open if and only if is $c^{\infty}$-open, an absolutely convex subset $U\subset X_b$ is a $0$-neighborhood if and only if it is so for the $c^{\infty}$-topology.
        \item[$(ii)$] $id:(X,\tau_{c^{\infty}})\to X_b$ is continuous, hence the $c^{\infty}$-topology is finer then the bornologification of the initial topology on $X$.  
        \item[$(iii)$] If $X$ is metrizable, then $c^{\infty}X=X_b=X$. In particular if $X$ is a Fréchet vector space then it is a convenient vector space.
        \item[$(iv)$] If $X$ is a Fréchet space, $U\subset X$ an open subset and $Y$ a locally convex space, then $C^{\infty}(U,Y)=C^{\infty}_B(U,Y)$.
    \end{itemize}
\end{proposition}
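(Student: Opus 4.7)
The plan is to dispatch the four statements in order, since $(i)$--$(iii)$ organise the relationship between $c^{\infty}X$, the bornologification $X_{b}$ and $X$ itself, and $(iv)$ is then read off from $(iii)$.

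For $(i)$ and $(ii)$, the key observation is that any smooth curve $\gamma\in C^{\infty}(\mathbb{R},X)$ maps bounded subsets of $\mathbb{R}$ to bounded subsets of $X$: the difference quotients $\tfrac{\gamma(t+h)-\gamma(t)}{h}$ converge and are therefore bounded, so $\gamma$ is even Lipschitz on compacts in the bornology. Since $X$ and $X_{b}$ share the same bounded sets and $X_{b}$ is bornological (hence its topology is determined by those bounded sets), this implies that $\gamma:\mathbb{R}\to X_{b}$ is continuous. Consequently every open set of $X_{b}$ has open preimage under each smooth curve, which is precisely the definition of $c^{\infty}$-open; this shows $\mathrm{id}:c^{\infty}X\to X_{b}$ is continuous, proving $(ii)$. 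For the converse inclusion in $(i)$ restricted to absolutely convex $0$-neighbourhoods, I would argue by contraposition: if $U\subset X$ is absolutely convex and not a $0$-neighbourhood of $X_{b}$, then $U$ fails to absorb some bounded sequence $(x_{n})$, so there are scalars $\lambda_{n}\to 0$ with $\lambda_{n}x_{n}\notin U$. A standard Mackey-type scaling argument produces a smooth curve $\gamma:\mathbb{R}\to X$ passing through a rescaled subsequence of these points accumulating at $0$, so $\gamma^{-1}(U)$ is not a neighbourhood of $0$ in $\mathbb{R}$, contradicting $c^{\infty}$-openness.

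For $(iii)$ I would use that every metrizable locally convex space is bornological, so $X_{b}=X$ already. By $(ii)$ it remains to check that $c^{\infty}X\subset X$, i.e.\ every $c^{\infty}$-open set is open in the original metric topology. Here I would exploit the classical fact that in a metrizable locally convex space every null sequence $(x_{n})$ admits a Mackey-null subsequence, i.e.\ one can find $\mu_{n}\to\infty$ with $(\mu_{n}x_{n})$ bounded. Such a Mackey-null sequence can be interpolated by a smooth curve $\gamma:\mathbb{R}\to X$ with $\gamma(1/n)=x_{n}$ and $\gamma(0)=0$ (a standard bump-function construction, using that the Fréchet/metrizable structure lets the tails be suppressed faster than $\mu_{n}^{-1}$). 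If $U$ were $c^{\infty}$-open but not open at $x_{0}\in U$, there would be a sequence $x_{n}\to x_{0}$ outside $U$; shifting to $0$ and passing to a Mackey-null subsequence, the interpolating smooth curve would then have $\gamma^{-1}(U)$ failing to be a neighbourhood of $0$, a contradiction.

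Finally for $(iv)$, the inclusion $C^{\infty}_{B}(U,Y)\subset C^{\infty}(U,Y)$ is immediate since composing a Bastiani-smooth map with a smooth curve is smooth by the chain rule for Bastiani calculus. For the reverse inclusion, let $f\in C^{\infty}(U,Y)$. By $(iii)$ the $c^{\infty}$-topology on the Fréchet space $X$ coincides with the original topology, so $U$ is open in $X$ and $f$ is continuous (the preimage of each open set in $Y$ is $c^{\infty}$-open in $X$, hence open). Directional derivatives $df[x](v)$ exist by convenient smoothness applied to the line $t\mapsto x+tv$; to upgrade pointwise existence to joint continuity of $df:U\times X\to Y$ I would invoke the Banach--Steinhaus/uniform boundedness principle, which is available because $X$ is Fréchet, together with Proposition \ref{prop_A_Boman_general} to detect joint continuity through curves. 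Iterating this argument on the higher derivatives $d^{k}f$, each of which is again conveniently smooth on an open subset of a Fréchet space, yields joint continuity of all iterated derivatives and hence $f\in C^{\infty}_{B}(U,Y)$. The main obstacle throughout is the one in $(iii)$: producing an honest smooth interpolating curve through a given Mackey-null sequence, because without such curves the $c^{\infty}$-topology is not manipulable, and it is there that the metrizability assumption is genuinely used.
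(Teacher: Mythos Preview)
The paper does not actually prove this proposition: there is no proof environment, only a short commentary paragraph afterwards that treats $(i)$--$(iii)$ as known facts from convenient calculus and obtains $(iv)$ from $(iii)$ together with a citation (Theorem~1, p.~71 of Fr\"olicher--Kriegl). Your sketch therefore goes well beyond what the paper supplies, and the relevant comparison is with the standard literature arguments.

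Your treatment of $(i)$--$(iii)$ is essentially the Kriegl--Michor proof: smooth curves are bornologically continuous, hence $X_{b}$-open implies $c^{\infty}$-open (giving $(ii)$); the converse for absolutely convex $0$-neighbourhoods via a Mackey-type sequence plus an interpolating smooth curve; and for $(iii)$ the fact that metrizable spaces are bornological combined with the special lemma that null sequences in metrizable lcs admit Mackey-null subsequences, again interpolated by smooth curves. This is correct. Note, incidentally, that the first clause of $(i)$ as stated in the paper (arbitrary opens of $X_{b}$ and $c^{\infty}X$ coincide) is not true in general; only the absolutely convex clause holds, and you rightly only argue that part.

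For $(iv)$, the easy inclusion and your continuity argument for $f$ are fine: $f\circ\gamma$ is a smooth hence continuous curve in $Y$, so preimages of $Y$-opens are $c^{\infty}$-open in $X$, hence open by $(iii)$. The gap is the step from existence of directional derivatives to \emph{joint} continuity of $df:U\times X\to Y$. ``Banach--Steinhaus plus Proposition~\ref{prop_A_Boman_general}'' does not quite do this: Boman's theorem controls maps out of finite-dimensional domains, and Banach--Steinhaus yields equicontinuity of the linear maps $df[x]$ but not continuity in $x$. The honest argument (as in the cited Fr\"olicher--Kriegl theorem) exploits metrizability of $X$ to test continuity sequentially and uses the curve characterisation more carefully at each differentiation step. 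Your outline points in the right direction but would need those details filled in; the paper simply defers to the reference.
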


 Some comments are due. Condition $(i)$ implies that $c^{\infty}X$ possesses all the convex subsets generating the topology of $X_b$ but might fail to be locally convex due to the appearance of \textit{extra} open subsets. As a consequence of this we have $(ii)$. $(iii)$ establishes the important fact that when $X$ is metrizable, then the $c^{\infty}$-topology coincides with the initial topology, in which case when $X$ is also complete (that is Fréchet) then it must be Mackey complete as well and thus convenient. Finally using $(iii)$ and Theorem 1, pp. 71 of \cite{frolicher2006smooth} we get the key result that convenient smoothness and Bastiani smoothness do coincide on Fréchet spaces thus implying $(iv)$.

\begin{proposition}\label{prop_A_conv_structures_on_C_infty(M)}
    Let $M$ be a smooth finite dimensional manifold. The space $C^{\infty}(M)$ of all smooth functions on $M$ is a convenient vector space with respect to the following bornologically isomorphic descriptions:
\begin{itemize}
    \item[$(a)$] the initial topology with respect to mappings
    $$
        \gamma^{*}:C^{\infty}(M) \to C^{\infty}(\mathbb{R})
    $$
    where $\gamma\in C^{\infty}(\mathbb{R},M)$ and $C^{\infty}(\mathbb{R})$ has the usual Fréchet space structure.
    \item[$(b)$] the initial topology with respect to mappings 
    $$
        (u_{\alpha}^{-1})^{*}:C^{\infty}(M) \to C^{\infty}(\mathbb{R}^n,\mathbb{R})
    $$
    where $(U_{\alpha},u_{\alpha})_{\alpha}$ is the atlas of $M$ and $C^{\infty}(\mathbb{R}^n,\mathbb{R})$ is endowed with the usual Fréchet space structure; 
    \item[$(c)$] the initial structure with respect to mappings
    $$
        j^k:C^{\infty}(M)\to C\big(M,J^k(M,\mathbb{R})\big)
    $$
    with $k\in \mathbb{N}$ and $C\big(M,J^k(M,\mathbb{R})\big)$ endowed with the compact open topology.
\end{itemize}
\end{proposition}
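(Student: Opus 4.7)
The overall plan is to reduce convenience of all three structures to convenience of the standard Fr\'echet topology on $C^\infty(M)$, exploiting the fact that convenience is a bornological invariant (Definition \ref{def_A_convenient_vector_space}). I would proceed by identifying (b) and (c) with the Fr\'echet topology of Example \ref{ex_A_smooth_function_frechet_space}, and then establishing that (a) carries the same bounded sets.

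First, structure (b) is, by definition, generated by the pullbacks via $(u_\alpha^{-1})^{*}$ of the Fr\'echet seminorms on $C^\infty(\mathbb{R}^n,\mathbb{R})$; this produces exactly the chart seminorms $p_{K,n,U_\alpha,u_\alpha}$ of Example \ref{ex_A_smooth_function_frechet_space}, which by Lemma \ref{lemma_A_Frechet_equiv_seminorms} generate the standard Fr\'echet topology on $C^\infty(M)$. For structure (c), continuity of $j^k$ into $C(M,J^k(M,\mathbb{R}))$ with the compact-open topology amounts to uniform convergence on every compact $K\subset M$ of the full $k$-jet of $f$, which in chart coordinates recovers the same family of seminorms; letting $k$ range over $\mathbb{N}$ again produces the Fr\'echet topology. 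Since this topology is Fr\'echet, Proposition \ref{prop_A_prop_of_c_infty_top}(iii) shows that $C^\infty(M)$ is convenient with respect to (b) and (c).

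The next step is to identify the bounded sets of (a) with those of (b) = (c). The direction ``(b)-bounded $\Rightarrow$ (a)-bounded'' is immediate from the Fa\`a di Bruno formula: for $\gamma\in C^\infty(\mathbb{R},M)$ and compact $J\subset \mathbb{R}$, the image $K=\gamma(J)$ is compact in $M$, and the $k$-th derivative $(f\circ\gamma)^{(k)}(t)$ is a universal polynomial in the derivatives of $\gamma$ and in the partials of $f\circ u_\alpha^{-1}$ evaluated at $\gamma(t)$; any Fr\'echet-bound on $\mathcal{B}$ controlling those partials on $K$ therefore yields a uniform bound on $\{f\circ\gamma:f\in\mathcal{B}\}$ in $C^\infty(\mathbb{R})$.

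The converse direction is the main obstacle. Given $\mathcal{B}\subset C^\infty(M)$ with $\gamma^{*}\mathcal{B}$ bounded in $C^\infty(\mathbb{R})$ for every smooth curve $\gamma$, one must extract a uniform bound on $\partial^\beta(f\circ u_\alpha^{-1})(x)$ over $x\in u_\alpha(K)$ and $f\in\mathcal{B}$. Pointwise this is straightforward: the affine curve $\gamma_{x_0,v}(t)=u_\alpha^{-1}(x_0+tv)$ satisfies
\[
(f\circ\gamma_{x_0,v})^{(k)}(0)=\sum_{|\beta|=k}\binom{k}{\beta}v^{\beta}\,\partial^\beta(f\circ u_\alpha^{-1})(x_0),
\]
and polarizing in $v$ recovers each individual partial. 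The upgrade to uniformity in $x_0$ is achieved via the generalized Boman Theorem (Proposition \ref{prop_A_Boman_general}) applied to the smooth family $\Gamma_v:\mathbb{R}\times u_\alpha(U_\alpha)\to M$, $\Gamma_v(t,x)=u_\alpha^{-1}(x+tv)$: local boundedness of all iterated partials of $f\circ\Gamma_v$ uniformly in $f\in\mathcal{B}$ is equivalent, by that proposition, to boundedness of $(f\circ\Gamma_v\circ\delta)^{(k)}$ on compacts for every smooth curve $\delta$ into $\mathbb{R}\times u_\alpha(U_\alpha)$, and this is precisely the hypothesis applied to the smooth curves $\Gamma_v\circ\delta$ in $M$. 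Once the bornological equivalence of (a) with (b) = (c) is established, convenience of (a) follows automatically from that of the Fr\'echet structure, concluding the proof.
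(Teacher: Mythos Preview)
Your identification of (b) and (c) with the standard Fr\'echet topology, and the easy direction (b)-bounded $\Rightarrow$ (a)-bounded via Fa\`a di Bruno, match the paper exactly. The divergence is in the converse, and there your Boman argument has a gap. Proposition~\ref{prop_A_Boman_general} is a statement about a \emph{single} map into a locally convex space: curve-smoothness is equivalent to local boundedness of its iterated partials. You invoke it to conclude that local boundedness of the partials of $f\circ\Gamma_v$ holds \emph{uniformly over $f\in\mathcal{B}$} once the curve hypothesis holds uniformly over $f\in\mathcal{B}$; but the proposition supplies no uniformity across an external family. Applied to each fixed $f$ separately it gives nothing new (the partials of a smooth function are automatically locally bounded), and your outline contains no mechanism for transferring the uniform-in-$f$ bound from curves to partials. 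The repair is to package the family as one map: set $F:u_\alpha(U_\alpha)\to\ell^\infty(\mathcal{B})$, $F(x)=(f(u_\alpha^{-1}(x)))_{f\in\mathcal{B}}$, check from the (a)-hypothesis at order $k{+}1$ that $F\circ\delta$ is genuinely smooth into the Banach space $\ell^\infty(\mathcal{B})$ for every curve $\delta$ in $\mathbb{R}^n$, and \emph{then} apply Proposition~\ref{prop_A_Boman_general} (iterated to $\mathbb{R}^n$) with $X=\ell^\infty(\mathcal{B})$; local boundedness of $\partial^\beta F$ in $\ell^\infty(\mathcal{B})$ is precisely the desired uniform (b)-bound. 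This is a legitimate route, but the crucial $\ell^\infty$-valued step is absent from your sketch.

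The paper takes a different and more elementary path for (a)$\Rightarrow$(b), bypassing Boman entirely: assuming $\mathcal{B}$ is not (b)-bounded, one extracts $f_n\in\mathcal{B}$ and points $x_n$ in a fixed compact along which some $k$-jet blows up, then builds a single smooth curve $\gamma:[0,1]\to U_\alpha$ with $\gamma(1/n)=x_n$ (piecewise affine, smoothed at the corners); (a)-boundedness then forces $\sup_n\sup_t|(f_n\circ\gamma)^{(k)}(t)|<\infty$, contradicting the blow-up. This contradiction argument is self-contained and avoids any functional-analytic packaging, though one must take some care that $\gamma$ actually probes the offending directional derivative at each $x_n$. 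Your Boman approach, once the $\ell^\infty(\mathcal{B})$ step is made explicit, is arguably cleaner and closer in spirit to the convenient-calculus machinery.
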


\begin{proof}
    We observe that the topology of uniform convergence on $C^{\infty}(\mathbb{R}^d,\mathbb{R})$ described in \eqref{eq_A_char_smooth_conv_functions} does coincide with the standard Fréchet space topology for all $d\in \mathbb{N}$. $(b)$ is the standard Fréchet space structure introduced in Example \ref{ex_A_smooth_function_frechet_space}, moreover, by Lemma \ref{lemma_0_CO_top_is_Frechet}, the locally convex structure in $(c)$ coincides with that of $(b)$. By $(iii)$ and $(iv)$ of Proposition \ref{prop_A_prop_of_c_infty_top}, $(b), \ (c)$ induce the same convenient structure. To show that the convenient structure of $(a)$ is equivalent to that of $(b)$ is to show that $id:C^{\infty}(M)_{(a)}\to C^{\infty}(M)_{(b)}$ is bounded. By construction  
    $$
        C^{\infty}(M)_{(a)}= \bigg\{ \{f_{\gamma}\}_{\gamma}\in \prod_{\gamma \in C^{\infty}(\mathbb{R},M)}C^{\infty}(\mathbb{R},\mathbb{R}): f_{\gamma}\circ \kappa =f_{\gamma\circ \kappa} \ \forall \kappa\in C^{\infty}(\mathbb{R},\mathbb{R})\bigg\}.
    $$
    is a closed subset of $\prod_{\gamma \in C^{\infty}(\mathbb{R},M)}C^{\infty}(\mathbb{R},\mathbb{R})$ with the Tychonoff topology. A subset $\mathcal{B}$ is bounded therein if for all smooth curves $\gamma:\mathbb{R}\to M$, $\{f\circ \gamma\in C^{\infty}(\mathbb{R},\mathbb{R}): f\in \mathcal{B}\} $ is bounded, that is, by \eqref{eq_A_top_unif_conv_on_smooth_curves}, if for all compacts $J\subset \mathbb{R}$, all $k\in \mathbb{N}$, there is a constant $C>0$ such that
    \begin{equation}\label{eq_A_bdd_proof}
        \sup_{t\in J} |(f\circ \gamma)^{(k)}(t)|\leq C.
    \end{equation}
    We can suppose that $\gamma(J)\subset U_{\alpha}$, if not we split $J$ into (finite) smaller compact intervals and then repeat the following argument. If the $\mathcal{B}$ was not bounded for $(b)$, there would be $f$, a sequence $\{x_n\}\subset U_{\alpha}$ such that $\sup_{n, \ j\leq k}|\nabla^jf(x_n)|=\infty$, then we construct a smooth curve $\gamma:[0,1] \to U_{\alpha}$ such that $\gamma(1/n)=x_n$, by assumption $\sup_{t\in [0,1]} |(f\circ \gamma)^{(k)}(t)|\leq C$ reaching a contradiction. To construct $\gamma$, simply take a piecewise straight curve joining all elements of the sequence, for example 
    $$
        \gamma(t)= n(1-(n+1)t)x_n+ (n+1)(nt-1)x_{n+1}  \ \mathrm{if} \ t\in [1/(n+1),1/n]
    $$
    and then mollify appropriately on the edges. On the other hand if $\mathcal{B}$ is bounded in the usual Fréchet space structure, then for each compact $K\subset M$, $k\in \mathbb{N}$ there is $C>0$ such that
    $$
        f\in \mathcal{B} \ \Rightarrow \sup_{\substack{x\in K\\ j\leq k}} |\nabla^jf(x)|\leq C.
    $$
    If $\gamma:\mathbb{R}\to M$ is any smooth curve and $J\subset \mathbb{R}$ a compact interval, we can assume that $\gamma(J)\subset K$, then $(f\circ \gamma)^{(k)}$ is a polynomial in $\nabla^jf$ and $\gamma^{(l)}$, each of which is bounded in $K$ by the above constant, thus \eqref{eq_A_bdd_proof} is valid.
\end{proof}

\begin{proposition}\label{prop_A_conv_sections_of_vector_bndl}
    Let $\pi:E\to M)$ be a smooth finite dimensional vector bundle, then
    \begin{itemize}
        \item[$(i)$] a curve $\gamma:\mathbb{R}\to \Gamma^{\infty}(M \leftarrow E ) $ is smooth if and only if $\gamma^{\wedge}:\mathbb{R}\times M \to E$ is smooth;
        \item[$(ii)$] a curve $\gamma:\mathbb{R}\to \Gamma^{\infty}_c(M \leftarrow E ) $ is smooth if and only if $\gamma^{\wedge}:\mathbb{R}\times M \to E$ satisfies the following property: for each compact interval $I\subset \mathbb{R}$ there is a compact subset $K\subset M$ such that $\gamma^{\wedge}(t,x)$ is constant in $t\in I$ for each $x\in M\backslash K$.
    \end{itemize}
\end{proposition}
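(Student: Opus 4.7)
The plan is to combine an initial-structure description of the Fréchet topology on $\Gamma^{\infty}(M\leftarrow E)$ (mirroring Proposition \ref{prop_A_conv_structures_on_C_infty(M)}) with Cartesian closedness (Theorem \ref{thm_A_cart_closedness}). First, fix a trivializing atlas $\{(U_{\alpha}, u_{\alpha}, t_{\alpha})\}_{\alpha\in A}$ of $E\to M$ with typical fiber $F\cong\mathbb{R}^k$, and consider the linear restriction--trivialization mappings $r_{\alpha}:\Gamma^{\infty}(M\leftarrow E)\to C^{\infty}(U_{\alpha},F)$, $\sigma\mapsto \mathrm{pr}_2\circ t_{\alpha}\circ\sigma|_{U_{\alpha}}$. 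Using Lemma \ref{lemma_A_Frechet_equiv_seminorms} together with a partition of unity subordinated to $\{U_{\alpha}\}$, the Fréchet topology on $\Gamma^{\infty}(M\leftarrow E)$ coincides with the initial topology with respect to the family $\{r_{\alpha}\}_{\alpha\in A}$. Since $\Gamma^{\infty}(M\leftarrow E)$ is Fréchet, by $(iii)$--$(iv)$ of Proposition \ref{prop_A_prop_of_c_infty_top} its convenient smoothness agrees with Bastiani smoothness, and a curve $\gamma:\mathbb{R}\to\Gamma^{\infty}(M\leftarrow E)$ is smooth if and only if each $r_{\alpha}\circ\gamma:\mathbb{R}\to C^{\infty}(U_{\alpha},F)$ is smooth.

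For part $(i)$, I then apply Theorem \ref{thm_A_cart_closedness} componentwise with $U_1=\mathbb{R}$, $U_2=U_{\alpha}\subset\mathbb{R}^n$ and $Y=F$: smoothness of $r_{\alpha}\circ\gamma$ as a curve into $C^{\infty}(U_{\alpha},F)$ is equivalent to smoothness of its adjoint $(r_{\alpha}\circ\gamma)^{\wedge}:\mathbb{R}\times U_{\alpha}\to F$, which is nothing other than the local chart expression of $\gamma^{\wedge}|_{\mathbb{R}\times U_{\alpha}}$. Since smoothness of $\gamma^{\wedge}:\mathbb{R}\times M\to E$ is a local property detectable in any trivializing chart, patching over $\alpha\in A$ yields both directions of the equivalence.

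For part $(ii)$, the extra ingredient is the factorization of smooth curves in strict LF spaces. Writing $\Gamma^{\infty}_c(M\leftarrow E)=\varinjlim_n \Gamma^{\infty}_{K_n}(M\leftarrow E)$ for an exhaustion $\{K_n\}$ of $M$ by compact subsets, the classical factorization lemma (cf.\ \cite{kriegl1997convenient}) guarantees that any smooth curve $\gamma:\mathbb{R}\to\Gamma^{\infty}_c(M\leftarrow E)$ locally factors through some step: for every compact interval $I\subset\mathbb{R}$ there exists $n$ such that $\gamma(I)\subset\Gamma^{\infty}_{K_n}(M\leftarrow E)$ and $\gamma|_I$ is smooth therein. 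Setting $K=K_n$, the section $\gamma(t)$ vanishes outside $K$ for all $t\in I$, so $\gamma^{\wedge}(t,x)$ is (trivially) constant in $t$ for $x\notin K$; applying part $(i)$ to the closed Fréchet subspace $\Gamma^{\infty}_K(M\leftarrow E)\subset\Gamma^{\infty}(M\leftarrow E)$ then delivers smoothness of $\gamma^{\wedge}$. Conversely, given $\gamma^{\wedge}$ smooth together with the support condition, the constancy forces $\gamma|_I$ to take values in $\Gamma^{\infty}_K(M\leftarrow E)$, and part $(i)$ applied to this Fréchet subspace gives smoothness of $\gamma|_I$ there, hence smoothness of $\gamma$ by the LF-space structure. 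The principal obstacle is the LF-factorization lemma, as it is the only step genuinely exploiting the step filtration and is not deducible from Cartesian closedness alone; everything else is a standard local-to-global patching argument.
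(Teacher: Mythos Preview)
Your proposal is correct and follows essentially the same approach as the paper: both arguments reduce part $(i)$ to the local chart picture and invoke Cartesian closedness (Theorem~\ref{thm_A_cart_closedness}), and both handle part $(ii)$ via the LF-factorization lemma for smooth curves together with part~$(i)$ applied to the Fr\'echet step $\Gamma^{\infty}_K(M\leftarrow E)$. Your write-up is in fact slightly more explicit than the paper's about the initial-topology description through the maps $r_\alpha$ and about citing the factorization lemma, but the logical skeleton is identical.
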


\begin{proof}
We remark that $\Gamma^{\infty}(M \leftarrow E )$ and $\Gamma^{\infty}_c(M \leftarrow E )$ are endowed respectively with the Fréchet topology and the LF-topology. The smoothness of $\gamma^{\wedge}:\mathbb{R}\times M \to E$ is equivalently tested on the respective charts of $M$, $E$, therefore we can assume that $M$, $E$ are open subsets $U$, $V$, respectively of $\mathbb{R}^n $, $\mathbb{R}^k $. In particular, those are $c^{\infty}$-open subsets, therefore we can apply Theorem \ref{thm_A_cart_closedness} and get that $\gamma^{\wedge}:\mathbb{R}\times M \to E$ is smooth if and only if $\gamma:\mathbb{R}\to C^{\infty}(U_{\alpha},V_{\alpha})$ with $U_{\alpha}$, $V_{\alpha}$ varying among open charts neighborhoods. Then we conclude, by a reasoning similar to the one employed in the proof of Proposition \ref{prop_A_conv_structures_on_C_infty(M)}, that this is equivalent to the convenient smoothness of $\gamma:\mathbb{R}\to \Gamma^{\infty}(M \leftarrow E ) $.

    For $(ii)$ we observe that $\Gamma^{\infty}_c(M \leftarrow E )$ is a strict direct limit of Fréchet spaces, thus $\gamma:\mathbb{R}\to \Gamma^{\infty}_c(M \leftarrow E ) $ is smooth if and only if it factors locally to a smooth curve $\gamma:I\subset \mathbb{R}\to \Gamma^{\infty}_K(M \leftarrow E ) $ for some $K\subset M$. Notice however, that $\Gamma^{\infty}_K(M \leftarrow E )\simeq \Gamma^{\infty}(K \leftarrow \pi^{-1}(K) )$, thus we can apply point $(i)$ and conclude. 
\end{proof}
\chapter{The algebraic approach to classical field theory}\label{chapter_classical}

\thispagestyle{plain}


In this chapter we generalize the results of \cite{acftstructure} to the more complicated situation in which fields are sections of fibre bundles. At first sight the idea looks straightforward to implement, however it contains some not trivial subtleties whose treatment needs a certain degree of care. Indeed, in our general setting, images of the fields do not take value in vector spaces and moreover, the global configuration space $\Gamma^{\infty}(M\leftarrow B)$ completely lacks any vector space structure, thus admitting only a manifold structure. This forces us to generalize many notions like the support of functionals, or their central notion of locality/additivity, over configuration space, which can be given in two different formulations, one global that uses the notion of relative support already used in \cite{gravbrunetti} and a local one that uses the notion of charts over configuration space seen as a infinite dimensional manifold. It is gratifying that both notions give equivalent results, as shown \textit{e.g.}\ in Proposition~\ref{prop_1_additivity}. This added generality does not spoil the existence of a Poisson algebra structure (\Cref{thm_1_peierls_closedness}, \Cref{thm_1_jacobi}), the $C^{\infty}$-ring structure (Proposition \ref{prop_1_C-infty_ring}), or the existence of partitions of unities (Proposition \ref{prop_1_mu_caus_top_prop}) for microcausal functionals. On the contrary, other properties, such as the characterization of microlocal functionals (Proposition \ref{porop_1_muloc_charachterization}) are valid only in the chart neighborhood in which are derived, we argue, using the variational sequence of \cite{krupka}, that if we do not make extra assumptions the latter result is not globally\footnote{In the sense of the infinite dimensional manifold structure of $\Gamma^{\infty}(M\leftarrow B)$} extensible.\\

This chapter is organized as follows: \Cref{section_topology_map_manifold} and \Cref{section_map_manifold} are devoted to introduce the infinite dimensional geometric formalism which will be the starting point of our analysis. In particular, we will endow $\Gamma^{\infty}(M\leftarrow B)$ with a topology (\textit{c.f.} Definition \ref{def_1_refined_whitney_top}) and an infinite dimensional manifold structure where the smooth structure (\textit{c.f.} \Cref{thm_1_mfd_mappings}) is defined with respect to Bastiani calculus developed in \Cref{section_Bastiani}. We remark that this is not the only choice: for instance one could use \cite[Theorem 42.1]{convenient} to create a convenient smooth structure for this space, the construction is identical, however the topology on the configuration space is finer however, since conveniently smooth mappings are not continuous in general (see \cite{glockner2005discontinuous}). This last feature is quite problematic since we will handle the space of smooth functions over the manifold $\Gamma^{\infty}(M\leftarrow B)$ and we wish to regard those as \textit{smooth and continuous} functions, this motivate us to pick the Bastiani smooth manifold structure. 

\Cref{section_observables} focuses on the definition of observables, their support, and the introduction of various classes of observables depending on their regularity. In particular two of this class admit a \textit{ultralocal} characterization, \textit{i.e.} local in the sense of the manifold structure of the space of sections. In the end we introduce the notion of \textit{generalized Lagrangian}, essentially showing that each Lagrangian in the standard geometric approach is a Lagrangian in the algebraic approach as well. We then discuss how linearized field equations are derived from generalized Lagrangians. 

In \Cref{section_peierls_bracket} we show the existence of the causal propagator which in turn is used in Definition \ref{def_1_Peierls} to define the Poisson bracket on the class of \textit{microlocal functionals}. Then we enlarge the domain of the bracket to the so-called \textit{microcausal functionals}, defined by requiring a specific form of the wave front set of their derivatives. Finally Proposition 
\ref{prop_1_Peierls_1}, Theorems \ref{thm_1_mucaus_1}, \ref{thm_1_peierls_closedness} and \ref{thm_1_jacobi} establish the Poisson $*$-algebra of microcausal functionals. 

Finally, we collect in \Cref{section_properties_of_muc_functionals} a series of results that culminate in Theorem \ref{thm_1_mucaus_top}, which establishes that microcausal functionals can be given the topology of a nuclear locally convex space. Furthermore Propositions \ref{prop_1_C-infty_ring} and \ref{prop_1_mu_caus_top_prop} give additional properties concerning this space and its topology. We conclude the section by defining the on-shell ideal with respect to the Lagrangian generating the Peierls bracket and the associated Poisson $*$-algebraic ideal. 

Eventually in \Cref{section_examples} we briefly show how to adapt the previous results to the case of wave maps and discuss the case of scalar field theories described in \cite{acftstructure}. The latter setting will be crucial for describing Wick powers and time-ordered products in Chapters \ref{chapter_Wick}, \ref{chapter_TO}.

\section{Topologies on the space of sections, manifolds of mappings}\label{section_topology_map_manifold}

Let $M$, $N$ be finite dimensional paracompact Hausdorff topological spaces, denote the space of continuous functions by $C(M,N)$. The \textit{compact open} topology $\tau_{CO}$ or CO-topology is the topology generated by a basis whose elements have the form 
\begin{equation}\label{eq_1_CO_open}
    N(K,V)=\{ \varphi \in C(M,N) : \varphi(K)\subset V\}, 
\end{equation}
where $K\subset M$ is a compact subset and $V\subset N $ is open. Roughly speaking, this topology controls the behaviour of functions only on small regions of $M$, whereas their behaviour "at infinity" is not specified.

\begin{lemma}\label{lemma_1_CO_is_Hausdorff}
    Let $M$, $N$ as described above, if $N$ is normal, then $\big( C(M,N), \tau_{CO}\big)$ is Hausdorff. 
\end{lemma}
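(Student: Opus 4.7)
The plan is to exhibit, for any two distinct $\varphi_1,\varphi_2\in C(M,N)$, a pair of disjoint basic CO-open neighborhoods separating them. The key observation is that the basis sets $N(K,V)$ defined in \eqref{eq_1_CO_open} already have enough flexibility: once we localize at a single point where $\varphi_1$ and $\varphi_2$ differ and separate their images in $N$, disjointness of the target open sets propagates automatically to disjointness of the associated $N(K,V)$'s.

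First I would fix distinct $\varphi_1\neq \varphi_2$ and choose $x\in M$ such that $y_1\doteq\varphi_1(x)\neq y_2\doteq\varphi_2(x)$. Since any normal space is in particular Hausdorff (indeed, under the usual convention normality includes the $T_1$ axiom, so points are closed and can be separated by disjoint open sets via Urysohn-type separation; in the weaker convention one simply invokes that the hypothesis is strong enough to separate the two singletons), there exist disjoint open subsets $V_1,V_2\subset N$ with $y_i\in V_i$. Taking the compact set $K=\{x\}\subset M$ (which is compact because $M$ is Hausdorff, so singletons are closed and trivially compact), one forms the basic CO-open neighborhoods $N(\{x\},V_1)$ and $N(\{x\},V_2)$.

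By construction $\varphi_i\in N(\{x\},V_i)$ for $i=1,2$. The final step is to verify disjointness: any $\psi\in N(\{x\},V_1)\cap N(\{x\},V_2)$ would satisfy $\psi(x)\in V_1\cap V_2=\emptyset$, a contradiction. Hence the two CO-open neighborhoods are disjoint, proving that $(C(M,N),\tau_{CO})$ is Hausdorff.

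There is no serious obstacle here: the argument is a one-point localization plus separation in $N$, and the only point that warrants a brief comment is the convention adopted for ``normal'', since the full strength of normality is not actually used—Hausdorffness of $N$ suffices. I would mention this parenthetically so that the hypothesis is not misleading.
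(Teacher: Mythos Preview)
Your argument is correct and in fact cleaner than the one in the paper. The paper proceeds by first finding, via continuity, a relatively compact open neighborhood $U_x$ of the distinguishing point on whose closure $\varphi$ and $\psi$ still differ, then uses that $\varphi(\overline{U_x})$ and $\psi(\overline{U_x})$ are compact (hence closed) in $N$ and invokes normality of $N$ to separate these two closed sets by disjoint opens $V_\varphi, V_\psi$, concluding with the disjoint neighborhoods $N(\overline{U_x},V_\varphi)$ and $N(\overline{U_x},V_\psi)$. Your one-point localization sidesteps all of this: with $K=\{x\}$ you only need to separate the two \emph{points} $\varphi_1(x),\varphi_2(x)$, so Hausdorffness of $N$ suffices, as you correctly observe. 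The paper's route genuinely uses normality (separation of disjoint closed sets rather than points), at the cost of an extra continuity/compactness step and the implicit use of local compactness of $M$; your route is shorter and shows the hypothesis can be weakened. Your parenthetical remark on the convention for ``normal'' is appropriate and worth keeping.
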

\begin{proof}
Supposing $\varphi \neq \psi$, then at least $\varphi(x)\neq \psi(x) $ for some $x\in M$. By continuity of $\varphi$, $\psi$ there exists an open subset $U_x$ such that $\varphi(y)\neq \psi(y)$ for each $y\in \bar{U}_x$. Without loss of generality we can suppose that $\overline{U_x}$ is compact, then $\varphi(\overline{U_x})$, $\psi(\overline{U_x})$ are compact and therefore closed. Since $N$ is normal, there are disjoint open subsets $V_{\varphi}$, $V_{\psi}$ respectively containing $\varphi(\overline{U_x})$, $\psi(\overline{U_x})$, then
$$
    N\big(U_x,V_{\varphi} \big)\cap N\big(U_x,V_{\psi} \big)=\emptyset.
$$
\end{proof}

\begin{lemma}\label{lemma_1_CO_mertizability}
Let $M$, $N$ be topological spaces, if $N$ is a complete metric space then $\big( C(M,N), \tau_{CO}\big)$ is a complete metric space as well.
\end{lemma}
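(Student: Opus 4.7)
The plan is to build an explicit complete metric on $C(M,N)$ and to check that it generates $\tau_{CO}$. Since $M$ is paracompact, Hausdorff and finite-dimensional, it is hemicompact, hence admits a countable exhaustion $K_1\subset K_2 \subset \cdots$ by compact subsets satisfying $K_n\subset\mathrm{int}(K_{n+1})$, $\bigcup_n K_n=M$, and such that every compact subset of $M$ is contained in some $K_n$. Let $d_N$ denote a complete metric on $N$. For each $n$ define the bounded pseudo-metric
$$
d_n(\varphi,\psi)\doteq \sup_{x\in K_n}\min\{d_N(\varphi(x),\psi(x)),1\},
$$
and set
$$
d(\varphi,\psi)\doteq \sum_{n=1}^{\infty}2^{-n}\,d_n(\varphi,\psi).
$$
The series converges uniformly and $d$ satisfies the triangle inequality; separation follows because the $K_n$ cover $M$, so $d(\varphi,\psi)=0$ forces $\varphi=\psi$ pointwise.

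Next I would show that $d$ generates $\tau_{CO}$. For one inclusion, fix $\varphi\in C(M,N)$ and $\epsilon>0$, pick $m$ with $\sum_{n>m}2^{-n}<\epsilon/2$, and for each $n\le m$ cover $K_n$ by finitely many compact subsets $K_n^{(i)}$ whose $\varphi$-images sit inside $d_N$-balls $V_n^{(i)}$ of radius $\epsilon/4$; the finite intersection $\bigcap_{n\le m,\,i} N(K_n^{(i)},V_n^{(i)})$ is a $\tau_{CO}$-neighbourhood of $\varphi$ contained in the $d$-ball of radius $\epsilon$. Conversely, given a subbasic set $N(K,V)$ and $\varphi\in N(K,V)$, choose $n$ with $K\subset K_n$; since $\varphi(K)$ is compact in the open set $V$, its $d_N$-distance $\delta$ to $N\setminus V$ is strictly positive, and any $\psi$ with $d(\varphi,\psi)<2^{-n}\min\{\delta,1\}$ satisfies $d_N(\varphi(x),\psi(x))<\delta$ on $K$, hence lies in $N(K,V)$.

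Finally, for completeness, a $d$-Cauchy sequence $\{\varphi_k\}$ is $d_n$-Cauchy for every $n$, so it is uniformly Cauchy on each compact $K_n$ with respect to the truncated metric $\min\{d_N,1\}$, and therefore (since $d_N$ is a complete metric inducing the same topology as its truncation on bounded sets) it converges uniformly on each $K_n$ to some limit $\varphi$. Uniform limits of continuous maps on compact sets are continuous, and since the $K_n$'s fundamental system of compacts covers $M$, the limit $\varphi$ is continuous on all of $M$; convergence in $d$ then follows from dominated convergence applied to the defining series. The main obstacle is the compatibility step: turning a supremum‑norm estimate into finitely many compact‑open conditions requires the careful finite‑subcover argument on each $K_n$, after which completeness and the Hausdorff property are essentially automatic consequences of the hemicompact exhaustion and Lemma~\ref{lemma_1_CO_is_Hausdorff}.
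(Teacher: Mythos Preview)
The paper states this lemma without proof, so there is no argument to compare against directly. Your approach---defining a metric via a weighted sum of truncated sup-metrics over a compact exhaustion and checking that it generates $\tau_{CO}$ and is complete---is the standard one and is essentially correct.

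One point worth making explicit: the lemma as literally stated (``let $M$, $N$ be topological spaces'') is false without further hypotheses on $M$; for instance, if $M$ is an uncountable discrete space then $C(M,N)$ with $\tau_{CO}$ is a product of uncountably many copies of $N$ and is not metrizable. You correctly invoke the standing assumption from the beginning of the section that $M$ is finite-dimensional, paracompact and Hausdorff; in the paper's manifold context this yields local compactness and second countability, hence hemicompactness, which is what your exhaustion argument actually needs. It would strengthen the write-up to state this dependence explicitly rather than leaving it to the reader to reconcile the lemma's bare hypotheses with the proof. A minor technical remark: in the completeness step, continuity of the limit on all of $M$ follows because $K_n\subset\mathrm{int}(K_{n+1})$, so the interiors form an open cover on which $\varphi$ is continuous; your phrasing ``uniform limits of continuous maps on compact sets are continuous'' is correct but the passage from continuity on each $K_n$ to global continuity uses that the exhaustion has this nesting property.
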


If $M$ is compact and $(N,d_N)$ is metric, then a neighborhood of $\varphi$ in the compact-open topology can be given as 
$$
    B_{\epsilon}(\varphi)\doteq\{ \psi \in C(M,N): d_N\big(\varphi(x),\psi(x)\big)<\epsilon(x) \forall x\in M\big\},
$$
where $\epsilon:M \to \mathbb{R}_+$ is a continuous function.

\begin{lemma}
Let $M$, $N$, $L$ be topological spaces with $M$ locally compact and Hausdorff, then 
$$
    C(M\times N, L) \simeq C\big( M, C(N,L)\big)
$$
where $C(N,L)$ posses the CO-topology.
\end{lemma}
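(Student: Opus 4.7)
The plan is to realise the isomorphism as the currying bijection
$$
\Phi\colon C(M\times N,L)\longrightarrow C\bigl(M,C(N,L)\bigr),\qquad \Phi(f)(m)(n)\doteq f(m,n),
$$
with candidate inverse $\Psi\colon g\mapsto \bigl((m,n)\mapsto g(m)(n)\bigr)$, and then to verify that $\Phi$ and $\Psi$ land in the correct function spaces, that they are mutually inverse as set maps, and that both are continuous in the CO-topologies.

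First I would verify that $\Phi$ is well defined. For fixed $m$, the slice $n\mapsto f(m,n)$ is continuous as the composition of $f$ with the inclusion $N\hookrightarrow M\times N$, so $\Phi(f)(m)\in C(N,L)$. Continuity of $\Phi(f)$ as a map into $C(N,L)$ is tested on a sub-basic neighbourhood $N(K',V)$ of $\Phi(f)(m_0)$, with $K'\subset N$ compact and $V\subset L$ open: from $\{m_0\}\times K'\subset f^{-1}(V)$ the tube lemma furnishes an open $U\ni m_0$ with $f(U\times K')\subset V$, i.e.\ $\Phi(f)(U)\subset N(K',V)$. No local compactness is needed in this step.

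The central step is checking that $\Psi(g)\colon M\times N\to L$ is jointly continuous, and this is where the local compactness of $M$ is exploited. Given $(m_0,n_0)$ and an open neighbourhood $V\ni g(m_0)(n_0)$, continuity of $g$ tested against the sub-basic open $N(\{n_0\},V)\ni g(m_0)$ yields an open $U_M\ni m_0$ with $g(u)(n_0)\in V$ for all $u\in U_M$. To upgrade this slice-wise control to control on a full box neighbourhood $U_M\times U_N$, I shrink $U_M$ to have compact closure $K_M\subset M$ using local compactness of $M$; a pointwise-to-uniform argument built on continuity of the slices $g(u)$ at $n_0$, together with the compactness of $g(K_M)\subset C(N,L)$, then produces an open $U_N\ni n_0$ such that $\Psi(g)(U_M\times U_N)\subset V$. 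The identities $\Phi\circ\Psi=\mathrm{id}$ and $\Psi\circ\Phi=\mathrm{id}$ are then immediate from the definitions.

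Finally, continuity of $\Phi$ follows because the preimage of a sub-basic open $N\bigl(K,N(K',V)\bigr)\subset C\bigl(M,C(N,L)\bigr)$, with $K\subset M$ and $K'\subset N$ compact, is exactly the sub-basic open $N(K\times K',V)$ of the CO-topology on $C(M\times N,L)$. For continuity of $\Psi$, given a CO-basic open $N(C,V)$ with $C\subset M\times N$ compact, I would cover the compact projection $\pi_M(C)$ by finitely many relatively compact opens $U_i\subset M$ (using local compactness), decompose $C$ into finitely many compact rectangles $K_i\times K'_i$ contained in the corresponding $U_i\times N$, and realise the preimage as a finite intersection of sub-basic opens $N\bigl(K_i,N(K'_i,V)\bigr)$ in $C\bigl(M,C(N,L)\bigr)$. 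The main obstacle is the third paragraph: separate continuity of $\Psi(g)$ in each variable is immediate, while the real content of the exponential law lies in elevating this to joint continuity, and it is precisely here that the local-compactness hypothesis on $M$ becomes indispensable.
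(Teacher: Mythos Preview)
The paper states this lemma without proof, so there is no reference argument to compare against. Your outline is the standard currying approach, and the parts concerning $\Phi$ (well-definedness, continuity, and the identification of $\Phi^{-1}(N(K,N(K',V)))$ with $N(K\times K',V)$) are correct. The genuine gap is in your second paragraph, where you assert that compactness of $g(K_M)\subset C(N,L)$ together with slice-wise continuity ``produces an open $U_N\ni n_0$'' with $\Psi(g)(U_M\times U_N)\subset V$. This step cannot be carried out with only $M$ locally compact: what you actually need is equicontinuity of the family $\{g(u):u\in K_M\}$ at $n_0$, and compactness in the CO-topology does not deliver equicontinuity unless the \emph{domain} $N$ is locally compact (this is precisely the Ascoli direction that fails otherwise). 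There is no mechanism for passing from the individual open sets $g(u)^{-1}(V)\ni n_0$ to a common neighbourhood $U_N$ using only compactness of the index set in $C(N,L)$.

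In fact, the standard exponential law for the compact-open topology requires local compactness of the variable that becomes the domain of the \emph{inner} mapping space---here $N$, not $M$---because that is exactly what makes the evaluation map $\mathrm{ev}\colon C(N,L)\times N\to L$ continuous, after which joint continuity of $\Psi(g)=\mathrm{ev}\circ(g\times\mathrm{id}_N)$ is immediate. With the hypothesis placed on $M$ as stated, $\Phi$ remains a continuous injection but surjectivity can fail. The lemma as printed thus appears to carry a misplaced hypothesis (or should be read as $C(N\times M,L)\simeq C(N,C(M,L))$, which does follow from $M$ locally compact). Under the corrected assumption that $N$ is locally compact Hausdorff, your argument becomes the textbook proof, with the problematic ``pointwise-to-uniform'' step replaced by the one-line observation above.
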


Given $\varphi\in C(M,N)$, let $G_{\varphi}:M\to M\times N$ be the graph mapping associated to $\varphi$, set $\mathrm{Im}(G_{\varphi})\equiv \mathrm{gh}(\varphi)=\{(x,\varphi(x))\in M\times N: x\in M\}$. 

\begin{definition}\label{def_1_WO-top}
    The \textit{wholly open topology} $\tau_{\mathrm{WO}}$ or $\mathrm{WO}$-topology on $C(M,N)$ is generated by a subbasis of open subsets of the form 
    \begin{equation}\label{eq_1_WO-open}
        W(V)=\{ \varphi \in C(M,N): \varphi(M)\subset V\},        
    \end{equation}
    where $V\subseteq N$ is open.
\end{definition}

Note that the $\mathrm{WO}$-topology is not Hausdorff, for it cannot separate surjective functions.

\begin{definition}\label{def_1_WO^0-top}
    The \textit{graph topology} $\tau_{\mathrm{WO}^0}$ or ${\mathrm{WO}}^0$-topology on $C^{\infty}(M,N)$ is the one induced by requiring 
    $$
        G:C(M,N) \ni \varphi \mapsto G_{\varphi} \in \big( C(M,M\times N),\tau_{\mathrm{WO}}\big)
    $$
    to be an embedding.
\end{definition}

By Definition \ref{def_1_WO-top} the open subbasis of $C(M,M\times N)$ is given by subsets of the form
$$
    W(\widetilde{V}) = \{ f\in C(M,M\times N): f(M)\subset \widetilde{V}\}
$$
with $\widetilde{V}\subset M\times  N$ open subsets. When $f=G_{\varphi}$ for some $\varphi\in C(M,N)$, then the trace topology on the subset $G\big(C(M,N)\big)$ is generated by a subbasis of elements $W(\widetilde{V})$ where $\widetilde{V}=M\times V$ with $V\subset N$ open subset. Clearly $G$ is an injective mapping, and bijective onto its image. Therefore a subbasis for the $\mathrm{WO}^0$-topology is given by 
\begin{equation}\label{eq_1_WO^0-open}
    W(\widetilde{V})= \{ \varphi \in C(M,N) : G_{\varphi}\subset M\times V\}
\end{equation}
\begin{lemma}
    The $\mathrm{WO}^0$-topology is finer then the CO-topology and is therefore Hausdorff. 
\end{lemma}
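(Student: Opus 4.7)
The plan is to show that every subbasic open set of the CO-topology can be written as a (single) subbasic open set of the $\mathrm{WO}^0$-topology, so that $\tau_{CO}\subset \tau_{\mathrm{WO}^0}$. Hausdorffness of the finer topology then follows automatically from Lemma~\ref{lemma_1_CO_is_Hausdorff} (which applies because, in the setting of this section, $N$ is paracompact Hausdorff and, whenever needed, normal), since any topology finer than a Hausdorff one is Hausdorff.

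Concretely, fix a subbasic element $N(K,V)=\{\varphi\in C(M,N):\varphi(K)\subset V\}$ of $\tau_{CO}$, with $K\subset M$ compact and $V\subset N$ open. Since $M$ is Hausdorff, $K$ is closed, hence $M\setminus K$ is open. Define
$$
\widetilde V \;\doteq\; \bigl((M\setminus K)\times N\bigr)\,\cup\,\bigl(M\times V\bigr)\ \subset\ M\times N,
$$
which is open as a union of two open rectangles. I would then verify the set equality $W(\widetilde V)=N(K,V)$ directly from the definitions: a point $(x,\varphi(x))\in \mathrm{gh}(\varphi)$ lies in $\widetilde V$ automatically whenever $x\notin K$, while for $x\in K$ the first rectangle cannot contain $(x,\varphi(x))$, so membership in $\widetilde V$ forces $\varphi(x)\in V$. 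Hence $\mathrm{gh}(\varphi)\subset \widetilde V$ iff $\varphi(K)\subset V$, which is precisely the desired equality, exhibiting $N(K,V)$ as a subbasic open set of $\tau_{\mathrm{WO}^0}$ via \eqref{eq_1_WO^0-open}.

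This proves $\tau_{CO}\subset \tau_{\mathrm{WO}^0}$. For the Hausdorff conclusion I would simply invoke Lemma~\ref{lemma_1_CO_is_Hausdorff}: two distinct $\varphi,\psi\in C(M,N)$ are separated by disjoint sets in $\tau_{CO}$, and those same sets belong to $\tau_{\mathrm{WO}^0}$. I do not anticipate any real obstacle: the only subtlety worth flagging is the use of the Hausdorff property of $M$ to ensure that $M\setminus K$ is open, which is what makes the candidate $\widetilde V$ open in $M\times N$; without it the construction above would fail and one would need to approximate $N(K,V)$ by intersections of $\mathrm{WO}^0$-open sets instead.
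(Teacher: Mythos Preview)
Your proof is correct and follows essentially the same approach as the paper: both construct an explicit open set $\widetilde V\subset M\times N$ with $W(\widetilde V)=N(K,V)$, the paper writing it as $U_1\times V\cup (M\setminus K)\times N$ with $K\subset U_1$ open, while you simply take $U_1=M$. The Hausdorff conclusion is drawn identically in both by appeal to Lemma~\ref{lemma_1_CO_is_Hausdorff}.
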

\begin{proof}
We show that $\mathrm{id}_{C(M,N)}: \big( C(M,N),\tau_{\mathrm{WO}^0}\big) \to \big( C(M,N),\tau_{\mathrm{CO}}\big)$ is continuous. Let $N(K,V)$ be an open subset as in \eqref{eq_1_CO_open}, $U_1$, $U_2$ be a cover of $M$ such that $K\subset U_1$ and $U_2 =M\backslash K$. Consider the open subset
$$
    W(U_1\times V \cup U_2 \times N)=\{ \varphi \in C(M,N) : G_{\varphi}\subset U_1\times V \cup U_2 \times N\};
$$
the former is a $\mathrm{WO}^0$-open subset, which is however equal to $N(K,V)$.
\end{proof}

We stress that when the manifold $M$ is compact, then the graph and compact-open topology are equivalent, whereas in full generality the former is finer. The main difference from the compact open topology is that the graph topology does control the behaviour of a mapping over the whole space, while the former was limited to a compact region.

\begin{lemma}\label{lemma_1_WO^0-metric-subbasis}
    Let $M$ be paracompact and $(N,d)$ be a metric space, then a basis of neighborhood of $\varphi\in C(M,N)$ for the $\mathrm{WO}^0$-topology is given by 
    \begin{equation}
        W_{\varphi}(\epsilon)=\big\{ \psi \in C(M,N) : d\big(\varphi(x),\psi(x)\big)<\epsilon(x) \ \forall x\in M\big\},
    \end{equation}
    where $\epsilon:M \to \mathbb{R}_+$ is continuous.
\end{lemma}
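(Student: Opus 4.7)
The plan is to verify two claims: first, that every $W_\varphi(\epsilon)$ is indeed $\tau_{\mathrm{WO}^0}$-open and contains $\varphi$; second, that every basic $\tau_{\mathrm{WO}^0}$-neighborhood of $\varphi$ contains some $W_\varphi(\epsilon)$. The first is the easy direction: given a continuous $\epsilon:M\to\mathbb{R}_+$, I would consider
$$\widetilde V_{\varphi,\epsilon}\doteq\{(x,y)\in M\times N:d(y,\varphi(x))<\epsilon(x)\},$$
which is open in $M\times N$ as the preimage of $(-\infty,0)$ under the continuous function $(x,y)\mapsto d(y,\varphi(x))-\epsilon(x)$. By \eqref{eq_1_WO^0-open}, $W(\widetilde V_{\varphi,\epsilon})=W_\varphi(\epsilon)$ is then $\mathrm{WO}^0$-open, and trivially contains $\varphi$ since $d(\varphi(x),\varphi(x))=0<\epsilon(x)$.

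For the second claim, since any finite intersection $\bigcap_i W(\widetilde V_i)$ equals $W\big(\bigcap_i\widetilde V_i\big)$, I may assume the given basic neighborhood of $\varphi$ has the form $W(\widetilde V)$ for a single open $\widetilde V\subset M\times N$ containing $\mathrm{gh}(\varphi)$. The task is then to build a continuous $\epsilon:M\to\mathbb{R}_+$ such that $\{x\}\times B_{\epsilon(x)}(\varphi(x))\subset\widetilde V$ for every $x\in M$. Locally this is routine: for each $x_0\in M$, pick a basic neighborhood $U_{x_0}'\times V_{x_0}\subset\widetilde V$ of $(x_0,\varphi(x_0))$, choose $2\delta_{x_0}>0$ with $\overline{B_{2\delta_{x_0}}(\varphi(x_0))}\subset V_{x_0}$, then shrink $U_{x_0}'$ to an open neighborhood $U_{x_0}\ni x_0$ on which $\varphi(U_{x_0})\subset B_{\delta_{x_0}}(\varphi(x_0))$. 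The triangle inequality then gives $\{y\}\times B_{\delta_{x_0}}(\varphi(y))\subset U_{x_0}\times V_{x_0}\subset\widetilde V$ for every $y\in U_{x_0}$.

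The main obstacle --- and the reason paracompactness enters --- is gluing these local constants into a single continuous positive function. My plan is to extract from $\{U_{x_0}\}_{x_0\in M}$ a locally finite open refinement $\{U_\alpha\}_{\alpha\in A}$ with associated radii $\delta_\alpha>0$ inherited from the parent $U_{x_\alpha}$, pick a subordinate continuous partition of unity $\{\chi_\alpha\}_{\alpha\in A}$, and set
$$\epsilon(x)\doteq\sum_{\alpha\in A}\chi_\alpha(x)\,\delta_\alpha.$$
The sum is locally finite, hence $\epsilon$ is continuous, and strictly positive since $\sum_\alpha\chi_\alpha\equiv 1$. The crucial verification is that this $\epsilon$ really witnesses inclusion into $\widetilde V$: at any $x\in M$, letting $\alpha^\ast$ achieve $\max\{\delta_\alpha:\chi_\alpha(x)>0\}$, convexity yields $\epsilon(x)\leq\delta_{\alpha^\ast}$, so every $\psi\in W_\varphi(\epsilon)$ satisfies $\psi(x)\in B_{\delta_{\alpha^\ast}}(\varphi(x))$, and since $x\in U_{\alpha^\ast}$ the construction of $\delta_{\alpha^\ast}$ forces $(x,\psi(x))\in\widetilde V$. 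Thus $\mathrm{gh}(\psi)\subset\widetilde V$, i.e.\ $\psi\in W(\widetilde V)$, completing the argument.
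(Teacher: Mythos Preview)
The paper states this lemma without proof, so there is no argument to compare against. Your proof is correct and follows the standard route: the first direction is immediate from the subbasis description of the $\mathrm{WO}^0$-topology, and the second direction is precisely where paracompactness (together with the ambient Hausdorff assumption on $M$ from the section's setup) is needed to glue the local radii via a partition of unity. One minor remark: your reduction ``any finite intersection $\bigcap_i W(\widetilde V_i)$ equals $W(\bigcap_i\widetilde V_i)$'' is exactly what makes the subbasic sets $W(\widetilde V)$ already a basis, so starting from a single $\widetilde V$ is fully justified.
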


\begin{proposition}\label{prop_1_WO^0-convergence}
    Let $M$ be paracompact and $(N,d)$ be a metric space then, for any sequence $\{\varphi_n\}\subset C(M,N)$, the following are equivalent:
    \begin{itemize}
        \item[$(i)$] $\varphi_n \to \varphi$ in the $\mathrm{WO}^0$-topology;
        \item[$(ii)$] there exists a compact set $K\in M$ such that $\varphi_n\big\vert_{M\backslash K}\equiv \varphi\big|_{M\backslash K}$ for each $n\in \mathbb{N}$, and $\varphi_n \to \varphi$ uniformly on $K$.
    \end{itemize}
\end{proposition}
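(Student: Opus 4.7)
My plan is to handle the two implications separately, with the easy direction first.

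For $(ii)\Rightarrow(i)$, I would invoke Lemma \ref{lemma_1_WO^0-metric-subbasis}, which reduces membership in any $\mathrm{WO}^0$-neighborhood of $\varphi$ to an estimate of the form $d(\psi(x),\varphi(x))<\epsilon(x)$ for a continuous $\epsilon:M\to\mathbb R_+$. Outside the common compact $K$ the hypothesis $\varphi_n=\varphi$ makes the estimate trivial, whereas on $K$ the number $\epsilon_0\doteq\min_{K}\epsilon$ is strictly positive, so uniform convergence on $K$ forces $d(\varphi_n(x),\varphi(x))<\epsilon_0\leq\epsilon(x)$ there for all sufficiently large $n$. Hence $\varphi_n\in W_\varphi(\epsilon)$ eventually.

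For $(i)\Rightarrow(ii)$ I would argue by contradiction, in two stages. First, I claim the existence of a common compact $K$ outside which $\varphi_n$ and $\varphi$ agree (for $n$ large). Otherwise, using that a paracompact Hausdorff manifold admits an exhaustion by compacts $K_1\subset\mathrm{int}(K_2)\subset K_2\subset\cdots$, I can pick indices $n_j\to\infty$ and points $x_j\notin K_j$ with $\delta_j\doteq d(\varphi_{n_j}(x_j),\varphi(x_j))>0$. The $x_j$ have no accumulation in $M$, so $\{x_j\}$ is closed and discrete. Second, once $K$ is found, if uniform convergence on $K$ failed we could produce a subsequence and points $x_k\in K$ with $d(\varphi_{n_k}(x_k),\varphi(x_k))\geq\epsilon_0$; then the constant choice $\epsilon\equiv\epsilon_0/2$ yields a neighborhood $W_\varphi(\epsilon)$ avoided by cofinally many terms, contradicting $(i)$.

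The technical heart of the first stage, and the step I expect to be the main obstacle, is the construction of a continuous $\epsilon:M\to\mathbb R_+$ that prescribes $\epsilon(x_j)\leq\delta_j/2$ on the escaping discrete set $\{x_j\}$. The plan is to exploit local compactness and paracompactness: around each $x_j$ pick an open neighborhood $U_j$ with compact closure, shrunk enough to be pairwise disjoint and to avoid every $x_k$ with $k\neq j$; together with $U_\infty\doteq M\setminus\{x_j\}_{j\in\mathbb N}$ (open, because $\{x_j\}$ is closed discrete) these form an open cover of $M$. A subordinate partition of unity $\{\chi_j\}\cup\{\chi_\infty\}$ then lets me define
\[
\epsilon(x)\doteq\chi_\infty(x)+\sum_{j}\tfrac{\delta_j}{2}\,\chi_j(x),
\]
which is continuous, strictly positive, and satisfies $\epsilon(x_j)=\delta_j/2<\delta_j$. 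Consequently $\varphi_{n_j}\notin W_\varphi(\epsilon)$ for every $j$, contradicting $(i)$ since $n_j\to\infty$. Assembling the two contradictions yields the implication.
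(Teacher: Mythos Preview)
Your argument is correct. Both implications are handled properly, and the partition-of-unity construction of the single continuous function $\epsilon$ with $\epsilon(x_j)=\delta_j/2$ is exactly the right tool to produce one fixed $\mathrm{WO}^0$-neighborhood $W_\varphi(\epsilon)$ missed by infinitely many terms of the sequence.

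The overall strategy matches the paper's---contradiction via an escaping sequence $x_j\notin K_j$ and a suitable $\epsilon$-neighborhood---but your execution differs in the key technical step. The paper attempts to work with a \emph{sequence} of constant functions $\epsilon_n=\sup_{K_n}d(\varphi_n,\varphi)$ and neighborhoods $W_\varphi(\epsilon_n)$, which as written does not directly yield a contradiction (a different neighborhood for each $n$ does not violate convergence, and there is no obvious relation between $d(\varphi_n(x_n),\varphi(x_n))$ at $x_n\notin K_n$ and the supremum over $K_n$). Your construction of a \emph{single} $\epsilon$ via a partition of unity subordinate to $\{U_j\}\cup\{U_\infty\}$ is the cleaner and standard way to make the argument airtight: it produces one fixed neighborhood avoided by the entire subsequence $\varphi_{n_j}$. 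For the uniform-convergence part, the paper reduces to the $\mathrm{CO}$-topology (since $\mathrm{WO}^0$ is finer), whereas you argue directly with a constant $\epsilon$; both are fine.
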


Notice that, due to Proposition \ref{prop_1_WO^0-convergence}, the space $C(M,E)$ with $E$ vector space is not a topological vector space, in particular the multiplication mapping cannot be continuous since if $\lambda\in \mathbb R $ goes to $0$, then $\lambda\cdot f \not\to 0$ unless $f=0$ outside some compact subset of $M$.

\begin{proof}
Suppose $\varphi_n \to \varphi $ in the $\mathrm{WO}^0$-topology, however, for all $K\subset M$ compact, either $\varphi_n \not\rightarrow \varphi$ uniformly over $K$ or there is $x\in M\backslash K$ such that $\varphi_n(x) \neq \varphi(x)$. In the first case $\varphi_n \not\rightarrow \varphi$ in the CO-topology as well, contradicting the initial hypothesis. In the second case, let $\{K_n\}$ be an exhaustion of compact subsets of $M$, then for each $n\in \mathbb{N}$ there is $x_n \in M\backslash K_n$ having $\varphi_n(x_n)\neq \varphi(x_n)$. Set $0<\epsilon_n =\sup_{K_n}d(\varphi_n(x),\varphi(x))$. For each $n$ and consider the sequence of open neighborhoods of $\varphi$, $W_{\varphi}(\epsilon_n)$ as per Lemma \ref{lemma_1_WO^0-metric-subbasis}, by construction $\varphi_n \notin W_{\varphi}(\epsilon_n)$ which contradicts the convergence hypothesis. On the other hand let $\epsilon_n:M \to \mathbb{R}_+$ be the constant functions with $\epsilon_n= \sup_{x\in K}d(\varphi_n(x),\varphi(x))$, then $W_{\varphi}(\epsilon_{n_0})\cap \{ \varphi_n\}=\{\varphi\}_{n>n_0}$ by uniform convergence over $K$. Implying $\varphi_n\to \varphi$ in $\tau_{\mathrm{WO}^0}$.
\end{proof}

\begin{corollary} \label{coro_1_WO^0-curves}
    Let $M$ and $N$ as in Proposition \ref{prop_1_WO^0-convergence} and $\gamma:I\subset \mathbb{R}\to \big(C(M,N),\tau_{\mathrm{WO}^0}\big)$ be a continuous mapping with $I$ compact. Then there exists a compact $K\subset M$ such that 
    $$
        \gamma(t) : x \in  M \to N
    $$
    is constant in $M\backslash K$ for each $t\in I$.
\end{corollary}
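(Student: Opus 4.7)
The plan is to reduce the global statement to a local one via compactness of $I$, and then to conclude using connectedness. First I would establish the following local lemma: for each $t_0 \in I$ there exist an open neighborhood $U_{t_0}\subset I$ of $t_0$ and a compact $K_{t_0}\subset M$ such that $\gamma(s)\big|_{M\backslash K_{t_0}} = \gamma(t_0)\big|_{M\backslash K_{t_0}}$ for every $s\in U_{t_0}$. I would prove this by contradiction: if it fails for some $t_0$, then by first countability of $I\subset \mathbb{R}$ (pick a decreasing countable neighborhood basis $\{V_n\}$ of $t_0$) together with an exhaustion of $M$ by compacts $\{L_n\}$, one can extract sequences $s_n\in V_n$ and $x_n\in M\backslash L_n$ with $\gamma(s_n)(x_n)\neq \gamma(t_0)(x_n)$. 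Since $s_n\to t_0$ and $\gamma$ is continuous, $\gamma(s_n)\to \gamma(t_0)$ in $\tau_{\mathrm{WO}^0}$. Proposition~\ref{prop_1_WO^0-convergence} then delivers a single compact $K^{\ast}\subset M$ such that $\gamma(s_n)\big|_{M\backslash K^{\ast}} \equiv \gamma(t_0)\big|_{M\backslash K^{\ast}}$ for every $n\in\mathbb{N}$. Since $x_n$ eventually escapes $K^{\ast}$, this forces $\gamma(s_{n_0})(x_{n_0}) = \gamma(t_0)(x_{n_0})$ for large $n_0$, contradicting the choice of $x_n$.

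Once the local lemma is in hand, $\{U_t\}_{t\in I}$ is an open cover of $I$, and compactness yields a finite subcover $U_{t_1},\ldots,U_{t_m}$. Setting $K \doteq K_{t_1}\cup \cdots \cup K_{t_m}$, which is still compact in $M$, for every $x\in M\backslash K$ and every $t\in I$ one picks $j$ with $t\in U_{t_j}$ and obtains $\gamma(t)(x)=\gamma(t_j)(x)$ because $x\notin K_{t_j}$. Hence the map $h_x:I\to N$, $t\mapsto \gamma(t)(x)$, is locally constant; since $I\subset \mathbb{R}$ is a compact interval and therefore connected, $h_x$ is constant on $I$, which is precisely the content of the corollary (with the conclusion read as: $\gamma(t)(x)$ does not depend on $t\in I$ for $x\in M\backslash K$, in line with Proposition~\ref{prop_1_WO^0-convergence}).

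The main obstacle is the local lemma. Proposition~\ref{prop_1_WO^0-convergence} supplies a single compact $K^{\ast}$ only along a prescribed convergent sequence, and what drives the contradiction is the simultaneous extraction of a sequence $s_n\to t_0$ together with points $x_n$ escaping every compact; first countability of $I\subset \mathbb{R}$ is the technical ingredient permitting this diagonal choice. The globalization by a finite subcover of $I$ and the upgrade from locally constant to constant via connectedness are routine by comparison.
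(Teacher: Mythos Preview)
Your proof is correct and rests on the same core mechanism as the paper's: extract a sequence $s_n\to t_0$ together with points $x_n$ escaping every compact of $M$, then invoke Proposition~\ref{prop_1_WO^0-convergence} to force $\gamma(s_n)\equiv\gamma(t_0)$ outside a fixed compact, yielding the contradiction. The paper runs this as a single global contradiction argument, whereas you organize it as a local lemma at each $t_0$ followed by a finite-cover-plus-connectedness globalization. Your structure has the advantage of making the role of connectedness of $I$ explicit (needed to pass from locally constant to constant), a point the paper's compressed argument glosses over; the paper's version is shorter but, as written, does not clearly close the gap between ``$\gamma(t_n)\equiv\gamma(t_0)$ outside $K$'' and the original inequality $\gamma(t_n)(x_n)\neq\gamma(t)(x_n)$ for a possibly different $t$.
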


\begin{proof}
We argue by contradiction, let $K_n$ be an exhaustion of compact subsets of $M$, then for each $n\in \mathbb{N}$ there is some $t_n\in I$, and some $x_n\in M\backslash K_n$ such that $\gamma(t_n)[x_n]\neq \gamma(t)[x_n]$ for at least a $t\in I$. Since $\{t_n\}$ is a sequence on a compact space we may assume, eventually passing to a subsequence, that $t_n\to t_0\in I$, by construction, $\{x_n\}$ does not admit a cluster point in $M$. Finally, by continuity, $t_n \to t_0 \implies \gamma(t_n)\to \gamma(t)$ in the $\mathrm{WO}^0$-topology, by Proposition \ref{prop_1_WO^0-convergence} there has to be a compact subset $K$ such that $\gamma(t_n)\equiv \gamma(t)$ outside $K$ thus the sequence $\{x_n\}$ admits a cluster point.
\end{proof}

From now on we assume that $M$, $N$ are smooth $m$, $n$ dimensional manifolds respectively, consider the $k$th order jet bundle $J^k(M,N)$. Recall as well the mappings $\alpha:J^k(M,N) \to M$, $\beta:J^k(M,N) \to N$.

\begin{lemma}\label{lemma_1_j^k-CO}
    Given $M$, $N$ smooth manifolds, the mapping 
    $$
        j^k: C^{\infty}(M,N) \to \big( C(M,J^k(M,N)),\tau_{CO}\big)
    $$
    is injective and has closed image.
\end{lemma}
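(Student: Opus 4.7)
The plan is to establish the two assertions separately, exploiting the structural maps $\alpha:J^k(M,N)\to M$ and $\beta:J^k(M,N)\to N$ of the jet bundle.

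For injectivity, the key identity is $\beta\circ j^k\varphi=\varphi$ for any $\varphi\in C^{\infty}(M,N)$. Post-composition with $\beta$ therefore furnishes a left inverse of $j^k$ (at the set-theoretic level), so $j^k\varphi=j^k\psi$ forces $\varphi=\psi$. This part is essentially formal.

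For the closed image, I would first characterize $\mathrm{im}(j^k)$ as the set of continuous sections $\sigma$ of $\alpha$ that are \emph{holonomic}, meaning $\sigma(x)=j^k_x(\beta\circ\sigma)$ for every $x\in M$, with smooth base $\beta\circ\sigma$. Given a CO-convergent net $j^k\varphi_\lambda\to\sigma$ in $C(M,J^k(M,N))$, pointwise limits immediately give $\alpha\circ\sigma=\mathrm{id}_M$, so $\sigma$ is a continuous section. Pulling back to fibered charts on $J^k(M,N)$ adapted to coordinate neighborhoods on $M$ and $N$, the components of $j^k\varphi_\lambda$ are precisely the partial derivatives of $\varphi_\lambda$ of orders $0,1,\ldots,k$, and CO-convergence on a compact $K$ inside such a chart reads as uniform convergence on $K$ of each of these derivatives separately. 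Iterating the classical analytic fact that uniform convergence of a sequence together with uniform convergence of its derivative yields $C^1$-regularity of the limit, with derivative equal to the limit of derivatives, one concludes that $\varphi:=\beta\circ\sigma$ is of class $C^k$ and that the remaining components of $\sigma$ coincide with $\partial^{\alpha}\varphi$ for $|\alpha|\le k$. Hence $\sigma=j^k\varphi$ on every such chart and, by uniqueness, globally; this shows that every CO-limit of elements of $\mathrm{im}(j^k)$ is again holonomic.

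The delicate point, and in my view the main obstacle, is promoting the base $\varphi$ from $C^k$ to $C^{\infty}$, since convergence of derivatives up to a single finite order does not, on its own, supply smoothness at higher orders. I would handle this by exploiting that the sequence realizing the limit is already taken inside $C^{\infty}(M,N)$ and that the lemma is used downstream in conjunction with the family $\{j^r\}_{r\in\mathbb{N}}$: a standard diagonal/projective-limit argument over $r$ shows that once holonomy at all finite orders is preserved, the base map is smooth. Equivalently, the argument above identifies $\mathrm{im}(j^k)$ with the closed subset of $C(M,J^k(M,N))$ cut out by the closed conditions ``$\sigma$ is a section'' and ``$\sigma$ is holonomic'', both of which are preserved under CO-limits by the uniform-convergence-of-derivatives argument, and this closed subset is exactly $j^k\bigl(C^{\infty}(M,N)\bigr)$ in view of the injectivity established in the first part.
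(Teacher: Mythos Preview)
Your approach to injectivity via $\beta\circ j^k=\mathrm{id}$ and your core argument for closedness---reducing to local charts and using uniform convergence of derivatives to show that a CO-limit of $k$-jets is again holonomic---are exactly what the paper does. The paper also invokes metrizability of $C(M,J^k(M,N))$ so that sequential closedness suffices, and then runs the integral formula
\[
g(x+y)=\lim_n\Bigl(g_n(x)+\int_0^1\langle dg_n(x+ty),y\rangle\,dt\Bigr)
\]
to identify the limit of $dg_n$ with $dg$, iterating up to order $k$. So on the main line, you and the paper agree.

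Where you diverge---and where the real issue lies---is precisely the point you flag as ``delicate'': promoting the limit $\varphi=\beta\circ\sigma$ from $C^k$ to $C^\infty$. Your proposed fixes do not work. The ``diagonal/projective-limit argument over $r$'' would require convergence of $j^r\varphi_n$ for all $r$, but you are only given convergence of $j^k\varphi_n$ for the single fixed $k$; there is nothing forcing higher derivatives to behave. Your alternative description of $\mathrm{im}(j^k)$ as the set of holonomic sections is correct, but that set is exactly $j^k\bigl(C^k(M,N)\bigr)$, not $j^k\bigl(C^\infty(M,N)\bigr)$, so it does not yield the stated conclusion either. In fact the claim as literally written is false: take $M=N=\mathbb{R}$, $k=1$, and approximate any $g\in C^1\setminus C^2$ by smooth $\varphi_n$ in the $C^1$-topology; then $j^1\varphi_n\to j^1g$ in CO, yet $j^1g\notin j^1(C^\infty)$. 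The paper's own proof does not address this promotion either; what the argument genuinely establishes is that $j^k\bigl(C^k(M,N)\bigr)$ is closed, and this is exactly what the subsequent Corollary uses (completeness of $(C^k(M,N),\mathrm{CO}^k)$, with $C^r$ for $r>k$ sitting as a dense, non-closed subspace). So you have spotted a real gap in the statement, not merely in your proof.
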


\begin{proof}
    Injectivity follows from the fact that $\beta \circ j^k=\mathrm{id}_{C^{\infty}(M,N)}$. Since $J^k(M,N)$ is a manifold, it is metrizable as well, thus, by Lemma \ref{lemma_1_CO_mertizability}, $C(M,J^k(M,N)$ is metrizable in the CO-topology. We can conclude if we show that given $\{j^k\varphi_n\}\subset j^k\big(C^{\infty}(M,N)\big)$ with $j^k\varphi_n \to \psi$ uniformly, $\psi=j^k\varphi$. If $\{(U_i,u_i)\}$, $\{(V_j,v_j)\}$ are a family of charts of $M$ and $N$ respectively, it is enough to show the claim locally for each $g_n= v_j\circ \varphi_n \circ u_i^{-1}$. It is clear that the claim holds for $k=0$, since by uniform convergence $j^0g_n=g_n\to g $. If $k=1$, we can assume that the open sets $U_i$ are convex, then setting $h=\lim_{n\to \infty}dg_n $ we show that $h= dg $. 
    $$
    \begin{aligned}
        g(x+y) &= \lim_{n\to \infty} \bigg( g_n(x)+ \int_0^1\langle dg_n(x+ty),y\rangle dt \bigg)\\ 
        & = g_0(x)+\int_0^1 \langle h(x+ty),y\rangle dt
    \end{aligned}
    $$
    therefore $dg=h$. Iterating this argument for all orders up to $k$ yields the desired result.
\end{proof}

In view of Lemma \ref{lemma_1_j^k-CO}, we can state the following definition.

\begin{definition}
    The $\mathrm{CO}^k$-\textit{topology} on $C^r(M,N)$ for $0\geq k\geq r\geq \infty $ is the topology induced by requiring that 
    $$
        j^k:C^r(M,N) \to \Big(C\big( M, J^k(M,N) \big),\tau_{\mathrm{CO}} \Big)
    $$
    is a topological embedding.
\end{definition}

\begin{corollary}
    The $\mathrm{CO}^k$-topology has the following properties:
    \begin{itemize}
        \item[$(i)$] $\big(C^k(M,N),\mathrm{CO}^k \big)$ is a complete metric space, moreover it coincides with the standard Fréchet vector space structure;
        \item[$(ii)$] if $k>r$, $\big(C^r(M,N),\mathrm{CO}^k \big)$ is metrizable but not complete. Its completion is $C^k(M,N)$.
    \end{itemize}
\end{corollary}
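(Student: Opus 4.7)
My plan is to deduce both parts from the embedding $j^k$ of the previous lemma into the complete metrizable space $\big(C(M,J^k(M,N)),\tau_{CO}\big)$, leveraging Lemma \ref{lemma_1_CO_mertizability} and Lemma \ref{lemma_1_j^k-CO}.

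For (i), I first note that $J^k(M,N)$ is a finite-dimensional smooth manifold, hence metrizable; fix any compatible metric $d_{J^k}$. By Lemma \ref{lemma_1_CO_mertizability}, $\big(C(M,J^k(M,N)),\tau_{CO}\big)$ is then a complete metric space. Since the $\mathrm{CO}^k$-topology is by definition the initial topology pulled back along $j^k$, metrizability of $(C^k(M,N),\mathrm{CO}^k)$ is immediate, and an explicit compatible metric can be exhibited using a compact exhaustion $\{K_n\}$ of $M$ via $d(\varphi,\psi)=\sum_n 2^{-n}\min\{1,\sup_{x\in K_n}d_{J^k}(j^k\varphi(x),j^k\psi(x))\}$. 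Completeness then follows because $j^k$ has closed image by Lemma \ref{lemma_1_j^k-CO}: any Cauchy sequence in $(C^k(M,N),\mathrm{CO}^k)$ is transported by $j^k$ to a Cauchy sequence in the closed set $j^k(C^k(M,N))$, which converges to $j^k\varphi$ for some $\varphi \in C^k(M,N)$. The coincidence with the standard Fréchet vector space structure (when $N$ is a vector space, or locally through charts of $N$) is a direct consequence of Lemma \ref{lemma_0_CO_top_is_Frechet} combined with Lemma \ref{lemma_A_Frechet_equiv_seminorms}: uniform convergence of $k$-jets on compacts is exactly uniform convergence of the function together with all covariant derivatives up to order $k$.

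For (ii), interpreting the inequality as $r>k$ (the only direction compatible with the definition of the $\mathrm{CO}^k$-topology), the same argument establishes metrizability, since the restriction of $j^k$ to $C^r(M,N)\subset C^k(M,N)$ remains injective and continuous into the metrizable codomain. To exhibit non-completeness I would construct a local counterexample: on a chart domain take a sequence of smooth compactly supported functions converging in the $C^k$ sense, but not in $C^r$, to a function of regularity exactly $C^k$ (for instance mollifications of $|x|^{k+1/2}$ multiplied by a cut-off). Transplanted via a chart and a partition of unity this gives a $\mathrm{CO}^k$-Cauchy sequence of $C^r$ maps without a $C^r$-limit. To identify the completion with $C^k(M,N)$ I need two ingredients: first, $j^k(C^k(M,N))$ is closed in $C(M,J^k(M,N))$, which is exactly Lemma \ref{lemma_1_j^k-CO}; second, $C^r(M,N)$ is $\mathrm{CO}^k$-dense in $C^k(M,N)$, which I would establish by a local mollification argument in charts together with patching via a partition of unity. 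Together these two facts imply that the closure of $j^k(C^r(M,N))$ inside $C(M,J^k(M,N))$ equals $j^k(C^k(M,N))$, and the universal property of metric completions then identifies the completion of $(C^r(M,N),\mathrm{CO}^k)$ with $(C^k(M,N),\mathrm{CO}^k)$.

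The main obstacle I anticipate is the density step in (ii). Local mollification in Euclidean charts is routine, but the image of a given $\varphi \in C^k(M,N)$ need not lie in a single chart of $N$. One must work with a cover of $M$ by open sets whose $\varphi$-images lie in charts of $N$, extract a subordinate partition of unity, mollify componentwise in each chart, and then verify that the patched $C^r$ (or even smooth) approximations converge to $\varphi$ uniformly on each compact together with all derivatives up to order $k$, i.e.\ in every $\mathrm{CO}^k$-seminorm. A secondary subtlety in (i) is the meaning of the ``Fréchet vector space structure'' when $N$ is not linear; the cleanest formulation is that in any chart of $N$ the pulled-back $\mathrm{CO}^k$-topology agrees with the classical Fréchet topology on $C^k(U,\mathbb{R}^n)$ of Example \ref{ex_A_smooth_function_frechet_space}.
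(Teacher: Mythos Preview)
The paper states this corollary without proof, so there is no argument to compare against directly; your task was to supply one, and your approach via Lemmas \ref{lemma_1_CO_mertizability} and \ref{lemma_1_j^k-CO} is the natural one and essentially correct. A few remarks: in (i) you pass from ``$J^k(M,N)$ is a finite-dimensional manifold, hence metrizable'' to invoking Lemma \ref{lemma_1_CO_mertizability}, but that lemma requires a \emph{complete} metric on the target, so you should say explicitly that any second-countable smooth manifold admits a complete metric (e.g.\ via a proper embedding into some $\mathbb{R}^N$, or a complete Riemannian metric). You also correctly flag that the ``Fr\'echet vector space structure'' clause is only literally meaningful when $N$ is a vector space; the paper is indeed sloppy here. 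Finally, your reading of (ii) with $r>k$ is the only sensible one given the definition (which itself contains the typo $0\geq k\geq r\geq\infty$), and your outline --- mollification counterexample for non-completeness, density of $C^r$ in $C^k$ in the $\mathrm{CO}^k$ topology plus closedness of $j^k(C^k(M,N))$ for the identification of the completion --- is the standard route and would go through with the chart-by-chart patching you describe.
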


\begin{definition}\label{def_1_WO^k_topology}
    The \textit{Whitney} $\mathrm{C}^k$-\textit{topology}, or $\mathrm{WO}^k$-topology, on $C^r(M,N)$ for $0\leq k\leq r \leq \infty$ is the topology induced by requiring
    $$
        j^k:C^r(M,N) \to \Big( C\big(M,J^k(M,N)\big),\tau_{\mathrm{WO}^0}\Big)
    $$
    to be topological embedding.
\end{definition}

\begin{proposition}\label{prop_1_WO^k_properties}
    The $\mathrm{WO}^k$-topology on $C^r(M,N)$ enjoys the following properties:
    \begin{itemize}
        \item[$(i)$] A subbasis of open subsets of the topology have the form 
        \begin{equation}\label{eq_1_WO^k-open}
            W(\widetilde{U})=\{ \varphi \in C^r(M,N): j^k\varphi(M)\subset \widetilde{U}\}
        \end{equation}
        where $\widetilde{U}\subset J^k(M,N)$ is an open subset.
        \item[$(ii)$] If $d_k$ is a metric on $J^k(M,N)$, then a basis of neighborhoods for the $\mathrm{WO}^k$-topology of $\varphi\in C^r(M,N)$ is
        $$
            N_{\varphi}^k(\epsilon)=\{ \psi\in C^r(M,N): d_k(j^k_x\psi,j^k_x\varphi) < \epsilon(x)\}
        $$
        where $\epsilon\in C(M,\mathbb{R}_+)$.
        \item[$(iii)$] The sequence $\{\varphi_n\}\subset C^k(M,N)$ converges to $\varphi$ in the $\mathrm{WO}^k$-topology if and only if there is a compact subset $K\subset M$ such that $\varphi_n\equiv \varphi $ in $M\backslash K$ and $j^k\varphi_n \to j^k\varphi$ uniformly over $K$. 
        \item[$(iv)$] If $I\subset \mathbb{R}$ is compact and $\gamma:I \to \big( C^r(M,N), \tau_{\mathrm{WO}^k} \big)$ is continuous, then there is a compact subset such that 
        $$
            \mathrm{ev}_x\gamma: I\ni t \mapsto \gamma(t)[x]
        $$
        is constant for all $x\in M\backslash K$.
        \item[$(v)$] $\mathrm{WO}^{\infty}$ on $C^{\infty}(M,N) $ is the projective limit topology of all $\mathrm{WO}^k$-topologies for $0\leq k\leq \infty$.
        \item[$(vi)$] A basis of open neighborhood of the $\mathrm{WO}^{\infty}$ topology on $C^{\infty}(M,N)$ consists of open subsets 
        \begin{equation}\label{eq_1_WO^infty-open}
            W(\widetilde{U})=\{ \varphi \in C^{\infty}(M,N): j^{\infty}\varphi(M)\subset \widetilde{U}\}
        \end{equation}
        where $\widetilde{U}\subset J^{\infty}(M,N)$ is open.
        \item[$(vii)$] If $\{K_n\}_n$ is an exhaustion of compact subsets of $M$, a basis for the $\mathrm{WO}^{\infty}$ topology on $C^{\infty}(M,N)$ consists of open subsets
        $$
            M(U,n)= \{ f \in C^{\infty}(M,N): j^nf(M\backslash K_n^{\mathrm{o}})\subset U_n\}
        $$
        where $U_n \subset J^n(M,N)$ are open.
    \end{itemize}
\end{proposition}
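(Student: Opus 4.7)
The strategy is to transfer every statement from the corresponding result for the $\mathrm{WO}^0$-topology on $C(M, J^k(M,N))$ via the defining embedding $j^k$ of Definition~\ref{def_1_WO^k_topology}, using crucially that every $j^k\varphi$ is a section of the jet projection $\alpha: J^k(M,N) \to M$, that is $\alpha(j^k_x\varphi) = x$.

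For $(i)$ I would start from the subbasis of $\tau_{\mathrm{WO}^0}$ on $C(M, J^k(M,N))$ given by \eqref{eq_1_WO^0-open}, namely sets $\{f : G_f(M) \subset \widetilde{W}\}$ for open $\widetilde{W} \subset M \times J^k(M,N)$. Pulling back along $j^k$ gives $\{\varphi : (x, j^k_x\varphi) \in \widetilde{W} \text{ for all } x\}$. The key observation is that the set $\widetilde{U} \doteq \{y \in J^k(M,N) : (\alpha(y), y) \in \widetilde{W}\}$ is open in $J^k(M,N)$ and the above condition is equivalent to $j^k\varphi(M) \subset \widetilde{U}$, precisely because $\alpha(j^k_x\varphi) = x$. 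Conversely every open $\widetilde{U} \subset J^k(M,N)$ is realised this way by taking $\widetilde{W} = M \times \widetilde{U}$, which yields \eqref{eq_1_WO^k-open}. Items $(ii)$, $(iii)$, $(iv)$ then follow by direct transcription: endowing $J^k(M,N)$ with a compatible metric $d_k$ and applying Lemma~\ref{lemma_1_WO^0-metric-subbasis} to $C(M, J^k(M,N))$ yields $(ii)$; applying Proposition~\ref{prop_1_WO^0-convergence} to the sequence $\{j^k\varphi_n\}$ yields $(iii)$, once one notes that the injectivity of $\beta \circ j^k = \mathrm{id}$ makes the equality of jets outside a compact $K$ equivalent to the equality of functions outside $K$; and applying Corollary~\ref{coro_1_WO^0-curves} to $j^k \circ \gamma$ yields $(iv)$.

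For $(v)$ and $(vi)$ the essential input is that $J^\infty(M,N) = \varprojlim J^k(M,N)$ as topological spaces, with continuous projections $\pi^\infty_k: J^\infty \to J^k$. A subbasis of $J^\infty(M,N)$ is then given by cylindrical opens $(\pi^\infty_k)^{-1}(U)$ with $U \subset J^k(M,N)$ open, and each such pulls back under $j^\infty$ to the corresponding $\mathrm{WO}^k$-open. This simultaneously identifies $\tau_{\mathrm{WO}^\infty}$ with the projective limit of the $\tau_{\mathrm{WO}^k}$ and, via the section-of-$\alpha$ argument from $(i)$ applied at $k = \infty$, establishes the subbasis \eqref{eq_1_WO^infty-open}.

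The only non-routine step is $(vii)$: given a basic open $W(\widetilde{U})$ with $\widetilde{U} \subset J^\infty(M,N)$ open, one must encode it by a single finite-order open subset $U_n \subset J^n(M,N)$ per compact annulus $K_{n+1} \setminus K_{n-1}^\circ$ of the exhaustion. The hard part is that different points of $M$ may a priori require arbitrarily high jet order, so no single finite order works globally. The plan is to exploit the compactness of each annulus together with the cylindrical structure of neighbourhoods in $J^\infty(M,N)$ to choose, for each $n$, a finite jet order and an open $U_n$ whose cylindrical pull-back refines $\widetilde{U}$ over that annulus; one then intersects these local refinements and verifies that the resulting family $\{M(U_n, n)\}$ is not merely a subbasis but a basis of the $\mathrm{WO}^\infty$-topology, which is the content of the standard Whitney patching argument.
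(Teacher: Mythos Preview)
Your proposal is correct and follows essentially the same route as the paper: items $(i)$--$(vi)$ are proved identically, via the section property $\alpha\circ j^k\varphi=\mathrm{id}_M$ for $(i)$ and by direct transfer of Lemma~\ref{lemma_1_WO^0-metric-subbasis}, Proposition~\ref{prop_1_WO^0-convergence}, Corollary~\ref{coro_1_WO^0-curves}, and the projective-limit description of $J^\infty(M,N)$ for the rest. For $(vii)$ the paper's execution differs slightly from yours: rather than extracting a finite jet order per annulus from cylindrical neighbourhoods, it uses a metric on $J^\infty(M,N)$ directly---for each compact annulus $K_{n+1}\setminus K_n^\circ$ it finds an $\epsilon_n>0$ such that the $\epsilon_n$-tube around $j^\infty f(K_{n+1}\setminus K_n^\circ)$ lies in the target open, then patches these into a single continuous $\epsilon$ and invokes the metric basis of $(ii)$. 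Both arguments are standard and yield the same conclusion.
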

\begin{proof}
We claim that on the image of $j^k$ in $C(M, J^k(M,N))$ in the $\mathrm{WO}^0$ and $\mathrm{WO}$ topology coincide. Indeed 
$$
    G_{j^k\varphi} (M)\subset M\times J^{k}(M,N)=\{ (x,j^k_x\varphi):x\in M)\}\simeq J^k(M,N),
$$
where the last is a topological embedding, therefore open subsets of $J^k(M,N)$ and $G_{j^k(C^r(M,N))}\subset  M \times J^k(M,N)$ coincide. 
As a result we obtain \eqref{eq_1_WO^k-open} by combining the above result with \eqref{eq_1_WO-open}. $(ii)$ follows by combining $(i)$ with Lemma \ref{lemma_1_WO^0-metric-subbasis}. Using that $\varphi_n\to \varphi $ in $\mathrm{WO}^k$ if and only if $j^k\varphi_n\to j^k\varphi$ in $\mathrm{WO}^0$ over $C(M,J^k(M,N))$ in conjunction with Proposition \ref{prop_1_WO^0-convergence} we get $(iii)$. Similarly $(iv)$ is obtained by combining Corollary \ref{coro_1_WO^0-curves} with the above argument. The argument for $(v)$ and $(vi)$ is the following: the topology on $J^{\infty}(M,N)$ is the coarsest such that each $\pi^{\infty}_k: J^{\infty}(M,N) \to J^k(M,N)$ is continuous. Note that we have an embedding
$$
    J^{\infty}(M,N)\simeq M\times_MJ^{\infty}(M,N) \hookrightarrow M\times J^{\infty}(M,N).
$$
Therefore we construct the following commutative diagram
\begin{center}
\begin{tikzcd}
		 & J^{\infty}(M,N) \arrow[r,hook] \arrow[d,"\pi^{\infty}_k"]  & M\times J^{\infty}(M,N) \arrow[d, "\mathrm{id}_M\times\pi^{\infty}_k"] \\
		 & J^{k}(M,N) \arrow[r,hook] & M\times J^k(M,N),
\end{tikzcd}
\end{center}
where the horizontal mapping are embeddings. Then a subbasis of $C\big( M,J^{\infty}(M,N)\big)$ for the $\mathrm{WO}^0$-topology is 
$$
    W(U_{\infty})=\{ j^{\infty}\varphi \in C^{\infty}(M,N): j^{\infty}(M)\subset U_{\infty}\}
$$
with $U_{\infty}\subset J^{\infty}(M,N)$ open. Finally we show $(vii)$. Let $U_n\subset J^n(M,N)$ be open subsets, then each
$$
    M(U)= \{ f \in C^{\infty}(M,N): \ \forall n\in \mathbb N,\  j^{\infty}f(M\backslash K_n^{\mathrm{o}})\subset (\pi^{\infty}_n)^{-1}U_n\};
$$
is an open subset of the $\mathrm{WO}^{\infty}$ topology. Setting $V_n=(\pi^{\infty}_0)^{-1}(U_0)\cap \cdots \cap (\pi^{\infty}_n)^{-1}(U_n)$ we have that
$$
    \{ f \in C^{\infty}(M,N):  \ \forall n\in \mathbb N,\ j^{\infty}f(K_{n+1}\backslash K_n^{\mathrm{o}})\subset V_n\}= M(U).
$$
The inclusion $\supset$ is clear, for the other, observe that in each region $K_{n+1}\backslash K_n^{\mathrm{o}}$ we have the requirement $ j^{\infty}f(K_{n+1}\backslash K_n^{\mathrm{o}})\subset V_n$ for all $n$ which is way stronger then the corresponding $ j^{\infty}f(M\backslash K_n^{\mathrm{o}})\subset (\pi^{\infty}_n)^{-1}U_n$ for all $n$. Since $J^{\infty}(M,N)$ is a fiber bundle with finite dimensional base and Fréchet space fiber and has the coarsest topology making each $\pi^{\infty}_k$ continuous, we may write
$$
    M(\widetilde{V})=\{ f \in C^{\infty}(M,N): \ \forall n\in \mathbb N,\   j^{\infty}f(K_{n+1}\backslash K_n^{\mathrm{o}})\subset \widetilde{V}_n\}.
$$
where each $\widetilde{V}_n\subset J^{\infty}(M,N)$ is open. We claim that $M(\widetilde{V})$ generates a topology equivalent to the $\mathrm{WO}^{\infty}$ topology. The latter's open subsets posses the form \eqref{eq_1_WO^infty-open}, is thus clear that $M(\widetilde{V})\subset W(\cup_n\widetilde{V}_n)$ thus making the former topology finer then the latter. To see the converse observe that $j^{\infty}f(K_{n+1}\backslash K_n^{\mathrm{o}})$ is a compact subset of a metric space for each $n$, thus there is some $\epsilon_n>0 $ for which the open subset $\{j^{\infty}_xg \in \mathbb{R}^{\infty} :d(j^{\infty}_xf,j^{\infty}_xg)<\epsilon_n \ \forall x\in K_{n+1}\backslash K_n^{\mathrm{o}} \} \subset \widetilde{V}_n$. Let then $\epsilon\in C(M)$ be a continuous function such that $\epsilon(x)<\epsilon_n$ for all $x\in K_{n+1}\backslash K_n^{\mathrm{o}}$, then $N^{\infty}_{f}(\epsilon)$ is an open subset of the Whitney topology which is contained in $M(\widetilde{V})$.
\end{proof}

As a consequence of Proposition \ref{prop_1_WO^0-convergence}, when $N$ is metrizable, and $M$ paracompact and second countable, if $\lbrace f_{n}\rbrace_{n \in \mathbb{N}}$ is a sequence we can characterize convergence in the following way:
\begin{itemize}
\item[$(i)$] $f_n \rightarrow f $ in the $\mathrm{WO}^{\infty}-$topology ,
\item[$(ii)$] $\forall n'\in \mathbb{N} \space\ \exists K_{n'} \subset M$ compact such that if $n\geq n'$ then $\left. f_n \right|_{M \backslash K_{n'}}=\left. f \right|_{M \backslash K_{n'}}$ and $\left. f_n \right|_{K_n'} \rightarrow \left. f \right|_{K_n'}$ uniformly with all its derivatives.
\end{itemize}

This fact has important implications, for if we consider the finite dimensional vector bundle $E \to M$, the vector space $\Gamma^{\infty}(M\leftarrow E)$ will not be a topological vector space due to the failure of continuity for the multiplication by scalar. This can be readily seen from condition $(ii)$ above: if for instance we had $\sigma\in \Gamma^{\infty}(M\leftarrow E)$, $\mathbb{R} \ni \epsilon_n \to 0$ and $\epsilon_n\sigma \to 0$, then each $\epsilon_n\sigma$ must possess compact support, thus $\sigma$ itself ought to be compactly supported. As a consequence we get the following result:

\begin{theorem}\label{thm_1_Gamma_c_TVS}
    Let $(E,\pi,M)$ be a finite dimensional vector bundle, then $\Gamma^{\infty}_c(M\leftarrow E)\subset \Gamma^{\infty}(M\leftarrow E)$, equipped with trace of the Whitney topology on $C^{\infty}(M,E)$. Then $\Gamma^{\infty}_c(M\leftarrow E)$ is the maximal locally convex space contained in $\Gamma^{\infty}(M\leftarrow E)$. Moreover the trace topology coincides with the (natural) final topology induced by the projective limit
\begin{equation}
	\lim_{\substack{\longrightarrow\\K\subset M}}\Gamma^{\infty}_K(M\leftarrow E) =\Gamma^{\infty}_c(M\leftarrow E).
\end{equation}
Consequently, $\Gamma^{\infty}_c(M\leftarrow E) $ is a complete, nuclear and Lindel\"of space, hence paracompact and normal. In particular, for each open cover $\mathcal{U}_i$ of $\Gamma^{\infty}_c(M\leftarrow E)$, there are Bastiani smooth bump functions $\rho_i:\Gamma^{\infty}_c(M\leftarrow E)\to \mathbb{R}$ each of which has $\mathrm{supp}(\rho_i)\subset \mathcal{U}_i$, satisfying
    $$
        \sum_i\rho_i(\sigma)=1 
    $$
    for each $\sigma\in \Gamma^{\infty}_c(M\leftarrow E)$.
\end{theorem}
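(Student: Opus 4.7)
The plan splits into four assertions that I would establish in order.

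For the maximality claim, I would proceed by contradiction. Let $V$ be a locally convex subspace of $\Gamma^{\infty}(M\leftarrow E)$ with the trace Whitney topology and suppose some $\sigma\in V$ has non-compact support. Continuity of scalar multiplication in the TVS structure of $V$ forces $\epsilon_n\sigma\to 0$ in $V$, and hence in the trace topology, whenever $\epsilon_n\to 0$ in $\mathbb{R}$. But by Proposition \ref{prop_1_WO^k_properties}$(iii)$ this requires the existence of a single compact $K$ such that $\epsilon_n\sigma\equiv 0$ on $M\setminus K$ for all $n$ large, hence $\sigma|_{M\setminus K}\equiv 0$, contradicting the non-compact support of $\sigma$. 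Therefore $V\subset\Gamma^{\infty}_c(M\leftarrow E)$, and the inclusion is evidently attained since $\Gamma^{\infty}_c(M\leftarrow E)$, equipped with the (strict) inductive limit topology, is itself a locally convex TVS continuously embedded into $\Gamma^{\infty}(M\leftarrow E)$.

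For the coincidence of the trace topology with the LF inductive limit topology, one direction is formal: each embedding $\Gamma^{\infty}_K(M\leftarrow E)\hookrightarrow\Gamma^{\infty}(M\leftarrow E)$ is continuous in the Whitney topology, so by the universal property of the strict inductive limit the LF topology is finer than the trace. For the converse I would show that every basic trace neighborhood $N^{\infty}_{\sigma}(\epsilon)\cap\Gamma^{\infty}_c(M\leftarrow E)$ is LF-open, i.e.\ has open intersection with each $\Gamma^{\infty}_K(M\leftarrow E)$ in its Fréchet topology. Since $\epsilon\in C(M,\mathbb{R}_+)$ is continuous and strictly positive and $K\cup\mathrm{supp}(\sigma)$ is compact, $\epsilon$ is uniformly bounded below on this set by a positive constant, so the trace neighborhood contains a uniform jet neighborhood defined by finitely many Fréchet seminorms of $\Gamma^{\infty}_K(M\leftarrow E)$, as needed.

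For the stated topological properties, I would observe that $\Gamma^{\infty}_c(M\leftarrow E)$ is a strict countable inductive limit of the nuclear Fréchet spaces $\Gamma^{\infty}_{K_n}(M\leftarrow E)$, where $\{K_n\}$ is any exhaustion of $M$ by compacta; nuclearity of each step is verified chart by chart through local trivializations of $E$ and \cite[Theorem 51.5]{treves2016topological}. Completeness then follows from \cite[Theorem 13.1]{treves2016topological}, nuclearity is preserved under strict countable inductive limits, and separability of each step combined with the countable exhaustion yields that $\Gamma^{\infty}_c(M\leftarrow E)$ is Lindelöf, from which paracompactness and normality follow by general topology.

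The main obstacle is the final assertion, the existence of \emph{Bastiani smooth} partitions of unity. Here I would combine nuclearity with the Lindelöf property: on each Fréchet step $\Gamma^{\infty}_{K_n}(M\leftarrow E)$ the $c^{\infty}$ and Bastiani smooth structures agree by Proposition \ref{prop_A_prop_of_c_infty_top}$(iv)$, and by the smooth paracompactness of nuclear Fréchet spaces proved in \cite[Theorem 16.10]{convenient} each step admits smooth partitions of unity subordinate to any open cover. Because $\Gamma^{\infty}_c(M\leftarrow E)$ is Lindelöf, a given open cover $\{\mathcal{U}_i\}$ admits a countable refinement which can be localized inductively to the filtration $\Gamma^{\infty}_{K_n}(M\leftarrow E)$, and the step-wise smooth partitions can be glued via smooth cut-off functions in the Bastiani sense to yield a globally defined smooth partition $\{\rho_i\}$ with $\mathrm{supp}(\rho_i)\subset \mathcal{U}_i$ and $\sum_i\rho_i(\sigma)=1$ for every $\sigma\in\Gamma^{\infty}_c(M\leftarrow E)$. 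The delicate point in this last step is ensuring that the gluing is continuous with respect to the inductive limit topology, which is guaranteed by the fact that the cut-offs depend only on which step a section belongs to.
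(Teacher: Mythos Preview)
The paper does not supply a proof of this theorem; it is stated immediately after the remark that scalar multiplication fails to be continuous on $\Gamma^{\infty}(M\leftarrow E)$ unless sections are compactly supported, with the phrase ``As a consequence we get the following result''. The result is treated as essentially standard (the surrounding discussion and the later Proposition~\ref{prop_1_mu_caus_top_prop} indicate that the intended references are \cite{michor1980manifolds} and \cite[Theorem~16.10]{kriegl1997convenient}). So there is no proof in the paper to compare against; I can only assess your proposal on its own merits.

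Your first three steps are sound and match the informal reasoning the paper sketches. The maximality argument is exactly the one the paper outlines in the paragraph preceding the theorem. The identification of the trace topology with the LF topology is correct in outline; you should be slightly more careful with the basis you use (the paper's basis $M(U,n)$ from Proposition~\ref{prop_1_WO^k_properties}$(vii)$, which involves only finite-order jets outside increasing compacta, makes the argument cleanest). The completeness and nuclearity claims are standard and correctly referenced.

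The weak point is your treatment of smooth partitions of unity. Your proposed gluing of step-wise partitions along the filtration $\Gamma^{\infty}_{K_n}$ is vague, and making it precise is genuinely delicate: the cut-offs you describe would have to be Bastiani smooth functions of $\sigma\in\Gamma^{\infty}_c$ that detect which $K_n$ contains $\mathrm{supp}(\sigma)$, and such functions are not obviously available. The cleaner route, and the one the paper implicitly relies on (see the proof of Proposition~\ref{prop_1_mu_caus_top_prop}$(ii)$), is to invoke \cite[Theorem~16.10]{kriegl1997convenient} directly: once you have established that $\Gamma^{\infty}_c(M\leftarrow E)$ is Lindel\"of and smoothly regular (the latter via explicit bump functions built from linear functionals and a fixed smooth cut-off on $\mathbb{R}$), that abstract result yields smooth partitions of unity subordinate to any open cover, with no gluing needed.
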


From a topological standpoint, Theorem \ref{thm_1_Gamma_c_TVS} implies that $\Gamma^{\infty}_c(M\leftarrow E)\subset \Gamma^{\infty}(M\leftarrow E)$ equipped with the Whitney topology is the maximal topological vector subspace. Thus if we want to give a topological manifold structure to spaces such as $\Gamma^{\infty}(M\leftarrow E)$ (or $C^{\infty}(M,N)$) this forces us to use $\Gamma^{\infty}_c(M\leftarrow E)$ as the topological vector space on which to model the manifold (see Definition \ref{def_1_infinite_dim_mfd}). We would therefore like to have charts of the form $(\sigma_0+\Gamma^{\infty}_c(M\leftarrow E),u_{\sigma_0})$ where $u_{\sigma_0}(\sigma)=\sigma-\sigma_0$. The undesirable fact is that $\sigma_0+\Gamma^{\infty}_c(M\leftarrow E)$ would then become a closed subset. To remedy this problem we refine the Whitney topology \textit{just enough} to make the above subsets open. To wit consider the following equivalence class: given $M$, $N$ smooth finite dimensional manifolds, set $f \sim g$ if $\mathrm{supp}_f(g)= \overline{\lbrace x \in M : f(x) \neq g(x)\rbrace}\subset M$ is compact. 

\begin{definition}\label{def_1_refined_whitney_top}
    The refined Whitney topology, or refined $\mathrm{WO}^{\infty}$ topology, is the coarsest topology on $C^{\infty}(M,N)$ which is finer than the $\mathrm{WO}^{\infty}$-topology and for which the sets $\mathcal{U}_f=\lbrace g \in C^{\infty}(M,N): g \sim f \rbrace$ are open.
\end{definition}

The refined Whitney topology has the same converging sequences and smooth curves as the Whitney topology since the proofs of \ref{prop_1_WO^0-convergence} and Corollary \ref{coro_1_WO^0-curves} remains essentially valid. The reason is that the refinement we imposed on the topology was made by adding \textit{big} open subsets \textit{i.e.} the trace topology on subspaces of the form $\Gamma^{\infty}_c(M\leftarrow E)$ is not altered, thus the properties in questions remain valid. Clearly $\Gamma^{\infty}_c(M\leftarrow E) \subset\Gamma^{\infty}(M\leftarrow E)$ will then become open and moreover $\Gamma^{\infty}(M\leftarrow E)$ becomes a topological affine space with model topological vector space $\Gamma^{\infty}_c(M\leftarrow E)$. Notice also that the space $C^{\infty}(M,N)$ will no longer be a Baire space for example, if $N=\mathbb{R}$ and $K_n$ is an exhaustion of compact subsets, then $\cup_n C^{\infty}_{K_n}(M)=C^{\infty}_c(M)$, however $C^{\infty}_{K_n}(M)\subset C^{\infty}_{c}(M)$ is not dense for all $n\in \mathbb{N}$.

\begin{proposition}\label{prop_1_continuity_of_push_forward}
    Let $M$, $M'$, $N$, $N'$ be smooth finite dimensional manifolds,
    \begin{itemize}
        \item[$(i)$] if $f:M'\to M$ is a proper smooth mapping, then $f^*:C^{\infty}(M,N) \to C^{\infty}(M',N) $ is continuous in both the Whitney and refined Whitney topology;
        \item[$(ii)$] if $h:N\to N'$ is a smooth mapping, then $h_*:C^{\infty}(M,N) \to C^{\infty}(M,N') $ is continuous in both the Whitney and refined Whitney topology.
    \end{itemize}
\end{proposition}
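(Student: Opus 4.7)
The plan is to verify continuity of both maps by checking that preimages of subbasic open sets are open, using the jet-bundle characterisation of the Whitney topology from Proposition~\ref{prop_1_WO^k_properties} and, for the refined topology, the extra open sets $\mathcal U_\psi$ from Definition~\ref{def_1_refined_whitney_top}. The guiding idea is that both $h_{*}$ and $f^{*}$ lift, via the multivariate chain rule, to continuous (in fact smooth) transformations at the level of jet bundles; the only real technicality arises for $f^{*}$, because the base manifold changes and properness must be invoked.

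For $(ii)$ I would first note that $h$ induces a smooth fibred morphism $J^k(h):J^k(M,N)\to J^k(M,N')$ over $\mathrm{id}_M$, defined by $j^k_y\varphi\mapsto j^k_y(h\circ\varphi)$. Then for any open $\widetilde U\subset J^k(M,N')$ one has the identity
\[
  (h_{*})^{-1}W(\widetilde U)=W\bigl(J^k(h)^{-1}(\widetilde U)\bigr),
\]
which gives continuity in $\mathrm{WO}^k$ for every $k$, hence in $\mathrm{WO}^{\infty}$. For the refined topology it suffices to observe that if $h\circ\varphi_0\sim\psi$ and $\varphi\sim\varphi_0$ on the complement of a compact $K\subset M$, then $h\circ\varphi$ and $h\circ\varphi_0$ agree off $K$, so $\mathcal U_{\varphi_0}$ is an open refined neighbourhood of $\varphi_0$ inside $(h_{*})^{-1}\mathcal U_{\psi}$.

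For $(i)$ I would use the smooth jet-composition map $\Theta:J^k(M',M)\times_M J^k(M,N)\to J^k(M',N)$ together with the section $j^k f:M'\to J^k(M',M)$ to define a continuous map on the pullback bundle,
\[
  \Phi_f:f^{*}J^k(M,N)\longrightarrow J^k(M',N),\qquad \Phi_f(x,j)=\Theta(j^k_x f,j).
\]
With $V=\Phi_f^{-1}(\widetilde U)$, a short bookkeeping shows $(f^{*})^{-1}W(\widetilde U)=W(\widetilde U^{\sharp})$ where
\[
  \widetilde U^{\sharp}=\bigl\{\,j\in J^k(M,N):(x,j)\in V\ \text{for every }x\in f^{-1}(\alpha(j))\,\bigr\},
\]
the condition being vacuous where the fibre is empty. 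The hard part will be showing $\widetilde U^{\sharp}$ is open: given $j_0\in\widetilde U^{\sharp}$ with $y_0=\alpha(j_0)$, properness makes $f^{-1}(y_0)$ compact, so finitely many product neighbourhoods $N_{x_i}\times W_{x_i}\subset V$ already cover $\{(x,j_0):x\in f^{-1}(y_0)\}$; setting $N=\bigcup_i N_{x_i}$, $W=\bigcap_i W_{x_i}$, and using that a proper continuous map between Hausdorff manifolds is closed, I would take $U=M\setminus f(M'\setminus N)\ni y_0$ and check that $W\cap\alpha^{-1}(U)\subset\widetilde U^{\sharp}$. The refined case then follows as for $h_{*}$, using that $f$ proper implies $f^{-1}(K)$ compact whenever $K\subset M$ is, so that $\varphi\sim\varphi_0$ forces $\varphi\circ f\sim\varphi_0\circ f$.

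The main obstacle is exactly the openness of $\widetilde U^{\sharp}$: without properness, a local covering of the fibre $f^{-1}(y_0)$ cannot be shrunk back to an open neighbourhood of $y_0$ in $M$, and every remaining step is a formal consequence of the definitions and of the continuity (really smoothness) of jet composition.
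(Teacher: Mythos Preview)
Your proposal is correct. For part $(ii)$ your argument is essentially identical to the paper's: both use the smooth jet lift $J^k(h):J^k(M,N)\to J^k(M,N')$ (the paper writes $j^nh$) to pull back subbasic Whitney opens, and both observe that $\varphi\sim\varphi_0$ implies $h\circ\varphi\sim h\circ\varphi_0$ for the refined topology.

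For part $(i)$ the paper simply defers to \cite[Proposition~7.3]{michor1980manifolds}, whereas you spell out the argument. Your treatment via the jet-composition map $\Theta$ and the set $\widetilde U^{\sharp}$ is the standard one and is likely what Michor does; the compactness-of-fibres argument using that proper maps are closed is the expected mechanism for producing the open neighbourhood $W\cap\alpha^{-1}(U)$ of $j_0$. So you have made explicit what the paper leaves as a citation, at the cost of some bookkeeping with the fibred product $f^*J^k(M,N)$. Your version is self-contained and makes visible exactly where properness enters, which the paper's one-line reference does not.
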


\begin{proof}
    The proof of $(i)$ can be directly obtained by using Proposition 7.3 in \cite{michor1980manifolds} while keeping in mind that $f^*(g)=g\circ f$. For $(ii)$ consider a Whitney open subset $M'(U)=\{ f\in C^{\infty}(M,N') : \ j^nf(M\backslash K_n)\}\subset U_n \subset J^{n}(M,N') \forall n \in \mathbb N \}$ in $C^{\infty}(M,N')$. The mapping $j^nh : J^n(M,N) \to J^n(M,N')$ is smooth and hence continuous, thus set $V_n=(j^nh)^{-1}(U_n)$. Then $(h_*)^{-1}(M'(U))=M(V)$ which implies continuity of $h_*$ in the Whitney topology. For the refined Whitney topology one notes that if $f\sim f'$, then $h\circ f \sim h \circ f'$ as well, thus $h_*\mathcal{U}_f=\mathcal{U}_{h_*f}$, which in turn implies that $h_*$ remains continuous even when refining the topology. 
\end{proof}


\begin{theorem}\label{thm_A_Bastiani smooth_pushforward}
    Let $(E_i,\pi_i,M)$, $i=1,2$ be finite dimensional vector bundles, suppose that $\alpha: U\subset E_1\to E_2 $ is a smooth fibered morphism projecting to the identity of $M$ and let $\sigma_0\in \Gamma^{\infty}_c(M\leftarrow E_1)$ having $\sigma_0(M)\subset U$, $\alpha(\sigma_0)\in \Gamma^{\infty}_c(M\leftarrow E_2)$. Then the mapping $\alpha_{*}:\mathcal{U}=\{\sigma\in \Gamma^{\infty}_c(M\leftarrow E_1): \sigma(M)\subset U\} \to \Gamma^{\infty}_c(M\leftarrow E_2)$ is a Bastiani smooth mapping, moreover, if $d_v\alpha:VU\to E_2$ is the vertical derivative of $\alpha$, we have $d(\alpha_*)=(d_v\alpha)_*$.
\end{theorem}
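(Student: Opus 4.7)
The plan is to verify the two conditions of Definition~\ref{def_A_Bastiani_smooth_map} directly and then iterate. First I would check that $\mathcal{U}$ is open in the LF-topology of $\Gamma^\infty_c(M\leftarrow E_1)$ and that $\alpha_*(\mathcal{U})\subset\Gamma^\infty_c(M\leftarrow E_2)$. Openness follows because $\sigma(M)$ is a closed subset of the open set $U\subset E_1$, and any $\sigma'$ close to $\sigma$ in the refined Whitney topology agrees with $\sigma$ outside a fixed compact set and is uniformly close on it. Compact support is preserved because $\sigma$ and $\sigma_0$ both coincide with the zero section outside $K=\mathrm{supp}(\sigma)\cup\mathrm{supp}(\sigma_0)$, so $\alpha\circ\sigma$ and $\alpha\circ\sigma_0$ agree outside $K$, and the hypothesis $\alpha(\sigma_0)\in\Gamma^\infty_c(M\leftarrow E_2)$ forces $\alpha\circ\sigma$ to have compact support as well.

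For the Gâteaux derivative at $\sigma\in\mathcal{U}$ in a direction $\tau\in\Gamma^\infty_c(M\leftarrow E_1)$, the curve $\sigma+t\tau$ stays in $\mathcal{U}$ for $|t|$ small, and the fiberwise fundamental theorem of calculus (applied inside each fiber, which is a vector space) gives
\begin{equation}
\frac{1}{t}\bigl[\alpha(\sigma(x)+t\tau(x))-\alpha(\sigma(x))\bigr]=\int_0^1 d_v\alpha\bigl(\sigma(x)+st\tau(x)\bigr)(\tau(x))\,ds.
\end{equation}
Since $\tau$ and $\sigma-\sigma_0$ are compactly supported, both sides are supported in a common compact $K'\subset M$ independent of small $t$. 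Smoothness of $d_v\alpha$ on the compact subset $\{\sigma(x)+st\tau(x):x\in K',\,s\in[0,1],\,|t|\le\varepsilon\}\subset U$ yields uniform convergence of every jet on $K'$ as $t\to 0$, and the limit is
\begin{equation}
d\alpha_*[\sigma](\tau)(x)=d_v\alpha(\sigma(x))(\tau(x))=(d_v\alpha)_*(\sigma,\tau)(x),
\end{equation}
which is linear in $\tau$ by linearity of $d_v\alpha$ in the vertical direction.

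Joint continuity of $d\alpha_*$ on $\mathcal{U}\times\Gamma^\infty_c(M\leftarrow E_1)$ reduces, by the LF-structure, to continuity on each $\mathcal{U}\times\Gamma^\infty_K(M\leftarrow E_1)$, a product of metrizable spaces. A standard chain-rule estimate shows that jets of $(d_v\alpha)(\sigma,\tau)$ of order $k$ on $K$ are polynomially controlled by the jets of $(\sigma,\tau)$ of order $k$ on $K$ together with bounds of $d_v\alpha$ on a compact neighborhood of $\sigma(K)$ in $U$, giving continuity with respect to the Fréchet seminorms of Example~\ref{ex_A_smooth_function_frechet_space}.

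Higher-order smoothness follows by induction. The vertical derivative $d_v\alpha$ is itself a smooth fibered morphism over $M$ between the vector bundles $U\times_M E_1$ and $E_2$, so applying the first-order case to $d_v\alpha$ (with base point $(\sigma_0,0)$) produces the second derivative $d^2\alpha_*=(d^2_v\alpha)_*$, and recursively
\begin{equation}
d^k\alpha_*[\sigma](\tau_1,\dots,\tau_k)(x)=d^k_v\alpha(\sigma(x))(\tau_1(x),\dots,\tau_k(x)).
\end{equation}
The main technical obstacle throughout is the support bookkeeping needed to keep every difference quotient inside a single Fréchet piece $\Gamma^\infty_K$ of the LF-limit; the hypothesis $\alpha(\sigma_0)\in\Gamma^\infty_c$ is exactly what makes this possible, since it lets one subtract $\alpha\circ\sigma_0$ from $\alpha\circ\sigma$ and stay within compactly supported sections.
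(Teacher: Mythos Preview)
Your proof is correct and follows essentially the same approach as the paper: compute the fiberwise difference quotient via an integral remainder (the paper uses Taylor with quadratic remainder, which amounts to the same thing), establish uniform convergence of all jets on the common compact support, and then iterate using that $d_v\alpha$ is again a smooth fibered morphism. The only stylistic difference is that for joint continuity you write out the seminorm estimates directly on each Fr\'echet piece $\Gamma^\infty_K$, whereas the paper observes that $(d_v\alpha)_*$ is the pushforward by a smooth fibered morphism and invokes the general continuity result Proposition~\ref{prop_1_continuity_of_push_forward}; your explicit version is a bit more self-contained, and you are also more careful than the paper about verifying that $\alpha_*$ actually lands in compactly supported sections.
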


We remark that given any connection on the vector bundle $E_1$ it induces a splitting $TE_1=HE_1\oplus VE_1$ into horizontal and vertical vector bundle, the latter is of course independent from the connection chosen. Thus if we use local fibered coordinates on $E_1$ and $E_2$ induced by local frames $e_i$, $f_i$ respectively and study $\alpha $ in a neighborhood of $p\in E_1$, $\alpha(p)= \sum \alpha^i(x,y)f_i$, then $d_v\alpha:V_pE_1\to E_2, \ (p;\sigma) =\sum \frac{\partial \alpha^i(x,y)}{\partial y^j}\sigma^jf_i \in E_2|_{x}$.

\begin{proof}
    Its clear that if $U$ is open, $\mathcal{U}=\{\sigma\in \Gamma^{\infty}_c(M\leftarrow E_1): \sigma(M)\subset U\}$ is open in $\Gamma^{\infty}_c(M\leftarrow E_1)$ with the Whitney topology as well by $(vi)$ of Proposition \ref{prop_1_WO^k_properties} taking $\widetilde{\mathcal{U}}=(\pi^{\infty})^{-1}(M\times U)$. Next we show that $\alpha_*$ is Bastiani differentiable. This is equivalent to show that $d(\alpha_*):\mathcal{U}\times \Gamma^{\infty}_c(M\leftarrow E_1)\to \Gamma^{\infty}_c(M\leftarrow E_2)$ exists and is continuous in the Whitney topology (see Definition \ref{def_1_WO^k_topology}). We thus claim that  
    \begin{equation}
        \lim_{t\to 0} \frac{\alpha_*(\sigma+t\sigma')-\alpha_*(\sigma)}{t} = (d_v\alpha)_*(\sigma;\sigma')
    \end{equation}
    in the Whitney topology. We start by showing that for any neighborhood $U$ of $x_0\in M$, $\frac{\alpha_*(\sigma+t\sigma')-\alpha_*(\sigma)}{t}$ converges uniformly to $(d_v\alpha)_*(\sigma;\sigma')(x)$ in $U$. This is a local problem and we can thus study it using local coordinates. Notice that if $U$ lies outside the support of $\sigma'$ then the claim is trivial. By an abuse of notation we set $\sigma(x)=(x,\sigma(x))$ and likewise for $\sigma'$. Then by Taylor theorem we have
    $$
        \begin{aligned}
            \big(\alpha_*(\sigma+t\sigma')-\alpha_*(\sigma)\big)^i(x)&= \alpha^i(x,\sigma(x)+t\sigma'(x))-\alpha^i(x,\sigma(x)) = t\partial_j\alpha^i(x,\sigma(x))\sigma'^j(x)\\ & \quad +t^2\int_0^1(1-\lambda)\partial_{jk}\alpha^i(x,\sigma(x)+t\lambda\sigma'(x))\sigma'^j(x)\sigma'^k(x)d\lambda,
        \end{aligned}
    $$
    We can then estimate in $U$
    $$
    \begin{aligned}
        \bigg|\frac{\big(\alpha_*(\sigma+t\sigma')-\alpha_*(\sigma)\big)^i(x)}{t}-\big((d_v\alpha)_*(\sigma;\sigma')\big)^i(x) 
    \end{aligned}\bigg| \leq |t| C_{\alpha,\sigma,\sigma',U}
    $$
    for each $x\in U$, establishing uniform convergence. Moreover, since $d_v\alpha: VU\subset VE_1 \to E_2$ remains a smooth fibered morphism, the mapping $(d_v\alpha)_*:\mathcal{U}\times \Gamma_c^{\infty}(M\leftarrow E_1)\to \Gamma_c^{\infty}(M\leftarrow E_2)$ is continuous by Proposition \ref{prop_1_WO^k_properties}. Moreover, if $d^k_v\alpha : \otimes^k_MV_pU$ is the mapping locally defined by
    $$
        d^k_v\alpha[p]: \otimes^k V_pE_1\ni (s_1,\ldots,s_k) \mapsto
        \partial_{j_1 \ldots j_k}\alpha^i(p) s_1^{j_1}\cdots s_k^{j_K},
    $$
    a similar argument to the one above shows that for each $x\in U$ the mapping $d^{(k-1)}(\alpha_*)[\sigma+t\sigma'](\sigma_1,\ldots,\sigma_{k-1})(x)-d^{(k-1)}(\alpha_*)[\sigma](\sigma_1,\ldots,\sigma_{k-1})(x)$ converges uniformly to $(d^k_v\alpha )_*(\sigma;\sigma',\sigma_1,\ldots,\sigma_{k-1})(x)$. Then again, Proposition \ref{prop_1_continuity_of_push_forward} implies the continuity of $d^{(k)}(\alpha_*):\mathcal{U}\times \Gamma_c^{\infty}(M\leftarrow E_1) \cdots \times \Gamma_c^{\infty}(M\leftarrow E_1)\to \Gamma_c^{\infty}(M\leftarrow E_2)$. Finally, by $(iii)$ in Proposition \ref{prop_1_WO^k_properties}, this shows that the mapping $\alpha_*:\mathcal{U} \to \Gamma_c^{\infty}(M\leftarrow E_2)$ is Bastiani smooth.
\end{proof}

\section{Manifolds of mappings}\label{section_map_manifold}

We begin with the definition of infinite dimensional bundles. As we mentioned earlier we shall choose to model those on locally convex spaces in view of the results by \cite{ely}, \cite{ely2}, \cite{omori}.

\begin{definition}\label{def_1_infinite_dim_mfd}
    Let $\mathscr{M}$ be a Hausdorff topological space, we say that $\mathscr{M}$ admits a Bastiani smooth manifold structure if 
    \begin{itemize}
        \item[$(i)$] there is a family $\{(\mathcal{U}_i, u_i, E_i)\}_{i\in I}$ where $\{\mathcal{U}_i\}_{i\in I}$ is an open cover of $\mathscr{M}$, $\{E_i\}_{i\in I}$ is a family of complete locally convex spaces and $u_i:\mathcal{U}_i \to E_i$ a family of homeomorphisms onto the open subsets $u_i(\mathcal{U}_i)\subseteq E_i$;
        \item[$(ii)$] for all $i,j \in I$, having $\mathcal{U}_{ij}=\mathcal{U}_i\cap \mathcal{U}_j \neq \emptyset$, the mapping 
        $$
            u_{ij}=u_j\circ u^{-1}_i: u_i(\mathcal{U}_{ij})\subseteq E_i \to u_j(\mathcal{U}_{ij})\subseteq E_j
        $$
        and its inverse $u_{ji}=u_i\circ u^{-1}_j$ are Bastiani smooth. 
    \end{itemize}
    We then call \textit{charts} elements of the family $\{(\mathcal{U}_i, u_i, E_i)\}_{i\in I}$.
\end{definition}

It follows from condition $(ii)$ above that the locally convex spaces $E_i$ linearly isomorphic. A subset $\mathscr{N}\subset \mathscr{M}$ of a differentiable manifold is called a \textit{splitting submanifold} of $\mathscr{M}$ if for each $p\in \mathscr{N}$, there are charts $(\mathcal{U},u,E)$ of $\mathscr{M}$ such that $u(p)=0\in E$ and $U(\mathcal{U}\cap \mathscr{N})=u(\mathcal{U})\cap F$, where $F$ is a closed vector subspace of $E$ for which $E=F\oplus F^c$. The collection of charts $\{(\mathcal{U}_i\cap \mathscr{N}, u_i\vert_{\mathcal{U}_i\cap \mathscr{N}},F_i)\}$ then makes $\mathscr{N}$ a manifold itself as per Definition \ref{def_1_infinite_dim_mfd}. A weaker notion of submanifold requires that $F$ is just a closed subspace of $E$, in this case we say that $\mathscr{N}$ is a \textit{non-splitting submanifold}. 

Next we define the tangent bundle. Let $\mathscr{M}$ be a Bastiani smooth manifold with atlas $\lbrace(\mathcal{U}_{i},u_{i},E_{i})\rbrace$. A \textit{tangent vector} is an equivalence class of elements $(p,v,U_{i},u_{i},E_{i})$, with $p\in \mathscr{M}$ and $v \in E_{i}$, where $(p,v,\mathcal{U}_{i},u_{i},E_{i})$ and $(p',w,U_{j},u_{j},E_{j})$ are equivalent if $p=p'$ and $d(u_{ij}[u_{i}(p)])(v)=w $. We denote by $T_p\mathscr{M}$ the set of all tangent vectors to $p$, moreover setting $T\mathscr{M} = \bigsqcup_{p \in \mathscr{M}} T_p\mathscr{M} $ we obtain the \textit{space of tangent vectors} of $\mathscr{M}$. It is easy to see that $T\mathscr{M}$ carries a natural structure of Bastiani smooth manifold. To wit, observe that we can always define a canonical projection $\tau : T\mathscr{M} \to \mathscr{M} $. For the family of charts set $\{ (\widetilde{\mathcal{U}}_{i}, \widetilde{u}_{i}, E_{i} \times E_{i}) \} $ where $( {\mathcal{U}}_{i}, {u}_{i}, E_{i}) $ is a chart of $\mathscr{M}$, $\widetilde{\mathcal{U}}_{i}=\tau^{-1}\big( \mathcal{U}_i\big)$, $\widetilde{u}_{i}: \tau^{-1}(\mathcal{U}_i)\ni \widetilde{p}  \mapsto (u_{i}(x),v) \in E_i\times E_i$.  

The topology on $T\mathscr{M}$ is the unique one making each $\widetilde{u}_{i}$ into a homeomorphism, also the transition mapping $\widetilde{u}_{ij}:(x,v) \mapsto (u_{ij}(y), du_{ij}[x](v))$ is Bastiani smooth since $u_{ij}$ is itself smooth in the first place. It is easily shown that $T\mathscr{M}$ is Hausdorff, thus $T\mathscr{M}$ is a differentiable manifold according to Definition \ref{def_1_infinite_dim_mfd}.\\

Next we give a manifold structure to $C^{\infty}(M,N)$ with $M, \ N$ smooth finite dimensional manifolds. We first recall that given any Riemannian $h$ on $N$ there exists the Riemannian exponential $\exp_y: U\subset T_yN \to N, w\mapsto \exp_y(w)$, where $\exp_y(w)$ is the value of the geodesic starting at $y$ with velocity $w$ at time $t=1$. Since $\exp_y(0)=y$, and $T_y\exp_y=id_{T_yN}$, $\exp_y$ is a local diffeomorphism, then we can define a local diffeomorphism $(\tau_N,\exp):\widetilde{U}\subset TN \to \mathcal{O }\subset N \times N : (y,w) \to (y, \exp_{y}(w))$ onto an open subset $\mathcal{O }$ of the diagonal of $N\times N$. 

\begin{theorem}\label{thm_1_mfd_mappings}
    Let $M$, $N$ be smooth finite dimensional manifolds, then $C^{\infty}(M,N)$ is a Bastiani smooth manifold according to Definition \ref{def_1_infinite_dim_mfd}, modelled on the nuclear locally convex space $\Gamma^{\infty}_c(M\leftarrow f^*TN)$.
\end{theorem}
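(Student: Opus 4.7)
The plan is to build charts at each $f\in C^{\infty}(M,N)$ using the Riemannian exponential map of a (complete) Riemannian metric on $N$. Recall from the excerpt that $(\tau_N,\exp)$ restricts to a diffeomorphism from an open neighbourhood $\widetilde{U}\subset TN$ of the zero section onto an open neighbourhood $\mathcal{O}\subset N\times N$ of the diagonal. For each $f$ I would take the chart domain to be
$$
\mathcal{U}_f \;=\; \bigl\{\, g\in C^{\infty}(M,N) : g\sim f \text{ and } (f(x),g(x))\in\mathcal{O}\ \forall x\in M \,\bigr\},
$$
with $\sim$ as in Definition \ref{def_1_refined_whitney_top}, and define
$$
u_f:\mathcal{U}_f \longrightarrow \Gamma^{\infty}_c(M\leftarrow f^{*}TN), \qquad u_f(g)(x)=(\tau_N,\exp)^{-1}\bigl(f(x),g(x)\bigr).
$$
The relation $g\sim f$ exactly ensures that $u_f(g)$ has compact support, while the condition $(f(x),g(x))\in\mathcal{O}$ makes $u_f$ well-defined with inverse $\sigma\mapsto \exp_{f(\cdot)}\sigma(\cdot)$.

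Next I would check that $\mathcal{U}_f$ is open and that $u_f$ is a homeomorphism onto the open subset $\{\sigma\in\Gamma^{\infty}_c(M\leftarrow f^{*}TN): \sigma(x)\in\widetilde{U}_{f(x)}\ \forall x\in M\}$. Openness of $\{g\sim f\}$ is built into the refined Whitney topology, and the condition $(f(x),g(x))\in\mathcal{O}$ cuts out a Whitney $C^{0}$-open set via Proposition \ref{prop_1_WO^k_properties}(i). The homeomorphism claim amounts to matching uniform convergence of $k$-jets on a common compact support, which is the joint description of convergence in the refined Whitney topology (Proposition \ref{prop_1_WO^k_properties}(iii)) and in the LF-topology of Theorem \ref{thm_1_Gamma_c_TVS}.

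The heart of the argument is verifying that the transition maps are Bastiani smooth. If $h\in\mathcal{U}_f\cap\mathcal{U}_g$ and $\sigma=u_f(h)$, then unwinding definitions gives
$$
(u_g\circ u_f^{-1})(\sigma)(x) \;=\; \exp_{g(x)}^{-1}\bigl(\exp_{f(x)}\sigma(x)\bigr),
$$
which is precisely the pushforward $\alpha_{*}$ along the smooth fibered morphism
$$
\alpha: V\subset f^{*}TN \longrightarrow g^{*}TN,\qquad (x,v)\longmapsto \bigl(x,\exp_{g(x)}^{-1}(\exp_{f(x)}v)\bigr),
$$
covering $\mathrm{id}_M$. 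Since $f^{*}TN$ and $g^{*}TN$ are finite-rank vector bundles over the same base $M$, and since the reference section $u_f(h)$ is compactly supported with $\alpha_{*}(u_f(h))=u_g(h)$ also compactly supported, Theorem \ref{thm_A_Bastiani smooth_pushforward} applies and delivers Bastiani smoothness of $u_g\circ u_f^{-1}$. Its inverse is handled symmetrically by swapping $f$ and $g$.

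Finally, Hausdorffness is inherited because the refined Whitney topology is finer than the compact-open topology and any manifold is normal, so Lemma \ref{lemma_1_CO_is_Hausdorff} yields the separation property. Nuclearity of the model spaces $\Gamma^{\infty}_c(M\leftarrow f^{*}TN)$ is the content of Theorem \ref{thm_1_Gamma_c_TVS}. The main technical obstacle I expect is the reduction of the transition map to a pushforward along a smooth fibered morphism of vector bundles over $M$; once that identification is made, Theorem \ref{thm_A_Bastiani smooth_pushforward} carries all the analytic work. A subtler point to watch is that openness of $\mathcal{U}_f$ must be checked in the refined topology (not only in the Whitney topology), and that the chart image sits as an open subset inside the LF-space $\Gamma^{\infty}_c(M\leftarrow f^{*}TN)$; this ultimately uses the fact that any $g\sim f$ forces the corresponding section to lie in a Fréchet piece $\Gamma^{\infty}_K(M\leftarrow f^{*}TN)$ for a compact $K$ depending on $g$.
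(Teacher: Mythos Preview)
Your proposal is correct and follows essentially the same route as the paper: the same chart domains $\mathcal{U}_f$ and chart maps $u_f$ via the Riemannian exponential, and the same identification of the transition maps as pushforwards $\alpha_*$ of the fibered morphism $(x,w)\mapsto(x,\exp_{g(x)}^{-1}\exp_{f(x)}w)$, to which Theorem~\ref{thm_A_Bastiani smooth_pushforward} is applied. The paper additionally checks that the resulting smooth structure is independent of the choice of Riemannian metric on $N$ (again via Theorem~\ref{thm_A_Bastiani smooth_pushforward}), which you may wish to include for completeness.
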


\begin{proof}
    Let $f\in C^{\infty}(M,N)$, then define $\mathcal{U}_f$ to be the subset of all $g \in C^{\infty}(M,N)$ with compact support with respect to $f$, such that $(f,g)(M)\subset (\tau_N,\exp)\big(\widetilde{U}\big)$, then $\mathcal{U}_f$ is an open subset for example by $(vii)$ in Proposition \ref{prop_1_WO^k_properties}. Let then
    $$
        u_f: \mathcal{U}_f \ni g \mapsto u_f(g)\in \Gamma^{\infty}_c(M\leftarrow f^*TN)
    $$
    defined as follows:
    \begin{equation}\label{eq_1_trivial_gamma_local_chart}
        u_f(g)(x)=(\tau_N,\exp)^{-1} (f(x),g(x))\simeq \big(f(x),\exp^{-1}_{f(x)}(g(x))\big) .
    \end{equation}
    It is clear that $u_f(g)$ is a smooth mapping, its image is valued in $f^*TN$, moreover since $f\neq g$ only in a compact subset of $K\subset M$, then $\exp^{-1}_{f(x)}(g(x))\neq 0 $ if and only if $x \in K$. Thus $u_f$ is valued into $\big\{\vec{X} \in \Gamma^{\infty}_c(M\leftarrow f^*TN) : \vec{X}(M) \subset (\tau_N,\exp)^{-1}\big(f^*\widetilde{U}\big)\big\}$. Therefore the mapping $u_f$ becomes a homeomorphism between $\mathcal{U}_f$ with the trace of the refined Whitney topology and an open subset of $ \Gamma^{\infty}_c(M\leftarrow f^*TN) $ with the usual limit Fréchet topology. If $\mathcal{U}_{fg}= \mathcal{U}_f\cap \mathcal{U}_g\neq \emptyset$, then we can consider the transition mapping $u_{fg}\doteq u_{g}\circ u_{f}^{-1}:\Gamma^{\infty}_c(M\leftarrow f^*TN) \to  \Gamma^{\infty}_c(M\leftarrow g^*TN)$. This mapping can be constructed as the push forward of
    $$
        T_{f(x)}N \ni ( x,w) \mapsto \big(x, \exp_{g(x)}^{-1}\big(\exp_{f(x)}(w)\big) \big)\in T_{g(x)}N,
    $$
    which is a smooth global fibered isomorphism $f^*\widetilde U\subset f^*TN\to g^*\widetilde U\subset g^*TN$. Then by Theorem \ref{thm_A_Bastiani smooth_pushforward} we also have that $u_{fg}$ is a smooth mapping together with its inverse $u_{gf}$. Finally if one chooses a different metric $h'$ on $N$ inducing the exponential $\exp'$ and new charts $u'_f$, then again, the transition mapping $u'_{f}\circ u_f$ can be obtained as the push forward of the local fibered isomorphism
    $$
        f^*\widetilde U\subset f^*TN \ni ( x,w) \mapsto \big(x, (\exp'_{f(x)})^{-1}\big(\exp_{f(x)}(w)\big) \big)\in f^*\widetilde U'\subset f^*TN.
    $$
    which by Theorem \ref{thm_A_Bastiani smooth_pushforward} is smooth. Therefore the smooth structure on $C^{\infty}(M,N)$ does not depend on the choice of the exponential mapping.
\end{proof}

We remark that in \cite{michor1980manifolds}, the role of the mapping $(\tau_N,\exp)$ is played by the so-called \textit{local addition}, that is a mapping $A:TN\to N\times N$ which is a local diffeomorphism onto an open subset of the diagonal for which $A(0_y)=y$ for all $y\in N$. One can show Lemma 10.1 and 10.2 pp. 90 in \cite{michor1980manifolds} that $(\tau_N,\exp)$ is a local addition, then the proof of Theorem \ref{thm_1_mfd_mappings} can be repeated along the same lines without altering the result.\\

This concludes the description of the manifold structure for the space of section of a trivial bundle $M\times N \to M$. In the more general case of a non-trivial bundle $\pi:B\to M$, $\Gamma^{\infty}(M\leftarrow B)$ can be topologized as follows: first we give $C^{\infty}(M,B)$ the refined Whitney topology, then we note that $\varphi \in \Gamma^{\infty}(M\leftarrow B)  \subset C^{\infty}(M,B)$ if and only if $\pi_{*}(\varphi)=\pi\circ \varphi = \mathrm{id}_M$. By (ii) in Proposition \ref{prop_1_continuity_of_push_forward}, $\pi_{*}$ is continuous, so the equation $\pi_{*}(\cdot) =\mathrm{id}_M $ in $C^{\infty}(M,B)$ defines a closed subset in the refined Whitney topology. We wish to show that $\Gamma^{\infty}(M\leftarrow B)$ is a splitting submanifold of $C^{\infty}(M,B)$. First, notice that if $\varphi\in \Gamma^{\infty}(M\leftarrow B)\subset C^{\infty}(M,B)$, then $\mathcal{U}_{\varphi}\cap \Gamma^{\infty}(M\leftarrow B)$ is the set of all $\psi \in \Gamma^{\infty}(M\leftarrow B)$ such that $\psi$, $\varphi$ differ only on a compact subset of $M$. Secondly, observe that $TB$ can be split by the choice of a connection as $HB \oplus VB$, this induces the splitting $\Gamma^{\infty}_c(M\leftarrow \varphi^*TB)=\Gamma^{\infty}_c(M\leftarrow  \varphi^*HB)\oplus \Gamma^{\infty}_c(M\leftarrow \varphi^*VB)$ at the level of locally convex spaces. Finally, if $\varphi\in \Gamma^{\infty}(M\leftarrow B)$ and $(\mathcal{U}_{\varphi},u_{\varphi})$ is a chart of $C^{\infty}(M,B)$, then 
$$
    u_{\varphi}\vert_{\mathcal{U}_{\varphi}\cap \Gamma^{\infty}(M\leftarrow B)}: \mathcal{U}_{\varphi}\cap \Gamma^{\infty}(M\leftarrow B)\to \Gamma^{\infty}_c(M\leftarrow \varphi^*TB)
$$
with
\begin{equation}\label{eq_1_gamma_local_chart}
    u_{\varphi}(\psi)(x)=\big(x, \exp_{\varphi(x)}^{-1}(\psi(x))  \big).
\end{equation}
Notice that even though the image of $x\in M$ through $\psi$, $\varphi$, lies in the same fiber $\pi^{-1}(x)$ we are not guaranteed that $\exp_{\varphi(x)}^{-1}(\psi(x))\in V_{\varphi(x)}B$, for $\pi^{-1}(x)$ might fail to be totally geodesic for the Riemannian metric chosen on $B$; therefore, in general, the geodesic joining $\varphi(x)$ and $\psi(x)$ might travel outside $\pi^{-1}(x)$. If we were able to solve this issue and show that $\exp_{\varphi(x)}^{-1}(\psi(x))\in V_{\varphi(x)}B$, we could then proceed by noticing that although the splitting depends on the connection chosen, the vertical subbundle $VB = \mathrm{ker}(\tau_B:TB\to B)$ does not, therefore $\Gamma^{\infty}_c(M\leftarrow \varphi^*TB)=\Gamma^{\infty}_c(M\leftarrow  \varphi^*HB)\oplus \Gamma^{\infty}_c(M\leftarrow \varphi^*VB)$ depends on the connection chosen just on the horizontal part. We are thus left with solving the issue of $\pi^{-1}(x)$ being not totally geodesic. By \cite[Lemma 10.9]{michor1980manifolds} to each section $\varphi$ we can find a tubular neighborhood, \textit{i.e.} a vector bundle $(E_{\varphi},\widetilde \pi, \varphi(M))$ with $E_{\varphi}\subset B$ and $\widetilde \pi = \pi|_{E_{\varphi}}$. Moreover, by \cite[Lemma 10.6]{michor1980manifolds}, we can modify (trough smooth diffeomorphisms) the local addition $(\tau_B,\exp_B)$ defining the chart $(\mathcal{U}_{\varphi},u_{\varphi})$ of $C^{\infty}(M,B)$ to a local addition $(\tau_{E_{\varphi}}, \widetilde \exp)$ on $E_{\varphi}$ for which 
$$
    \widetilde \exp_{\varphi}^{-1}(\psi) \in \Gamma^{\infty}_c(M\leftarrow \varphi^*TE_{\varphi}).    
$$
By construction, if $\psi(x)\in E_{\varphi}$, for all $x\in M$
\begin{equation}\label{eq_1_tilde_exp}
    \widetilde \exp_{\varphi(x)}^{-1}(\psi(x)) \in T_{\varphi(x)}\big( E_{\varphi}|_{\varphi(x)} \big) \simeq V_{\varphi(x)}E_{\varphi} \simeq V_{\varphi(x)}B.
\end{equation}
 In the sequel we will write the charts of $\Gamma^{\infty}(M\leftarrow B)$ as $(\mathcal{U}_{\varphi},u_{\varphi})$ understanding that their are the slice charts induced above. We shall call them \textit{ultralocal charts}\footnote{The term ultralocal has been introduced in \cite{forger2004currents} to signify that the mapping $u_{\varphi}$ does just depend on the point values of the mappings $\psi$, $\varphi$ without dependence on higher derivatives of the two.} in order to differentiate them from the local chart of finite dimensional manifold that were mentioned before.\\

\begin{lemma}\label{lemma_A_locality&bastiani_implies_w-reg}
    Let $P:\Gamma^{\infty}(M \leftarrow B)\to \Gamma^{\infty}(M \leftarrow C)$ be a differential operator and $\Gamma^{\infty}(M \leftarrow B),\ \Gamma^{\infty}(M \leftarrow C)$ be endowed with the infinite dimensional structure described in \Cref{thm_1_mfd_mappings}; then if $P$ is smooth it is weakly regular\footnote{See \Cref{section_Peetre_slovak} for the definition of weakly regular mapping.}.
\end{lemma}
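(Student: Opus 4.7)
The strategy is to test what Bastiani smoothness of $P$ buys us on a generic compactly supported variation $\Phi:\mathbb{R}\times M\to B$: by hypothesis each $\Phi_t$ is a smooth section, $\Phi$ is jointly smooth, and there is a compact $H\subset M$ off which $\Phi_t(x)$ is independent of $t$. I must produce joint smoothness of $P[\Phi]:\mathbb{R}\times M\to C$ together with a single compact subset outside which $P[\Phi_t](x)$ is $t$-independent. The first step is to recognise $t\mapsto\Phi_t$ as a Bastiani smooth curve into the manifold $\Gamma^{\infty}(M\leftarrow B)$ built in \Cref{thm_1_mfd_mappings}. Because $\Phi_t$ agrees with $\Phi_0$ on $M\setminus H$ for \emph{every} $t\in\mathbb{R}$, the whole curve sits inside the single ultralocal chart $(\mathcal{U}_{\Phi_0},u_{\Phi_0})$ of \eqref{eq_1_gamma_local_chart}; read through that chart it becomes
\begin{equation*}
t\mapsto u_{\Phi_0}(\Phi_t)=\big(x\mapsto \widetilde{\exp}^{-1}_{\Phi_0(x)}\Phi_t(x)\big)\in \Gamma^{\infty}_c(M\leftarrow \Phi_0^*VB),
\end{equation*}
whose uncurried version is jointly smooth (by joint smoothness of $\Phi$ and smoothness of $\widetilde\exp^{-1}$) and supported in $\mathbb{R}\times H$. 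Proposition~\ref{prop_A_conv_sections_of_vector_bndl}(ii) then certifies this as a smooth curve into the LF-space, and by (iv) of Proposition~\ref{prop_A_prop_of_c_infty_top} convenient and Bastiani smoothness agree on the Fréchet slice $\Gamma^{\infty}_H(M\leftarrow \Phi_0^*VB)$ through which the curve actually factors.

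Next, since $P$ is Bastiani smooth, post-composition with $P$ preserves smoothness of curves, so $t\mapsto P[\Phi_t]$ is a Bastiani smooth curve in $\Gamma^{\infty}(M\leftarrow C)$. Here I would invoke locality of the differential operator: for every $x\notin H$ the germs of $\Phi_t$ and $\Phi_0$ at $x$ are identical, so $P[\Phi_t](x)=P[\Phi_0](x)$ for all $t\in\mathbb{R}$. Consequently the entire curve $t\mapsto P[\Phi_t]$ lives inside one ultralocal chart $(\mathcal{U}_{P[\Phi_0]},u_{P[\Phi_0]})$, and the chart-image curve $t\mapsto u_{P[\Phi_0]}(P[\Phi_t])$ is a smooth curve in $\Gamma^{\infty}_c(M\leftarrow P[\Phi_0]^*VC)$ whose support is contained in $H$ \emph{uniformly} in $t$.

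To close, I would run Proposition~\ref{prop_A_conv_sections_of_vector_bndl}(ii) in the reverse direction on this chart-image curve: its smoothness together with the uniform support bound in $H$ delivers joint smoothness of $(t,x)\mapsto u_{P[\Phi_0]}(P[\Phi_t])(x)$ on $\mathbb{R}\times M$ with support in $\mathbb{R}\times H$. Composing with $\widetilde\exp$ recovers $(t,x)\mapsto P[\Phi_t](x)$ as jointly smooth, coinciding with $P[\Phi_0](x)$ whenever $x\notin H$; this is exactly what it means for $P[\Phi]$ to be a compactly supported variation, i.e.\ for $P$ to be weakly regular. The main obstacle is the step where one has to convert a smooth curve in an LF-space of compactly supported sections, which a priori only gives support control on each compact sub-interval of $\mathbb{R}$, into a uniform global-in-$t$ support statement; locality of $P$ is precisely the ingredient that performs this upgrade, and without it smoothness of $P$ alone is insufficient (e.g.\ non-local smoothing integral operators produce outputs whose support need not be contained in $H$).
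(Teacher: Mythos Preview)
Your argument is sound, but it leans on an hypothesis not present in the statement as the paper means it: you explicitly invoke \emph{locality} of $P$ (that $P[\varphi](x)$ depends only on the germ of $\varphi$ at $x$) to deduce $P[\Phi_t](x)=P[\Phi_0](x)$ for $x\notin H$, and you even argue in your closing paragraph that this is indispensable. In the paper's terminology, however, a ``differential operator'' is merely \emph{any} mapping $\Gamma^{\infty}(M\leftarrow B)\to\Gamma^{\infty}(M\leftarrow C)$ (see the opening of \S\ref{section_Peetre_slovak}); locality is a separate notion and is not assumed here. The paper's proof does not use it.

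The paper instead extracts the compact-support bound on the output from the \emph{topology of the target}: since $P$ is Bastiani smooth, $s\mapsto P[\varphi_s]$ is a smooth (hence continuous) curve in the manifold $\Gamma^{\infty}(M\leftarrow C)$, and on any compact parameter interval a continuous curve into a space carrying the Whitney topology must be constant outside some compact $K'\subset M$ (Proposition~\ref{prop_1_WO^k_properties}(iv), Corollary~\ref{coro_1_WO^0-curves}). In particular, the non-local smoothing operator you propose as a counterexample would fail to be Bastiani smooth for \emph{this} manifold structure, precisely because its output cannot land in any chart $P[\Phi_0]+\Gamma^{\infty}_c$; so Bastiani smoothness on the refined Whitney manifold already rules it out. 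On the positive side, your route to smoothness of the input curve---via Proposition~\ref{prop_A_conv_sections_of_vector_bndl}(ii) applied to the jointly smooth uncurried map---is cleaner than the paper's hands-on verification of continuity and differentiability, and your observation that a uniform-in-$t$ compact (rather than one per compact interval) is what weak regularity literally demands is a fair point the paper glosses over with ``the general case is a straightforward generalization''.
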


\begin{proof}
    Suppose that $\varphi_s$ is a compactly supported variation of $\varphi_0$. We suppose also that $s\in I\subset \mathbb{R}$ with $I$ compact, but the general case is a straightforward generalization. We claim that $s\mapsto \varphi_s$ is a smooth curve in $\Gamma^{\infty}(M \leftarrow B)$, then again by Lemma \ref{lemma_1_interpolation_of_sections}, we can assume that the image of this path lies in a chart $\mathcal{U}_{\varphi}$, therefore our claim is equivalent to smoothness of 
    $$
        s \mapsto u_{\varphi_0}(\varphi_s)\in \Gamma^{\infty}_c(M \leftarrow \varphi^*VB),
    $$
    where $u_{\varphi}$ is the chart mapping defined in \eqref{eq_1_gamma_local_chart}. If $K\subset M$ is the compact where $\varphi_s\neq \varphi_0$, then $u_{\varphi}(\varphi_s)\neq u_{\varphi}(\varphi_0)$ in $K$ as well. Then it is enough to test differentiability at each order in the Fréchet space $\Gamma^{\infty}_K(M \leftarrow \varphi^*VB)$. 
    \begin{itemize}
        \item If we see $u_{\varphi}$ as a mapping from a neighborhood of the diagonal $\mathcal{O}\subset B\times B $ to the tangent space of $B$, then it is smooth, thus by (ii) in Proposition \ref{prop_1_continuity_of_push_forward}, we conclude that $u_{\varphi}(\varphi_s)=(u_{\varphi})_{*}\circ\varphi_s$ is continuous.
        \item To show differentiability, note that $\varphi(\pi(\varphi_s(x)))=\varphi(x)$ for all $x\in M$, therefore $u_{\varphi}(\varphi_s)=\widetilde\exp_{\varphi}^{-1}(\varphi_s)$, then for all $x\in M$
        $$
            \frac{d}{ds}u_{\varphi}(\varphi_s)(x)= T_{\varphi_s(x)}\widetilde\exp_{\varphi(x)}^{-1}(\dot\varphi_s);
        $$
        thus the derivative $ \frac{d}{ds}u_{\varphi}(\varphi_s)$ exists and by (ii) in Proposition \ref{prop_1_continuity_of_push_forward} is continuous due to smoothness of $T_{\bullet}\widetilde\exp_{\bullet}^{-1}: T\mathcal{O} \to TTB$.
        \item Iterating this argument we have shown smoothness of $u_{\varphi_s}$. 
    \end{itemize}
    Since $P$ is smooth, then $P(\varphi_s)$ is a smooth curve in $\Gamma^{\infty}(M \leftarrow C)$, eventually shrinking $I$ we can 
    assume that $P(\varphi_s)\subset \mathcal{V}_{P(\varphi_0)}$ \textit{i.e.} it lies inside a chart $(\mathcal{V}_{P(\varphi_0)}, v_{P(\varphi_0)})$ of $\Gamma^{\infty}(M \leftarrow C)$. Then $v_{P(\varphi_0)}\big(P(\varphi_s)\big)\subset \Gamma^{\infty}_c(M \leftarrow \sigma^*VC)$ is smooth. By $(iii)$, $P(\varphi_s)\in C^{\infty}(M,C)$ is a smooth curve and there is a compact subset $K'$ such that $P(\varphi_s)\equiv P(\varphi_0)$ outside $K'$. Therefore, 
    $$
        v_{P(\varphi_0)}\big(P(\varphi_s)\big)\subset \Gamma^{\infty}_{K'}(M \leftarrow \sigma^*VC)
    $$
    the latter is a Fréchet space, therefore we apply (ii) of Proposition \ref{prop_A_conv_sections_of_vector_bndl} and conclude.
\end{proof}

The tangent space at each point $\varphi$ is $T_{\varphi}\Gamma^{\infty}(M\leftarrow B)\equiv \Gamma^{\infty}_c(M\leftarrow \varphi^{*}VB)$. The tangent bundle $(T\Gamma^{\infty}(M\leftarrow B),\tau_{\Gamma}, \Gamma^{\infty}(M\leftarrow B))$ is defined in analogy with the finite dimensional case, and carries a canonical infinite dimensional bundle structure with trivializations
$$
	t_{\varphi}: \tau^{-1}_{\Gamma}(\mathcal{U}_{\varphi})\rightarrow \mathcal{U}_{\varphi} \times \Gamma^{\infty}_c(M\leftarrow \varphi^{*}VB).
$$
As usual, we can identify points of $T\Gamma^{\infty}(M\leftarrow B)$ by elements $t^{-1}_{\varphi}\left(\varphi,\vec{X}_{\varphi}\right)$. With those trivializations a tangent vector to $\Gamma^{\infty}(M\leftarrow B)$, \textit{i.e.} an element of $T_{\varphi}\Gamma^{\infty}(M\leftarrow B)$, can equivalently be seen as a section of the vector bundle $\Gamma^{\infty}_c(M\leftarrow \varphi^{*}VB)$. When using the latter interpretation, we will write the section in local coordinates as $\vec{X}(x)=\vec{X}^i(x)\partial_i\big\vert_{\varphi(x)}$. 
Finally we will use Roman letters, \textit{e.g.} $(\vec{s},\vec{u}, \dots)$ to denote elements of the topological dual space $\Gamma^{-\infty}_c(M\leftarrow \varphi^{*}VB)\equiv\big(\Gamma^{\infty}_c(M\leftarrow \varphi^{*}VB)\big)'$.
\begin{definition}\label{def_1_connection}
A connection over the (possibly infinite dimensional) bundle $(C,\pi,X)$ is a vector-valued one form $\Phi \in \Omega^1(C;VC)$ satisfying
	\begin{itemize}
	\item[$(i)$] $\mathrm{Im}(\Phi) = VC$,
	\item[$(ii)$] $\Phi \circ \Phi=\Phi$.
	\end{itemize}
\end{definition}
The mapping $\Phi $ represents the projection onto the vertical subbundle of $TC$. Given a connection $\Phi$ it is always possible to associate its canonical Christoffel form $\Gamma\doteq \mathrm{id}_{TC}-\Phi$ which will define the projection onto the space of horizontal vector fields. In our case, we consider $C=T\Gamma^{\infty}(M\leftarrow B)$, the latter has canonical trivialization
$$
	Tt_{\varphi}: \tau^{-1}_{T\Gamma}\circ \tau^{-1}_{\Gamma}(\mathcal{U}_{\varphi})\rightarrow \mathcal{U}_{\varphi} \times \Gamma^{\infty}_c(M\leftarrow \varphi^{*V}B) \times \Gamma^{\infty}_c(M\leftarrow \varphi^{*}VB) \times \Gamma^{\infty}_c(M\leftarrow \varphi^{*}VB).
$$
Therefore given $ Tt_{\varphi}^{-1}\left(\vec{Y}_{\varphi}, \vec{S}_{\vec{X}}\right) \in TT\Gamma^{\infty}(M\leftarrow B)$, we can write the connection locally as
$$
	t_{\varphi}^{*}\Phi\left(\vec{Y}_{\varphi}, \vec{S}_{\vec{X}}\right)=\left(\vec{0}_{\varphi}, \vec{S}_{\vec{X}}-\Gamma_{\varphi}(\vec{X},\vec{Y}) \right),
$$
where the Christoffel form
$$
	\Gamma_{\varphi}\equiv t_{\varphi}^{-1}\Gamma: \Gamma^{\infty}_c(M\leftarrow \varphi^{*}VB) \times \Gamma^{\infty}_c(M\leftarrow \varphi^{*}VB) \rightarrow\Gamma^{\infty}_c(M\leftarrow \varphi^{*}VB) 
$$
can be chosen to be linear in the first two entries. For additional details about connections see \cite{kriegl1997convenient} Chapter VI, section 37. Instead of using the abstract notion provided by Definition \ref{def_1_connection}, in the case of manifolds of mappings, there is a more intuitive way of generating a connection. For simplicity's sake we shall do the easier case of $C^{\infty}(M,N)$, since the generalization to general bundles is almost immediate. Let $\widetilde{\Gamma}$ be a connection on the finite dimensional manifold $TN$, then we induce a connection $\Phi$ on $TC^{\infty}(M,N)$ as follows: fix $f\in C^{\infty}(M,N)$, $\vec{X},\vec{Y}\in T_fC^{\infty}(M,N)\simeq \Gamma^{\infty}_c(M\leftarrow f^{*}TN)$, $\vec{S}\in T_{\vec{X}}T_fC^{\infty}(M,N)\simeq\Gamma^{\infty}_c(M\leftarrow f^{*}TN)$, then
\begin{equation}\label{eq_1_def_connection}
\big(t_{f}^{-1}\Phi\big)\big(f,\vec{X},\vec{Y},\vec{S}\big)\doteq \big( \vec{0}_f, \vec{S}-\Gamma_f(\vec{X},\vec{Y})\big)   
\end{equation}
where $\Gamma_f(\vec{X},\vec{Y})\in \Gamma^{\infty}_c(M\leftarrow f^{*}TN)$ is defined by 
$$
    \Gamma_f(\vec{X},\vec{Y})(x)\doteq \widetilde{\Gamma}_{jk}^i(f(x))\vec{X}^j(x)\vec{Y}^k(x) \partial_i\vert_{f(x)}.
$$
Equivalently we are setting $\Gamma_f = \widetilde{\Gamma}_*$, by Theorem \ref{thm_A_Bastiani smooth_pushforward}, the mappings $\Gamma_f$, $\Phi_f$ are Bastiani smooth, moreover they induce a connection $\Phi$. In the sequel we shall use \eqref{eq_1_conn} to induce a connection as in \eqref{eq_1_def_connection}.
\section{Observables}\label{section_observables}

By a functional, we mean a smooth mapping 
$$
	F: \mathcal{U}\subset \Gamma^{\infty}(M\leftarrow B) \to \mathbb{R},
$$
where $\mathcal{U}$ is an open set in the $CO$-topology generated by \eqref{eq_1_CO_open}. Since smoothness is tested on ultralocal charts, a functional $F$ is smooth if and only if, given any ultralocal atlas $\lbrace \mathcal{U}_{\varphi}, u_{\varphi} \rbrace_{\varphi \in \mathcal{U}}$ its localization
\begin{equation}\label{eq_1_localization_of_F}
	F_{\varphi} \doteq F \circ u_{\varphi}^{-1} : \Gamma^{\infty}_c(M\leftarrow \varphi^*VB) \rightarrow \mathbb{R},
\end{equation}
is smooth in the sense of Definition \ref{def_A_Bastiani_smooth_map}. \\

The first notion we introduce is the spacetime support of a functional. The idea is to follow the definition of support given in \cite{acftstructure}, and account for the lack of linear structure on the fibers of the configuration bundle $B$.

\begin{definition}\label{def_1_func_support}
Let $F$ be a functional over $\mathcal{U}$, $CO$-open, then its support is the closure in $M$ of the subset $x\in M$ such that for all $V \subset M$ open neighborhood of $x$, there is $ \varphi \in \mathcal{U}$, $\vec{X}_{\varphi} \in \Gamma^{\infty}_c(M\leftarrow \varphi^{*}VB)$ having $\mathrm{supp}(\vec{X}_{\varphi})\subset V$, for which $F_{\varphi}(\vec{X}_{\varphi})=F_{\varphi}(0)$. The set of functionals over $\mathcal{U}$ with compact spacetime support will be denoted by $\mathcal{F}_c(B,\mathcal{U})$ and its elements called observables.
\end{definition}

Let us display some examples of functionals. Given $\alpha \in C^{\infty}( B,\mathbb{R})$, consider
\begin{equation}\label{eq_1_generic_func}
	F_{\alpha}:\Gamma^{\infty}(M\leftarrow B) \rightarrow \mathbb{R}: \varphi \mapsto F_{\alpha}(\varphi) \doteq  \left		     \lbrace \begin{array}{lr}
      \frac{1}{1+\sup_{M}(\alpha(\varphi))} & \alpha(\varphi) \space\ \text{bounded },\\
       0 & \text{otherwise }.
             \end{array} \right. 
\end{equation}
If $f\in C^{\infty}_c(M)$ and $\lambda \in \Omega_m(J^rB)$ define
\begin{equation} \label{eq_1_gen_Lag_1}
	\mathcal{L}_{f,\lambda} :\Gamma^{\infty}(M\leftarrow B) \rightarrow \mathbb{R}:  \varphi \mapsto \mathcal{L}_{f,\lambda}(\varphi) \doteq  \int_M f(x) j^r\varphi^*\lambda(x) \mathrm{d}\mu_g(x).
\end{equation}
On the other hand if $f, \lambda$ are as above and $\chi:\mathbb{R}\rightarrow \mathbb{R}$ with $0\leq \chi \leq 1$, $\chi(t)=1 \space\ \forall  \vert t \vert \leq 1/2$ and $\chi(t)=0 \space\ \forall  \vert t \vert \geq 1/2$ define
\begin{equation}\label{eq_1_reg_func}
	G_{f,\lambda,\chi}:\Gamma^{\infty}(M\leftarrow B) \rightarrow \mathbb{C}:  \varphi \mapsto G_{f,\lambda,\chi}(\varphi) \doteq e^{1-\chi\big( (\mathcal{L}_{f,\lambda}(\varphi))^2\big)}.
\end{equation}
\vspace{0.1cm}

We can endow $\mathcal{F}_c(B,\mathcal{U})$ with the following operations
\begin{equation}
    (F,G) \mapsto (F+G)(\varphi)\doteq F(\varphi)+G(\varphi);
\end{equation}
\begin{equation}
    (z \in \mathbb{C}, F) \mapsto (zF)(\varphi) \doteq z F(\varphi);
\end{equation}
\begin{equation}\label{eq_1_classical_algebra_product}
    (F,G) \mapsto (F \cdot G)(\varphi) \doteq F(\varphi)G(\varphi);
\end{equation}
\begin{equation}
    F \mapsto F^{*}, \ F^{*}(\varphi) \doteq \overline{F(\varphi)}\footnote{In adherence to standard clFT, we use real functionals, which makes involution a trivial operation; we remark though that one could repeat \textit{mutatis mutandis} everything with $\mathbb{R}$ replaced by $\mathbb{C}$, then involution is not trivial anymore.}.
\end{equation}

It can be shown that those operation preserve the compactness of the support, turning $\mathcal{F}_c(B,\mathcal{U})$ into a commutative \textit{*-algebra with unity} where the unit element is given by $\varphi \mapsto 1\in \mathbb{R}$. That involution and scalar multiplication are support preserving is trivial, to see that for multiplication and sum we use

\begin{lemma}
Let $F$, $G$ be functionals over $\mathcal{U} \subset \Gamma^{\infty}(M\leftarrow B)$ $CO$-open subset, then
\begin{itemize}
\item[$(i)$] $\mathrm{supp}(F+G)\subset \mathrm{supp}(F) \cup \mathrm{supp}(G)$,
\item[$(ii)$] $\mathrm{supp}(F\cdot G)\subset \mathrm{supp}(F) \cup \mathrm{supp}(G)$.
\end{itemize}
\end{lemma}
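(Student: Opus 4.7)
The plan is to prove both inclusions by contraposition, using the following reformulation of the complement of the support: a point $x$ lies outside $\mathrm{supp}(F)$ if and only if there exists an open neighborhood $V \ni x$ such that $F_{\varphi}(\vec{X}_{\varphi})=F_{\varphi}(0)$ for \emph{every} $\varphi \in \mathcal{U}$ and every $\vec{X}_{\varphi}\in \Gamma^{\infty}_c(M\leftarrow \varphi^{*}VB)$ with $\mathrm{supp}(\vec{X}_{\varphi})\subset V$. Note that this characterization also shows that the set of ``sensitive'' points in Definition \ref{def_1_func_support} is already open-complement (the witnessing $V$ works for every point of $V$), hence the support is closed and no extra closure is needed at the end.

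For $(i)$, pick $x \notin \mathrm{supp}(F)\cup \mathrm{supp}(G)$ and extract witnessing open neighborhoods $V_F$ and $V_G$ of $x$ for $F$ and $G$ respectively. Set $V = V_F \cap V_G$, which is still open and contains $x$. For any $\varphi\in \mathcal{U}$ and any $\vec{X}_{\varphi}$ with $\mathrm{supp}(\vec{X}_{\varphi})\subset V$, one has simultaneously $F_{\varphi}(\vec{X}_{\varphi}) = F_{\varphi}(0)$ and $G_{\varphi}(\vec{X}_{\varphi}) = G_{\varphi}(0)$; by linearity of the localization in \eqref{eq_1_localization_of_F} (which is simply pointwise addition of real numbers), it follows that
$$
(F+G)_{\varphi}(\vec{X}_{\varphi}) = F_{\varphi}(\vec{X}_{\varphi}) + G_{\varphi}(\vec{X}_{\varphi}) = F_{\varphi}(0) + G_{\varphi}(0) = (F+G)_{\varphi}(0),
$$
so $V$ witnesses $x \notin \mathrm{supp}(F+G)$.

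For $(ii)$, the same choice of $V$ does the job: from $F_{\varphi}(\vec{X}_{\varphi}) = F_{\varphi}(0)$ and $G_{\varphi}(\vec{X}_{\varphi}) = G_{\varphi}(0)$ and the pointwise definition \eqref{eq_1_classical_algebra_product} one gets
$$
(F\cdot G)_{\varphi}(\vec{X}_{\varphi}) = F_{\varphi}(\vec{X}_{\varphi})\, G_{\varphi}(\vec{X}_{\varphi}) = F_{\varphi}(0)\, G_{\varphi}(0) = (F\cdot G)_{\varphi}(0),
$$
which yields $x \notin \mathrm{supp}(F\cdot G)$.

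There is no real obstacle in this proof; the only small point requiring care is the (chart-independent) character of the defining condition. Since $F_\varphi = F\circ u_\varphi^{-1}$ and analogously for $G$ and $F+G$, $F\cdot G$, the equalities above do not depend on the choice of ultralocal chart around $\varphi$, and the conclusion follows at the level of the functionals themselves rather than of their localizations.
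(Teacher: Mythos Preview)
Your proof is correct and follows essentially the same contrapositive approach as the paper's own proof. The only difference is cosmetic: you explicitly take $V = V_F \cap V_G$ and add remarks on closure and chart-independence, whereas the paper jumps directly to a single witnessing neighborhood $V$ and omits those comments; the underlying argument is identical.
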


Before writing the proof we note that the more restrictive version of $(ii)$ with the intersection of domains does not hold in general, this can be checked by taking a constant functional $G(\varphi) \equiv c \space\ \forall \varphi \in \mathcal{U}$, then $\mathrm{supp}(G)=\emptyset$ while $\mathrm{supp}(F+G)$, $\mathrm{supp}(F\cdot G)=\mathrm{supp}(F)$.

\begin{proof}
Suppose that $x \notin \mathrm{supp}(F) \cup \mathrm{supp}(G) $, then there is an open neighborhood $V$ of $x$ such that for any $X \in \Gamma^{\infty}_c(M\leftarrow VB)$ with $\mathrm{supp}(X)\subset V$, and any $\varphi \in \mathcal{U}$ we have $(F+G)_{\varphi}(\vec{X}_{\varphi})=F_{\varphi}(\vec{X}_{\varphi})+G_{\varphi}(\vec{X}_{\varphi})=F_{\varphi}(0)+G_{\varphi}(0)$, so $x \notin \mathrm{supp}(F+G)$. The other follows analogously.
\end{proof}

Using the notion of Bastiani differentiability we can induce a related differentiability for functionals over $\Gamma^{\infty}(M\leftarrow B)$, in the same spirit as done for mappings between manifolds.

\begin{definition}\label{def_1_func_differentiability}
Let $\mathcal{U}$ be $CO$-open, a functional $F\in \mathcal{F}_c(B,\mathcal{U})$ is differentiable of order $k$ at $\varphi \in \mathcal{U}$ if for all $0 \leq j \leq k$ the functionals $d^jF_{\varphi}[0]:  \otimes^j\left(\Gamma^{\infty}_c(M\leftarrow \varphi^*VB)\right) \rightarrow \mathbb{R}: (\vec{X}_1,\ldots, \vec{X}_j) \mapsto d^jF_{\varphi}[0](\vec{X}_1,\ldots, \vec{X}_j)$ are linear and continuous with 
$$
\begin{aligned}
	d^jF_{\varphi}[u_{\varphi}(\varphi)](\vec{X}_1,\ldots, \vec{X}_j)\doteq &\left.\frac{d^j}{dt_1 \ldots dt_j}\right|_{t_1= \ldots=t_j=0}F_{\varphi}(t_1\vec{X}_1+ \cdots+ t_j \vec{X}_j) \\ &= \left\langle F^{(j)}_{\varphi}[0],\vec{X}_1 \otimes \cdots \otimes \vec{X}_j\right\rangle.
\end{aligned}
$$
If $F$ is differentiable of order $k$ at each $\varphi \in \mathcal{U}$ we say that $F$ is differentiable of order $k$ in $\mathcal{U}$. Whenever $F$ is differentiable of order $k$ in $\mathcal{U}$ for all $k \in \mathbb{N}$ we say that $F$ is smooth and denote the set of smooth functionals as $\mathcal{F}_0(B,\mathcal{U})$.
\end{definition}

We shall use the notation $F^{(j)}_{\varphi}[0]$ to emphasize the fact that since $F$ is Bastiani smooth, then $d^jF[\varphi]\equiv d^jF_{\varphi}[0] \in \Gamma^{-\infty}(M^j\leftarrow \boxtimes^j \varphi^*VB )$, thus by Schwartz theorem, we can represent the latter as an integral kernel $F^{(j)}$ with
$$
    \left\langle F^{(j)}_{\varphi}[0],\vec{X}_1 \otimes \ldots \otimes \vec{X}_j\right\rangle = \int_{M^j} f^{(j)}_{\varphi}[0]_{i_1 \cdots i_j}(x_1,\ldots,x_j) \vec X_1^{i_1}(x_1) \cdots \vec X_j^{i_j}(x_j) d\mu_g(x_1,\ldots,x_j) ,
$$
where $\vec X_p=\vec X_p^{i_p}\frac{\partial}{\partial y^{i_p}}\Big|_{y=\varphi(x)}$.   \\

When $F$ is smooth the condition of Definition \ref{def_1_func_differentiability} is independent from the chart we use to evaluate the B differential: suppose we take charts $(\mathcal{U}_{\varphi}, u_{\varphi})$, $(\mathcal{U}_{\psi}, u_{\psi})$ with $\varphi \in \mathcal{U}_{\psi}$, then by Faà di Bruno's formula
\begin{align}\label{eq_1_gamma_loc_change}
\begin{aligned}
	&d^jF_{\psi}[u_{\psi}(\varphi)](\vec{X}_1,\ldots, \vec{X}_j) 
	\\
	& \quad = \sum_{\pi \in \mathscr{P}(\{1,\ldots,j\})} F_{\varphi}^{|\pi|}[0]\left( d^{\vert I_1 \vert}u_{\varphi \psi}[u_{\psi}(\varphi)]\Big(\bigotimes_{i\in I_1}\vec{X}_{i}\Big), \ldots,  d^{\vert I_{|\pi|} \vert}u_{\varphi \psi}[u_{\psi}(\varphi)]\Big(\bigotimes_{i'\in I_{|\pi|}}\vec{X}_{i'}\Big) \right),
\end{aligned}
\end{align}
where $\pi$ is a partition of $\{1,\ldots,j\}$ into $|\pi|$ smaller subsets $I_1,\ldots,I_{|\pi|}$ and we denote by $u_{\varphi \psi}$ the transition function $u_{\varphi}\circ u_{\psi}^{-1}$. We immediately see that the right hand side is Bastiani smooth by the smoothness of the transition function, therefore the left hand side ought to be Bastiani smooth as well. Incidentally the same kind of reasoning shows Definition \ref{def_1_func_differentiability} is independent from the ultralocal atlas used for practical calculations. 

Although this is enough to ensure Bastiani differentiability, in the sequel we shall introduce a connection on the bundle $T\Gamma^{\infty}(M\leftarrow B) \rightarrow \Gamma^{\infty}(M\leftarrow B) $ 
so that \eqref{eq_1_gamma_loc_change} can be written as an equivalence between two single terms involving the covariant derivatives. In particular, as explained in \eqref{eq_1_def_connection}, we will choose a smooth connection 
$\widetilde{\Gamma}$ on the typical fiber $F$ of the bundle $B$, the latter will induce a linear connection $\varphi^*\widetilde{\Gamma}$ on the vector bundle $M\leftarrow \varphi^*VB$, and, in turn, a connection on $T\Gamma^{\infty}(M\leftarrow B)$
\begin{equation}\label{eq_1_conn}
\begin{aligned}
    \Gamma_{\varphi} :&  \Gamma^{\infty}_c(M\leftarrow \varphi^{*}VB)\times  \Gamma^{\infty}_c(M\leftarrow \varphi^{*}VB)
    \to  \Gamma^{\infty}_c(M\leftarrow \varphi^{*}VB)\\
    & (\vec{X},\vec{Y})\mapsto  \Gamma_{\varphi}(\vec{X}_{\varphi},\vec{Y}_{\varphi}),\\
    &\Gamma_{\varphi}(\vec{X},\vec{Y})(x)= \Gamma(\varphi(x))^i_{jk}\vec{X}^j(\varphi(x))\vec{Y}^k(\varphi(x))\partial_i\big|_{\varphi(x)};
\end{aligned}
\end{equation}
where $\vec{X}^j\partial_j\big|_{\varphi}$, $\vec{Y}_{\varphi}^k\partial_k\big|_{\varphi}$ are the expressions in local coordinates of $\vec{X}$, $\vec{Y}\in \Gamma^{\infty}_c(M\leftarrow \varphi^{*}VB)$. Armed with \eqref{eq_1_conn} we can define the notion of \textit{covariant differential} recursively setting
\begin{align}\label{eq_1_cov_der}
\begin{aligned}
	\nabla^{1}F_{\varphi}[0](\vec{X})  &\doteq dF_{\varphi}(\vec{X}) ,\\
	\nabla^{n}F_{\varphi}[0](\vec{X}_1,\ldots, \vec{X}_n)  &\doteq  F^{(n)}_{\varphi}(\vec{X}_1,\ldots, \vec{X}_n)  \\
	&\quad + \sum_{j=1}^n \frac{1}{n!} \sum_{\sigma \in \mathcal{P}(n)} \nabla^{n-1}F_{\varphi} (\Gamma_{\varphi} (\vec{X}_{\sigma(j)}, \vec{X}_{\sigma(n)}), \vec{X}_{\sigma(1)}, \ldots, \widehat{\vec{X}_{\sigma(j)}}, \ldots \vec{X}_{\sigma(n-1)}),
\end{aligned}
\end{align}
where $\mathcal{P}(n)$ denotes the set of permutations of $n$ elements. In this way we can extend properties of iterated derivatives, which are locally defined, globally. The price we pay is that, a priori, the property might depend on the connection chosen.

\begin{lemma}\label{lemma_1_support_property}
Let $\mathcal{U}$ be a locally convex, $CO$-open subset, and $F:\mathcal{U} \rightarrow \mathbb{R}$ a differentiable functional of order one, then
$$
	\mathrm{supp}(F)= \overline{\bigcup_{\varphi \in \mathcal{U}} \mathrm{supp}\left(F^{(1)}_{\varphi}[0]\right)},
$$
where $\mathrm{supp}\left(F^{(1)}_{\varphi}[0]\right)$ is to be understood as the distributional support defined in Definition \ref{def_A_distr._section_support}.
\end{lemma}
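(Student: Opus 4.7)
The plan is to prove the two set inclusions separately; the nontrivial direction is $\mathrm{supp}(F)\subseteq\overline{\bigcup_{\varphi\in\mathcal{U}}\mathrm{supp}(F^{(1)}_{\varphi}[0])}$, while the opposite inclusion follows essentially from the definition of directional derivative.

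For the easy inclusion $\supseteq$, I would take $x\in\mathrm{supp}(F^{(1)}_{\varphi}[0])$ for some $\varphi\in\mathcal{U}$. By the distributional characterization of support, every open neighborhood $V\ni x$ contains a test section $\vec X$ with $\mathrm{supp}(\vec X)\subset V$ and $\langle F^{(1)}_{\varphi}[0],\vec X\rangle\neq 0$. Since this pairing is the directional derivative of $F_{\varphi}$ at $0$, for some small $t$ the value $F_{\varphi}(t\vec X)$ differs from $F_{\varphi}(0)$, and $t\vec X$ is still supported in $V$. Definition~\ref{def_1_func_support} then forces $x\in\mathrm{supp}(F)$, and taking closures yields the inclusion.

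For the reverse inclusion I would argue by contrapositive: assume $x$ lies outside the closure of the union, so there is an open neighborhood $V\ni x$ with $V\cap\mathrm{supp}(F^{(1)}_{\psi}[0])=\emptyset$ for every $\psi\in\mathcal{U}$. Fix $\varphi\in\mathcal{U}$ and $\vec X_{\varphi}\in\Gamma^{\infty}_c(M\leftarrow\varphi^{*}VB)$ supported in a smaller neighborhood $V'\subset V$, small enough that by local convexity of $\mathcal{U}$ the segment $t\mapsto t\vec X_{\varphi}$, $t\in[0,1]$, lies inside $u_{\varphi}(\mathcal{U}\cap\mathcal{U}_{\varphi})$. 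Setting $\psi_t\doteq u_{\varphi}^{-1}(t\vec X_{\varphi})\in\mathcal{U}$, the fundamental theorem of calculus for Bastiani smooth curves (Theorem~\ref{thm_A_Riemann_curve_integral}) gives
\[
F_{\varphi}(\vec X_{\varphi})-F_{\varphi}(0)=\int_0^1\langle F^{(1)}_{\varphi}[t\vec X_{\varphi}],\vec X_{\varphi}\rangle\,dt.
\]
To force the integrand to vanish I would relate it to $F^{(1)}_{\psi_t}[0]$ via the chain rule~\eqref{eq_1_gamma_loc_change}: writing $u_{\varphi\psi_t}=u_{\varphi}\circ u_{\psi_t}^{-1}$, one has $F^{(1)}_{\psi_t}[0](\vec Y)=F^{(1)}_{\varphi}[t\vec X_{\varphi}]\bigl(du_{\varphi\psi_t}[0]\,\vec Y\bigr)$ for every $\vec Y\in\Gamma^{\infty}_c(M\leftarrow\psi_t^{*}VB)$. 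Choosing $\vec Y=(du_{\varphi\psi_t}[0])^{-1}\vec X_{\varphi}$ and invoking the hypothesis on $V$ yields $\langle F^{(1)}_{\varphi}[t\vec X_{\varphi}],\vec X_{\varphi}\rangle=0$ for all $t\in[0,1]$, so $F_{\varphi}(\vec X_{\varphi})=F_{\varphi}(0)$ and $x\notin\mathrm{supp}(F)$.

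The main obstacle is verifying that $du_{\varphi\psi_t}[0]$ preserves the spacetime support of sections, which is what makes the chain-rule substitution above legitimate. This is not a generic Bastiani-smoothness statement but relies on the explicit ultralocal form~\eqref{eq_1_tilde_exp} of the chart mappings: $u_{\varphi}(\psi)(x)$ depends only on the pointwise values $\varphi(x)$ and $\psi(x)$. Consequently, by Theorem~\ref{thm_A_Bastiani smooth_pushforward}, the transition $u_{\varphi\psi_t}$ is the push-forward of a smooth fibered isomorphism covering $\mathrm{id}_M$, and its differential at $0$ is itself a pointwise fiber map, hence acts by pre- and post-composition without enlarging supports. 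With this in hand the argument closes, and both inclusions combine to the stated equality.
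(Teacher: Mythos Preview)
Your proof is correct and follows essentially the same approach as the paper: both directions rely on the fundamental theorem of calculus along the segment $t\mapsto t\vec X_{\varphi}$ together with the chain rule~\eqref{eq_1_gamma_loc_change} to pass between $F^{(1)}_{\varphi}[t\vec X_{\varphi}]$ and $F^{(1)}_{\psi_t}[0]$. The only organisational difference is that for the inclusion $\mathrm{supp}(F)\subseteq\overline{\bigcup_\varphi\mathrm{supp}(F^{(1)}_\varphi[0])}$ you argue by contrapositive, whereas the paper argues directly (from $x\in\mathrm{supp}(F)$ it locates a $\lambda_0$ where the integrand is nonzero and exhibits $\psi=u_\varphi^{-1}(\lambda_0\vec X_\varphi)$ with $x\in\mathrm{supp}(F^{(1)}_\psi[0])$); your explicit discussion of why the ultralocal transition maps preserve spacetime supports is a point the paper leaves implicit.
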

\begin{proof}
Suppose that $x \in \mathrm{supp}(F)$, then by definition for all open neighborhoods $V$ of $x$ there is $\varphi \in \mathcal{U}$ and $\vec{X}_{\varphi} \in \Gamma^{\infty}_c(M\leftarrow \varphi^{*}VB)$ with $\mathrm{supp}(\vec{X}) \subset V$ having $F_{\varphi}(\vec{X}_{\varphi}) \neq F_{\varphi}(0)$, using the convexity of $\mathcal{U}$ and the fundamental theorem of calculus we obtain that
$$
	F_{\varphi}( \vec{X}_{\varphi}) - F_{\varphi}(0)= \int_0^1 F_{\varphi}^{(1)}[\lambda  \vec{X}_{\varphi}]( \vec{X}_{\varphi}) \mathrm{d}\lambda \neq 0.
$$
Thus for at least for some $\lambda_0 \in (0,1)$, the integrand is not zero, setting $\psi=u_{\varphi}^{-1}(\lambda_0  \vec{X}_{\varphi} )$, we obtain 
$$
	dF_{\psi}[0]\left(d^1u_{\varphi \psi}[\lambda_0 \vec{X}_{\varphi}]( \vec{X}_{\varphi})\right)\neq 0.
$$
On the other hand if $x \in \mathrm{supp}\left( dF_{\varphi}[0] \right)$ for some $\varphi \in \mathcal{U}$, then there is $\vec{X}_{\varphi} \in \Gamma^{\infty}_c(M\leftarrow \varphi^*VB)$ having $\vec{X}_{\varphi}(x)\neq \vec{0}$ for which $dF_{\varphi}[0](\vec{X}_{\varphi}) \neq 0$, as a result, define
$$
	 F_{\varphi}(\epsilon \vec{X}_{\varphi}) = F_{\varphi}(0)+ \int_0^{\epsilon} F_{\varphi}^{(1)}[\lambda \vec{X}_{\varphi}]( \vec{X}_{\varphi}) \mathrm{d}\lambda
$$
having chosen $\epsilon$ small enough so that the integral is not vanishing.
\end{proof}


\begin{definition}\label{def_1_func_classes}
Let $\mathcal{U}$ be $CO$-open. We select certain classes of $\mathcal{F}_c(B,\mathcal{U})$.
\begin{itemize}
    \item[$(i)$] \textbf{Regular Functionals}: the set of $F \in \mathcal{F}_c(B,\mathcal{U})$ such that for each $\varphi\in \mathcal{U}$, the integral kernel associated to $\nabla^{k}F_{\varphi}[0]$, 
    $$
	    \nabla^{k}F_{\varphi}[0](\vec{X}_1,\ldots,\vec{X}_k) = \int_{M^k} \nabla^{k} f_{\varphi}[0](x_1,\ldots,x_k)\vec{X}_1(x_1)\cdots\vec{X}_k(x_k) d\mu_{g}(x_1,\ldots,x_k)
    $$ 
    has $\nabla^{k} f_{\varphi}[0]\in \Gamma^{\infty}_c\big(M^k\leftarrow \boxtimes^k \big(\varphi^{*}VB' \big)\big)$, we denote this set by $\mathcal{F}_{\mathrm{reg}}(B,\mathcal{U})$;

    \item[$(ii)$] \textbf{Local Functionals}: the set of $F \in \mathcal{F}_c(B,\mathcal{U})$ such that for each $\varphi\in \mathcal{U}$, $\mathrm{supp}\big(\nabla^{(2)}F_{\varphi}[0]\big) \subset \triangle_2(M)$, the latter being the diagonal of $M\times M$, we denote this set by $\mathcal{F}_{loc}(B,\mathcal{U})$;
    
    \item[$(iii)$] \textbf{Microlocal Functionals}: the set of $F \in \mathcal{F}_{loc}(B,\mathcal{U})$ such that for each $\varphi\in \mathcal{U}$, the integral kernel associated to $\nabla^{1}F_{\varphi}[0]\equiv dF_{\varphi}[0]$ has $f^{(1)}_{\varphi}[0] \in \Gamma^{\infty}\big(M\leftarrow (\varphi^{*}VB)' \big)$, we denote this set by $\mathcal{F}_{\mu loc}(B,\mathcal{U})$.
\end{itemize}
\end{definition}

Using the Schwartz kernel theorem, we can equivalently define microlocal functionals by requiring $\{ F \in \mathcal{F}_{loc}(B,\mathcal{U}) \space\ \mathrm{:} \  \mathrm{WF}\big(F^{(1)}_{\varphi}[0]\big)=\emptyset \ \forall \varphi \in \mathcal{U}\}$. Other authors add also further requirements, for example in \cite{BDGR14}, microlocal functionals have the additional property that given any $\varphi\in \mathcal{U}$ there exists an open neighborhood $\mathcal{V}\ni \varphi$ in which $f^{(1)}_{\varphi'}[0]\in \Gamma^{\infty}\left(M\leftarrow (\varphi^{*}VB)' \right)$ depends on the $k$th order jet of $\varphi'$ for all $\varphi'\in \mathcal{V}$ and some $k\in \mathbb{N}$. We choose to give a somewhat more general description which however will turn out to be almost equivalent by Proposition \ref{porop_1_muloc_charachterization}. Finally we stress that the definition of local functionals together with Lemma \ref{lemma_1_support_property} shows that that $\mathrm{supp}\big(\nabla^{k}F_{\varphi}[0]\big) \subset \triangle_k(M)$ for each $k\in \mathbb{N}$.\\

As remarked earlier, writing differentials with a connection does yield a  definition which is independent from the ultralocal chart chosen to perform the calculations, however, we have to check that Definition \ref{def_1_func_classes} is independent from the chosen connection. 

\begin{lemma}
    Suppose that $\Phi$, $\widehat{\Phi}$ are two connections on $T\Gamma^{\infty}(M\leftarrow B)$. Then the definition of regular (resp. local, microlocal) functionals does not depend on the chosen connection. 
\end{lemma}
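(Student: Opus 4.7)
The plan is to show that, since both connections $\Gamma$ and $\widehat{\Gamma}$ arise from fiber connections on $B$ via \eqref{eq_1_conn}, their difference $T_\varphi \doteq \Gamma_\varphi - \widehat{\Gamma}_\varphi$ acts \emph{ultralocally} on sections: for any $\vec{X}, \vec{Y}\in \Gamma^{\infty}_c(M\leftarrow \varphi^{*}VB)$,
$$
T_\varphi(\vec{X},\vec{Y})(x) = T(\varphi(x))^i_{jk}\vec{X}^j(x)\vec{Y}^k(x)\partial_i\big|_{\varphi(x)},
$$
where $T=\Gamma-\widehat{\Gamma}$ is a smooth $(1,2)$-tensor field on $B$. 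Consequently $T_\varphi$ is a pointwise, smooth, ultralocal bilinear operator, and the integral kernel of $T_\varphi(\cdot,\cdot)$ is a smooth section of a vector bundle over $\Delta_2(M)\subset M\times M$ supported on the small diagonal.

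The microlocal case is immediate: the condition in Definition \ref{def_1_func_classes}(iii) involves only $\nabla^1 F_\varphi[0]\equiv dF_\varphi[0]=F^{(1)}_\varphi[0]$, which is manifestly connection-independent. Hence the class $\mathcal{F}_{\mu loc}(B,\mathcal{U})$ is defined without any reference to the connection.

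For the local case, one compares $\nabla^2 F_\varphi[0]$ and $\widehat{\nabla}^2 F_\varphi[0]$ using the recursion \eqref{eq_1_cov_der}. The terms $F^{(2)}_\varphi[0]$ cancel, and the remaining difference reduces (after symmetrization in $\vec{X}_1,\vec{X}_2$) to $dF_\varphi[0]\big(T_\varphi(\vec{X}_1,\vec{X}_2)\big)$. Since $T_\varphi$ is ultralocal, the integral kernel of this correction is supported on $\Delta_2(M)$. Therefore $\mathrm{supp}\big(\nabla^2 F_\varphi[0]\big)\subset \Delta_2(M)$ if and only if $\mathrm{supp}\big(\widehat{\nabla}^2 F_\varphi[0]\big)\subset \Delta_2(M)$, proving connection-independence of $\mathcal{F}_{loc}(B,\mathcal{U})$.

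For the regular case, proceed by induction on $n$, showing that
$$
\nabla^n F_\varphi[0] - \widehat{\nabla}^n F_\varphi[0] = \sum_{k<n} R_k\big(\widehat{\nabla}^k F_\varphi[0];\, T,\Gamma,\widehat{\Gamma}\big),
$$
where each $R_k$ contracts a lower-order covariant derivative with polynomial expressions in $T$ and (iterated) Christoffel tensors, all of which act pointwise on fiber indices by multiplication with smooth coefficients. Such contractions preserve both smoothness and compact support of integral kernels, so by the induction hypothesis the kernel of $\nabla^n F_\varphi[0]$ is smooth and compactly supported whenever that of $\widehat{\nabla}^n F_\varphi[0]$ is, and vice versa. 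The main technical work is the bookkeeping required to derive this inductive identity by unwinding \eqref{eq_1_cov_der} iteratively and isolating the ultralocal tensorial corrections; once the identity is established, smoothness and compact support transfer trivially because multiplication by smooth ultralocal tensors preserves $\Gamma^{\infty}_c\big(M^k\leftarrow \boxtimes^k (\varphi^{*}VB)'\big)$.
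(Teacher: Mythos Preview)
Your proposal is correct and follows essentially the same approach as the paper: the difference of second covariant derivatives reduces to $dF_\varphi[0]\big(T_\varphi(\vec{X}_1,\vec{X}_2)\big)$ with $T_\varphi$ ultralocal, giving locality; $\nabla^1 F = dF$ handles microlocality; and an induction on $n$ using that $\nabla^n F_\varphi - \widehat{\nabla}^n F_\varphi$ is built from lower-order covariant derivatives contracted with ultralocal tensors handles regularity. One minor ordering issue: your microlocal argument is not quite ``immediate'' on its own, since Definition~\ref{def_1_func_classes}(iii) requires $F\in\mathcal{F}_{loc}$ as part of the definition of $\mathcal{F}_{\mu loc}$, so the locality case must be settled first (as the paper does) before the additional smoothness condition on $f^{(1)}_\varphi[0]$ can be declared connection-independent.
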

\begin{proof}
Denote by $\nabla$, $\widehat{\nabla}$ the covariant derivatives induced by $\Phi$, $\widehat{\Phi}$ respectively. If $F \in \mathcal{F}_c(B,\mathcal{U})$ is local with respect to the second connection,
$$
	\left( \nabla^{2} F_{\varphi} - \widehat{\nabla}^{2}F_{\varphi} \right) [0] (\vec{X}_1,\vec{X}_2) = dF_{\varphi}[0] \left( \Gamma_{\varphi}(\vec{X}_1,\vec{X}_2) - \widehat{\Gamma}_{\varphi}(\vec{X}_1,\vec{X}_2) \right).
$$
Due to linearity of the connection in both arguments, when the two sections $\vec{X}_1$, $\vec{X}_2$ have disjoint support the resulting vector field is identically zero, so that by linearity of $dF_{\varphi}[0](\cdot)$ the expression is zero and locality is preserved. As a result, since $\nabla^{1}F_{\varphi}[0]\equiv dF_{\varphi}[0]$, we immediately obtain that microlocality is independent as well. Regular functionals do not depend on the connection used to perform calculations either: this is easily seen by induction. If $k=1$ this is trivial since $\nabla^{1}F \equiv dF$, for arbitrary $k$ one simply notes that $\left( \nabla^{k} F_{\varphi} - \widehat{\nabla}^{k}F_{\varphi} \right) [0] (\ldots)$ depends on terms of order $l \leq k-1$ and applies the induction hypothesis. 
\end{proof}

We stress that in particular cases, such as when $B= M \times \mathbb{R}$, $TC^{\infty}(M) \equiv C^{\infty}(M) \times C^{\infty}_c(M)$, we are allowed to choose a trivial connection, in which case the differential and the covariant derivative coincide. It is also possible to formulate Definition \ref{def_1_func_classes} in terms of differentials instead of covariant derivatives, then the above argument can be used again to show that regular and local functionals do not depend on the choice of the chart.\\

When dealing with microlocal functionals we will often use the following notation ensuing from application of Schwartz integral kernel theorem:
\begin{equation}\label{eq_1_kernel_notation}
	dF_{\varphi}[0](\vec{X}_{\varphi})= \int_M f^{(1)}_{\varphi}[0](\vec{X}_{\varphi})(x) d\mu_g(x)= \int_M f^{(1)}_{\varphi}[0]_i(x)X_{\varphi}^i(x) d\mu_g(x),
\end{equation}
where repeated indices denotes summation of vector components as usual with Einstein notation and $X_{\varphi}^i(x) \in \varphi^*p^{-1}(x)$ denotes the component of the section along the typical fiber of the vector bundle $(\Gamma^{\infty}_c(M\leftarrow \varphi^*VB),\varphi^*p,M)$.\\

If we go back to the examples of functionals given earlier we find that \eqref{eq_1_generic_func} does not belong to any class, while \eqref{eq_1_reg_func} is a regular functional that however fails to be local. If $D\subset M$ is a compact subset and $\chi_D$ its characteristic function then 
$$
    \varphi \mapsto \mathcal{L}_{\chi_D,\lambda}(\varphi) \doteq  \int_M \chi_D(x)\lambda(j^r\varphi)(x)\mathrm{d}\mu_g(x).
$$
is a local functional which however, is not microlocal due to the possible singularities localized in the boundary of $D$. Finally we claim that \eqref{eq_1_gen_Lag_1} is a microlocal functional. To see it, let us consider a particular example where $r=1$,
$$
	\mathcal{L}_{f,\lambda}(\varphi)=\int_M f j^1 \varphi^{*}\lambda= \int_M f(x) \lambda(j^1\varphi)(x) d\mu_g(x)
$$
taking the first derivative and integrating by parts yields
\begin{equation}\label{eq_1_derivative_lag_wavemaps}
    d\mathcal{L}_{f,\lambda,\varphi}[0](\vec{X}_{\varphi})=\int_{M}f(x)\bigg\{\frac{\partial\lambda}{\partial y^i}-d_{\mu}\bigg( \frac{\partial \lambda}{\partial y^i_{\mu}} \bigg)   \bigg\}(x) X_{\varphi}^i(x) \mathrm{d}\mu_g(x),
\end{equation}
setting 
\begin{equation}
	\lambda_{f,\varphi}^{(1)}[0](x) \doteq f(x)\left\lbrace  \frac{\partial \lambda}{\partial y^i}-d_{\mu}\left( \frac{\partial \lambda}{\partial y^i_{\mu}} \right)   \right\rbrace(x) dy^i \wedge \mathrm{d}\mu_g(x),
\end{equation}
we see that the integral kernel of the first derivative in $\varphi$ of \eqref{eq_1_gen_Lag_1}, $\lambda_{f,\varphi}^{(1)}[0]$, belongs to $\Gamma^{\infty}\left(M\leftarrow (\varphi^{*}VB)' \right) $. For generic orders $r\neq 1$, multiple integration by parts will yield the desired result, for details on those calculations see \cite{fati} Chapter 6. This last example is important because it shows that functionals obtained by integration of pull-backs of $m$-forms $\lambda$ are microlocal. One could ask whether the converse can hold, \textit{i.e.} if all microlocal functionals have this form; the answer will be given in Proposition \ref{porop_1_muloc_charachterization}. We now give an equivalent characterization for local functionals.

\begin{definition} \label{def_1_additivity}
Let $\mathcal{U}$ be $CO$-open, a functional $F\in \mathcal{F}_c(B,\mathcal{U})$ is called:
\begin{itemize}
\item[$(i)$]$\varphi_0$-additive if for all $\varphi_j \in \mathcal{U}_{\varphi_0}\cap\, \mathcal{U}$ having $\mathrm{supp}_{\varphi_0}(\varphi_1) \cap \mathrm{supp}_{\varphi_0}(\varphi_{-1}) = \emptyset$, setting $\vec{X}_j=u_{\varphi_0}(\varphi_j)$, $j=1,-1$ and supposing that $\vec{X}_1+\vec{X}_{-1}\in u_{\varphi_0}(\mathcal{U}_{\varphi_0}\cap\mathcal{U})$, we have
\begin{equation}\label{eq_1_loc_additivity}
	F_{\varphi_0}(\vec{X}_1 {+} \vec{X}_{-1})= F_{\varphi_0}(\vec{X}_1) - F_{\varphi_0}(0) + F_{\varphi_0}(\vec{X}_{-1}).
\end{equation}
\item[$(ii)$] additive if for all $\varphi_j\in \mathcal{U}$, $j=1,0,-1$, with $\mathrm{supp}_{\varphi_0}(\varphi_1)\cap \mathrm{supp}_{\varphi_0}(\varphi_{-1})=\emptyset$, setting 
$$ \varphi=
\begin{cases}
	\varphi_1 \ \ \ \mathrm{in} \space\ \mathrm{supp}_{\varphi_0}(\varphi_{-1})^c\\
\varphi_{-1} \hspace{0.17cm}  \mathrm{in} \space\ \mathrm{supp}_{\varphi_0}(\varphi_1)^c
\end{cases}
$$
we have 
\begin{equation}\label{eq_1_glob_additivity}
	F(\varphi)=F(\varphi_1)+F(\varphi_0)-F(\varphi_{-1}).
\end{equation}
\end{itemize}
\end{definition}
We remark that $(ii)$ is equivalent to the definition of additivity present in \cite{gravbrunetti}. Before the proof of the equivalence of those two relations, we prove a technical lemma.

\begin{lemma} \label{lemma_1_interpolation_of_sections}
Let $\varphi_1$, $\varphi_0$, $\varphi_{-1} \in \Gamma^{\infty}(M\leftarrow B)$ have $\mathrm{supp}_{\varphi_0}(\varphi_1)\cap \mathrm{supp}_{\varphi_0}(\varphi_{-1})=\emptyset$, then there exist $n\in \mathbb{N}$, a finite family of sections  
\begin{equation}\label{eq_1_interpolation_0}
	\left\{\varphi_{(k,l)}\right\}_{k,l \in \{1,\ldots,n\}}
\end{equation}
for which the following conditions holds:
\begin{itemize}
\item[$(a)$] For each $k$, $l\in \mathbb{N}$
\begin{equation}\label{eq_1_intepolation_2}
	\left\{\varphi_{(k-1,l-1)},\varphi_{(k,l-1)},\varphi_{(k-1,l)},\varphi_{(k+1,l)},\varphi_{(k,l+1)},\varphi_{(k+1,l+1)} \right\}\in \mathcal{U}_{\varphi_{(k,l)}},
\end{equation}
\item[$(b)$]  Moreover for each $k$, $l\in \mathbb{N}$ we can define elements $\vec{X}_{k}$, $\vec{Y}_{l} \in \Gamma^{\infty}_c(M\leftarrow \cdot^{*}VB)$, where 
\begin{equation}\label{eq_+_vector}
	\vec{X}_{k} \doteq \widetilde\exp^{-1}_{\varphi_{(k-1,l)}}\left( \varphi_{(k,l)}\right),
\end{equation}
\begin{equation}\label{eq_-_vector}
	\vec{Y}_{l} \doteq \widetilde\exp^{-1}_{\varphi_{(k,l-1)}}\left( \varphi_{(k,l)} \right);
\end{equation}
whose exponential flows generate all the above sections:
$$
	\varphi_1=\widetilde\exp\big(\vec{X}_{n}\big)\circ\cdots \circ \widetilde\exp\big(\vec{X}_{1}\big) \circ \varphi_0\equiv \varphi_{(n,0)};
$$
$$
	\varphi_{-1}=\widetilde\exp\big(\vec{Y}_{n}\big)\circ\cdots \circ \widetilde\exp\big(\vec{Y}_{1}\big)\circ \varphi_0\equiv \varphi_{(0,n)};
$$
and
$$
    \varphi= \widetilde\exp\big(\vec{X}_{n}\big)\circ \cdots \circ \widetilde\exp\big(\vec{X}_{1}\big) \circ \widetilde\exp\big(\vec{Y}_{n}\big)\circ \cdots \circ \widetilde\exp\big(\vec{Y}_{1}\big)\circ  \varphi_0.
$$
\end{itemize}
\end{lemma}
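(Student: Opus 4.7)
My plan is to prove the lemma in three stages: first construct the one-dimensional chains $\{\varphi_{(k,0)}\}_{k=0}^{n}$ and $\{\varphi_{(0,l)}\}_{l=0}^{n}$ connecting $\varphi_0$ respectively to $\varphi_1$ and $\varphi_{-1}$; then extend these two chains to a two-dimensional lattice using the disjoint-support hypothesis; finally verify (a) and (b).

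For the first stage, set $K_{+}=\mathrm{supp}_{\varphi_0}(\varphi_1)$ and $K_{-}=\mathrm{supp}_{\varphi_0}(\varphi_{-1})$, both compact and, by hypothesis, disjoint. Working with $K_{+}$, I will first exhibit a continuous path $\sigma\in[0,1]\mapsto \varphi^{(\sigma)}\in \Gamma^{\infty}(M\leftarrow B)$ with $\varphi^{(0)}=\varphi_0$, $\varphi^{(1)}=\varphi_1$, and $\varphi^{(\sigma)}\equiv \varphi_0$ off $K_{+}$. Such a path can be produced as follows: cover the compact set $\varphi_0(K_+)\cup\varphi_1(K_+)\subset B$ by finitely many trivialising charts of $B$, and using a partition of unity subordinated to this cover interpolate $\varphi_0$ and $\varphi_1$ piecewise through tubular neighbourhoods (each fiber of the tubular neighbourhood is a vector space where convex interpolation is straightforward). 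Once the path exists, its image is compact in the refined Whitney topology; using the characterisation of the chart neighbourhoods $\mathcal{U}_{\varphi^{(\sigma)}}$ from~\eqref{eq_1_tilde_exp}, a continuity / uniform-injectivity-radius argument produces a partition $0=\sigma_0<\sigma_1<\cdots<\sigma_n=1$ so fine that $\varphi^{(\sigma_{k})}\in \mathcal{U}_{\varphi^{(\sigma_{k-1})}}$ for all $k$. Set $\varphi_{(k,0)}\doteq \varphi^{(\sigma_k)}$. Repeating the argument on $K_{-}$, and possibly refining the two partitions to a common cardinality, we obtain $\{\varphi_{(0,l)}\}_{l=0}^{n}$ with the analogous properties.

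For the second stage, exploit the disjointness of $K_{+}$ and $K_{-}$: for $1\le k,l\le n$ define
\begin{equation*}
\varphi_{(k,l)}(x)\doteq
\begin{cases}
\varphi_{(k,0)}(x) & x\in K_{+},\\
\varphi_{(0,l)}(x) & x\in K_{-},\\
\varphi_0(x) & x\in M\setminus(K_+\cup K_-).
\end{cases}
\end{equation*}
Smoothness is automatic because the three prescriptions agree in overlaps (by construction $\varphi_{(k,0)}\equiv\varphi_0$ off $K_+$, and $\varphi_{(0,l)}\equiv\varphi_0$ off $K_-$), and one sees $\varphi_{(n,n)}=\varphi$. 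The inclusions \eqref{eq_1_intepolation_2} follow because the pointwise difference of $\varphi_{(k,l)}$ and, say, $\varphi_{(k+1,l)}$ is supported in $K_{+}$ where by the one-dimensional construction the increment already fits inside the chart $\mathcal U_{\varphi_{(k,l)}}$; the disjointness with $K_-$ ensures that the $l$-increments leave this condition untouched. This settles (a).

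For (b), the vectors $\vec X_k$ and $\vec Y_l$ are \emph{defined} through the inverse local addition so~\eqref{eq_+_vector} and~\eqref{eq_-_vector} are tautological, and the compositional identities for $\varphi_1$, $\varphi_{-1}$ and $\varphi$ then follow immediately by telescoping along the lattice edges, using $\widetilde\exp\circ\widetilde\exp^{-1}=\mathrm{id}$ at each link. I expect the main obstacle to lie in the first stage, namely the rigorous construction of the smooth path $\sigma\mapsto\varphi^{(\sigma)}$ inside an arbitrary non-trivial bundle together with the uniform estimate allowing the partition to be chosen so that neighbouring interpolants always lie within one another's chart; this is where the tubular neighbourhood machinery of~\cite[Lemma 10.9]{michor1980manifolds} together with the modified local addition of~\eqref{eq_1_tilde_exp} becomes essential, and where the compactness of $K_\pm$ is crucially used to reduce an a priori infinite process to a finite one.
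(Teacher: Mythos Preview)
Your proposal is correct and follows the same overall architecture as the paper: build one-dimensional chains toward $\varphi_1$ and $\varphi_{-1}$, then splice them into a two-dimensional lattice via the disjoint-support hypothesis. The second and third stages are essentially identical to the paper's argument.

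The one genuine difference is in stage one. The paper does not construct a continuous path at all: it fixes a Riemannian metric on $B$ compatible with the fibration, takes a compact $H\subset B$ containing a suitable neighbourhood of $\varphi_0(K)\cup\varphi_1(K)\cup\varphi_{-1}(K)$, lets $\delta>0$ be the injectivity radius of this metric on $H$, and then directly chooses $n$ with $n\delta<r<(n+1)\delta$ (where $r$ is the maximal fibrewise distance) and interpolating sections at mutual distance $<\delta/2$. Your approach instead produces a continuous curve $\sigma\mapsto\varphi^{(\sigma)}$ and extracts the subdivision from compactness of its image plus a Lebesgue-number argument on the chart cover. Both routes work; the paper's is slightly more direct (no path needed, the metric does all the work), while yours makes the existence of intermediate sections more explicit at the cost of having to justify the path construction---which, as you note, is where the tubular-neighbourhood machinery really enters. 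One small point you leave implicit but the paper states: the well-definedness of $\vec X_k$ independently of $l$ (and of $\vec Y_l$ independently of $k$), together with the commutation $\widetilde\exp(\vec X_k)\circ\widetilde\exp(\vec Y_l)=\widetilde\exp(\vec Y_l)\circ\widetilde\exp(\vec X_k)$, both follow from your lattice definition since the $K_+$- and $K_-$-increments live on disjoint sets, but it is worth saying so explicitly.
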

\begin{proof}
Ideally we are taking $\varphi_0$ as a background section, then application of a number of exponential flows of the above fields will generate new sections interpolating between $\varphi_0$ and $\varphi,\varphi_1,\varphi_{-1}$, such that each section in the interpolation procedure has the adjacent sections in the same chart (as in \eqref{eq_1_intepolation_2}). This is, for a pair of generic sections, not trivial; however, due to the requirement of mutual compact support between sections, our case is special. Indeed, let $K$ be any compact containing $\mathrm{supp}_{\varphi_0}(\varphi_{1})\cup \mathrm{supp}_{\varphi_0}(\varphi_{-1}) $. Since $B$ is itself a paracompact manifold, it admits an exhaustion by compact subsets and a Riemannian metric compatible with the fibered structure. The exponential mapping of this metric will have a positive injective radius throughout any compact subset of $B$. Thus let $H$ be any compact subset of B containing the bounded subset 
$$
    \Big\{b\in \pi^{-1}(K)\subset B : \sup_{x\in K}d(\varphi_0(x),b) < 2\max\big(\sup_{x\in K} d(\varphi_0(x),\varphi_1(x)), \sup_{x\in K} d(\varphi_0(x),\varphi_{-1}(x))\big)\Big\},
$$
where $d$ is the distance induced by the metric chosen. Let $\delta>0$ be the injective radius of the metric on the compact $H$. If $r=\max\big(\sup_{x\in K} d(\varphi_0(x),\varphi_1(x)), \sup_{x\in K} d(\varphi_0(x),\varphi_{-1}(x))\big)$ there will be some finite $n \in \mathbb{N}$ such that $n\delta < r <(n+1)\delta$, and thus we can select a finite family of sections $\left\{\varphi_{(k,l)}\right\}_{k,l =1,\ldots,n}$ interpolating between $\varphi_0=\varphi_{(0,0)}$ and $\varphi_1=\varphi_{(n,0)}$, $\varphi_{-1}=\varphi_{(0,n)}$, $\varphi=\varphi_{(n,n)}$ such that 
$$
	(|k-k'|-1)\frac{\delta}{2}+(|l-l'|-1)\frac{\delta}{2}<\sup_{x\in K}d\left(\varphi_{(k,l)}(x),\varphi_{(k',l')}(x)\right)< (|k-k'|)\frac{\delta}{2}+(|l-l'|)\frac{\delta}{2},
$$
This property ensures that we are interpolating in the right direction, that is, as $k$ (resp. $l$) grows new sections are nearer to $\varphi_1$ (resp. $\varphi_{-1})$ and further away from $\varphi_0$. Eventually modifying $\exp$ to $\widetilde\exp$ as done in \eqref{eq_1_tilde_exp}, set
$$
    \vec{X}_{(k,l)} \doteq \widetilde\exp^{-1}_{\varphi_{(k-1,l)}}\left( \varphi_{(k,l)}\right),
$$
$$
	\vec{Y}_{(k,l)} \doteq \widetilde\exp^{-1}_{\varphi_{(k,l-1)}}\left( \varphi_{(k,l)} \right).
$$
We claim that those are the vector fields interpolating between sections. They are always well defined because, by construction, we choose adjacent sections to be separated by a distance where $\widetilde\exp$ is still a diffeomorphism. Due to the mutual disjoint support of $\varphi_1$ and $\varphi_{-1}$, we can identify $\vec{X}_{(k,l)}$ (resp. $\vec{Y}_{(k,l)}$) with each other $\vec{X}_{(k,l')}$ (resp. $\vec{Y}_{(k',l)}$), therefore it is justified to use one index to denote the vector fields as done in \eqref{eq_+_vector} and \eqref{eq_-_vector}. Moreover, for each $k$, $l\in \mathbb{N}$, we have 
$$
    \widetilde\exp\big(\vec{X}_{k}\big)\circ \widetilde\exp\big(\vec{Y}_{l}\big)=\widetilde\exp\big(\vec{Y}_{l}\big)\circ \widetilde\exp\big(\vec{X}_{k}\big);
$$
which provides a well defined section $\varphi$.
\end{proof}

\begin{proposition} \label{prop_1_additivity}
Let $F \in \mathcal{F}_0(B,\mathcal{U})$ then the following statements are equivalent:
\begin{itemize}
\item[$(i)$] $F$ is additive; 
\item[$(ii)$] $F$ is $\varphi_0$-additive for all $\varphi_0 \in \mathcal{U}$;
\item[$(iii)$] $F \in \mathcal{F}_{\mathrm{loc}}(B,\mathcal{U)}$.
\end{itemize}
\end{proposition}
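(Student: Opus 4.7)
The plan is to prove $(i) \Leftrightarrow (ii)$ by direct translation between the global and chart formulations, then $(ii) \Leftrightarrow (iii)$ by differentiating in chart coordinates.

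For $(i) \Rightarrow (ii)$, I will fix $\varphi_0 \in \mathcal{U}$ and pick $\vec{X}_1, \vec{X}_{-1}$ such that both they and their sum lie in $u_{\varphi_0}(\mathcal{U}_{\varphi_0} \cap \mathcal{U})$, with $\mathrm{supp}_{\varphi_0}(\varphi_1) \cap \mathrm{supp}_{\varphi_0}(\varphi_{-1}) = \emptyset$ where $\varphi_j = u_{\varphi_0}^{-1}(\vec{X}_j)$. Because $u_{\varphi_0}$ is ultralocal, the section $\varphi = u_{\varphi_0}^{-1}(\vec{X}_1 + \vec{X}_{-1})$ coincides pointwise with $\varphi_1$ where $\vec{X}_{-1} = 0$ and with $\varphi_{-1}$ where $\vec{X}_1 = 0$, matching the gluing prescription of \eqref{eq_1_glob_additivity}. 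Substitution then yields \eqref{eq_1_loc_additivity}. Conversely, for $(ii) \Rightarrow (i)$, I will invoke Lemma~\ref{lemma_1_interpolation_of_sections} to obtain a finite family $\{\varphi_{(k,l)}\}$ interpolating between $\varphi_0$, $\varphi_1$, $\varphi_{-1}$, $\varphi$ with neighbouring sections lying in a common ultralocal chart and with the $\vec{X}_k, \vec{Y}_l$ of \eqref{eq_+_vector}--\eqref{eq_-_vector} inheriting the disjoint-support property. A telescoping induction applying $\varphi_{(k,l)}$-additivity at each interpolating section sums the local identities into \eqref{eq_1_glob_additivity}.

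For $(ii) \Rightarrow (iii)$, fix $\varphi_0$ and $\vec{X}_1, \vec{X}_{-1}$ with disjoint supports, and take $s,t$ small enough for $s\vec{X}_1, t\vec{X}_{-1}, s\vec{X}_1 + t\vec{X}_{-1}$ all to lie in the chart image. Applying \eqref{eq_1_loc_additivity} gives
\[
F_{\varphi_0}(s\vec{X}_1 + t\vec{X}_{-1}) = F_{\varphi_0}(s\vec{X}_1) - F_{\varphi_0}(0) + F_{\varphi_0}(t\vec{X}_{-1}),
\]
whose right-hand side contains no mixed $st$ dependence. Applying $\partial^2/\partial s \partial t$ at $s=t=0$ yields $F^{(2)}_{\varphi_0}[0](\vec{X}_1, \vec{X}_{-1}) = 0$. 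Since $\Gamma_{\varphi_0}$ is pointwise bilinear, $\Gamma_{\varphi_0}(\vec{X}_1, \vec{X}_{-1}) \equiv 0$ on disjoint supports, so the recursion \eqref{eq_1_cov_der} gives $\nabla^2 F_{\varphi_0}[0](\vec{X}_1, \vec{X}_{-1}) = 0$, which means $\mathrm{supp}(\nabla^2 F_{\varphi_0}[0]) \subset \triangle_2(M)$.

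The main obstacle is the reverse implication $(iii) \Rightarrow (ii)$: one must upgrade locality at the chart origin to vanishing of the second derivative along the entire segment $s\vec{X}_1 + t\vec{X}_{-1}$. The approach is to set $g(s,t) = F_{\varphi_0}(s\vec{X}_1 + t\vec{X}_{-1})$ and compute $\partial_s\partial_t g = F^{(2)}_{\varphi_0}[s\vec{X}_1 + t\vec{X}_{-1}](\vec{X}_1, \vec{X}_{-1})$. I will then expand this via the Fa\`a di Bruno formula \eqref{eq_1_gamma_loc_change} using the chart centred at $\varphi_{s,t} \doteq u_{\varphi_0}^{-1}(s\vec{X}_1 + t\vec{X}_{-1}) \in \mathcal{U}$. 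The two resulting contributions are the $|\pi|=2$ term $F^{(2)}_{\varphi_{s,t}}[0]\bigl( du_{\varphi_{s,t}\varphi_0}[\cdot](\vec{X}_1),\, du_{\varphi_{s,t}\varphi_0}[\cdot](\vec{X}_{-1})\bigr)$ and the $|\pi|=1$ term $F^{(1)}_{\varphi_{s,t}}[0]\bigl( d^2 u_{\varphi_{s,t}\varphi_0}[\cdot](\vec{X}_1, \vec{X}_{-1})\bigr)$. Since $u_{\varphi_{s,t}\varphi_0}$ is the pushforward of a fibre diffeomorphism, its differentials act pointwise and hence preserve the disjointness of supports; the $|\pi|=2$ term then vanishes by the local character of $\nabla^2 F_{\varphi_{s,t}}[0]$ (invoking the same $\Gamma_{\varphi_{s,t}}$-cancellation as above), and the $|\pi|=1$ term vanishes because $d^2 u_{\varphi_{s,t}\varphi_0}[\vec{Z}](\vec{X}_1, \vec{X}_{-1})(x)$ is pointwise bilinear in the fibre arguments, and at each $x$ at least one of $\vec{X}_1(x), \vec{X}_{-1}(x)$ is zero. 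Consequently $\partial_s\partial_t g \equiv 0$ on the whole square, and integrating twice yields $g(s,t) = g(s,0) + g(0,t) - g(0,0)$; evaluating at $s=t=1$ delivers \eqref{eq_1_loc_additivity}.
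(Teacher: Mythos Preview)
Your proof is correct and follows essentially the same route as the paper: $(i)\Leftrightarrow(ii)$ via direct translation plus Lemma~\ref{lemma_1_interpolation_of_sections} with a telescoping argument, and $(ii)\Leftrightarrow(iii)$ by differentiating/integrating $F_{\varphi_0}(s\vec{X}_1+t\vec{X}_{-1})$ in the chart. The only real difference is in $(iii)\Rightarrow(ii)$: the paper writes the double integral $\int_0^1\int_0^1 d^2F_{\varphi_0}[h\vec{X}_1+t\vec{X}_{-1}](\vec{X}_1,\vec{X}_{-1})\,dh\,dt$ and invokes directly that ``$\mathrm{supp}(d^2F_{\varphi_0})\subset\triangle_2 M$'', relying on the chart-independence of locality established just after Definition~\ref{def_1_func_classes}; you instead spell this out via Fa\`a di Bruno \eqref{eq_1_gamma_loc_change} and the ultralocal (pointwise) nature of the transition map differentials, which is a cleaner self-contained justification of the same fact.
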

\begin{proof}
Let us start proving the equivalence between $(i)$ and $(ii)$.\\
$(i) \Rightarrow (ii) $ If $\varphi_j \in \mathcal{U}\cap\mathcal{U}_{\varphi_0}$ with $j=1,0,-1$ are as in $(ii)$ above, take $\vec{X}_j$ such that $u_{\varphi_0}^{-1}(\vec{X}_j)=\varphi_j$. Writing \eqref{eq_1_glob_additivity} in terms of $F_{\varphi_0}$ yields \eqref{eq_1_loc_additivity}.\\
$(ii) \Rightarrow (i) $ Let us take sections $\varphi_j$ with $j=1,0,-1$ such that $\mathrm{supp}_{\varphi_0}(\varphi_1)\cap \mathrm{supp}_{\varphi_0}(\varphi_{-1})=\emptyset$, then we calculate $F(\varphi)$ combining Lemma \ref{lemma_1_interpolation_of_sections} with $\varphi$-additivity for each section, yields
$$
\begin{aligned}
	F(\varphi)& = F_{\varphi_{(n-1,n-1)}}\big(\vec{X}_n+ \vec{Y}_n\big)=
	F_{\varphi_{(n-1,n-1)}}\big(\vec{X}_n\big)+F_{\varphi_{(n-1,n-1)}}\big( \vec{Y}_n\big)-F_{\varphi_{(n-1,n-1)}}(0) \\ 
    &= F\left(\varphi_{(n,n-2)}\right)+\textcolor{red}{F\left(\varphi_{(n-1,n-1)}\right)}-F\left(\varphi_{(n-1,n-2)}\right)	\\ 
    &\quad+ F\left(\varphi_{(n-2,n)}\right)+F\left(\varphi_{(n-1,n-1)}\right)-F\left(\varphi_{(n-2,n-1)}\right)-	\textcolor{red}{F\left(\varphi_{(n-1,n-1)}\right)}\\ 
     &= F\left(\varphi_{(n,n-2)}\right)-\textcolor{blue}{F\left(\varphi_{(n-1,n-2)}\right)} + F\left(\varphi_{(n-2,n)}\right)+\textcolor{green}{F\left(\varphi_{(n-2,n-1)}\right)}+\textcolor{blue}{F\left(\varphi_{(n-1,n-2)}\right)}\\
    &\quad-F\left(\varphi_{(n-2,n-2)}\right)-\textcolor{green}{F\left(\varphi_{(n-2,n-1)}\right)} \\
    &=  F\left(\varphi_{(n,n-2)}\right)+F\left(\varphi_{(n-2,n)}\right)-F\left(\varphi_{(n-2,n-2)}\right).
\end{aligned}
$$
Repeating the above argument an extra $(n-2)$ times we arrive at
$$
	F(\varphi)= F\left(\varphi_{(n,0)}\right)+F\left(\varphi_{(0,n)}\right)-F\left(\varphi_{(0,0)}\right)\equiv F(\varphi_1)+F(\varphi_{-1})-F(\varphi_0) 	.
$$
We conclude proving that $(ii)$ and $(iii)$ are equivalent.\\
$(iii) \Rightarrow (ii)\ $Take $\varphi_j$, $\vec{X}_j \doteq u_{\varphi_0}(\varphi_j) $, with $j=1,0,-1$ as in $(i)$ Definition \ref{def_1_additivity}. Then 
$$
\begin{aligned}
	 & F_{\varphi_0}(\vec{X}_1 +\vec{X}_{-1})-F_{\varphi_0}(\vec{X}_1)+F_{\varphi_0}(0)-F_{\varphi_0}(\vec{X}_{-1})=
	 \int_{0}^1 \frac{d}{dt}\left( F_{\varphi_0}(\vec{X}_1+ t\vec{X}_{-1})- F_{\varphi_0}(t\vec{X}_{-1})\right)dt  \\& =
	 \int_{0}^1  \frac{d}{dt}  \left( \int_0^1 \frac{d}{dh}  F_{\varphi_0}(h\vec{X}_1 + t\vec{X}_{-1})dh\right)dt= \int_{0}^1  \int_{0}^1 d^2F{\varphi_0}[h\vec{X}_1+ t\vec{X}_{-1}](\vec{X}_1,\vec{X}_{-1})dhdt.
\end{aligned}
$$
By locality we have that $\mathrm{supp}\big(d^2F{\varphi_0}\big) \subset \triangle_2 M$, however, $\mathrm{supp}(\vec{X}_1) \cap \mathrm{supp}(\vec{X}_{-1})=\emptyset$ implying that the integrand on the right hand side of the above equation is identically zero.\\
$(ii) \Rightarrow (iii)\ $Fix any $\varphi_0 \in \mathcal{U}$, consider two vector fields $\vec{X}_1$, $\vec{X}_{-1}\in \Gamma^{\infty}\big(M\leftarrow  \varphi_0^{*}VB\big)$ such that $\mathrm{supp}(\vec{X}_1) \cap\mathrm{supp}(\vec{X}_{-1})=\emptyset$ and $\vec{X}_1+\vec{X}_{-1} \in u_{\varphi_0}(\mathcal{U}_{\varphi_0})$, let also $\varphi_{j}\doteq u_{\varphi_0}^{-1}(\vec{X}_j)$ for $j=1,-1$, then $\mathrm{supp}_{\varphi}(\varphi_1) \cap \mathrm{supp}_{\varphi}(\varphi_{-1})=\emptyset$. By direct computation we get 
$$
\begin{aligned}
	 F_{\varphi_0}^{(2)}[0](\vec{X}_1,\vec{X}_{-1}) & =\left.\frac{d^2}{dt_1dt_2} \right|_{t_1=t_2=0} F_{\varphi_0}(t_1\vec{X}_1 + t_2\vec{X}_{-1}) \\ & = \left.\frac{d^2}{dt_1dt_2} \right|_{t_1=t_2=0} \left( F_{\varphi_0}(t_1\vec{X}_1)-F_{\varphi_0}(0)+F_{\varphi_0}( t_2\vec{X}_{-1}) \right) \equiv 0,
\end{aligned}
$$
which proves locality.
\end{proof}

As a result, we have shown that locality and additivity are consistent concepts in a broader generality than done in \cite{acftstructure}. Of course, additivity strongly relates to Bogoliubov's formula for S-matrices, therefore a priori we expect that whenever we can formulate the concept consistently in Definition \ref{def_1_additivity}, those must be equivalent formulations. We also mention that when the exponential map used to construct ultralocal charts is a global diffeomorphism, then additivity and $\varphi$ additivity becomes trivially equivalent since the chart can be enlarged to $\mathcal{V}_{\varphi}\equiv \{\psi\in \Gamma^{\infty}(M\leftarrow B): \mathrm{supp}_{\varphi}(\psi) \subset M \space\ \mathrm{is} \space\ \mathrm{compact} \}$.\\

The ultralocal notion of additivity \textit{i.e.} $(i)$ in Definition \ref{def_1_additivity} is independent from the chart used, in fact suppose that $F$ is $\varphi_0$-additive in $\lbrace \mathcal{U}_{\varphi_0},u_{\varphi_0} \rbrace$, take another chart $\lbrace \mathcal{U}'_{\varphi_0},u'_{\varphi_0} \rbrace$ such that $\mathcal{U}'_{\varphi_0} \cap \mathcal{U}_{\varphi}\neq \emptyset$, set $\vec{X}_j=u_{\varphi_0}(\varphi_j)$, $\vec{Y}_j=u_{\varphi_0}'(\varphi_j)$, for $j=1,-1$, we have\footnote{In the subsequent calculations we can assume, without loss of generality, that $\vec{Y}_1+\vec{Y}_{-1}\in u'_{\varphi_0}(\mathcal{U}'_{\varphi_0})$, for if this is not the case we can use an argument involving Lemma \ref{lemma_1_interpolation_of_sections} to make this expression meaningful.}
$$
\begin{aligned}
	F\circ u'^{-1}_{\varphi_0}(\vec{Y}_1+\vec{Y}_{-1}) & =F\circ u^{-1}_{\varphi_0}\circ u_{\varphi_0}\circ u'^{-1}_{\varphi_0} (\vec{Y}_1+\vec{Y}_{-1})=F\circ u^{-1}_{\varphi_0}(\vec{X}_1+\vec{X}_{-1})\\
	& = F\circ u^{-1}_{\varphi_0}(\vec{X}_1 ) - F\circ u^{-1}_{\varphi_0}(0) + F\circ u^{-1}_{\varphi_0}(\vec{X}_{-1})\\ 
	& = F\circ u'^{-1}_{\varphi_0}(\vec{Y}_1 ) - F\circ u'^{-1}_{\varphi_0}(0) + F\circ u'^{-1}_{\varphi_0}(\vec{Y}_{-1})
\end{aligned}
$$
where $u_{\varphi_0}\circ u'^{-1}_{\varphi_0} (\vec{Y}_1+\vec{Y}_{-1})=\vec{X}_1+\vec{X}_{-1}$ is due to the fact that the two vector fields have mutually disjoint supports. We then see that $\varphi_0$-additivity does not depend upon the chosen chart.\\

We remark that functionals are generally defined in $CO$-open subsets instead of more general Whitney open sets since we can always extend the domain to a $CO$-open subset. To wit, suppose $F:\mathcal{U} \to \mathbb{R}$ is a smooth functional with compact support, then consider the function $\chi\in C^{\infty}_c(M)$ having $0\leq \chi \leq 1$, $\chi\equiv 1$ inside $\mathrm{supp}(F)$ and, given any $\varphi_0 \in \mathcal{U}$, define
\begin{equation}\label{eq_1_CO_extension_mapping}
    i_{\chi}: \Gamma^{\infty}(M \leftarrow B) \to \mathcal{V}_{\varphi_0}, \quad \widetilde{\psi} \mapsto \psi.
\end{equation}
The mapping can be constructed using with the ideas of Lemma \ref{lemma_1_interpolation_of_sections}, indeed, starting with $\varphi_0$, we can modify the latter inside $K$ so that $\widetilde\exp(\Vec{X}_n)\circ \cdots \circ \widetilde\exp(\Vec{X}_1)\varphi_0|_K=\widetilde{\psi}|_K$, then setting $\psi=\widetilde\exp(\chi\Vec{X}_n)\circ \cdots \circ \widetilde\exp(\chi\Vec{X}_1)\varphi_0$, we have that $\psi=\widetilde{\psi}$ inside $\mathrm{supp}(F)$, $\psi=\varphi_0$ outside $K$. $i_{\chi}$ is a continuous and smooth mapping, and when $\mathcal{U}$ is a $\mathrm{WO}^{\infty}$ open neighborhood of $\varphi_0$, $i_{\chi}^{-1}(\mathcal{U})$ is a $CO$-open. Then we can seamlessly extend the functional $F$ to $\widetilde{F}: i_{\chi}^{-1}(\mathcal{U})\to \mathbb{R}$. The functional will remain smooth, and all its derivatives will not be affected by the cutoff function $\chi$.

We now give the characterization of microlocality; we will find that, contrary to additivity, the latter representation will be limited to a chart domain, in the sense that the functional can be represented as an integral provided we shrink its domain to a chart, this representation however will not be independent from the chart chosen.

We recall that a mapping $T:U\subset E_1\to E_2$ between locally convex spaces is \textit{locally bornological} if for any $e\in E_1$ there is a neighborhood $V\ni e$ contained in $U$ such that $T|_V$ maps bounded subsets of $V$ into bounded subsets of $E_2$. From this definition follows a technical result:

\begin{lemma}\label{lemma_1_loc_bornology}
    Let $\mathcal{U}\subset \Gamma^{\infty}(M\leftarrow B)$ be $CO$-open, then a smooth, compactly spacetime supported functional $F$ satisfies: $dF:\mathcal{U}_{\varphi} \to \Gamma^{\infty}_c(M\leftarrow \varphi^*VB'\otimes \Lambda_m(M))$ is Bastiani smooth if and only if it is locally bornological. 
\end{lemma}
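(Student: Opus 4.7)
The plan is to prove the two implications separately, exploiting that the target $\Gamma^{\infty}_c(M\leftarrow \varphi^*VB'\otimes \Lambda_m(M))$ is a strict inductive limit of the Fréchet spaces $\Gamma^{\infty}_K(M\leftarrow \varphi^*VB'\otimes \Lambda_m(M))$ indexed by compacts $K\subset M$.

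For the forward direction, I would argue that Bastiani smoothness of $dF$ implies convenient smoothness (Bastiani smooth maps send smooth curves to smooth curves by the chain rule), and invoke the standard fact that convenient smooth mappings preserve bounded subsets (see e.g.\ \cite[Prop.~1.6.7]{kriegl1997convenient}). Combined with the observation that every point of $\mathcal{U}_\varphi$ admits bounded open neighborhoods in the modelling space, this yields that $dF$ sends bounded subsets of such a neighborhood to bounded subsets of the target, i.e.\ the local bornological property.

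For the reverse direction, the strategy is to factor $dF$ locally through a Fréchet step. Fix $\varphi_0 \in \mathcal{U}_\varphi$ and choose a bounded open neighborhood $V\ni \varphi_0$ (available because the modelling LF-space admits bounded neighborhoods within each Fréchet step of the inductive limit). By local bornologicity, $dF(V)$ is bounded in the target LF-space; by the standard characterization of bounded sets in strict countable inductive limits of Fréchet spaces, there exists a compact $K_0 \subset M$ such that $dF(V)\subset \Gamma^{\infty}_{K_0}(M\leftarrow \varphi^*VB'\otimes \Lambda_m(M))$ and is bounded there. Since the inclusion $\Gamma^{\infty}_{K_0} \hookrightarrow \Gamma^{\infty}_c$ is continuous, it suffices to check Bastiani smoothness of $dF|_V$ as a map into the Fréchet space $\Gamma^{\infty}_{K_0}$.

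On this Fréchet target I would invoke Proposition~\ref{prop_A_prop_of_c_infty_top}(iv) to replace Bastiani smoothness by convenient smoothness, i.e.\ to verify that smooth curves $\gamma:\mathbb{R}\to V$ are sent by $dF|_V$ to smooth curves into $\Gamma^{\infty}_{K_0}$. The $k$-th derivative of $t\mapsto dF[\gamma(t)]$ is given pointwise by the integral kernel of $d^{k+1}F_{\gamma(t)}[0]\big(\dot\gamma(t),\dots,\dot\gamma(t),\,\cdot\,\big)$, which exists by smoothness of $F$. The main obstacle I anticipate is exactly here: promoting the scalar joint continuity of the multilinear functionals $d^{k+1}F$ to continuity of the kernels in the Fréchet seminorms of $\Gamma^{\infty}_{K_0}$. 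I expect this to be handled by a uniform-boundedness/Banach--Steinhaus type argument applied to the family of continuous kernels parametrized by compact subsets of the curve image, using the metrizability of $\Gamma^{\infty}_{K_0}$ and an inductive application of the local bornological hypothesis to each iterated derivative (the latter providing the uniform control on the seminorms indexed by differentiation order and base point in $K_0$).
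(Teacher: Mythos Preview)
Your forward direction is fine and close in spirit to the paper's argument, which instead uses that the source and target are semi-Montel so that continuous images of bounded closures are compact, hence bounded.

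The reverse direction has a genuine gap at precisely the point you flag. You want to show that for a smooth curve $\gamma$, the map $t\mapsto dF[\gamma(t)]$ is smooth into the Fr\'echet space $\Gamma^{\infty}_{K_0}$. From Bastiani smoothness of the scalar functional $F$ you only know that $t\mapsto dF[\gamma(t)](\psi)$ is smooth for each \emph{test section} $\psi$. To conclude smoothness into $\Gamma^{\infty}_{K_0}$ via the scalarwise criterion for convenient smoothness, you would need smoothness of $t\mapsto \langle u, dF[\gamma(t)]\rangle$ for every $u$ in the topological dual of $\Gamma^{\infty}_{K_0}$, i.e.\ for arbitrary distributions supported in $K_0$, not just smooth densities. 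A Banach--Steinhaus argument does not bridge this: uniform boundedness gives you equicontinuity of a family of functionals, but not the upgrade from smooth-pairing to distribution-pairing smoothness. Moreover, your proposed ``inductive application of the local bornological hypothesis to each iterated derivative'' presupposes bornological control on $d^2F, d^3F,\ldots$ as section-valued maps, which is not part of the hypothesis---you are only given local bornologicity of $dF$ itself.

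The paper's proof (which is a sketch deferring to \cite[Lemma~2.6]{acftstructure}) handles the sufficiency differently: it first observes that compact spacetime support forces $dF$ to land in a single $\Gamma^{\infty}_K$, then localizes via a finite chart cover and partition of unity to reduce to maps $T_\alpha:\mathcal{U}_\varphi\to \mathcal{D}(Q;V)$ on a cube $Q\subset\mathbb{R}^n$. The key technical input is then a pair of claims from \cite{acftstructure} showing that a locally bornological map into $\mathcal{D}(Q)$ sends smooth curves to smooth curves; the vector-bundle case follows by composing with the fibre projections $\pi_i:V\to\mathbb{R}$. This is the step your sketch is missing, and it does not reduce to a standard closed-graph or uniform-boundedness principle.
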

\begin{proof}[Sketch of a proof]
The proof of this result when $B=M\times \mathbb R$ can be found in \cite[Lemma 2.6]{acftstructure}. Since we are allowing for a bit more generality (\textit{i.e.} we are considering distributional sections of the bundle $\varphi^*VB\to M$), we will just highlight the minor changes to the argument presented in the aforementioned Lemma 2.6. From Bastiani smoothness of $F$ we can see $F^{(1)}$ as a Bastiani smooth mapping $\mathcal{U}_{\varphi}\to \Gamma^{-\infty}_c(M\leftarrow \varphi^*VB)$, combining the support property of $F$ with the fact that it is microlocal, we obtain that $F^{(1)}$ can be viewed as a mapping $T:\mathcal{U}_{\varphi}\to \Gamma^{\infty}_K(M\leftarrow \varphi^*VB\otimes \Lambda_m(M))$ for some compact subset $K$ of $M$. To prove the lemma it is enough to show that $T$ is Bastiani smooth if and only if it is locally bornological. Necessity then follows from the fact that composing $T$ with (Bastiani smooth) chart mappings yields a Bastiani smooth, hence continuous, mapping $\Gamma^{\infty}_c(M\leftarrow \varphi^*VB)\to\Gamma^{\infty}_K(M\leftarrow \varphi^*VB\otimes \Lambda_m(M))$. Both spaces are semi-Montel, \textit{i.e.} every bounded subset is relatively compact. Thus let ${W}\subset \overline{{W}}\subset {V} \subset u_{\varphi}(\mathcal{U}_{\varphi})$, with the first two subsets bounded, then $T(\overline{W})$ is compact, hence bounded, due to continuity of $F$. As a result $T|_{V}$ is locally bornological. The sufficiency condition is guaranteed if, 
\begin{itemize}
    \item for a given compact subset $K\subset M$ and a finite cover of $K$ of the form ${\psi_{\alpha}^{-1}(Q)}$ where $Q\subset \mathbb{R}^n$ is the open $m$-cube and $\psi_{\alpha}$ are the charts of $M$;
    \item given a partition of unity ${f_{\alpha}}$ of the above cover, the induced conveniently smooth\footnote{We recall, as in Definition \ref{def_A_convenient_smooth_map}, that a mapping is conveniently smooth if it maps smooth curves in $\mathcal{U}_{\varphi}$ to smooth curves of $\mathcal{D}\big(E|_Q\big)$.} mappings $T_{\alpha}: \mathcal{U}_{\varphi} \to \mathcal{E}'\big(Q;V\big), \quad \varphi \mapsto (\psi_{\alpha})_*\big(f_{\alpha}T(\varphi)\big)$, where $V \simeq \mathbb{R}^d$ is the typical fiber of the bundle $\varphi^*VB'\otimes \Lambda_m(M) \to M$, can be equivalently seen as conveniently smooth mappings $T_{\alpha}: \mathcal{U}_{\varphi} \to \mathcal{D}\big(Q;V\big)$.
\end{itemize}

By hypothesis we know that $T_{\alpha}: \mathcal{U}_{\varphi} \to \mathcal{D}\big(Q;V\big)$ is locally bornological, to complete the proof we note that each projection mapping $\pi_i:V \to \mathbb{R}$, $i=1,\ldots ,n$, induces continuous and also bounded mappings $(\pi_i)_{*} : \mathcal{D}\big(Q;V\big) \to \mathcal{D}(Q)$. Thus, by claims (i), (ii) in the proof of Lemma 2.6 in \cite{acftstructure}, each $(\pi_i)_{*} T_{\alpha} :\mathcal{U}_{\varphi} \to \mathcal{D}(Q) $, which remains locally bornological, maps smooth curves of $\mathcal{U}_{\varphi}$ to smooth curves of $\mathcal{D}(Q)$ implying that $T_{\alpha}: \mathcal{U}_{\varphi} \to \mathcal{D}\big(Q;V\big)$ is convenient smooth as well for any $\alpha$.
\end{proof}

\begin{proposition}\label{porop_1_muloc_charachterization}
Let $\mathcal{U}\subset\Gamma^{\infty}(M\leftarrow B)$ be $CO$-open and $F \in \mathcal{F}_{\mu loc}(B,\mathcal{U})$ , then $f^{(1)}:\mathcal{U}\subset \Gamma^{\infty}(M\leftarrow B)\ni \varphi \mapsto f^{(1)}_{\varphi}[0]\in \Gamma^{\infty}_c(M\leftarrow \varphi^{*}VB'\otimes \Lambda_m(M))$ is locally bornological if and only if for each $\mathcal{U}_{\varphi_0}\subset \mathcal{U}$ there is a $m$-form $\lambda_{F,\varphi_0}\equiv \lambda_{F,0}$ with $\lambda_{F,0}(j^{\infty}\varphi)$ having compact support for all $\varphi \in \mathcal{U}_{\varphi_0}$ such that
\begin{equation}\label{eq_1_muloc_formula}
	F(\varphi)=F(\varphi_0)+ \int_M (j^r_x\varphi)^{*}\lambda_{F,0}.
\end{equation}
\end{proposition}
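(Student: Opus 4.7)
The direction $(\Leftarrow)$ is a direct calculation: given the representation \eqref{eq_1_muloc_formula}, differentiate along the ray $t\mapsto u_{\varphi_0}^{-1}(t\vec{X})$ and integrate by parts exactly as in the derivation of \eqref{eq_1_derivative_lag_wavemaps}. This yields $f^{(1)}_{\varphi}[0]$ as an explicit polynomial expression in the partial derivatives of $\lambda_{F,0}$ composed with $j^{r}\varphi$. The resulting map $\varphi\mapsto f^{(1)}_{\varphi}[0]$ is then Bastiani smooth (by smoothness of $\lambda_{F,0}$ and continuity of jet prolongation), and a fortiori locally bornological by Lemma~\ref{lemma_1_loc_bornology}.

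\textbf{Plan for $(\Rightarrow)$.} Fix $\mathcal{U}_{\varphi_0}$. By Lemma~\ref{lemma_1_loc_bornology}, the hypothesis is equivalent to Bastiani smoothness of
\begin{equation*}
    P\colon\mathcal{U}_{\varphi_0}\longrightarrow \Gamma^{\infty}\bigl(M\leftarrow \varphi^{*}VB'\otimes\Lambda_{m}(M)\bigr),\qquad \varphi\longmapsto f^{(1)}_{\varphi}[0].
\end{equation*}
Viewing $P$ as an operator between sections of bundles over $M$ (by packaging the values $f^{(1)}_{\varphi}[0](x)\in V_{\varphi(x)}B'\otimes \Lambda_{m}(M)_{x}$ into the total space of $VB'\otimes p^{*}\Lambda_{m}(M)\to B\to M$), we aim to apply Peetre--Slov\'ak (Theorem~\ref{thm_A_Peetre_Slovak}). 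For \emph{locality}: microlocality implies $F^{(2)}_{\varphi}[0]$ is diagonally supported, so if $\varphi_{1},\varphi_{2}\in\mathcal{U}_{\varphi_{0}}$ coincide on an open $U\subset M$ then $\vec{Z}\doteq u_{\varphi_{0}}(\varphi_{2})-u_{\varphi_{0}}(\varphi_{1})$ vanishes on $U$ (ultralocal charts preserve pointwise agreement), and differentiating $t\mapsto f^{(1)}_{\varphi_{t}}[0]|_{U}$ along the chart ray produces $F^{(2)}_{\varphi_{0}}[tu_{\varphi_{0}}(\varphi_{1})+(1-t)u_{\varphi_{0}}(\varphi_{2})]\bigl(\vec{Z},\,\cdot\,\bigr)$, which vanishes on $U$ by the diagonal support. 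For \emph{weak regularity}: a compactly supported variation $\Phi$ of $\varphi_{0}$ determines a smooth curve $t\mapsto\Phi_{t}$ in $\mathcal{U}_{\varphi_{0}}$ (see Lemma~\ref{lemma_A_locality&bastiani_implies_w-reg} and Proposition~\ref{prop_A_conv_sections_of_vector_bndl}), Bastiani smoothness of $P$ yields a smooth curve $t\mapsto f^{(1)}_{\Phi_{t}}[0]$, and the locality just established forces this curve to be constant in $t$ outside the compact set where $\Phi$ varies.

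Peetre--Slov\'ak then produces, after possibly shrinking $\mathcal{U}_{\varphi_{0}}$, an $r\in\mathbb{N}$ and a fibered morphism $\Psi$ from an open subset of $J^{r}B$ to $VB'\otimes p^{*}\Lambda_{m}(M)$ covering $\mathrm{id}_{M}$ with $f^{(1)}_{\varphi}[0](x)=\Psi(j^{r}_{x}\varphi)$. Writing $\varphi_{t}\doteq u_{\varphi_{0}}^{-1}\bigl(tu_{\varphi_{0}}(\varphi)\bigr)$, the fundamental theorem of Bastiani calculus together with Fubini gives
\begin{equation*}
    F(\varphi)-F(\varphi_{0})=\int_{0}^{1} dF_{\varphi_{t}}[0]\bigl(\dot{\varphi}_{t}\bigr)\,dt=\int_{M}\Bigl[\int_{0}^{1}\Psi(j^{r}_{x}\varphi_{t})\bigl(\dot{\varphi}_{t}(x)\bigr)\,dt\Bigr]\,d\mu_{g}(x).
\end{equation*}
The inner $t$-integral is a smooth function of $j^{r+1}_{x}\varphi$ (with $j^{r+1}_{x}\varphi_{0}$ entering as a fixed parameter through the definition of $\varphi_{t}$) and, multiplied by $d\mu_{g}$, defines the sought $m$-form $\lambda_{F,0}$ on the appropriate open subset of $J^{r+1}B$; compactness of $\mathrm{supp}\bigl(\lambda_{F,0}(j^{\infty}\varphi)\bigr)$ follows from the compact spacetime support of $F$.

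\textbf{Main obstacle.} The non-triviality of $B\to M$ forces locality and weak regularity of $P$ to be checked in ultralocal charts, with the codomain itself varying with $\varphi$, so the interface between the Bastiani-smooth framework on $\mathcal{U}_{\varphi_{0}}$ and the finite-order-operator language of Peetre--Slov\'ak must be handled by packaging the $\varphi$-dependent section bundle into a single bundle over $M$. The very same non-triviality is responsible for the chart-dependence of $\lambda_{F,0}$ (hence the subscript), a limitation that cannot be removed without further hypotheses, in line with the variational-sequence perspective recalled in the chapter's introduction.
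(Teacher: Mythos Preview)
Your approach is correct and uses the same key ingredients as the paper --- diagonal support of $F^{(2)}$ for locality, Lemma~\ref{lemma_1_loc_bornology} to upgrade locally bornological to Bastiani smooth, Lemma~\ref{lemma_A_locality&bastiani_implies_w-reg} for weak regularity, and Peetre--Slov\'ak --- but you apply them in a different order. The paper first \emph{integrates} along the chart ray to define the candidate density $\theta[\varphi](x)\doteq\int_{0}^{1}f^{(1)}_{\varphi_{0}}[t\vec{X}](\vec{X})(x)\,dt$, shows this $\theta$ is germ-local (via the same $F^{(2)}$-diagonal argument you give) and weakly regular, and only then applies Peetre--Slov\'ak to $\theta\colon\Gamma^{\infty}(M\leftarrow B)\to\Gamma^{\infty}_{c}(M\leftarrow\Lambda_{m}(M))$. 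You instead apply Peetre--Slov\'ak directly to $P=f^{(1)}$ and integrate afterwards. The paper's ordering has the advantage that the target bundle $\Lambda_{m}(M)$ is fixed and independent of $\varphi$, so no packaging into $VB'\otimes p^{*}\Lambda_{m}(M)\to B\to M$ is needed; your ordering has the advantage of making the jet-locality of the Euler--Lagrange kernel $f^{(1)}_{\varphi}[0]$ itself explicit, which is arguably the more fundamental statement. Two minor remarks: your estimate of the jet order as $r+1$ is a harmless overcount (since $\varphi_{t}(x)$ depends pointwise on $(\varphi_{0}(x),\varphi(x))$, one gets order $r$), and in your locality paragraph the notation $f^{(1)}_{\varphi_{t}}[0]$ should really read $f^{(1)}_{\varphi_{0}}[tu_{\varphi_{0}}(\varphi_{1})+(1-t)u_{\varphi_{0}}(\varphi_{2})]$ throughout --- everything stays in the $\varphi_{0}$ chart.
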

\begin{proof}
Suppose $ F(\varphi)=F(\varphi_0)+ \int_M (j^{r}\varphi)^{*}\lambda_{F,0}$ for all $\varphi\in \mathcal{U}_{\varphi_0}$, we evaluate $dF_{\varphi}[0]$ and find that its integral kernel may always be recast in the form
$$
	f^{(1)}_{\varphi}[0](x)=e_i[\lambda,\varphi_0](j^{2r}_x\varphi)dy^i\otimes d\mu_g(x)
$$
where $e_i[\lambda,F,\varphi_0]dy^i\otimes d\mu_g:\Gamma^{\infty}(M\leftarrow B) \to \Gamma^{\infty}_c(M\leftarrow \varphi^{*}VB'\otimes \Lambda_m(M))$ are the Euler-Lagrange equations associated to $\lambda_{F,\varphi_0}$ evaluated at some field configuration. Using the ultralocal differential structure of the source space, and keeping in mind that $e_i[\lambda,F,]: \mathcal{U}_{\varphi_0}\to \Gamma^{\infty}_c(M\leftarrow \varphi^{*}VB'\otimes \Lambda_m(M))$ is an operator of bounded order, we can apply Theorem \ref{thm_A_Bastiani smooth_pushforward} and by Lemma \ref{lemma_1_loc_bornology} to get that
$f^{(1)}$ is locally bornological.

Conversely suppose that $f^{(1)}$ is locally bornological, by Lemma \ref{lemma_1_loc_bornology} it is Bastiani smooth as a mapping as well. Fix $\varphi_0\in \mathcal{U}$ and call $\vec{X}=u_{\varphi_0}(\varphi)$, by microlocality combined with Schwartz kernel theorem
$$
	F(\varphi)-F(\varphi_0)=F_{\varphi_0}(\vec{X})-F_{\varphi_0}(0)= \int_0^1 dF_{\varphi_0}[t\vec{X}](\vec{X}) dt=\int_0^1 dt \int_M f^{(1)}_{\varphi_0}[t\vec{X}](\vec{X}).
$$
Applying the Fubini-Tonelli theorem to exchange the integrals in the above relation yields our candidate for $j^{r}\varphi^{*}\lambda_{F,0}$: the $m$-form $x \mapsto \theta[\varphi](x)\equiv \int_0^1 f^{(1)}_{\varphi_0}[t\vec{X}](\vec{X})(x)dt$. We have to show that this element depends at most on $j^{r}_x\varphi$. Notice that, a priori, $\theta[\varphi](x)$ might not depend on $j^r_x\varphi$, however we can say that if $\varphi_1$, $\varphi_2$ agree on any neighborhood $V$ of $x\in M$, then $\theta[\varphi_1]\vert_{V}=\theta[\varphi_2]\vert_{V}$. To see this, set $\vec{X}_1=u_{\varphi_0}(\varphi_1)$, $\vec{X}_2=u_{\varphi_0}(\varphi_2)$, by construction they agree in a suitably small neighborhood $V'$ of $x$, moreover 
$$
\begin{aligned}
	 \theta[\varphi_1](x)-\theta[\varphi_2](x)& =\int_0^1 dt \left(f^{(1)}_{\varphi_0}[t\vec{X}_{1}](\vec{X}_{1})(x)-f^{(1)}_{\varphi_0}[t\vec{X}_{2}](\vec{X}_{2})(x)\right)\\ 
	&=\int_0^1 dt \left(f^{(1)}_{\varphi_0}[t\vec{X}_{1}]_i(x)\vec{X}_{1}^i(x)-f^{(1)}_{\varphi_0}[t\vec{X}_{2}]_i(x)\vec{X}_{2}^i(x)\right)\\ 
	&= \int_0^1 dt \int_0^1 dh f^{(2)}_{\varphi_0}[t\vec{X}_{2}+th\vec{X}_{1}-th\vec{X}_{2}]_{ij}(x)(t\vec{X}_{1}-t\vec{X}_{2})^i(x) \vec{X}_1^j(x); 
\end{aligned}
$$
where in the last equality we used locality of $F$ and linearity of the derivative. The last line of the above equation identically vanishes in $V'$ due the support properties of $f^{(2)}_{\varphi_0}$ and the fact that $\vec{X}_1\vert_{V'}=\vec{X}_2\vert_{V'}$. Therefore $\theta[\varphi](x)\in \varphi^{*}VB'\otimes \Lambda_m(M)$ depends at most on $\mathrm{germ}_x(\varphi)$ with $\varphi\in \mathcal{U}$. 

We wish to apply \Cref{thm_A_Peetre_Slovak} 
to $\theta$; the germ dependence hypothesis has been verified above, so one has to show that $\theta$ is also weakly regular, that is, if $\mathbb{R}\times M \ni (t,x ) \mapsto \varphi_t(x) \in B$ is compactly supported variation, then $(t,x)\mapsto\theta[\varphi_t](x)$ is again a compactly supported variation. $\theta$ is a compactly supported form, thus it maps compactly supported variations into compactly supported variations. Moreover it is Bastiani smooth since 
$$
    \varphi \mapsto F(\varphi_0)+ \int_M\theta[\varphi]d\mu_g(x)=F(\varphi)
$$
is an observable. Then we can apply Lemma \ref{lemma_A_locality&bastiani_implies_w-reg}.

We can now apply the Peetre-Slovak theorem and deduce that for each neighborhood there exists $r=r(x,\varphi_0) \in \mathbb{N}$, an open neighborhood $U^r\subset J^rB$ of $j^r\varphi_0$ and a mapping $\lambda_{F,0}:J^{r}B\supset U^r \to \Gamma^{\infty}_c(M\leftarrow \varphi^{*}_0VB'\otimes \Lambda_m(M))$ such that $\lambda_{F,0}(j^r_x\varphi)=\theta[\varphi](x)$ for each $\varphi$ with $j^r\varphi\in U^r$. Due to compactness of $\mathrm{supp}(\theta)$ we can take the order $r$ to be independent from the point $x$ on $M$; then
$$
	F(\varphi)=F(\varphi_0)+\int_M \lambda_{F,0}(j^r_x\varphi).
$$
\end{proof}

One could also strengthen the hypothesis of Proposition \ref{porop_1_muloc_charachterization}, for example, by requiring that for every $k \in \mathbb{N} $, $R>0$ and every $\varphi_0 \in\mathcal{U}$, there is a positive constant $C$ for which
\begin{equation}\label{eq_1_lipschitz_bound}
    \big|\big|f_{\varphi_0}^{(1)}[\varphi_1]-f_{\varphi_0}^{(1)}[\varphi_1]\big|\big|_{K,k} \leq C \big|\big|u_{\varphi_0}(\varphi_1)-u_{\varphi_0}(\varphi_1) \big|\big|_{K,k+r} 
\end{equation}
whenever
$$
    \sup_{\substack{x\in K \\ j\leq k}} \Big\vert\nabla^j\big(u_{\varphi_0}(\varphi_1)-u_{\varphi_0}(\varphi_1) \big)(x) \Big\vert \equiv \big|\big|u_{\varphi_0}(\varphi_1)-u_{\varphi_0}(\varphi_1) \big|\big|_{K,k+r} < R.
$$
This condition implies that $f^{(1)}$ is locally bornological, moreover it is sufficient (see Lemma 1 together with (B) of Theorem 1 in \cite{zajtz1999nonlinear}) to imply that the order $r$ from Proposition \ref{porop_1_muloc_charachterization} is independent from the section $\varphi$ and thus globally constant.\\

As mentioned above, this characterization is limited to the ultralocal chart chosen: given charts $\lbrace \mathcal{U}_{\varphi_j},u_{\varphi_j} \rbrace$, $j=1,2$ such that $\varphi \in \mathcal{U}_{\varphi_1} \cap \mathcal{U}_{\varphi_2}$, let $\vec{X}_j = u_{\varphi_j}(\varphi)$ and suppose $F$ satisfies the hypothesis of Proposition \ref{porop_1_muloc_charachterization}, then according to \eqref{eq_1_muloc_formula}
$$
	F(\varphi)=  F(\varphi_1)+ \int_M (j^{r_1}\varphi)^{*}\lambda_{F,1}= F(\varphi_2)+ \int_M (j^{r_2}\varphi)^{*}\lambda_{F,2}.
$$
We can assume that $r_1=r_2\equiv r$, then using the same argument as in the proof of Proposition \ref{porop_1_muloc_charachterization},
$$
\begin{aligned}
	(j^{r}\varphi)^{*}\lambda_{F,1}(x) &=  \int_0^1 f_{\varphi_1}^{(1)}[t\vec{X}_1]\left(\vec{X}_1\right)(x) \mathrm{d}t \\ 
	& = \int_0^1 f_{\varphi_2}^{(1)}[u_{\varphi_1 \varphi_2}(t\vec{X}_1)]\left( u_{\varphi_1 \varphi_2}(\vec{X}_1)\right)(x) \mathrm{d}t	\\ 
	& = \int_0^1 f_{\varphi_2}^{(1)}[u_{\varphi_1 \varphi_2}(t\vec{X}_1)]\left(\vec{X}_2\right)(x) \mathrm{d}t,
\end{aligned}
$$
whereas
$$
	(j^{r}\varphi)^{*}\lambda_{F,2}(x)= \int_0^1 f_{\varphi_2}^{(1)}[t\vec{X}_2]\left(\vec{X}_2\right)(x) \mathrm{d}t,
$$
we therefore see that the lack of linearity of the transition mapping $u_{\varphi_1\varphi_2}$, namely $u_{\varphi_1 \varphi_2}(t\vec{X}_1)\neq t u_{\varphi_1 \varphi_2}(\vec{X}_1)= t\vec{X}_2 $, does not allow us to conclude $(j^{\infty}\varphi)^{*}\lambda_{F,1}(x)= (j^{\infty}\varphi)^{*}\lambda_{F,2}(x)$.\\

We give another argument that prevents ultralocal chart independence of the $m$ forms obtained from Proposition \ref{porop_1_muloc_charachterization}. This relies on the variational sequence\footnote{A complete exposition can be found in \cite{krupka}.}: a cohomological sequence of forms over $J^rB$ for some finite $r\in \mathbb{N}$, 

\begin{center}
\hspace{-1.45cm}
\begin{tikzcd}
		 & 0 \arrow[r]\arrow[d, phantom, ""{coordinate, name=Z}] & \mathbb{R}  \arrow[r,"E_0"] & \Omega^1(J^rB)/\sim \arrow[r,"E_1"]& \ldots \arrow[r] & \Omega^m(J^rB)/\sim \arrow[dllll,"E_m",rounded corners,to path={ -- ([xshift=2ex]\tikztostart.east)|- (Z) [near end]\tikztonodes
-| ([xshift=-2ex]\tikztotarget.west)-- (\tikztotarget)}] \\
   & \Omega^{m+1}(J^rB)/\sim \arrow[r,"E_{m+1}"] & \Omega^{m+2}_{h}(B)/\sim  \arrow[r]
		  & \ldots \arrow[r] &\Omega^{N}(B) \arrow[r,"E_{N}"]  &  0
\end{tikzcd}
\end{center}
where each element of the sequence is the quotient of the space of $p$-forms in $J^rB$ modulo some relation that cancel the exact forms (in the sense of the de-Rham differential on the manifold $J^rB$) and accounts for integration by parts when the order is greater then $m=\mathrm{dim}(M)$. In particular the $m$th differential $E_m$ is the operator which, given a horizontal $m$-form, calculates its \textit{Euler-Lagrange form} and the $(m+1)$th differential $E_{m+1}$ is the operator which associates to each Euler-Lagrange form its \textit{Helmholtz-Sonin form}. By the Poincar\'e lemma, if $\sigma\in \mathrm{ker}(E_{m+1})$, there exists a local chart $(V^r,\psi^r)$ in $J^rB$ and a horizontal $m$-form $\lambda \in \Omega^m(V^r)$ having $E_m\vert_{V^r}(\lambda)=\sigma\vert_{V^r}$. Establishing whether this condition holds globally is the heart of the inverse problem in calculus of variation and can be formulated as follows: given equations satisfying some condition (the associated Helmholtz-Sonin form vanishes) do they arise from the variation of some Lagrangian? The variational sequence implies that this is always the case whenever the $m$-th cohomology group vanishes, therefore giving a sufficient conditions whenever some topological obstruction is not present. 

Now, if Proposition \ref{porop_1_muloc_charachterization} could somehow reproduce \eqref{eq_1_muloc_formula} for each $\varphi \in \mathcal{U}$ with an integral over the same $m$-form $\lambda_F$, we would have found a way to circumvent the topological obstructions that ruin the exactness of the variational sequence. Furthermore in the derivation of $\lambda_{F,0}$ we did not even require that the associated Euler-Lagrange equations had vanishing Helmholtz-Sonin form, but instead a Bastiani smoothness requirement that, due to Proposition \ref{thm_A_Bastiani smooth_pushforward}, will always be met by integral functionals constructed from smooth geometric objects. It appears therefore that the two approaches bears some kind of duality: given a representative $F_{\varphi}^{(1)}[0] \in \Omega^{m+1}(J^rB)/ \hspace{-0.1cm}\sim$ one can, on one hand, give a \textit{ultralocal chart dependent} Lagrangian via Proposition \ref{porop_1_muloc_charachterization} \textit{i.e.} a global $m$-form on the bundle $J^rB$ which however describe the functional only when evaluated in a small neighborhood of a backgroung section $\varphi_0$; on the other, prioritize \textit{ultralocal chart independence}, therefore having a local $m$-form defined on the bundle $J^r(\pi^{-1}U)$ for some open subset $U$ of $M$, which however describe the functional for all sections of $\Gamma^{\infty}(U\leftarrow \pi^{-1}(U))$.\\

\begin{proposition}\label{prop_1_variational_sqn}
Let $\mathcal{U}\subset\Gamma^{\infty}(M\leftarrow B)$ be $CO$-open, $F \in \mathcal{F}_{\mu loc}(B,\mathcal{U})$ satisfying the hypothesis of Proposition \ref{porop_1_muloc_charachterization} and the bound \eqref{eq_1_lipschitz_bound}. Fix $\varphi \in \mathcal{U}_{\varphi_0}$ and suppose that
$$
	F(\varphi)=F(\varphi_0)+ \int_M (j^{r}\varphi)^{*}\lambda_{F,0}=F(\varphi_0)+ \int_M (j^{r}\varphi)^{*}\lambda'_{F,0}.
$$
then $\lambda_{F,0}-\lambda'_{F,0}=d_h\theta$ for some $\theta \in \Omega^{m-1}_{\mathrm{hor}}(J^rB)$ if and only if the $m$-th de Rham cohomology group $H_{\mathrm{dR}}^m(B)=0$. In particular the above condition is verified whenever $B$ is a vector bundle with finite dimensional fiber and $M$ is orientable non-compact and connected.
\end{proposition}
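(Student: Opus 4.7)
The plan is to analyze the difference $\delta \doteq \lambda_{F,0}-\lambda'_{F,0}\in \Omega^m_{\mathrm{hor}}(J^rB)$, which, by hypothesis, satisfies $\int_M (j^r\varphi)^{*}\delta = 0$ for every $\varphi \in \mathcal{U}_{\varphi_0}$. The first step is to translate this integral vanishing condition into a pointwise condition on $\delta$. Given any chart $\mathcal{U}_{\varphi_0}$ and any compactly supported vertical vector field $\vec X \in \Gamma^{\infty}_c(M\leftarrow \varphi_0^*VB)$ with $\varphi_0 + t\vec X\in \mathcal{U}_{\varphi_0}$ for small $t$, differentiating the identity $\int_M (j^r(\varphi_0+t\vec X))^{*}\delta \equiv 0$ at $t=0$ and integrating by parts (as in \eqref{eq_1_derivative_lag_wavemaps}) gives
$$
\int_M \langle E_m(\delta)(j^{2r}_x\varphi_0),\vec X(x)\rangle\, d\mu_g(x) = 0,
$$
where $E_m$ is the Euler--Lagrange morphism of the variational sequence. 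By arbitrariness of $\vec X$ and $\varphi_0$, the fundamental lemma of the calculus of variations yields $E_m(\delta) = 0$, i.e.\ $\delta$ is \emph{variationally trivial}.

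The second and central step is to invoke the main cohomological result of the variational bicomplex on $J^{\infty}B$ (Anderson--Takens--Vinogradov, cf.\ the discussion in \cite{krupka}): the $m$-th cohomology of the horizontal/Euler--Lagrange complex is canonically isomorphic to $H^m_{\mathrm{dR}}(B)$, namely
$$
\frac{\ker\bigl(E_m:\Omega^m_{\mathrm{hor}}\to \Omega^{m+1}\bigr)}{\mathrm{im}(d_h)} \;\simeq\; H^m_{\mathrm{dR}}(B).
$$
Combined with Step 1, this means that the class $[\delta]$ in the above quotient is the cohomological obstruction to writing $\delta = d_h\theta$. Hence the existence of such $\theta$ for every pair of representations $(\lambda_{F,0},\lambda'_{F,0})$ is equivalent to the vanishing of $H^m_{\mathrm{dR}}(B)$: the forward direction is immediate, and the converse is obtained by realizing any non-trivial cohomology class as an actual $\delta$ via the explicit Volterra-type homotopy used to construct $\lambda_{F,0}$ in the proof of Proposition~\ref{porop_1_muloc_charachterization} (perturbing $\lambda_{F,0}$ by a representative of the class produces a valid second representative $\lambda'_{F,0}$ with prescribed non-exact difference). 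A small technical point is that the bicomplex statement is naturally formulated on $J^{\infty}B$, while the hypothesis fixes the order $r$; this is bridged by noting that the order bound \eqref{eq_1_lipschitz_bound} ensures the primitive $\theta$ can be chosen at finite (possibly larger) order, and pulling back the de~Rham class along the bundle projection $J^{r}B\to B$ shows the obstruction lives exactly in $H^m_{\mathrm{dR}}(B)$, which I expect to be the most delicate accounting.

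For the final sentence, the idea is standard: if $\pi:B\to M$ is a vector bundle with finite dimensional fiber, then the zero section $0_B:M\hookrightarrow B$ and the fibrewise contraction $h_t(b)=tb$ provide a smooth deformation retraction of $B$ onto $0_B(M)\simeq M$, hence $H^m_{\mathrm{dR}}(B)\simeq H^m_{\mathrm{dR}}(M)$. When $M$ is orientable, connected and non-compact of dimension $m$, Poincar\'e duality gives $H^m_{\mathrm{dR}}(M)\simeq H^0_c(M)^{*}$, and the latter vanishes because compactly supported locally constant functions on a connected non-compact manifold are identically zero. Combining with the cohomological equivalence above yields $H^m_{\mathrm{dR}}(B)=0$, so the desired $d_h$-exactness holds in this geometric setting.
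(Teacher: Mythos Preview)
Your argument is essentially the same as the paper's: both establish that $E_m(\lambda_{F,0}-\lambda'_{F,0})=0$, then invoke the isomorphism between the variational cohomology at degree $m$ and $H^m_{\mathrm{dR}}(B)$, and finally use the homotopy equivalence $B\simeq M$ for vector bundles together with Poincar\'e duality $H^m_{\mathrm{dR}}(M)\simeq H^0_{\mathrm{dR},c}(M)=0$ for connected non-compact orientable $M$.

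Two minor comparative remarks. First, for the variational triviality step you differentiate the integral identity and apply the fundamental lemma; the paper instead observes directly that $E_m(\lambda_{F,0})=E_m(\lambda'_{F,0})=f^{(1)}_{\varphi_0}[0]$, since both Lagrangians arose from the construction in Proposition~\ref{porop_1_muloc_charachterization} representing the same functional $F$. This is a shorter route to the same conclusion. Second, your treatment of the converse direction (producing, for any nonzero class in $H^m_{\mathrm{dR}}(B)$, an actual pair $\lambda_{F,0},\lambda'_{F,0}$ with non-exact difference) is more explicit than the paper's, which simply asserts the ``if and only if'' from the cohomological identification without spelling out this realization step; your extra care here is warranted.
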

\begin{proof}
Using the notation introduced above for the variational sequence we have that $E_m(\lambda_{F,0})=E_m(\lambda'_{F,0})$, since each of the two expressions equals $f^{(1)}_{\varphi_0}[0]$; thus their difference is zero and $\lambda_{f,\psi}-\lambda'_{f,\psi}\in \Omega^m(J^rB)/\sim$. The latter cohomology group is isomorphic, by the abstract de Rham Theorem, to $H_{\mathrm{dR}}^m(B)$, therefore $\lambda_{F,0}-\lambda'_{F,0}=d_h\theta$ if and only if $H_{\mathrm{dR}}^m(B)=0$. When $B$ is a vector bundle over $M$ its de Rham cohomology group are isomorphic to those of $M$, which when orientable non-compact and connected, has $H^m_{\mathrm{dR}}(M)=0$. The latter claim can be established using Poincar\'e duality, \textit{i.e.} $H^m_{\mathrm{dR}}(M)\simeq H^0_{\mathrm{dR},c}(M)$, if $M$ is non-compact and connected, \textit{e.g.} when it is globally hyperbolic, there are no compactly supported functions with vanishing differential other then the zero function, so the $m$-th cohomology group is zero.
\end{proof}

We shall conclude this section by introducing generalized Lagrangians, which, as the name suggests, will be used to select a dynamic on $\Gamma^{\infty}(M\leftarrow B)$. We stress that unlike the usual notion of Lagrangian - either a horizontal $m$-form over $J^rB$ or a morphism $J^rB\to \Lambda_m(M)$ - this definition will allow us to bypass all problems of convergence of integrals of forms in noncompact manifolds.

\begin{definition}\label{def_1_gen_lag}
Let $\mathcal{U}\subset\Gamma^{\infty}(M\leftarrow B)$ be $CO$-open. A generalized Lagrangian $\mathcal{L}$ on $\mathcal{U}$ is a mapping $$\mathcal{L}:C^{\infty}_c(M) \rightarrow \mathcal{F}_{c}(B,\mathcal{U}),$$ such that 
\begin{itemize}
\item[$(i)$] $\mathrm{supp}(\mathcal{L}(f))\subseteq \mathrm{supp}(f)$ and $\mathcal{L}(f)$ is Bastiani smooth for all $f\in C^{\infty}_c(M)$, 
\item[$(ii)$] for each $f_1$, $f_2$, $f_3 \in C^{\infty}_c(M)$ with $\mathrm{supp}(f_1)\cap \mathrm{supp}(f_3)=\emptyset$, $$\mathcal{L}(f_1+f_2+f_3)=\mathcal{L}(f_1+f_2)-\mathcal{L}(f_2)+\mathcal{L}(f_2+f_3).$$
\end{itemize}
\end{definition}

Given the properties of the above Definition we immediately get:  

\begin{proposition} \label{prop_1_gen_lag_1}
Let $\mathcal{U}\subset\Gamma^{\infty}(M\leftarrow B)$ be $CO$-open, $\mathcal{L}$ a generalized Lagrangian on $\mathcal{U}$. Then
\begin{itemize}
\item[$(i)$] $\mathrm{supp}(\mathcal{L}(f+f_0)-\mathcal{L}(f_0)) \subseteq \mathrm{supp}(f)$ for all $f$, $f_0 \in C^{\infty}_c(M)$,
\item[$(ii)$] for all $f\in C^{\infty}_c(M)$, $\mathcal{L}(f)$ is a local functional.
\end{itemize}
\end{proposition}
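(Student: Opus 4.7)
\textbf{For (i).} The plan is to localise around any point $x \notin \mathrm{supp}(f)$ and exploit the additivity axiom (ii) of Definition~\ref{def_1_gen_lag} to reduce $\mathcal L(f+f_0)-\mathcal L(f_0)$ to a difference of two functionals whose spacetime supports avoid $x$. Concretely, fix $x \notin \mathrm{supp}(f)$ and pick open neighborhoods $W\Subset V$ of $x$ with $V\cap\mathrm{supp}(f)=\emptyset$, together with a bump $\chi\in C^\infty_c(M)$ with $\chi\equiv 1$ on $W$ and $\mathrm{supp}(\chi)\subset V$. Setting $g_1=\chi f_0$ and $g_2=f_0-g_1$, one has $\mathrm{supp}(g_1)\cap \mathrm{supp}(f)=\emptyset$, so Definition~\ref{def_1_gen_lag}(ii) applied to the triple $(g_1,g_2,f)$ gives
\[
\mathcal L(f_0+f)-\mathcal L(f_0)=\mathcal L(g_2+f)-\mathcal L(g_2).
\]
On $W$ we have $g_2=(1-\chi)f_0=0$ and $f=0$, so by Definition~\ref{def_1_gen_lag}(i) both $\mathrm{supp}(\mathcal L(g_2+f))$ and $\mathrm{supp}(\mathcal L(g_2))$ miss $W$. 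Since $W$ is a neighborhood of $x$, it follows directly from Definition~\ref{def_1_func_support} that $x\notin\mathrm{supp}\big(\mathcal L(f_0+f)-\mathcal L(f_0)\big)$.

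\textbf{For (ii).} The approach is to invoke Proposition~\ref{prop_1_additivity}: since $\mathcal L(f)$ is Bastiani smooth, it suffices to show that it is additive in the sense of Definition~\ref{def_1_additivity}(ii). So I would take $\varphi_1,\varphi_0,\varphi_{-1}\in\mathcal U$ with disjoint relative supports $K_1=\mathrm{supp}_{\varphi_0}(\varphi_1)$ and $K_{-1}=\mathrm{supp}_{\varphi_0}(\varphi_{-1})$, choose disjoint open neighbourhoods $V_{\pm 1}\supset K_{\pm 1}$ and cutoffs $\chi_{\pm 1}$ with $\chi_{\pm 1}\equiv 1$ on smaller neighborhoods of $K_{\pm 1}$ and $\mathrm{supp}(\chi_{\pm 1})\subset V_{\pm 1}$. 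Splitting
\[
f=f_1+f_0'+f_{-1},\qquad f_{\pm 1}=\chi_{\pm 1}f,\quad f_0'=(1-\chi_1-\chi_{-1})f,
\]
we have $\mathrm{supp}(f_1)\cap\mathrm{supp}(f_{-1})=\emptyset$, so Definition~\ref{def_1_gen_lag}(ii) yields
\[
\mathcal L(f)=\mathcal L(f_1+f_0')-\mathcal L(f_0')+\mathcal L(f_0'+f_{-1}).
\]
By construction $\mathrm{supp}(f_1+f_0')\cap V_{-1}=\emptyset$, $\mathrm{supp}(f_0'+f_{-1})\cap V_1=\emptyset$, and $\mathrm{supp}(f_0')\cap (V_1\cup V_{-1})=\emptyset$, hence each of the three summands has spacetime support disjoint from the region in which the corresponding pair of sections $(\varphi,\varphi_{\pm 1}),(\varphi_{\pm 1},\varphi_0)$ may differ.

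\textbf{Conclusion of (ii).} Evaluating the displayed identity at $\varphi,\varphi_1,\varphi_{-1},\varphi_0$ and using the support properties just listed one gets
\[
\mathcal L(f_1+f_0')(\varphi)=\mathcal L(f_1+f_0')(\varphi_1),\quad \mathcal L(f_0'+f_{-1})(\varphi)=\mathcal L(f_0'+f_{-1})(\varphi_{-1}),
\]
and $\mathcal L(f_0')(\varphi)=\mathcal L(f_0')(\varphi_0)$, with analogous identifications for $\varphi_{\pm 1}$; summing the three evaluations one obtains $\mathcal L(f)(\varphi)=\mathcal L(f)(\varphi_1)-\mathcal L(f)(\varphi_0)+\mathcal L(f)(\varphi_{-1})$, which is additivity. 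The only non-trivial point in the argument is justifying those invariance identities: they require passing from the pointwise support condition of Definition~\ref{def_1_func_support} to invariance under finite (non-infinitesimal) section changes, which will be handled by interpolating between the two sections via a finite chain of localised perturbations exactly as in Lemma~\ref{lemma_1_interpolation_of_sections}. This is the main technical obstacle; once the interpolation argument is in place, Proposition~\ref{prop_1_additivity} immediately yields locality of $\mathcal L(f)$.
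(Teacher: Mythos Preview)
Your proof is correct and follows essentially the same strategy as the paper's. Part (i) is virtually identical: your $g_1=\chi f_0$ plays exactly the role of the paper's auxiliary function $g$, and the additivity axiom reduces $\mathcal L(f+f_0)-\mathcal L(f_0)$ to a difference of Lagrangians whose test functions vanish near $x$.

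For part (ii) there is a small but genuine difference in route worth noting. The paper verifies \emph{$\varphi_0$-additivity} (item (ii) of Proposition~\ref{prop_1_additivity}), working entirely inside a fixed ultralocal chart $\mathcal U_{\varphi_0}$. In that linear setting, the invariance step ``$\mathrm{supp}(F)\cap\mathrm{supp}(\vec X_2)=\emptyset\Rightarrow F_{\varphi_0}(\vec X_1+\vec X_2)=F_{\varphi_0}(\vec X_1)$'' follows immediately from differentiating along the straight path $t\mapsto\vec X_1+t\vec X_2$ together with Lemma~\ref{lemma_1_support_property}, so no interpolation is needed. You instead verify the \emph{global} additivity (item (i) of Proposition~\ref{prop_1_additivity}), which forces you to justify invariance under section changes that may leave a single chart; this is precisely where Lemma~\ref{lemma_1_interpolation_of_sections} enters, as you correctly identify. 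Both routes are valid and lead to the same conclusion via Proposition~\ref{prop_1_additivity}; the paper's chart-based version is slightly more economical in avoiding that extra step.

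One minor imprecision: your claims of the form ``$\mathrm{supp}(f_1+f_0')\cap V_{-1}=\emptyset$'' are too strong as stated, since $\chi_{-1}\equiv 1$ only on a \emph{smaller} neighborhood of $K_{-1}$, not on all of $V_{-1}$. What is actually true (and sufficient) is that $\mathrm{supp}(f_1+f_0')=\mathrm{supp}((1-\chi_{-1})f)$ misses a neighborhood of $K_{-1}$, and similarly for the other two terms. This is harmless and does not affect the argument.
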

\begin{proof}
For $(i)$ take $x \notin \mathrm{supp}(f)$ then we can find some $g$ compactly supported in some open set containing $x$ with $\mathrm{supp}(g)\cap \mathrm{supp}(f) = \emptyset$ and $g \equiv f_0$ in a neighborhood $V$ of $x$ then using $(i)$ Definition \ref{def_1_gen_lag}, $\mathcal{L}(f-g+f_0)-\mathcal{L}(f_0)=(\mathcal{L}(f+f_0)-\mathcal{L}(f_0))+(\mathcal{L}(f_0-g)-\mathcal{L}(f_))$ which gives $\mathcal{L}(f+f_0)-\mathcal{L}(f_0)=\mathcal{L}(f-g+f_0)-\mathcal{L}(f_0-g)$ whence $\mathrm{supp}(\mathcal{L}(f+f_0)-\mathcal{L}(f_0))=\mathrm{supp}(\mathcal{L}(f+f_0-g)-\mathcal{L}(f_0-g))\subset \mathrm{supp}(f+f_0-g)\cup \mathrm{supp}(f_0-g)$. Then $(f+f_0-g)(x)=0$ when $x \notin \mathrm{supp}(f)$ and $(f_0-g)(x)=0$ for each $x\in V$ therefore $x \notin \mathrm{supp}(\mathcal{L}(f+f_0)-\mathcal{L}(f_0))$.\\
Now we show that for each $\varphi \in \mathcal{U}$, $\mathrm{supp}(d^2F{\varphi}[0])\subset \triangle_2(M)$. 
Due to $(ii)$ Proposition \ref{prop_1_additivity}, is equivalent to show that $F_{\varphi_0}$ is $\varphi_0$-additive for each $\varphi_0\in \mathcal{U}$. Fix any such section of $B$, take $\varphi_j \in \mathcal{U}_{\varphi}$, $j=1,2$ such that all hypothesis of Definition \ref{def_1_additivity} $(i)$ are satisfied and, as usual, call $\vec{X}_j= u_{\varphi_0}^{-1}(\varphi_j)$. Let $g_j \in C^{\infty}_c(M)$, $j=1,2$ such that $g_j \equiv 1$ in an open neighborhood of $ \mathrm{supp}_{\varphi_0}(\varphi_j)$. Given any $f \in C^{\infty}_c(M)$ we set $f_j=g_j f$ for $i=1,2$ and $f_0=f-f_1-f_2$. We want to show that 
\begin{equation}\label{eq_1_gen_lag_additivity_1}
	\mathcal{L}(f)_{\varphi_0}(\vec{X}_1+\vec{X}_2)=\mathcal{L}(f)_{\varphi_0}(\vec{X}_1)-\mathcal{L}(f)_{\varphi_0}(0)+ \mathcal{L}(f)_{\varphi_0}(\vec{X}_2).
\end{equation}
Expanding $f$ in the l.h.s. and using $(ii)$ Definition \ref{def_1_gen_lag} yields
$$
\begin{aligned}
	\mathcal{L}(f)_{\varphi_0}(\vec{X}_1+\vec{X}_2)&= \mathcal{L}(f_1+f_2)_{\varphi_0}(\vec{X}_1+\vec{X}_2)-\mathcal{L}(f_2)_{\varphi_0}(\vec{X}_1+\vec{X}_2)+ \mathcal{L}(f_2+f_3)_{\varphi_0}(\vec{X}_1+\vec{X}_2).
\end{aligned}
$$
By construction $\mathrm{supp}(\mathcal{L}(f_1+f_0))\subset \mathrm{supp}(f_1+f_0)$ and $ \mathrm{supp}(f_1+f_0)\cap \mathrm{supp}(\vec{X}_3)= \emptyset$, therefore $\mathcal{L}(f_1+f_0)_{\varphi_0}(\vec{X}_1+\vec{X}_2)=\mathcal{L}(f_1+f_0)_{\varphi_0}(\vec{X}_1)$. Repeating this argument for the other terms gives us
\begin{equation}\label{eq_1_gen_lag_additivity_2}
	\mathcal{L}(f)_{\varphi_0}(\vec{X}_1+\vec{X}_2)=\mathcal{L}(f_1+f_0)_{\varphi_0}(\vec{X}_1)-\mathcal{L}(f_0)_{\varphi_0}(0)+\mathcal{L}(f_0+f_2)_{\varphi_0}(\vec{X}_2).
\end{equation}
On the other hand expanding $f$ in the r.h.s. of \eqref{eq_1_gen_lag_additivity_1} and using a similar argument we obtain 
\begin{align*}
	\mathcal{L}(f)_{\varphi_0}(\vec{X}_1)&= \mathcal{L}(f_1+f_0)_{\varphi_0}(\vec{X}_1)-\mathcal{L}(f_0)_{\varphi_0}(\vec{X}_1)+\mathcal{L}(f_0+f_2)_{\varphi_0}(\vec{X}_1) \\
	\equiv & \mathcal{L}(f_1+f_0)_{\varphi_0}(\vec{X}_1)-\mathcal{L}(f_0)_{\varphi_0}(\vec{X}_1)+\mathcal{L}(f_0+f_2)_{\varphi_0}(0) ,\\
	\mathcal{L}(f)_{\varphi_0}(0)&= \mathcal{L}(f_1+f_0)_{\varphi_0}(0)-\mathcal{L}(f_0)_{\varphi_0}(0)+\mathcal{L}(f_0+f_2)_{\varphi_0}(0), \\
	\mathcal{L}(f)_{\varphi_0}(\vec{X}_2)&= \mathcal{L}(f_1+f_0)_{\varphi_0}(\vec{X}_2)-\mathcal{L}(f_0)_{\varphi_0}(\vec{X}_2)+\mathcal{L}(f_0+f_2)_{\varphi_0}(\vec{X}_2)\\
	&\equiv  \mathcal{L}(f_1+f_0)_{\varphi_0}(\vec{X}_2)-\mathcal{L}(f_0)_{\varphi_0}(0)+\mathcal{L}(f_0+f_2)_{\varphi_0}(\vec{X}_2).
\end{align*}
Which combined as in the r.h.s. of \eqref{eq_1_gen_lag_additivity_1} yield the same expression of the l.h.s of \eqref{eq_1_gen_lag_additivity_2}. 
\end{proof}

Combining the linearity of $C^{\infty}_c(M)$, property $(ii)$ Definition \ref{def_1_gen_lag} and Proposition \ref{prop_1_gen_lag_1} we obtain that each generalized Lagrangian can be written as a suitable sum of arbitrarily small supported generalized Lagrangians. To see it, fix $\epsilon>0$ and consider $\mathcal{L}(f)$. By compactness $\mathrm{supp}(f)$ admits a finite open cover of balls, $\lbrace B_i \rbrace_{i \in I}$ of radius $\epsilon$ such that none of the open balls is completely contained in the union of the others. Let $\lbrace g_i \rbrace_{i \in I}$ be a partition of unity subordinate to the above cover of $\mathrm{supp}(f)$, set $f_i \doteq g_i \cdot f$. Then using $(ii)$ Definition \ref{def_1_gen_lag}
$$
	\mathcal{L}(f)=\mathcal{L}\left(\sum_i f_i \right)=\sum_{J\subset I} c_J \mathcal{L}\left(\sum_{j \in J} f_j \right),
$$
where $J\subset I$ contains the indices of all balls $B_i$ having non empty intersection with a fixed ball (the latter included), and $c_J= \pm 1$ are suitable coefficients determined by the application of $(ii)$ Definition \ref{def_1_gen_lag}. By construction each index $J$ has at most two elements and $\mathrm{supp}(\sum_{j \in J} f_j)$ is contained at most in a ball of radius $2\epsilon$. We have thus split $\mathcal{L}(f)$ as a sum of generalized Lagrangians with arbitrarily small supports.

\begin{definition}\label{def_1_euler_der}
Let $\mathcal{U}\subset\Gamma^{\infty}(M\leftarrow B)$ be $CO$-open, $\mathcal{L}$ a generalized Lagrangian on $\mathcal{U}$. The $k$-th Euler-Lagrange derivative of $\mathcal{L}$ in $\varphi \in \mathcal{U}$ along $ (\vec{X}_1,\ldots, \vec{X}_k) \in \Gamma^{\infty}_c(M\leftarrow \varphi^*VB)^k $ is 
\begin{equation}\label{eq_1_variational_der}
	\delta^{(k)} \mathcal{L}(1)_{\varphi}[0](\vec{X}_1,\ldots, \vec{X}_k)\doteq \left.\frac{d^k}{dt_1 \ldots dt_k}\right|_{t_1= \ldots=t_k=0} \mathcal{L}(f)_{\varphi}[0] (t_1 \vec{X}_1+ \ldots+ t_k\vec{X}_k)
\end{equation}
where $\left. f \right|_{K} \equiv 1$ on a suitable compact $K$ containing all compacts $\mathrm{supp}(\vec{X}_i)$.
\end{definition}


From now on we will assume that generalized Lagrangian used are microlocal, \textit{i.e.} $\mathcal{L}(f)\in \mathcal{F}_{\mu loc}(B,\mathcal{U})$ for each $f\in C^{\infty}_c(M)$; this means that the first Euler-Lagrange derivative can be written as
\begin{equation}\label{eq_1_euler_der}
	\delta^{(1)} \mathcal{L}(1)_{\varphi}[0](\vec{X})=\int_M E(\mathcal{L})_{\varphi}[0](\vec{X}),
\end{equation}
where by microlocality $E(\mathcal{L})_{\varphi}[0]\in\Gamma^{\infty}_c(M\leftarrow \varphi^{*}VB' \otimes \Lambda_m(M))$.\\

A generalized Lagrangian $\mathcal{L}$ is trivial whenever $\mathrm{supp}\big(\mathcal{L}(f)\big)\subset \mathrm{supp}(df)$ for each $f\in C^{\infty}_c(M)$. Triviality induces an equivalence relation on the space of generalized Lagrangians, namely two $\mathcal{L}_1$, $\mathcal{L}_{2}$ are equivalent whenever their difference is trivial. We can show that if two Lagrangians $\mathcal{L}_1$, $\mathcal{L}_2$ are equivalent then they end up producing the same first variation \eqref{eq_1_euler_der}. For instance suppose that $\mathcal{L}_1(f)-\mathcal{L}_2(f) =\Delta\mathcal{L}(f)$ with $\Delta\mathcal{L}(f)$ trivial generalized Lagrangian for each $f\in C^{\infty}_c(M)$. To evaluate $\delta^{(1)}\Delta\mathcal{L}(1)_{\varphi}[0](\vec{X})$ one has to choose some $f$ which is identically $1$ in a neighborhood of $\mathrm{supp}(\vec{X})$, however, by $(i)$ Definition \ref{def_1_gen_lag} $\mathrm{supp}\big(\Delta\mathcal{L}(f)\big)\subset \mathrm{supp}(df)\cap \mathrm{supp}(\vec{X})=\emptyset$, therefore by Lemma \ref{lemma_1_support_property} we obtain $E(\Delta\mathcal{L})_{\varphi}[0](\vec{X})=0$ and
$$
\begin{aligned}
     \delta^{(1)} \mathcal{L}_1(f)_{\varphi}[0](\vec{X}) &=  \delta^{(1)} \mathcal{L}_2(1)_{\varphi}[0](\vec{X})+\delta^{(1)} \Delta\mathcal{L}(1)_{\varphi}[0](\vec{X})\\
     &= \int_M E(\mathcal{L}_2)_{\varphi}[0](\vec{X})+\int_M E(\Delta\mathcal{L})_{\varphi}[0](\vec{X})\\
     &=\delta^{(1)} \mathcal{L}_2(1)_{\varphi}[0](\vec{X}).
\end{aligned}
$$
Finally we compare our generalized action functional with the \textit{standard} action which is generally used in classical field theory (see \textit{e.g.} \cite{fati}, \cite{krupka}). One generally introduce the \textit{standard geometric} Lagrangian, $\lambda$ of order $r$, as a bundle morphism
\begin{center}
\begin{tikzcd}
		 & J^rB \arrow[r,"\lambda"] \arrow[d,"\pi^r"]  & \Lambda_m(M) \arrow[d, "\rho"] \\
		 & M\arrow[r,equal] & M 
\end{tikzcd}
\end{center}	
between $(J^rB,\pi^r,M)$ and $(\Lambda_m(M),\rho, M, \vert\wedge^m T_x^*M \vert)$, where the latter is the vector bundle whose typical fiber is the vector space of weight one $m$-form densities. 
Two Lagrangian morphisms $\lambda_1$, $\lambda_2$ are equivalent whenever their difference is an exact form. Its associated \textit{standard geometric} action functional will therefore be 
\begin{equation}\label{eq_classical_lagrangian}
	\mathcal{A}_D(\varphi)= \int_M \chi_D(x) \lambda(j^r_x\varphi)
\end{equation}
where $\lambda$ an element of the equivalence class of Lagrangian morphisms, $D$ is a compact region of $M$ whose boundary $\partial D$ is an orientable $(m-1)$-manifold and $\chi_D$ its characteristic function. One could be tempted to draw a parallel with a generalized Lagrangian by considering the mapping
\begin{equation}\label{eq_classical_gen_lag}
	\chi_D \mapsto \mathcal{A}(\chi_D)= \int_M \chi_D(x) \lambda(j^r_x\varphi)\ .
\end{equation}
However \eqref{eq_classical_gen_lag} differs from Definition \ref{def_1_gen_lag} in the singular character of the cutoff function. Indeed the functional $\mathcal{A}_D \in \mathcal{F}_{loc}(B,\mathcal{U})$ for each choice of compact $D$ but it is \textit{never} microlocal, for the integral kernel of $\mathcal{A}(\chi_D)^{(1)}_{\varphi}[0]$ has always singularities localized in $\partial D$. This is a severe problem when attempting to calculate the Peierls bracket for local functionals, a way out is to extend this bracket to less regular functionals (see Definition \ref{def_1_WF_mucaus}) maintaining the closure of the operation (see Theorem \ref{thm_1_peierls_closedness}); however, we cannot outright extend the bracket to all local functionals. Therefore, in order to accommodate those less regular functionals such as \eqref{eq_classical_lagrangian}, one would need to place severe restrictions on the possible compact subsets $D$ which cut off possible integration divergences. This, however, is not consistent with the derivation of Euler-Lagrange equations by the usual variation technique where the latter are obtained by imposing requirements that ought to hold \textit{for each} $D\subset M$ compact. 

Of course, given a Lagrangian morphism $\lambda$ of order $r$ we can always define a generalized microlocal Lagrangian by a microlocal-valued distribution, \textit{i.e.} 

\begin{equation*}
    C^{\infty}_c(M)\times\mathcal{U} \ni (f,\varphi) \mapsto \mathcal{L}(f)(\varphi)= \int_M f(x) \lambda(j^r_x\varphi) \ .
\end{equation*}

When we calculate higher order derivatives we get
\begin{equation}\label{eq_1_k-th_derivative_gen_Lag}
	d^{k+1} \mathcal{L}(1)_{\varphi}[0](\vec{X}_1,\ldots, \vec{X}_{k+1})= \int_M \delta^{(k)} E(\mathcal{L})_{\varphi}[0](\vec{X}_1)(\vec{X}_2,\ldots, \vec{X}_{k+1}).
\end{equation}
In particular, we can view $\delta^{(1)} E(\mathcal{L})_{\varphi}[0]: \Gamma^{\infty}_c(M\leftarrow \varphi^{*}VB) \rightarrow  \Gamma^{\infty}_c\big(M\leftarrow \varphi^{*}VB' \otimes \Lambda_m(M)\big) $, and induce the linearized field equations around $\varphi$ which represents one of the ingredients for the construction of the Peierls bracket.

\section{The Peierls bracket}\label{section_peierls_bracket}

Heuristically speaking the Peierls bracket is a duality relating two observables, $F$, $G$, that accounts for the effect of the (antisymmetric) influence of $F$ on $G$ when the latter is perturbed around a solution of certain equations. We will define this quantity using the linearized field equations which can be constructed with the second derivative of a generalized Lagrangian, which with some additional hypothesis will turn out to be normally hyperbolic. We start by reviewing some basic notions from the theory of normally hyperbolic (NH) operators.\\

In the last section we have shown how for a microlocal generalized Lagrangian $\delta^{(1)} E(\mathcal{L})_{\varphi}[0]: \Gamma^{\infty}_c(M\leftarrow \varphi^{*}VB) \otimes \Gamma^{\infty}_c(M\leftarrow \varphi^{*}VB) \rightarrow \Lambda_m(M) $ one can define a linear operator
$$
\delta^{(1)} E(\mathcal{L})_{\varphi}[0]: \Gamma^{\infty}_c(M\leftarrow \varphi^{*}VB)  \rightarrow \Gamma^{\infty}_c\big(M\leftarrow \varphi^{*}VB'\otimes \Lambda_m(M) \big) 
$$
if we fix a metric $h$ on the standard fiber of $B$ and the Lorentzian metric $g$ of $M$ inducing the Hodge isomorphism $*_g$, we can define 
\begin{equation}\label{eq_1_diffop_1}
	D_{\varphi} \doteq (\varphi^*h)^{\sharp} \circ (\mathrm{id}_{\varphi^{*}VB' } \otimes *_g)\circ  \delta^{(1)} E(\mathcal{L})_{\varphi}[0]: \Gamma^{\infty}(M\leftarrow \varphi^{*}VB ) \rightarrow \Gamma^{\infty}(M\leftarrow \varphi^{*}VB).
\end{equation}
For each $\varphi$ we can see that $D_{\varphi}$ is a differential operator and determine the principal symbol. 

\begin{lemma}
    Suppose that $D_{\varphi}=$ is as in \eqref{eq_1_diffop_1}, then its principal symbol is independent form the section $\varphi $ chosen.
\end{lemma}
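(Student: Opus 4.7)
The plan is to reduce the statement to an explicit computation in an ultralocal fibered chart around $\varphi$, where the principal symbol of $\delta^{(1)}E(\mathcal{L})_\varphi[0]$ admits a closed-form expression in terms of the highest-jet derivatives of the underlying Lagrangian density. First I would fix a chart $(\mathcal{U}_\varphi,u_\varphi)$ with fibered coordinates $(x^\mu,y^i)$ on $B$ and invoke Proposition \ref{porop_1_muloc_charachterization} to write, for $\psi\in\mathcal{U}_\varphi$,
\[
  \mathcal{L}(f)(\psi) \;=\; \mathcal{L}(f)(\varphi) + \int_M f(x)\,L\bigl(j^r_x\psi\bigr)\,d\mu_g(x)
\]
for some Lagrangian density $L$ of finite order $r$. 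A direct application of the Euler--Lagrange procedure followed by a single variation shows that $\delta^{(1)}E(\mathcal{L})_\varphi[0]$ is a linear differential operator of order $2r$ whose highest-order term in $\vec X$ is, up to an overall sign depending on $r$,
\[
  \frac{\partial^2 L}{\partial y^i_\alpha\,\partial y^j_\beta}\bigl(j^r\varphi\bigr)\,X^j_\beta,\qquad |\alpha|=|\beta|=r.
\]

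Next I would post-compose with $\mathrm{id}\otimes *_g$ and $(\varphi^*h)^\sharp$ as prescribed by \eqref{eq_1_diffop_1}: the Hodge factor simply removes the volume density, while $(\varphi^*h)^\sharp$ raises the fiber covector index by means of the inverse fiber metric. Under the standing hypothesis that $D_\varphi$ be second order, one has $r=1$, and the principal symbol in the chosen fibered frame then reads
\[
  \sigma_2(D_\varphi)^{\mu\nu,\,i}{}_j \;=\; -\,h^{ik}(\varphi)\,\frac{\partial^2 L}{\partial y^k_\mu\,\partial y^j_\nu}\bigl(j^1\varphi\bigr).
\]
At this point the expression still depends, a priori, on $\varphi$ through its $1$-jet. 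The decisive input is the assumption that $D_\varphi$ be normally hyperbolic in the sense of Definition \ref{def_1_normal_hyp}, which forces the quadratic form on the right-hand side to equal $\tfrac12\,g^{\mu\nu}\delta^i_j$ at every admissible $\varphi$, so that in every fibered frame the principal symbol reduces to the canonical expression $\tfrac12\,g^{-1}\otimes\mathrm{id}_{\varphi^*VB}$, which carries no residual dependence on the section.

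The main obstacle, and the only genuinely delicate point, will be to verify that this conclusion is well posed independently of the ultralocal chart and the fibered frame chosen to perform the calculation. On the one hand, under fibered diffeomorphisms of $B$ the combination $h^{ik}\bigl(\partial^2 L/\partial y^k_\mu\partial y^j_\nu\bigr)$ transforms as a $(1,1)$-tensor in the fiber indices, exactly as $g^{-1}\otimes\mathrm{id}$ does, so the identification is intrinsic. On the other hand, the ultralocal change-of-chart formula \eqref{eq_1_gamma_loc_change} shows that passing between two charts $(\mathcal{U}_\varphi,u_\varphi)$ and $(\mathcal{U}_\psi,u_\psi)$ only affects $D_\varphi$ through contributions of order strictly less than two, which by definition do not enter $\sigma_2$. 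Combining both observations yields $\sigma_2(D_\varphi)=\tfrac12\,g^{-1}\otimes\mathrm{id}$ as a well-defined section of $S^2TM\otimes\mathrm{End}(\varphi^*VB)$, manifestly independent of the chosen $\varphi$, which is the desired claim.
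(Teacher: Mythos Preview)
Your final paragraph already contains the paper's proof; the trouble is that you bury it under an argument that is circular. The lemma does not assume that $D_\varphi$ is normally hyperbolic: that hypothesis is introduced in the paper only \emph{after} this lemma, and the whole point of the lemma is to make ``$D_\varphi$ is normally hyperbolic'' a condition on the Lagrangian $\mathcal{L}$ rather than on a particular section $\varphi$. If you take normal hyperbolicity as an input, the conclusion is a tautology, since by Definition~\ref{def_1_normal_hyp} the principal symbol is then $\tfrac12\,g^{-1}\otimes\mathrm{id}$, which visibly does not depend on $\varphi$. So the sentence ``the decisive input is the assumption that $D_\varphi$ be normally hyperbolic'' is not a proof of the lemma but an invocation of the statement you are supposed to establish.

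The paper's argument is exactly what you sketch in your last paragraph. One applies the second-order Fa\`a di Bruno formula \eqref{eq_1_gamma_loc_change} to compare $\delta^{(1)}E(\mathcal{L})$ computed in two ultralocal charts $(\mathcal{U}_\varphi,u_\varphi)$ and $(\mathcal{U}_\psi,u_\psi)$, obtaining
\[
  \delta^{(1)}E(\mathcal{L})_\psi(\vec X_1,\vec X_2)
  =\delta^{(1)}E(\mathcal{L})_\varphi\bigl(du_{\varphi\psi}\vec X_1,\,du_{\varphi\psi}\vec X_2\bigr)
  +E(\mathcal{L})_\varphi\bigl(d^2u_{\varphi\psi}(\vec X_1,\vec X_2)\bigr).
\]
The key observation is that both $du_{\varphi\psi}$ and $d^2u_{\varphi\psi}$ act \emph{pointwise} on the fibres and therefore introduce no spacetime derivatives of $\vec X_i$: the second term is of order zero, and the first merely conjugates the operator by a fibrewise isomorphism. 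Hence the principal symbol is unchanged, and in particular normal hyperbolicity at one $\varphi_0$ propagates throughout $\mathcal{U}_{\varphi_0}$. Drop the appeal to Definition~\ref{def_1_normal_hyp} and promote your closing remark about \eqref{eq_1_gamma_loc_change} to the body of the argument; the explicit computation with $\partial^2 L/\partial y^i_\mu\partial y^j_\nu$ is then unnecessary for the lemma itself, though it is useful for the example that follows in the paper.
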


\begin{proof}
Indeed by \eqref{eq_1_gamma_loc_change},
\begin{align*}
	\delta^{(1)} E(\mathcal{L})_{\psi}[0](\vec{X}_1,\vec{X_2}) & = \delta^{(1)} E(\mathcal{L})_{\varphi}[0]\left(d^1u_{\varphi \psi}[u_{\psi}(\varphi)](\vec{X}_1),d^1u_{\varphi \psi}[u_{\psi}(\varphi)](\vec{X_2})\right) \\
	&\quad + E(\mathcal{L})_{\varphi}[0]\left(d^2u_{\varphi \psi}[u_{\psi}(\varphi)](\vec{X}_1,\vec{X_2})\right);
\end{align*}
while the second piece modifies the expression of the differential operator, it does not alter its principal symbol since the local form of $d^2u_{\varphi \psi}[u_{\psi}(\varphi)](\vec{X}_1,\vec{X_2})$ does yield extra derivatives. We therefore conclude that if we use a generalized Lagrangian $\mathcal{L}$ whose linearized equations differential operator, $D_\varphi$, is normally hyperbolic for some $\varphi_0 \in \mathcal{U}$, then it is normally hyperbolic (with the same principal symbol) for all $\varphi\in \mathcal{U}_{\varphi_0}$.
\end{proof}

Let us give a more specific example on how to calculate the principal symbol from a microlocal generalized Lagrangian. Recalling formula \eqref{eq_1_derivative_lag_wavemaps} with $\lambda:J^1(M\times N) \to \Lambda^m(M)$, the latter being a first order Lagrangian, we have
\begin{align*}
	d^2\mathcal{L}_{f,\lambda,\varphi}[0](\vec{X}_1,\vec{X}_2)&=\int_{M}f(x)\bigg\{ \frac{\partial^2 \lambda}{\partial y^i \partial y^j} \vec{X}^i_1 \vec{X}^j_2 + \frac{\partial^2 \lambda}{\partial y^i_{\mu} \partial y^j} d_{\mu}\big(\vec{X}^i_1\big) \vec{X}^j_2\\ 
	& \quad+ \frac{\partial^2 \lambda}{\partial y^i_{\mu} \partial y^j} \vec{X}^i_1 d_{\mu}\big(\vec{X}^j_2\big)+ \frac{\partial^2 \lambda}{\partial y^i_{\mu} \partial y^j_{\nu}} d_{\mu}\big(\vec{X}^i_1 \big) d_{\nu}\big(\vec{X}^j_2\big)  \bigg\}(x){d}\mu_g(x),
\end{align*}
where $d_{\mu}$ is the horizontal differential on jet bundles. The key ingredient for the principal symbol is the quantity $m^{\mu \nu}_{ij}\doteq \frac{\partial^2 \lambda}{\partial y^i_{\mu} \partial y^j_{\nu}}$. Applying the transformations to get the differential operator of linearized field equations, as in \eqref{eq_1_diffop_1}, to the above quantity yields principal symbol
\begin{equation}\label{eq_1_princ_symb}
		\sigma_2(D_{\varphi})=h^{ij}m^{\mu \nu}_{jk} \otimes \partial_{\mu}\vee \partial_{\nu} \otimes e_i\otimes e^j.
\end{equation}
In case this quantity satisfies the condition of Definition \ref{def_1_normal_hyp} we can conclude that the operator is normally hyperbolic. There are also other notions of hyperbolicity, for instance see \cite{christodoulou}, where the hyperbolicity condition is strictly weaker than the one employed here. From now on we shall assume that our microlocal Lagrangian produces always normally hyperbolic linearized equations. Then we can invoke the results of Theorem \ref{thm_1_properties_of_Green_functions}.\\

Summing up we created a way of associating to each $\varphi$ in the domain of $\mathcal{L}$, operators
\begin{equation}\label{eq_1_green_1}
	G_{\varphi}^{\pm} : \Gamma^{\infty}_c(M\leftarrow \varphi^{*}VB) \rightarrow \Gamma^{\infty}(M\leftarrow \varphi^{*}VB).
\end{equation}
By $(i)$ of Theorem \ref{thm_1_properties_of_Green_functions} and linearity, $G^{\pm}_{\varphi}$ is a smooth mapping, that is, $G^{\pm}_{\varphi}\in C^{\infty}\big( \Gamma^{\infty}_c(M\leftarrow \varphi^{*}VB) ,\Gamma^{\infty}(M\leftarrow \varphi^{*}VB)\big)$ for each $\varphi\in \mathcal{U}$. Given $\vec{s}\in \Gamma^{-\infty}_c(M\leftarrow \varphi^*VB)$ we can view 
$$
	G^{\pm}(\vec{s}):\mathcal{U}\ni \varphi \mapsto G^{\pm}_{\varphi}(\vec{s}) \in \Gamma^{\infty}(M\leftarrow \varphi^{*}VB).
$$
We ask whether this map is Bastiani smooth, in particular, we seek to evaluate 
\begin{equation}\label{eq_1_der_green_1}
	\lim_{t\rightarrow 0}\frac{G^{\pm}_{u_{\varphi}^{-1}(t\vec{X})}(\vec{s})-G^{\pm}_{\varphi}(\vec{s})}{t}.
\end{equation}

\begin{lemma}\label{lemma_1_deriv_green}
Let $\gamma:\mathbb{R}\to \mathcal{U}\subset \Gamma^{\infty}(M\leftarrow B)$ be a smooth curve, then for each fixed $\vec{X}\in \Gamma_c^{\infty}(\varphi^{*}VB)$ the mapping $t\mapsto G^{\pm}_{\gamma(t)}(\vec{X})\in \Gamma^{\infty}(M\leftarrow \varphi^{*}VB)$ is smooth. In particular we have
\begin{equation}\label{eq_1_der_G+-}
	dG^{\pm }_{\varphi}(\vec{X})=\lim_{t\rightarrow 0}\frac{1}{t}\left( G^{\pm}_{u_{\varphi}^{-1}(t\vec{X})}-G^{\pm}_{\varphi} \right) = -G^{\pm}_{\varphi} \circ D_{\varphi}^{(1)}(\vec{X})\circ G^{\pm}_{\varphi},
\end{equation}
where $\mathcal{U}\ni\varphi\mapsto D_{\varphi}(\vec{X})\in \Gamma^{\infty}(M\leftarrow \varphi^{*}VB)$ is the mapping induced by \eqref{eq_1_diffop_1}.
\end{lemma}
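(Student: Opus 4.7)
The plan is to derive formula \eqref{eq_1_der_G+-} from the \emph{resolvent-type identity} $G^{\pm}_{\gamma(t)} - G^{\pm}_{\varphi} = G^{\pm}_{\varphi}\circ(D_{\varphi}-D_{\gamma(t)})\circ G^{\pm}_{\gamma(t)}$, which follows formally by writing $D_{\varphi}\bigl(G^{\pm}_{\gamma(t)}(\vec s)-G^{\pm}_{\varphi}(\vec s)\bigr)=D_{\varphi}\,G^{\pm}_{\gamma(t)}(\vec s)-\vec s=(D_{\varphi}-D_{\gamma(t)})G^{\pm}_{\gamma(t)}(\vec s)$ and then applying $G^{\pm}_{\varphi}$ on both sides. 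Taking $\vec s = \vec X$, dividing by $t$ and letting $t\to 0$ gives \eqref{eq_1_der_G+-} provided $\frac{1}{t}(D_{\gamma(t)}-D_{\varphi})\to D^{(1)}_{\varphi}(\vec X)$ in a suitable topology; this last fact is a consequence of the Bastiani smoothness of $\varphi\mapsto D_{\varphi}$ inherited from the microlocal generalized Lagrangian via Theorem \ref{thm_A_Bastiani smooth_pushforward}.

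Before writing the identity one has to make sense of it, since the target spaces $\Gamma^{\infty}(M\leftarrow \gamma(t)^*VB)$ vary with $t$. First I would choose an ultralocal chart $(\mathcal{U}_{\varphi},u_{\varphi})$ around $\varphi=\gamma(0)$ and work in the common model space $\Gamma^{\infty}(M\leftarrow \varphi^*VB)$, identifying $\gamma(t)^*VB$ with $\varphi^*VB$ via the push-forward induced by the vertical exponential $\widetilde\exp$. In these coordinates $D_{\gamma(t)}$ becomes a smoothly varying family of normally hyperbolic operators on the fixed vector bundle $\varphi^*VB$, sharing the same principal symbol (this was already observed above \eqref{eq_1_diffop_1}). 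Next I would check the crucial support conditions: by Corollary \ref{coro_1_WO^0-curves} the smooth curve $\gamma$ is constant outside a compact $K\subset M$, so $D_{\varphi}-D_{\gamma(t)}$ is a differential operator with coefficients supported in $K$; similarly $D^{(1)}_{\varphi}(\vec X)$ has coefficients supported in $\mathrm{supp}(\vec X)$, because $\vec X\in\Gamma^{\infty}_c(M\leftarrow \varphi^*VB)$. Consequently $(D_{\varphi}-D_{\gamma(t)})G^{\pm}_{\gamma(t)}(\vec X)$ and $D^{(1)}_{\varphi}(\vec X)G^{\pm}_{\varphi}(\vec X)$ have past/future compact support, so the outer application of $G^{\pm}_{\varphi}$ is legitimate by $(i)$ of Theorem \ref{thm_1_properties_of_Green_functions}.

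With the resolvent identity in hand, existence of the limit \eqref{eq_1_der_green_1} in the Fr\'echet topology of $\Gamma^{\infty}(M\leftarrow \varphi^*VB)$ reduces to showing that, for $\vec X$ fixed, the family of compactly supported sections $\frac{1}{t}(D_{\varphi}-D_{\gamma(t)})G^{\pm}_{\gamma(t)}(\vec X)$ converges to $-D^{(1)}_{\varphi}(\vec X)G^{\pm}_{\varphi}(\vec X)$ uniformly on compacts together with all derivatives. This is achieved by writing a first-order Taylor remainder for the smooth family $D_{\gamma(t)}$ and using continuity of $G^{\pm}_{\gamma(t)}$ jointly in $t$ and its argument, the latter following from the continuous extension in $(i)$ of Theorem \ref{thm_1_properties_of_Green_functions}.

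Smoothness at higher order then follows by iteration: the right-hand side of \eqref{eq_1_der_G+-} is a composition of three maps, each smooth in $\varphi$ (the outer $G^{\pm}_{\varphi}$'s by the inductive step, the middle factor $D^{(1)}_{\varphi}(\vec X)$ by Bastiani smoothness of the Lagrangian), so differentiating once more produces an analogous expression involving $G^{\pm}_{\varphi}$, $D^{(1)}_{\varphi}$ and $D^{(2)}_{\varphi}$. The main technical obstacle will be the careful bookkeeping of support conditions along these iterated compositions and the verification that the convergence in the resolvent identity can be upgraded from pointwise (in $\vec s$) to smoothness of the whole operator family in the topologies prescribed by $(i)$ of Theorem \ref{thm_1_properties_of_Green_functions}; this is where the compactness statement from Corollary \ref{coro_1_WO^0-curves} is decisive, since it confines all the varying quantities to a single compact region of $M$.
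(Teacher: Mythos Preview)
Your approach is correct and essentially the same as the paper's: both hinge on a resolvent-type identity relating $G^{\pm}_{\gamma(t)}-G^{\pm}_{\varphi}$ to the difference $D_{\varphi}-D_{\gamma(t)}$, then pass to the limit using the Bastiani smoothness of $\varphi\mapsto D_{\varphi}$ and the compact-support property coming from Corollary~\ref{coro_1_WO^0-curves}. The only variation is that you use the form $G^{\pm}_{\gamma(t)}-G^{\pm}_{\varphi}=G^{\pm}_{\varphi}\circ(D_{\varphi}-D_{\gamma(t)})\circ G^{\pm}_{\gamma(t)}$ (fixed Green operator outside), whereas the paper applies $D_{\gamma(t)}$ to the difference and obtains $(D_{\gamma(0)}-D_{\gamma(t)})\circ G^{\pm}_{\gamma(0)}$ (fixed Green operator inside); both forms yield the same limit. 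Your discussion of the bundle identification via $\widetilde\exp$ and of the support bookkeeping is more explicit than the paper's, which glosses over these points; note, however, that the joint continuity of $t\mapsto G^{\pm}_{\gamma(t)}(\vec X)$ you invoke does not follow directly from $(i)$ of Theorem~\ref{thm_1_properties_of_Green_functions} (which concerns a single fixed operator) and is left implicit in the paper as well.
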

\begin{proof}
	We just show the claim for the retarded propagator since for the advanced one the result follows in complete analogy. Instead of a distributional section $\vec s$ it suffices to show the claim for $\vec Y$. Then we evaluate
$$
	\lim_{t\to 0} \frac{1}{t}\Big({G}^+_{\gamma(t)}(\vec{Y})-\mathcal{G}_{\gamma(0)}^+(\vec{Y})\Big)
$$
In the following argument we will omit the evaluation at $\vec{s}$ from the notation. The differential operator $D(\vec{Y}): \mathcal{U}\ni\varphi\mapsto D_{\varphi}(\vec{Y})\in \Gamma^{\infty}(M\leftarrow \varphi^{*}VB)$ is smooth when composed with $\gamma$, therefore we consider 
$$
\begin{aligned}
	 \lim_{t\to 0} \frac{1}{t}\Big(D_{\gamma(t)}{G}^+_{\gamma(t)}-D_{\gamma(t)}{G}_{\gamma(0)}^+\Big)(\vec{Y})
	&=  \lim_{t\to 0} \frac{1}{t}\Big(\mathrm{id}_{\Gamma^{\infty}(M\leftarrow \gamma(0)^{*}VB)}-D_{\gamma(t)}{G}_{\gamma(0)}^+\Big) (\vec{Y})\\ 
	&=\lim_{t\to 0} \frac{1}{t}\Big(D_{\gamma(0)}{G}^+_{\gamma(0)}-D_{\gamma(t)}{G}_{\gamma(0)}^+\Big)(\vec{Y})\\
	&=\lim_{t\to 0} \frac{1}{t}\Big(D_{\gamma(0)}-D_{\gamma(t)}\Big)\big({G}_{\gamma(0)}^+(\vec{Y})\big)\\
	&=- D_{\gamma(0)}^{(1)}(\dot{\gamma}(0))\circ G^{+}_{\gamma(0)}\big(\vec{Y}\big).
\end{aligned}
$$
since $\gamma$ is a smooth curve in $\Gamma^{\infty}(M\leftarrow B)$, given any interval $[-\epsilon,\epsilon]$ with $\epsilon>0$, there is a compact subset $K_{\epsilon}$ of $M$ for which $\gamma(t)(x)$ is constant in $t$ on $M\backslash K_{\epsilon}$, then differential operator $D_{\gamma(0)}^{(1)}\neq 0$ only inside $K_{\epsilon}$, therefore the quantity $ \big(D_{\gamma(0)}^{(1)}(\dot{\gamma}(0)) \circ G^{\pm}_{\gamma(0)}\big)(\vec{Y}) $ has compact support  for any $\vec{Y}\in \Gamma^{\infty}(M\leftarrow \gamma(0)^{*}VB)$ and we can write
$$
	\lim_{t\to 0} \frac{1}{t}\Big({G}^+_{\gamma(t)}-\mathcal{G}_{\gamma(0)}^+\Big)=- G^{+}_{\gamma(0)} \circ D_{\gamma(0)}^{(1)}(\dot{\gamma}(0))\circ G^{+}_{\gamma(0)}.
$$
Using the above relation one can show that all iterated derivatives of ${G}_{\varphi}^+$ exists, thus showing smoothness.
\end{proof}
Similarly for the causal propagator we find 
\begin{align}\label{eq_1_der_G}
	dG_{\varphi}(\vec{X})\doteq \lim_{t\rightarrow 0}\frac{1}{t}\left( G_{u_{\varphi}^{-1}(t\vec{X})}-G_{\varphi} \right) = -G_{\varphi} \circ D_{\varphi}^{(1)}(\vec{X})\circ G^{+}_{\varphi}-G^{-}_{\varphi} \circ D_{\varphi}^{(1)}(\vec{X})\circ G_{\varphi}.
\end{align}
Given the Green's functions $G_{\varphi}^{\pm}$, set 
\begin{equation}\label{eq_1_Gvar_+-}
	\mathcal{G}^{\pm}_{\varphi} \doteq G^{\pm}_{\varphi} \circ (\varphi^{*}h)^{\sharp} \circ (id_{\left(\varphi^{*}VB\right)'} \otimes *_g) : \Gamma^{\infty}_c(M\leftarrow \left(\varphi^{*}VB\right)' \times  \Lambda_m(M)) \rightarrow \Gamma^{\infty}(M\leftarrow \varphi^{*}VB),
\end{equation}
\begin{equation}\label{eq_1_Gvar}
	\mathcal{G}_{\varphi} \doteq G_{\varphi} \circ (\varphi^{*}h)^{\sharp} \circ (id_{\left(\varphi^{*}VB\right)'} \otimes *_g) : \Gamma^{\infty}_c(M\leftarrow \left(\varphi^{*}VB\right)' \times  \Lambda_m(M)) \rightarrow \Gamma^{\infty}(M\leftarrow \varphi^{*}VB).
\end{equation}

Note how, up to this point, we used some fiberwise metric $h$ in \eqref{eq_1_diffop_1} in order to have a proper differential operator for the subsequent steps. As a consequence the resulting operator $D_{\varphi}(h)$ does depend on the metric chosen and so do its retarded and advanced Green's operators $G_{\varphi}^{\pm}(h) $ with their counterparts $\mathcal{G}_{\varphi}^{\pm}(h) $. From the definition of Green's operators we have 
$$
	\left\lbrace \hspace{-0.4cm} \begin{array}{ll}
		&D_{\varphi}(h) \circ G_{\varphi}^{\pm}(h) = \mathrm{id}_{\Gamma^{\infty}_c(M\leftarrow \varphi^{*}VB)}\ ,\\
		& G_{\varphi}^{\pm}(h) \circ \left. D_{\varphi}(h) \right\vert_{\Gamma^{\infty}_c(M\leftarrow \varphi^{*}VB)}=\mathrm{id}_{\Gamma^{\infty}_c(M\leftarrow \varphi^{*}VB)}\ .
	\end{array}\right. 
$$
The latter is equivalent to 
$$
	\left\lbrace \hspace{-0.4cm} \begin{array}{ll}
		& \delta^{(1)}E(\mathcal{L})_{\varphi}[0]  \circ \mathcal{G}_{\varphi}^{\pm}(h)  = \mathrm{id}_{\Gamma^{\infty}_c(M\leftarrow \varphi^{*}VB' \otimes \Lambda_m(M))}\ ,\\
		& \mathcal{G}_{\varphi}^{\pm}(h) \circ  \left. \delta^{(1)}E(\mathcal{L})_{\varphi} )[0] \right\vert_{\Gamma^{\infty}_c(M\leftarrow \varphi^{*}VB)}=\mathrm{id}_{\Gamma^{\infty}_c(M\leftarrow \varphi^{*}VB)}\ .
	\end{array} \right. 
$$
Using the notation of \cite{bar}, the family of operators $\lbrace \mathcal{G}^{\pm}_{\varphi}(h) \rbrace$ defines a family of Green-hyperbolic type operators with respect to the differential operator $D_{\varphi}$ of the linearized equations at $\varphi$. Finally, using Theorem 3.8 in \cite{bar}, we get uniqueness for the advanced and retarded propagators, which in turn results in the independence of the Riemannian metric $h$ used before. The idea behind the proof is as follows: one would like to both extend the domain of $\mathcal{G}^{\pm}_{\varphi}(h)$ and reduce the target space to the same suitable space, once this is done, each propagator becomes the inverse of the linearized equations, then using uniqueness of the inverse we conclude. It turns out that the extension to the spaces $\Gamma^{\infty}_{\pm}(M\leftarrow \varphi^{*}VB'\otimes \Lambda_m(M))$ of future/past compact smooth sections does the job. 


\begin{lemma}\label{lemma_1_duality_green}
Let $g$ a Lorentzian metric on $M$ and $D: \Gamma^{\infty}(M\leftarrow E) \rightarrow \Gamma^{\infty}(M\leftarrow E)$ a linear partial differential operator. Then $D$ is self adjoint with respect to the pairing$\footnote{We require that at least one of the entries has compact support.}$ given by  
$$ 
	\left\langle \vec{s},\vec{t}\right.  \left. \right\rangle = \int_M (id_{E'} \otimes *_g \circ h^{\flat} (\vec{t} \space\ )) \vec{s} = \int_M h^{\flat} (\vec{t}\space\ ) \vec{s} \space\ d\mu_g 
$$
if and only if its kernel ${D}(x,y)$ is symmetric. Moreover if $D$ is normally hyperbolic, then $G_M^{+}$ and $G_M^{-}$ are each the adjoint of the other in the common domain.
\end{lemma}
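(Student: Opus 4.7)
My plan is to translate both statements into statements about the Schwartz kernel of $D$ and then use the construction of the retarded/advanced Green operators from Proposition~\ref{prop_1_adjoint_differential_operator} together with the theorem preceding Theorem~\ref{thm_1_properties_of_Green_functions}.

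For the first assertion, I would first unpack the pairing. The mapping $h^{\flat}\otimes *_g:\Gamma^{\infty}(M\leftarrow E)\to \Gamma^{\infty}(M\leftarrow E^{*}\otimes \Lambda_m(M))$ is a fibre-wise linear isomorphism, so the bilinear pairing in the statement is non-degenerate and coincides, after applying $h^{\flat}\otimes *_g$ to the second entry, with the natural pairing \eqref{eq_1_pairing} used to define the formal adjoint $D^{T}$ in Proposition~\ref{prop_1_adjoint_differential_operator}. Therefore $D$ is self-adjoint w.r.t.\ the new pairing if and only if $D^{T}=(h^{\flat}\otimes *_g)\circ D\circ (h^{\flat}\otimes *_g)^{-1}$. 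I would then invoke the Schwartz kernel theorem to represent $D$ as an integral kernel $D(x,y)$ valued in $E_x\otimes E_y^{*}\otimes \Lambda_m(M)_y$; a direct change-of-order computation (\textit{à la} Fubini on distributions) yields
\[
\langle D\vec s,\vec t\,\rangle=\iint h^{\flat}_x(\vec t(x))\bigl(D(x,y)\vec s(y)\bigr),\qquad \langle \vec s,D\vec t\,\rangle=\iint h^{\flat}_y(D(y,x)\vec t(x))(\vec s(y)),
\]
from which the two quantities coincide for every compactly supported pair $(\vec s,\vec t\,)$ if and only if the kernels $h^{\flat}_x D(x,y)$ and $h^{\flat}_y D(y,x)$ agree as distributional sections; this is precisely the statement that the kernel of $D$ is symmetric in the natural sense.

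For the second assertion, my idea is to combine the first part with the way Green operators were built in the theorem preceding Theorem~\ref{thm_1_properties_of_Green_functions}. There, for any normally hyperbolic $D$, the retarded and advanced Green operators were defined by the duality formula $\mu(G^{\pm}\vec{s})=F^{\mp}(\mu)(\vec{s})$, where $F^{\pm}$ are the unique advanced/retarded fundamental solutions of $D^{T}$. Combining this with the first part (which identifies $D^{T}$ with $D$ itself once we transport everything through $h^{\flat}\otimes *_g$, since in the setting of the lemma the kernel of a linearized Lagrangian operator is symmetric by the Helmholtz--Sonin condition), the $F^{\mp}$ can be identified with the Green operators $G^{\mp}$ of $D$. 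Reading the duality formula back through the pairing of the lemma yields exactly $\langle G^{+}\vec{s},\vec{t}\,\rangle=\langle \vec{s},G^{-}\vec{t}\,\rangle$ on the common (future/past-compact) domain where both sides are defined, which is the desired mutual adjointness.

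The main obstacle I expect is bookkeeping: one has to carefully keep track of which space each object lives in once $h^{\flat}$ and $*_g$ are inserted, and to verify that the symmetry condition on the kernel of $D$ transported through these isomorphisms coincides with the distributional symmetry $D(x,y)=D(y,x)$ (viewed in the correct bundle). A secondary subtlety is justifying the Fubini swap in the computation of $\langle D\vec s,\vec t\,\rangle$ when $D$ has only distributional kernel, but this is standard provided at least one of $\vec s,\vec t$ is compactly supported, which is the hypothesis under which the pairing is defined.
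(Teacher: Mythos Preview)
Your treatment of the first assertion is essentially the same as the paper's: unwind the pairing through $h^{\flat}\otimes *_g$ and read off the kernel condition via a Fubini-type swap.

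For the second assertion, however, the paper takes a much shorter path than your proposal. Under the standing assumption that $D$ is self-adjoint (the paper's proof opens the second part with ``Suppose now $D$ is self adjoint''), it simply writes
\[
\langle \vec{s}, G_M^{-}\vec{t}\,\rangle \;=\; \langle D G_M^{+}\vec{s}, G_M^{-}\vec{t}\,\rangle \;=\; \langle G_M^{+}\vec{s}, D G_M^{-}\vec{t}\,\rangle \;=\; \langle G_M^{+}\vec{s}, \vec{t}\,\rangle,
\]
using only $DG_M^{\pm}=\mathrm{id}$ on the relevant domain and the self-adjointness of $D$. No appeal to the construction of Green operators from fundamental solutions of $D^{T}$ is needed, nor any identification of $F^{\mp}$ with $G^{\mp}$.

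Your route via the duality formula $\mu(G^{\pm}\vec{s})=F^{\mp}(\mu)(\vec{s})$ can be made to work, but it buys nothing here and costs a good deal of bookkeeping (tracking both pairings, transporting $D^{T}$ through $h^{\flat}\otimes *_g$, and invoking uniqueness of Green operators). More importantly, your justification of self-adjointness via the Helmholtz--Sonin condition is out of place: the lemma does not assume $D$ arises as the linearization of a Lagrangian, and the paper treats the second clause as being under the hypothesis that the kernel is symmetric (i.e.\ the ``Moreover'' is to be read together with the first part). You should simply assume self-adjointness and then run the three-step algebraic identity above.
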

\begin{proof}
The equivalent condition follows essentially from following chain of equivalences
$$
\begin{aligned}
	\left\langle D\vec{s},\vec{t}\space\ \right\rangle &=\int_M h^{\flat} (\vec{t})(x) D\vec{s}(x) \space\ d\mu_g(x)= \int_M h^{\flat} (\vec{t})(x)  h^{\sharp} \circ (id_{E'} \otimes *_g) \mathcal{D}(x,\vec{s})  \space\ d\mu_g(x) \\ &= \int_{M^2}  \mathcal{D}(x,y)\vec{t}(x)\vec{s}(y) d\mu_g(x) d\mu_g(y)
\end{aligned} .
$$
Suppose now $D$ is self adjoint, then 
$$
	\left\langle \vec{s}, G_M^{-} \vec{t} \space\ \right\rangle =\left\langle D G_M^{+} \vec{s}, G_M^{-} \vec{t}\space\ \right\rangle =\left\langle  G_M^{+}\vec{s},D G_M^{-}\vec{t}\space\ \right\rangle  =\left\langle G_M^{+}\vec{s},\vec{t} \space\ \right\rangle
$$
whence the desired adjoint properties of $G_M^{+}$ and $G_M^{-}$.
\end{proof}

For future convenience, we calculate the functional derivatives of $\mathcal{G}^{\pm}_{\varphi}$ and $\mathcal{G}_{\varphi}$, which are clearly smooth by combining Lemma \ref{lemma_1_deriv_green} with \eqref{eq_1_Gvar_+-} and \eqref{eq_1_Gvar}, whence 
\begin{align}\label{eq_1_der_Gvar+-}
\begin{aligned}
	d^{k}\mathcal{G}_{\varphi}^{\pm} (\vec{X}_1,\ldots, \vec{X}_k)&= \sum_{l=1}^{k}(-1)^l \sum_{\substack{(I_1,\ldots,I_l) \\ \in \mathcal{P}(1,\ldots, k)} } \bigg( \bigcirc_{i=1}^{l} \mathcal{G}_{\varphi}^{\pm} \circ \delta^{(|I_{\sigma(i)}| +1)}E(L)_{\varphi}[0]\big( \vec{X}_{I_{i}}\big) \bigg) \circ \mathcal{G}_{\varphi}^{\pm},
\end{aligned}
\end{align}
\begin{equation}\label{eq_1_der_Gvar}
\begin{aligned}
	d^{k}\mathcal{G}_{\varphi} (\vec{X}_1,\ldots, \vec{X}_k) =  \sum_{l=1}^{k}(-1)^l & \sum_{\substack{(I_1,\ldots,I_l) \in \mathcal{P}(1,\ldots, k)} } \sum_{m=0}^{l}\bigg( \bigcirc_{i=1}^{m} \mathcal{G}^-_{\varphi}\circ \delta^{(|I_i| +1)}E(L)\big(\vec{X}_{I_i}\big) \bigg) \\
 &\circ \mathcal{G}_{\varphi}\circ \bigg( \bigcirc_{i=m+1}^{l} \delta^{(|I_i| +1)}E(L)_{\varphi}\big(\vec{X}_{I_i}\big)\circ \mathcal{G}^+_{\varphi}  \bigg),
\end{aligned}
\end{equation}
where $(I_1,\ldots,I_l)$ is partition of the set $\lbrace 1,\ldots,k \rbrace $, and $\vec{X}_I=\otimes_{i\in I}\vec{X}_i$. The main takeaway from \eqref{eq_1_der_Gvar} is the pattern of the composition of propagators and derivatives of $E(L)$, that is first the $\mathcal{G}_{\varphi}^-$'s, then a single $\mathcal{G}$ and at the end some $\mathcal{G}_{\varphi}^+$'s intertwined by derivatives of $E(\mathcal{L})$. These will be key to some later proofs. We are now in a position to introduce the Peierls bracket:

\begin{definition}\label{def_1_Peierls}
Let $\mathcal{U}\subset\Gamma^{\infty}(M\leftarrow B)$ be $CO$-open, and $F$, $G \in \mathcal{F}_{\mu loc}(B,\mathcal{U})$. Fix a generalized microlocal Lagrangian $\mathcal{L}$ whose linearized equations induce a normally hyperbolic operator. The retarded and advanced products $\textsf{R}_{\mathcal{L}}(F,G)$, $\textsf{A}_{\mathcal{L}}(F,G)$ are functionals defined by 
\begin{equation}
	\textsf{R}_{\mathcal{L}}(F,G)(\varphi) \doteq \left\langle dF_{\varphi}[0], \mathcal{G}_{\varphi}^{+} dG_{\varphi}[0]\right\rangle,
\end{equation}
\begin{equation}
	\textsf{A}_{\mathcal{L}}(F,G)(\varphi) \doteq \left\langle dF_{\varphi}[0], \mathcal{G}_{\varphi}^{-} dG_{\varphi}[0]\right\rangle,
\end{equation}
while the Peierls bracket of $F$ and $G$ is 
\begin{align}\label{eq_1_Peierls}
\left\lbrace F, G \right\rbrace_{\mathcal{L}} \doteq \textsf{R}_{\mathcal{L}}(F,G)-\textsf{A}_{\mathcal{L}}(F,G).
\end{align}
\end{definition}
We recall that for a microlocal functional $F$, by \eqref{eq_1_kernel_notation}
$$
	dF_{\varphi}[0](\vec{X})=\int_M f^{(1)}_{\varphi}[0]_i(x)X^i(x)d\mu_g(x),
$$
therefore we can write $\left\lbrace F, G \right\rbrace_{\mathcal{L}}(\varphi)$ as
\begin{equation}\label{eq_1_Peierls_kernel}
	\int_{M^2} f^{(1)}_{\varphi}[0]_i(x) \mathcal{G}_{\varphi}^{ij}(x,y) g^{(1)}_{\varphi }[0]_j(y) d\mu_g(x,y)
\end{equation}
where repeated indices as usual follows the Einstein notation. This implies clearly that Definition \ref{def_1_Peierls} is well posed. Moreover as a consequence of Lemma \ref{lemma_1_duality_green} we see that the Peierls bracket of $F$ and $G$ can also equivalently viewed as $\textsf{R}_{\mathcal{L}}(F,G)-\textsf{R}_{\mathcal{L}}(G,F)=\textsf{A}_{\mathcal{L}}(G,F)-\textsf{A}_{\mathcal{L}}(F,G)$. 

We begin our analysis of the Peierls bracket by listing the support properties of the functionals defined in Definition \ref{def_1_Peierls}. 

\begin{proposition}\label{prop_1_Peierls_1}
Let $\mathcal{U}$, $F$, $G$ be as in the above definition, then the retarded, advanced products and Peierls bracket are Bastiani smooth with the following support properties:
\begin{align}
	\mathrm{supp} \left( \textsf{R}_{\mathcal{L}}(F,G) \right) \subset & J^+(\mathrm{supp}(F)) \cap  J^-(\mathrm{supp}(G)),\\
	\mathrm{supp} \left( \textsf{A}_{\mathcal{L}}(F,G) \right) \subset & J^+(\mathrm{supp}(G)) \cap  J^-(\mathrm{supp}(F)),	 
\end{align}
which combined yields
\begin{equation}\label{eq_1_supp_peierls}
	\mathrm{supp} \left( \left\lbrace F,G \right\rbrace_{\mathcal{L}} \right) \subset   \left( J^+(\mathrm{supp}(F)) \cup J^-(\mathrm{supp}(F)) \right) \cap \left( J^+(\mathrm{supp}(G)) \cup J^-(\mathrm{supp}(G)) \right).
\end{equation}
\end{proposition}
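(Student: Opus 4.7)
The plan is to first establish Bastiani smoothness, then derive the support properties. For smoothness, microlocality of $F$ and $G$ combined with Proposition~\ref{porop_1_muloc_charachterization} and Lemma~\ref{lemma_1_loc_bornology} gives that $\varphi\mapsto dF_{\varphi}[0]$ and $\varphi\mapsto dG_{\varphi}[0]$ are Bastiani smooth maps into $\Gamma^{\infty}_c(M\leftarrow \varphi^{*}VB'\otimes\Lambda_m(M))$. Together with the smooth $\varphi$-dependence of $\mathcal{G}^{\pm}_{\varphi}$ established in Lemma~\ref{lemma_1_deriv_green}, and continuity of the pairing between a compactly supported smooth section and a smooth one, the chain rule yields smoothness of $\varphi\mapsto\langle dF_{\varphi}[0],\mathcal{G}^{\pm}_{\varphi}dG_{\varphi}[0]\rangle$.

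For the support statement, by Lemma~\ref{lemma_1_support_property} it suffices to locate $\mathrm{supp}\bigl(d\textsf{R}_{\mathcal{L}}(F,G)_{\varphi}[0]\bigr)$ for each $\varphi\in\mathcal{U}$ and verify that the closure of the union over $\varphi$ lies in the claimed causal set. Writing $f'=dF_{\varphi}[0]$, $g'=dG_{\varphi}[0]$, the product rule together with the variation formula \eqref{eq_1_der_Gvar+-} for $\mathcal{G}^+_{\varphi}$ splits this first variation into three contributions:
\begin{align*}
d\textsf{R}_{\mathcal{L}}(F,G)_{\varphi}[0](\vec Y) &= \bigl\langle F^{(2)}_{\varphi}[0](\vec Y,\cdot),\mathcal{G}^+_{\varphi}g'\bigr\rangle - \bigl\langle f',\mathcal{G}^+_{\varphi}\,\delta^{(1)}E(\mathcal{L})_{\varphi}[0](\vec Y)\,\mathcal{G}^+_{\varphi}g'\bigr\rangle \\
&\quad + \bigl\langle f',\mathcal{G}^+_{\varphi}G^{(2)}_{\varphi}[0](\vec Y,\cdot)\bigr\rangle.
\end{align*}

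I would then control the support of each term by combining microlocality of $F$, $G$ and of $\mathcal{L}$ (which forces $F^{(2)}_{\varphi}[0]$, $G^{(2)}_{\varphi}[0]$, $\delta^{(1)}E(\mathcal{L})_{\varphi}[0]$ to have kernels supported on the respective diagonals) with the propagator support property $\mathrm{supp}(\mathcal{G}^{\pm}_{\varphi}\sigma)\subset J^{\pm}(\mathrm{supp}(\sigma))$. For the end-point terms, diagonal localization immediately pins $\vec Y$ inside $\mathrm{supp}(F)\cap J^{+}(\mathrm{supp}(G))$ and $\mathrm{supp}(G)\cap J^{-}(\mathrm{supp}(F))$ respectively. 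The middle term is first rewritten via the adjoint identity of Lemma~\ref{lemma_1_duality_green} as $-\bigl\langle\mathcal{G}^-_{\varphi}f',\delta^{(1)}E(\mathcal{L})_{\varphi}[0](\vec Y)\mathcal{G}^+_{\varphi}g'\bigr\rangle$; the diagonal support of $\delta^{(1)}E(\mathcal{L})$ then forces $\vec Y$ to lie in $J^{-}(\mathrm{supp}(F))\cap J^{+}(\mathrm{supp}(G))$. Compactness of $\mathrm{supp}(F)$, $\mathrm{supp}(G)$ (since $F,G\in\mathcal{F}_c$) together with global hyperbolicity of $(M,g)$ ensures all causal sets involved are closed, so taking the closure over $\varphi$ does not leave the asserted region. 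The argument for $\textsf{A}_{\mathcal{L}}(F,G)$ is identical with $\pm$ interchanged, and \eqref{eq_1_supp_peierls} is then immediate from $\mathrm{supp}(\{F,G\}_{\mathcal L})\subset\mathrm{supp}(\textsf R_{\mathcal L}(F,G))\cup\mathrm{supp}(\textsf A_{\mathcal L}(F,G))$.

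The main difficulty I anticipate is the middle propagator term: the adjoint swap produces $\mathcal{G}^-_{\varphi}f'$ which is only past-compact, not compactly supported, so the ensuing pairing with the future-compact $\mathcal{G}^+_{\varphi}g'$ requires the continuous extension of the Green operators to the spaces $\Gamma^{-\infty}_{\pm}$ of past/future-compact distributional sections provided in $(i)$ of Theorem~\ref{thm_1_properties_of_Green_functions}. Once this extension is in place, the support analysis reduces to clean bookkeeping with diagonal and causal sets.
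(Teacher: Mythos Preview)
There is a genuine gap in your smoothness argument. You invoke Proposition~\ref{porop_1_muloc_charachterization} and Lemma~\ref{lemma_1_loc_bornology} to conclude that $\varphi\mapsto dF_{\varphi}[0]$ is Bastiani smooth as a map into $\Gamma^{\infty}_c(M\leftarrow \varphi^{*}VB'\otimes\Lambda_m(M))$, but both of those results have the \emph{locally bornological} condition as a hypothesis, not as a consequence of microlocality. Definition~\ref{def_1_func_classes}(iii) only says that for each fixed $\varphi$ the kernel $f^{(1)}_{\varphi}[0]$ is a smooth section; it gives no control over how this section varies with $\varphi$, and the paper explicitly treats local bornology as an \emph{additional} regularity (see the discussion around \eqref{eq_1_lipschitz_bound}). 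So the composition/chain-rule shortcut you propose is not justified for general $F,G\in\mathcal{F}_{\mu loc}(B,\mathcal{U})$. The paper's proof avoids this issue by computing the $k$-th Bastiani derivative of $\textsf{R}_{\mathcal{L}}(F,G)$ directly (equations \eqref{eq_1_der_retarded}--\eqref{eq_1_der_advanced}) via Fa\`a di Bruno and the variation formula \eqref{eq_1_der_Gvar+-}, and then checks term by term, in kernel notation, that each pairing is well-defined and jointly continuous. This needs only the Bastiani smoothness of $F$, $G$, $\mathcal{L}$ (which gives the higher derivatives as compactly supported distributions, not smooth sections) together with Lemma~\ref{lemma_1_deriv_green}.

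Your support argument is essentially correct and is in fact a fully spelled-out version of what the paper does in one sentence: the paper uses the identity $\textsf{R}_{\mathcal{L}}(F,G)=\textsf{A}_{\mathcal{L}}(G,F)$ from Lemma~\ref{lemma_1_duality_green} together with the support properties of $\mathcal{G}^{\pm}$ to get both causal constraints at once, whereas you split the first derivative into three terms and locate each. One remark: your three-term analysis actually yields $J^{-}(\mathrm{supp}(F))\cap J^{+}(\mathrm{supp}(G))$ for $\textsf{R}_{\mathcal{L}}(F,G)$, with the $\pm$ exchanged relative to the displayed statement; this is a harmless convention issue and does not affect the symmetric conclusion \eqref{eq_1_supp_peierls}.
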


\begin{proof}
By definition the support properties of $\mathcal{G}_M^{\pm}$ and $G_M^{\pm}$ are analogue, so combining these properties with  $\textsf{R}_{\mathcal{L}}(F,G)= \frac{1}{2} \textsf{R}_{\mathcal{L}}(F,G) + \frac{1}{2}  \textsf{A}_{\mathcal{L}}(G,F)$ yields the desired result. We now turn to the smoothness. We calculate the $k$-th derivative of $ \textsf{R}_{\mathcal{L}}$. By the chain rule, taking $\mathcal{P}(1,\ldots,k)$ the set of permutations of $\lbrace 1,\ldots , k \rbrace$, we can write 
\begin{align}\label{eq_1_der_retarded}
\begin{aligned}
	 &d^k\textsf{R}_{\mathcal{L}} (F,G)_{\varphi}[0](\vec{X}_{1},\ldots, \vec{X}_k)\\
	 &\quad = \sum_{(J_1,J_2,J_3)\subset P_k} \left\langle F^{(|J_1|+1)}_{\varphi}[0](\otimes_{j_1 \in J_1}\vec{X}_{j_1}) \right., 
	 \left.d^{(|J_2|)}\mathcal{G}_{\varphi}^{+}(\otimes_{j_2 \in J_2}\vec{X}_{j_2}) G^{(|J_3|+1)}_{\varphi}[0] (\otimes_{j_3 \in J_3}\vec{X}_{j_3}) \right\rangle,
\end{aligned}
\end{align}
and similarly
\begin{align}\label{eq_1_der_advanced}
\begin{aligned}
	&d^k\textsf{A}_{\mathcal{L}} (F,G)_{\varphi}(\vec{X}_{1},\ldots, \vec{X}_k)\\ 
	&\quad = \sum_{(J_1,J_2,J_3)\subset P_k} \left\langle F^{(|J_1|+1)}[\varphi](\otimes_{j_1 \in J_1}\vec{X}_{j_1})  \right., \left. d^{(|J_2|)}\mathcal{G}_{\varphi}^{-}(\otimes_{j_2 \in J_2}\vec{X}_{j_2}) G^{(|J_3|+1)}_{\varphi}[0] (\otimes_{j_3 \in J_3}\vec{X}_{j_3}) \right\rangle.
\end{aligned}
\end{align}
To see that the pairing in the derivatives of the  advanced, retarded products are well defined, we use the kernel notation \eqref{eq_1_kernel_notation}, therefore we write the integral kernel of $\textsf{R}_{\mathcal{L}}(F,G)$, which by a little abuse of notation we call $\textsf{R}_{\mathcal{L}}(F,G)(x,y)$ for $x$,$y \in M$. It is
$$
	\textsf{R}_{\mathcal{L}}(F,G)(x,y)= f^{(1)}_{\varphi}[0]_i(x) \left(\mathcal{G}^{+}_{\varphi}\right)^{ij}(x,y) g^{(1)}_{\varphi }[0]_j(y).
$$
Using this notation, we can write the integral kernel $d^k\textsf{R}_{\mathcal{L}} (F,G)_{\varphi}[0](\vec{X}_{1},\ldots, \vec{X}_k)(x,y)$ in \eqref{eq_1_der_retarded} as a sum of terms with two possible contributions:\\
$1)$ [$J_2=\emptyset$]
$$
	 f^{(p+1)}_{\varphi}[0]_i(x,\vec{X}_1, \ldots, \vec{X}_p) (\mathcal{G}^{+}_{\varphi})^{ij}(x,y) g^{(q+1)}_{\varphi}[0]_j(y,\vec{X}_{q+1} \ldots \vec{X}_{p+q}),
$$
where $p+q=k$. Due to smoothness of the functionals, this is well defined and continuous, so this part yields a Bastiani smooth functional;\\
$2)$ $J_2 \neq \emptyset$] 
$$
\begin{aligned}
	&\quad\int_{M^{k-2}} f^{(|J_1|+1)}_{\varphi}[0]_i\big(x,\vec{X}_{J_1}\big) \left(\mathcal{G}^{+}_{\varphi}\right)^{ij_1}(x,z_1)  \delta^{(|I_1|+1)}  E(\mathcal{L})_{\varphi}[0]_{j_1 j_2} \big(z_1,z_2,\vec{X}_{I_1}\big) \\
	&\qquad\times\left(\mathcal{G}^{+}_{\varphi}\right)^{j_2j_3}(z_2,z_3) \delta^{(|I_2|+1)} E( \mathcal{L})_{\varphi}[0]_{j_3 j_4}\big(z_3,z_4,\vec{X}_{I_2}\big) \cdots \delta^{(|I_l|+2)} E(\mathcal{L})_{\varphi}[0]_{j_{2l-1} j_{2l}}\big(z_{2l-1},z_{2l},\vec{X}_{I_l} \big) \\
	&\qquad\times\left(\mathcal{G}^{+}_{\varphi}\right)^{j_{2l}j}(z_{2l},y)   g^{(|J_3|+1)}_{\varphi}[0]_j(y,\vec{X}_{p+k_1+\ldots+ k_l+1}, \ldots, \vec{X}_{J_3})d\mu_g(z_1,\ldots ,z_{2l}) ,
\end{aligned}
$$
where $I_1\cup\ldots\cup I_l=J_2$. Again due to the Bastiani smoothness of all functionals involved in the above formula, we conclude that this piece too exists and is continuous. Hence as a whole $\textsf{R}_{\mathcal{L}}(F,G)^{(k)}_{\varphi}$. Repeating the above calculations for $\textsf{A}_{\mathcal{L}} $ amounts to substituting  each $+$ with $-$, resulting in Bastiani smoothness for the advanced product. Finally since $\left\lbrace F, G \right\rbrace_{\mathcal{L}} = \textsf{R}_{\mathcal{L}}(F,G)-\textsf{A}_{\mathcal{L}}(F,G)$ we conclude that it is smooth as well.
\end{proof}

We have seen that the Peierls bracket is well defined for microlocal functionals, we stress however that the image under the Peierls bracket of microlocal functionals fails to be microlocal, it is therefore necessary to broaden the functional domain of this bracket. An idea is to use the full potential of microlocal analysis, and use wave front sets to define pairings. First though we make explicit the ``good" subset of $T^{*}M$, that is, those subsets in which the wavefront can be localized.

\begin{definition}\label{def_1_WF_mucaus}
Let $(M,g)$ be a Lorentzian spacetime, define $\Upsilon_k(g) \subset T^{*}M^k$ as follows:
\begin{align}\label{eq_1_WF_mu_caus}
\begin{aligned}
	\Upsilon_k(g)\doteq \Big\lbrace (x_1,\ldots,x_k,\xi_1,\ldots,\xi_k)& \in T^{*}M^k \backslash 0 :\\
	&(x_1,\ldots,x_k,\xi_1,\ldots,\xi_k) \notin \overline{V}^+_{k}(x_1,\ldots,x_k) \cup \overline{V}^-_{k}(x_1,\ldots,x_k) \Big\rbrace
\end{aligned}
\end{align}
where 
$$
	\overline{V}^{\pm}_k(x_1,\ldots,x_k) = \prod_{j=1}^k \overline{V}^{\pm}(x_j). 
$$
If $\mathcal{U}\subset \Gamma^{\infty}(M\leftarrow B)$ is open we say that a functional $F:\mathcal{U} \rightarrow \mathbb{R}$ with compact support is \textit{microcausal} with respect to the Lorentz metric $g$ in $\varphi$ if $\mathrm{WF}(d^kF_{\varphi}[0]) \cap \Upsilon_k(g)=\emptyset$ for all $k \in \mathbb{N}$. We say that $F$ is microcausal with respect to $g$ in $\mathcal{U}$ if $F$ is microcausal for all $\varphi \in \mathcal{U}$. We denote the set of microcausal functionals in $\mathcal{U}$ by $\mathcal{F}_{\mu c}(B,\mathcal{U},g)$.
\end{definition}
One can show by induction, using \eqref{eq_1_cov_der}, that the two definitions are equivalent. The case $k=1$ is trivial, while the case with arbitrary $k$ follows from:

\begin{lemma}\label{lemma_1_cov_WF_equivalence}
	Suppose that for microcausal functional $F$ there is a given symmetric linear connection having $\mathrm{WF}\left(\nabla^{n-1}F_{\varphi}[0]\right) \cap \Upsilon_{n-1}(g)=\emptyset$, then 
$$
	\mathrm{WF}\left(\nabla^{n}F_{\varphi}[0]\right) \cap \Upsilon_n(g) =\emptyset.
$$
\end{lemma}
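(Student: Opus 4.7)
The plan is to induct on $n$ using the recursion \eqref{eq_1_cov_der}. The case $n=1$ is trivial since $\nabla^1F_\varphi[0] = dF_\varphi[0] = F^{(1)}_\varphi[0]$, which satisfies the required wave-front condition by microcausality of $F$ itself. For the inductive step, \eqref{eq_1_cov_der} gives a decomposition
$$\nabla^n F_\varphi[0] = F^{(n)}_\varphi[0] + \sum_{i\neq j} c_{ij}\, T^{(i,j)}_\varphi,$$
where each correction term has the form
$$T^{(i,j)}_\varphi(\vec X_1,\ldots,\vec X_n) = \nabla^{n-1}F_\varphi\big(\Gamma_\varphi(\vec X_i,\vec X_j), \vec X_1,\ldots,\widehat{\vec X_i},\ldots,\widehat{\vec X_j},\ldots,\vec X_n\big)$$
with combinatorial constants $c_{ij}$. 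The Bastiani piece $F^{(n)}_\varphi[0]$ is controlled by the equivalent Bastiani formulation of microcausality at order $n$, so all the work lies in estimating $\mathrm{WF}(T^{(i,j)}_\varphi)$.

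Next I would unpack the distributional structure of $T^{(i,j)}_\varphi$. Since the Christoffel contraction
$$\Gamma_\varphi(\vec X_i,\vec X_j)(x) = \Gamma^k_{lm}(\varphi(x))\,\vec X_i^l(x)\,\vec X_j^m(x)\,\partial_k|_{\varphi(x)}$$
is a pointwise smooth bilinear operation, the Schwartz kernel of $T^{(i,j)}_\varphi$ on $M^n$ equals $\delta(x_i-x_j)$ times a smooth matrix-valued factor built from $\Gamma\circ\varphi$ times the kernel of $\nabla^{n-1}F_\varphi[0]$ with the $i$-th and $j$-th variables identified. Equivalently, $T^{(i,j)}_\varphi$ is the pushforward $(\iota_{ij})_{*}$ of a smooth multiple of $\nabla^{n-1}F_\varphi[0]$ under the diagonal embedding $\iota_{ij}:M^{n-1}\hookrightarrow M^n$ that duplicates the $i$-th coordinate in the $j$-th slot. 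Proposition~\ref{prop_0_WF_properties_1} (multiplication by a smooth section does not enlarge the wave front), combined with Lemma~\ref{lemma_8.2.10} and Theorem~\ref{thm_0_WF_pullback} applied to the product with $\delta(x_i-x_j)$, then yields $\mathrm{WF}(T^{(i,j)}_\varphi)\subseteq A_{ij}\cup B_{ij}$, where $A_{ij}$ collects covectors $(x_1,\ldots,x_n;\zeta_1,\ldots,\zeta_n)$ with $x_i=x_j$ and $(\zeta_i+\zeta_j,\zeta_1,\ldots,\widehat{\zeta_i},\ldots,\widehat{\zeta_j},\ldots,\zeta_n)\in\mathrm{WF}(\nabla^{n-1}F_\varphi[0])$, while $B_{ij}$ gathers the conormal contributions from $\delta(x_i-x_j)$, namely $\zeta_i=-\zeta_j\neq 0$ and $\zeta_k=0$ for $k\notin\{i,j\}$.

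Finally, I would verify that neither $A_{ij}$ nor $B_{ij}$ meets the microcausality-violating sets $\overline V^+_n$ or $\overline V^-_n$. The decisive geometric input is that each closed light cone $\overline V^\pm(x)$ is a \emph{salient} convex cone: it is closed under addition, and $\overline V^+(x)\cap(-\overline V^+(x))=\{0\}$ (and similarly for the past). For $A_{ij}$: if $(\zeta_1,\ldots,\zeta_n)\in A_{ij}\cap\overline V^+_n$, then $\zeta_i+\zeta_j\in\overline V^+(x)$ by closure under addition and the remaining $\zeta_k\in\overline V^+(x_k)$, so the reduced covector would lie in $\overline V^+_{n-1}$, contradicting the inductive hypothesis $\mathrm{WF}(\nabla^{n-1}F_\varphi[0])\cap\Upsilon_{n-1}=\emptyset$; the case $\overline V^-_n$ is identical. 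For $B_{ij}$: a conormal covector in $\overline V^+_n$ would force $\zeta_i,\zeta_j\in\overline V^+(x)$ with $\zeta_i=-\zeta_j$, hence $\zeta_i=\zeta_j=0$ by salience, making the whole covector vanish, which is excluded from any wave-front set. Together with the Bastiani piece this gives $\mathrm{WF}(\nabla^nF_\varphi[0])\cap\Upsilon_n=\emptyset$.

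The main obstacle is the conormal contribution $B_{ij}$: one has to ensure that the $\delta$-singularity on the diagonal introduced by the Christoffel contraction does not by itself spoil microcausality, and it is precisely the salience of the Lorentzian light cones that handles this. If one tried to run the same argument over a cone that is not salient, $B_{ij}$ would genuinely contribute directions in $\Upsilon_n$ and the induction would break.
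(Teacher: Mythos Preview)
Your proof is correct and follows the same strategy as the paper: decompose via the recursion \eqref{eq_1_cov_der}, control the Bastiani term by microcausality of $F$, and bound the wave front of each correction term using the convex-cone structure of $\overline V^\pm$. The only technical difference is that the paper estimates $\mathrm{WF}(\nabla^{n-1}F_\varphi\circ\Gamma_\varphi)$ by viewing $\Gamma_\varphi$ as a kernel on $M^3$ and invoking H\"ormander's composition theorem (Theorem~8.2.14 in \cite{hormanderI}), whereas you reach the equivalent estimate via a product-with-$\delta$/pushforward argument and explicitly dispose of the conormal piece $B_{ij}$ by salience of the light cone.
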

\begin{proof}
From \eqref{eq_1_cov_der} we have 
\begin{align*}
	&\nabla^{n}F_{\varphi}[0](\vec{X}_1,\ldots, \vec{X}_n) \\ &\doteq d^nF_{\varphi}[0](\vec{X}_1,\ldots, \vec{X}_n) 
	 + \sum_{j=1}^n \frac{1}{n!} \sum_{\sigma \in \mathcal{P}(n)} \nabla^{n-1}F_{\varphi} [0](\Gamma_{\varphi} (\vec{X}_{\sigma(j)}, \vec{X}_{\sigma(n)}), \vec{X}_{\sigma(1)}, \ldots, \widehat{\vec{X}_{\sigma(j)}}, \ldots \vec{X}_{\sigma(n-1)})\ .
\end{align*}
Assume that $\nabla^{n-1}F_{\varphi}[0]$ is microcausal. Since $F$ is microcausal as well, it is sufficient to show microcausality holds for the other terms in the sum. Due to symmetry of the connection, we can simply study the wave front set of a single term such as 
\begin{equation}\label{eq_1_kernel_comp}
	\nabla^{n-1}F_{\varphi} (\Gamma_{\varphi} (\vec{X}_{j}, \vec{X}_{n}), \vec{X}_{1}, \ldots, \widehat{\vec{X}_{j}}, \ldots \vec{X}_{n-1}).
\end{equation}
The idea is to apply Theorem 8.2.14 in \cite{hormanderI}. Recall that a connection $\Gamma_{\varphi}$ can be seen as a mapping $\Gamma^{\infty}_c(M\leftarrow \varphi^*VB)\times \Gamma^{\infty}_c(M\leftarrow \varphi^*VB)\rightarrow\Gamma^{\infty}_c(M\leftarrow \varphi^*VB)$ with associated integral kernel $\Gamma[\varphi](x,y,z)$ defined by
$$
	\otimes^3 \Gamma^{\infty}_c(M\leftarrow \varphi^*VB) \rightarrow \mathbb{R}: (\vec{X},\vec{Y},\vec{Z}) \mapsto \int_{M^3} h_{kl}(\varphi(x))\Gamma[\varphi]^l_{ij}(x,y,z)\vec{X}^i(x)\vec{Y}^j(y) \vec{Z}^k(z) d\mu_g(x,y,z)
$$
where $h$ is an auxiliary Riemannian metric on the fiber of the bundle $B$ which is to be regarded as a tool for calculations. We can estimate the wave front set of $\Gamma[\varphi]^l_{ij}(x,y,z)$ by using the support properties of the connection coefficients $\Gamma_{\varphi}$ and obtain $\Gamma[\varphi]^l_{ij}(x,y,z)=\Gamma^l_{ij}(\varphi(x))\delta(x,y,z)$, with $\Gamma^l_{ij}(\varphi(x))$ Christoffel coefficients of a connection on the typical fiber of $B$; thus
$$
	\mathrm{WF}\left(\Gamma[\varphi]\right)= \lbrace (x,y,z,\xi,\eta,\zeta) \in T^*M^3\backslash 0: x=y=z, \ \xi+\eta+\zeta=0 \rbrace.
$$
Composition of the two integral kernels in \eqref{eq_1_kernel_comp} is well defined provided $\mathrm{WF}'(\nabla^{n-1}F_{\varphi}[0])_{M}\cap \mathrm{WF}\left(\Gamma[\varphi]\right)_{M}= \emptyset$ and that the projection map $: \triangle_{3}M \rightarrow M $ is proper. The former is a consequence of $\mathrm{WF}\left(\Gamma[\varphi]\right)_{M}= \emptyset$, the latter is a trivial statement for the diagonal embedding. Then we can apply Theorem 8.2.14, and estimate
\begin{align*}
	\mathrm{WF}\left( \nabla^{n-1}F_{\varphi} \circ \Gamma_{\varphi} \right) \subset & \Big\{ (x_1,\ldots,x_n,\xi_1,\ldots,\xi_n) \in T^*M^n : \exists (y,\eta) : \  (x_j,x_n,y,\xi_j,\xi_n,-\eta) \in  \mathrm{WF}\left(\Gamma[\varphi]\right)\ ,\\ 
	& \quad (y,x_1,\ldots, \widehat{x_j},\ldots ,x_{n-1},\eta,\xi_1,\ldots,\widehat{\xi}_j,\ldots,\xi_{n-1})\in \mathrm{WF} \left(\nabla^{n-1}F_{\varphi}[0] \right) \Big\} \\
	& \bigcup
\Big\{ (x_1,\ldots,x_n,\xi_1,\ldots,\xi_n) \in T^*M^n : x_j=x_n , \space\ \xi_j = \xi_n=0\ , \\ 
	&\qquad  (y,x_1,\ldots, \widehat{x_j},\ldots ,x_{n-1},0,\xi_1,\ldots,\widehat{\xi}_j,\ldots,\xi_{n-1})\in \mathrm{WF} \left(\nabla^{n-1}F_{\varphi}[0] \right) \Big\} \\ 
	& \bigcup \Big\{ (x_1,\ldots,x_n,0,\ldots,0,\xi_j,0,\ldots,0,\xi_n) \in T^*M^n :  (x_j,x_n,y,\xi_j,\xi_n,0) \in  \mathrm{WF}\left(\Gamma_{\varphi}\right)\ , \\ 
	&\qquad (y,x_1,\ldots, \widehat{x_j},\ldots ,x_{n-1},\eta,0,\ldots,0)\in \mathrm{WF} \left(\nabla^{n-1}F_{\varphi}[0] \right) \Big\}\\
	& = \Pi_1\cup \Pi_2\cup \Pi_3.
\end{align*}
If by contradiction, we had that the $\nabla^{n-1}F_{\varphi} \circ \Gamma_{\varphi}$ was not microcausal, then there would be elements of its wavefront set for which all $\xi_1,\ldots,\xi_n$ are, say, future pointing. In this case those must belong to $\Pi_1$, but then $\eta$ is future pointing as well by the form of $\mathrm{WF}(\Gamma_{\varphi})$, so that $\nabla^{n-1}F_{\varphi}$ is not microcausal, contradicting our initial assumption.
\end{proof}

One can also show that microcausality does not depend upon the connection chosen by computing
\begin{align*}
	\nabla^{n}F_{\varphi}[0](\vec{X}_1,\ldots, \vec{X}_n) & - \widetilde{\nabla}^{n}F_{\varphi}[0]\big(\vec{X}_1,\ldots, \vec{X}_n\big)  \\
	&=  \sum_{j=1}^n \frac{1}{n!} \sum_{\sigma \in \mathcal{P}(n)} \nabla^{n-1}F_{\varphi} [0]\Big(\Gamma_{\varphi} \big(\vec{X}_{\sigma(j)}, \vec{X}_{\sigma(n)}\big), \vec{X}_{\sigma(1)}, \ldots, \widehat{\vec{X}_{\sigma(j)}}, \ldots \vec{X}_{\sigma(n-1)}\Big) \\ 
	& \quad- \sum_{j=1}^n \frac{1}{n!} \sum_{\sigma \in \mathcal{P}(n)} \widetilde{\nabla}^{n-1}F_{\varphi} [0]\Big(\widetilde{\Gamma}_{\varphi} \big(\vec{X}_{\sigma(j)}, \vec{X}_{\sigma(n)}\big), \vec{X}_{\sigma(1)}, \ldots, \widehat{\vec{X}_{\sigma(j)}}, \ldots \vec{X}_{\sigma(n-1)}\Big);
\end{align*}
and then combining induction with Lemma \ref{lemma_1_cov_WF_equivalence} to get an empty wave front set for the terms on right hand side of the above equation. Another consequence of Lemma \ref{lemma_1_cov_WF_equivalence} is that microcausality of a functional does not depend on the ultralocal charts used to perform the derivatives. We immediately have the inclusion $\mathcal{F}_{reg}(B,\mathcal{U}) \subset \mathcal{F}_{\mu c}(B,\mathcal{U},g)$.

\begin{proposition}\label{prop_1_muloc_into_mucaus}
Let $\mathcal{U}\subset\Gamma^{\infty}(M\leftarrow B)$ be $CO$-open, then if $F\in \mathcal{F}_{\mu loc}(B,\mathcal{U})$, $\mathrm{WF}\big(F^{(k)}_{\varphi}[0]\big)$ is conormal to $ \triangle_k (M)$ \textit{i.e.} $\mathrm{WF}\left(F^{(k)}_{\varphi}[0]\right)\subset \lbrace(x,\ldots,x,\xi_1,\ldots,\xi_k): \xi_1+\ldots+\xi_k=0 \rbrace$ for all $k \geq 2$ and $\varphi \in \mathcal{U}$. Therefore $\mathcal{F}_{\mu loc}(B,\mathcal{U}) \subset \mathcal{F}_{\mu c}(B,\mathcal{U}, g)$. 
\end{proposition}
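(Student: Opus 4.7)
The plan is to leverage locality to reduce the wavefront computation to that of the diagonal delta. Since $F \in \mathcal{F}_{\mu loc}(B,\mathcal{U})$ is in particular local, the remark following Definition \ref{def_1_func_classes} propagates the support condition to all higher covariant derivatives, giving $\mathrm{supp}\bigl(F^{(k)}_\varphi[0]\bigr) \subset \Delta_k(M)$ for every $k \geq 2$; combined with the compact spacetime support of $F$, this makes $F^{(k)}_\varphi[0]$ a compactly supported distributional section, hence of finite order. By the Schwartz structure theorem for distributions supported on a closed submanifold, in each local fibered chart on $\varphi^{*}VB$ the kernel of $F^{(k)}_\varphi[0]$ may be written as a finite sum
\begin{equation*}
f^{(k)}_\varphi[0](x_1,\ldots,x_k) = \sum_{|I_1|+\cdots+|I_k|\leq N} a^{I_1 \cdots I_k}_{i_1 \cdots i_k}(x)\, \partial^{I_1}_{x_1}\cdots \partial^{I_k}_{x_k}\, \delta_{\Delta_k}(x_1,\ldots,x_k) \otimes dy^{i_1}\otimes \cdots \otimes dy^{i_k},
\end{equation*}
with smooth coefficients $a^{\cdots}_{\cdots}$ and $\delta_{\Delta_k}$ the pushforward of the volume density by the diagonal embedding $\Delta\colon M\hookrightarrow M^k$. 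One could alternatively derive this representation by differentiating $k$ times the integral formula of Proposition \ref{porop_1_muloc_charachterization}, at the cost of assuming the additional local bornology needed to apply that result.

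The wavefront set is then immediate: the computation of the diagonal delta at the end of \Cref{section_distributions} gives
\begin{equation*}
\mathrm{WF}(\delta_{\Delta_k}) = \bigl\{ (x,\ldots,x;\xi_1,\ldots,\xi_k)\in \dot T^{*}M^k : \xi_1+\cdots+\xi_k = 0 \bigr\},
\end{equation*}
which is precisely the conormal bundle $N^{*}\Delta_k(M)$. By items $(i)$--$(iii)$ of Proposition \ref{prop_0_WF_properties_1}, multiplying by the smooth coefficients $a^{\cdots}_{\cdots}$ and applying the differential operators $\partial^{I_j}_{x_j}$ cannot enlarge the wavefront set, so $\mathrm{WF}\bigl(F^{(k)}_\varphi[0]\bigr) \subset N^{*}\Delta_k(M)$, which is the first claim of the proposition.

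To conclude microcausality, I would observe that $N^{*}\Delta_k(M)$ meets $\overline{V}^{+}_k \cup \overline{V}^{-}_k$ only at the zero section, which is excluded from the wavefront set. Indeed, if $(\xi_1,\ldots,\xi_k)$ lies in $(\overline{V}^{+}(x))^k$ and satisfies $\sum_j \xi_j = 0$, then $-\xi_1 = \sum_{j\geq 2}\xi_j \in \overline{V}^{+}(x)$, so $\xi_1 \in \overline{V}^{+}(x) \cap \bigl(-\overline{V}^{+}(x)\bigr) = \{0\}$ since the closed future cone is a pointed convex cone; by symmetry every $\xi_j = 0$, contradicting $(\xi_1,\ldots,\xi_k)\neq 0$. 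The same argument handles $\overline{V}^{-}_k$. Hence $\mathrm{WF}\bigl(F^{(k)}_\varphi[0]\bigr) \cap \bigl(\overline{V}^{+}_k \cup \overline{V}^{-}_k\bigr) = \emptyset$, giving $F \in \mathcal{F}_{\mu c}(B,\mathcal{U},g)$. The only delicate point is the careful invocation of the Schwartz structure theorem in the bundle-valued setting, which is handled by a standard partition of unity over local trivialisations and fibered coordinates.
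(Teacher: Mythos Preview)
Your proof is correct and follows essentially the same route as the paper: reduce to the wavefront set of the diagonal delta, then use the conormal constraint $\sum_j \xi_j = 0$ to rule out all-future or all-past configurations. Your treatment is in fact slightly more careful than the paper's in two respects: you invoke the Schwartz structure theorem to allow derivatives of $\delta_{\Delta_k}$ in the local representation (the paper writes the kernel simply as a smooth function times $\delta$, which is not general enough for functionals depending on jets of $\varphi$, though the wavefront conclusion is unaffected), and your cone argument explicitly forces every $\xi_j$ to vanish rather than stopping at the observation that one covector lands in the opposite cone.
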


\begin{proof}
Since the wavefront set is a local property independent from the chart, we fix any $\mathcal{U}_{\varphi}$ and calculate it there. Note that the first derivative results in a smooth functional, then we take the $k$th derivative with $k \geq 2$. Going through the calculations, we get that $F^{(k)}_{\varphi}[0]\big(\vec{X}_1,\ldots,\vec{X}_k\big)$ defines an integral kernel of the form 
$$
	\int_{M^k} f^{(k)}_{\varphi}[0]_{i_1 \cdots i_k}(x_1)\delta(x_1,\ldots,x_k)\vec{X}_1^{i_1}(x_1)\cdots\vec{X}_k^{i_k}(x_k)d\mu_g(x_1,\ldots,x_k)\ .
$$
where $f^{(k)}_{\varphi}[0]_{i_1 \cdots i_k}$ is some smooth function for each indices $i_1,\ldots, i_k$. The calculation of the wavefront of such an integral kernel is equivalent to the calculation of the wave front of the diagonal delta, resulting in a subset of the conormal bundle to the diagonal map image. Therefore 
$$
	\mathrm{WF}(F^{(k)}_{\varphi}[0])= N^{*} \triangle_k (M)=\bigg\{(x_1,\ldots,x_k,\xi_1,\ldots,\xi_k)\in T^{*}M^k\backslash 0 : \ x_1=\cdots=x_k; \  \sum_{j=1}^k \xi_j=0\bigg\}	\ .
$$
In addition if $(x,\ldots,x,\xi_1,\ldots,\xi_k)$ is in $\mathrm{WF}\big(F^{(k)}_{\varphi}[0]\big)$ and has, say, the first $k-1$ covectors in $\overline{V}_{k-1}^{+}(x,\ldots,x)$, by conormality $\xi_k=-(\xi_1+\ldots+\xi_{k-1})$ and we see that $\xi_k \in \overline{V}^{-}(x)$, whence microlocality implies microcausality.
\end{proof}

\begin{theorem}\label{thm_1_mucaus_1}
Let $\mathcal{U}\subset\Gamma^{\infty}(M\leftarrow B)$ be $CO$-open and $\mathcal{L}$ a generalized microlocal Lagrangian with normally hyperbolic linearized equations. Then the Peierls bracket associated to $\mathcal{L}$ extends to $\mathcal{F}_{\mu c}(B,\mathcal{U}, g)$, has the same support property of Proposition \ref{prop_1_Peierls_1} and depends only locally on $\mathcal{L}$, that is, for all $F$, $G\in \mathcal{F}_{\mu c}(B,\mathcal{U}, g) $, $\lbrace F,G \rbrace_{\mathcal{L}}$ is unaffected by perturbations of $\mathcal{L}$ outside the right hand side of \eqref{eq_1_supp_peierls}. The same locality property holds for the retarded and advanced products.
\end{theorem}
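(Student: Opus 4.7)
The proof breaks into three parts: extending the definition, establishing the support property, and verifying local dependence on $\mathcal{L}$. The core idea throughout is a careful bookkeeping of wave front sets, using Theorem \ref{thm_1_properties_of_Green_functions}\,(iv) for the Green operators together with the composition result \Cref{thm_8.2.14}.

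First, to extend $\textsf{R}_{\mathcal{L}}(F,G)(\varphi)=\langle dF_{\varphi}[0],\mathcal{G}_{\varphi}^{+}\,dG_{\varphi}[0]\rangle$ to microcausal $F,G$, I would reinterpret this pairing as a composition of distributional kernels. Regarding $\mathcal{G}_{\varphi}^{+}$ as an element of $\Gamma^{-\infty}(M\times M\leftarrow\cdots)$ with wave front set contained in $\triangle_{2}^{*}M\cup\mathrm{BiCh}^{g}_{\gamma}$, and $dG_{\varphi}[0]$ as a distributional section on $M$ with $\mathrm{WF}(dG_{\varphi}[0])\cap\Upsilon_{1}(g)=\emptyset$, the composition is well defined by \Cref{thm_8.2.14} because covectors in $\mathrm{WF}(\mathcal{G}_{\varphi}^{+})_{M}$ lie in $\partial\overline{V}^{+}\cup\partial\overline{V}^{-}$ (null directions) while $\mathrm{WF}(dG_{\varphi}[0])$ avoids $\overline{V}^{+}\cup\overline{V}^{-}$. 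Pairing the resulting distribution with $dF_{\varphi}[0]$ (again microcausal) uses the same argument: the transported wave front, controlled by the propagation-of-singularities statement in Theorem \ref{thm_1_properties_of_Green_functions}\,(v), stays inside the characteristic set of null covectors and thus cannot annihilate the wave front of $dF_{\varphi}[0]$. This gives well-definedness of $\textsf{R}_{\mathcal{L}}$, $\textsf{A}_{\mathcal{L}}$, and hence of the bracket.

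Next, to show Bastiani smoothness, I would follow the scheme of Proposition \ref{prop_1_Peierls_1} and verify that each of the compositions appearing in \eqref{eq_1_der_retarded}, with the kernels of $d^{k}\mathcal{G}_{\varphi}^{\pm}$ expanded via \eqref{eq_1_der_Gvar+-}, still satisfies the hypotheses of \Cref{thm_8.2.14}. The crucial inputs are: the derivatives $\delta^{(|I|+1)}E(\mathcal{L})_{\varphi}[0]$ are microlocal (by Proposition \ref{prop_1_muloc_into_mucaus}, conormal to the diagonal, hence carry no ``bad'' covectors of their own); the factors $\mathcal{G}_{\varphi}^{\pm}$ contribute only null covectors; and the outermost factors $F^{(|J_{1}|+1)},G^{(|J_{3}|+1)}$ are by hypothesis microcausal. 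Iterating the composition theorem along the chain shows each term is a well-defined distribution with jointly continuous dependence on $\varphi$ in the Hörmander topology, which, combined with the smooth $\varphi$-dependence of $\mathcal{G}_{\varphi}^{\pm}$ established in Lemma \ref{lemma_1_deriv_green}, yields Bastiani smoothness. The support bound in \eqref{eq_1_supp_peierls} then follows verbatim from Proposition \ref{prop_1_Peierls_1}, since the argument there used only the causal support properties (iii) of Theorem \ref{thm_1_properties_of_Green_functions}, which are unaffected by the extension.

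Finally, for the local dependence on $\mathcal{L}$, suppose $\mathcal{L}$ and $\widetilde{\mathcal{L}}$ induce the same advanced/retarded Green operators on the causal region relevant to the supports of $F$ and $G$, i.e.\ on the right-hand side of \eqref{eq_1_supp_peierls}. Explicitly, if $\widetilde{\mathcal{L}}-\mathcal{L}$ is supported outside $J^{+}(\mathrm{supp}(F)\cup\mathrm{supp}(G))\cup J^{-}(\mathrm{supp}(F)\cup\mathrm{supp}(G))$, then the linearized operators $D_{\varphi}$ and $\widetilde{D}_{\varphi}$ coincide on an open neighborhood of the causal hull of $\mathrm{supp}(F)\cup\mathrm{supp}(G)$; by the uniqueness part of Theorem \ref{thm_1_existence_adv/ret_fundamental_solutions}, and by the support properties of $G^{\pm}$, the kernels $\mathcal{G}_{\varphi}^{\pm}(x,y)$ and $\widetilde{\mathcal{G}}_{\varphi}^{\pm}(x,y)$ agree whenever $x\in \mathrm{supp}(F)\cup\mathrm{supp}(G)$ and $y\in \mathrm{supp}(F)\cup\mathrm{supp}(G)$. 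Hence the pairing $\langle dF_{\varphi}[0],\mathcal{G}_{\varphi}^{\pm}dG_{\varphi}[0]\rangle$ is unchanged, proving the locality assertion for $\textsf{R}_{\mathcal{L}}$, $\textsf{A}_{\mathcal{L}}$ and thus for the bracket.

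\textbf{Main obstacle.} The non-routine part is the systematic wave front tracking in paragraph two: one must check that for every partition appearing in \eqref{eq_1_der_Gvar+-}, the iterated composition of Green-function kernels, derivatives of $E(\mathcal{L})$, and the outer microcausal factors satisfies the hypotheses of \Cref{thm_8.2.14} at each step, and that the resulting total wave front projects onto something compatible with $\mathrm{WF}(dF_{\varphi}[0])$ and $\mathrm{WF}(dG_{\varphi}[0])$. The verification rests on the geometric fact that the characteristic set of $D_{\varphi}$ consists of null covectors, which are precisely excluded from $\Upsilon_{k}(g)$ in the definition of microcausality.
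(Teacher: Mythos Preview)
Your proposal has a genuine error in the well-definedness step and a mismatch in the locality step.

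\textbf{Well-definedness.} You claim that ``the transported wave front\dots stays inside the characteristic set of null covectors'' for $\mathcal{G}^{+}_{\varphi}\,dG_{\varphi}[0]$. This is false: by Theorem~\ref{thm_1_properties_of_Green_functions}\,(v), $(x,\xi)\in\mathrm{WF}(G^{+}u)$ whenever $(x,\xi)\in\mathrm{WF}(u)$, because the diagonal part $\{(x,x;\xi,-\xi)\}$ of $\mathrm{WF}(\mathcal{G}^{+})$ passes singularities through unchanged. Hence for microcausal $G$ the set $\mathrm{WF}(\mathcal{G}^{+}_{\varphi}\,dG_{\varphi}[0])$ contains the \emph{spacelike} covectors of $\mathrm{WF}(dG_{\varphi}[0])$, and nothing prevents $(x,\xi)\in\mathrm{WF}(dF_{\varphi}[0])$, $(x,-\xi)\in\mathrm{WF}(dG_{\varphi}[0])$ with $\xi$ spacelike. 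So $\textsf{R}_{\mathcal{L}}$ and $\textsf{A}_{\mathcal{L}}$ do \emph{not} extend to arbitrary microcausal $F,G$; the theorem only asserts extension of the \emph{bracket}. The paper's argument exploits exactly this: the causal propagator $\mathcal{G}_{\varphi}=\mathcal{G}^{+}_{\varphi}-\mathcal{G}^{-}_{\varphi}$ has wave front set \emph{only} on bicharacteristics (no diagonal), so $\mathcal{G}_{\varphi}\,dG_{\varphi}[0]$ is in fact \emph{smooth} and the pairing is trivial. Your second paragraph on Bastiani smoothness likewise belongs to Theorem~\ref{thm_1_peierls_closedness}, not here.

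\textbf{Locality.} You write the region as $J^{+}(\mathrm{supp}(F)\cup\mathrm{supp}(G))\cup J^{-}(\mathrm{supp}(F)\cup\mathrm{supp}(G))$, but \eqref{eq_1_supp_peierls} is the strictly smaller \emph{intersection} $\mathcal{O}=(J^{+}(\mathrm{supp}F)\cup J^{-}(\mathrm{supp}F))\cap(J^{+}(\mathrm{supp}G)\cup J^{-}(\mathrm{supp}G))$, so as written you prove a weaker statement. The paper instead uses the identity $\mathcal{G}^{+}_{1}-\mathcal{G}^{+}_{2}=-\mathcal{G}^{+}_{1}(D_{1}-D_{2})\mathcal{G}^{+}_{2}$ and tracks supports: the pairing $\langle dF,\mathcal{G}^{+}_{1}(D_{1}-D_{2})\mathcal{G}^{+}_{2}\,dG\rangle$ is supported where $J^{-}(\mathrm{supp}F)\cap J^{+}(\mathrm{supp}G)$ meets $\mathrm{supp}(D_{1}-D_{2})$, but the former set already lies in $\mathcal{O}$, where $D_{1}=D_{2}$. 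Your uniqueness-of-Green-operators route can be salvaged for the correct region (since $J^{-}(x)\cap J^{+}(y)\subset\mathcal{O}$ for $x\in\mathrm{supp}(F)$, $y\in\mathrm{supp}(G)$), but the global uniqueness in Theorem~\ref{thm_1_existence_adv/ret_fundamental_solutions} alone does not give the required \emph{local} determination of $\mathcal{G}^{+}(x,y)$.
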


\begin{proof}
Clearly $\lbrace F,G \rbrace_{\mathcal{L}}$ is well defined, in fact since $\mathrm{WF}(G^{(1)}_{\varphi}[0])$ is spacelike, and $\mathcal{G}_{\varphi} $, according to Theorem \ref{thm_1_properties_of_Green_functions}, propagates only lightlike singularities along lightlike geodesics, then $\mathcal{G}_{\varphi} dG_{\varphi}[0]$ must be smooth, giving a well defined pairing. As for support properties the proof can be carried on analogously to the proof of Proposition \ref{prop_1_Peierls_1}.

We now study the local behavior of the bracket. Suppose $\mathcal{L}_1$ and $\mathcal{L}_2$ are generalized Lagrangians, such that for some fixed $\varphi \in \mathcal{U}$, $\delta^{(1)}E(\mathcal{L}_1)_{\varphi}[0]$ and $\delta^{(1)}E(\mathcal{L}_2){\varphi}[0]$ differ only in a region outside 
\begin{equation}\label{eq_1_supp_peierls_2}
\mathcal{O}\doteq \left( J^+(\mathrm{supp}(F)) \cup J^-(\mathrm{supp}(F)) \right) \cap \left( J^+(\mathrm{supp}(G)) \cup J^-(\mathrm{supp}(G)) \right).
\end{equation}
By the support properties of retarded and advanced propagators of Proposition \ref{prop_1_Peierls_1} we have 
$$
\left\langle dF_{\varphi}[0], (\mathcal{G}^+_{\varphi,\mathcal{L}_1}-\mathcal{G}^+_{\varphi,\mathcal{L}_2})dG_{\varphi}[0]\right\rangle=0,
$$
as well as
$$
\left\langle dF_{\varphi}[0], (-\mathcal{G}^-_{\varphi,\mathcal{L}_1}+\mathcal{G}^-_{\varphi,\mathcal{L}_2})dG_{\varphi}[0]\right\rangle=0.
$$
Taking the sum of the two we find 
$$
	\lbrace F,G \rbrace_{\mathcal{L}_1}-\lbrace F,G \rbrace_{\mathcal{L}_2}=0.
$$
\end{proof}

\begin{theorem}\label{thm_1_peierls_closedness}
Let $\mathcal{U} \subset \Gamma^{\infty}(M\leftarrow B)$ $CO$-open and $\mathcal{L}$ a generalized Lagrangian. If $F$, $G\in \mathcal{F}_{\mu c}(B,\mathcal{U},g) $ we have that $\lbrace F,G \rbrace_{\mathcal{L}}\in \mathcal{F}_{\mu c}(B, \mathcal{U},g) $ as well.
\end{theorem}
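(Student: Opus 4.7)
The plan is to compute $d^k\{F,G\}_{\mathcal{L},\varphi}[0]$ explicitly via \eqref{eq_1_der_retarded}, \eqref{eq_1_der_advanced} combined with \eqref{eq_1_der_Gvar}, obtaining a finite sum of terms, each being a pairing of $F^{(p+1)}_\varphi[0]$ on the left and $G^{(q+1)}_\varphi[0]$ on the right with an intermediate integral kernel $\mathcal{K}$ constructed as an iterated composition of propagators $\mathcal{G}_\varphi$, $\mathcal{G}^{\pm}_\varphi$ and of Euler--Lagrange derivatives $\delta^{(|I|+1)}E(\mathcal{L})_\varphi[0]$. I would then estimate the wave front set of each such term by iterated applications of Theorem \ref{thm_8.2.14}. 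The required transversality condition $\mathrm{WF}'(K_2)\cap\mathrm{WF}(K_1)=\emptyset$ at each intermediate stage holds because the Euler--Lagrange derivatives are microlocal, hence conormal to the small diagonal by Proposition \ref{prop_1_muloc_into_mucaus}, while the wave fronts of the propagators are supported on null bicharacteristics by Theorem \ref{thm_1_properties_of_Green_functions}(iv); a conormal covector and a bicharacteristic covector can only fail to be transverse on the zero section, which is excluded.

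After this bookkeeping, the task reduces to showing that no element $(x_1,\ldots,x_k;\xi_1,\ldots,\xi_k)\in \mathrm{WF}\bigl(d^k\{F,G\}_{\mathcal{L},\varphi}[0]\bigr)$ can have all $\xi_j\in\overline{V}^{+}(x_j)$, the $\overline{V}^{-}$ case being symmetric. I would argue by contradiction, tracing the covectors back through each composition chain. Each external variable is attached either to the $F$-leg, to the $G$-leg, or to an Euler--Lagrange insertion; at each insertion, conormality forces the sum of covectors at that vertex to vanish, while each propagator link carries a covector at one endpoint to a covector at the other along a null bicharacteristic, whose forward/backward orientation at both endpoints coincides with the causal character of the covector. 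Consequently, if all external covectors lie in $\overline{V}^{+}$, the balance relations at the conormal vertices together with the bicharacteristic propagation force every covector attached to the $F$-leg (respectively $G$-leg) to lie in $\overline{V}^{+}$ as well, i.e.\ $\mathrm{WF}\bigl(F^{(p+1)}_\varphi[0]\bigr)\cap \Upsilon_{p+1}(g)^c$ is nonempty with all covectors future-directed. This contradicts microcausality of $F$ (and analogously of $G$).

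The main obstacle will be the careful cone-propagation bookkeeping at the ``central'' link, where the causal propagator $\mathcal{G}_\varphi=\mathcal{G}^{+}_\varphi-\mathcal{G}^{-}_\varphi$ appears. Unlike $\mathcal{G}^{\pm}_\varphi$, whose wave front contains also the diagonal contribution $\{(x,x;\xi,-\xi)\}$, the wave front of $\mathcal{G}_\varphi$ is supported only on bicharacteristics of lightlike geodesics between distinct points and therefore couples only null covectors on either side. By the structural form of \eqref{eq_1_der_Gvar}, each chain contains exactly one $\mathcal{G}_\varphi$ link flanked by retarded propagators toward the $G$-leg and advanced propagators toward the $F$-leg (or vice versa), which consistently pins down the cone orientations in the propagation analysis. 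Once this structural lemma is in place, the contradiction argument closes and microcausality of $\{F,G\}_\mathcal{L}$ follows; the support property is inherited from Proposition \ref{prop_1_Peierls_1}, so the only genuinely new content is the wave front estimate just described.
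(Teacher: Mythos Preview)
Your proposal is correct and follows essentially the same approach as the paper: expand $d^k\{F,G\}_{\mathcal{L},\varphi}[0]$ via the Leibniz/Fa\`a di Bruno structure together with \eqref{eq_1_der_Gvar}, estimate each resulting kernel by iterated use of Theorem \ref{thm_8.2.14}, and run a cone--orientation contradiction argument pivoting on the unique $\mathcal{G}_\varphi$ link in each chain (with $\mathcal{G}^-$'s toward $F$ and $\mathcal{G}^+$'s toward $G$). The paper organizes the casework as $J_2=\emptyset$ versus $J_2\neq\emptyset$ and writes out the full chain of wave front constraints explicitly, but the logical content matches your outline.
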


\begin{proof}
By Faà di Bruno's formula,
\begin{equation}\label{eq_1_k-th_derivative_Peierls}
\begin{aligned}
	&d^k\left\lbrace F,G \right\rbrace_{\mathcal{L} ,\varphi}[0](\vec{X}_{1},\ldots, \vec{X}_k)\\
	&\qquad=\sum_{(J_1,J_2,J_3)\subset P_k} \left\langle F^{(|J_1|+1)}_{\varphi }[0](\otimes_{j_1 \in J_1}\vec{X}_{j_1}) \right.,  \left.d^{(|J_2|)}\mathcal{G}_{\varphi}(\otimes_{j_2 \in J_2}\vec{X}_{j_2}) G^{(|J_3|+1)}_{\varphi}[0] (\otimes_{j_3 \in J_3}\vec{X}_{j_3}) \right\rangle.
\end{aligned}
\end{equation}
while by \eqref{eq_1_der_Gvar},
\begin{align}\label{eq_1_k-th_derivative_G_var}
\begin{aligned}
	d^{|J_2|}\mathcal{G}_{\varphi} (\vec{X}_1,\ldots, \vec{X}_k) =\sum_{l=1}^{k}(-1)^l & \sum_{\substack{(I_1,\ldots,I_l) \\ \in \mathcal{P}(J_2)} } \sum_{p=0}^{l}\bigg( \bigcirc_{i=1}^{p} \mathcal{G}^-_{\varphi}\circ \delta^{(|I_i| +1)}E(\mathcal{L})\big(\vec{X}_{I_i}\big) \bigg) \\ 
    &\circ \mathcal{G}_{\varphi}\circ \bigg( \bigcirc_{i=p+1}^{l} \delta^{(|I_i| +1)}E(\mathcal{L})_{\varphi}\big(\vec{X}_{I_i}\big)\circ \mathcal{G}^+_{\varphi}  \bigg),
\end{aligned}
\end{align}
where $\bigcirc_{i=1}^p$ stands for composition of mappings indexed by $i$ from $1$ to $p$. For the rest of the proof, we will use the integral notation we used in \eqref{eq_1_kernel_notation} and in the proof of Proposition \ref{prop_1_Peierls_1}. Recall that, by \eqref{eq_1_k-th_derivative_gen_Lag}, the mapping $\delta^{(n)}E(\mathcal{L})_{\varphi}[0]$ has associated a compactly supported integral kernel $\mathcal{L}(1)^{(n+1)}_{\varphi}[0](x,z_{1},\ldots, z_{n})$ and its wave front is in $N^{*}\triangle_{n+1}(M)$ by Proposition \ref{prop_1_muloc_into_mucaus}. Then again, we have two general cases:\\
$1)$ $J_2=\emptyset$.\\
Then, letting $|J_1|=p$, $|J_3|=q=k-p$, the typical term has the form
\begin{equation}\label{eq_1_kernel_k-th_derivative_Paierl_1}
\begin{aligned}
	d^k\left\lbrace F,G \right\rbrace_{\varphi}[0](z_1,\ldots,z_k)=\int_{M^{2}} f^{(p+1)}_{\varphi}[0]_i(x,z_1,\ldots,z_{p}) , \mathcal{G}_{\varphi}^{ij}(x,y) g^{(q+1)}_{\varphi}[0]_j(y,z_{p+1},\ldots,z_k)d\mu_g(x,y).
\end{aligned}
\end{equation}
Suppose by contradiction that there is some $(x_1,\ldots,x_k,\xi_1,\ldots,\xi_k)\in \mathrm{WF}( \left\lbrace F,G \right\rbrace_{\varphi}[0])$ has $(\xi_1,\ldots,\xi_{k})\in \overline{V}^+_{k}(x_1,\ldots,x_k)$ (the argument works similarly for $(\xi_1,\ldots,\xi_{k})\in \overline{V}^-_{k}(x_1,\ldots,x_k)$). Using twice Theorem 8.2.14 in \cite{hormanderI} in the above pairing yields
$$
\begin{aligned}
	&\mathrm{WF}\Big(\big\lbrace F,G \big\rbrace^{(k)}_{\varphi}[0]\Big)\\ &\qquad\subseteq  \Big\lbrace (z_1,\ldots,z_k,\xi_1,\ldots,\xi_k)  :\  \exists (y,\eta)\in T^{*}M 	(x,z_1,\ldots,z_p,-\eta,\xi_1,\ldots,\xi_p) \in \mathrm{WF}(F^{(p+1)}_{\varphi}[0]_i), \\
	&\qquad\qquad (x,z_{p+1},\ldots,z_{k},\eta,\xi_{p+1},\ldots,\xi_k) \in \mathrm{WF}\big(\mathcal{G}_{\varphi}^{ij}G^{(q+1)}_{\varphi}[0]_j\big) \Big\rbrace \\ 
	 &\qquad\subset \Big\lbrace (z_1,\ldots,z_k,\xi_1,\ldots,\xi_k):  \ \exists (x,\eta),(y,\zeta)\in T^*M : (x,z_1,\ldots,z_p,-\eta,\xi_1,\ldots, \xi_p) \in \mathrm{WF}\big(F^{(p+1)}_{\varphi}[0]_i\big) \\ 
	&\qquad\qquad (x,y,\eta,-\zeta)\in \mathrm{WF}(\mathcal{G}_{\varphi}^{ij}), \ (y,z_{p+1},\ldots,z_{k},\zeta,\xi_{p+1},\ldots, \xi_k) \in \mathrm{WF}(G^{(q+1)}_{\varphi}[0]_j) \Big\rbrace.
\end{aligned}
$$
So if $(z_1,\ldots,z_k,\xi_1,\ldots,\xi_k)\in \mathrm{WF}\big(\left\lbrace F,G \right\rbrace^{(k)}_{\varphi}\big)$, then $\exists$ $(x,\eta), (y,\zeta) \in T^{*}M$ such that 
$$
	\left\lbrace \begin{array}{ll}
		(x,z_1,\ldots,z_p,-\eta,\xi_1,\ldots,\xi_p)&  \in \mathrm{WF}(F^{(p+1)}_{\varphi}[0]_i))\\
		(x,y,\eta,-\zeta)  &\in \mathrm{WF}(\mathcal{G}_{\varphi}^{ij})\\
		(y,z_{p+1},\ldots,z_k,\zeta,\xi_{p+1},\ldots,\xi_k) &\in \mathrm{WF}(G^{(q+1)}_{\varphi}[0]_j)\ .
	\end{array} \right. 
$$
Now by Theorem \ref{thm_1_properties_of_Green_functions}, $\mathrm{WF}(\mathcal{G}_{\varphi}^{ij})$ contains pairs of lightlike covectors with opposite time orientation, therefore in case $\eta\in \overline{V}^{+}(x)$ (resp. $\eta\in \overline{V}^{-}(x)$), then $\zeta\in \overline{V}^{+}(y)$ (resp. $\zeta\in \overline{V}^{-}(y)$) in which case $	\mathrm{WF}\big(G^{(q+1)}_{\varphi}[0]_j\big)$ (resp. $\mathrm{WF}\big(F^{(p+1)}_{\varphi}[0]_i\big)$) does violate the microcausality condition of Definition \ref{def_1_WF_mucaus}.\\
$2)$ $J_2 \neq \emptyset$. \\
Again let $|J_1|=p$, $|J_3|=k-q$, set also, referring to \eqref{eq_1_k-th_derivative_G_var}, $|I_j|=k_j$ for $j=1,\ldots,l$ so that $|J_2|=k_1+\cdots+k_l$. Combining \eqref{eq_1_k-th_derivative_Peierls} with \eqref{eq_1_k-th_derivative_G_var} with the integral kernel notation we get
\begin{equation}\label{eq_1_kernel_k-th_derivative_Paierl_2}
\begin{aligned}
	 \{ F,G \}^{(k)}_{\varphi}[0](z_1,\ldots,z_k)&=\int_{M^{k}} f^{(p+1)}_{\varphi}[0]_i(x,z_1,\ldots,z_p) \mathcal{G}^{- \space\ i j_1}_{\varphi}(x,x_1) d^{(k_1+2)} \mathcal{L}_{\varphi}[0]_{j_1 i_1} (x_1,y_1,z_{I_1})\\ & \quad
	 \mathcal{G}^{- \space\ i_1 j_2}_{\varphi}(y_1,x_2)   \cdots  \mathcal{G}^{- \space\ i_{m-1} j_{m}}_{\varphi}(y_{m-1},x_{m}) d^{(k_m+2)}\mathcal{L}_{\varphi}[0]_{j_{m} i_{m}} (x_{m},y_{m},z_{I_m})\\ &  \quad
	 \mathcal{G}_{\varphi}^{i_{m} j_{m+1}}(y_{m},x_{m+1}) 
	 d^{(k_{m+1}+2)} \mathcal{L}_{\varphi}[0]_{j_{m+1} i_{m+1}} (x_{m+1},y_{m+1},z_{I_{m+1}})  \\ & \quad
	 \mathcal{G}^{+ \ i_{m+1} j_{m+2}}_{\varphi}(y_{m+1},z_{m+2})  \ldots  d^{(k_l+2)} \mathcal{L}_{\varphi}[0]_{j_{l} i_{l}} (x_{l},y_{l},z_{I_l},)\mathcal{G}^{+\ i_{l} j}(y_{l} ,y) \\
	 & g^{(k-q+1)}_{\varphi}[0]_j(y,z_{q+1},\ldots,z_k)d\mu_g(x,x_1,y_1,\ldots,x_{l},y_{l},y).
\end{aligned}
\end{equation}
Combining Theorem 8.2.14 in \cite{hormanderI}, Theorem \ref{thm_1_properties_of_Green_functions} and Proposition \ref{prop_1_muloc_into_mucaus} we can estimate the wave front set of the integral kernel of $\{ F,G \}^{(k)}_{\varphi}[0]$ as all elements $(z_1,\ldots,z_k,\xi_1,\ldots,\xi_k)\in T^{*}M^k$ for which there are $(x,\eta)$, $(x_1,\eta_1)$, $\ldots,(x_l\eta_l)$, $(y_1,\zeta_1)$, $\ldots , (y_l,\zeta_l)$ $(y,\zeta) \in T^{*}M $ such that 
$$
	\left\lbrace \begin{array}{ll}
		(x,z_1,\ldots,z_p,-\eta,\xi_1,\ldots,\xi_p) & \in \mathrm{WF}\big(f^{(p+1)}_{\varphi}[0]_i\big),\\
		(x,x_{1},\eta,-\eta_{1}) &\in \mathrm{WF}(\mathcal{G}^{-\ ij_1}_{\varphi}),\\
		(x_{1},y_{1},z_{|I_1|},\eta_{1},-\zeta_{1},\xi_{I_1}) &\in \mathrm{WF}\big(d^{(k_{1}+2)}\mathcal{L}_{\varphi}[0]_{i_1j_1}\big),\\
		\vdots & \vdots\\
		(y_{m-1},x_{m},\zeta_{m-1},-\eta_{m}) &\in \mathrm{WF}(\mathcal{G}^{-\ i_{m-1}j_m}_{\varphi}),\\
		(x_{m},y_{m},z_{I_m},\eta_{m},-\zeta_{m},\xi_{I_m}) &\in \mathrm{WF}\big(d^{(k_{m}+2)}\mathcal{L}_{\varphi}[0]_{i_mj_m}\big),\\
		(y_{m},x_{m+1},\zeta_{m},-\eta_{m+1})  &\in \mathrm{WF}(\mathcal{G}_{\varphi}^{i_mj_{m+1}}),\\
		(x_{m+1},y_{m+1},z_{I_{m+1}},\eta_{m+1},-\zeta_{m+1},\xi_{I_{m+1}}) &\in \mathrm{WF}\big(d^{(k_{m+1}+2)}\mathcal{L}_{\varphi}[0]_{i_{m+1}j_{m+1}}\big),\\
		(y_{m+1},x_{m+2},\zeta_{m+1},-\eta_{m+2}) &\in \mathrm{WF}(\mathcal{G}^{+ \ i_{m+1}j_{m+2}}_{\varphi}),\\
		\vdots & \vdots\\
		(y_{l-1},x_{l},\zeta_{l-1},-\eta_{l}) &\in \mathrm{WF}(\mathcal{G}^{+\ i_{l-1}j_l}_{\varphi}),\\
		(x_{l},y_{l},z_{I_l},\eta_{l},-\zeta_{l},\xi_{I_l}) &\in \mathrm{WF}\big(d^{(k_{l}+2)}\mathcal{L}_{\varphi}[0]_{i_lj_l}\big),\\
		(y_{l},y,\zeta_{l},-\zeta) &\in \mathrm{WF}(\mathcal{G}^{+\ i_lj}_{\varphi}),\\
		(y,z_{k-q+1},\ldots,z_k,\zeta,\xi_{k-q+1},\ldots,\xi_k) & \in \mathrm{WF}\big(g^{(k-q+1)}_{\varphi}[0]_j\big).
	\end{array} \right. 
$$
Suppose by contradiction, as above, that $(z_1,\ldots,z_{k},\xi_1,\ldots,\xi_k)\in \mathrm{WF}\big( \lbrace F,G \rbrace_{\varphi}^{(k)}\big)$ has $(\xi_1,\ldots,\xi_k)\in \overline{V}^+_k(z_1,\ldots,z_k)$ $\big($resp. $(\xi_1,\ldots,\xi_k)\in \overline{V}^-_{k}(z_1,\ldots,z_k)\big)$. Then $\zeta_m$ and $\eta_{m+1}$ are both either lightlike future directed, or lightlike past directed. In the first case, propagation of singularities implies that $\zeta$ is lightlike future directed, contradicting microcausality of $G^{(k-q+1)}_{\varphi}[0]$ (resp. $\zeta$ is lightlike past directed, contradicting the microlocality of $G^{(k-q+1)}_{\varphi}[0]$); in the second case, propagation of singularities implies that $\eta$ is lightlike past directed, contradicting microcausality of $F^{(p+1)}_{\varphi}[0]$ (resp. $\eta$ is lightlike future directed, contradicting the microlocality of $F^{(p+1)}_{\varphi}[0]$). We remark that in the wave front set of $\{F,G\}_{\varphi}^{(k)}[0]$ is the (finite) union under all possible choices of indices for all wave front sets of the form \eqref{eq_1_kernel_k-th_derivative_Paierl_1} or \eqref{eq_1_kernel_k-th_derivative_Paierl_2}, each of which is however microcausal, implying that their finite union will be microcausal as well.
\end{proof}


\begin{theorem}\label{thm_1_jacobi}
The mapping $(F,G) \mapsto \lbrace F,G \rbrace_{\mathcal{L}}$ defines a Lie bracket on $\mathcal{F}_{\mu c}(B,\mathcal{U},g) $, for all $\varphi \in \mathcal{U}$ $CO$-open. 
\end{theorem}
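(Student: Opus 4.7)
The plan is to verify the three axioms of a Lie bracket on $\mathcal{F}_{\mu c}(B,\mathcal{U},g)$. Bilinearity is immediate from Definition \ref{def_1_Peierls}, since $\varphi\mapsto dF_{\varphi}[0]$ is linear in $F$, and the pairing with $\mathcal{G}_{\varphi}$ is bilinear. Antisymmetry follows from Lemma \ref{lemma_1_duality_green}: the retarded and advanced operators $\mathcal{G}^{\pm}_{\varphi}$ are formal adjoints of one another with respect to the pairing \eqref{eq_1_kernel_notation}, so
\[
	\{F,G\}_{\mathcal{L}}(\varphi)= \langle dF_{\varphi}[0], (\mathcal{G}^{+}_{\varphi}-\mathcal{G}^{-}_{\varphi})dG_{\varphi}[0]\rangle= \langle (\mathcal{G}^{-}_{\varphi}-\mathcal{G}^{+}_{\varphi})dF_{\varphi}[0], dG_{\varphi}[0]\rangle=-\{G,F\}_{\mathcal{L}}(\varphi).
\]
Closure of the bracket on $\mathcal{F}_{\mu c}(B,\mathcal{U},g)$ was already established in Theorem~\ref{thm_1_peierls_closedness}, so the iterated brackets appearing below are themselves well-defined microcausal functionals.

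The main content is the Jacobi identity $\{F,\{G,H\}_{\mathcal{L}}\}_{\mathcal{L}}+\{G,\{H,F\}_{\mathcal{L}}\}_{\mathcal{L}}+\{H,\{F,G\}_{\mathcal{L}}\}_{\mathcal{L}}=0$. Fix $\varphi\in\mathcal{U}$. The idea is to differentiate $\{G,H\}_{\mathcal{L}}(\psi)=\langle dG_{\psi}[0],\mathcal{G}_{\psi}dH_{\psi}[0]\rangle$ via the Leibniz rule, obtaining three terms: one involving $dG^{(2)}_{\varphi}$, one involving the ``connecting'' derivative $d\mathcal{G}_{\varphi}$ given by \eqref{eq_1_der_G}, and one involving $dH^{(2)}_{\varphi}$. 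All wave-front composability conditions required for the pairings below hold by exactly the analysis carried out in the proof of Theorem~\ref{thm_1_peierls_closedness}, so we may freely manipulate the integral kernels.

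First I would show that the ``outer'' contributions coming from $G^{(2)}_{\varphi}$ and $H^{(2)}_{\varphi}$ cancel against their counterparts in the two other cyclic terms. Indeed, the $G^{(2)}$-piece of $\{F,\{G,H\}\}$ has integral kernel
\[
	\int G^{(2)}_{\varphi}[0](x,y)\bigl(\mathcal{G}_{\varphi}dF_{\varphi}[0]\bigr)(x)\bigl(\mathcal{G}_{\varphi}dH_{\varphi}[0]\bigr)(y)\,d\mu_g(x,y),
\]
which by the symmetry of the second covariant derivative $G^{(2)}_{\varphi}[0](x,y)=G^{(2)}_{\varphi}[0](y,x)$ and the antisymmetry $\mathcal{G}_{\varphi}^*=-\mathcal{G}_{\varphi}$ is exactly the negative of the analogous $G^{(2)}$-piece appearing in $\{H,\{F,G\}\}$. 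Applying this argument to all three pairs, the six outer contributions cancel in pairs.

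The hard part is the cancellation of the three remaining contributions, each built from $d\mathcal{G}_{\varphi}$. Using the explicit formula
\[
	d\mathcal{G}_{\varphi}(\vec X)=-\mathcal{G}_{\varphi}\circ D^{(1)}_{\varphi}(\vec X)\circ \mathcal{G}^{+}_{\varphi}-\mathcal{G}^{-}_{\varphi}\circ D^{(1)}_{\varphi}(\vec X)\circ \mathcal{G}_{\varphi},
\]
together with the decomposition $\mathcal{G}_{\varphi}=\mathcal{G}^{+}_{\varphi}-\mathcal{G}^{-}_{\varphi}$, each $d\mathcal{G}$ term expands into a signed sum whose kernel is a product of three Green operators (two advanced/retarded and one causal, in every possible order) inserted between three copies of $dF_{\varphi}[0]$, $dG_{\varphi}[0]$, $dH_{\varphi}[0]$ contracted against the totally symmetric third-order kernel $\delta^{(2)}E(\mathcal{L})_{\varphi}[0]\equiv \mathcal{L}^{(3)}_{\varphi}[0]$ coming from $D^{(1)}_{\varphi}$. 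Exploiting that $\mathcal{L}^{(3)}_{\varphi}[0]$ is symmetric under permutation of its three entries, the cyclic sum over $(F,G,H)$ telescopes: summing the six ordered triples with signs produced by the two adjoint relations of Lemma~\ref{lemma_1_duality_green}, one recognises the cyclic cancellation already observed in the scalar case \cite[Cor.~3.3]{acftstructure}, and the total vanishes. The nontrivial input for our bundle-valued setting is precisely Lemma~\ref{lemma_1_duality_green} (providing $(\mathcal{G}^{\pm})^{*}=\mathcal{G}^{\mp}$) together with symmetry of $\mathcal{L}^{(3)}_{\varphi}[0]$, which holds because $\mathcal{L}(f)$ is Bastiani smooth and microlocal, so that its derivatives are symmetric as distributional kernels. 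This completes the proof.
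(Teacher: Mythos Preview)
Your proposal is correct and follows essentially the same approach as the paper: bilinearity and antisymmetry are immediate, closure comes from Theorem~\ref{thm_1_peierls_closedness}, and the Jacobi identity is verified by differentiating $\{G,H\}_{\mathcal{L}}$ via Leibniz, splitting into the ``outer'' contributions with second derivatives $F^{(2)},G^{(2)},H^{(2)}$ and the ``inner'' contributions with $d\mathcal{G}_{\varphi}$, and showing each family cancels cyclically using the symmetry of the second functional derivatives, the antisymmetry of $\mathcal{G}_{\varphi}$, the adjoint relation $(\mathcal{G}^{\pm})^{*}=\mathcal{G}^{\mp}$ from Lemma~\ref{lemma_1_duality_green}, and the total symmetry of $\mathcal{L}^{(3)}_{\varphi}[0]$. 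The paper carries out exactly this argument, but writes out all twelve integral kernels explicitly and exhibits the pairwise cancellations term by term rather than invoking the telescoping structure abstractly.
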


\begin{proof}
Bilinearity and antisymmetry are clear from Definition \ref{def_1_Peierls}, while Theorem \ref{thm_1_peierls_closedness} ensures the closure of the bracket operation. We are thus left with the Jacobi identity:
$$
\lbrace F,\lbrace G,H \rbrace_{\mathcal{L}} \rbrace_{\mathcal{L}}+\lbrace G,\lbrace H,F \rbrace_{\mathcal{L}} \rbrace_{\mathcal{L}}+\lbrace H,\lbrace F,G \rbrace_{\mathcal{L}} \rbrace_{\mathcal{L}}=0.
$$
Using the integral kernel notation as in the above proof, we have 
$$
\begin{aligned}
	& \lbrace F,\lbrace G,H \rbrace_{\mathcal{L}} \rbrace_{\mathcal{L}}(\varphi) = \int_{M^{2}}  f^{(1)}_{\varphi}[0]_i(x)\mathcal{G}_{\varphi}^{ij}(x,y)\lbrace G,H \rbrace^{(1)}_{\mathcal{L} \space\ \varphi}[0]_j(y)d\mu_g(x,y) \\ &=
	\int_{M^{4}}f^{(1)}_{\varphi}[0]_i(x)\mathcal{G}_{\varphi}^{ij}(x,y)
	\Big( g^{(2)}_{\varphi}[0]_{jk}(y,z) \mathcal{G}_{\varphi}^{kl}(z,w) h^{(1)}_{\varphi}[0]_l(w) \\
 &\qquad + g^{(1)}_{\varphi}[0]_k(z) \mathcal{G}^{kl}_{\varphi}(z,w) h^{(2)}_{\varphi}[0]_{jl}(y,w) \Big) d\mu_g(x,y,z,w)   \\ & 
	-\int_{M^{6}} f^{(1)}_{\varphi}[0]_i(x)\mathcal{G}_{\varphi}^{ij}(x,y) \Big(d^{(3)}\mathcal{L}_{\varphi}[0]_{jj_1i_1}(y,y_1,x_1,) \mathcal{G}_{\varphi}^{- \space\ k j_1}(z,y_{1}) g^{(1)}_{\varphi}[0]_{k}(z) \mathcal{G}_{\varphi}^{i_1 l}(x_{1},w)h^{(1)}_{\varphi}[0]_{l}(w) \\ & \ \ \ \ \ \ \ \ \  
	+d^{(3)}\mathcal{L}_{\varphi}[0]_{j j_{1} i_1}(y_1,x_1,y) \mathcal{G}_{\varphi}^{k j_1}(z,y_1) g^{(1)}_{\varphi}[0]_{k}(z) \mathcal{G}^{+ \space\ i_1 l}_{\varphi}(x_1,w)h^{(1)}_{\varphi}[0]_l(w) \Big)d\mu_g(x,y,z,w,x_1,y_1).
\end{aligned}
$$
Summing over cyclic permutations of the first two terms yields 
$$
\begin{aligned}
	\int_{M^{4}} & \Big( \textcolor{blue}{ f^{(1)}_{\varphi}[0]_i(x)\mathcal{G}_{\varphi}^{ij}(x,y) g^{(2)}_{\varphi}[0]_{jk}(y,z) \mathcal{G}_{\varphi}^{kl}(z,w) h^{(1)}_{\varphi}[0]_l(w)}   \\ & 
	 +\textcolor{red}{f^{(1)}_{\varphi}[0]_i(x)\mathcal{G}_{\varphi}^{ij}(x,y)g^{(1)}_{\varphi}[0]_k(z) \mathcal{G}^{kl}_{\varphi}(z,w) h^{(2)}_{\varphi}[0]_{jl}(y,w)}  \\ & 
	 +\textcolor{red}{g^{(1)}_{\varphi}[0]_i(x)\mathcal{G}_{\varphi}^{ij}(x,y) h^{(2)}_{\varphi}[0]_{jk}(y,z) \mathcal{G}_{\varphi}^{kl}(z,w) f^{(1)}_{\varphi}[0]_l(w) } \\ & 
	 +\textcolor{green}{g^{(1)}_{\varphi}[0]_i(x)\mathcal{G}_{\varphi}^{ij}(x,y)h^{(1)}_{\varphi}[0]_k(z) \mathcal{G}^{kl}_{\varphi}(z,w) f^{(2)}{\varphi}[0]_{jl}(y,w)}  \\ &
	 +\textcolor{green}{h^{(1)}_{\varphi}[0]_i(x)\mathcal{G}_{\varphi}^{ij}(x,y) f^{(2)}{\varphi}[0]_{jk}(y,z) \mathcal{G}_{\varphi}^{kl}(z,w) g^{(1)}_{\varphi}[0]_l(w) } \\ & 
	 +\textcolor{blue}{h^{(1)}_{\varphi}[0]_i(x)\mathcal{G}_{\varphi}^{ij}(x,y)f^{(1)}_{\varphi}[0]_k(z) \mathcal{G}^{kl}_{\varphi}(z,w) g^{(2)}_{\varphi}[0]_{jl}(y,w)}\Big)d\mu_g(x,y,z,w)\\
	 &=0,
\end{aligned}
$$
while for the other two,
$$
\begin{aligned}
& \int_{M^{6}}d\mu_g(x,y,z,w,x_1,y_1)\\
	&\Big(\textcolor{blue}{f^{(1)}_{\varphi}[0]_i(x)\mathcal{G}_{\varphi}^{ij}(x,y) d^{(3)}\mathcal{L}_{\varphi}[0]_{jj_1x_1}(y,y_1,i_1,) \mathcal{G}_{\varphi}^{- \ k j_1}(z,y_{1}) g^{(1)}_{\varphi}[0]_{k}(z) \mathcal{G}_{\varphi}^{i_1 l}(x_{1},w)h^{(1)}_{\varphi}[0]_{l}(w)} \\ & 
	+\textcolor{red}{f^{(1)}_{\varphi}[0]_i(x)\mathcal{G}_{\varphi}^{ij}(x,y) d^{(3)}\mathcal{L}_{\varphi}[0]_{j j_{1} i_1}(y,y_1,x_1) \mathcal{G}_{\varphi}^{k j_1}(z,y_1) g^{(1)}_{\varphi}[0]_{k}(z) \mathcal{G}^{+ \space\ i_1 l}_{\varphi}(x_1,w)h^{(1)}_{\varphi}[0]_l(w)} \\ & 
	+\textcolor{red}{g^{(1)}_{\varphi}[0]_i(x)\mathcal{G}_{\varphi}^{ij}(x,y) d^{(3)}\mathcal{L}_{\varphi}[0]_{jj_1i_1}(y,y_1,x_1,) \mathcal{G}_{\varphi}^{- \ k j_1}(z,y_{1}) h^{(1)}_{\varphi}[0]_{k}(z) \mathcal{G}_{\varphi}^{i_1 l}(x_{1},w)f^{(1)}_{\varphi}[0]_{l}(w)} \\ & 
	+\textcolor{green}{g^{(1)}_{\varphi}[0]_i(x)\mathcal{G}_{\varphi}^{ij}(x,y) d^{(3)}\mathcal{L}_{\varphi}[0]_{j j_{1} i_1}(y,y_1,x_1) \mathcal{G}_{\varphi}^{k j_1}(z,y_1) h^{(1)}_{\varphi}[0]_{k}(z) \mathcal{G}^{+ \ i_1 l}_{\varphi}(x_1,w)f^{(1)}_{\varphi}[0]_l(w) }\\ & 
	+\textcolor{green}{h^{(1)}_{\varphi}[0]_i(x)\mathcal{G}_{\varphi}^{ij}(x,y) d^{(3)}\mathcal{L}_{\varphi}[0]_{jj_1i_1}(y,y_1,x_1,) \mathcal{G}_{\varphi}^{- \ k j_1}(z,y_{1}) f^{(1)}_{\varphi}[0]_{k}(z) \mathcal{G}_{\varphi}^{i_1 l}(x_{1},w)g^{(1)}_{\varphi}[0]_{l}(w)} \\ & 
	+\textcolor{blue}{h^{(1)}_{\varphi}[0]_i(x)\mathcal{G}_{\varphi}^{ij}(x,y) d^{(3)}\mathcal{L}_{\varphi}[0]_{j j_{1} i_1}(y,y_1,x_1) \mathcal{G}_{\varphi}^{k j_1}(z,y_1) f^{(1)}_{\varphi}[0]_{k}(z) \mathcal{G}^{+ \ i_1 l}_{\varphi}(x_1,w)g^{(1)}_{\varphi}[0]_l(w)} \Big)\\
	&=0.
\end{aligned}
$$
To make the simplifications we used the antisymmetry of the integral kernel $\mathcal{G}_{\varphi}(x,y)$, the adjoint relation between the propagators $\mathcal{G}^+(x,y)=\mathcal{G}^-(y,x)$ (see Lemma \ref{lemma_1_duality_green}) and $\mathcal{G}_{\varphi}^{ij}=\mathcal{G}_{\varphi}^{ji}$.
\end{proof}

\section{Structure of the space of microcausal functionals}\label{section_properties_of_muc_functionals}

The first point of emphasis is to give a topology to $\mathcal{F}_{\mu c}(B,\mathcal{U},g)$. We shall proceed step by step refining our starting definitions to better grasp the reasoning behind the choice of topology we will be giving $\mathcal{F}_{\mu c}(B,\mathcal{U},g)$.

The simplest guess, as well as the weakest, on $\mathcal{F}_{\mu c}(B,\mathcal{U},g)$ is the locally convex topology that corresponds to the initial topology induced by the mappings
$$
 F \rightarrow F(\varphi) \in \mathbb{R}.
$$
To account for smooth functionals we try the initial topology with respect to mappings
\begin{align*}
	&F \rightarrow F(\varphi) \in \mathbb{R},\\
	&F \rightarrow \nabla^{k}F_{\varphi}[0] \in \Gamma^{-\infty}_c\left(M^k\leftarrow\boxtimes^k \left(\varphi^{*}VB\right)\right).
\end{align*}
This time we are leaving out all information on the wave front set which plays a role in defining microcausal functionals. To remedy we would like to set up the H\"ormander topology on the spaces $\Gamma^{-\infty}_{ \Upsilon_{k,g}}\left(M^k\leftarrow\boxtimes^k \left(\varphi^{*}VB\right)\right)$, however this is not immediately possible since $\Upsilon_{k,g}$ are open cones, and the H\"ormander topology is given to closed ones, therefore we need the following result, whose proof can be found in Lemma 4.1 in \cite{acftstructure},

\begin{lemma}\label{lemma_1_cones}
Given the open cone $\Upsilon_k(g)$ it is always possible to find a sequence of closed cones $\lbrace \mathcal{V}_{m}(k) \subset T^{*}M^k \rbrace_{m \in \mathbb{N}}$ such that $\mathcal{V}_{m}(k) \subset \mathrm{Int}(\mathcal{V}_{m+1}(k) )$ and $\cup_{m \in \mathbb{N}}\mathcal{V}_{m}(k)= \Upsilon_k(g)$ for all $k \geq 1$. 
\end{lemma}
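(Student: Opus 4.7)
The plan is to construct the exhaustion by closed cones using an auxiliary Riemannian structure on $M^k$ together with a compact exhaustion of the base. First I would fix an auxiliary complete Riemannian metric $h$ on $M$ (existing by paracompactness); the product metric induces a norm $|\cdot|_h$ on the fibers of $T^*M^k$ and a Sasaki-type Riemannian distance $d_h$ on the total space. The set $\Upsilon_k(g)$ is a conic open subset of $T^*M^k\setminus 0$: its complement is precisely $\{(x,\xi):(\xi_1,\dots,\xi_k)\in \overline V^+_k(x)\cup \overline V^-_k(x)\}$, which is closed because the causal cones of $g$ depend continuously (lower-semicontinuously for the open cones, upper-semicontinuously for the closed cones) on the base point. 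Because $\Upsilon_k(g)$ is $\mathbb{R}_+$-invariant, it is determined by its trace $\tilde\Upsilon_k\doteq \Upsilon_k(g)\cap S^*M^k$ on the unit cosphere bundle $S^*M^k\doteq\{(x,\xi)\in T^*M^k\setminus 0 : |\xi|_h=1\}$; this trace is open in $S^*M^k$, so its complement $\tilde C_k\doteq S^*M^k\setminus \tilde\Upsilon_k$ is closed.

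Next I would pick an exhaustion $\{K_m\}_{m\in \mathbb{N}}$ of $M^k$ by compact subsets with $K_m\subset \mathrm{Int}(K_{m+1})$ and $\bigcup_m K_m=M^k$, and define
\begin{equation*}
\tilde{\mathcal V}_m\doteq \Big\{(x,\xi)\in S^*M^k : x\in K_m \text{ and } d_h\big((x,\xi),\tilde C_k\big)\geq 1/m\Big\},
\end{equation*}
\begin{equation*}
\mathcal V_m(k)\doteq \big\{(x,\lambda\xi)\in T^*M^k\setminus 0 : (x,\xi)\in \tilde{\mathcal V}_m,\ \lambda>0\big\}.
\end{equation*}
By construction $\mathcal V_m(k)$ is conic; its closedness in $T^*M^k\setminus 0$ follows from the continuity of the normalization map $(x,\xi)\mapsto (x,\xi/|\xi|_h)$ together with the closedness of $K_m$ and of the sublevel condition $d_h(\cdot,\tilde C_k)\geq 1/m$.

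Finally I would verify the two remaining claims. The strict inclusion $\mathcal V_m(k)\subset \mathrm{Int}(\mathcal V_{m+1}(k))$ follows from the two open-inclusion chains $K_m\subset \mathrm{Int}(K_{m+1})$ and $\{d_h(\cdot,\tilde C_k)\geq 1/m\}\subset \{d_h(\cdot,\tilde C_k)>1/(m+1)\}$, so that the conic saturation of the latter intersection produces an open conic neighborhood of $\mathcal V_m(k)$ contained in $\mathcal V_{m+1}(k)$. For the union identity, each $\mathcal V_m(k)\subset \Upsilon_k(g)$ by construction, and conversely, given $(x,\xi)\in \Upsilon_k(g)$, one has $x\in K_{m_0}$ for some $m_0$ and $d_h\big((x,\xi/|\xi|_h),\tilde C_k\big)>0$ because $\tilde\Upsilon_k$ is open; thus $(x,\xi)\in \mathcal V_m(k)$ for all sufficiently large $m$. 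The only delicate point, rather than a genuine obstacle, is that the Riemannian distance $d_h$ could behave poorly at infinity; the compact exhaustion $\{K_m\}$ bypasses this by localising every distance computation to a compact slab $S^*M^k|_{K_{m+1}}$ on which $\tilde C_k$ is compact and $d_h$ is proper.
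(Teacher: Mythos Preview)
Your proof is correct and follows the standard exhaustion-by-closed-cones argument: reduce to the cosphere bundle via an auxiliary Riemannian metric, use a compact exhaustion of the base together with a distance-to-complement condition, then saturate conically. The paper does not actually prove this lemma itself but defers to Lemma~4.1 of \cite{acftstructure}; your construction is precisely the approach one expects there, so there is nothing further to compare.

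One small remark on presentation: the sets $\mathcal V_m(k)$ you build are closed in $\dot T^*M^k$ rather than in $T^*M^k$ (the zero section lies in their closure), but this is exactly the convention in use for H\"ormander-topology cones and is how the lemma is applied immediately afterwards.
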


Then we can write
\begin{align}
	\Gamma^{-\infty}_{c\ \Upsilon_k(g)}\left(M^k\leftarrow\boxtimes_k \varphi^{*}VB\right)  = \underrightarrow{\lim_{m\in \mathbb{N}}} \Gamma^{-\infty}_{c\ \mathcal{V}_{m}(k)}\left(M^k\leftarrow\boxtimes_k \varphi^{*}VB\right).
\end{align}
By construction of the direct limit we have mappings 
$$
	\Gamma^{-\infty}_{c\ \mathcal{V}_{m}(k)}\left(M^k\leftarrow\boxtimes^k \left(\varphi^{*}VB\right)\right)  \rightarrow \Gamma^{-\infty}_{c\ \Upsilon_k(g)}\left(M^k\leftarrow\boxtimes^k \left(\varphi^{*}VB\right)\right)
$$
where the source space has $\mathcal{V}_{m}(k)\subset T^{*}M^k$ as a closed cone, so it can be given the H\"ormander topology. In particular when we are dealing with standard compactly supported distributions its topology can be defined, see the remark after Theorem 18.1.28 in \cite{hormanderIII}, to be the initial topology with respect to the mappings
\begin{align*}
	&F \rightarrow F(\varphi) \in \mathbb{C},\\
	&F \rightarrow P F \in \Gamma^{\infty}_{c}\left(M^k\leftarrow \boxtimes^k \left(\varphi^{*}VB\right)\right)
\end{align*}  
where $\varphi$ is any smooth section of $B$ and $P$ any properly supported pseudo-differential operator of order zero on the vector bundle $\boxtimes^k \left(\varphi^{*}VB\right) \rightarrow M^k$ such that $\mathrm{WF}(P) \cap \mathcal{V}_m(k)=\emptyset$. Using that Definition 18.1.32 \cite{hormanderIII}, Theorem 18.1.16 \cite{hormanderIII} and Theorem 8.2.13 in \cite{hormanderI} can be generalized to the vector bundle case provided we use the notion \eqref{eq_1_def_WF_sections} for the wave front set of vector valued distributions; we can argue as in Corollary 4.1 of \cite{acftstructure} that each $\Gamma^{-\infty}_{c\ \mathcal{V}_{m}(k)}\left(M^k\leftarrow\boxtimes^k\left(\varphi^{*}VB\right)\right)$ becomes a Hausdorff topological space.
By Theorem \ref{thm_1_Gamma_c_TVS} since the base manifold $M$ is separable and the fibers are finite dimensional vector spaces, hence nuclear$\footnote{We say that a Hausdorff locally convex space E is nuclear if given any other locally convex space F we have $E\otimes_{\pi}F \simeq E\otimes_{\epsilon}F$, where the two are the tensor product space respectively endowed with the quotient topology and with the canonical topology associated to the space of continuous bilinear mappings $:E'_{\sigma}\times F'_{\sigma}\rightarrow \mathbb{R}$ equipped with the topology of uniform convergence on products of equicontinuous subsets of $E'$ and $F'$. For more detail see either \cite{nlcs} or \cite{treves2016topological}.}$ and Fr\'echet, we find that $\Gamma^{\infty}_{c}\left(M^k\leftarrow \boxtimes^k \left(\varphi^{*}VB\right)\right) $ is a nuclear limit-Fréchet space, it is Hausdorff, thus each
$$
	\Gamma^{-\infty}_{c\ \mathcal{V}_{m}(k)}\left(M^k\leftarrow\boxtimes^k \left(\varphi^{*}VB\right)\right)
$$ 
is nuclear as well. Finally by Porposition 50.1 pp. 514 in \cite{treves2016topological} the direct limit topology on 
$$
    \Gamma^{-\infty}_{c\ \Upsilon_{k,g}} \left(M^k\leftarrow\boxtimes^k(\varphi^{*}VB)\right)
$$ 
is nuclear for all $k$ (and also Hausdorff). We have therefore proved:

\begin{lemma}\label{lemma_1_mucaus_top_1}
The direct limit topology on $\Gamma^{-\infty}_{c\ \Upsilon_{k,g}}\left(M^k\leftarrow\boxtimes^k \left(\varphi^{*}VB\right)\right)$ induced as a direct limit topology of the spaces $\Gamma^{-\infty}_{c\ \mathcal{V}_{m}(k)}\left(M^k\leftarrow\boxtimes^k \left(\varphi^{*}VB\right)\right)$ with the H\"ormander topology is a Hausdorff nuclear space.
\end{lemma}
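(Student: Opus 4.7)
The plan follows closely the outline sketched in the paragraph preceding the statement. First, by Lemma~\ref{lemma_1_cones}, I would fix an increasing sequence of closed cones $\{\mathcal{V}_m(k)\}_{m\in\mathbb{N}}$ with $\mathcal{V}_m(k)\subset \mathrm{Int}(\mathcal{V}_{m+1}(k))$ and $\bigcup_m \mathcal{V}_m(k) = \Upsilon_k(g)$. This writes the space as a countable locally convex inductive limit
\[
\Gamma^{-\infty}_{c\,\Upsilon_k(g)}\big(M^k\leftarrow\boxtimes^k \varphi^*VB\big)
=\varinjlim_{m}\ \Gamma^{-\infty}_{c\,\mathcal{V}_m(k)}\big(M^k\leftarrow\boxtimes^k \varphi^*VB\big),
\]
so it is enough to establish Hausdorffness and nuclearity at each level $m$ and then transfer them to the limit.

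Next, for each fixed $m$ I would equip $\Gamma^{-\infty}_{c\,\mathcal{V}_m(k)}$ with the H\"ormander topology, realized as the initial topology with respect to the evaluation maps $u\mapsto u(\sigma)$ for test sections $\sigma$ and the smoothing maps $u\mapsto Pu$, where $P$ ranges over properly supported order-zero pseudodifferential operators on $\boxtimes^k \varphi^*VB\to M^k$ with $\mathrm{WF}(P)\cap \mathcal{V}_m(k)=\emptyset$. Hausdorffness follows by the argument of Corollary~4.1 in \cite{acftstructure}: if $u\neq 0$ then some pairing $u(\sigma)$ is nonzero and separates $u$ from the origin inside the Hausdorff target $\mathbb{R}$. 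For nuclearity, the target spaces appearing in the initial topology are $\mathbb{R}$ (trivially nuclear) and $\Gamma^\infty_c\big(M^k\leftarrow\boxtimes^k\varphi^*VB\big)$. The latter is a nuclear LF-space by Theorem~\ref{thm_1_Gamma_c_TVS}, since $M$ is second countable and the bundle fibers are finite dimensional. Nuclearity is preserved by initial topologies with respect to continuous linear maps into nuclear Hausdorff spaces, so each finite step $\Gamma^{-\infty}_{c\,\mathcal{V}_m(k)}$ is nuclear and Hausdorff.

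Finally, to pass to the inductive limit, I would invoke Proposition~50.1 of \cite{treves2016topological}, which guarantees that a countable locally convex direct limit of nuclear spaces is nuclear. Hausdorffness of the limit then follows from the fact that the strict inclusions $\mathcal{V}_m(k)\subset\mathrm{Int}(\mathcal{V}_{m+1}(k))$ make the transition maps continuous embeddings with closed image, so the system is (equivalent to) a strict countable inductive limit of Hausdorff locally convex spaces, whose limit is itself Hausdorff.

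The main obstacle I anticipate is not in assembling the abstract ingredients but in verifying rigorously that the individual levels $\Gamma^{-\infty}_{c\,\mathcal{V}_m(k)}$ enjoy the nuclear Hausdorff property when the coefficients are sections of a non-trivial vector bundle rather than scalar distributions. This requires propagating the scalar results of Chapter~18 of H\"ormander through the vector bundle setting, using the componentwise definition~\eqref{eq_1_def_WF_sections} of the wave front set and checking that the analogues of Definition~18.1.32, Theorem~18.1.16 and Theorem~8.2.13 hold for pseudodifferential operators acting on $\boxtimes^k\varphi^*VB$; once this is granted, the remainder of the argument is a clean application of inductive limit formalism.
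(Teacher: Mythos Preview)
Your proposal is correct and follows essentially the same approach as the paper: the argument there is given in the paragraphs immediately preceding the lemma rather than in a separate proof environment, and it proceeds exactly as you outline---fix the sequence of closed cones via Lemma~\ref{lemma_1_cones}, equip each step with the initial topology through evaluations and the maps $u\mapsto Pu$, invoke Corollary~4.1 of \cite{acftstructure} for Hausdorffness at each level, use Theorem~\ref{thm_1_Gamma_c_TVS} to get nuclearity of $\Gamma^\infty_c$, and then pass to the direct limit by Proposition~50.1 of \cite{treves2016topological}. Your identification of the vector-bundle generalization of H\"ormander's results as the main technical point also matches the paper's remark that Definition~18.1.32, Theorem~18.1.16 and Theorem~8.2.13 carry over via \eqref{eq_1_def_WF_sections}.
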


Finally we can induce on $\mathcal{F}_{\mu c}(B,\mathcal{U},g)$ a topology by 

\begin{theorem}\label{thm_1_mucaus_top}
Given the set $\mathcal{F}_{\mu c}(B,\mathcal{U},g)$, consider the mappings 
\begin{equation}\label{eq_1_point_seminorm}
	\mathcal{F}_{\mu c}(B,\mathcal{U},g) \ni F \mapsto F(\varphi) \in \mathbb{R},
\end{equation}
\begin{equation}\label{eq_1_hormander_seminorm}
	\mathcal{F}_{\mu c}(B,\mathcal{U},g) \ni F  \mapsto \nabla^{k}F_{\varphi}[0] \in \Gamma^{-\infty}_{c\ \Upsilon_{k,g}}\left(M^k\leftarrow\boxtimes^k \left(\varphi^{*}VB\right)\right),
\end{equation}
and the related initial topology on $\mathcal{F}_{\mu c}(B,\mathcal{U},g)$. Then $\mathcal{F}_{\mu c}(B,\mathcal{U},g)$ is a nuclear locally convex topological space with a Poisson *-algebra with respect to the Peierls bracket of some microlocal generalized Lagrangian $\mathcal{L}$.
\end{theorem}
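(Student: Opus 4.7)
The plan has two components, corresponding to the topological claim and the algebraic claim.

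First, to see that the initial topology defined by \eqref{eq_1_point_seminorm} and \eqref{eq_1_hormander_seminorm} makes $\mathcal{F}_{\mu c}(B,\mathcal{U},g)$ into a nuclear locally convex space, I would note that by construction it is the coarsest locally convex topology rendering all the listed maps continuous, so it is automatically locally convex. Hausdorffness follows from the separating character of the point evaluations \eqref{eq_1_point_seminorm}: two distinct microcausal functionals differ at some $\varphi_0 \in \mathcal{U}$, hence cannot belong to a common neighbourhood of the form $\{F : |F(\varphi_0)|<\epsilon\}$ after a suitable translation. For nuclearity I would invoke the standard fact that an initial topology with respect to a family of linear maps into nuclear locally convex spaces is itself nuclear (\cite[Proposition 50.1]{treves2016topological}); the target $\mathbb{R}$ of \eqref{eq_1_point_seminorm} is trivially nuclear, and by Lemma \ref{lemma_1_mucaus_top_1} each target space $\Gamma^{-\infty}_{c\,\Upsilon_{k}(g)}\big(M^k\leftarrow\boxtimes^k\varphi^{*}VB\big)$ of \eqref{eq_1_hormander_seminorm} is nuclear.

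Second, for the Poisson *-algebra structure I would verify closure of $\mathcal{F}_{\mu c}(B,\mathcal{U},g)$ under each operation. Closure under sum and scalar multiplication is immediate from linearity of the covariant derivative, while closure under involution is trivial as we work with real functionals. For the pointwise product \eqref{eq_1_classical_algebra_product}, I would expand $\nabla^k(F\cdot G)_{\varphi}[0]$ via the Leibniz rule as a finite combination of symmetrized tensor products $\nabla^p F_{\varphi}[0]\otimes \nabla^q G_{\varphi}[0]$ with $p+q=k$, and use Lemma \ref{lemma_8.2.10} to bound
\[
\mathrm{WF}\big(\nabla^p F_{\varphi}[0]\otimes \nabla^q G_{\varphi}[0]\big) \subset \big(\mathrm{WF}(\nabla^p F_{\varphi}[0])\cup\{0\}\big)\times\big(\mathrm{WF}(\nabla^q G_{\varphi}[0])\cup\{0\}\big).
\]
Since a point of $\Upsilon_k(g)$ would require all $k$ covectors to be simultaneously future- or past-directed causal, and since $F,G$ are microcausal, neither factor can contribute such a configuration, so the tensor product avoids $\Upsilon_k(g)$; in particular $F\cdot G \in \mathcal{F}_{\mu c}(B,\mathcal{U},g)$. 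Closure under the Peierls bracket and the Jacobi identity are exactly the content of Theorems \ref{thm_1_peierls_closedness} and \ref{thm_1_jacobi}, bilinearity and antisymmetry follow directly from Definition \ref{def_1_Peierls}, and the Leibniz compatibility $\{F,G\cdot H\}_{\mathcal{L}} = \{F,G\}_{\mathcal{L}}\cdot H + G\cdot\{F,H\}_{\mathcal{L}}$ is immediate from the kernel formula \eqref{eq_1_Peierls_kernel} together with $(G\cdot H)^{(1)}_{\varphi}[0] = G(\varphi)\,H^{(1)}_{\varphi}[0] + H(\varphi)\,G^{(1)}_{\varphi}[0]$.

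The step I expect to be most delicate is the wave front set estimate for $\nabla^k(F\cdot G)_{\varphi}[0]$, because the covariant (rather than plain Bastiani) derivative mixes derivatives of lower order with kernels coming from the Christoffel form, as in \eqref{eq_1_cov_der}. To handle this I would argue by induction on $k$ exactly as in the proof of Lemma \ref{lemma_1_cov_WF_equivalence}: the connection kernel $\Gamma[\varphi]$ has wave front set contained in $\{(x,x,x;\xi,\eta,\zeta): \xi+\eta+\zeta=0\}$, so its composition with a microcausal kernel can only relocate singularities along the diagonal and cannot force all surviving covectors into a common $\overline{V}^{\pm}$. Combined with Hörmander's composition theorem this ensures that microcausality is stable under the Christoffel corrections in \eqref{eq_1_cov_der}, and the Leibniz expansion of $\nabla^k(F\cdot G)$ inherits the property from its tensor-product building blocks, completing the closure argument.
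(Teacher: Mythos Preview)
Your proposal is correct and follows essentially the same route as the paper: nuclearity via stability under projective limits (Proposition 50.1 in \cite{treves2016topological}) together with Lemma \ref{lemma_1_mucaus_top_1}, and the Poisson *-algebra structure via Theorems \ref{thm_1_peierls_closedness}, \ref{thm_1_jacobi} plus closure under the pointwise product. The one place where the paper is slightly more economical is the product closure: rather than expanding $\nabla^k(F\cdot G)$ and then arguing inductively that the Christoffel corrections preserve microcausality, the paper works directly with the Bastiani derivatives $d^k(F\cdot G)$, for which the Leibniz expansion into tensor products $d^lF_\varphi[0]\otimes d^{k-l}G_\varphi[0]$ is immediate, and then applies the tensor-product wave front bound (Theorem 8.2.9 in \cite{hormanderI}). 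Since Lemma \ref{lemma_1_cov_WF_equivalence} already establishes that microcausality can be tested equivalently with $d^k$ or $\nabla^k$, this sidesteps the induction you flag as delicate; your argument via that same lemma is valid but does a little more work than necessary.
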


\begin{proof}
The nuclearity follows from the stability of nuclear spaces under projective limit topology, see Proposition 50.1 pp. 514 in \cite{treves2016topological}, so using the nuclearity of both $\Gamma^{-\infty}_{c\ \Upsilon_{k,g}}\left(M^k\leftarrow\boxtimes^k \left(\varphi^{*}VB\right)\right)$ (via Lemma \ref{lemma_1_mucaus_top_1}) and $\mathbb{R}$ (trivially) we have our claim. The Peierls bracket is well defined by Theorem \ref{thm_1_peierls_closedness} and satisfies the Jacobi identity due to Theorem \ref{thm_1_jacobi}, so we only have to show the Leibniz rule for the bracket, that is 
$$
	\lbrace F,GH \rbrace_{\mathcal{L}}= G\lbrace F,H \rbrace_{\mathcal{L}} + \lbrace F,G \rbrace_{\mathcal{L}}H.
$$
However this follows once we show that the product $F,G \mapsto F\cdot G$ with $(F \cdot G) (\varphi)= F(\varphi)G(\varphi)$ is closed in $\mathcal{F}_{\mu c}(B,\mathcal{U},g)$ and then use $d(F\cdot G)_{\varphi}[0]= dF_{\varphi}[0]G(\varphi)+ F(\varphi)dG_{\varphi}[0]$. The latter is a consequence of the definition of derivation (\textit{i.e.} the standard Leibniz rule), so we are left with showing the former: we compute 
$$
	d^k(F\cdot G)_{\varphi}[0](\vec{X}_1, \ldots,\vec{X}_k)= \sum_{\sigma \in \mathcal{P}(1,\ldots ,k)}\sum_{l=0}^k d^lF_{\varphi}[0](\vec{X}_{\sigma(1)}, \ldots,\vec{X}_{\sigma(l)}) d^{k-l}G_{\varphi}[0](\vec{X}_{\sigma(k-l+1)}, \ldots,\vec{X}_{\sigma(k)}),
$$
where $\mathcal{P}(1,\ldots ,k)$ is the set of permutations of $\lbrace 1,\ldots,k \rbrace$. For each of those terms using Theorem 8.2.9 in \cite{hormanderI} we have 
$$
\begin{aligned}
	\mathrm{WF}(F^{(l)}_{\varphi}[0]G^{(k-l)}_{\varphi}[0]) & \subset  \mathrm{WF}(F^{(l)}_{\varphi}[0]) \times \mathrm{WF}(G^{(k-l)}_{\varphi}[0]) \\ 
	&\quad  \bigcup \mathrm{WF}(G^{(k-l)}_{\varphi}[0]) \times \left( \mathrm{supp}(G^{(k-l)}_{\varphi}[0])\times \lbrace {0} \rbrace \right) \\ 
	&\quad \bigcup\left( \mathrm{supp}(F^{(l)}_{\varphi}[0])\times \lbrace {0} \rbrace \right)  \times  \mathrm{WF}(G^{(k-l)}_{\varphi}[0]),
\end{aligned}
$$
and therefore microcausality is met.
\end{proof}

Note that closed linear subspaces of $\mathcal{F}_{\mu c}(B,\mathcal{U},g)$ are nuclear as well (see Proposition 50.1 in \cite{treves2016topological}), so $\mathcal{F}_{\mu loc}(B,\mathcal{U},g)$ is a Hausdorff nuclear space. The space $\mathcal{F}_{\mu c}(B,\mathcal{U},g)$ can be given a structure of a $C^{\infty}$-ring, more precisely

\begin{proposition}\label{prop_1_C-infty_ring}
If $F_1,\ldots,F_n$ $\in \mathcal{F}_{\mu c}(B,\mathcal{U},g)$ and $\psi \in V \subset \mathbb{R}^n \rightarrow \mathbb{R}$ is smooth, then $\psi (F_1,\ldots,F_n) \in \mathcal{F}_{\mu c}(B,\mathcal{U},g)$ and 
$$
	\mathrm{supp}(\psi (F_1,\ldots,F_n)) \subset \bigcup_{i=1}^n \mathrm{supp}(F_i). 
$$
\end{proposition}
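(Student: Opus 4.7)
I would proceed in three stages: well-definedness and smoothness, the support estimate, and finally the microcausality of all covariant derivatives. For the first stage, set $\widetilde{\mathcal U}\doteq\{\varphi\in\mathcal U:(F_1(\varphi),\ldots,F_n(\varphi))\in V\}$, which is open in $\mathcal U$ by continuity of each $F_i$. On $\widetilde{\mathcal U}$ define $G\doteq\psi\circ(F_1,\ldots,F_n)$. Since each $F_i:\mathcal U\to\mathbb{R}$ is Bastiani smooth and $\psi:V\to\mathbb{R}$ is smooth in the usual finite-dimensional sense, the chain rule for Bastiani calculus gives Bastiani smoothness of $G$, with derivatives computed by the Faà di Bruno formula
\[
d^k G_\varphi[0](\vec X_1,\ldots,\vec X_k)=\sum_{\pi\in\mathscr P(\{1,\ldots,k\})}\sum_{i_1,\ldots,i_{|\pi|}=1}^{n}\bigl(\partial_{i_1}\cdots\partial_{i_{|\pi|}}\psi\bigr)(F_1(\varphi),\ldots,F_n(\varphi))\,\prod_{j=1}^{|\pi|}d^{|I_j|}F_{i_j,\varphi}[0]\bigl(\vec X_{I_j}\bigr).
\]
An analogous, slightly lengthier, recursion holds for the covariant derivatives $\nabla^k G_\varphi[0]$ after using \eqref{eq_1_cov_der} to rewrite $d^k$ in terms of $\nabla^k$ and iterated Christoffel contractions of lower-order covariant derivatives of the $F_i$.

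For the support claim I would invoke Definition~\ref{def_1_func_support} directly: if $x\notin\bigcup_i\mathrm{supp}(F_i)$, pick an open neighborhood $V_x\ni x$ which is disjoint from every $\mathrm{supp}(F_i)$. Then for every $\varphi\in\widetilde{\mathcal U}$ and every $\vec X_\varphi\in\Gamma^\infty_c(M\leftarrow\varphi^*VB)$ with $\mathrm{supp}(\vec X_\varphi)\subset V_x$ one has $F_{i,\varphi}(\vec X_\varphi)=F_{i,\varphi}(0)$ for each $i$, so $G_\varphi(\vec X_\varphi)=\psi(F_1(\varphi),\ldots,F_n(\varphi))=G_\varphi(0)$. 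This gives $\mathrm{supp}(G)\subset\bigcup_i\mathrm{supp}(F_i)$, and the right-hand side is compact as a finite union of compact sets, so $G\in\mathcal F_c(B,\widetilde{\mathcal U})$.

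The heart of the proof is the wave-front analysis of $\nabla^k G_\varphi[0]$. Each term in the Faà di Bruno expansion has an integral kernel of the form
\[
c_\pi(\varphi)\,\bigotimes_{j=1}^{|\pi|}\nabla^{|I_j|}f_{i_j,\varphi}[0](x_{I_j}),
\]
where $c_\pi(\varphi)\in\mathbb{R}$ is a smooth scalar and the tensor product is taken over the disjoint blocks $I_j$ of a partition $\pi$ of $\{1,\ldots,k\}$. Iterating Lemma~\ref{lemma_8.2.10} on this external tensor product yields
\[
\mathrm{WF}\Bigl(\bigotimes_{j}\nabla^{|I_j|}f_{i_j,\varphi}[0]\Bigr)\subseteq\prod_{j=1}^{|\pi|}\Bigl(\mathrm{WF}\bigl(\nabla^{|I_j|}f_{i_j,\varphi}[0]\bigr)\cup\bigl(\mathrm{supp}(\nabla^{|I_j|}f_{i_j,\varphi}[0])\times\{0\}\bigr)\Bigr)\setminus\{\text{zero section}\}.
\]
Assume now, aiming at a contradiction to microcausality, that some $(x_1,\ldots,x_k;\xi_1,\ldots,\xi_k)$ in $\mathrm{WF}(\nabla^k G_\varphi[0])$ has $(\xi_1,\ldots,\xi_k)\in\overline V^+_k(x_1,\ldots,x_k)$ (the $\overline V^-_k$ case is symmetric). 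Since the wave-front condition excludes the zero section, at least one block $I_j$ carries a nonzero sub-tuple $\xi_{I_j}$, and by the product estimate above this sub-tuple belongs to $\mathrm{WF}(\nabla^{|I_j|}f_{i_j,\varphi}[0])$. But $\xi_{I_j}$ is a subtuple of a future-directed tuple, hence $\xi_{I_j}\in\overline V^+_{|I_j|}(x_{I_j})$, contradicting the microcausality of $F_{i_j}$ (via Lemma~\ref{lemma_1_cov_WF_equivalence}, which ensures equivalence with the corresponding condition on $\nabla^{|I_j|}F_{i_j}$).

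The main obstacle, and the only delicate point in an otherwise bookkeeping-heavy argument, is the wave-front step: one must verify that the tensor-product decomposition is actually genuine (variables of different blocks are distinct, so there is no contribution from coinciding-point singularities across blocks) and carefully check that the subtuple $\xi_{I_j}$ inherits membership in $\overline V^+_{|I_j|}$ from the full tuple. Once this is in place, summing the finitely many Faà di Bruno contributions preserves the wave-front bound since $\mathrm{WF}$ is subadditive under sums, and Theorem~\ref{thm_1_mucaus_top} then gives $G\in\mathcal F_{\mu c}(B,\widetilde{\mathcal U},g)$.
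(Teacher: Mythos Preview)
Your proposal is correct and follows essentially the same route as the paper: the support inclusion via Definition~\ref{def_1_func_support}, the Fa\`a di Bruno expansion of $d^kG_\varphi[0]$, and the tensor-product wave-front estimate (the paper invokes H\"ormander's Theorem~8.2.9 rather than Lemma~\ref{lemma_8.2.10}, but these are the same bound) followed by the observation that a tuple in $\overline V^\pm_k$ forces some block factor to violate microcausality. Your extra care about the subtuple inheriting the cone and about the blocks having disjoint variables is well placed; the detour through $\nabla^k$ is unnecessary since the definition of microcausality is stated for $d^k$, but Lemma~\ref{lemma_1_cov_WF_equivalence} makes the two equivalent anyway.
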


\begin{proof}
First we check the support properties. Suppose that $x \notin \cup_{i=1}^n \mathrm{supp}(F_i)$, we can find an open neighborhood $V$ of $x$ for which given any $\varphi \in \mathcal{U}$ and any $\vec{X} \in \Gamma^{\infty}_c(M\leftarrow \varphi^*VB)$ having $\mathrm{supp}(\vec{X}) \subset V$ implies $(F_i\circ u_{\varphi})(t\vec{X} )=(F_i\circ u_{\varphi})(0)$ for all $t$ in a suitable neighborhood of $0\in \mathbb{R}$. Then $ \psi\big((F_1\circ u_{\varphi})(t\vec{X}),\ldots,(F_n\circ u_{\varphi})(t\vec{X})\big)= \psi\big((F_1\circ u_{\varphi})(0),\ldots,(F_n\circ u_{\varphi})(0)\big)$ as well giving $x\notin \mathrm{supp}(\psi \circ (F_1,\ldots,F_n))$. We immediately see that the composition is Bastiani smooth, so consider its $k$th derivative expressed via Faà di Bruno's formula:
$$\small{
\begin{aligned}
	& d^k\psi(F_1,\ldots,F_1)_{\varphi}[0](\vec{X}_1,\ldots,\vec{X}_k)\\ & =  \sum_{\substack{(J_1,\ldots,J_n)\\ \in \mathcal{P}(1,\ldots,k)}} 
	\frac{\partial^k \psi (F_1(\varphi),\ldots,F_n(\varphi))}{\partial z^{J_1+\ldots+J_n}}\left( F^{(\vert J_1 \vert)}_{1 \space\ \varphi}[0] (\vec{X}_{j_{1,1}} ,\ldots ,\vec{X}_{j_{\vert J_1 \vert,1}})\cdot  \ldots \cdot    F^{(\vert J_n \vert)}_{n \space\ \varphi}[0](\vec{X}_{j_{1,n}},\ldots,\vec{X}_{j_{\vert J_n \vert,n}}) \right)  \\ 
\end{aligned}}
$$
where $ \mathcal{P}(1,\ldots,k)$ denotes the set of partitions of $\{1,\ldots , k\}$ and $z\in \mathbb R^n$. Since $\psi$ is smooth the only contribution to the wavefront set of the composition is the product of functional derivative in the above sum, for which Theorem 8.2.9 in \cite{hormanderI} gives 
$$
\begin{aligned}
	\mathrm{WF} &\left( F_1^{(\vert J_1 \vert)},\ldots, F_n^{(\vert J_n \vert)}\right)\\
  \subset & \quad  \mathrm{WF}\left( F_{1 \space\ \varphi}^{(\vert J_1 \vert)}\right) \times \ldots \times \mathrm{WF}\left(F_{n \space\ \varphi}^{(\vert J_n \vert)}\right)   \\ 
	&\quad\bigcup \mathrm{supp}\left(F_{1 \space\ \varphi}^{(\vert J_1 \vert)}\right)\times \lbrace \vec{0} \rbrace^{\vert J_1\vert} \times  \mathrm{WF}\left(F_{2 \space\ \varphi}^{(\vert J_2 \vert)}\right) \times \ldots \times \mathrm{WF}\left(F_{n \space\ \varphi}^{(\vert J_n \vert)}\right) \\ 
	&\quad \ldots \\ 
	&\quad\bigcup \mathrm{WF}\left(F_{1 \space\ \varphi}^{(\vert J_1 \vert)}\right) \times \ldots \times \mathrm{WF}\left(F_{n-1 \space\ \varphi}^{(\vert J_{n-1} \vert)}\right) \times \mathrm{supp}\left(F_{n \space\ \varphi}^{(\vert J_n \vert)}\right)\times \lbrace \vec{0} \rbrace^{\vert J_n\vert} \\
	&\quad \ldots \\ 
	& \quad\bigcup\mathrm{supp}\left(F_{1 \space\ \varphi}^{(\vert J_1 \vert)}\right)\times \lbrace \vec{0} \rbrace^{\vert J_1\vert} \times \ldots \times \mathrm{supp}\left(F_{n-1 \space\ \varphi}^{(\vert J_{n-1} \vert)}\right)\times \lbrace \vec{0} \rbrace^{\vert J_{n-1}\vert} \times \mathrm{WF}\left(F_{n \space\ \varphi}^{(\vert J_n \vert)}\right).
\end{aligned}
$$
We clearly have that if an element of $\mathrm{\mathrm{WF}} \left( F_1^{(\vert J_1 \vert)},\ldots, F_n^{(\vert J_n \vert)}\right)$ was contained in either $\overline{V}^k_{+,g}$ or $\overline{V}^k_{-,g}$ then at least one of the initial functional cannot be microcausal.
\end{proof}

Going through the same calculation for the proof of Proposition \ref{prop_1_C-infty_ring} we get the expression for the Peierls bracket of this composition:
\begin{equation}
\lbrace \psi(F_1,\ldots,F_n),G \rbrace_{\mathcal{L}}=\sum_{j=1}^n\left( \frac{\partial \psi}{\partial z^j}(F_1,\ldots,F_n)\lbrace F_j,G \rbrace_{\mathcal{L}} \right).
\end{equation}

With the topology of Theorem \ref{thm_1_mucaus_top} the space of microcausal functionals lacks sequential continuity. Consider as an example the simpler case where $B=M\times \mathbb{R}$, then choose $F:\varphi \in \mathcal{U} \mapsto \int_M \varphi(x)\omega$ for some smooth compactly supported $m$-form $\omega$ over $M$, also let $\lbrace f_n\rbrace$ be a sequence of smooth functions $:\mathbb{R}\rightarrow [0,1]$ supported in $[-2,2]$ and converging pointwise to the characteristic function of $[-1,1]$, $\chi_{[-1,1]}$, then sequences of derivatives of $f_n$ all converge punctually to the zero function on $\mathbb{R}$. If we define $F_n(\varphi)=f_n \circ F(\varphi)$, then pointwise $F_n(\varphi)\rightarrow \chi_{[-1,1]} \circ F(\varphi)$ but 
$$
	F^{(k)}_{n \space\ \varphi}[0](\psi_1,\ldots,\psi_k)=f^{(k)}_n(F(\varphi))\int_M \psi_1(x)\omega(x) \ldots \int_M\psi_k(x) \omega(x)
$$
converges pointwise to the zero functional in the microcausal topology, however $\chi_{[-1,1]} \circ F(\cdot)$ is not even continuous, and definitely not microcausal. Thus the completion of this topology might be somewhat uncontrolled.\\
\begin{remark*}
    
To define a topology that is both nuclear and has a well behaving sequential completion, we induce, as in \cite{acftstructure}, the \textit{strong convenient} topology. We start with the following observations: 
\begin{itemize}
    \item let $\mathcal{U}$ be a $CO$-open subset of $\Gamma^{\infty}(M\leftarrow B)$, then the smooth curves $\gamma:I\subseteq \mathbb{R} \to \mathcal{U}$ are precisely the conveniently smooth curves from $I$ to $\mathcal{U}$;
    \item by construction each $\mathcal{U}$ can be decomposed as the union of CO-open subsets $\sqcup_{\varphi\in \mathcal{U}} \mathcal{U}\cap\mathcal{V}_{\varphi} $, each of which is topologically isomorphic to an open subset of $\Gamma^{\infty}_c(M\leftarrow \varphi^*VB)$ with the LF topology;
    \item combining Corollary \ref{coro_1_WO^0-curves} we see that each smooth curve $\gamma:I\to \mathcal{U}$ is just valued in some $\mathcal{V}_{\varphi}$ for some $\varphi\in \mathcal{U}$. 
    \item if $\varphi_0\in \mathcal{U}$ is fixed, we can can consider the ultralocal chart representation of any functional $F\in \mathcal{F}_c(\mathcal{U},B)$ as $F_{\varphi_0}:\mathcal{U}_0\equiv u_{\varphi_0}(\mathcal{U}\cap \mathcal{U}_{\varphi_0})\to \mathbb{R}$. Using the fact that the functional has compact support we can extend it to $\widetilde{F}_{\varphi_0}=F_{\varphi_0}\circ i_{\chi}:\widetilde{\mathcal{U}}_{0}\subset \Gamma^{\infty}(M\leftarrow \varphi_0^*VB) \to \mathbb{R}$ where $\chi \in C^{\infty}_c(M)$ has $\chi|_{\mathrm{supp(F)}}\equiv 1$, $i_{\chi}(\vec X )=\chi \vec X$ and $\widetilde{\mathcal{U}}_{0}=i_{\chi}^{-1}(\mathcal{U}_0)$ is $CO$-open.
    \item if we equip $\Gamma^{\infty}(M\leftarrow \varphi_0^*VB)$ with the $CO$-open topology, then it becomes a Fréchet space {(the compact-open topology is equivalent to the topology of uniform convergence on compact subsets)}, thus, on $\widetilde{\mathcal{U}}_{0}$, the notions of Bastiani smoothness and convenient smoothness do coincide by Proposition \ref{prop_A_prop_of_c_infty_top}.
\end{itemize}

Next we note that, by \eqref{eq_A_char_smooth_conv_functions}, that we can write
$$
    C^{\infty}(\widetilde{\mathcal{U}}_{0},\mathbb{R})=\lim_{\substack{\longleftarrow \\ \widetilde{\gamma}\in C^{\infty}(\mathbb{R}, \widetilde{\mathcal{U}}_0)}}C^{\infty}(\mathbb{R},\mathbb{R})=\bigg\{ \{\widetilde{F}\}_{\widetilde{\gamma}} \in \prod_{\widetilde{\gamma}\in C^{\infty}(\mathbb{R},\widetilde{\mathcal{U}}_0)} C^{\infty}(\mathbb{R},\mathbb{R}) : \widetilde{F}_{\widetilde{\gamma}}\circ \kappa =\widetilde{F}_{\widetilde{\gamma}\circ \kappa}\bigg\}
$$
where the inverse limit is taken with respect to the pre-order $\gamma\leq \gamma'$ if and only if there is $\kappa \in C^{\infty}(\mathbb{R},\mathbb{R})$ with $\gamma=\gamma'\circ \kappa$. Moreover we induce on $C^{\infty}(\widetilde{\mathcal{U}}_0,\mathbb{R})$ the initial topology from the Fréchet space topology on $C^{\infty}(\mathbb{R},\mathbb{R})$ through the pullbacks $\gamma^{*}$. This is a nuclear and sequentially complete topology (see the discussion in remark 4.3 pp. 55 on \cite{acftstructure}); finally, since $\mathcal{F}_c(B,\mathcal{U}_0)$ is closed in $C^{\infty}(\widetilde{\mathcal{U}}_0,\mathbb{R})$, nuclearity and completeness are inherited in the quotient topology. This space is even a locally convex topological vector space with the seminorms
\begin{equation}\label{eq_1_strong_conv_seminorms_bundle}
    \sup_{\substack{t\in[a,b]\subset \mathbb{R}\\ \gamma\in C^{\infty}(\mathbb{R},\mathcal{U}_0)\\(\Vec{X}_1,\ldots,\Vec{X}_k)\in \mathcal{B}\subset \boxtimes^k \Gamma^{\infty}(M\leftarrow \varphi^*VB)}} \big\vert \nabla^{(k)}F[\gamma(t)](\Vec{X}_1,\ldots,\Vec{X}_k)\big\vert,
\end{equation}
where $\mathcal{B}\subset \boxtimes^k \Gamma^{\infty}(M\leftarrow \varphi^*VB)$ is a closed, bounded subset. Notice that due to $F$ having compact support, then we can evaluate $ \nabla^{(k)}F[\gamma(t)](\Vec{X}_1,\ldots,\Vec{X}_k)$ by taking a properly chosen cutoff $\chi\in C^{\infty}_c(M)$ and calculating $\nabla^{(k)}F[\gamma(t)](\chi\Vec{X}_1,\ldots,\chi\Vec{X}_k)$.

We can then induce a topology $\tau_{\mathrm{sc}}$, which we call the \textit{strong convenient} topology, in $\mathcal{F}_{\mu c}(B,\mathcal{U},g)$ with \eqref{eq_1_strong_conv_seminorms_bundle} in place of \eqref{eq_1_point_seminorm}, together with the seminorms \eqref{eq_1_hormander_seminorm}. Then $\tau_{\mathrm{sc}}$ will enjoy the following properties:

\begin{itemize}
    \item[(a)] it remains a nuclear locally convex space topology;
    \item[(b)] will have a well controlled and nuclear\footnote{See Proposition 5.3.1 in \cite{nlcs}.} completion, \textit{i.e.} its completion amounts to the topology induced from the completion of the spaces $\Gamma^{-\infty}_{c\ \Upsilon_{k,g}}(M\leftarrow \varphi^*VB)$;
    \item[(c)] the Poisson *-algebra and $C^\infty$-ring operations are continuous and remain such when passing to the completion described in (b), thanks to the result of \cite{brouder2014continuity}.
\end{itemize}

\end{remark*}
Before the next result, let us recall some notions from \cite{kriegl1997convenient}. A topological space $(X,\tau)$ is Lindel\"of if given any open cover of $X$ there is a countable open subcover, it is separable if it admits a countable dense subset, and it is second countable if it admits a countable basis for the topology.\\
Let $X$ be a Hausdorff locally convex topological space, possibly infinite dimensional, and take $S\subset C(X,\mathbb{R})$, a subalgebra. We say that $X$ is $S$-\textit{normal} if $\forall A_0,A_1$ closed disjoint subsets of $X$ there is some $f \in S$ such that $f|_{A_i}=i$, while we say it is $S$-\textit{regular} if for any neighborhood $U$ of a point $x$ there exists a function $f\in S $ such that $f(x)=1$ and $\mathrm{supp}(f)\subset U$. A $S$-\textit{partition of unity} is a family $\lbrace \psi_j \rbrace_{j \in J}$ of mappings $S\ni \psi_j:X \rightarrow \mathbb{R}$ with
\begin{itemize}
\item[$(i)$] $\psi_j(x) \geq 0$ for all $j\in J$ and $x\in X$;
\item[$(ii)$] the set $\lbrace \mathrm{supp}(\psi_j): j\in J \rbrace$ is a locally finite covering of $X$,
\item[$(iii)$] $\sum_{j \in J} \psi_j(x)=1$ for all $x \in X$. 
\end{itemize}
When $X$ admits such partition we say it is $S$-\textit{paracompact}.

\begin{proposition}\label{prop_1_mu_caus_top_prop}
The following facts hold true:
\begin{itemize}
\item[$(i)$] Given any $\mathcal{U}\subset \Gamma^{\infty}(M\leftarrow B)$ $CO$-open and any $\varphi_0 \in \mathcal{U}$ there is some $F \in \mathcal{F}_{\mu c}(B,\mathcal{U},g)$ such that $F(\varphi_0)=1$, $0 \leq F|_{\mathcal{U}}\leq 1$ and $F|_{\Gamma^{\infty}(M\leftarrow B) \backslash \mathcal{U}_{\varphi_0}}=0$, \textit{i.e.} $\mathcal{U}$ is $ \mathcal{F}_{\mu caus}(B,\varphi_{0},g)$-regular.
\item[$(ii)$] Any $\mathcal{U}\subset \Gamma^{\infty}(M\leftarrow B)$ $CO$-open admits locally finite partitions of unity belonging to $\mathcal{F}_{\mu c}(B,\mathcal{U},g)$.
\item[$(iii)$] Given any $\mathcal{U}\subset \Gamma^{\infty}(M\leftarrow B)$ $CO$-open, the algebra $\mathcal{F}_{\mu c}(B,\mathcal{U},g)$ separates the points of $\mathcal{U}$, that is if $\varphi_1\neq \varphi_2 $ there is a microcausal functional $F$ that has $F(\varphi_1)\neq F(\varphi_2)$. 
\end{itemize}
\end{proposition}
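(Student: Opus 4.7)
The three claims are proved independently. For $(i)$, I would construct a microcausal bump function concentrated near $\varphi_0$ by working inside the ultralocal chart $(\mathcal{U}_{\varphi_0}, u_{\varphi_0})$. Fix a Riemannian metric $h$ on the fibres of $B\to M$ and a non-negative cutoff $\chi \in C^{\infty}_c(M)$ with $\chi \not\equiv 0$, and set
\begin{equation*}
G(\psi) \doteq \int_M \chi(x)\, h_{\varphi_0(x)}\!\big(u_{\varphi_0}(\psi)(x), u_{\varphi_0}(\psi)(x)\big)\, d\mu_g(x), \qquad \psi \in \mathcal{U}_{\varphi_0},
\end{equation*}
which is a regular (hence microcausal) non-negative functional on $\mathcal{U}_{\varphi_0}$ with $G(\varphi_0)=0$ and spacetime support contained in $\mathrm{supp}(\chi)$. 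Pick $\rho : \mathbb{R} \to [0,1]$ smooth with $\rho(0)=1$ and $\rho|_{[1,\infty)}=0$, and put $F = \rho \circ G$. The decisive observation is that the equivalence relation underlying Definition~\ref{def_1_refined_whitney_top} partitions $\Gamma^{\infty}(M\leftarrow B)$ into the chart domains $\mathcal{U}_{\varphi}$, so each $\mathcal{U}_{\varphi_0}$ is clopen in the refined Whitney topology and any ultralocal chart $\mathcal{U}_{\varphi}$ either lies entirely in $\mathcal{U}_{\varphi_0}$ or is disjoint from it. Extending $F$ by zero to $\mathcal{U}\setminus \mathcal{U}_{\varphi_0}$ therefore preserves Bastiani smoothness chart-by-chart, and Proposition~\ref{prop_1_C-infty_ring} gives $F \in \mathcal{F}_{\mu c}(B,\mathcal{U},g)$ with $F(\varphi_0)=1$, $0 \le F \le 1$, and $F$ vanishing outside $\mathcal{U}_{\varphi_0}$.

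For $(ii)$, the strategy is to combine the $\mathcal{F}_{\mu c}$-regularity just established with paracompactness of $\mathcal{U}$, exactly as in the scalar case of \cite{acftstructure}. Paracompactness is inherited because each equivalence class $\mathcal{U}_{\varphi}\cap \mathcal{U}$ is homeomorphic via $u_{\varphi}$ to an open subset of the Lindel\"of nuclear space $\Gamma^{\infty}_c(M\leftarrow \varphi^{*}VB)$ of Theorem~\ref{thm_1_Gamma_c_TVS}, and the distinct classes are disjoint. Given a $CO$-open cover $\{\mathcal{V}_i\}$ of $\mathcal{U}$ I would extract a locally finite refinement $\{\mathcal{W}_j\}$ together with a shrinking $\overline{\mathcal{W}'_j}\subset \mathcal{W}_j$, and for each $j$ assemble a microcausal cutoff $F_j \geq 0$ obtained from $(i)$ with $F_j \equiv 1$ on $\mathcal{W}'_j$ and $\mathrm{supp}(F_j)\subset \mathcal{W}_j$. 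The locally finite sum $G \doteq \sum_j F_j$ is strictly positive and microcausal; applying Proposition~\ref{prop_1_C-infty_ring} to $1/t$ on $(0,\infty)$ yields $\rho_j \doteq F_j/G \in \mathcal{F}_{\mu c}(B,\mathcal{U},g)$, providing the required partition of unity.

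For $(iii)$, I would separate two cases according to whether $\varphi_1 \sim \varphi_2$. If $\varphi_1 \not\sim \varphi_2$ then the bump function from $(i)$ centred at $\varphi_1$ already separates them, for it evaluates to $1$ at $\varphi_1$ and to $0$ at $\varphi_2 \notin \mathcal{U}_{\varphi_1}$. If instead $\varphi_1 \sim \varphi_2$ with $\varphi_1(x_0) \neq \varphi_2(x_0)$ at some $x_0 \in M$, Hausdorffness of $B$ supplies $\alpha \in C^{\infty}(B)$ with $\alpha(\varphi_1(x_0)) \neq \alpha(\varphi_2(x_0))$; choose $\chi \in C^{\infty}_c(M)$ non-negative, concentrated around $x_0$, with $\chi(x_0)>0$, and define
\begin{equation*}
F(\psi) = \int_M \chi(y)\, \alpha(\psi(y))\, d\mu_g(y).
\end{equation*}
By the discussion following Definition~\ref{def_1_gen_lag} this is a generalised Lagrangian of order zero, hence microlocal and microcausal, with $\mathrm{supp}(F) \subset \mathrm{supp}(\chi)$; by continuity of $\alpha\circ \varphi_i$ near $x_0$, shrinking $\mathrm{supp}(\chi)$ sufficiently ensures $F(\varphi_1) \neq F(\varphi_2)$. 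The most delicate point throughout is verifying that the compactness of the spacetime support together with the clopen-ness of chart domains in the refined Whitney topology really does allow the extension-by-zero step in $(i)$ to produce a globally defined, globally smooth microcausal observable; once this is secured, both $(ii)$ and $(iii)$ reduce to standard paracompactness and bump-function arguments.
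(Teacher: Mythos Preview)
Your approach is correct and close in spirit to the paper's, with a few differences worth noting.

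For $(i)$ the paper does exactly what you do---build a scalar functional $G$ in the chart with $G(\varphi_0)=0$, compose with a one-variable cutoff, and extend by zero---except that the paper's $G$ is \emph{linear}, $G_\omega(\varphi)=\int_M \omega(u_{\varphi_0}(\varphi))$ for some $\omega\in\Gamma^\infty_c(M\leftarrow \varphi_0^*VB'\otimes\Lambda_m(M))$, whereas yours is quadratic. One terminological slip: your $G$ is not \emph{regular} in the sense of Definition~\ref{def_1_func_classes}(i), since its second derivative kernel is $2\chi(x)h_{ij}(\varphi_0(x))\delta(x,y)$, which is not smooth on $M^2$. It is, however, microlocal (an integral of a zero-order Lagrangian), hence microcausal by Proposition~\ref{prop_1_muloc_into_mucaus}; so the argument survives once you correct the word. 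Also, be careful that the chart domain $\mathcal{U}_{\varphi_0}$ may be strictly smaller than the equivalence class $\mathcal{V}_{\varphi_0}=\{\psi:\psi\sim\varphi_0\}$ when $\exp$ is not a global diffeomorphism; it is the classes $\mathcal{V}_{\varphi_0}$, not the charts, that partition $\Gamma^\infty(M\leftarrow B)$ into clopen pieces.

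For $(ii)$ the paper takes a shortcut: it verifies that each $\mathcal{V}_{\varphi_0}$ is Lindel\"of (via Theorem~\ref{thm_1_Gamma_c_TVS}) and $\mathcal{F}_{\mu c}$-regular (via $(i)$), then invokes \cite[Theorem~16.10]{kriegl1997convenient} to conclude $\mathcal{F}_{\mu c}$-paracompactness. Your explicit refine/shrink/normalize construction is equally valid, but note that $(i)$ as stated only gives a bump equal to $1$ at a \emph{point}, not $\equiv 1$ on a set $\mathcal{W}'_j$; you need an extra (routine) step to upgrade this. For $(iii)$ the paper is more economical than your case split: it simply reuses the bump $F$ from $(i)$ centred at $\varphi_1$, observing that if $\varphi_2\in\mathcal{U}_{\varphi_1}$ one can choose the cutoff threshold $\epsilon^2$ below $G(\varphi_2)$ (choosing $\omega$, or in your quadratic version $\chi$, so that $G(\varphi_2)\neq 0$), while if $\varphi_2\notin\mathcal{U}_{\varphi_1}$ any $\epsilon$ works. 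Your order-zero Lagrangian construction for the case $\varphi_1\sim\varphi_2$ is a perfectly good alternative.
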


\begin{proof}
To show $(i)$ take the chart $(\mathcal{U}_{\varphi_0},u_{\varphi_0})$ and consider the open subset $\mathcal{U} \cap \mathcal{U}_{\varphi_0}$, fix some compact $K \subset M$ and some $\omega \in \Gamma^{\infty}_c(M\leftarrow \varphi^{*}VB'\otimes \Lambda_m(M))$ with $\mathrm{supp}(\omega) \subset K$, then we can define a functional 
$$
	G_{\omega}:\mathcal{U}\cap \mathcal{U}_{\varphi_0} \ni \varphi \mapsto G_{\omega}(\varphi)=\int_M \omega(u_{\varphi_{0}}(\varphi)). 
$$
Denote now by $G$ the functional with $G(\varphi)\doteq G_{\omega}\circ u_{\varphi_0}^{-1}(u_{\varphi_0}(\varphi))$. By construction $G(\varphi_0)=0$. Let now $\mathcal{W}=\lbrace \varphi \in \mathcal{U}\cap \mathcal{U}_{0} : G(\varphi)<\epsilon^2 \rbrace$ for some constant $\epsilon$, then if $\chi:\mathbb{R} \rightarrow \mathbb{R}$ is a smooth function supported in $[-1,1]$ with $0 \leq \chi|_{[-1,1]} \leq 1$ and $\chi|_{[-1/2,1/2]} \equiv 1$, consider the new functional $F=\chi \circ (\frac{1}{\epsilon^2}G)$, since $G$ is microlocal (and thus microcausal by Proposition \ref{prop_1_muloc_into_mucaus}) and $\chi$ is smooth, by Propositions \ref{prop_1_C-infty_ring} $F$ is microcausal. Outside $\mathcal{W}$, $F$ is identically zero so we can smoothly extend it to zero over the rest of $\mathcal{U}$ to a new functional which we denote always by $F$ that has the required properties. We first show that $(ii)$ holds for $U_{\varphi}$. Using the chart $(U_{\varphi},u_{\varphi})$, we can identify $U_{\varphi}$ with an open subset of $\Gamma^{\infty}_c(M\leftarrow \varphi^{*}VB)$. If we show that $U_{\varphi}$ is Lindel\"of and is $ \mathcal{F}_{\mu c}(B,\varphi_{0},g)$-regular, then we can conclude via Theorem 16.10 pp. 171 of \cite{kriegl1997convenient}. $ \mathcal{F}_{\mu c}(B,\mathcal{U},g)$-regularity was point $(i)$ while the Lindel\"of property follows from Theorem \ref{thm_1_Gamma_c_TVS}. Now we observe that any $\mathcal{U}$ can be obtained as the disjoint union of subsets $\mathcal{V}_{\varphi_0} \doteq \lbrace \psi \in \mathcal{U}: \mathrm{supp}_{\varphi_0}(\psi) \ \mathrm{is} \ \mathrm{compact}\rbrace$. Each of this is Lindel\"of and metrizable by Theorem \ref{thm_1_Gamma_c_TVS}, so given the open cover $\lbrace \mathcal{U}_{\varphi}\rbrace_{\varphi \in \mathcal{V}_{\varphi_0}}$, we can extract a locally finite subcover where each elements admits a partition of unity and then construct a partition of unity for the whole $\mathcal{V}_{\varphi_0}$. The fact that $\mathcal{U}= \sqcup \mathcal{V}_{\varphi_0}$ implies that the final partition of unity is the union of all others. Finally for $(iii)$ just take $\mathcal{U}_{\varphi_1}$, $\mathcal{U}_{\varphi_2}$ and $F$ as in $(i)$ constructed as follows: if $\varphi_2 \in \mathcal{U}_1$ we choose $\epsilon< G(\varphi_2)$ for which $F(\varphi_1)\neq F(\varphi_2)$, if not then any $\epsilon>0$ does the job.
\end{proof}

\begin{definition}\label{def_1_dyn_ideal}
Let $\mathcal{U} \subset \Gamma^{\infty}(M\leftarrow B)$ be $CO$-open and $\mathcal{L}$ a generalized microlocal Lagrangian. We define the on-shell ideal associated to $\mathcal{L}$ as the subspace $\mathcal{I}_{\mathcal{L}}(B,\mathcal{U},g) \subset \mathcal{F}_{\mu c}(B,\mathcal{U},g)$ whose microcausal functionals are of the form 
\begin{equation}\label{eq_1_ideal_gen}
	F(\varphi)=\vec{X}_{\varphi}\left( E(\mathcal{L})_{\varphi}[0]\right)
\end{equation}
with $X:\mathcal{U}  \rightarrow T\mathcal{U}:\varphi\mapsto (\varphi,\vec{X}_{\varphi})$ is a smooth vector field.
\end{definition}

With our usual integral kernel notation we can also write \eqref{eq_1_ideal_gen} as
\begin{equation}\label{eq_1_ideal_gen kernel}
	F(\varphi)=\int_{M}\vec{X}_{\varphi}^i(x) E(\mathcal{L})_{\varphi}[0]_i(x)d\mu_g(x).
\end{equation}
We stress that functionals of the form \eqref{eq_1_ideal_gen} are those which can be seen as the derivation of the Euler-Lagrange derivative by kinematical vector fields over $\mathcal{U}\subset \Gamma^{\infty}(M\leftarrow B)$.

\begin{proposition}\label{prop_1_Poisson_ideal}
$\mathcal{I}_{\mathcal{L}}(B,\mathcal{U},g) $ is a Poisson $*$-ideal of $ \mathcal{F}_{\mu c}(B,\mathcal{U},g)$.
\end{proposition}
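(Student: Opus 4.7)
The plan is to verify separately the three defining properties of a Poisson $*$-ideal: closure under involution, closure under the pointwise product with arbitrary microcausal functionals, and closure under the Peierls bracket with arbitrary microcausal functionals. The first is trivial in the setting of real functionals declared in Section~\ref{section_observables}, since $F^{*}=F$; in the complex setting it would follow from the observation that $E(\mathcal{L})_{\varphi}[0]$ is a real-valued section, so $\overline{F}$ has the same form as $F$ with $\vec{X}$ replaced by its complex conjugate, which is still a smooth vector field.

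For the product ideal property, given $F(\varphi)=\vec{X}_{\varphi}\big(E(\mathcal{L})_{\varphi}[0]\big)\in \mathcal{I}_{\mathcal{L}}(B,\mathcal{U},g)$ and $G\in \mathcal{F}_{\mu c}(B,\mathcal{U},g)$, I would simply define the new vector field $\vec{Y}_{\varphi}\doteq G(\varphi)\vec{X}_{\varphi}$, whose smoothness as a section of the tangent bundle $T\mathcal{U}$ follows from the smoothness of $G$ (as a Bastiani smooth mapping $\mathcal{U}\to\mathbb{R}$) and of the vector field $X$, with the compactness of $\mathrm{supp}(\vec{Y}_{\varphi})$ being inherited from that of $\vec{X}_{\varphi}$. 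Then $(F\cdot G)(\varphi)=G(\varphi)\vec{X}_{\varphi}\big(E(\mathcal{L})_{\varphi}[0]\big)=\vec{Y}_{\varphi}\big(E(\mathcal{L})_{\varphi}[0]\big)$, so $F\cdot G\in \mathcal{I}_{\mathcal{L}}$; microcausality of the product is guaranteed by the $C^{\infty}$-ring property of Proposition~\ref{prop_1_C-infty_ring}.

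For the Poisson ideal property, I would compute $dF_{\varphi}[0]$ explicitly by differentiating the defining formula $F(\varphi)=\int_{M}\vec{X}_{\varphi}^{i}(x)\,E(\mathcal{L})_{\varphi}[0]_{i}(x)\,d\mu_{g}(x)$ along a direction $\vec{Y}$. This yields two contributions: one that differentiates the vector field $X$ (giving $\vec{X}'_{\varphi}(\vec{Y})$, the covariant derivative of the section $X$), and one in which the derivative falls on $E(\mathcal{L})_{\varphi}[0]$, producing the linearized Euler--Lagrange operator $\delta^{(1)}E(\mathcal{L})_{\varphi}[0]$ applied to $\vec{Y}$. Substituting $\vec{Y}=\mathcal{G}_{\varphi}dG_{\varphi}[0]$ into the Peierls bracket formula of Definition~\ref{def_1_Peierls} and invoking the key identity $\delta^{(1)}E(\mathcal{L})_{\varphi}[0]\circ \mathcal{G}_{\varphi}=0$ (which follows because $\mathcal{G}_{\varphi}=\mathcal{G}_{\varphi}^{+}-\mathcal{G}_{\varphi}^{-}$ and, after composing with the metric musical isomorphisms in \eqref{eq_1_Gvar_+-}, reduces to $D_{\varphi}\circ G_{\varphi}=0$ for both Green operators), the second term drops out, leaving
\begin{equation*}
    \{F,G\}_{\mathcal{L}}(\varphi)=\vec{Z}_{\varphi}\big(E(\mathcal{L})_{\varphi}[0]\big),\qquad \vec{Z}_{\varphi}\doteq \vec{X}'_{\varphi}\big(\mathcal{G}_{\varphi}\,dG_{\varphi}[0]\big).
\end{equation*}
The remaining work is to argue that $\vec{Z}$ is genuinely a smooth vector field on $\mathcal{U}$ in the sense of the infinite dimensional manifold structure: smoothness of $X$ and of the composite mapping $\varphi\mapsto \mathcal{G}_{\varphi}dG_{\varphi}[0]$ (the latter via Lemma~\ref{lemma_1_deriv_green} together with the Bastiani smoothness of $G$) combine by the chain rule to give smoothness of $\vec{Z}$, while the support of $\vec{Z}_{\varphi}$ is controlled by that of $\vec{X}'_{\varphi}$; microcausality of the resulting functional is supplied automatically by Theorem~\ref{thm_1_peierls_closedness}.

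The main obstacle I expect is not conceptual but notational: making the differentiation $\vec{X}'_{\varphi}(\vec{Y})$ of the vector field precise on the manifold $\mathcal{U}$, so that the identification of $\{F,G\}_{\mathcal{L}}$ as an element of $\mathcal{I}_{\mathcal{L}}$ is chart-independent. This is handled by working with the connection $\Gamma_{\varphi}$ introduced in \eqref{eq_1_conn} and interpreting $\vec{X}'_{\varphi}$ as the vertical part of $TX$, so that the correction terms depending on $\Gamma_{\varphi}$ land in a section of $\varphi^{*}VB$ and the whole expression $\vec{Z}_{\varphi}(E(\mathcal{L})_{\varphi}[0])$ is a well-defined pairing independent of the ultralocal chart chosen.
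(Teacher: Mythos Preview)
Your proposal is correct and follows essentially the same argument as the paper: the involution and product-ideal steps are handled identically, and for the Peierls bracket you differentiate $F$ by the Leibniz rule, insert $\vec{Y}=\mathcal{G}_{\varphi}dG_{\varphi}[0]$, kill the second term via $\delta^{(1)}E(\mathcal{L})_{\varphi}[0]\circ \mathcal{G}_{\varphi}=0$, and identify the surviving term as $\vec{Z}_{\varphi}(E(\mathcal{L})_{\varphi}[0])$ with $\vec{Z}_{\varphi}=\vec{X}^{(1)}_{\varphi}[0](\mathcal{G}_{\varphi}dG_{\varphi}[0])$. Your discussion of chart-independence via the connection $\Gamma_{\varphi}$ is in fact more careful than the paper, which simply carries out the computation in an ultralocal chart without further comment.
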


\begin{proof}
Clearly is $F \in \mathcal{I}_{\mathcal{L}}(B,\mathcal{U},g)$ then also $\overline{F}$ is. If $G \in  \mathcal{F}_{\mu c}(B,\mathcal{U},g)$, $G\cdot F(\varphi)=G(\varphi)X(\varphi)\left( E(\mathcal{L})_{\varphi}[0]\right)$ but then $X'= G\cdot X \in \mathfrak{X}(T\mathcal{U})$ as well, then $G \cdot F$ is in the ideal and is associated to the new vector field $X'$. Finally we have to show that if $F \in \mathcal{I}_{\mathcal{L}}(B,\mathcal{U},g)$ then also $\lbrace F,G \rbrace_{\mathcal{L}}\in \mathcal{I}_{\mathcal{L}}(B,\mathcal{U},g)$. Fix $\varphi \in \mathcal{U}$, $\vec{Y}_{\varphi}\in T_{\varphi}\mathcal{U}$; by the chain rule
$$
	dF_{\varphi}[0](\vec{Y}_{\varphi})= \int_M \left[ \vec{X}^{(1)}_{\varphi}[0]^i(\vec{Y}_{\varphi})\left(E(\mathcal{L})_{\varphi}[0]_i\right) +\vec{X}_{\varphi}^i \left(E^{(1)}(\mathcal{L})_{\varphi}[0]_i(\vec{Y}_{\varphi})\right)\right]d\mu_g
$$
and 
\begin{align*}
	 \lbrace F,G \rbrace_{\mathcal{L}}(\varphi)&= \left\langle dF_{\varphi}[0],\mathcal{G}_{\varphi} dG_{\varphi}[0] \right\rangle\\ &= 
	\int_M\left[ \vec{X}^{(1)}_{\varphi}[0]^{i}_{j}\left(\mathcal{G}_{\varphi}^{jk} g^{(1)}_{\varphi}[0]_k \right)\left(E(\mathcal{L})_{\varphi}[0]_i \right)+\vec{X}_{\varphi}^i \left( E^{(1)}(\mathcal{L})_{\varphi}[0]_{ij}\left(\mathcal{G}_{\varphi}^{jk} g^{(1)}_{\varphi}[0]_k\right)\right)\right]d\mu_g \\ &= 
	\int_M \left\{ \left[ \vec{X}^{(1)}_{\varphi}[0]^{i}_{j}\left(\mathcal{G}_{\varphi}^{jk} g^{(1)}_{\varphi}[0]_k \right) \right]\left(E(\mathcal{L})_{\varphi}[0]_i\right)\right\}d\mu_g
\end{align*}
where we used that $\mathcal{G}_{\varphi}$ associates to its argument a solution of the linearized equations. Defining $\varphi\mapsto \vec{Z}_{\varphi}=\vec{X}^{(1)}_{\varphi}[0]^{i}_j\left(\mathcal{G}_{\varphi}^{jk} g^{(1)}_{\varphi}[0]_k \right)\partial_i \in \Gamma^{\infty}_c(M \leftarrow \varphi^*VB)$ yields a smooth mapping (by smoothness of $X$, the functional $G$ and the propagator $\mathcal{G}_{\varphi}$) defining the desired vector field.
\end{proof}

\begin{definition}\label{def_1_on_shell_ideal}
Let $\mathcal{U} \subset \Gamma^{\infty}(M\leftarrow B)$ be $CO$-open and $\mathcal{L}$ a generalized microlocal Lagrangian. We define the on-shell algebra on $\mathcal{U}$ associated to $\mathcal{L}$ as the quotient
\begin{equation}
	  \mathcal{F}_{\mathcal{L}}(B,\mathcal{U},g) \doteq \mathcal{F}_{\mu c}(B,\mathcal{U},g)/ \mathcal{I}_{\mathcal{L}}(B,\mathcal{U},g).
\end{equation}
\end{definition}

This accounts for the algebra of observable once the condition $E(\mathcal{L})_{\varphi}[0]=0$ has been imposed on $\mathcal{U}$.

\section{Examples}\label{section_examples}

\subsection{Wave maps}

Finally we introduce, as an example of physical theory, wave maps. The configuration bundle, is $C=M\times N$, where $M$ is an $m$ dimensional Lorentzian manifold and $N$ an $n$ dimensional manifold equipped with a Riemannian metric $h$. The space of sections is canonically isomorphic to $C^{\infty}(M,N)$, the latter possess a differentiable structure induced by the atlas $\big(\mathcal{U}_{\varphi}, u_{\varphi}, \Gamma^{\infty}_c(M\leftarrow \varphi^{*}TN) \big) $, where $u_{\varphi}$ has the exact same form \eqref{eq_1_trivial_gamma_local_chart}, with the only difference being that the sections are $N$-valued mappings, thus $\exp$ can be taken as the exponential function induced by a Riemannian metric $h$ on $N$. The generalized Lagrangian for wave maps is 
\begin{equation}\label{eq_1_lag_wave_maps}
	\mathcal{L}_{\mathrm{WM}}(f)(\varphi)=\frac{1}{2}\int_M f(x)  \mathrm{Trace}(g^{-1} \circ (\varphi^*h))(x)d\mu_g(x);
\end{equation}
obtained by integration of the standard geometric Lagrangian $\lambda= \frac{1}{2}g^{\mu\nu}h_{ij}(\varphi)\varphi^i_{\mu}\varphi^j_{\nu}d\mu_g$ smeared with a test function $f\in C^{\infty}_c(M)$. Computing the first functional derivative, as per \eqref{eq_1_euler_der}, we get the associated E-L equations, which written in jets coordinates reads
\begin{equation}\label{eq_1_wavemaps_EL_eq}
	h_{ij}g^{\mu \nu} \left( \varphi^i_{\mu \nu} +\lbrace h\rbrace^{i}_{kl} \varphi^k_{\mu}\varphi^l_{\nu}-\lbrace g\rbrace^{\lambda}_{\mu \nu}\varphi^i_{\lambda} \right)=0,
\end{equation}
where we denoted by $\lbrace h\rbrace$, $\lbrace g\rbrace$ the coefficients of the linear connection associated to $h$ and $g$ respectively. Computation of the second derivative of \eqref{eq_1_lag_wave_maps} yields
\begin{equation}
\begin{aligned}
	\delta^{(1)}E(\mathcal{L}_{WM})_{\varphi}[0] &: \Gamma^{\infty}_c(M\leftarrow \varphi^{*}TN) \times \Gamma^{\infty}_c(M\leftarrow \varphi^{*}TN ) \rightarrow \mathbb{R} \\ (\vec{X},\vec{Y}) \mapsto & \int_M\frac{1}{2}\left[g^{\mu\nu}(x)h_{ij}(\varphi(x)) \nabla_{\mu}{X}^i(x) \nabla_{\nu}{Y}^j(x)+A^{\mu}_{ij}(\varphi(x))\left(\nabla_{\mu}{X}^i(x){Y}^j(x)+\nabla_{\mu}{Y}^i(x){X}^j(x)\right)\right.\\
	& \space\ \space\ \space\ \space\ \space\ + \left. B_{ij}(\varphi(x)){X}^i(x){Y}^j(x) \right]d\mu_g(x).
\end{aligned}
\end{equation}
Where we choose $f\equiv 1$ in a neighborhood of $\mathrm{supp}(\vec{X})\cup \mathrm{supp}(\vec{X})$ as done before. One can show that the coefficients $A^{\mu}_{ij}$ always vanish and
\begin{equation*}
    \begin{split}
    &\delta^{(1)}E(\mathcal{L}_{WM})_{\varphi}[0] (\vec{X},\vec{Y})\\
    &\qquad =\int_M \frac{1}{2} \big( g^{\mu\nu}(x)h_{ij}(\varphi(x))\nabla_{\mu} {X}^i(x) \nabla_{\nu}Y^j(x)+ R^k_{ilj}(\varphi(x))p^{\alpha}_k(\varphi(x))\varphi^l_{\alpha}(x){X}^i(x)Y^j(x)\big)d\mu_g(x)
\end{split}
\end{equation*}
where $R$ are the components of the Riemann tensor of the Riemannian metric $h$, and $p^{\alpha}_k\doteq \frac{\partial \lambda}{\partial y^{k}_{\alpha}}$ is the conjugate momenta of the Lagrangian $\lambda$. It is therefore evident that the induced differential operator $D_{\varphi}$ can be expressed locally as
\begin{equation}
	D_{\varphi}(\vec{X})(x) =\big(g^{\mu\nu}(x)h_{ij}(\varphi(x))\nabla_{\mu \nu} \vec{X}^i(x) + R^k_{ilj}(\varphi(x))p^{\alpha}_k(\varphi(x))\varphi^l_{\alpha}(x){X}^i(x)\big) dy^j\big|_{\varphi(x)}\ .
\end{equation}
Its principal symbol is clearly
$$
	\sigma_2(D_{\varphi})= \frac{1}{2} g^{\mu\nu}\frac{\partial}{\partial x^{\nu}}\frac{\partial}{\partial x^{\mu}}\otimes id_{\varphi^{*}TN}. 
$$
Theorem \ref{thm_1_properties_of_Green_functions} then ensures the existence of the advanced and retarded propagators for Wave Maps $\mathcal{G}^{\pm}_{WM}[\varphi]$. Their difference defines the causal propagator and consequently the Peierls bracket as in Definition \ref{def_1_Peierls}. The results of Sections 2.3 and 2.4 do apply to wave maps: it is therefore possible to obtain a $*$-Poisson algebra generated by microcausal functionals $\mathcal{F}_{\mu{c}}(M\times N,g)$ which enjoys all the properties collected throughout Section 2.4.

\subsection{Scalar field theories}\label{subsection_scalar_field_theories}

We finish this chapter by translating some of the results obtained above in case where $B=M\times \mathbb{R}$. We will use those results in the following chapters. We notice at first that the manifold stricture of $C^{\infty}(M,\mathbb{R})$ is generated by charts $\{(\mathcal{U}_{\varphi},u_{\varphi},C^{\infty}_c(M))\}$ where 
\begin{equation}
    \begin{aligned}
        & \mathcal{U}_{\varphi}=\varphi + C^{\infty}_c(M);\\
        & u_{\varphi}(\psi)= \psi-\varphi.
    \end{aligned}
\end{equation}
Notice that the standard Euclidean metric on $\mathbb{R}$ has a globally defined exponential $\exp:T\mathbb{R} \to \mathbb{R}\times \mathbb{R}$, then $\mathcal{U}_{\varphi}$ becomes the largest possible subset isomorphic to a copy of the modelling vector space. This \textit{enlargement} of the chart open set yields a ultralocal chart independent result for the characterization of microlocal functionals in Proposition \ref{porop_1_muloc_charachterization}, therefore if $F:C^{\infty}(M,\mathbb R)$ is microlocal and $F^{(1)}$ is locally bornological, then 
\begin{equation}\label{eq_mulocal_representation_scalar_case}
    F(\varphi)=F(\varphi_0)= \int_M j^r_x\varphi^*\theta_{F,\varphi_0}
\end{equation}
for all $\varphi$ having $\varphi-\varphi_0\in C^{\infty}_c(M)$. We also remark that $C^{\infty}(M,\mathbb R)$ can be seen as a disjoint union as
$$
    C^{\infty}(M,\mathbb R)=\bigsqcup_{\varphi\in C^{\infty}(M,\mathbb R)/\sim} \varphi + C^{\infty}_c(M) 
$$
where $\sim$ is the equivalence relation defined by $\varphi_1\sim \varphi_2$ if and only if their difference $\varphi_1-\varphi_2$ is compactly supported. In this sense, \eqref{eq_mulocal_representation_scalar_case} is a ultralocal chart independent object. Finally, due to Proposition \ref{prop_1_variational_sqn}, we also get that the
$$
    F(\varphi)=\int_M j^{r}_x\varphi^*\theta_F
$$
for a globally defined $\theta\in \Omega^n(M)$.\\

In the sequel the dynamics will be that given by the Klein-Gordon Lagrangian:
\begin{equation}\label{eq_KG_lagrangian}
	\mathcal{L}_f:C^{\infty}(M,\mathbb R)\ni \varphi\mapsto \int_M f(x)\big( g^{\mu\nu}\nabla_{\mu}\varphi \nabla_{\nu}\varphi +m^2\varphi^2+\kappa R(g)\varphi^2\big)(x) d\mu_g(x).
\end{equation}
The latter generates the Klein-Gordon equations
\begin{equation}\label{eq_KG}
	P(\varphi)\equiv d\mathcal{L}_1[\varphi](x)=\big( g^{\mu\nu}\nabla_{\mu}\varphi\nabla_{\nu}\varphi+m^2\varphi+\kappa R(g)\big)\varphi.
\end{equation}
In particular the constants $m$, $\kappa$ are called the \textit{mass} and the \textit{coupling constant} of the theory. We also mention a special kind of functionals, \textit{i.e.} those generating Wick powers, we denote them by
\begin{equation}\label{eq_1_k-power_functional}
	\phi^k_{(M,g)}(f):C^{\infty}(M,\mathbb R)\ni\varphi\mapsto \int_M f(x)\varphi^k(x)d\mu_g(x).
\end{equation}
It it easy to show that both \eqref{eq_KG_lagrangian}, \eqref{eq_1_k-power_functional} are microlocal and are generalized Lagrangians according to Definition \ref{def_1_gen_lag}.\\

Finally, let us show some other results for the scalar case which we will extensively use in the following parts. In the scalar case the strong convenient topology is generated by seminorms
\begin{equation}\label{eq_1_strong_conv_seminorms_1}
	p_{[a,b],\gamma,\mathcal{B}}(F)\doteq \sup_{\substack{t\in[a,b]\subset \mathbb{R}\\\gamma\in C^{\infty}(\mathbb{R},\mathcal{U})\\(\psi_1,\ldots,\psi_k)\in \mathcal{B}}}\left| d^kF[\gamma(t)](\psi_1,\ldots,\psi_k)\right| ,
\end{equation}
\begin{equation}\label{eq_1_hormander_seminorm_scalar_case}
	p_{\varphi,\Upsilon_{k,g},q,\chi}(F) \doteq ||F^{(k)[\varphi]}||_{q,\chi,V} ,
\end{equation}
where $\gamma: \mathbb{R}\rightarrow \mathcal{U}$ is a smooth curve valued in a $c^{\infty}$-open subset of $C^{\infty}(M)$, $\mathcal{B}$ is a closed and bounded subset in $C^{\infty}(M)^k\simeq C^{\infty}(M^k)$, $q\in \mathbb N$, $\chi \in C^{\infty}_c(M^k) $, $V\subset T^*M^k$ with $\Upsilon_{k,g}\cap \mathrm{supp}(\chi)\times V = \emptyset$ are H\"ormander seminorms in the direct limit topology of Lemma \ref{lemma_1_mucaus_top_1}. Then we have the following technical result:

\begin{lemma}\label{lemma_1_regular_density}
    $\mathcal{F}_{reg}(M,g)\subset \mathcal{F}_{\mu c}(M,g)$ is sequentially dense in the strong convenient topology.
\end{lemma}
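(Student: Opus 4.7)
The strategy is to construct, given $F\in\mathcal{F}_{\mu c}(M,g)$, a regularizing sequence $F_n\in\mathcal{F}_{\mathrm{reg}}(M,g)$ by mollifying the field configuration at which $F$ is evaluated. Using microcausal partitions of unity (Proposition \ref{prop_1_mu_caus_top_prop}(ii)) together with Corollary \ref{coro_1_WO^0-curves} --- which ensures that any smooth curve $\gamma:[a,b]\to\mathcal{U}$ stays inside a single ultralocal chart --- it suffices to work within one chart $\mathcal{U}_{\varphi_0}=\varphi_0+C^{\infty}_c(M)$. I would then fix a sequence $\Pi_n:\mathcal{E}'(M)\to C^{\infty}_c(M)$ of properly supported smoothing operators (obtained e.g. by convolution with shrinking mollifiers in local charts patched by a partition of unity, or by the heat semigroup $e^{t_n\Delta}$ with $t_n\searrow 0$ of an auxiliary Riemannian metric) with the property that $(\Pi_n^{\mathrm{T}})^{\otimes k}\to\mathrm{Id}$ in the H\"ormander topology on each closed cone $\mathcal{V}_m(k)$ exhausting $\Upsilon_k(g)$ from Lemma \ref{lemma_1_cones}. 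I then set
\begin{equation*}
    F_n(\varphi_0+\psi)\doteq F(\varphi_0+\Pi_n\psi),\qquad \psi\in C^{\infty}_c(M).
\end{equation*}

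A direct application of the chain rule gives $F^{(k)}_n[\varphi_0+\psi]=(\Pi_n^{\mathrm{T}})^{\otimes k}F^{(k)}[\varphi_0+\Pi_n\psi]$, so the integral kernel of every derivative is smooth and compactly supported, whence $F_n\in\mathcal{F}_{\mathrm{reg}}(M,g)$. For convergence in the strong convenient seminorms \eqref{eq_1_strong_conv_seminorms_1}, fix a smooth curve $\gamma:[a,b]\to\mathcal{U}_{\varphi_0}$, set $\psi_t\doteq\gamma(t)-\varphi_0$, and decompose
\begin{equation*}
    d^kF_n[\gamma(t)](\psi_1,\ldots,\psi_k)-d^kF[\gamma(t)](\psi_1,\ldots,\psi_k)=A_n(t)+B_n(t),
\end{equation*}
where $A_n(t)=d^kF[\varphi_0+\Pi_n\psi_t](\Pi_n\psi_1,\ldots,\Pi_n\psi_k)-d^kF[\gamma(t)](\Pi_n\psi_1,\ldots,\Pi_n\psi_k)$ vanishes uniformly in $t\in[a,b]$ and in $(\psi_1,\ldots,\psi_k)$ running through a bounded $\mathcal{B}$, using Bastiani smoothness of $F$ and the uniform convergence $\Pi_n\psi_t\to\psi_t$ on the compact $\gamma([a,b])$; while $B_n(t)=d^kF[\gamma(t)](\Pi_n\psi_1,\ldots,\Pi_n\psi_k)-d^kF[\gamma(t)](\psi_1,\ldots,\psi_k)$ vanishes by multilinearity and joint continuity of $d^kF[\gamma(t)]$ as a continuous multilinear functional.

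The main obstacle lies in the analogous estimate for the H\"ormander seminorms \eqref{eq_1_hormander_seminorm_scalar_case}. Splitting once more,
\begin{equation*}
    F^{(k)}_n[\gamma(t)]-F^{(k)}[\gamma(t)]=(\Pi_n^{\mathrm{T}})^{\otimes k}\bigl(F^{(k)}[\varphi_0+\Pi_n\psi_t]-F^{(k)}[\gamma(t)]\bigr)+\bigl((\Pi_n^{\mathrm{T}})^{\otimes k}-\mathrm{Id}\bigr)F^{(k)}[\gamma(t)],
\end{equation*}
the first summand is controlled using continuous dependence of $F^{(k)}$ on $\varphi$ in the H\"ormander topology, which follows from Bastiani smoothness and the fact that $F^{(k)}$ takes values in the nuclear space $\Gamma^{-\infty}_{c\,\Upsilon_k(g)}$ of Lemma \ref{lemma_1_mucaus_top_1}, together with the uniform boundedness of $(\Pi_n^{\mathrm{T}})^{\otimes k}$. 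The second summand is the delicate point: here one has to invoke the classical fact that $C^{\infty}_c(M^k)$ is sequentially dense in every $\Gamma^{-\infty}_{c\,\mathcal{V}_m(k)}(M^k)$ with respect to the H\"ormander topology, and more importantly that the mollifying sequence $\Pi_n$ can be chosen, once and for all, compatibly with the exhausting family $\mathcal{V}_m(k)\nearrow\Upsilon_k(g)$ for every $k$. Finally, as the family of H\"ormander seminorms relevant along a fixed smooth curve is countable, a standard diagonal extraction from the $F_n$ yields a single sequence achieving simultaneous convergence in every seminorm of $\tau_{\mathrm{sc}}$, proving the claim.
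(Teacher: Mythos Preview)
Your approach---define $F_n$ by precomposing $F$ with a mollification of the field configuration---is exactly the paper's strategy: the paper uses continuous linear smoothing operators $S_n:C^\infty(M)\to C^\infty(M)$ from \cite{de2012differentiable} converging to the identity uniformly on bounded sets and sets $F_n(\varphi)=F(S_n(\chi\varphi))$ for a cutoff $\chi$. The paper's verification of the curve seminorms \eqref{eq_1_strong_conv_seminorms_1} proceeds via the same splitting you use for $A_n+B_n$. Where you go further than the paper is in attempting to control the H\"ormander seminorms \eqref{eq_1_hormander_seminorm_scalar_case}: the paper simply declares these ``trivial'' for regular functionals and checks only the first family, which is not really a justification (regularity of $F_n$ says nothing about decay of $\widehat{\chi(F_n^{(k)}-F^{(k)})}$).

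Your attempt here is more honest, but the final diagonal extraction is a genuine gap. The seminorm family generating $\tau_{\mathrm{sc}}$ is uncountable: the curve seminorms are indexed by \emph{all} smooth curves $\gamma$, all compact intervals and all bounded sets, and the H\"ormander seminorms by \emph{all} $\varphi\in C^\infty(M)$. A diagonal argument cannot produce a single subsequence converging in every such seminorm. The fix is not to extract but to observe that no extraction is needed: for a properly supported mollifier $\Pi_n$ of the usual kind, one has $(\Pi_n^{\mathrm{T}})^{\otimes k}u\to u$ in $\mathcal{D}'_\Gamma(M^k)$ for \emph{every} closed cone $\Gamma$ and \emph{every} $u\in\mathcal{D}'_\Gamma$ (this is the content of the standard density theorem, e.g.\ \cite[Theorem~8.2.3]{hormanderI}, whose proof uses precisely such a mollifier). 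Combined with your first-piece estimate, this gives $p(F_n-F)\to 0$ for each seminorm $p$ directly, with the \emph{same} sequence $(F_n)$. So drop the last sentence and instead argue that the two summands in your H\"ormander splitting each tend to zero for the fixed sequence $\Pi_n$.
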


\begin{proof}
Let $F\in \mathcal{F}_{\mu c}(M,g)$, we show that there exists a sequence $\{F_n\}\subset \mathcal{F}_{reg}(M,g)$ such that $F_n \rightarrow F$ in the strong convenient topology of $\mathcal{F}_{\mu c}(M,g)$. First we search for a sequence of mollifiers strongly converging to the identity mapping. We shall use \cite[Theorem 12 pp.68]{de2012differentiable}, which guarantees the existence of continuous linear mappings $S_n: C^{\infty}(M)\rightarrow C^{\infty}(M)$ such that $S_n \rightarrow id_{C^{\infty}(M)}$ uniformly over bounded subsets of $C^{\infty}(M)$. Suppose that $F\in \mathcal{F}_{\mu c}(M,h)$ has support inside a compact $K\subset M$, and let $\chi\in C^{\infty}_c(M)$ with $\chi\vert_K=1$. Set $F_n(\varphi)=F(S_n(\chi_n\varphi))$, since each $S_n$ enlarges the support at most in a $1/n$ neighborhood of $K$, $F(S_n(\chi_n\varphi))=F(S_n(\varphi))$; by the Schwartz kernel theorem we can represent $S_n(\chi\varphi)$ as 
$$
	S_n(\chi\varphi)(x)= \int_MS_n(x,y) \chi(y)\varphi(y) d\mu_g(y),
$$
where $S_n \in C^{\infty}(M^2)$ for each $n\in \mathbb{N}$. To simplify the notation, we shall write $S_n(\varphi)$ in place of $ S_n(\chi\varphi ) $. We claim that $\mathrm{WF}(d^kF_n[\varphi])=\emptyset$ $\forall n,k \in \mathbb{N}$. To wit, by linearity of each $S_n$, $dS_n[\varphi] (\psi)= S_n(\psi)$, $d^kS_n[\varphi]=0$ $\forall k \geq 2$; thus the only non trivial term in Faà di Bruno's formula is
$$
	d^kF_n[\varphi](\psi_1,\ldots, \psi_k)= d^kF[S_n(\varphi)](dS_n[\varphi](\psi_1),\ldots, dS_n[\varphi](\psi_k))= d^kF[S_n(\varphi)](S_n(\psi_1),\ldots, S_n(\psi_k)).
$$
Its associated kernel, $d^kF_n[S_n(\varphi)](x_1,\ldots,x_k)$, will have the form
$$
	\int_{M^k}d^kF[S_n(\varphi)](y_1,\ldots,y_k)S_n(x_1,y_1)\cdots S_n(x_k,y_k) d\mu_g(y_1,\ldots,y_k).
$$
By smoothness of each $S_n$, singularities of $d^kF[S_n(\varphi)]$ are suppressed and $d^kF_n[S_n(\varphi)](x_1,\ldots,x_k)$ becomes smooth, therefore $F_n \in \mathcal{F}_{reg}(M)$. Finally we have to show convergence in the \textit{strong convenient topology}, which is generated by the family of seminorms in \eqref{eq_1_strong_conv_seminorms_1} and \eqref{eq_1_hormander_seminorm_scalar_case}. Since we are working with regular functionals, the second seminorms are trivial and all we have to check is convergence with respect to the first. In practice, chosen any seminorm and $\epsilon>0$, we have to show that there is some $n_0$ for which $n>n_0$ implies 
$$
	p_{\gamma,[a,b],B ,k }(F_n-F)<\epsilon.
$$
Thus we estimate
$$
\begin{aligned}
	p_{\gamma,[a,b],B ,k }(F_n-F)  &=\sup_{\substack{t\in [a,b]\\(\psi_1,\ldots,\psi_k)\in B}} \left|d^kF[S_n(\gamma(t))](S_n(\psi_1),\ldots, S_n(\psi_k))-d^kF[\gamma(t)](\psi_1,\ldots, \psi_k) \right|\\
	\leq & \sup_{\substack{t\in [a,b]\\(\psi_1,\ldots,\psi_k)\in B}}\left|d^kF[S_n(\gamma(t))](S_n(\psi_1),\ldots, S_n(\psi_k))-d^kF[S_n(\gamma(t)](\psi_1,\ldots, \psi_k) \right|\\
	&\quad +\sup_{\substack{t\in [a,b]\\(\psi_1,\ldots,\psi_k)\in B}} \left|d^kF[S_n(\gamma(t))](\psi_1,\ldots, \psi_k)-d^kF[\gamma(t)](\psi_1,\ldots, \psi_k) \right|\\
	& \quad\leq  \sup_{\substack{t\in [a,b]\\(\psi_1,\ldots,\psi_k)\in B}} \left|d^kF[S_n(\gamma(t))](S_n(\psi_1)-\psi_1,\ldots, S_n(\psi_k)-\psi_k) \right|\\ 
	&\quad +\sup_{\substack{t\in [a,b]\\(\psi_1,\ldots,\psi_k)\in B}}  \left| p_{S_n(\gamma(t)),\psi_1,\ldots,\psi_k}(F)-p_{\gamma(t),\psi_1,\ldots, \psi_k}(F)\right|.
\end{aligned}
$$
$F: C^{\infty}(M)\rightarrow \mathbb{R}$ is convenient smooth, hence Bastiani smooth due to the Fréchet space topology in $C^{\infty}(M)$; as a result all its derivatives are jointly continuous, and we can further estimate
$$
\begin{aligned}
	& \sup_{\substack{t\in [a,b]\\(\psi_1,\ldots,\psi_k)\in B}}\left\{C q_{K,l}(S_n(\gamma(t))) \prod_{i=1}^kq_{K,l}(S_n(\psi_i)-\psi_i))\right\} \\ &+\sup_{\substack{t\in [a,b]\\(\psi_1,\ldots,\psi_k)\in B}}\left\{ C' (q_{K,l}(S_n(\gamma(t))-q_{K,l}(\gamma(t))) \prod_{i=1}^kq_{K,l}(\psi_i))\right\}\\
	\leq & C \sup_{t\in [a,b]}\left\{ q_{K,l}(S_n(\gamma(t)))\right\} \sup_{(\psi_1,\ldots,\psi_k)\in B}\left\{\prod_{i=1}^kq_{K,l}(S_n(\psi_i)-\psi_i))\right\} \\& +C'\sup_{t\in [a,b]}\left\{  q_{K,l}(S_n(\gamma(t))-\gamma(t))\right\} \sup_{(\psi_1,\ldots,\psi_k)\in B}\left\{\prod_{i=1}^kq_{K,l}(\psi_i))\right\}\\ 
\end{aligned}
$$
where $q_{K,l}$ are the seminorms of the Fréchet topology on $C^{\infty}(M)$. The terms $\sup_{t\in [a,b]}\left\{ q_{K,l}(S_n(\gamma(t)))\right\}$, $\sup_{(\psi_1,\ldots,\psi_k)\in B}\left\{\prod_{i=1}^kq_{K,l}(\psi_i))\right\}$ are bounded by constants $D$ and $D'$ respectively, due to continuity of the seminorms; moreover, since $S_n$ converges uniformly on bounded subsets of $C^{\infty}(M)$, there is $n_0\in \mathbb{N}$, having $\sup_{(\psi_1,\ldots,\psi_k)\in B}\left\{q_{K,l}(S_n(\psi_i)-\psi_i))\right\}< \frac{\epsilon}{2CD}$ and $\sup_{\psi\in B' \supset \gamma([a,b])}\left\{  q_{K,l}(S_n(\psi)-\psi)\right\}< \frac{\epsilon}{2C'D'}$, finally establishing our claim. Notice that those estimates do not depend on the cut-off $\chi$ for choosing its support big enough, we can directly assume $\chi \equiv 1$.
\end{proof}

\chapter{ Functional formalism in quantum field theory: Wick powers}\label{chapter_Wick}
\thispagestyle{plain}

In this chapter we are transitioning from classical free field theories to quantum free field theories. First we review the definitions and properties of natural bundles: they are particular types of bundles which, broadly speaking, have a canonical way for lifting local diffeomorphisms of the base manifold to local automorphisms of the bundle. This notion is very common for one of the first bundles encountered in differential geometry is the tangent bundle, which has the aforementioned property, in particular given a local diffeomorphism we can associate its Jacobian matrix, and this is all is needed to create the trivializations of the tangent space. We stress this lifting of diffeomorphism (as well as the tangent mapping) has functorial origin, we will then see (\Cref{thm_0_natural_functor_characterization}, \Cref{thm_0_natural_bundle_characterization}), that natural bundles can be equivalently described by a functor or through a geometrical construction. The reason we are interested in such objects is that they provide a framework for handling the geometric parameters of the theory which, in the case of Klein-Gordon theory in curved spacetime, will result in defining the natural bundle of background geometries (see \eqref{eq_2_H(M)_natural_bundle} and Definition \ref{def_2_background_geometry}) whose sections $h=(g,m,\kappa)$ represents the metric, the mass and the coupling constant to gravity. The importance of such bundles is crucial for classical field theories for it gives a \textit{natural} way of treating covariant functions and forms (for example see \Cref{thm_2_covariant_identity}); however, it can be used in the quantum setting as well and will provide the technical framework for classifying ambiguities in the definition of Wick powers (\textit{c.f.} \Cref{thm_2_Moretti_Kavhkine}).\\

The starting point for the quantum theory are the classical results obtained in \Cref{chapter_classical}, then we apply deformation quantization to the Poisson algebra of microcausal functionals $(\mathcal{F}_{\mu c},\{ \ , \ \}_{(M,h)})$ ($h=(g,m,\kappa)$ is a background geometry), which consists in deforming the classical product of functionals to a $\star$-product. This is however, not directly possible for microcausal functionals; to find a way around we restrict the algebra to regular functionals and define there the $\star$ product. Introducing the notion of Hadamard state and using the characterization \cite[Theorem 5.1]{radzikowski1996micro} we show that it is possible to define a $\star_H$-product on $\mathcal{F}_{\mu c}$, where $H$ is a Hadamard parametrix. The issue here is that this $\star$ product is not uniquely defined for any two parametrix $H,\ H'$ differ by a smooth function $d\in C^{\infty}(M\times M)$. Using the density of regular functionals into microcausal, we extend the regular $\star$ product to a $\star$ product in the abstract algebra of microcausal functionals $\mathfrak A_{\mu c}$ with the property that whenever $H$ is the symmetric part of a Hadamard state, we have an isomorphism of $\star$ algebras
$$
    \alpha_H :(\mathfrak A_{\mu c}, \star) \to (\mathcal{F}_{\mu c},\star_H).
$$

The next step is the definition of Wick powers (see Definition \ref{def_2_Wick_quantum_powers}). As mentioned earlier, in \cite{hollands2001local}, the characterization of uniqueness has been established using the rather strong analytical dependence of Wick powers on the background geometry. This condition was weakened in \cite{khavkine2016analytic} to a milder one which was enough for using the Peetre-Slov\'ak Theorem and classify ambiguities in the definition of Wick powers. For the sake of thoroughness, we will state and prove the results obtained in \cite{khavkine2016analytic, khavkine2019wick} for scalar field theories. We stress that, differently from \cite{hollands2001local, khavkine2016analytic}, we are not using on-shell algebras, since functional formalism is general enough to avoid working in this special condition. Several of the requirements of Wick powers were therefore generalized in Definition \ref{def_2_Wick_quantum_powers}.\\

\section{Natural bundles}\label{section_natural_bndl&background_geometries}

\begin{definition}\label{def_0_natural_bundles}
	A natural bundle is a covariant functor, $\mathfrak{N}: \mathfrak{Man} \rightarrow \mathfrak{FBndl}$, from the category of manifold and local diffeomorphisms to the category of fiber bundles and local fibered morphisms, satisfying the following properties:
	\begin{itemize}
	\item[$(i)$] $\mathfrak{N}(M) \equiv (B,\pi, M,F)$ is a fiber bundle over $M$;
	\item[$(ii)$] if $i: U \hookrightarrow M$ is the inclusion of a submanifold, then $\mathfrak{N}(i):\mathfrak{N}(U) \rightarrow \mathfrak{N}(M)$ is the inclusion of the sub-bundle $\mathfrak{N}(U) \equiv (\pi^{-1}(U),\pi \mid_{U},U,F)$ into the fiber bundle $\mathfrak{N}(M)$;
	\item[$(iii)$] If $\Sigma$ is a parameter manifold and $\phi: \Sigma \times M \rightarrow N$ is a smooth family of local diffeomorphisms depending on the parameter $s\in \Sigma$, then $\mathfrak{N}(\phi):\Sigma \times \mathfrak{N}(M) \rightarrow \mathfrak{N}(N)$ is smooth.
	\end{itemize}
\end{definition}

It is now easy to see that the tangent bundle $(TM,\tau,M,\mathbb{R}^n)$ is an example of natural bundle, in fact $T$ is easily seen to be a covariant functor, $(i)$ is clear, $(ii)$ follows once we realize that $i:U \hookrightarrow$ is a local diffeomorphism that is locally the identity, therefore $Ti:TU \rightarrow TM$ will be locally the identity as well, therefore forming a sub-bundle. Finally to show $(iii)$ note that the fiber mapping $T \phi_s :TM \rightarrow TN$, where $s\in \Sigma $ is a parameter, will involve products and sums of coefficients of the Jacobian of $\phi_s$, $J^{\mu}_{\nu}(s)$, which is jointly smooth.\\

We now attempt to give a classification of the possible types of natural bundles. In order to do so, we will take, at first sight, a somewhat strange path which, in the end, will prove of great practical utility. To start consider open sets $U$, $V$ of $\mathbb{R}^n$ and $\mathbb{R}^m$ respectively each of which contains the origin, let 
$$
	J^k_0(U,V)_0 \doteq \lbrace \alpha \in C^{\infty}(U ,V) \space\ : \space\ \alpha (0)=0 \rbrace / \sim
$$
$$
	J^k_0(\mathbb{R}^n,\mathbb{R}^m)_0 \doteq \lbrace \alpha \in C^{\infty}(\mathbb{R}^n ,\mathbb{R}^m) \space\ : \space\ \alpha (0)=0 \rbrace / \sim
$$
where the equivalence relation is given by $\alpha \sim \alpha'$ if and only if $j^k_0\alpha=j^k_0\alpha'$. Using local coordinates $(U,x^i)$ and $(V,y^a)$ we can write
$$
	j^k_0 \alpha = \alpha^a|_0+ \frac{\partial \alpha^a}{\partial x^i}\Big|_0x^i+ \ldots+ \frac{1}{k!} \frac{\partial^k \alpha^a}{\partial x^{i_1} \ldots \partial x^{i_k}}\Big|_0x^{i_1}\ldots x^{i_k}\equiv \alpha^a_i x^i+ \ldots+ \frac{1}{k!} \alpha^a_{i_1 \ldots i_k} x^{i_1}\ldots x^{i_k}
$$
therefore we can make this space a $n\left(\binom{n+k}{n} -1\right)$-dimansional differentiable manifold structure whose coordinates are of the form $\left(\alpha^a_i,\ldots\alpha^a_{i_1 \ldots i_k}\right)$. Moreover, provided $s<k$, we have smooth projections 
$$
	\pi^k_s : J^k_0(U,V)_0 \rightarrow J^s_0(U,V)_0 : (\alpha^a_i,\ldots\alpha^a_{i_1 \ldots i_k}) \mapsto (\alpha^a_i,\ldots\alpha^a_{i_1 \ldots i_s})
$$
Another important property of those spaces is \textit{covariance} for the jet functor $J^k_0(\cdot,\cdot)_0$: let $\alpha \in J^k_0(U,V)_0$, $\beta \in J^k_0(V,W)_0$ with $U$, $V$, $W$ open subsets of $\mathbb{R}^n$, $\mathbb{R}^m$, $\mathbb{R}^p$ respectively, then $\beta\circ \alpha: \mathbb{R}^n \rightarrow \mathbb{R}^p$ is well defined and the image of $0\in \mathbb{R}^n$ is $0\in \mathbb{R}^p$, moreover by Faà di Bruno's formula $[\beta\circ \alpha] \in J^k_0(U,W)_0$. As an example, for $k=2$ we have that locally $[\beta \circ \alpha]$ has the form
$$
    (\beta^a_j\alpha^j_i, \beta^a_{ij}\alpha^i_k \alpha^j_l + \beta^a_i\alpha^i_{kl}).
$$
Suppose that $m=n$, and that instead of generic mappings we only consider those that are invertible on a neighborhood of the origin. We shall denote this space as 
$$
	GL^k(n,\mathbb{R}) \doteq J^k_0(U,V)^{(\mathrm{inv})}_0.
$$
Not only the latter space is a manifold but it is a Lie group as well: the inversion map is the one associating the local inverse, while the multiplication map is given by the composition of mappings; both operations are made by sums and products of monomials, and, as a consequence, are smooth. $GL^k(n,\mathbb{R})$ is then usually referred as the $k$th order jet group of $GL(n,\mathbb{R})$, since, when $k=1$, we can readily identify the former with with the general linear group of $\mathbb R^n$. \\

Now we go one step further, let $e:U \subset \mathbb{R}^n \rightarrow M$ be a local diffeomorphism in a neighborhood of the origin $0 \in \mathbb{R}^n$, calculating its $k$th order jet extension in the origin yields
$$
	(j^k_0e)^{\mu} = e^{\mu}(0)+e^{\mu}_i x^i + \ldots + e^{\mu}_{i_1 \ldots i_k}x^{i_1} \ldots x^{i_k}.
$$
Similarly to what we did before, let $J^k_0(U,M)^{(\mathrm{inv})}$ the $k$th order jet space of invertible mappings with the structure of a fiber bundle over $M$ of dimension $n \binom{n+k}{n}$, which we denote by $L^k(M)$. Fibered coordinates are of the form $(e^{\mu}, e^{\mu}_i\ldots , e^{\mu}_{i_1 \ldots i_k}) $, and coordinates transformations on $M$ with $x'^{\mu}=x'^{\mu}(x)$, induce transformations on $L^k(M)$ of the form
\begin{equation}\label{eq_0_natural_coordinates_transformations}
  \left\{\begin{array}{@{}l@{}}
    e'^{\mu}  \equiv x'^{\mu}(x), \\
    e'^{\mu}_i = J^{\mu}_{\alpha}(x)e^{\alpha}_i, \\
    e'^{\mu}_{ij} = J^{\mu}_{\alpha}(x)e^{\alpha}_{ij}+ J^{\mu}_{\alpha \beta}(x)e^{\alpha}_i e^{\beta}_j,  \\
    \ldots \\
	e'^{\mu}_{i_1 \ldots i_k}x^{i_1} = J^{\mu}_{\alpha} e^{\alpha}_{i_1 \ldots i_k}x^{i_1} + \ldots + J^{\mu}_{\alpha_1 \ldots \alpha_k} e^{\alpha_1}_{i_1} \ldots e^{\alpha_k}_{i_k}. 
  \end{array}\right.\,
\end{equation}

\begin{lemma}\label{lemma_0_principal_bndl_frame_k}
    $L^k(M)\to M$ is a principal fiber bundle, called the $k$th order frame bundle, with structure group $GL^k(n,\mathbb{R})$.
\end{lemma}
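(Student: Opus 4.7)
The plan is to apply Lemma \ref{lemma_0_principal_bundles}, which reduces the task to exhibiting a free smooth right action of $GL^k(n,\mathbb{R})$ on $L^k(M)$ whose orbits are exactly the fibers of the natural projection $\pi:L^k(M)\to M$, $[e]\mapsto e(0)$. That $\pi$ is a surjective submersion is already visible from the fibered coordinates $(e^\mu,e^\mu_i,\ldots,e^\mu_{i_1\cdots i_k})$ displayed in \eqref{eq_0_natural_coordinates_transformations}: the base coordinates $e^\mu$ are independent of the higher jet data, and surjectivity follows from the fact that for any $x\in M$ one can find a chart-induced local diffeomorphism $e:U\subset\mathbb R^n\to M$ with $e(0)=x$.

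First I would define the right action. Given $[e]\in L^k(M)$, represented by a local diffeomorphism $e:U\to M$ with $e(0)\in M$, and $[\alpha]\in GL^k(n,\mathbb R)$, represented by $\alpha:V\to U$ with $\alpha(0)=0$ and invertible near $0$, set
\[
    [e]\cdot[\alpha]\doteq [e\circ\alpha].
\]
Faà di Bruno's formula shows that $j^k_0(e\circ\alpha)$ depends only on $j^k_0 e$ and $j^k_0\alpha$, so this is well-defined; associativity $([e]\cdot[\alpha])\cdot[\beta]=[e]\cdot([\alpha]\cdot[\beta])$ and the identity axiom follow directly from the associativity of composition of germs. Smoothness of the action in the coordinates $(e^\mu_{i_1\cdots i_s},\alpha^i_{j_1\cdots j_s})$ is again a polynomial expression coming from Faà di Bruno, hence jointly smooth.

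Next I would check freeness and transitivity on fibers. If $[e]\cdot[\alpha]=[e]$ then, since $e$ is a local diffeomorphism, $[e^{-1}]\cdot[e]\cdot[\alpha]=[\alpha]$ equals $[e^{-1}]\cdot[e]=[\mathrm{id}]$, so $\alpha=\mathrm{id}$ to order $k$. For transitivity, pick $[e_1],[e_2]$ in the same fiber, so $e_1(0)=e_2(0)=x$; then $\alpha\doteq e_1^{-1}\circ e_2$ is defined near $0$, sends $0$ to $0$, and is invertible there, so $[\alpha]\in GL^k(n,\mathbb R)$ and $[e_1]\cdot[\alpha]=[e_2]$. Thus the orbits of the action coincide with the fibers of $\pi$.

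The only delicate step is verifying that the cocycle arising from local trivializations acts on the typical fiber by left translation, as demanded by Definition \ref{def_0_principal_bundles}; but this is essentially automatic from the construction, since a chart $(U,x^\mu)$ on $M$ produces the local section $x\mapsto [\mathrm{exp}_x^{-1}]$ (or any chart-induced frame) and the transition between two such trivializations over $U_\alpha\cap U_\beta$ is given precisely by the $k$-jet of the change of coordinates $x'^\mu(x)$, i.e.\ by an element of $GL^k(n,\mathbb R)$ acting on the left via composition, exactly as in \eqref{eq_0_natural_coordinates_transformations}. Invoking Lemma \ref{lemma_0_principal_bundles} then concludes that $(L^k(M),\pi,M,GL^k(n,\mathbb R))$ is a principal fiber bundle.
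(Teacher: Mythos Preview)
Your proof is correct but takes a different route from the paper. The paper works directly with Definition~\ref{def_0_principal_bundles}: it constructs explicit trivializations $t_{(a)}:\pi^{-1}(U_a)\to U_a\times GL^k(n,\mathbb R)$ from the charts of $M$ (sending $j^k_0 e$ to $(e(0),j^k_0\alpha)$ where $\alpha$ has the same $k$-jet data as $e$ in the chosen chart), and then checks that the transition functions act by left translation on $GL^k(n,\mathbb R)$, reading this off from the polynomial formulas~\eqref{eq_0_natural_coordinates_transformations}. You instead invoke Lemma~\ref{lemma_0_principal_bundles}, defining the right action $[e]\cdot[\alpha]=[e\circ\alpha]$ first and verifying freeness and fiber-transitivity directly.

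Your approach is arguably cleaner, since the freeness and transitivity arguments are short and coordinate-free, and it avoids writing down the trivializations explicitly. The paper's approach, on the other hand, makes the trivializations and transition cocycle visible from the start, which is convenient for the computations that follow (e.g.\ the explicit form of the principal right action in~\eqref{eq_0_L^k(M)_principal_R_action} and the associated-bundle construction in Theorem~\ref{thm_0_natural_bundle_characterization}). Note that your last paragraph on the cocycle condition is redundant once you have committed to Lemma~\ref{lemma_0_principal_bundles}: that lemma already guarantees the principal structure without needing to check the left-translation property separately. Also, in your freeness argument the notation ``$[e^{-1}]\cdot[e]$'' is slightly abusive since $e^{-1}$ is not a map $\mathbb R^n\to M$; the intended step is simply to compose $j^k_0(e\circ\alpha)=j^k_0 e$ with $j^k_{e(0)}(e^{-1})$ on the left, which is legitimate because $e$ is locally invertible.
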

\begin{proof}
    We shall verify that the trivializations induced by \eqref{eq_0_natural_coordinates_transformations} define trivializations as per Definition \ref{def_0_principal_bundles}. The projection $\pi:L^k(M)\rightarrow M: j^k_0e \mapsto e(0)$ is a well defined smooth mapping, consider some atlas $\lbrace U_{a}, x^{\mu} \rbrace_{a\in A}$ of $M$, and define trivializations
\begin{equation}\label{eq_0_L^k(M)_trivialization}
	t_{(a)}: L^k(M) \rightarrow M \times GL^k(n,\mathbb{R}) : j^k_0e \mapsto (e(0), j^k_0 \alpha) ,
\end{equation}
where $j^k_0 \alpha \in GL^k(n,\mathbb{R})$ is the unique element in the equivalent class whose $k$th order jet expansion in $0\in \mathbb{R}^n$ is
$$
	\alpha(0)= e^{\mu}_i x^i + \ldots + e^{\mu}_{i_1 \ldots i_k}x^{i_1} \ldots x^{i_k}+ R_{k+1}(x).
$$
Since the mappings $t_{(a)}$ involve only polynomial terms, they are smooth. Transition mappings associates to trivializations $t_{(a)}$ and $t_{(b)}$ takes the form 
\begin{equation}\label{eq_0_L^k(M)_transition_mappings}
\begin{aligned}
    t_{(ab)} :&U_{\alpha \beta}\times GL^k(n,\mathbb{R}) \rightarrow U_{\alpha \beta}\times GL^k(n,\mathbb{R}) \\
    & j^k_0(x \circ e) \equiv (e(0), j^k_0 \alpha)\mapsto j^k_0(x' \circ e) \equiv ((x' \circ e)(0), j^k_0(J \circ \alpha)),
\end{aligned}
\end{equation}
which specified in coordinates reads $(e^{\mu}, e^{\mu}_i\ldots , e^{\mu}_{i_1 \ldots i_k})\mapsto (e'^{\mu}, e'^{\mu}_i\ldots , e'^{\mu}_{i_1 \ldots i_k})$ and the relation between the two is given by \eqref{eq_0_natural_coordinates_transformations}. In other words we are identifying the transition mapping as a left 
action of the element $j^kJ \in GL^k(n,\mathbb R)$, given by the Jacobian its subsequent derivatives, on the element $j^k_0 \alpha$ of the typical fiber. All of \eqref{eq_0_L^k(M)_trivialization}, \eqref{eq_0_L^k(M)_transition_mappings} are smooth in the respective domains since are polynomials in the respective variables.
\end{proof}

Since all principal bundles come equipped with a global right action, we explicitly write down the one of $GL^k(n,\mathbb R)$ on $L^k(M)$:
\begin{equation}\label{eq_0_L^k(M)_principal_R_action}
	L^k(M) \times GL^k(n,\mathbb R) \rightarrow L^k(M): (j^k_0e,j^k_0 \beta) \mapsto j^k_0(e \circ \beta),
\end{equation}
which locally, for $k=2$, reads 
$$
  (e^{\mu},e^{\mu}_a,e^{\mu}_{ab}, \alpha^{a}_i,\alpha^a_{ij})\mapsto (e^{\mu}, e^{\mu}_a \alpha^a_i, e^{\mu}_{ab} \alpha^a_i \alpha^b_j+ e^{\mu}_a \alpha^a_{ij}).
$$

The reason why $L^k(M)$ is called the $k$th order frame bundle is that, when $k=1$, it becomes the usual frame bundle\footnote{The frame bundle of $M$, denoted $L(M)$ is the set $\cup_{x\in M}\{(e_x)\}$, where $\{e_x\}$ is the set of basis of the tangent space $T_xM$. If a set of coordinates $(U_a,\{x^{\mu}\})$ are chosen on $M$, one can induce trivializations on $L(M)$ by setting 
$$
    t_a(e_x)=(x^{\mu},(e_x)^{\mu}_a),
$$
where $(e_x)^{\mu}_a\in GL(n,\mathbb{R})$ is the matrix associated to the change of basis $\partial_{\mu}\to (e_x)_a$.}. It is customary to introduce the tangent bundle $TM$ of $M$ as as the space of kinematic tangent vectors of $M$, it can, however, also be seen as the bundle associated to the frame bundle $L(M)$ with the left action $\lambda: GL(n,\mathbb{R})\times \mathbb{R}^n \rightarrow  \mathbb{R}^n$ that it is induced by left multiplication of a matrix on a vector. Define a bundle 
$$
	L(M) \times_{\lambda} \mathbb{R}^n \doteq L(M) \times \mathbb{R}^n / \sim_{\lambda},
$$
where the equivalence relation is given by $(p,v)\sim (p',v')$ if there is some $A \in GL(n,\mathbb{R})$ such that $p'=R_A(p)$ and $v'=\lambda_A (v)$. Coordinates are of the form $(x^{\mu},v^{\mu})$, and if we change them on $M$ via $x'=x'(x)$ we have induced a transformation $v'^{\mu}=J^{\mu}_{\nu}v^{\nu}$. By (ii) in Proposition \ref{prop_0_construction_of_bundles} $L(M) \times_{\lambda} \mathbb{R}^n\simeq TM$.

The case of the tangent bundle presented above points to the emergence of a certain pattern: bundles associated with frame bundles are natural. In order to prove it we shall need some intermediate results.\\

\begin{lemma}\label{lemma_0_natural_equivalence_1}
	$L^k: M \mapsto L^k(M)$ is a functor from the category of manifolds and local diffeomorphisms to the category of fiber bundles and local fibered morphisms.
\end{lemma}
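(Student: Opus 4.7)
The plan is to define the action of $L^k$ on morphisms, then verify smoothness, compatibility with the bundle projections, and the functorial identities. The candidate construction is the obvious one: for a local diffeomorphism $\phi : M \to N$ (defined on some open $U\subset M$) and any representative $e : V\subset\mathbb{R}^n \to M$ with $e(0) \in U$, set
\begin{equation*}
    L^k(\phi) : \pi_M^{-1}(U) \subset L^k(M) \to L^k(N),\qquad j^k_0 e \mapsto j^k_0(\phi \circ e).
\end{equation*}
First I would check this is well-posed: $\phi\circ e$ is defined on a possibly smaller neighborhood of $0$ in $\mathbb{R}^n$, is still a local diffeomorphism (composition of local diffeomorphisms), and its $k$-jet at $0$ depends only on $j^k_0 e$ by the chain rule / Faà di Bruno. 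Equivalence-class independence therefore follows directly from Faà di Bruno applied to $\phi\circ e_1$ and $\phi\circ e_2$ when $j^k_0 e_1 = j^k_0 e_2$.

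Next I would verify that $L^k(\phi)$ is a local fibered morphism over $\phi$. Commutation with the projections $\pi_M$ and $\pi_N$ is immediate: $\pi_N\big(L^k(\phi)(j^k_0 e)\big) = (\phi\circ e)(0) = \phi(e(0)) = \phi(\pi_M(j^k_0 e))$. For smoothness, I would read $L^k(\phi)$ in fibered coordinates on $L^k(M)$ and $L^k(N)$ induced as in \eqref{eq_0_L^k(M)_trivialization}: choosing charts $(U_a,x^\mu)$ of $M$ and $(V_b,y^\alpha)$ of $N$ containing $e(0)$ and $\phi(e(0))$ respectively, $L^k(\phi)$ reads
\begin{equation*}
    (e^\mu,e^\mu_i,\ldots,e^\mu_{i_1\cdots i_k}) \;\longmapsto\; (y^\alpha(\phi\circ e)(0),\; \partial_\mu \phi^\alpha\, e^\mu_i,\; \ldots,\; P^\alpha_{i_1\cdots i_k}(\partial^{\leq k}\phi, e^\mu_{\cdot}))
\end{equation*}
where $P^\alpha_{i_1\cdots i_k}$ is the polynomial in the partial derivatives of $\phi$ (up to order $k$) and of the $e$-coordinates prescribed by the chain rule, exactly of the same shape as the right-hand side of \eqref{eq_0_natural_coordinates_transformations}. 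Being polynomial in smooth inputs, $L^k(\phi)$ is smooth; since $\phi$ was only assumed a local diffeomorphism, $L^k(\phi)$ is correspondingly defined only on $\pi_M^{-1}(\mathrm{dom}\,\phi)$ and takes values in $\pi_N^{-1}(\mathrm{Im}\,\phi)$, hence is a local fibered morphism.

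Finally I would check the two functorial identities. For the identity $\mathrm{id}_M$, $L^k(\mathrm{id}_M)(j^k_0 e) = j^k_0(\mathrm{id}_M\circ e) = j^k_0 e$, so $L^k(\mathrm{id}_M) = \mathrm{id}_{L^k(M)}$. For composition, given local diffeomorphisms $\phi: M\to N$ and $\psi: N\to P$ with composable domains,
\begin{equation*}
    L^k(\psi\circ\phi)(j^k_0 e) = j^k_0\big((\psi\circ\phi)\circ e\big) = j^k_0\big(\psi\circ(\phi\circ e)\big) = L^k(\psi)\big(j^k_0(\phi\circ e)\big) = L^k(\psi)\circ L^k(\phi)(j^k_0 e),
\end{equation*}
using only associativity of composition of maps between manifolds. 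Nothing in this argument is delicate; the only mildly technical step is the coordinate check of smoothness, for which the explicit polynomial form \eqref{eq_0_natural_coordinates_transformations} already encapsulates the computation, so no real obstacle is expected.
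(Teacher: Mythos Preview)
Your proposal is correct and follows essentially the same approach as the paper: both define $L^k(\phi)(j^k_0 e)=j^k_0(\phi\circ e)$ and verify well-definedness, smoothness, and compatibility with the bundle projections. You are more explicit than the paper in checking the functorial identities (identity and composition), while the paper additionally observes that $L^k(\psi)$ commutes with the principal right action \eqref{eq_0_L^k(M)_principal_R_action}, i.e.\ $L^k(\psi)\circ R_{j^k_0\alpha}=R_{j^k_0\alpha}\circ L^k(\psi)$, making it a local principal isomorphism---a fact not required by the lemma as stated but used immediately afterward in the proof of Theorem~\ref{thm_0_natural_functor_characterization}.
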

\begin{proof}
From Lemma \ref{lemma_0_principal_bndl_frame_k}, we have that for each manifold $M$ (resp. $U\subset M$), $L^k(M)$ (resp. $L^k(U)\subset L^k(M)$) is a principal fiber bundle. Suppose that $\psi \in \mathrm{Diff}_{{loc}}(M)$, then there are open subsets $U$,$V$ of $M$ such that $\psi: U \rightarrow V$ is a diffeomorphism. Set 
$$
    L^k(\varphi):L^k(U) \rightarrow L^k(V):j^k_0 e \mapsto j^k_0(\psi \circ e).
$$ 
By functoriality and covariance of the jet prolongations $j^k_0(\varphi \circ e) = j^k_{e(0)}\varphi \circ j^k_0e$, yielding a commutative diagram:
\begin{center}
\begin{tikzpicture}[every node/.style={midway}]
\matrix[column sep={10em,between origins},
        row sep={3em}] at (0,0)
{ \node(A)   {$L^k(U)$}  ; & \node(B) {$L^k(V)$};\\
  \node(C) {$U$};  & \node(D) {$V$};\\};
\draw[->] (A) -- (B) node[anchor=south] {$L^k(\varphi) $};
\draw[->] (C) -- (A) node[anchor=east]  {$j^k_0e$};
\draw[->] (D) -- (B) node[anchor=west]  {$j^k_0(\varphi \circ e)$};
\draw[->] (C) -- (D) node[anchor=north]  {$\varphi$};
\end{tikzpicture}
\end{center}
In addition, $L^k(\psi)\left( R_{j^k_0\alpha}j^k_0(\psi \circ e)\right)=j^k_0(\psi \circ e \circ \alpha )= R_{j^k_0 \alpha}\left(L^k(\psi)j^k_0e\right)$, implying that $L^k(\psi)$ is even a local principal isomorphism.
\end{proof}

Given a natural bundle functor $\mathfrak{N}$, we say that it has \textit{finite order} $k$ if, given any $M \in \mathrm{Ob}(\mathfrak{Man})$, $\phi,\psi \in \mathrm{Diff}_{\mathrm{loc}}(M) $ with $j^k_x\phi = j^k_x \psi$ in some $x \in M$, we have
$$
	\mathfrak{N}(\phi)(\pi^{-1}(x)) \simeq \mathfrak{N}(\psi)(\pi^{-1}(x))
$$
where as usual $\pi$ is the projection of the fiber bundle $\mathfrak{N}(M)$.

\begin{theorem}\label{thm_0_natural_bundle_characterization}
Let $\mathfrak{N}: \mathfrak{Man}\rightarrow \mathfrak{FBndl}$ be a natural functor of finite order $k$, then for all $M \in \mathrm{Ob}(\mathfrak{Man})$ we have a canonical isomorphism
$$
    \mathfrak{N}(M) \simeq L^k(M) \times_{\lambda} S
$$
 where $S \doteq \pi_{M}^{-1}(0)$ is the standard fiber of the natural bundle $\mathfrak{N}(\mathbb{R}^n)$.
\end{theorem}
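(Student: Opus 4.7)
The plan is to construct an explicit canonical isomorphism between $\mathfrak{N}(M)$ and the associated bundle $L^k(M)\times_{\lambda} S$ by exploiting the functoriality and smooth parameter-dependence of $\mathfrak{N}$, together with the finite order hypothesis. First I would define the left action $\lambda$ of $GL^k(n,\mathbb{R})$ on the standard fiber $S=\pi_{\mathbb{R}^n}^{-1}(0)$: given $j^k_0\alpha\in GL^k(n,\mathbb{R})$, choose any representative $\alpha\in\mathrm{Diff}_{\mathrm{loc}}(\mathbb{R}^n)$ fixing the origin, and set $\lambda_{j^k_0\alpha}\doteq \mathfrak{N}(\alpha)|_{S}:S\to S$. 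The finite order assumption ensures this depends only on $j^k_0\alpha$, covariance of $\mathfrak{N}$ ensures $\lambda_{j^k_0(\beta\circ\alpha)}=\lambda_{j^k_0\beta}\circ\lambda_{j^k_0\alpha}$, and property $(iii)$ of Definition \ref{def_0_natural_bundles} applied to the parameter family $\alpha_s$ gives smoothness of $\lambda$.

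Next I would define the candidate isomorphism $\Phi_M: L^k(M)\times S \to \mathfrak{N}(M)$ by
\begin{equation*}
    \Phi_M(j^k_0 e, s) \doteq \mathfrak{N}(e)(s),
\end{equation*}
where $e:U\subset\mathbb{R}^n \to M$ is a local diffeomorphism representative of $j^k_0 e$, so that $\mathfrak{N}(e)(s)\in \pi_M^{-1}(e(0))$. The finite order hypothesis guarantees that the right-hand side depends only on the $k$-jet $j^k_0 e$, so $\Phi_M$ is well defined. Then I would verify that $\Phi_M$ is constant on the equivalence classes of the associated bundle: for $j^k_0\alpha\in GL^k(n,\mathbb{R})$, using the principal right action \eqref{eq_0_L^k(M)_principal_R_action} and functoriality,
\begin{equation*}
    \Phi_M\bigl(j^k_0(e\circ\alpha), \lambda_{(j^k_0\alpha)^{-1}}(s)\bigr) = \mathfrak{N}(e)\circ\mathfrak{N}(\alpha)\circ\mathfrak{N}(\alpha^{-1})(s) = \mathfrak{N}(e)(s),
\end{equation*}
so $\Phi_M$ descends to a well-defined fibered map $\bar\Phi_M: L^k(M)\times_{\lambda} S \to \mathfrak{N}(M)$ covering $\mathrm{id}_M$.

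The bulk of the work is then to show $\bar\Phi_M$ is a fiberwise bijection and a diffeomorphism. For surjectivity fix $x\in M$, choose any local diffeomorphism $e:U\to M$ with $e(0)=x$; then $\mathfrak{N}(e):\mathfrak{N}(U)\to\mathfrak{N}(M)$ is a local fibered isomorphism by property $(ii)$ applied to $e$ composed with the inclusion, so $\mathfrak{N}(e):S\to\pi_M^{-1}(x)$ is a bijection and every fiber element is in the image. For injectivity, if $\mathfrak{N}(e_1)(s_1)=\mathfrak{N}(e_2)(s_2)$, then $e_1(0)=e_2(0)=x$, so $\alpha\doteq e_1^{-1}\circ e_2$ is a local diffeomorphism of $\mathbb{R}^n$ fixing $0$; functoriality gives $s_1=\mathfrak{N}(\alpha)(s_2)=\lambda_{j^k_0\alpha}(s_2)$, meanwhile $j^k_0 e_2 = j^k_0(e_1\circ\alpha)= (j^k_0 e_1)\cdot (j^k_0\alpha)$, so $(j^k_0 e_1,s_1)$ and $(j^k_0 e_2,s_2)$ represent the same class. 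Smoothness of $\bar\Phi_M$ and its inverse follows by reading the construction in the trivializations \eqref{eq_0_L^k(M)_trivialization}: locally one can pick a smooth family of local sections $x\mapsto e_x$ of $L^k(M)$ and express $\bar\Phi_M$ via the jointly smooth parameter-dependent map $\mathfrak{N}(e_{\,\cdot})(\,\cdot\,)$ supplied by property $(iii)$ of Definition \ref{def_0_natural_bundles}, with the analogous statement for the inverse.

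The main obstacle I anticipate is the smoothness step, specifically in producing a globally well-behaved smooth family of local frames to invoke property $(iii)$; this is the technical piece where the seemingly soft finite order and functoriality assumptions must interact with the differentiable structure in a genuine way. Naturality (functoriality with respect to all local diffeomorphisms) is what lets us reduce the problem to the model $\mathbb{R}^n$, and the finite order hypothesis is exactly what is needed to replace the infinite-dimensional germ data by a finite-dimensional $k$-jet that can be encoded as a point of $L^k(M)$.
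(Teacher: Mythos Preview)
Your proposal is correct and follows essentially the same approach as the paper: both define the $GL^k(n,\mathbb{R})$-action on $S$ via $\lambda_{j^k_0\alpha}=\mathfrak{N}(\alpha)|_S$, build the map $L^k(M)\times S\to\mathfrak{N}(M)$ by $(j^k_0 e,s)\mapsto\mathfrak{N}(e)(s)$, and use the finite order hypothesis for well-definedness and functoriality for the isomorphism. In fact you supply more detail than the paper on the descent to the quotient, the fiberwise bijectivity, and the smoothness step---the paper dispatches the latter in a single line by invoking covariance and the inverse jet, so your concern about the smoothness argument being the delicate point is well founded but ultimately handled the same way.
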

\begin{proof}
Denote by $\pi_M:\mathfrak{N}(M)\to M$ the bundle projection. We start by writing down the action of the group $GL^k(n)$ on $S$. Consider two $n$-dimensional manifolds $M$, $N$ and the mapping 
$$
    J^k(M,N)^{(\mathrm{inv})} \times_M \mathfrak{N}(M)\rightarrow \mathfrak{N}(N): \big[j^k_x f, b\big] \mapsto \mathfrak{N}(f)(b)  ,
$$
where as usual the space $J^k(M,N)^{(\mathrm{inv})}$ denotes the space of $k$th order jets of local diffeomorphisms and $\big[j^k_x f, b\big]$ any representative of the equivalence relation $\big[j^k_x f, b\big]\sim_M \big[j^k_{x'} f', b'\big]$ if $x=x',\ \pi_M(b)=\pi_M(b')$. From here on we work locally on the chart $U_{a}\subseteq M$ and assume $M=N$, $U_a=\mathbb{R}^n$; in which case the above mapping takes the form
$$
\begin{aligned}
    J^k(U_a,U_a)^{(\mathrm{inv})} \times_{U_a} \mathfrak{N}(U_a)\equiv J^k(U_a,U_a)^{(\mathrm{inv})} \times S \rightarrow  \mathfrak{N}(U_a): (j^k_x \alpha, y) \mapsto (x,\mathfrak{N}(\alpha)(y)) \equiv (x, \tilde{\alpha}(y)),   
\end{aligned}
$$
where we are trivializing points $b$ of $\mathfrak{N}(U_a)$ as $(x,y)$. Choosing in particular $x=0$, and composing with the fiber projection yields a mapping
\begin{equation}\label{eq_0_natural_left_action}
    \lambda: GL^k(n,\mathbb{R}) \times S \rightarrow S : (j^k_0 \alpha , y) \rightarrow \tilde{\alpha}(y).
\end{equation}

This is a left action since $\lambda_{j^k_0 \alpha}\circ \lambda_{ j^k_0 \beta }(y) = \mathfrak{N}(\alpha)\circ \mathfrak{N}(\beta)(y)=\mathfrak{N}(\alpha \circ \beta)(y)=\lambda_{\alpha \circ \beta}(y)$. Then consider the mapping
\begin{equation}\label{eq_0_associated_natural_isomorphism}
    L^k(M)\times_{\lambda}S \ni ( j^k_xe, y) \mapsto (x,\mathfrak{N}(\alpha)(y)) \in \mathfrak{N}(M),
\end{equation}
where $j^k_xe=(x,j^k_0\alpha)$ for some $j^k_0 \alpha \in GL^k(n,\mathbb{R})$. This is well defined for if we take any other $\alpha'$ with $j^k_0\alpha'=j^k_0\alpha$, by finiteness of the order of $\mathfrak{N}$, $\mathfrak{N}(\alpha)=\mathfrak{N}(\alpha')$. Finally this is an isomorphism since choosing $j^k_x\overline{e}=(x,j^k_0\alpha^{-1})$, and using covariance of the jet and natural functor, we show that \eqref{eq_0_associated_natural_isomorphism} is an isomorphism of bundles projecting over $id_M$.
\end{proof}

Theorem \ref{thm_0_natural_bundle_characterization} can be framed into a more categorical setting: consider therefore the mapping $\lambda:GL^k(m) \times S \rightarrow S$ canonically defined by \eqref{eq_0_natural_left_action}, then we induce a covariant functor
\begin{equation}
\begin{aligned}
	N: & \mathfrak{PB}(GL^k(m)) \rightarrow \mathfrak{FBndl} \\
	  & P \mapsto P\times_{\lambda}S\\
	  & \Psi \mapsto L(\Psi) \equiv [\Psi,id_S]_{\lambda}
\end{aligned}
\end{equation}
where $\mathfrak{PB}(G)$ is the category of principal bundles with group $G$ and local principal bundle isomorphisms, and the action of $L(\Psi)$ on a point $[p,y]_{\lambda} \in P\times_{\lambda} S$ is understood as $[\Psi(p),y]_{\lambda}$. Combining this with Lemma \ref{lemma_0_principal_bndl_frame_k} we see that we can construct a functor $\mathfrak{L} \doteq N\circ L^k: \mathfrak{Man} \rightarrow \mathfrak{FBndl}$.

\begin{theorem}\label{thm_0_natural_functor_characterization}
    There is a bijective correspondence between the set of $k$th order natural bundles on $n$-dimensional manifolds and the associated $S$-bundle to a principal $GL^k(n,\mathbb{R})$-bundle. Moreover this correspondence takes the form of natural transformation between functors. 
\end{theorem}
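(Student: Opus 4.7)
The plan is to make the correspondence announced in Theorem~\ref{thm_0_natural_functor_characterization} explicit in both directions and then check it is functorial. Going from natural bundles to associated $S$-bundles is essentially already done: Theorem~\ref{thm_0_natural_bundle_characterization} shows that any $k$th order natural bundle $\mathfrak{N}$ determines the pair $(S,\lambda)$ with $S\doteq \pi_{\mathbb{R}^n}^{-1}(0)$ and $\lambda:GL^k(n,\mathbb{R})\times S\to S$ defined by \eqref{eq_0_natural_left_action}, together with a canonical bundle isomorphism $\Theta_M:L^k(M)\times_{\lambda}S\to \mathfrak{N}(M)$.

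For the converse direction I would define a functor $\mathfrak{L}_{\lambda}\doteq N\circ L^k$ as in the paragraph preceding the statement: given $(S,\lambda)$ set $\mathfrak{L}_{\lambda}(M)\doteq L^k(M)\times_{\lambda}S$, and on a local diffeomorphism $\psi:U\to V$ set $\mathfrak{L}_{\lambda}(\psi)\doteq [L^k(\psi),\mathrm{id}_S]_{\lambda}$. Here $L^k(\psi)$ is the local principal morphism from Lemma~\ref{lemma_0_natural_equivalence_1}, and well-definedness on equivalence classes follows from the equivariance $L^k(\psi)(p\cdot g)=L^k(\psi)(p)\cdot g$ combined with the very definition of $\times_{\lambda}$. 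I would then check that $\mathfrak{L}_{\lambda}$ satisfies the three axioms of Definition~\ref{def_0_natural_bundles}: $(i)$ is immediate from the associated bundle construction; $(ii)$ reduces to the fact that for an inclusion $i:U\hookrightarrow M$ the bundle $L^k(U)$ is precisely $(\pi^k)^{-1}(U)\subset L^k(M)$, so the associated bundle inclusion is likewise a sub-bundle inclusion; $(iii)$ follows from smooth dependence of the jet prolongation $(s,x)\mapsto j^k_{x}\phi_{s}$ on parameters, which transports to $\mathfrak{L}_{\lambda}(\phi)$ through the smooth quotient map $L^k(M)\times S\to L^k(M)\times_{\lambda}S$. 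Finiteness of order $k$ is built in, since $\mathfrak{L}_{\lambda}(\psi)$ depends on $\psi$ only through its $k$-jet.

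The two correspondences are mutually inverse. One direction, $\mathfrak{N}\mapsto (S,\lambda)\mapsto \mathfrak{L}_{\lambda}$, is exactly the content of Theorem~\ref{thm_0_natural_bundle_characterization}, which provides the isomorphism $\Theta_M:\mathfrak{L}_{\lambda}(M)\to \mathfrak{N}(M)$. The other direction is routine: starting from $(S,\lambda)$, the standard fiber of $\mathfrak{L}_{\lambda}(\mathbb{R}^n)=L^k(\mathbb{R}^n)\times_{\lambda}S$ over $0$ is canonically $\{[j^k_{0}\mathrm{id},y]_{\lambda}:y\in S\}\simeq S$, and the action induced by \eqref{eq_0_natural_left_action} coincides with the original $\lambda$ because for $\alpha\in GL^k(n,\mathbb{R})$ one has $\mathfrak{L}_{\lambda}(\alpha)[j^k_{0}\mathrm{id},y]_{\lambda}=[j^k_{0}\alpha,y]_{\lambda}=[j^k_{0}\mathrm{id},\lambda_{\alpha}(y)]_{\lambda}$.

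The naturality statement means that the family $\{\Theta_M\}_{M}$ forms a natural isomorphism $\Theta:\mathfrak{L}_{\lambda}\Rightarrow \mathfrak{N}$ of functors $\mathfrak{Man}\to \mathfrak{FBndl}$. The hard step is checking the square
\begin{equation*}
\begin{tikzcd}
L^k(M)\times_{\lambda}S \arrow[r,"\Theta_M"]\arrow[d,"{[L^k(\psi),\mathrm{id}_S]_{\lambda}}"'] & \mathfrak{N}(M)\arrow[d,"\mathfrak{N}(\psi)"] \\
L^k(N)\times_{\lambda}S \arrow[r,"\Theta_N"'] & \mathfrak{N}(N)
\end{tikzcd}
\end{equation*}
commutes for every local diffeomorphism $\psi:M\to N$. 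This I would verify pointwise on a representative $[j^k_{x}e,y]_{\lambda}$: going right then down gives $\mathfrak{N}(\psi)\bigl(\mathfrak{N}(e)(y)\bigr)=\mathfrak{N}(\psi\circ e)(y)$ by functoriality of $\mathfrak{N}$, while going down then right gives $\mathfrak{N}(\psi\circ e)(y)$ by the definition of $\Theta_N$ on $[j^k_{x}(\psi\circ e),y]_{\lambda}$. The main obstacle, and the one that justifies the finite-order hypothesis, is precisely invoking that these two expressions do not depend on the choice of representative $e$: if $j^k_{0}e=j^k_{0}e'$ then $\mathfrak{N}(e)=\mathfrak{N}(e')$ on $\pi_M^{-1}(x)$, which is exactly the $k$th order finiteness assumption. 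Once this is settled, the assignments $\mathfrak{N}\leftrightarrow (S,\lambda)$ and the naturality square above together encode precisely the bijective correspondence, promoted to a natural isomorphism between the functors $N\circ L^k$ and $\mathfrak{N}$.
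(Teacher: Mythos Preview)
Your proposal is correct and follows the same approach as the paper: defining the isomorphism $\Theta_M$ (the paper calls it $\chi_M$) by $[j^k_0 e, y]_{\lambda}\mapsto \mathfrak{N}(e)(y)$ and verifying the naturality square via functoriality of $\mathfrak{N}$, exactly as you do. Your treatment is in fact more complete than the paper's, which only spells out the naturality square and leaves the bijectivity check and the verification that $\mathfrak{L}_{\lambda}$ satisfies the axioms of Definition~\ref{def_0_natural_bundles} implicit.
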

\begin{proof}
Let $\mathfrak{N}$ a natural bundle functor, fixed $M \in \mathrm{Ob}(\mathfrak{Man})$, we can consider (up to diffeomorphisms) the standard fiber $S \simeq \pi^{-1}(x)$, where $\mathfrak{N}(M)=(B, \pi, M, S)$ is the natural bundle in question. Call $q_{j^k_0e}:L^k(M)\times_{\lambda}S \mid_{j^k_0e} \rightarrow \mathfrak{N}(M)|_{x=e(0)}$ the isomorphism introduced in Theorem \ref{thm_0_natural_bundle_characterization}, and set
$$
	\chi_M: \mathfrak{L}(M) \rightarrow \mathfrak{N}(M):(j^k_0e,y) \mapsto \mathfrak{N}(e)(y).
$$
Suppose that $\psi \in \mathrm{Diff}_{(\mathrm{loc})}(M)$, then 
$$
	\mathfrak{N}(\psi)\circ \chi_M([j^k_0e,y]_{\lambda})= \mathfrak{N}(\psi)\circ \mathfrak{N}(e)(y)=\mathfrak{N}(\psi \circ e)(y)= \chi_N([\psi \circ e, y]_{\lambda})= \chi_N \circ \mathfrak{L}(\psi) ([j^k_0e,y]_{\lambda}).
$$
Equivalently we showed that the following diagram is commutative:
\begin{center}
\begin{tikzpicture}[every node/.style={midway}]
\matrix[column sep={10em,between origins},
        row sep={3em}] at (0,0)
{ \node(A)   {$\mathfrak{L}(M)$}  ; & \node(B) {$\mathfrak{N}(M)$};\\
  \node(C) {$\mathfrak{L}(N)$};  & \node(D) {$\mathfrak{N}(N)$};                 \\};
\draw[->] (A) -- (B) node[anchor=south] {$\chi_M $};
\draw[->] (A) -- (C) node[anchor=east]  {$\mathfrak{L}(\psi)$};
\draw[->] (B) -- (D) node[anchor=west]  {$\mathfrak{N}(\psi)$};
\draw[->] (C) -- (D) node[anchor=north]  {$\chi_N$};
\end{tikzpicture}
\end{center}
Therefore $\chi:\mathfrak{L}\Rightarrow \mathfrak{N}$ is a natural transformation.
\end{proof}

A worthy example of a natural bundle is the bundle of linear connections $\mathrm{Conn}(M)$ of $M$. The latter is defined as the associated bundle to $L^2(M)$ via the action
$$
	\lambda: GL^2(n) \times \mathbb{R}^{n^3} \rightarrow \mathbb{R}^{n^3} : (J^{\alpha}_{\beta },J^{\alpha}_{\beta \gamma}, \Gamma^{\alpha}_{\beta \gamma}) \mapsto J^{\alpha}_{\beta} \left( \Gamma^{\beta}_{\mu \nu} \bar{J}^{\mu}_{\gamma} \bar{J}^{\nu}_{\delta}+ \bar{J}^{\beta}_{\gamma \delta} \right)
$$
where as before $J$ denotes an element of $GL(n)$ and $\bar{J}$ its inverse. Another example is 
\begin{equation}\label{eq_2_H(M)_natural_bundle}
H(M)=\left(HM\doteq \mathrm{Lor}(M)\times_M (M\times\mathbb{R}^2), \pi,M ,S^{2}_{(1,n-1)}(\mathbb{R}^n)\times \mathbb{R}^2 \right),
\end{equation}
where $S^{2}_{(1,n-1)}(\mathbb{R}^m)$ denotes the space of signature $(1,n-1)$ symmetric matrices. Coordinates will be given by fibered local charts $(x^{\alpha},g_{\mu \nu}, m,\kappa)$. We stress that since the coefficients of the metric are symmetric then not all of them are independent, therefore when we refer to $\{g_{\mu\nu}\}$ as coordinates we are really considering just $\{g_{\mu\nu}\}_{1\leq \mu < \nu \leq n}$. Physical scaling on $H(M)$ is then defined as follows: let $\lambda \in \mathbb{R}_+$, then we define a fibered isomorphism projecting onto the identity of $M$ by
\begin{equation}\label{eq_2_physical_scaling}
\begin{aligned}
    S_{\lambda}: & HM \to HM, \quad 
    &(x,g_{\mu \nu}, m,\kappa) \mapsto (x, \lambda^{-2} g_{\mu \nu}, \lambda^2m, \kappa).
\end{aligned}
\end{equation}
Here the coordinates $(m,\kappa)$ will later represent the mass and coupling constant on the Klein-Gordon equation on $M$, for now however we take them to just be coordinates of the fiber. We also note that $S_{\lambda}$ induces as well a mapping $:\Gamma^{\infty}(M \leftarrow HM) \rightarrow \Gamma^{\infty}(M \leftarrow HM):h\mapsto h_{\lambda} $ which we denote by the same symbol.

\begin{definition}\label{def_2_natural_bundle_with_scaling}
A natural bundle equipped with \textit{physical scaling} is a natural functor 
$$
    H: \mathfrak{Man} \rightarrow \mathfrak{FBndl},
$$
from the category of manifolds and local diffeomorphisms to the category of fiber bundles and local fiber isomorphisms, with the additional property that for any $M\in Ob(\mathfrak{Man})$ there exists a corresponding physical scaling $S_{\lambda}$ on $HM$ such that for any $\psi \in \mathrm{Diff}_{loc}(M,M')$ and $\lambda \in \mathbb{R}_+$ then 	
$$
	S'_{\lambda} \circ H(\psi)=H(\psi) \circ S_{\lambda}.
$$
\end{definition}

Clearly \eqref{eq_2_physical_scaling} is an example of physical scaling according to Definition \ref{def_2_natural_bundle_with_scaling}.

\begin{definition}\label{def_2_background_geometry}
Let $ H: \mathfrak{Man} \rightarrow \mathfrak{Bndl}$ be the natural scaling bundle described in \eqref{eq_2_H(M)_natural_bundle}, a \textit{background geometry} is a pair $(M,h)$ where $h \in \Gamma^{\infty}(M\leftarrow HM)$ and $(M,g)$ is a time orientable, globally hyperbolic spacetime. Let then $\mathfrak{BckG}$ be the category of background geometry, whose objects are time orientable globally hyperbolic spacetimes and whose morphisms are time respecting, causality preserving smooth isometric embeddings. 
\end{definition}

Note that in the above definition any such arrow $\chi: M\rightarrow M'$ of $\mathfrak{BckG}$, is in particular a local diffeomorphism, so that $H (\chi) $ is defined according to Definition \ref{def_0_natural_bundles}; moreover, the transformation induced on the object $(M',h')$ is given by $\left(M,h=\chi^{*}h'\right)$, where the section $\chi^{*}h'$ is defined as the mapping $H \left(\chi\vert_{H(\chi(M))}\right)^{-1} \circ h' \circ \chi$ guaranteeing the commutativity of the diagram 
\begin{center}
\begin{tikzpicture}[every node/.style={midway}]
\matrix[column sep={10em,between origins},
        row sep={3em}] at (0,0)
{ \node(A)   {$HM$}  ; & \node(B) {$HM'$};\\
  \node(C) {$M$};  & \node(D) {$M'$};                 \\};
\draw[->] (A) -- (B) node[anchor=south] {$H(\chi)$};
\draw[->] (C) -- (A) node[anchor=east]  {$\chi^*h'$};
\draw[->] (D) -- (B) node[anchor=west]  {$h'$};
\draw[->] (C) -- (D) node[anchor=north]  {$\chi$};
\end{tikzpicture}
\end{center} 
We stress that the physical scaling introduced above is consistent with the pullback of sections introduced above, due to commutativity with $H(\chi)$.\\

\subsection{Covariance identity for natural bundles}\label{section_covariance_identity}

In this section, we will present an important result (\Cref{thm_2_covariant_identity}), which states that if a smooth function $f$ from the $r$th order jet prolongation of a natural bundle $J^r(\mathfrak N(M))$ to $\mathbb R$ is diffeomorphism covariant, then it must satisfy some strict conditions: the coordinate dependence is limited to all those coordinates of $J^r(\mathfrak N(M))$ which are tensor-like quantities, furthermore, $f$ can only be a function of all the scalar quantities constructed out of the tensor-like coordinates mentioned earlier. This result, which we will prove for the natural bundle \eqref{eq_2_H(M)_natural_bundle}, will be crucial for the proof of \Cref{thm_2_Moretti_Kavhkine}, which essentially classifies the ambiguities of the definition of Wick powers as scalars constructed out of the matric and the coupling constants.\\  

A useful consequence of the formalism of natural bundles is the possibility of characterizing $\mathrm{Diff}(M)$-invariant functions on the $r$th jet order $J^r\mathfrak N (M)$ for some natural bundle $\mathfrak N(M)$ by means of the natural formalism developed in \cite{fati, kolar2013natural}. Instead of analyzing the general case we are going to study the case $\mathfrak N (M)=HM$ defined in \eqref{eq_2_H(M)_natural_bundle}. There we assign coordinates $(x^{\mu},g_{\mu\nu},m^2,\kappa)$, thus the $r$th order jet prolongation has canonical coordinates 
$$
	(x^{\mu},g_{\mu\nu},\ldots,g_{\mu\nu,\lambda_1\ldots,\lambda_r},m^2,\ldots,m^2_{\lambda_1\ldots,\lambda_r},\kappa,\ldots,\kappa_{\lambda_1\ldots,\lambda_r}).
$$
to derive the key result of covariance it is necessary to choose suitable coordinates in $J^rHM$ other than the above. 
\begin{proposition}\label{proposition_2_covariant_coordinates}
Let $J^rHM$ be the natural bundle described above then, for each $r\in \mathbb R$, there is a coordinates transformation
\begin{center}
\begin{tikzcd}
    (x^{\mu},g_{\mu\nu,\Lambda},m^2_{\Lambda},\kappa_{\Lambda}) \arrow[d] \\
	 (x^{\mu},g_{\mu\nu},d_{(\Lambda}S^{\alpha}_{\beta\mu)}, \nabla_{(\Lambda}R^{\alpha}_{\beta\mu)\nu},\nabla_{(\Lambda)}m^2,\nabla_{(\Lambda)}\kappa)
\end{tikzcd}
\end{center}
where $\Lambda=(\lambda_1,\ldots,\lambda_k)$ is a multi-index with $k=|\Lambda|\leq r-2$, round brackets between indices denotes symmetrization, the coefficients $S$ and $R$ are respectively the coefficients of the Levi-Civita connection and the Riemann tensor induced by $g$; finally, $d_{\mu}(\cdot) dx^{\mu}$ denotes the horizontal differential on $j^rHM$.
\end{proposition}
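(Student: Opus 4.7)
The plan is to prove the existence of the coordinate change by induction on the jet order $r$. The case $r=0$ is trivial; for $r=1$ the classical formula
$$
S^{\alpha}_{\beta\mu} \;=\; \tfrac{1}{2}\, g^{\alpha\sigma}\bigl(g_{\sigma\beta,\mu}+g_{\sigma\mu,\beta}-g_{\beta\mu,\sigma}\bigr)
$$
is linear in $g_{\mu\nu,\lambda}$ with coefficients depending smoothly on $g^{\mu\nu}$, and is inverted by $g_{\mu\nu,\lambda}=g_{\mu\alpha}S^{\alpha}_{\nu\lambda}+g_{\nu\alpha}S^{\alpha}_{\mu\lambda}$. Hence $(x^{\mu},g_{\mu\nu},S^{\alpha}_{\beta\mu})$ are valid fibered coordinates on $J^{1}HM$, and any smooth dependence on first-order metric jet data can be smoothly re-expressed in the new variables.

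For the inductive step, assuming the result at order $r-1$, I would establish it at order $r$ by treating the metric first and then the scalar coordinates $m^{2}$ and $\kappa$. The key identity is the decomposition of the derivative of a Christoffel symbol into its completely symmetric and its anti-symmetric part via the curvature identity
$$
R^{\alpha}_{\beta\mu\nu}\;=\;\partial_{\mu}\Gamma^{\alpha}_{\beta\nu}-\partial_{\nu}\Gamma^{\alpha}_{\beta\mu}+\Gamma^{\alpha}_{\sigma\mu}\Gamma^{\sigma}_{\beta\nu}-\Gamma^{\alpha}_{\sigma\nu}\Gamma^{\sigma}_{\beta\mu}.
$$
The fully symmetrized total derivative $d_{(\lambda}S^{\alpha}_{\beta\mu)}$ captures the ``pure gauge'' part of $\partial_{\lambda}\Gamma^{\alpha}_{\beta\mu}$, while the Riemann tensor captures its anti-symmetric part up to lower-order $\Gamma\Gamma$-terms. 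Together they reconstruct $\partial_{\lambda}\Gamma^{\alpha}_{\beta\mu}$, and hence $g_{\mu\nu,\lambda\rho}$, modulo data already available from the inductive hypothesis. Iterating this argument, at each order $k\leq r$ the symmetrized iterated total derivatives of $S$ and the symmetrized iterated covariant derivatives of $R$ together determine the symmetric tensor $g_{\mu\nu,\Lambda}$ with $|\Lambda|=k$ modulo lower-order quantities; inverting this triangular system completes the metric part of the step.

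For the scalar fields, denote either $m^{2}$ or $\kappa$ by $f$; an elementary induction on $k$ shows
$$
\nabla_{(\lambda_{1}\ldots\lambda_{k})}f \;=\; \partial_{(\lambda_{1}\ldots\lambda_{k})}f \;+\; P\bigl(S,d S,\ldots,d^{k-2}S,\,\partial f,\ldots,\partial^{k-1}f\bigr),
$$
with $P$ a polynomial whose leading partial-derivative term has unit coefficient. Once the new metric coordinates have been installed, the transformation from $(f_{\Lambda})_{|\Lambda|\leq r}$ to $(\nabla_{(\Lambda)}f)_{|\Lambda|\leq r}$ is therefore upper-triangular in the differentiation order, with smooth diagonal entries equal to $1$, and is hence smoothly invertible. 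The main technical difficulty lies in the bookkeeping of the dimension count at each inductive stage: one must verify that the independent components of the symmetrized $S$- and $R$-data, subject to the symmetries of the Riemann tensor and to the algebraic (first) Bianchi identity, match exactly the independent components of $g_{\mu\nu,\Lambda}$ at every order. This is the classical local description of jets of pseudo-Riemannian metrics, and the required counting can be extracted from the treatment of natural metric concomitants in \cite[Ch.~28]{kolar2013natural}.
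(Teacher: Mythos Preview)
Your proposal is correct and follows essentially the same approach as the paper: induction on the jet order, with the $r=1$ step given by the classical Christoffel formula and its inverse $g_{\mu\nu,\lambda}=g_{\mu\alpha}S^{\alpha}_{\nu\lambda}+g_{\nu\alpha}S^{\alpha}_{\mu\lambda}$, the inductive step based on splitting $\partial_{\lambda}\Gamma$ into its fully symmetrized part and the curvature part (modulo lower-order $\Gamma\Gamma$ terms), and the scalar sector handled by the upper-triangular relation between $\nabla_{(\Lambda)}f$ and $\partial_{(\Lambda)}f$. The only substantive difference is one of detail rather than method: the paper works out the cases $r=1,2,3$ explicitly before stating the general induction step, and at each stage carries out the component count directly (e.g.\ verifying $n^2\left(\frac{n+1}{2}\right)^2 = n\binom{n+2}{3}+\frac{n^2(n^2-1)}{12}$ at order~$2$), whereas you defer that count to \cite[Ch.~28]{kolar2013natural}.
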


\begin{proof}
We shall divide this argument into four steps: we shall analyse the cases $r=1, 2, 3$ and then use those to formulate and induction hypothesis. We shall also restrict to the simpler case $HM=\mathrm{Lor}(M)$ the bundle of Lorentzian metrics on $M$ and come back to the general case at the end. 

\textbf{Step 1.}  We start with coordinates $(x^{\mu},g_{\mu\nu},d_{\alpha}g_{\mu\nu})$ on $J^1\mathrm{Lor}(M)$. Recall that the metric $g$ induces its Christoffel symbols, $S^{\alpha}_{\mu\nu}=\frac{1}{2}g^{\alpha \beta}(-d_{\beta}g_{\mu \nu}+d_{\mu}g_{\nu \beta}+d_{\nu}g_{\beta\mu})$, then 
$$
\begin{aligned}
	2g_{\alpha\beta}S^{\alpha}_{\mu\nu}+ 2g_{\alpha\mu}S^{\alpha}_{\beta\nu} &= \textcolor{red}{d_{\mu}g_{\nu\beta}}+ d_{\nu}g_{\beta\mu}-\textcolor{blue}{d_{\beta}g_{\mu\nu}}\\
	& \quad + \textcolor{blue}{d_{\beta}g_{\nu\mu}}+d_{\nu}g_{\mu\beta}-\textcolor{red}{d_{\mu}g_{\beta\nu}}\\
	&= 2d_{\nu}g_{\mu\beta}.
\end{aligned}
$$
There are in total $n^2\frac{n+1}{2}$ independent components of $d_{\nu}g_{\mu\beta}$ and of $S^{\alpha}_{\mu\nu}$, as a result
\begin{equation}\label{eq_2_covariant_coord_transformation_r=1}
\begin{cases}
    2d_{\nu}g_{\mu\beta}=2g_{\alpha\beta}S^{\alpha}_{\mu\nu}+ 2g_{\alpha\mu}S^{\alpha}_{\beta\nu}\\
    S^{\alpha}_{\mu\nu}=\frac{1}{2}g^{\alpha \beta}(-d_{\beta}g_{\mu \nu}+d_{\mu}g_{\nu \beta}+d_{\nu}g_{\beta\mu})
\end{cases}
\end{equation}
are one the inverse to each other and produce a well defined change of coordinates.

\textbf{Step 2.}  We start with coordinates $(x^{\mu},g_{\mu\nu},d_{\alpha}g_{\mu\nu},d_{\alpha\beta}g_{\mu\nu})$ on $J^2\mathrm{Lor}(M)$, by \eqref{eq_2_covariant_coord_transformation_r=1}, we can as well assume we have initial coordinates $(x^{\mu},g_{\mu\nu},S^{\alpha}_{\beta\nu},d_{\alpha\beta}g_{\mu\nu})$. Recall that the Riemann tensor components are expressed as $R^{\alpha}_{\beta \mu\nu}=d_{\mu}S^{\alpha}_{\beta \nu}+S^{\alpha}_{\sigma\mu}S^{\sigma}_{\beta\nu}-d_{\nu}S^{\alpha}_{\beta \mu}-S^{\alpha}_{\sigma\nu}S^{\sigma}_{\beta\mu}$. Next define
\begin{equation}\label{eq_2_S_1st_derivative_1}
	S^{\alpha}_{\beta\mu\nu}\doteq d_{(\mu}S^{\alpha}_{\beta\nu)}.
\end{equation}
we stress that round brackets means that all indices between them are symmetrized. 
$$
\begin{aligned}
    3S^{\alpha}_{\beta\mu\nu}-2R^{\alpha}_{(\beta\nu)\mu}&=d_{\mu}S^{\alpha}_{\beta\nu}+\textcolor{red}{d_{\beta}S^{\alpha}_{\nu\mu}}+\textcolor{blue}{d_{\nu}S^{\alpha}_{\mu\beta}}\\ & \quad -\textcolor{blue}{d_{\nu}S^{\alpha}_{\beta\mu}}+d_{\mu}S^{\alpha}_{\beta\nu}-\textcolor{red}{d_{\beta}S^{\alpha}_{\nu\mu}}+d_{\mu}S^{\alpha}_{\beta\nu} + O(j^1g)\\
    &= 3d_{\mu}S^{\alpha}_{\beta\nu} + O(j^1g)\\
\end{aligned}
$$
where we denoted by $O(j^1g)$ in the above formula terms which depends at most on $j^1g$ and for which \eqref{eq_2_covariant_coord_transformation_r=1} already gives the sought transformation. Summing up we have  
\begin{equation}\label{eq_2_S_1st_derivative_2}
    d_{\mu}S^{\alpha}_{\beta\nu}=S^{\alpha}_{\beta\mu\nu}-\frac{2}{3}R^{\alpha}_{(\beta\nu)\mu}+\ldots
\end{equation}
where $\ldots$ refers to terms living in the first order jet bundle. Next we compute
$$
\begin{aligned}
	d_{\alpha\beta}g_{\mu\nu} & = d_{\beta}\left(g_{\sigma\alpha}S^{\sigma}_{\mu \nu}+g_{\sigma\mu}S^{\sigma}_{\nu\alpha}\right)= g_{\sigma\alpha} d_{\beta}S^{\sigma}_{\mu \nu}+g_{\sigma\mu} d_{\beta}S^{\sigma}_{\nu\alpha}+O(j^1g)\\
	& =  g_{\sigma\alpha}\left(S^{\sigma}_{\mu\beta\nu}-\frac{2}{3}R^{\sigma}_{(\mu\nu)\beta} \right) +  g_{\sigma\mu}\left(S^{\sigma}_{\alpha\beta\nu}-\frac{2}{3}R^{\sigma}_{(\alpha\nu)\beta} \right)  O(j^1g)\\
	&= 2 g_{\sigma(\alpha}S^{\sigma}_{\mu)\beta\nu} -\frac{1}{3} \left( \textcolor{red}{R_{\alpha\mu\nu\beta}}+R_{\alpha\nu\mu\beta}+\textcolor{red}{R_{\mu\alpha\nu\beta}}+R_{\mu\nu\alpha\beta }\right)+O(j^1g)\\
	&= 2 g_{\sigma(\alpha}S^{\sigma}_{\mu)\beta\nu} -\frac{1}{3} \left( R_{\alpha\nu\mu\beta}+R_{\alpha\beta\mu\nu }\right)+O(j^1g).
\end{aligned}
$$
Finally, $d_{\alpha\beta}g_{\mu\nu}$ has $n^2\left(\frac{n+1}{2}\right)^2$ independent components, while $dS$ and the Riemann tensor $R$ have respectively $n\binom{n+2}{3}$ and $\frac{n^2}{12}(n+1)^2$ independent components. It follows easily that $n^2\left(\frac{n+1}{2}\right)^2=n\binom{n+2}{3}+\frac{n^2}{12}(n+1)^2$, therefore we conclude that the mapping 
$$
(x^{\mu},g_{\mu\nu},d_{\alpha}g_{\mu\nu},d_{\alpha\beta}g_{\mu\nu})\mapsto(x^{\mu},g_{\mu\nu},S^{\alpha}_{\mu\nu},S^{\alpha}_{\mu\beta\nu},R^{\alpha}_{\beta\mu\nu}),
$$
where up to order $1$ are defined by \eqref{eq_2_covariant_coord_transformation_r=2} and
\begin{equation}\label{eq_2_covariant_coord_transformation_r=2}
\begin{cases}
    d_{\alpha\beta}g_{\mu\nu}=2 g_{\sigma(\alpha}S^{\sigma}_{\mu)\beta\nu} -\frac{1}{3} \left( R_{\alpha\nu\mu\beta}+R_{\alpha\beta\mu\nu }\right)+O(j^1g),\\
    S^{\alpha}_{\beta\mu\nu}\doteq d_{(\mu}S^{\alpha}_{\beta\nu)},\quad R^{\alpha}_{\beta \mu\nu}=d_{\mu}S^{\alpha}_{\beta \nu}+S^{\alpha}_{\sigma\mu}S^{\sigma}_{\beta\nu}-d_{\nu}S^{\alpha}_{\beta \mu}-S^{\alpha}_{\sigma\nu}S^{\sigma}_{\beta\mu}
\end{cases}
\end{equation}
is a well defined coordinate transformation.

\textbf{Step 3.} Then again, by \eqref{eq_2_covariant_coord_transformation_r=1}, \eqref{eq_2_covariant_coord_transformation_r=2} we can assume to have initial coordinates 
$$
    (x^{\mu},g_{\mu\nu},S^{\alpha}_{\mu\nu},S^{\alpha}_{\mu\beta\nu},R^{\alpha}_{\beta\mu\nu},d_{\alpha\beta\gamma}g_{\mu\nu})
$$
on $J^3\mathrm{Lor}(M)$. From \eqref{eq_2_covariant_coord_transformation_r=2},
$$
\begin{aligned}
	d_{\alpha\beta\gamma}g_{\mu\nu}&= 2d_{\gamma}S_{(\alpha\mu)\beta\nu}-\frac{1}{3}\left( d_{\gamma}R_{\alpha\nu\mu\beta}+d_{\gamma}R_{\mu\nu\alpha\beta} \right)+ O(j^2g)\\
	&= 2d_{\gamma}S_{(\alpha\mu)\beta\nu}-\frac{1}{3}\left( \nabla_{\gamma}R_{\alpha\nu\mu\beta}+\nabla_{\gamma}R_{\mu\nu\alpha\beta} \right)+ O(j^2g).
\end{aligned}
$$
Then define $S^{\alpha}_{\gamma\mu\beta\nu}\doteq d_{(\gamma}S_{\alpha\mu\beta\nu)}$, which we can express as
$$
\begin{aligned}
	4S^{\alpha}_{\gamma\mu\beta\nu}& = d_{\gamma}S^{\alpha}_{\mu\beta\nu}+d_{\mu}S^{\alpha}_{\beta\nu\gamma}+d_{\beta}S^{\alpha}_{\nu\gamma\mu}+d_{\nu}S^{\alpha}_{\gamma\mu\beta}\\
	&= d_{\gamma}S^{\alpha}_{\mu\beta\nu}+\frac{1}{3} \left( d_{\mu\nu}S^{\alpha}_{\beta\gamma}+d_{\mu\beta}S^{\alpha}_{\gamma\mu}+d_{\mu\gamma}S^{\alpha}_{\nu\beta}\right)\\
	&\quad+\frac{1}{3} \left( d_{\beta\gamma}S^{\alpha}_{\nu\mu}+d_{\beta\nu}S^{\alpha}_{\mu\gamma}+d_{\beta\mu}S^{\alpha}_{\gamma\nu}\right)\\
	&\quad +\frac{1}{3} \left( d_{\nu\mu}S^{\alpha}_{\beta\gamma}+d_{\nu\beta}S^{\alpha}_{\gamma\mu}+d_{\nu\gamma}S^{\alpha}_{\nu\beta}\right)\\
	&= d_{\gamma}S^{\alpha}_{\mu\beta\nu}+\frac{1}{3}d_{\gamma} \left(d_{\nu}S^{\alpha}_{\nu\beta} +d_{\mu}S^{\alpha}_{\nu\beta}+d_{\beta}S^{\alpha}_{\nu\mu}\right)\\ &\quad+\frac{2}{3} \left( d_{\mu\nu}S^{\alpha}_{\beta\gamma}+d_{\mu\beta}S^{\alpha}_{\gamma\mu}+d_{\beta\nu}S^{\alpha}_{\mu\gamma}\right)\\
	&= 2d_{\gamma}S^{\alpha}_{\mu\beta\nu}+\frac{2}{3} \left( d_{\mu\nu}S^{\alpha}_{\beta\gamma}+d_{\mu\beta}S^{\alpha}_{\nu\gamma}+d_{\beta\nu}S^{\alpha}_{\mu\gamma}\right).\\
\end{aligned}
$$
From the definition of the Riemann tensor we get  $d_{\gamma}R^{\alpha}_{\nu\mu\beta}=d_{\gamma\mu}S^{\alpha}_{\nu\beta}-d_{\gamma\beta}S^{\alpha}_{\nu\mu}+O(j^2g)$, which plugged in the above expression yields
$$
\begin{aligned}
	4S^{\alpha}_{\gamma\mu\beta\nu}&= 2d_{\gamma}S^{\alpha}_{\mu\beta\nu}+\frac{2}{3} \left( d_{\mu\gamma}S^{\alpha}_{\beta\nu}-d_{\mu}R^{\alpha}_{\beta\gamma\nu}+d_{\beta\gamma}S^{\alpha}_{\nu\beta}-d_{\beta}R^{\alpha}_{\nu\gamma\mu}+d_{\gamma\nu}S^{\alpha}_{\mu\beta}+d_{\nu}R^{\alpha}_{\mu\beta\gamma}\right)+O(j^2g)\\
	&=4d_{\gamma}S^{\alpha}_{\mu\beta\nu}+\frac{2}{3} \left(d_{\mu}R^{\alpha}_{\beta\nu\gamma}+d_{\beta}R^{\alpha}_{\nu\mu\gamma}+d_{\nu}R^{\alpha}_{\mu\beta\gamma}\right)+O(j^2g)\\
	&=4d_{\gamma}S^{\alpha}_{\mu\beta\nu}+\frac{2}{3} \left(\nabla_{\mu}R^{\alpha}_{\beta\nu\gamma}+\nabla_{\beta}R^{\alpha}_{\nu\mu\gamma}+\nabla_{\nu}R^{\alpha}_{\mu\beta\gamma}\right)+O(j^2g),
\end{aligned}
$$
applying Bianchi identities simplifies the last term, and we get
\begin{equation}\label{eq_2_S_1st_derivative_3}
S^{\alpha}_{\gamma\mu\beta\nu}=d_{\gamma}S^{\alpha}_{\mu\beta\nu}+O(j^2g).
\end{equation} 
Counting the independent components of $d_{\alpha\beta\gamma}g_{\mu\nu}$, $S^{\alpha}_{\gamma\mu\beta\nu}$ and $\nabla_{\gamma}R^{\alpha}$ yields respectively $\binom{n+1}{2}\binom{n+2}{3}$,  $n\binom{n+3}{4}$ and $\frac{n+2}{2}\frac{n^2}{12}(n+1)^2$. Summing the last two numebers gives the first, therefore the mapping
$$
	(x^{\mu},g_{\mu\nu},d_{\alpha}g_{\mu\nu},d_{\alpha\beta}g_{\mu\nu},d_{\alpha\beta\gamma}g_{\mu\nu})\mapsto(x^{\mu},g_{\mu\nu},S^{\alpha}_{\mu\nu},S^{\alpha}_{\mu\beta\nu},S^{\alpha}_{\gamma\mu\beta\nu},R^{\alpha}_{\beta\mu\nu},\nabla_{\gamma}R^{\alpha}_{\beta\mu\nu})
$$
is a proper change of coordinates, with
\begin{equation}\label{eq_2_covariant_coord_transformation_r=3}
\begin{cases}
d_{\alpha\beta\gamma}g_{\mu\nu}=2g_{\sigma(\alpha}S^{\sigma}_{\beta)\gamma\mu\nu}-\frac{1}{3}\left(\nabla_{\gamma}R_{\alpha\nu\mu\beta}+\nabla_{\gamma}R_{\mu\nu\alpha\beta}\right)+O(j^2g).\\
S^{\alpha}_{\gamma\mu\beta\nu}=d_{\gamma}S^{\alpha}_{\mu\beta\nu}+O(j^2g),\quad d_{\gamma}R_{\alpha}{\nu\mu\beta} =\nabla_{\gamma}R_{\alpha}{\nu\mu\beta}+ O(j^2g)
\end{cases}
\end{equation}

\textbf{Step 4.}  

By induction, we assume that up to $r-1 \geq 3$, we have a one to one coordinate transformation
\begin{equation}\label{eq_2_covariant_coord_transformation_r=r-1}
\begin{cases}
d_{\alpha\beta\Lambda}g_{\mu\nu}=2g_{\sigma(\alpha}S^{\sigma}_{\beta)\Lambda\mu\nu}-\frac{1}{3}\left(\nabla_{\Lambda}R_{\alpha\nu\mu\beta}+\nabla_{\Lambda}R_{\mu\nu\alpha\beta}\right)+O(j^{r-2}g).\\
d_{\beta}S^{\alpha}_{\Lambda\mu\nu}=S^{\alpha}_{\beta\Lambda\mu\nu}+O(j^{r-2}g),\quad d_{\Lambda}R^{\alpha}_{\nu\mu\beta} =\nabla_{\Lambda}R^{\alpha}_{\nu\mu\beta}+ O(j^{r-2}g)
\end{cases}
\end{equation}
where $|\Lambda|=r-3$. Using the induction hypothesis $d_{\beta}S^{\alpha}_{\Lambda\mu\nu}=S^{\alpha}_{\beta\Lambda\mu\nu}+O(j^{r-2}g)$ and repeating a similar argument to the one made for the case $r=3$, one can show that 
\begin{equation}\label{eq_2_S_1st_derivative_r}
	S^{\alpha}_{\mu_0\ldots\mu_{r}}\doteq d_{(\mu_0}S^{\alpha}_{\mu_1\ldots\mu_r)}= d_{\mu_0}S^{\alpha}_{\mu_1\ldots\mu_r}+O(j^{r-1}g),
\end{equation}
Where similarly to what we did before, we denote by $O(j^{r-1}g)$ for a polynomial function of the coordinates depending up to the $r-1$th order derivatives of the metric. Then we have
$$
d_{\lambda_1\ldots\lambda{r}}g_{\mu\nu}=2g_{\sigma(\lambda_1}S^{\sigma}_{\mu)\nu\lambda_2\ldots\lambda_r}+\frac{2}{3} \nabla_{\lambda_3}\nabla_{\lambda_4\ldots\lambda_{r}}R_{\nu(\lambda_1\mu)\lambda_2}+O(j^{r-1}g),
$$
we can recast the above equation $\nabla_{\lambda_3}\nabla_{\lambda_4\ldots\lambda_{r}}R_{\nu(\lambda_1\mu)\lambda_2}=\nabla_{\lambda_3\ldots\lambda_{r}}R_{\nu(\lambda_1\mu)\lambda_2}+O(j^{r-2}g)$ using the identity $[\nabla_{\alpha},\nabla_{\beta}]=R^{\cdot}_{\cdot\alpha\beta}$
we get for $|\Lambda|=r-2$, 
\begin{equation}
d_{\alpha\beta\Lambda}g_{\mu\nu}=2g_{\sigma(\alpha}S^{\sigma}_{\beta)\Lambda\mu\nu}-\frac{1}{3}\left(\nabla_{\Lambda}R_{\alpha\nu\mu\beta}+\nabla_{\Lambda}R_{\mu\nu\alpha\beta}\right)+O(j^{r-1}g)
\end{equation}
defines a proper coordinate transformation for the highest order derivatives of the metric. \\

In our case where two scalar functions are present, we notice that each symmetrized derivative, \textit{e.g.} $d_{\Lambda}\kappa$, can be written as $\nabla_{\Lambda}\kappa$, using the fact that $\nabla_{\lambda}\kappa=d_{\lambda}\kappa$ and for all $\Lambda, \ \Lambda'$, $|\Lambda|=k\geq 2$, $|\Lambda'|=k-1$, we have $\nabla_{\Lambda}\kappa-d_{\Lambda}\kappa = O(\nabla_{\Lambda'}\kappa,j^{|\Lambda|-1}g)$.
\end{proof}


\begin{theorem}\label{thm_2_covariant_identity}
    If $f:J^rHM\rightarrow \mathbb{R}$ is a diffeomorphism covariant function, then using the coordinates introduced in Proposition \ref{proposition_2_covariant_coordinates} it satisfies the following relations:
\begin{itemize}
\item[$(i)$] $$ \frac{\partial f}{\partial x^{\sigma}}=0;$$ 
\item[$(ii)$] $$ \frac{\partial f}{\partial S^{\alpha}_{\mu\nu}}=\frac{\partial f}{\partial S^{\alpha}_{\beta\mu\nu}}=\ldots=\frac{\partial f}{\partial S^{\alpha}_{\lambda_1\ldots\lambda_r \mu}}=0;$$
\item[$(iii)$] furthermore, it holds 
$$
\begin{aligned}
    0&=2\frac{\partial f}{\partial g_{\sigma\beta}}f g_{\rho\beta} + \frac{\partial f}{\partial R^{\alpha}_{\beta\mu\nu}}\Big(-\delta^{\alpha}_{\rho}R^{\sigma}_{\beta\mu\nu}+\delta^{\sigma}_{\beta}R^{\alpha}_{\rho\mu\nu}+2\delta^{\sigma}_{\mu}R^{\alpha}_{\beta\rho\nu}\Big) +\ldots\\
    &\quad+\frac{\partial f}{\partial \nabla_{\Lambda}R^{\alpha}_{\beta\mu\nu}} \Big(-\delta^{\alpha}_{\sigma}\nabla_{\Lambda}R^{\sigma}_{\beta\nu\mu} +\delta^{\sigma}_{\beta}\nabla_{\Lambda}R^{\alpha}_{\rho\nu\mu}+2\delta^{\sigma}_{\nu}\nabla_{\Lambda}R^{\alpha}_{\beta\rho\mu} +(r-3)\delta^{\sigma}_{\lambda_1}\nabla_{\rho\Hat{\Lambda}}R^{\alpha}_{\beta\nu\mu}\Big)\\
    &\quad+\frac{\partial f}{\partial \nabla_{\theta}m^2} \Big(\delta^{\sigma}_{\theta}\nabla_{\rho}m^2\Big)+ \ldots+\frac{\partial f}{\partial \nabla_{\Theta}m^2} \Big(-r\delta^{\sigma}_{\theta_1}\nabla_{\rho \Hat{\Theta}}m^2\Big)\\
    &\quad+\frac{\partial f}{\partial \nabla_{\theta}\kappa} \Big(\delta^{\sigma}_{\theta}\nabla_{\rho}\kappa\Big)+ \ldots+\frac{\partial f}{\partial \nabla_{\Theta}\kappa}\Big(-r\delta^{\sigma}_{\theta_1}\nabla_{\rho \Hat{\Theta}}\kappa\Big)
\end{aligned}
$$
where $\Lambda=(\lambda_1,\ldots,\lambda_{r-3})$ is a multi-index of length $r-3$ and $\Hat{\Lambda}=(\lambda_2,\ldots,\lambda_{r-3})$.
\end{itemize} 
\end{theorem}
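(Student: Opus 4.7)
The proof rests on infinitesimal diffeomorphism covariance. Since $f \circ J^rH(\chi) = f$ for every local diffeomorphism $\chi$, differentiating along the flow of a vector field $X\in\mathfrak{X}(M)$ yields $\tilde X f = 0$, where $\tilde X$ denotes the canonical prolongation of $X$ to $J^rHM$. The plan is to fix an arbitrary point $x_0\in M$ and, for each of the three conclusions, pick a distinguished family of vector fields whose Taylor jet at $x_0$ isolates a specific piece of the infinitesimal action on $J^rHM|_{x_0}$.

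For (i), I would take the constant vector field $X = v^\sigma \partial_\sigma$, whose flow $\chi_t(x)=x+tv$ is pure translation and acts trivially on the fibres of $HM$. The covariance identity then reads $f(x_0,\,\text{tensor data at }x_0{+}tv) = f(x_0{+}tv,\,\text{tensor data at }x_0{+}tv)$; since on $J^rHM$ the position and the tensor coordinates are independent free variables, this forces $\partial f/\partial x^\sigma=0$.

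For (ii), for each order $k\geq 2$ I would choose $X$ vanishing to order $k-1$ at $x_0$ with $D^k X|_{x_0}$ an arbitrary symmetric tensor and all higher derivatives vanishing. Then $\chi_t$ fixes $x_0$ to first order in $t$ and $D\chi_t|_{x_0}=I$ to first order, so every tensorial coordinate — $g_{\mu\nu}$, $\nabla_\Lambda R^{\alpha}{}_{\beta\mu\nu}$, $\nabla_\Theta m^2$, $\nabla_\Theta \kappa$ — is unchanged to first order, and every $S$-type coordinate $S^\alpha_{\lambda_1\cdots\lambda_{l-2}\mu\nu}$ with $l\neq k$ is likewise unchanged, because it depends only on the vanishing jets $D^l X|_{x_0}$. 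The non-tensorial part of the transformation of the Levi-Civita connection shifts $S^\alpha_{\lambda_1\cdots\lambda_{k-2}\mu\nu}$ by the symmetrized $D^k X|_{x_0}$, which is an arbitrary totally symmetric tensor; the invariance $\tilde X f = 0$ then forces $\partial f/\partial S^\alpha_{\lambda_1\cdots\lambda_{k-2}\mu\nu}=0$. Running $k$ through the full range of orders compatible with $J^r$ yields (ii). For (iii), I would then pick $X$ with $X(x_0)=0$, $\partial_\mu X^\nu|_{x_0} = M^\nu_\mu$ arbitrary, and all higher derivatives zero. The flow fixes $x_0$ and acts linearly on $T_{x_0}M$ by $I+tM$, inducing the standard infinitesimal $GL(n)$ action on the jet data (downstairs indices transform by $M$, upstairs by $-M$); the would-be $S$-contribution vanishes thanks to (ii), so only the tensorial coordinates contribute. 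Collecting the coefficient of $M^\sigma_\rho$ in $\tilde X f = 0$, and combining the arising terms via the symmetries of the Riemann tensor (antisymmetry in the last index pair merges two of the Riemann contributions, producing the factor $2$) together with the full symmetry of the multi-indices $\Lambda,\Theta$ (which yields the multiplicative factors $r-3$ and $r$), produces precisely the identity in (iii).

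The main technical obstacle is justifying in (ii) that the shifts of the various $S$-type coordinates at distinct orders are genuinely independent of each other and of all tensorial coordinates. This is exactly where Proposition \ref{proposition_2_covariant_coordinates} is essential: that change of coordinates cleanly isolates the non-tensorial content of the jet of the metric inside the symmetrized connection jets $d_{(\Lambda}S^{\alpha}_{\beta\mu)}$, so that the perturbation induced by $D^k X|_{x_0}$ can be tracked transparently, order by order, by induction on $k$.
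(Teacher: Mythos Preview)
Your proposal is correct and follows essentially the same strategy as the paper: both derive the constraints from the infinitesimal covariance identity $\mathscr{L}_{j^r\hat X}f=0$ and then exploit the arbitrariness of the jet of $X$ at a point to separate the identity into independent pieces. The only difference is organizational---the paper writes out the full Lie-derivative expansion for a generic $X$ and then reads off the coefficients of $X^\sigma,\nabla_\rho X^\sigma,\ldots,\nabla^{r+1}X$ (obtaining (ii) recursively from the top down, and (i),(iii) together from the resulting $2\times 2$ system in the $X^\sigma$ and $\nabla_\rho X^\sigma$ coefficients), whereas you pick tailored vector fields (constant, purely $k$-th order, purely linear) at each step; these are equivalent packagings of the same argument.
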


Condition $(iii)$ essentially states that the function $f$ is a function whose arguments are all covariant scalars that can be constructed out of the coordinated of Proposition \ref{proposition_2_covariant_coordinates}. These are always finite such objects.

\begin{proof}
Then again, for simplicity's sake, we study only the case where $f$ is a smooth function in $J^r\mathrm{Lor}(M)$ and briefly comment at the end how the adapt this argument in the general case.\\

Recall that since both the jet functor and the natural functor $M \to \mathrm{Lor}(M)$ are covariant, then $N(M)=J^r\mathrm{Lor}(M)$ is a natural bundle as well, thus, by Definition \ref{def_0_natural_bundles}, we can lift any $\chi\in \mathrm{Diff}(M)$ to an isomorphism $N\chi$ of $J^r\mathrm{Lor}(M)$. $f$ is covariant whether $j^r \mathrm{Lor}(\chi)^{*}f=f$ for all $\chi\in \mathrm{Diff}(M)$. If $\chi_s$ is a one parameter group of diffeomorphisms of $M$, by $(iii)$ in Definition \ref{def_0_natural_bundles} we can lift it to a one parameter isomorphism of $\mathrm{Lor}(M)$ and calling $X$, $\Hat{X}$ the respective infinitesimal generators, then
\begin{equation}\label{eq_2_covariance_identity}
	j^r \mathrm{Lor}(\chi)^{*}f=f \Leftrightarrow \mathscr{L}_{j^r\Hat{X}}f=0 \Leftrightarrow i_{X}d_Hf=i_{j^r\mathscr{L}_{\Hat{X}}g}d_Vf,
\end{equation}
where $\mathscr L$ denotes the Lie derivative. In coordinates, $\Hat{X}=X^{\mu}\partial_{\mu}+X_{\mu\nu}\partial^{\mu\nu}$, we can then lift it to a vector field in $\mathfrak{X}\left(J^r \mathrm{Lor}(M)\right)$ by
$$
\begin{aligned}
\Hat{X}&=X^{\mu}\partial_{\mu}+X_{\mu\nu}\partial^{\mu\nu}+X^{\beta}_{\mu\nu}\frac{\widetilde\partial}{\widetilde\partial S^{\beta}_{\mu\nu}}+\widetilde{X}^{\beta}_{\alpha\mu\nu}\frac{\widetilde\partial}{\widetilde\partial S^{\beta}_{\alpha\mu\nu}}+{X}^{\beta}_{\alpha\mu\nu}\frac{\partial}{\partial R^{\beta}_{\alpha\mu\nu}}+\ldots\\ &\quad+\widetilde{X}^{\beta}_{\lambda_1\ldots\lambda_r\mu}\frac{\widetilde\partial}{\widetilde\partial S^{\beta}_{\lambda_1\ldots\lambda_r\mu}}+{X}^{\beta}_{\lambda_1\ldots\lambda_r\mu}\frac{\partial}{\partial\nabla_{\lambda_3\ldots\lambda_r}R^{\beta}_{\lambda_1\mu\lambda_2}}.
\end{aligned}
$$
Calculations of the necessary Lie derivatives yields
$$
\begin{aligned}
&\mathscr{L}_{\Hat{X}}g_{\alpha\beta}=2\nabla_{(\alpha}X^{\cdot}_{\beta)}\\
&\mathscr{L}_{j^1\Hat{X}}S^{\alpha}_{\mu\nu}=\nabla_{\mu\nu}X^{\alpha}-R^{\alpha}_{(\mu\nu)\rho}X^{\rho}\\
&\mathscr{L}_{j^2\Hat{X}}S^{\alpha}_{\beta\mu\nu}=\nabla_{\beta\mu\nu}X^{\alpha}+O(\nabla X,X)\\
&\mathscr{L}_{j^2\Hat{X}}R^{\alpha}_{\beta\mu\nu}=\nabla_{\sigma}R^{\alpha}_{\beta\mu\nu}X^{\sigma}-\nabla_{\sigma}X^{\alpha}R^{\sigma}_{\beta\mu\nu}+\nabla_{\beta}X^{\sigma}R^{\alpha}_{\sigma\mu\nu}+\nabla_{\mu}X^{\sigma}R^{\alpha}_{\beta\sigma\nu}+\nabla_{\nu}X^{\sigma}R^{\alpha}_{\beta\mu\sigma}\\
&\ldots \\
&\mathscr{L}_{j^r\Hat{X}}S^{\alpha}_{\lambda_1\ldots\lambda_r\mu}=\nabla_{\lambda_1\ldots\lambda_r\mu}X^{\alpha}+O(\nabla^{r-1}X,\ldots,X)\\
&\mathscr{L}_{j^r\Hat{X}}\nabla_{\lambda_3\ldots,\lambda_r}R^{\alpha}_{\lambda_1\mu\lambda_2}=X^{\sigma}\nabla_{\sigma}\nabla_{\lambda_3\ldots,\lambda_r}R^{\alpha}_{\lambda_1\mu\lambda_2}-\nabla_{\lambda_3\ldots,\lambda_r}R^{\sigma}_{\lambda_1\mu\lambda_2}\nabla_{\sigma}X^{\alpha}+\nabla_{\lambda_3\ldots,\lambda_r}R^{\alpha}_{\sigma\mu\lambda_2}\nabla_{\lambda_1}X^{\sigma}\\ &\space\ \space\ \space\ \space\ \space\ \space\ \space\ \space\ \space\ \space\ \space\ \space\ \space\ \space\ \space\ \space\ \space\ \space\ \space\ \space\ \space\ \space\ \space\ \space\ \space\ \space\ \space\ \space\ \space\ + \ldots+\nabla_{\lambda_3\ldots\lambda_{r-1}\sigma}R^{\alpha}_{\lambda_1\mu\lambda_2}\nabla_{\lambda_r}X^{\sigma} +\nabla_{\lambda_3\ldots\lambda_{r }}R^{\alpha}_{\lambda_1\sigma\lambda_2}\nabla_{\mu}X^{\sigma} 
\end{aligned}
$$
Expanding \eqref{eq_2_covariance_identity} yields 
$$
\begin{aligned}
	&X^{\sigma}\left(\partial_{\sigma}f + \partial^{\alpha\beta}f \partial_{\sigma}g_{\alpha\beta}+\partial^{\mu\nu}_{\alpha}f\partial_{\sigma}S^{\alpha}_{\mu\nu}+\widetilde{\partial}_{\alpha}^{\beta\mu\nu}f \partial_{\sigma} S^{\alpha}_{\beta\mu\nu} +{\partial}_{\alpha}^{\beta\mu\nu}f\partial_{\sigma}R^{\alpha}_{\beta\mu\nu}+ \ldots \right.\\& \left.+\widetilde{\partial}_{\alpha}^{\lambda_1\ldots\lambda_r\mu}f \partial_{\sigma}S^{\alpha}_{\lambda_1\ldots\lambda_r\mu}+{\partial}_{\alpha}^{\lambda_1\ldots\lambda_r\mu}f \partial_{\sigma}\nabla_{\lambda_3\ldots\lambda_{r }}R^{\alpha}_{\lambda_1\mu\lambda_2}\right)=\\
	&2\nabla_{(\alpha}X^{\cdot}_{\beta)}\partial^{\alpha\beta}f+\partial^{\mu\nu}_{\alpha}f\left(\nabla_{\mu\nu}X^{\alpha}+o(X)\right)+\widetilde{\partial}_{\alpha}^{\beta\mu\nu}f\left(\nabla_{\beta\mu\nu}X^{\alpha}+O(\nabla X,X)\right)\\
	&+{\partial}_{\alpha}^{\beta\mu\nu}f\left(\nabla_{\sigma}R^{\alpha}_{\beta\mu\nu}X^{\sigma}-\nabla_{\sigma}X^{\alpha}R^{\sigma}_{\beta\mu\nu}+\nabla_{\beta}X^{\sigma}R^{\alpha}_{\sigma\mu\nu}+\nabla_{\mu}X^{\sigma}R^{\alpha}_{\beta\sigma\nu}+\nabla_{\nu}X^{\sigma}R^{\alpha}_{\beta\mu\sigma}\right)+\\
	&\widetilde{\partial}_{\alpha}^{\lambda_1\ldots\lambda_r\mu}f\left(\nabla_{\lambda_1\ldots\lambda_r\mu}X^{\alpha}+O(\nabla^{r-1}X,\ldots,X)\right)+{\partial}_{\alpha}^{\lambda_1\ldots\lambda_r\mu}f \\
	&\left(X^{\sigma}\nabla_{\sigma}\nabla_{\lambda_3\ldots,\lambda_r}R^{\alpha}_{\lambda_1\mu\lambda_2}-\nabla_{\lambda_3\ldots,\lambda_r}R^{\sigma}_{\lambda_1\mu\lambda_2}\nabla_{\sigma}X^{\alpha}+\nabla_{\lambda_3\ldots,\lambda_r}R^{\alpha}_{\sigma\mu\lambda_2}\nabla_{\lambda_1}X^{\sigma}\right.\\ &\left.+ \ldots+\nabla_{\lambda_3\ldots\lambda_{r-1}\sigma}R^{\alpha}_{\lambda_1\mu\lambda_2}\nabla_{\lambda_r}X^{\sigma} +\nabla_{\lambda_3\ldots\lambda_{r }}R^{\alpha}_{\lambda_1\sigma\lambda_2}\nabla_{\mu}X^{\sigma} \right).
\end{aligned}
$$
Upon rewriting the partial derivative as covariant differential, we find
$$
\begin{aligned}
&X^{\sigma}\left(\partial_{\sigma}f + 2\partial^{\alpha\beta}f S^{\rho}_{\alpha\sigma}g_{\rho\beta}+\partial^{\mu\nu}_{\alpha}f\partial_{\sigma}S^{\alpha}_{\mu\nu}+\widetilde{\partial}_{\alpha}^{\beta\mu\nu}f \left(S^{\alpha}_{\sigma\beta\mu\nu}+O(j^2g)\right) \right. \\ 
&\left.+{\partial}_{\alpha}^{\beta\mu\nu}f\left(\textcolor{red}{\nabla_{\sigma}R^{\alpha}_{\beta\mu\nu}}-S^{\alpha}_{\rho\sigma}R^{\rho}_{\beta\mu\nu}+S^{\rho}_{\beta\sigma}R^{\alpha}_{\rho\mu\nu}+2S^{\rho}_{\mu\sigma}R^{\alpha}_{\beta\rho\nu}\right)+ \ldots \right.\\
& \left.+\widetilde{\partial}_{\alpha}^{\lambda_1\ldots\lambda_r\mu}f \left(S^{\alpha}_{\sigma\lambda_1\ldots\lambda_r\mu}+O(j^{r})\right)+{\partial}_{\alpha}^{\lambda_1\ldots\lambda_r\mu}f \left(\textcolor{blue}{\nabla_{\sigma}\nabla_{\lambda_3\ldots\lambda_{r }}R^{\alpha}_{\lambda_1\mu\lambda_2}}-S^{\alpha}_{\rho\sigma}\nabla_{\lambda_3\ldots\lambda_{r }}R^{\sigma}_{\lambda_1\mu\lambda_2}\right. \right.\\ 
& \left.\left.  +S^{\rho}_{\lambda_1\sigma}\nabla_{\lambda_3\ldots\lambda_{r }}R^{\alpha}_{\rho\mu\lambda_2}+2S^{\rho}_{\mu\sigma}\nabla_{\lambda_3\ldots\lambda_{r}}R^{\alpha}_{\lambda_1\rho\lambda_2} +(r-3)S^{\rho}_{\lambda_3\sigma}\nabla_{\rho\ldots\lambda_{r}}R^{\alpha}_{\lambda_1\mu\lambda_2}\right) \right)=\\
&2\nabla_{(\alpha}X^{\cdot}_{\beta)}\partial^{\alpha\beta}f+\partial^{\mu\nu}_{\alpha}f\left(\nabla_{\mu\nu}X^{\alpha}+O(X)\right)+\widetilde{\partial}_{\alpha}^{\beta\mu\nu}f\left(\nabla_{\beta\mu\nu}X^{\alpha}+o(\nabla X,X)\right)\\
&+{\partial}_{\alpha}^{\beta\mu\nu}f\left(\textcolor{red}{\nabla_{\sigma}R^{\alpha}_{\beta\mu\nu}X^{\sigma}}-\nabla_{\sigma}X^{\alpha}R^{\sigma}_{\beta\mu\nu}+\nabla_{\beta}X^{\sigma}R^{\alpha}_{\sigma\mu\nu}+\nabla_{\mu}X^{\sigma}R^{\alpha}_{\beta\sigma\nu}+\nabla_{\nu}X^{\sigma}R^{\alpha}_{\beta\mu\sigma}\right)\\
&+\widetilde{\partial}_{\alpha}^{\lambda_1\ldots\lambda_r\mu}f\left(\nabla_{\lambda_1\ldots\lambda_r\mu}X^{\alpha}+O(\nabla^{r-1}X,\ldots,X)\right)+{\partial}_{\alpha}^{\lambda_1\ldots\lambda_r\mu}f \\
&\left(\textcolor{blue}{X^{\sigma}\nabla_{\sigma}\nabla_{\lambda_3\ldots,\lambda_r}R^{\alpha}_{\lambda_1\mu\lambda_2}}-\nabla_{\lambda_3\ldots,\lambda_r}R^{\sigma}_{\lambda_1\mu\lambda_2}\nabla_{\sigma}X^{\alpha}+\nabla_{\lambda_3\ldots,\lambda_r}R^{\alpha}_{\sigma\mu\lambda_2}\nabla_{\lambda_1}X^{\sigma}\right.\\ 
&\left.+ \ldots+\nabla_{\lambda_3\ldots\lambda_{r-1}\sigma}R^{\alpha}_{\lambda_1\mu\lambda_2}\nabla_{\lambda_r}X^{\sigma} +\nabla_{\lambda_3\ldots\lambda_{r }}R^{\alpha}_{\lambda_1\sigma\lambda_2}\nabla_{\mu}X^{\sigma} \right).
\end{aligned}
$$
Since the vector field $X \in \mathfrak{X}(M)$ is arbitrary, the above equation must hold for any order of covariant derivative in $X$ independently. As a result, we recursively find that $\widetilde{\partial}_{\alpha}^{\lambda_1\ldots\lambda_r\mu}f=\ldots=\widetilde{\partial}_{\alpha}^{\beta\mu\nu}f=\partial^{\mu\nu}_{\alpha}f=0$. Then collecting the $X^{\sigma}$ and $\nabla_{\sigma}X^{\rho}$ terms we get the system 
$$
\begin{cases}
	\begin{aligned}
		&0=  \partial_{\sigma}f + S^{\rho}_{\gamma\sigma} \Big\{ 2\partial^{\sigma\beta}f g_{\rho\beta} + {\partial}_{\alpha}^{\beta\mu\nu}f\left(-\delta^{\alpha}_{\rho}R^{\sigma}_{\beta\mu\nu}+\delta^{\sigma}_{\beta}R^{\alpha}_{\rho\mu\nu}+2\delta^{\sigma}_{\mu}R^{\alpha}_{\beta\rho\nu}\right) +\ldots\\
		&+{\partial}_{\alpha}^{\beta \mu \Lambda\nu}f \left(-\delta^{\alpha}_{\sigma}\nabla_{\Lambda}R^{\sigma}_{\beta\nu\mu} +\delta^{\sigma}_{\beta}\nabla_{\Lambda}R^{\alpha}_{\rho\nu\mu}+2\delta^{\sigma}_{\nu}\nabla_{\Lambda}R^{\alpha}_{\beta\rho\mu} +(r-3)\delta^{\sigma}_{\lambda_1}\nabla_{\rho\Hat{\Lambda}}R^{\alpha}_{\beta\nu\mu}\right)\Big\},
	\end{aligned}\\
	\begin{aligned}
		&0=2\partial^{\sigma\beta}f g_{\rho\beta} + {\partial}_{\alpha}^{\beta\mu\nu}f\left(-\delta^{\alpha}_{\rho}R^{\sigma}_{\beta\mu\nu}+\delta^{\sigma}_{\beta}R^{\alpha}_{\rho\mu\nu}+2\delta^{\sigma}_{\mu}R^{\alpha}_{\beta\rho\nu}\right) +\ldots\\
		&+{\partial}_{\alpha}^{\beta \mu \Lambda\nu}f \left(-\delta^{\alpha}_{\sigma}\nabla_{\Lambda}R^{\sigma}_{\beta\nu\mu} +\delta^{\sigma}_{\beta}\nabla_{\Lambda}R^{\alpha}_{\rho\nu\mu}+2\delta^{\sigma}_{\nu}\nabla_{\Lambda}R^{\alpha}_{\beta\rho\mu} +(r-3)\delta^{\sigma}_{\lambda_1}\nabla_{\rho\Hat{\Lambda}}R^{\alpha}_{\beta\nu\mu}\right).
	\end{aligned}
\end{cases}
$$
substituting the second equation into the first we get the initial claim
$$
\begin{aligned}
    0&=2\frac{\partial f}{\partial g_{\sigma\beta}}f g_{\rho\beta} + \frac{\partial f}{\partial R^{\alpha}_{\beta\mu\nu}}\Big(-\delta^{\alpha}_{\rho}R^{\sigma}_{\beta\mu\nu}+\delta^{\sigma}_{\beta}R^{\alpha}_{\rho\mu\nu}+2\delta^{\sigma}_{\mu}R^{\alpha}_{\beta\rho\nu}\Big) +\ldots\\
    &\quad+\frac{\partial f}{\partial \nabla_{\Lambda}R^{\alpha}_{\beta\mu\nu}} \Big(-\delta^{\alpha}_{\sigma}\nabla_{\Lambda}R^{\sigma}_{\beta\nu\mu} +\delta^{\sigma}_{\beta}\nabla_{\Lambda}R^{\alpha}_{\rho\nu\mu}+2\delta^{\sigma}_{\nu}\nabla_{\Lambda}R^{\alpha}_{\beta\rho\mu} +(r-3)\delta^{\sigma}_{\lambda_1}\nabla_{\rho\Hat{\Lambda}}R^{\alpha}_{\beta\nu\mu}\Big).
\end{aligned}
$$
To include the scalars $m^2, \ \kappa$ we simply note that being scalar coordinates, their covariant derivatives are tensors, therefore diff-covariant functions are allowed to depend on them, and the above equation gets modified by adding to the right hand side the term 
$$
\begin{aligned}
    &\frac{\partial f}{\partial \nabla_{\theta}m^2} \Big(\delta^{\sigma}_{\theta}\nabla_{\rho}m^2\Big)+ \ldots+\frac{\partial f}{\partial \nabla_{\Theta}m^2} \Big(-r\delta^{\sigma}_{\theta_1}\nabla_{\rho \Hat{\Theta}}m^2\Big)+ \frac{\partial f}{\partial \nabla_{\theta}\kappa} \Big(\delta^{\sigma}_{\theta}\nabla_{\rho}\kappa\Big)+ \ldots+\frac{\partial f}{\partial \nabla_{\Theta}\kappa}\Big(-r\delta^{\sigma}_{\theta_1}\nabla_{\rho \Hat{\Theta}}\kappa\Big)
\end{aligned}
$$
with $|\Theta|=r$.
\end{proof}

\section{Hadamard parametrices}\label{section_Hadamard_parametrix}

In this section we will review some of the geometry needed to write locally the Hadamard parametrix, the latter is the closest thing (from a microanalytical standpoint) to a solution of some partial differential equation: the parametrix is a distribution which is smoothened by the differential operator inducing the linear equations. In particular we are interested in wave equations such as \eqref{eq_KG}. The construction that we are describing can be readily generalized from the scalar to the vector case (\textit{e.g.} see \cite[$\S$ 6]{friedlander1975wave}, \cite[$\S$ A]{carfora2020ricci} for the Euclidean case). \\

Let $(M,g)$ be a time oriented Lorentzian manifold, given some point $x\in M$ consider the pseudo-Riemannian exponential mapping $\mathrm{exp}_x:V\subset T_xM \to M$ of the metric $g$. Since $T_0\mathrm{exp}_x=id_{T_xM}$, there will be a neighborhood $\Omega\ni x$ such that, possibly restricting the domain, $\mathrm{exp}_x:V\subset T_xM \to \Omega$ is a diffeomorphism. Alternatively, we can say that there are neighborhoods $\mathcal{O}\subset M^2$ of the diagonal $\Delta_2(M)$, $\widetilde{V}\subset TM$ of the graph of the zero section, such that $\mathrm{exp}:\widetilde{V}\subset TM \to \mathcal{O}$ is a diffeomorphism. The square geodesic distance function is then defined as
\begin{equation}\label{eq_2_square_geodesic_distance}
    \sigma:\mathcal{O}\to \mathbb{R} : (x,y) \mapsto \sigma(x,y)=\frac{1}{2} g(x)\big(\mathrm{exp}_x^{-1}(y),\mathrm{exp}_x^{-1}(y)\big)
\end{equation}

In this we shall adopt a particular notation: let $(x,y)\in \mathcal{O}$, $v=\mathrm{exp}^{-1}_x(y), \ w=\mathrm{exp}_y^{-1}(x)$, we denote by $(U_x,x^{\mu})$, $(U_y,x^{\alpha})$ the coordinates around $x$, $y$ respectively, and by $(V,\dot x^{\mu})$, $(V,\dot x^{\alpha})$ the coordinates of $T_xM$, $T_yM$. In short, Greek indices such as $\alpha, \beta, \gamma, \delta$ will denote coordinates around $y$ of $T_yM$, indices such as $\mu,\nu,\lambda,\rho$ will refer to coordinates around $x$ or $T_xM$. In those coordinates, set $(\mathrm{exp}_x(v))^{\alpha}=e^{\alpha}_x(v)$, $(\mathrm{exp}^{-1}_x(y))^{\mu}=\bar{e}^{\mu}_x(y)$, $(\mathrm{exp}_y(w))^{\mu}=e^{\mu}_y(w)$, $(\mathrm{exp}^{-1}_y(x))^{\alpha}=\bar{e}^{\alpha}_y(x)$. Since $\mathrm{exp}_x(\mathrm{exp}_x^{-1}(y))=y$, $\mathrm{exp}_x^{-1}(\mathrm{exp}_x(v))=v$, applying the tangent mapping to both sides of those equations yield the identities
$$
	T_{\mathrm{exp}^{-1}_x(y)}\mathrm{exp}_x \circ T_y\mathrm{exp}_x^{-1}=T_yid \space\ \Rightarrow \frac{\partial e_x^{\alpha}}{\partial \dot{x}^{\mu}}\frac{\partial \bar{e}_x^{\mu}}{\partial y^{\beta}}=\delta^{\alpha}_{\beta}; \ \frac{\partial e_y^{\mu}}{\partial \dot{y}^{\alpha}}\frac{\partial \bar{e}_x^{\alpha}}{\partial y^{\nu}}=\delta^{\mu}_{\nu},
$$

$$
	T_{\mathrm{exp}_x(v)}\mathrm{exp}^{-1}_x \circ T_v\mathrm{exp}_x=T_vid, \space\ \Rightarrow \frac{\partial \bar{e}_x^{\mu}}{\partial y^{\alpha}}\frac{\partial e_x^{\alpha}}{\partial \dot{x}^{\nu}}=\delta^{\mu}_{\nu}, \ \frac{\partial \bar{e}_y^{\alpha}}{\partial x^{\mu}}\frac{\partial e_y^{\mu}}{\partial \dot{y}^{\beta}}=\delta^{\alpha}_{\beta};
$$  

\begin{lemma}\label{lemma_properties_of_sigma}
    The square geodesic distance \eqref{eq_2_square_geodesic_distance} satisfies the following identity:
    \begin{equation}\label{eq_2_fundamental_identity_geo_dist}
        \sigma=\frac{1}{2}g^{\mu\nu}(x) \nabla_{\mu}\sigma\nabla_{\nu}\sigma=\frac{1}{2}g^{\alpha\beta}(y) \nabla_{\alpha}\sigma\nabla_{\beta}\sigma.
    \end{equation}
    Moreover, as $y\to x$, we have the identities
    \begin{align}\label{eq_2_sigma_derivative_coincidence_limits}
        & \sigma_{\alpha} \to 0;\\
        & \sigma_{\alpha\beta} \to g_{\alpha\beta}(x);\\
        & \sigma_{\alpha\beta \gamma} \to 0;\\
        & \sigma_{\alpha\beta \gamma \delta} \to \frac{1}{3}\big( R_{\alpha\gamma \beta \delta}(x) + R_{\alpha\delta \beta \gamma}(x)\big);
    \end{align} 
    where $\sigma_{\alpha}=\nabla_{\alpha}\sigma$, and $R$ the Riemann tensor of the metric $g$.
    Finally, if $y\in I^{+}(x)$ $($resp. $y\in I^{-}(x)$ $)$ then $\sigma_{\alpha}$ is past $($resp. future$)$ directed.
\end{lemma}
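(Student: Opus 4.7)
The plan is as follows. First, to establish the fundamental identity $2\sigma = g^{\alpha\beta}\sigma_\alpha\sigma_\beta$, I would fix $x$ and consider the geodesic $\gamma_v(t) = \exp_x(tv)$ with $v = \exp_x^{-1}(y)$, so that $\gamma_v(0)=x$ and $\gamma_v(1)=y$. Because $g(\dot\gamma_v,\dot\gamma_v)$ is constant along geodesics, the definition of $\sigma$ gives $\sigma(x,\gamma_v(t)) = \tfrac{t^2}{2}g_x(v,v)$. Differentiating at $t=1$ yields $\sigma_\alpha(y)\dot\gamma_v^\alpha(1) = 2\sigma(x,y)$. The remaining identification $\sigma^\alpha(y) = \dot\gamma_v^\alpha(1)$ is the analogue of Gauss's lemma for the world function: it follows from the first variation formula for $\tfrac{1}{2}\int_0^1 g(\dot\gamma,\dot\gamma)\,dt$ along a one-parameter family of geodesics joining $x$ to nearby endpoints, whose only surviving boundary contribution at $t=1$ is exactly $g(\dot\gamma_v(1),\delta y)$. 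Combining these gives $\sigma = \tfrac{1}{2}g^{\alpha\beta}(y)\sigma_\alpha\sigma_\beta$; the identity with derivatives taken at $x$ then follows from the symmetry $\sigma(x,y)=\sigma(y,x)$, itself a consequence of geodesic reversal.

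Second, the coincidence limits I would obtain by iteratively differentiating the fundamental identity. Since $v=0$ at $y=x$, we have $[\sigma]=0$ and $[\sigma_\alpha]=0$ trivially. Covariantly differentiating $\sigma^\alpha\sigma_\alpha = 2\sigma$ once gives $\sigma^\alpha\sigma_{\alpha\beta}=\sigma_\beta$, and a second time yields $\sigma^\alpha{}_\gamma\sigma_{\alpha\beta}+\sigma^\alpha\sigma_{\alpha\beta\gamma}=\sigma_{\beta\gamma}$. Passing to $y\to x$ the last term drops and one is left with $[\sigma^\alpha{}_\gamma][\sigma_{\alpha\beta}]=[\sigma_{\beta\gamma}]$, an idempotence relation whose unique smooth symmetric solution reducing to the identity as $y\to x$ is $[\sigma_{\alpha\beta}]=g_{\alpha\beta}(x)$. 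Differentiating once more and using the previous limits, together with the full symmetry of $[\sigma_{\alpha\beta\gamma}]$ (inherited from $\sigma_{\alpha\beta}=\sigma_{\beta\alpha}$ and the commutator $[\nabla_\beta,\nabla_\gamma]\sigma_\alpha = -R^\rho{}_{\alpha\beta\gamma}\sigma_\rho$, which vanishes at coincidence), forces $[\sigma_{\alpha\beta\gamma}]=0$. The fourth-order limit is then obtained by one further differentiation: evaluating at $y=x$, using the previously computed limits, and repeatedly applying the commutator $[\nabla_\gamma,\nabla_\delta]\sigma_{\alpha\beta} = -R^\rho{}_{\alpha\gamma\delta}\sigma_{\rho\beta} - R^\rho{}_{\beta\gamma\delta}\sigma_{\alpha\rho}$ to reorder indices, one arrives at a linear system for $[\sigma_{\alpha\beta\gamma\delta}]$ whose symmetric solution, pinned down by the algebraic Bianchi identity, is $\tfrac{1}{3}(R_{\alpha\gamma\beta\delta}+R_{\alpha\delta\beta\gamma})$.

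For the causal character claim I would argue as follows. If $y\in I^+(x)$, the connecting geodesic $\gamma_v$ is future directed timelike, hence $\dot\gamma_v(1) = \sigma^\alpha(y)$ is a future directed timelike vector at $y$. Since the metric has signature $(-,+,\ldots,+)$, the musical isomorphism $\sharp\leftrightarrow\flat$ reverses the sign of the temporal component, so $\sigma_\alpha = g_{\alpha\beta}\sigma^\beta$ is a past directed timelike covector. The case $y\in I^-(x)$ is symmetric.

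The main obstacle I anticipate is the fourth-order coincidence limit: the algebra requires careful bookkeeping when commuting three covariant derivatives past each other, producing several Riemann-tensor contributions that must be combined via the algebraic Bianchi identity to extract the symmetric expression $\tfrac{1}{3}(R_{\alpha\gamma\beta\delta}+R_{\alpha\delta\beta\gamma})$. The symmetries $\sigma_{\alpha\beta}=\sigma_{\beta\alpha}$ and $[\sigma_{\alpha\beta\gamma}]=0$ must be invoked at every stage to close the recursion, and minor care is needed to keep the distinction between covariant derivatives at $x$ and at $y$ throughout.
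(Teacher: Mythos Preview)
Your proposal is correct and follows essentially the same approach as the paper: the fundamental identity via Gauss's lemma (your first-variation argument is exactly this in variational language), and the coincidence limits by iterated differentiation of that identity (the paper simply cites DeWitt--Brehme for the details you sketch). The only minor difference is in the causal claim: you read off the time orientation of $\sigma_\alpha$ directly from the identification $\sigma^\alpha(y)=\dot\gamma_v^\alpha(1)$ established in your first step, whereas the paper instead differentiates $t\mapsto\sigma(x,\gamma_T(t))$ along an integral curve $\gamma_T$ of the global time-orientation vector field and uses the sign of that derivative to reach the same conclusion---your route is the more direct of the two.
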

\begin{proof}
    First recall that Gauss lemma, which states that $\mathrm{exp}_x$ is a radial isometry, that is,
    $$
        g(y)\big(T_v\mathrm{exp}_x(v), T_v\mathrm{exp}_x(v')\big)= g(x)(v,v'),\  \forall v,v' \in T_xM, \ y=\mathrm{exp}_x(v);
    $$
    or, in coordinate notation,
\begin{equation}\label{eq_2_Gauss_lemma}
	g_{\alpha\beta}(y)  \frac{\partial e_x^{\alpha}}{\partial \dot{x}^{\mu}} \frac{\partial e_x^{\beta}}{\partial \dot{x}^{\nu}}= g_{\mu\nu}(x), \ g^{\alpha\beta}(y)  \frac{\partial \bar e_x^{\mu}}{\partial {y}^{\alpha}} \frac{\partial \bar e_x^{\nu}}{\partial {y}^{\beta}}= g^{\mu\nu}(x)
\end{equation}
Then,
$$
\begin{aligned}
    \frac{1}{2}g^{\alpha\beta}(y) \nabla_{\alpha}\sigma\nabla_{\beta}\sigma&=  \frac{1}{8}g^{\alpha\beta}(y)\nabla_{\alpha}\big(g_{\mu\nu}(x)\bar e^{\mu}_x(y)\bar e^{\nu}_x(y)\big)\big(g_{\lambda\rho}(x)\bar e^{\lambda}_x(y)\bar e^{\rho}_x(y)\big)\\
    &= \frac{1}{2}g^{\alpha\beta}(y)g_{\mu\nu}(x)\frac{\bar e^{\mu}_x(y)}{\partial y^{\alpha}}\bar e^{\nu}_x(y)g_{\lambda\rho}(x)\frac{\bar e^{\lambda}_x(y)}{\partial y^{\beta}}\bar e^{\rho}_x(y) \\
    &= \frac{1}{2} g^{\mu\lambda}(x)g_{\mu\nu}(x)g_{\lambda\rho}(x)e^{\nu}_x(y)\bar e^{\rho}_x(y)= \sigma(x,y).
\end{aligned}
$$
where in the third equality we used \eqref{eq_2_Gauss_lemma}. The identities \eqref{eq_2_sigma_derivative_coincidence_limits} follows from $\sigma \to 0$ as $y\to x$ and repeated differentiation of \eqref{eq_2_fundamental_identity_geo_dist}. For details see \cite[pp. 227-228]{dewitt1960radiation}. Finally, suppose $y\in I^+(x)$, then $\sigma(x,y)<0$, by \eqref{eq_2_fundamental_identity_geo_dist}, we must have that $\sigma_{\alpha}$ is timelike. Let $T$ be the vector field inducing the time orientation on $M$ and $\gamma_x(t)$ the integral curve of $T$ starting at $x$ with velocity $T(x)\in T_xM$, by construction the function $t\mapsto \sigma(x, \gamma_x(t))$ is negative and decreasing, thus 
$$
    0>\frac{d}{dt} \sigma(x, \gamma_x(t))= g^{\alpha\beta}(y)\big(\sigma_{\alpha}|_{\gamma_x(t)},T(\gamma_x(t))\big).
$$
\end{proof}

The \textit{geodesic parallel displacement} is a mapping 
\begin{equation}\label{eq_2_parallel_displacement}
    \bar{\delta} :\mathcal{O} \to \Omega_1(M)\boxtimes \Omega_1(M): (x,y) \mapsto \bar{\delta}_{\alpha\mu}dy^{\alpha}\boxtimes dx^{\mu},    
\end{equation}
satisfying the following requirements:
\begin{equation}\label{eq_2_parallel_displacement_prop_1}
    \lim_{y\to x} g^{\alpha\beta}(y) \bar \delta_{\beta\mu}
    (x,y)=\delta^{\alpha}_{\mu},
\end{equation}
\begin{equation}\label{eq_2_parallel_displacement_prop_2}
    g^{\mu\nu}(x)\sigma_{\mu}\nabla_{\nu}\bar\delta_{\alpha \lambda}=0,\ g^{\alpha\beta}(y)\sigma_{\alpha}\nabla_{\beta}\bar\delta_{\gamma \mu}=0.
\end{equation}

We stress that $\bar\delta$ is the mapping condition \eqref{eq_2_parallel_displacement_prop_2} is the equation of parallel transport, from $T_xM$ to $T_yM$, along the geodesic joining $x $ to $y$ whereas \eqref{eq_2_parallel_displacement_prop_1} is the initial condition. From those condition the existence and uniqueness of the geodesic parallel displacement can be derived.

\begin{lemma}\label{lemma_2_geodesic_displacement}
    The geodesic parallel displacement introduced in \eqref{eq_2_parallel_displacement} satisfies the following relations:
    \begin{itemize}
        \item[$(i)$] $\bar\delta^{\alpha}_{\ \mu}\sigma_{\alpha}=-\sigma_{\mu}$, $\bar\delta_{\alpha}^{\ \mu}\sigma_{\mu}=-\sigma_{\alpha}$;
        \item[$(ii)$] $\bar\delta^{\alpha\mu}\bar\delta_{\alpha \nu}=\delta^{\mu}_{\nu}$, $\bar\delta^{\alpha\mu}\bar\delta_{\beta \mu}=\delta^{\alpha}_{\beta}$;
        \item[$(iii)$] $\bar\delta^{\alpha}_{\ \mu}\bar\delta^{\beta}_{\ \nu}g_{\alpha\beta}(y)= g_{\mu\nu}(x)$, $\bar\delta^{\ \mu}_{\ \alpha}\bar\delta_{\beta}^{\ \nu}g_{\mu\nu}(x)= g_{\alpha\beta}(y)$.
    \end{itemize}
\end{lemma}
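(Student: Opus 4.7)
The plan is to prove each identity by combining the defining properties of $\bar\delta$ with a propagation-of-the-ODE argument along the geodesic joining $x$ to $y$. Throughout, let $\gamma:[0,1]\to M$ be the unique geodesic with $\gamma(0)=x$, $\gamma(1)=y$ and $\dot\gamma(0)=v=\exp_x^{-1}(y)$. The starting observation, obtained by covariantly differentiating the fundamental identity $2\sigma=g^{\alpha\beta}(y)\sigma_\alpha\sigma_\beta$ of \eqref{eq_2_fundamental_identity_geo_dist} and using torsion-freeness ($\nabla_\beta\nabla_\gamma\sigma=\nabla_\gamma\nabla_\beta\sigma$), is the key identity $\sigma^\beta\nabla_\beta\sigma_\gamma=\sigma_\gamma$ (and likewise on the $x$-side). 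Moreover, a short calculation in Gauss-normal coordinates yields $\sigma^\alpha(x,\gamma(s))=s\,\dot\gamma^\alpha(s)$, so the operator $\sigma^\beta\nabla_\beta$ acts along $\gamma$ as $s\,D/ds$, where $D/ds$ is the covariant derivative along $\gamma$.

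For (i), I would introduce $f_\mu(y)\doteq\bar\delta^{\alpha}{}_\mu\sigma_\alpha+\sigma_\mu$ and compute
\[
\sigma^\beta\nabla_\beta f_\mu
=(\sigma^\beta\nabla_\beta\bar\delta^{\alpha}{}_\mu)\sigma_\alpha+\bar\delta^{\alpha}{}_\mu\,\sigma^\beta\nabla_\beta\sigma_\alpha+\sigma^\beta\nabla_\beta\sigma_\mu
=0+\bar\delta^{\alpha}{}_\mu\sigma_\alpha+\sigma_\mu=f_\mu,
\]
where the first term vanishes by \eqref{eq_2_parallel_displacement_prop_2} and the remaining two by the identity just recalled. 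Translating via $\sigma^\beta\nabla_\beta=sD/ds$, the parallel-transport of $f_\mu(\gamma(s))$ back to $T^{*}_xM$, call it $\tilde F_\mu(s)$, satisfies $s\,\tilde F'_\mu=\tilde F_\mu$, hence $\tilde F_\mu(s)=sC_\mu$ for some constant covector $C_\mu\in T^{*}_xM$. A Taylor expansion around coincidence, using Lemma \ref{lemma_properties_of_sigma} (so $\sigma_\alpha\sim g_{\alpha\nu}(x)v^\nu s$ and $\sigma_\mu\sim -g_{\mu\nu}(x)v^\nu s$ to leading order) together with $\bar\delta^{\alpha}{}_\mu\to\delta^{\alpha}_\mu$, gives $f_\mu=O(s^2)$, forcing $C_\mu=0$ and thus $f_\mu\equiv0$. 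The second relation in (i) is obtained by the symmetric argument with the roles of $x$ and $y$ exchanged.

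For (ii), the idea is that parallel transport is an invertible linear map. Differentiating $\bar\delta^{\alpha\mu}\bar\delta_{\alpha\nu}$ with $\sigma^\beta\nabla_\beta$ and using \eqref{eq_2_parallel_displacement_prop_2} together with $\nabla g^{-1}=0$, I obtain $\sigma^\beta\nabla_\beta(\bar\delta^{\alpha\mu}\bar\delta_{\alpha\nu})=0$, so this contraction is parallel-transported along $\gamma$; the same holds trivially for $\delta^\mu_\nu$. Since by \eqref{eq_2_parallel_displacement_prop_1} both sides coincide with $\delta^\mu_\nu$ in the coincidence limit $y\to x$, uniqueness of solutions to the parallel-transport ODE forces equality everywhere. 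The other half of (ii) follows symmetrically (or by taking inverses of the resulting matrices). Part (iii) is handled by the same scheme: the combination $\bar\delta^{\alpha}{}_\mu\bar\delta^{\beta}{}_\nu g_{\alpha\beta}(y)$ is parallel-transported along $\gamma$ (this is exactly metric compatibility $\nabla g=0$ combined with \eqref{eq_2_parallel_displacement_prop_2}), agrees with $g_{\mu\nu}(x)$ at $y=x$, and hence equals it throughout $\mathcal O$; the dual statement is analogous using the contravariant metric.

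The main technical hurdle I anticipate is being clean about (a) the identification $\sigma^\beta\nabla_\beta=sD/ds$, which requires a careful use of Gauss's lemma and the relation $\sigma^\alpha=s\dot\gamma^\alpha$, and (b) promoting the coincidence-limit data in (i) from $f_\mu\to0$ to $f_\mu=O(s^2)$, so that the singular ODE $sD_sf_\mu=f_\mu$ actually forces $f_\mu\equiv0$ rather than admitting nontrivial solutions proportional to $s$. Once these two analytic points are settled, the remaining parts (ii) and (iii) are direct corollaries of uniqueness for first-order transport equations along $\gamma$ together with the initial datum \eqref{eq_2_parallel_displacement_prop_1}.
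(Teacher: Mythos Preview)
Your proposal is correct. For parts (ii) and (iii) you and the paper do essentially the same thing: both use that the quantities in question are annihilated by the transport operator $\sigma^\beta\nabla_\beta$ (via \eqref{eq_2_parallel_displacement_prop_2} and metric compatibility) and agree with the right-hand sides at coincidence, so uniqueness of parallel transport along $\gamma$ finishes the argument.

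For part (i) the approaches genuinely diverge. The paper does not run a propagation argument; instead it exploits the geodesic-reversal identity $\gamma(1-t)=\exp_y(tw)$ to obtain the explicit coordinate relation
\[
\frac{\partial e_x^{\alpha}}{\partial \dot x^{\mu}}\,\bar e_x^{\mu}=-\bar e_y^{\alpha},
\]
and then reads off $\bar\delta_{\alpha}{}^{\mu}\sigma_{\mu}=-\sigma_{\alpha}$ by combining this with Gauss's lemma and the expressions for $\sigma_\mu,\sigma_\alpha$ in terms of $\bar e_x,\bar e_y$. Your route---showing that $f_\mu=\bar\delta^{\alpha}{}_\mu\sigma_\alpha+\sigma_\mu$ satisfies the singular ODE $sD_sf_\mu=f_\mu$ and then killing the linear-in-$s$ solution via the $O(s^2)$ coincidence expansion---is more systematic (it reuses the same machinery as (ii)--(iii)) and avoids introducing the auxiliary identity \eqref{eq_2_inverse_geodesic_trick}. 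The paper's approach, by contrast, is a one-line computation once that identity is in hand, and it sidesteps the analytic subtlety you flagged about promoting $f_\mu\to0$ to $f_\mu=O(s^2)$. Both are valid; yours is cleaner conceptually, the paper's is shorter once the groundwork is laid.
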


\begin{proof}
    $(i)$ follows from the fact that if $y=\exp_x(v)$, the geodesic joining the two points is $\gamma(t)=\exp_x(tv)$, calling $w=-\exp^{-1}_x(y)$, we have $\gamma(1-t)=\exp_y(tw)$, then we have
    \begin{equation}\label{eq_2_inverse_geodesic_trick}
        \frac{\partial e_x^{\alpha}}{\partial \dot x^{\mu}}\bar e_x^{\mu}=-\bar e_y^{\alpha}.
    \end{equation}
    Using the above equality and \eqref{eq_2_Gauss_lemma} we get $g_{\mu\nu}(x)\sigma_{\nu}=- \bar e_y^{\alpha}$. Then
    $$
        \bar\delta_{\alpha}^{\ \mu}\sigma_{\mu}= \frac{\partial e_x^{\alpha}}{\partial \dot x^{\mu}}\bar e_x^{\mu}=-\bar e_y^{\alpha}= -\sigma_{\alpha}.
    $$
    $(ii)$ states that the composition of parallel transport from $x $ to $y$ and back is the identity, and can be seen using \eqref{eq_2_parallel_displacement_prop_2} to show that for any $v\in T_xM$,
    $$
        \sigma^{\lambda}\nabla_{\lambda} \big( \bar\delta^{\alpha\mu}\bar\delta_{\alpha \nu} v^{\nu}\big)=0.
    $$
    Finally, $(iii)$ follows from the fact that along the geodesic $ \gamma(t)=e_x(t\bar e_x(y))$, each $v,v'\in T_xM$ is transported parallel to the geodesic and thus the scalar product of the two vector remain constant along $\gamma$. Therefore expanding
    $$
    0= \frac{d}{dt}\Big( g(\gamma(t))\big( \delta(v),\delta(v')\big)\Big)
    $$
    in coordinates we conclude.
\end{proof}

We now introduce three quantities that are of great importance to define the Hadamard parametrix:
\begin{equation}\label{eq_2_det_parallel_displacement_1}
    \bar\delta (x,y)= \mathrm{det}(\bar\delta^{\alpha}_{\ \mu}),
\end{equation}
Notice that by $(i)$ in Lemma \ref{lemma_2_geodesic_displacement}, we have $g^{\mu\nu}\bar\delta^{\alpha}_{\ \mu}\sigma_{\alpha}\bar\delta^{\beta}_{\ \nu}\sigma_{\beta}=g^{\mu\nu}\sigma_{\mu}\sigma_{\nu}=\sigma$, implying $g_{\mu\nu}=\bar\delta^{\alpha}_{\ \mu}g_{\alpha\beta}\bar\delta^{\beta}_{\ \nu}$. Taking the determinant we get
\begin{equation}\label{eq_2_det_parallel_displacement_2}
    \bar\delta (x,y)=g^{1/2}(x)g^{1/2}(y).
\end{equation}
The second quantity of interest is \textit{van Vleck determinant}: 
\begin{equation}\label{eq_2_Van_vleck_determinat_0} 
    D(x,y)=-\mathrm{det}(\sigma_{\alpha\mu}),
\end{equation}
it's easy to show, arguing with \eqref{eq_2_Gauss_lemma} and \eqref{eq_2_inverse_geodesic_trick}, that 
\begin{equation}\label{eq_2_Van_vleck_determinat_1} 
    -\sigma_{\mu\alpha}(x,y)= g_{\alpha\beta}(y)\frac{\partial \bar e_y^{\beta}}{\partial x^{\mu}}.
\end{equation}
Finally, define 
\begin{equation}\label{eq_2_Delta(x,y)_coefficient} 
    \Delta(x,y)= \frac{D(x,y)}{\bar\delta(x,y)}.
\end{equation}
\begin{lemma}\label{lemma_2_Delta_identity}
    The quantity $\Delta$ defined in \eqref{eq_2_Delta(x,y)_coefficient} satisfies
    $$
        g^{\alpha\beta} \nabla_{\alpha}\big(\Delta\sigma_{\beta} \big)=n \Delta,
    $$
    where $n$ is the dimension of $M$.
\end{lemma}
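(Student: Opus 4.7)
The plan is first to expand $g^{\alpha\beta}\nabla_\alpha(\Delta\sigma_\beta) = \nabla_\alpha(\Delta\sigma^\alpha) = \sigma^\alpha\nabla_\alpha\Delta + \Delta\,\Box\sigma$ by metric compatibility and the Leibniz rule, reducing the claim to
\begin{equation*}
\sigma^\alpha\nabla_\alpha\ln\Delta + \Box\sigma = n,
\end{equation*}
where $\Box\sigma := \nabla_\alpha\sigma^\alpha$. Division by $\Delta$ is legitimate in a normal neighborhood of the diagonal because $\Delta(x,x)=1$, which follows by combining the coincidence limits of Lemma \ref{lemma_properties_of_sigma} with \eqref{eq_2_det_parallel_displacement_2}.

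Next I would work in Riemann normal coordinates $\{v^\alpha\}$ centered at $x$. There $\sigma = \tfrac12 g_{\alpha\beta}(x)v^\alpha v^\beta$, and Gauss' lemma \eqref{eq_2_Gauss_lemma} gives $\sigma^\alpha = v^\alpha$, i.e. $\sigma^\alpha\partial_\alpha$ is the Euler vector field. The standard divergence formula then yields
\begin{equation*}
\Box\sigma = \tfrac{1}{\sqrt{|g(y)|}}\,\partial_\alpha\big(\sqrt{|g(y)|}\,v^\alpha\big) = n + v^\alpha\partial_\alpha\ln\sqrt{|g(y)|},
\end{equation*}
while $\sigma^\alpha\nabla_\alpha\ln\Delta = v^\alpha\partial_\alpha\ln\Delta$ since $\ln\Delta$ is a scalar. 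Substituting, the reduced identity is equivalent to $F(v) := \Delta(x,\exp_xv)\sqrt{|g(\exp_xv)|}$ being constant along rays through the origin, which (since $F(0)=\sqrt{|g(x)|}$) amounts to the closed-form identity
\begin{equation*}
\Delta(x,y) = \sqrt{|g(x)|/|g(y)|}\qquad\text{in normal coordinates at }x.
\end{equation*}

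To prove this closed form I would combine \eqref{eq_2_Van_vleck_determinat_1} with the definitions of $D$ and $\bar\delta$ to write
\begin{equation*}
\Delta(x,y) = \frac{(-1)^{n-1}\,\det g_{\alpha\beta}(y)\,\det(\partial\bar e_y^\beta/\partial x^\mu)}{\sqrt{|g(x)||g(y)|}},
\end{equation*}
and evaluate the Jacobian in normal coordinates at $x$: using $d\exp_x|_0=\mathrm{id}$, the chain rule applied to $\bar e_y=\exp_y^{-1}$ identifies it with $\det d\exp_y^{-1}|_x$, and Gauss' lemma applied to $\exp_y$ then reads this off as a ratio of metric densities. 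As a consistency check, the resulting $\Delta$ reproduces the known expansion $\Delta = 1 + \tfrac{1}{6}R_{\alpha\beta}(x)\sigma^\alpha\sigma^\beta + \cdots$ via the normal-coordinate expansion $|g(y)|^{1/2} = |g(x)|^{1/2}\big(1 - \tfrac{1}{6}R_{\alpha\beta}v^\alpha v^\beta + \cdots\big)$.

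The hard part will be tracking bases and Lorentzian-signature signs when identifying $\det(\partial\bar e_y^\beta/\partial x^\mu)$ with a ratio of metric determinants. If this bookkeeping becomes awkward, a cleaner fallback is to verify the reduced identity without passing through the closed form: differentiate $\sigma^\alpha\sigma_{\alpha\beta}=\sigma_\beta$ (which follows from \eqref{eq_2_fundamental_identity_geo_dist}) along the flow of $\sigma^\alpha$ to obtain a matrix transport equation for the bi-tensor $\sigma_{\alpha\beta}$, and then apply Jacobi's formula $\nabla_\alpha\ln|\det M| = (M^{-1})^{\beta\mu}\nabla_\alpha M_{\beta\mu}$ with $M=\sigma_{\cdot\cdot}$ to recover the identity trace-wise.
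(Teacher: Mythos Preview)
Your approach is correct and different from the paper's. After the same Leibniz split, the paper stays coordinate-free: it applies Jacobi's formula to $\ln\Delta=\ln D-\ln\bar\delta$, kills the $\bar\delta$-contribution contracted with $\sigma^\alpha$ via the parallel-transport equation \eqref{eq_2_parallel_displacement_prop_2}, and then combines the surviving $D$-term with $\Box\sigma$ using the transport relation $\sigma_\mu=\sigma^\nu\sigma_{\mu\nu}$ (a derivative of \eqref{eq_2_fundamental_identity_geo_dist}) to collapse everything to the trace $\sigma^{\gamma\mu}\sigma_{\gamma\mu}=n$, where $\sigma^{\gamma\mu}$ denotes the matrix inverse of the mixed bi-tensor defining $D$. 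Your fallback is essentially this argument, though note that the paper works with the mixed bi-tensor $\sigma_{\alpha\mu}$ (one leg at each point) rather than $\sigma_{\alpha\beta}$, because that is what appears in $D$; this is also what makes the $\bar\delta$-cancellation via parallel transport come for free. Your main route instead establishes the closed form $\Delta=\sqrt{|g(x)|/|g(y)|}$ in normal coordinates, which is precisely \eqref{eq_2_def_U(x,y)_normal_coord} squared---so you obtain that formula as a byproduct, at the cost of the coordinate computation. One caution: the step ``Gauss' lemma applied to $\exp_y$ reads this off as a ratio of metric densities'' is shaky, since Gauss' lemma controls only the radial part of $d\exp_y$ and you are mixing normal frames at $x$ with the map $\exp_y$. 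The clean fix is the symmetric partner of \eqref{eq_2_Van_vleck_determinat_1}, namely $-\sigma_{\mu\alpha}=g_{\mu\nu}(x)\,\partial\bar e_x^\nu/\partial y^\alpha$: in normal coordinates at $x$ this Jacobian is the identity, so $D$ reduces to $\pm g(x)$ and the closed form follows immediately from $\bar\delta=g(x)^{1/2}g(y)^{1/2}$.
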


\begin{proof}
    By explicit calculation,
    $$
    \begin{aligned}
        g^{\alpha\beta} \nabla_{\alpha}\big(\Delta\sigma_{\nu} \big) &= g^{\alpha\beta} \nabla_{\alpha}\Delta\sigma_{\beta} +  g^{\alpha\beta} \Delta\sigma_{\alpha\beta} \\
        &= -\Delta g^{\alpha\beta}  \delta_{\epsilon \mu}\nabla_{\gamma}\bar\delta^{\epsilon \mu} \sigma_{\beta} + \Delta g^{\alpha\beta}  \sigma_{\epsilon \mu}\nabla_{\gamma}\sigma^{\epsilon \mu} \sigma_{\beta} +  g^{\alpha\beta} \Delta\sigma_{\alpha\beta} ,
    \end{aligned}
    $$
    the first term is zero by \eqref{eq_2_parallel_displacement_prop_2}, where as the rest can be written as 
    $$
        \Delta g^{\alpha\beta}\sigma^{\gamma\mu}(\nabla_{\alpha}\sigma_{\gamma\mu}\sigma_{\beta}+ \sigma_{\alpha\gamma}\sigma_{\mu \beta}),
    $$
    Notice that,
    $$
        \sigma_{\alpha\mu}=\nabla_{\alpha}\sigma_{\mu}=\nabla_{\alpha}(g^{\rho\nu}\sigma_{\mu\nu}\sigma_{\rho})=g^{\rho\nu}(\nabla_{\alpha}\sigma_{\mu\nu}+\sigma_{\alpha\rho}),
    $$
    which, using \eqref{eq_2_Van_vleck_determinat_1} to calculate $\sigma^{\gamma\mu}\sigma_{\gamma\mu}$, implies 
    $$
    g^{\alpha\beta} \nabla_{\alpha}\big(\Delta\sigma_{\nu} \big)=\Delta \sigma^{\gamma\mu}\sigma_{\gamma\mu}=n\Delta.
    $$
\end{proof}

We are going to use this identities to construct solutions for wave equations of the form $P=g^{\alpha\beta}\nabla_{\alpha}\nabla_{\beta}+ B$, where $B \in C^{\infty}(M)$. \textit{i.e.} equations defined by normally hyperbolic differential operators \textit{c.f.} Lemma \ref{lemma_0_wave_eq_form_of_normally_hyp_op}. We are, in particular, looking for a \textit{parametrix} for the equations, \textit{i.e.} a distribution $H\in \mathcal{D}'(\mathcal{O})$, such that if $x\in M$ is fixed, $\Omega\ni x$ is a geodesically convex neighborhood, 
\begin{equation}\label{eq_2_def_parametrix}
    P(y)H(x,y)-\delta_y\in C^{\infty}(\Omega).
\end{equation}
In the above expression $\delta_y:\mathcal{E}(\Omega)\ni f \mapsto f(y)\in  \mathbb{R}$ is Dirac's delta. Denote by $\mathrm{Par}(M,h)$ the set of parametrices with respect to the normally hyperbolic operator $P$.

\begin{theorem}\label{thm_2_existence_of_parametrices}
    Suppose that $(M,g)$ is a time oriented Lorentzian manifold let $d=\frac{n-2}{2}$, then given any wave equation of the form $P=\square_g + B$ with $B\in C^{\infty}(M)$, there exist local parametrices:
    \begin{itemize}
        \item[$(i)$] if $n$ is even, 
        $$
            H(x,y)=\beta_n \bigg[ \frac{U(x,y)}{(\sigma(x,y)+i\epsilon t(x,y)+\epsilon^2)^d}+V(x,y)\ln\Big(\frac{1}{\mu^2}(\sigma+i\epsilon t(x,y)+\epsilon^2)\Big)+ W(x,y)\bigg],
        $$
        \item[$(ii)$] if $n$ is odd,
        $$
            H(x,y)=\beta_n\frac{Z(x,y)}{(\sigma(x,y)+i\epsilon t(x,y)+\epsilon^2)^d},
        $$
    \end{itemize}
    where $\mu$ is a constant length scale, $t(x,y)\equiv t(x)-t(y)$ and $t$ is the Cauchy temporal function, $\epsilon\to 0$ the parameter used to evaluate the Hadamard principal value of the integral and the cut in the complex domain of $\ln$ is in the negative real axis. The coefficients $U=\sum_{j=1}^{d-1}U_j\sigma^j$, $V=\sum_{j\geq 0}V_j\sigma^j\rho(\sigma k_j)$, $W=\sum_{j\geq 0}W_j\sigma^j\rho(\sigma k_j)$, $Z=\sum_{j\geq 0}Z_j\sigma^j\rho(\sigma k_j)$ are recursively constructed out of the geometric data of the problem, $\rho \in C^{\infty}_c(\mathbb{R})$ satisfies $0\leq \rho \leq 1$, $\rho(z)\equiv 1$ for $|z|\leq 1/2$, $\rho(z)\equiv 0 $ for $|z| \geq 1$ and $\{k_j\}_{j\in \mathbb{N}}$ is a suitable sequence with $k_j \nearrow \infty$. Each function $U_j,V_j,W_j,Z_j \in C^{\infty}(\Omega \times \Omega)$ is inductively constructed out of the metric $g$ and the operator $P$ and $\beta_n$ are numerical coefficients depending on the dimension, moreover each series is convergent to a smooth function of $C^{\infty}(\Omega \times \Omega)$ for a suitable geodesically convex neighborhood $\Omega$ of $x\in M$. Moreover, given $H_1,\ H_2 \in \mathrm{Par}(M,h)$, $H_1-H_2\in C^{\infty}(\mathcal{O})$. 
\end{theorem}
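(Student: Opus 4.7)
My plan is to proceed by the classical Hadamard recursion, splitting the construction into three stages: (i) deriving transport equations for the coefficients by formal substitution of the ansatz, (ii) solving those transport equations recursively along geodesics emanating from $x$, and (iii) controlling the convergence of the (generally divergent) Hadamard series via cutoff functions $\rho(\sigma k_j)$ with $k_j \nearrow \infty$.

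First, I would write $H$ in the ansatz form of the statement and apply $P = \square_g + B$ to each building block, treating $x$ as a parameter and differentiating in $y$. Using Leibniz together with the identity $g^{\alpha\beta}\sigma_\alpha \sigma_\beta = 2\sigma$ from Lemma \ref{lemma_properties_of_sigma} and the relation $g^{\alpha\beta}\nabla_\alpha(\Delta\sigma_\beta) = n\Delta$ from Lemma \ref{lemma_2_Delta_identity}, one computes
\[
    \square_g\big(A\,\sigma^{k}\big) = k(k+1-n/2)\,\sigma^{k-1}A \;+\; 2k\,\sigma^{k-1}\Delta^{-1/2}\sigma^{\alpha}\nabla_\alpha(\Delta^{1/2}A) + \sigma^{k}\square_g A ,
\]
and an analogous formula for the logarithmic block. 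Collecting powers of $\sigma$ in $P(y)H(x,y)$ and demanding that the non-smooth coefficients vanish gives a sequence of \emph{transport equations} of the schematic form
\[
    \bigl(2j + 2\,\sigma^\alpha\nabla_\alpha\bigr)\bigl(\Delta^{1/2}U_j\bigr) \;=\; -\,\Delta^{1/2}\,P U_{j-1},
\]
with analogous equations governing $V_j$, $W_j$ (resp.\ $Z_j$), the logarithmic terms forcing a normalization condition on $V_0$ in even dimensions.

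Second, I would solve these transport equations recursively along the geodesics from $x$, which parametrize $\Omega$. Pulling back by $\exp_x$ turns each equation into an ODE along radial rays in $T_xM$ with a regular singular point at the origin, whose unique regular solution is determined by the coincidence limit at $y=x$. Starting from $U_0 = \Delta^{1/2}$ (or $Z_0 = \Delta^{1/2}$ in odd dimensions), and using the coincidence limits for $\sigma_{\alpha\beta}$ and its derivatives from Lemma \ref{lemma_properties_of_sigma} to fix the initial data for $V_0$, I obtain well-defined smooth coefficients $U_j, V_j, W_j, Z_j \in C^\infty(\Omega\times\Omega)$ for each $j$, depending only on the geometry $g$ and the potential $B$.

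The main obstacle is the third stage: the formal series $\sum_j V_j\sigma^j$, $\sum_j W_j\sigma^j$, $\sum_j Z_j\sigma^j$ are in general divergent, and one must regularize them. The standard device is a Borel-summation-type argument: since each $V_j$ is smooth on the compact closure of a fixed geodesically convex neighborhood $\Omega'\Subset\Omega$, I can choose $k_j \nearrow \infty$ growing fast enough that every term $V_j\sigma^j\rho(\sigma k_j)$ and all of its derivatives up to order $j$ are bounded by $2^{-j}$ in supremum norm on $\Omega'\times\Omega'$; this ensures $C^\infty$-convergence of the tail and therefore that $V$, $W$, $Z$ define smooth functions on $\Omega'\times\Omega'$. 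The cutoff $\rho(\sigma k_j)$ contributes only to $P(y)H$ via terms supported away from the diagonal (since $\rho$ is identically $1$ for $|\sigma|\le 1/(2k_j)$), hence produces only smooth contributions that can be absorbed into the error. The $i\epsilon$-prescription in $\sigma + i\epsilon t(x,y) + \epsilon^2$ specifies the distributional boundary value and is well defined precisely because $t(x,y)$ is a Cauchy temporal function separating the $\pm$ sides of the light cone.

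Finally, for uniqueness modulo $C^\infty$, suppose $H_1, H_2 \in \mathrm{Par}(M,h)$. Their difference satisfies $P(y)(H_1-H_2) \in C^\infty(\Omega')$. Both parametrices have the same leading singular structure, because the transport equations fix every coefficient $U_j, V_j, Z_j$ uniquely from the coincidence-limit initial data; hence $H_1 - H_2$ has no singular part along the light cone. Since $P$ is normally hyperbolic and the wavefront set of any distribution annihilated up to smooth by $P$ lies along null bicharacteristics (by propagation of singularities, Theorem \ref{thm_1_properties_of_Green_functions}$(v)$), the absence of a light-cone singularity forces $\mathrm{WF}(H_1-H_2) = \emptyset$, so $H_1 - H_2 \in C^\infty(\mathcal{O})$.
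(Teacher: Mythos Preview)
Your construction follows essentially the same route as the paper's proof: derive transport equations by applying $P$ to the Hadamard ansatz using the identities $g^{\alpha\beta}\sigma_\alpha\sigma_\beta = 2\sigma$ and $g^{\alpha\beta}\nabla_\alpha(\Delta\sigma_\beta) = n\Delta$, solve them recursively along geodesics in normal coordinates starting from $U_0 = \Delta^{1/2}$, and force convergence of the formal series by inserting cutoffs $\rho(\sigma k_j)$ with $k_j$ chosen by a diagonal argument. The paper carries this out explicitly only for $n=4$ and refers to Garabedian for other dimensions; it also invokes the Kay--Wald lemmas to justify that the $i\epsilon$-regularized expression defines a genuine distribution, a point you pass over rather quickly.

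One remark on your uniqueness argument. As literally defined in the paper, $\mathrm{Par}(M,h)$ consists of all distributions $H$ with $P(y)H(x,\cdot) - \delta_x \in C^\infty$; under that reading the claim $H_1-H_2\in C^\infty$ is false, since the retarded and advanced fundamental solutions both qualify and their difference is the causal propagator. The intended reading (used throughout the rest of the paper) is that $\mathrm{Par}(M,h)$ consists of parametrices \emph{of Hadamard form}. Under that assumption your step that the transport equations determine $U_j, V_j, Z_j$ uniquely already gives the conclusion directly: any two Hadamard parametrices differ only in the choice of the $W_j$ (or in the convergence-forcing sequence $k_j$), both of which contribute smoothly. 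Your propagation-of-singularities step is then superfluous; and without the Hadamard-form assumption it would not rescue the argument anyway, since propagation only constrains $\mathrm{WF}(H_1-H_2)$ to lie in $\mathrm{Char}(P)$ and be flow-invariant, which does not force it to be empty.
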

\begin{proof}

For simplicity we will do the case $n=4$, however, similar arguments also apply to the other cases; for the specific details we point to \cite{garabedian1960partial}. For simplicity, we also set $\mu =1$, the general case being a straightforward generalization. We divide the proof in steps: first we define the coefficients $U,\ V,\ W$ by requiring that $PH=0$ at each order in $\sigma$, then we take care of the convergence of the newly determined infinite series by introducing some ad-hoc cutoff factor which ensure convergence in the Fréchet topology of $C^{\infty}(\mathcal{O})$ but weakens $PH=0$ into $PH\in C^{\infty}(\mathcal{O})$, finally, using the results in \cite[Appendix B]{kay1991theorems} we show that $H$ is a well defined distribution.

\textbf{Step 1.} Fixed $x\in M$ consider 
$$
\begin{aligned}
    P(y)H(x,y)&= -\sigma^{-2}g^{\alpha\beta}\big(2 \nabla_{\alpha}U-U\Delta^{-1}\nabla_{\alpha}\Delta  \big)\sigma_{\beta}\\
    &\quad+\sigma^{-1}\big[ 2V+ g^{\alpha\beta}\big( 2\nabla_{\alpha}V-V\Delta^{-1}\nabla_{\alpha}\Delta  \big)\sigma_{\beta}+PU\big]\\
    &\quad+ PV\ln(\sigma)+PW.
\end{aligned}
$$
setting
\begin{equation}\label{eq_2_def_U(x,y)}
    U(x,y)=\Delta^{1/2}(x,y)
\end{equation}
and using Lemma \ref{lemma_2_Delta_identity} we get $U^{-1}\nabla_{\alpha}U=\frac{1}{2}\Delta^{-1}\nabla_{\alpha}\Delta$, which substituted into the above expression for $P(y)H(x,y)$ yields, at each order in $\sigma$, the following identities
$$
\begin{aligned}
    V_0+g^{\alpha\beta}\big(\nabla_{\alpha}V_0-\frac{1}{2}V_0\Delta^{-1}\nabla_{\alpha}\Delta)\sigma_{\beta}&=-\frac{1}{2} PU\\
    (j+1) V_j + g^{\alpha\beta}\big(\nabla_{\alpha}V_j-\frac{1}{2}V_j\Delta^{-1}\nabla_{\alpha}\Delta\big) \sigma_{\beta}&=-\frac{1}{2j}PV_{j-1}\\
    (j+1)W_j+g^{\alpha\beta}\big(\nabla_{\alpha}W_j-\frac{1}{2}W_j\Delta^{-1}\nabla_{\alpha}\Delta\big) \sigma_{\beta}&= -\frac{1}{2j}P(W_{j-1}-V_{j-1})-V_n
\end{aligned}
$$
To solve the above system we use normal coordinates $\dot x^{\mu}=\bar e_x^{\mu}(y)$ around $x$, in those coordinates
\begin{equation}\label{eq_2_def_U(x,y)_normal_coord}
    U(x,\dot x)=\Big(\frac{g(x)}{g(\dot x)}\Big)^{1/4}
\end{equation}
then we get the system
$$
\begin{aligned}
    \dot x^{\mu}\partial_{\mu}\Big(\frac{V_0}{U}\Big) + \frac{V_0}{U}&=-\frac{1}{4}\frac{PU}{U},\\
    \dot x^{\mu}\partial_{\mu}\Big(\frac{V_j}{U}\Big) + (j+1)\frac{V_j}{U}&=-\frac{1}{4}\frac{PV_{j-1}}{U},\\
    \dot x^{\mu}\partial_{\mu}\Big(\frac{W_j}{U}\Big) + (j+1)\frac{W_j}{U}&=-\frac{1}{4}\frac{PW_{j-1}-PV_{j-1}}{U}-\frac{V_j}{U},\\
\end{aligned}
$$
which can be integrated as 
\begin{equation}\label{eq_2_def_V_0(x,y)_normal_coord}
    V_0(x,\dot x)=-\frac{U(x,\dot x)}{4}\int_0^1 \frac{PU(x,s\dot x^{\mu})}{U(x,s\dot x^{\mu})}ds.
\end{equation}
\begin{equation}\label{eq_2_def_V_j(x,y)_normal_coord}
	V_l(x,\dot x)= -\frac{U(x,\dot x)}{4l}\int_0^1\frac{s^lPV_{l-1}(x,s\dot x^{\mu})}{U(x,s\dot x^{\mu})}ds.
\end{equation}
\begin{equation}\label{eq_2_def_W_j(x,y)_normal_coord}
	W_l(x,\dot x)=\frac{U(x,\dot x)}{4l}\int_0^1 s^l \frac{PV_{l-1}(x,s\dot x^{\mu})-lPW_{l-1}(x,s\dot x^{\mu})}{U(x,s\dot x^{\mu})}ds -lU(x,\dot x)\int_0^1 s^l \frac{V_{l}(x,s\dot x^{\mu})}{U(x,s\dot x^{\mu})}ds.
\end{equation}

\textbf{Step 2.} Up to now the series defining the function $H$ is a formal series in $\sigma$, when the coefficients of the differential operator $P$ are analytic, then the series for $H$ does converge to the exact solution. In our case however we have smooth coefficients, so the convergence has to be forced by hand with a suitable cutoff. Let $\rho \in C^{\infty}_c(\mathbb{R})$ with $0\leq \rho \leq 1$, $\rho(t)\equiv 1$ in $|t|\leq 1/2$ and $\rho(t)\equiv 0 $ in $|t| \geq 1$. Suppose that $\{k_j\}_{j\in \mathbb{N}}$ is a suitable sequence with $k_j \rightarrow \infty$, then claim that 
\begin{equation}\label{eq_2_Borel_series_parametrix}
	H(x,y)\doteq \frac{U(x,y)}{\sigma(x,y)}+ \sum_{l\geq 0} \rho(\sigma(x,y)k_l)\left(V_l(x,y)\sigma(x,y)^l\ln(\sigma(x,y))+W_l(x,y)\sigma(x,y)^l\right)
\end{equation}
is a parametrix. We are going to show convergence in the Fréchet topology of $C^{\infty}(\mathcal{O})$ of the series in $\sigma$. First we study the case where $x\neq y$, when $k_j^{-1}\leq \sigma(x,y)$, $\rho=0$, thus, given any compact subset $K\subset M$, we can estimate
$$
	\sup_{K\times K} \left| \rho(\sigma(x,y)k_l)\left( V_l(x,y)\sigma(x,y)^l+W_l(x,y)\sigma(x,y)^l \right)\right|\leq \sup_{K\times K} \left(\frac{1}{k_l}\right)^l\left( |V_l(x,y)||\ln(1/k_l)|+|W_l(x,y)|\right) .
$$
Then to ensure convergence of the series of \eqref{eq_2_Borel_series_parametrix} we require that 
$$
\begin{cases}
	\sum_{l\geq0} \sup_{K \times K}  \left(\frac{1}{k_l}\right)^l |V_l(x,y)||\ln(1/k_l)| <\infty,\\
	\sum_{l\geq0} \sup_{K \times K}  \left(\frac{1}{k_l}\right)^l |W_l(x,y)| <\infty.
\end{cases}
$$
Denote by $\{k_{j,0}\}$ the sequence satisfying those estimates. Next, we study
$$
	\sup_{K\times K} \left| \nabla_{\mu} \left( \rho(\sigma(x,y)k_l)\left( V_l(x,y)\sigma(x,y)^l+W_l(x,y)\sigma(x,y)^l \right) \right)\right|.
$$
Then again, we will get some conditions the sequence $\{k_j\}$ has to satisfy to ensure that the above quantity is bounded. Denote by $\{k_{j,1}\}$ the sequence obtained. Repeating this at all orders give us sequences $\{k_{j,n}\}$, without loss of generality we can assume that $\forall j\in \mathbb{N}$, $k_{j,n}\geq k_{j,n-1}$; by a diagonal argument, define a sequence $k_{j}\doteq k_{j,j}$. With this particular choice of $\{k_j\}$, the series defining the parametrix converges uniformly with all its derivatives to a smooth function in $K\times K \backslash \Delta_2(K)$. We can then take an exhaustion of a compact subset of $M$, for each we determine the corresponding sequence, then a diagonal argument enables us to conclude that the series defining $H$ converges with all its derivatives uniformly in every compact set of $M$. 

\textbf{Step 3.} When $y\to x$, or $x, \ y$ are points along a lightlike geodesics, singularities begin to appear, applying \cite[Lemma B.1, Lemma B.2]{kay1991theorems}, we obtain that $H$ is a distribution defined through Hadamard principal value, \textit{i.e.} 
\begin{equation}\label{eq_2_Hadamard_distribution}
	H(x,y)= \frac{1}{4\pi^2}\frac{U(x,y)}{\sigma(x,y)+i\epsilon t(x,y)+\epsilon^2}+ \widetilde{V}(x,y)\ln(\sigma(x,y)+i\epsilon t(x,y)+\epsilon^2)+ \widetilde{W}(x,y);
\end{equation}
where $t$ is the Cauchy temporal function, $t(x,y)\equiv t(x)-t(y)$, $\epsilon\to 0$ the parameter used to evaluate the principal value of the integral.
\end{proof}

\section{Wick Powers}\label{section_Wick_powers}

From the classical results presented in Chapter \ref{chapter_classical} we would like to enter the quantum realm. We accomplish this via deformation quantization of the algebra of microcausal functionals. In this regard, we will assume that the microcausal algebra of Definition \ref{def_1_WF_mucaus} is maximally defined, that is the $CO$-open subset $\mathcal{U}$ representing the domain of functionals is the whole manifold $C^{\infty}(M,\mathbb{R})$. Given a background geometry $h=(g,m,\kappa)$ we will denote by
$$
    \Big(\mathcal{F}_{\mu c}(M,h),\{ \ , \ \}_{(M,h)}\Big)
$$
the Poisson algebra of microcausal functional where the background geometry $h=(g,m,\kappa)$ determines both the wave front set properties (\textit{c.f.} \eqref{eq_1_WF_mu_caus}) and the dynamic induced by the Klein-Gordon operator 
$$
    \square_g+ m^2+ \kappa R(g).
$$
A \textit{deformation quantization} of the algebra $\mathcal{F}_{\mu c}(M,h)$ is therefore an associative algebra with a $\star$ product of the form
$$
    F\star G= \sum_j \hbar^j \Pi_j(F,G)
$$
for some mappings $\Pi_j$ satisfying the consistency conditions
$$
\begin{aligned}
    \Pi_0(F,G) &= F\cdot G,\\
    \Pi_1(F,G)-\Pi_1(G,F)&= i\hbar \langle dF,\mathcal{G}_{(M,h)}dG \rangle + O(\hbar^2),
\end{aligned}
$$
where $\cdot$ is the product of the classical algebra defined in \eqref{eq_1_classical_algebra_product}, $\langle dF,\mathcal{G}_{(M,h)}dG \rangle$ and $\mathcal{G}_{(M,h)}$ are respectively the Peierls bracket and the causal propagator associated to the Klein Gordon equations on globally hyperbolic spacetime (\textit{c.f.} Definition \ref{def_1_Peierls}, Theorem \ref{thm_1_properties_of_Green_functions}). The price we pay to work with such algebras is that they are infinite series in the deformation parameter $\hbar$. 

\begin{lemma}\label{lemma_2_Weyl_regular_algebra}
    Consider the algebra of regular functionals $\mathcal{F}_{reg}(M,h)$ with the topology of strong convenient convergence generated by seminorms \eqref{eq_1_strong_conv_seminorms_1}. Then the star product defined, at each order of $\hbar$, by
    \begin{equation}\label{eq_2_regular_star_product}
	\big(F \star G\big) (\varphi)=F(\varphi)G(\varphi) +\sum_{j\geq 1}\frac{\hbar^j}{j!} \bigg(\frac{i}{2}\bigg)^j\Big\langle d^jF[\varphi], \otimes^j \mathcal{G}_{(M,h)} \big(d^jG[\varphi]\big) \Big\rangle ,
    \end{equation}
    is a deformation quantization product. Moreover, if we consider the subalgebra $\mathcal{W}(M,h)$ of $\mathcal{F}_{reg}(M,h)$ generated by imaginary exponentials of linear fields 
    $$
        \phi_{(M,h,f)}:\varphi \mapsto \int_M f \varphi d\mu_g,
    $$
    as $f$ varies in $\mathcal{D}(M)$, then $\mathcal{W}(M,h)$ is a Weyl subalgebra.
\end{lemma}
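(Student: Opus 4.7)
The plan is to verify four items in order: well-definedness and closure of the series within $\mathcal{F}_{reg}(M,h)[[\hbar]]$, the deformation quantization consistency conditions, associativity of $\star$, and the explicit evaluation on imaginary exponentials yielding the Weyl relations. For well-definedness, since $F,G\in\mathcal{F}_{reg}(M,h)$, each integral kernel $f^{(j)}_{\varphi}[0]$, $g^{(j)}_{\varphi}[0]$ lies in $\Gamma^{\infty}_c\bigl(M^j\leftarrow \boxtimes^j(\varphi^{*}VB)'\bigr)$ by Definition \ref{def_1_func_classes}; by (i) of Theorem \ref{thm_1_properties_of_Green_functions} the causal propagator $\mathcal{G}_{(M,h)}$ sends smooth compactly supported sections continuously to smooth sections, so each pairing is a finite real number depending smoothly on $\varphi$, and one checks inductively using Faà di Bruno that $F\star G$ is again regular. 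Continuity in the strong convenient topology follows from the continuity of $\mathcal{G}_{(M,h)}$ in the relevant Fréchet/LF-topologies. Condition (ii) is immediate: the $\hbar^0$ term is $F(\varphi)G(\varphi)$, and antisymmetrizing the $\hbar^1$ term gives $\frac{i\hbar}{2}\bigl(\langle dF,\mathcal{G}_{(M,h)}dG\rangle - \langle dG,\mathcal{G}_{(M,h)}dF\rangle\bigr)=i\hbar\langle dF,\mathcal{G}_{(M,h)}dG\rangle$ thanks to the antisymmetry of the causal propagator established in Lemma \ref{lemma_1_duality_green}, which coincides with $i\hbar\{F,G\}_{(M,h)}$ by Definition \ref{def_1_Peierls}.

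Associativity is the main obstacle. I would rewrite the star product as
\begin{equation*}
  (F\star G)(\varphi) \;=\; \mathrm{m}\circ\exp\!\Bigl(\tfrac{i\hbar}{2}\,\Gamma_{\mathcal{G}}\Bigr)(F\otimes G)\Big|_{\varphi_1=\varphi_2=\varphi},
\end{equation*}
where $\mathrm{m}$ is pointwise multiplication and $\Gamma_{\mathcal{G}}$ is the ``constant-coefficient'' bidifferential operator on $\mathcal{F}_{reg}(M,h)\,\hat{\otimes}\,\mathcal{F}_{reg}(M,h)$ sending $F\otimes G$ to the functional $(\varphi_1,\varphi_2)\mapsto\langle dF[\varphi_1],\mathcal{G}_{(M,h)}\,dG[\varphi_2]\rangle$. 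Because $\Gamma_{\mathcal{G}}$ involves only kinematical derivatives on distinct factors and contraction with the fixed bidistribution $\mathcal{G}_{(M,h)}(x,y)$, applying it to a triple tensor product commutes with itself in the sense required for the Moyal-type identity $\exp(\tfrac{i\hbar}{2}\Gamma_{12})\exp(\tfrac{i\hbar}{2}\Gamma_{13}+\tfrac{i\hbar}{2}\Gamma_{23})=\exp(\tfrac{i\hbar}{2}\Gamma_{13})\exp(\tfrac{i\hbar}{2}\Gamma_{12}+\tfrac{i\hbar}{2}\Gamma_{23})$, from which $(F\star G)\star H=F\star(G\star H)$ follows by comparing coefficients of $\hbar^n$. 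Regularity of $F,G,H$ is essential: it ensures that every iterated pairing produces smooth compactly supported kernels and that Schwartz-kernel manipulations, Fubini arguments and the Leibniz rule can be applied unrestricted; none of this would work if we allowed microcausal functionals directly, which is precisely why the extension to $\mathcal{F}_{\mu c}(M,h)$ requires the later density argument of Lemma \ref{lemma_1_regular_density} together with a regularization by Hadamard parametrices.

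For the final claim, a direct computation on $W(f)=e^{i\phi_{(M,h,f)}}$ gives $d^jW(f)[\varphi](\psi_1,\ldots,\psi_j)=i^jW(f)(\varphi)\prod_{k=1}^{j}\phi_{(M,h,f)}(\psi_k)$, since $\phi_{(M,h,f)}$ is linear in $\varphi$ and all its higher derivatives vanish. Substituting into \eqref{eq_2_regular_star_product} and noting that $W(f)(\varphi)W(g)(\varphi)=W(f+g)(\varphi)$ as functions of $\varphi$,
\begin{equation*}
  W(f)\star W(g)\;=\;W(f+g)\sum_{j\geq 0}\frac{1}{j!}\Bigl(-\tfrac{i\hbar}{2}\,\mathcal{G}_{(M,h)}(f,g)\Bigr)^{j}\;=\;e^{-\frac{i\hbar}{2}\,\mathcal{G}_{(M,h)}(f,g)}\,W(f+g),
\end{equation*}
which is precisely the Weyl relation with symplectic form $\hbar\,\mathcal{G}_{(M,h)}$. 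Closure of $\mathcal{W}(M,h)$ under involution is immediate from $W(f)^{*}=W(-f)$, so $\mathcal{W}(M,h)$ is a Weyl $*$-subalgebra of $\mathcal{F}_{reg}(M,h)$ equipped with $\star$.
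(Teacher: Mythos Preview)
Your proof is correct and follows essentially the same approach as the paper for the parts the paper actually proves: both verify the consistency conditions $\Pi_0(F,G)=F\cdot G$ and $\Pi_1(F,G)-\Pi_1(G,F)=i\hbar\{F,G\}_{(M,h)}$ via antisymmetry of $\mathcal{G}_{(M,h)}$, and both compute the Weyl relation by substituting $d^jW(f)[\varphi]=i^jW(f)(\varphi)\,f^{\otimes j}$ into the star product series. Your argument is more thorough than the paper's, which does not discuss well-definedness/closure of the series in $\mathcal{F}_{reg}(M,h)[[\hbar]]$ and entirely omits the associativity argument; your exponential-of-bidifferential-operator rewriting is the standard route and the paper introduces exactly this $\Gamma_{\mathcal{G}}$ operator immediately after the lemma (equation \eqref{eq_2_def_Gamma_operator}), so your presentation is fully compatible with what follows.
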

\begin{proof}
    That $\Pi_0(F,G) = F\cdot G$ is clear from \eqref{eq_2_regular_star_product}; for $ \Pi_1(F,G)-\Pi_1(G,F)= i\hbar  \langle F,G \rangle_{\mathcal{G}} +O(\hbar)$, we can suppose that $F, \ G$ are regular functionals \textit{i.e.} their deformation series in $\hbar$ are trivial. Then at $\hbar$
    $$
    \begin{aligned}
            \big(\Pi_1(F,G)-\Pi_1(G,F)\big) (\varphi)&= \frac{i}{2} \Big(\Big\langle dF[\varphi], \mathcal{G}_{(M,h)}  \big(dG[\varphi]\big) \Big\rangle  - \Big\langle dG[\varphi], \mathcal{G}_{(M,h)}  \big(dF[\varphi]\big) \Big\rangle \Big)+O(\hbar^2)\\
            &= i \hbar  \langle F,G \rangle_{\mathcal{G}} +O(\hbar).
    \end{aligned} 
    $$
    Next we notice that for the regular functional $ \exp\big( i\phi_{(M,h,f)}\big)$, we have
    $$
        d \exp\big(i \phi_{(M,h,f)}[\varphi](\psi)\big)= i\phi_{(M,h,f)}(\psi)\exp\big(i \phi_{(M,h,f)}(\varphi)\big)
    $$
    for all $\varphi\in C^{\infty}(M,\mathbb{R})$, $f,\ \psi\in \mathcal{D}(M)$. Therefore,
$$
\begin{aligned}
    \Big(\exp\big(i \phi_{(M,h,f_1)}\big)\star \exp\big(i \phi_{(M,h,f_2)}\big)\Big) (\varphi)&=\sum_{j\geq 0 } \frac{1}{j!} \bigg(\frac{-i\hbar}{2}\bigg)^j\big( \big\langle f_1, \mathcal{G}_{(M,h)} f_2 \big\rangle\big)^j \exp\big(i \phi_{(M,h,f_1+f_2)}\big)(\varphi) \\
    &= e^{\frac{-i\hbar}{2} \mathcal{G}_{(M,h)}(f_1,f_2)}\exp\big(i \phi_{(M,h,f_1+f_2)}\big)(\varphi).
\end{aligned}
$$
\end{proof}

Notice that if we try to force prescription \eqref{eq_2_regular_star_product} for microcausal functionals, then we run into problems: for example  
$$
\begin{aligned}
    \big(\phi^2_{(M,h,f_1)}\star \phi^2_{(M,h,f_2)}\big) (\varphi)= 
    \int_{M^2} d\mu_g(x,y) f_1(x)f_2(y)\big( & \varphi^2(x)\varphi^2(y)+\varphi(x)\varphi(y)\mathcal{G}_{(M,h)}(x,y) \\ &+ \mathcal{G}_{(M,h)}(x,y)^2 ).
\end{aligned}
$$
does possess the element $\mathcal{G}(x,y)^2$ which is not a well defined distribution. To remedy this we start noticing that given the operator 
\begin{equation}\label{eq_2_def_Gamma_operator}
\begin{aligned}
    \Gamma_{\mathcal{G}}= & \hbar\int_{M^2}d\mu_g(x,y) \mathcal{G}_{{(M,h)}}(x,y) \frac{\partial}{\partial \varphi(x)}\otimes \frac{\partial}{\partial \varphi(y)},
\end{aligned}
\end{equation}
we can write 
\begin{equation}
     F\star G= \mathcal{M} \circ e^{\frac{i}{2}\Gamma_{\mathcal{G}}}\big( F \otimes G\big).
\end{equation}
where $\mathcal{M}(F\otimes G)= F\cdot G$ and 
$$
    e^{\frac{i}{2}\Gamma_{\mathcal{G}}}=id_{\mathcal{F}_{reg}}\otimes id_{\mathcal{F}_{reg}} + \sum_j\bigg(\frac{i}{2}\bigg)^j\otimes^j\Gamma_{\mathcal{G}}. 
$$

\begin{definition}\label{def_2_hadamard_state_1}
Let $\omega$ be a state on the algebra $\mathcal{F}_{reg}(M,h)$, then it is called a Hadamard state if:
\begin{itemize}
    \item[$(i)$] its two point function $\omega_2:\mathcal{D}(M^2)\rightarrow \mathbb{R}: (f_1,f_2) \mapsto \omega\left(\phi_{(M,h)}(f_1),\phi_{(M,h)}(f_2)\right)$ satisfies the property
\begin{equation} \label{eq_WF_Hadamard_state}
    \mathrm{WF}(\omega_2)= \{(x,y;\xi,-\eta)\in \dot T^*M^2: \  (x,\xi)\sim_g(y,\eta),\ g^{\sharp}\xi\in V_g^+(x) \},
\end{equation}
where the notation $(x,\xi)\sim_g(y,\eta)$, introduced in Theorem \ref{thm_1_properties_of_Green_functions}, means that there is a lightlike geodesic $\gamma:[0,\Lambda] \subset \mathbb{R}\to M$ for which $(x,\xi)=(\gamma(0), g^{\flat}\dot{\gamma}(0))$ and $(y,\eta)=(\gamma(\Lambda), g^{\flat}\dot{\gamma}(\Lambda))$;
\item[$(ii)$] its truncated $n$-point functions $\omega^T_n$ defined implicitly by the relation\footnote{In the relation below $\pi \in \mathcal{P}(\{1,\ldots,n\})$ is a partition by $|\pi|$ elements $I_1, \ldots, I_{|\pi|}$ of $\{1,\ldots,n\}$ and $f_I= \otimes_{i\in I} f_i$. This requirement is consistent with Theorem 4.2 and Corollary 4.3 in \cite{sanders2010equivalence} and ensures that the $n$-point functions $\omega_2$ exhibit a wave front set satisfying the microlocal spectrum condition ($\mu$SC), that is $\mathrm{WF}(\omega_2) \subset \Gamma_n(M,h)$ where the latter is the set of points $(x_1,\ldots,x_q,\xi_1,\ldots,\xi_q)\in \dot{T}^{*}M^n$ with the following property: any point $x_i$ is connected at least to some other point $x_j$ via a lightlike geodesic $\gamma_{i\to j}$ and 
$$
	g^{\sharp}\xi_i= \sum_{e\in \{1,\ldots,q\}} \dot{\gamma}_{e\to i}(x_i) - \sum_{s\in \{1,\ldots,q\}}\dot{\gamma}_{i\to s}(x_i),
$$
where the first sum is taken on all future directed lightlike geodesics starting at some other point $x_e$ and ending at $x_i$, while the second is taken over all future directed lightlike geodesics starting at $x_i$ and ending at some other point $x_s$.}  
$$
    \omega_n(f_1,\ldots,f_n) = \sum_{\pi \in \mathcal{P}(\{1,\ldots,n\})} \prod_{I\in \pi} \omega^T_{|I|}(f_I), \quad f_i\in C^{\infty}_c(M) \ \ 1\leq i \leq n
$$
are smooth for all $n\neq 2$.
\end{itemize}
\end{definition}

If $x\in M$ is arbitrary and $\Omega$ is a geodesically convex neighborhood of $x$, \cite[Theorem 5.1]{radzikowski1996micro} proved that 
$$
    \omega_2\big|_{\Omega\times \Omega}(x,y)- H(x,y)-\frac{i}{2}\mathcal{G}_{(M,h)}(x,y) \in C^{\infty}(\Omega\times \Omega)
$$
where $H$ is the Hadamard parametrix defined in \eqref{eq_2_Hadamard_distribution} and $\omega_2$ the two point function of a Hadamard state. As a result, the quantity $H+\frac{i}{2}\mathcal{G}$ satisfies the microlocal condition \eqref{eq_WF_Hadamard_state}. \\

It is now possible, by direct application of \cite[Theorem 8.2.14]{hormanderI}, not only to define powers of $\omega=H+\frac{i}{2}\mathcal{G}$ as distributions over $M\times M$, but to define altogether the quantity
\begin{equation}\label{eq_2_star_G_H_product_microcausal}
    F \star_H G \doteq F(\varphi)G(\varphi) +\sum_{j\geq 1}\frac{\hbar^j}{j!} \bigg(\frac{i}{2}\bigg)^j\Big\langle d^jF[\varphi], \otimes^j \omega \big(d^jG[\varphi]\big) \Big\rangle ,
\end{equation}
given any $F,G \in \mathcal{F}_{\mu c }(M,h)$. We stress that the notation $\star_H$ refers to the fact that in $\omega=H+\frac{i}{2}\mathcal{G}$, the antisymmetric part $\mathcal{G}$ is unique for each background geometry, whereas the symmetric part $H$ in defined modulo a smooth function. We then claim the following:

\begin{lemma}\label{lemma_2_star_to_star_H_product}
    Suppose that $\Gamma_{\mathcal{G}}$ is the mapping introduced in \eqref{eq_2_def_Gamma_operator}, then for any $F,G \in \mathcal{F}_{reg}(M,h)$ the $\star_{\omega}$ product defined in \eqref{eq_2_star_G_H_product_microcausal} with $\omega= H + \frac{i}{2}\mathcal{G}_{(M,h)}$ can be written as
    $$
        F\star_H G= \alpha_H \Big( \alpha_H^{-1}(F) \star \alpha_H^{-1}(G)\Big)
    $$
    where
    $$
        \alpha_H(F)(\varphi) \doteq F(\varphi) + \sum_{j\geq 1}( i\hbar)^j \big\langle H^j , d^{2j}F[\varphi]\big\rangle
    $$
\end{lemma}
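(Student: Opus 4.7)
Plan. The stated identity is equivalent to asserting that $\alpha_H^{-1}\colon(\mathcal F_{\mathrm{reg}},\star_H)\to(\mathcal F_{\mathrm{reg}},\star)$ is an isomorphism of unital associative algebras, so I would attack the lemma by recognising $\alpha_H$ as the formal exponential of a Gaussian monodifferential operator and establishing the corresponding Leibniz-type intertwining rule. Writing $\Gamma_K=\hbar\int K(x,y)\frac{\partial}{\partial\varphi(x)}\otimes\frac{\partial}{\partial\varphi(y)}$ for any bidistribution $K$ and $\mathcal M$ for the pointwise multiplication $F\otimes G\mapsto F\!\cdot\! G$, the regular product is $F\star G=\mathcal M\circ e^{\frac i2\Gamma_{\mathcal G_{(M,h)}}}(F\otimes G)$ and the same formula with $\omega=H+\tfrac i2\mathcal G_{(M,h)}$ in place of $\tfrac i2\mathcal G_{(M,h)}$ gives $\star_H$. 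Similarly $\alpha_H=e^{\lambda D_H}$ for the operator $D_H=\int H(x,y)\tfrac{\partial^2}{\partial\varphi(x)\partial\varphi(y)}$ and an appropriate scalar $\lambda$ absorbing the $(i\hbar)^j$ and combinatorial factors implicit in the notation $\langle H^j,d^{2j}F\rangle$.

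First I would verify the fundamental Leibniz identity
\[
D_H\circ\mathcal M\;=\;\mathcal M\circ\bigl(D_H\otimes\mathrm{id}+\mathrm{id}\otimes D_H+2\Gamma'_H\bigr),\qquad \Gamma'_H=\int H(x,y)\tfrac{\partial}{\partial\varphi(x)}\otimes\tfrac{\partial}{\partial\varphi(y)},
\]
by termwise computation on Bastiani-smooth regular functionals. The three operators on the right are constant-coefficient in $\varphi$ and pairwise commute, so exponentiation yields $\alpha_H\circ\mathcal M=\mathcal M\circ e^{2\lambda\Gamma'_H}\circ(\alpha_H\otimes\alpha_H)$, and inverting gives the dual rule for $\alpha_H^{-1}$. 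Composing this with the operator formulas for $\star$ and $\star_H$ and using that $\alpha_H^{-1}\otimes\alpha_H^{-1}$ commutes with every $\Gamma_K$ (they act on different tensor slots), the proof of $\alpha_H^{-1}(F\star_H G)=\alpha_H^{-1}(F)\star\alpha_H^{-1}(G)$ reduces to a purely numerical relation among the generators $\Gamma_H$, $\Gamma_\omega$ and $\Gamma_{\mathcal G_{(M,h)}}$ which, upon unpacking $\omega=H+\tfrac i2\mathcal G_{(M,h)}$, is exactly the decomposition of the Hadamard kernel into its symmetric and antisymmetric components.

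As an independent check, and as the technically cleanest route, I would specialise to the Weyl subalgebra $\mathcal W(M,h)$ of Lemma~\ref{lemma_2_Weyl_regular_algebra}: every functional derivative of $W_f=\exp(i\phi_{(M,h,f)})$ is a scalar multiple of $W_f$ and all the manipulations above collapse to scalar Gaussian identities. The desired relation then follows from the quadratic expansion $H(f_1+f_2,f_1+f_2)=H(f_1,f_1)+H(f_2,f_2)+2H(f_1,f_2)$ together with the antisymmetry of $\mathcal G_{(M,h)}$. Continuity of both sides of the sought identity in the topology on $\mathcal F_{\mathrm{reg}}(M,h)$ and the fact that $\mathcal W(M,h)$ algebraically generates $\mathcal F_{\mathrm{reg}}(M,h)$ then close the argument.

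The main obstacle is not conceptual but rather a careful bookkeeping of the factors of $i$, $\hbar$ and $1/j!$ implicit in the notation $\langle H^j,d^{2j}F\rangle$, together with the need to ensure that the formal series manipulations are genuine equalities of Bastiani-smooth functionals rather than merely of formal power series. On $\mathcal F_{\mathrm{reg}}(M,h)$ this is automatic because all functional derivatives are smooth compactly supported sections and $H$ pairs with them pointwise, but the same manipulations would fail on $\mathcal F_{\mu c}(M,h)$: the subsequent extension of $\star_H$ to microcausal functionals will instead rely on the density statement of Lemma~\ref{lemma_1_regular_density} and continuity of $\alpha_H$ in the strong convenient topology, which is precisely where the microlocal constraints on $H$ become indispensable.
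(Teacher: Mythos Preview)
Your primary approach is essentially that of the paper: derive the Leibniz identity $\widetilde{\Gamma}_H\circ\mathcal M=\mathcal M\circ(\widetilde{\Gamma}_H\otimes\mathrm{id}+\mathrm{id}\otimes\widetilde{\Gamma}_H+\Gamma_H)$, exponentiate using that the three operators commute, and then collapse the resulting exponents using $\omega=H+\tfrac{i}{2}\mathcal G_{(M,h)}$. The paper carries this out as a direct chain of equalities without your Weyl-subalgebra detour; note that if you wanted to promote that detour to a standalone proof, the claim that $\mathcal W(M,h)$ algebraically generates (or is dense in) $\mathcal F_{\mathrm{reg}}(M,h)$ is not established in the text and would require separate justification, whereas the Leibniz-and-exponentiate argument already works directly on all regular functionals.
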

Before the prove, let us use a notation similar to \eqref{eq_2_def_Gamma_operator} and set
\begin{equation}\label{eq_2_def_Gamma_tilde_operator}
   \widetilde{\Gamma}_H : \mathcal{F}_{reg}(M,h) \to \mathcal{F}_{reg}(M,h), \quad F \to \widetilde{\Gamma}_H (F)  
\end{equation}
with
$$
    \widetilde{\Gamma}_H (F)(\varphi) = \big\langle H, d^2F[\varphi]\big\rangle.
$$
Since the starting $F$ is regular, $\widetilde{\Gamma}_H (F) $ will be well defined and regular as well. 
\begin{proof}
We notice that by the Leibniz rule,
$$
    d(F \cdot G)[\varphi]= ( dF[\varphi]\cdot G + F\cdot dG[\varphi]),
$$
then we can write 
$$
\begin{aligned}
    \widetilde{\Gamma}_H(F\cdot G)= \Big( \widetilde{\Gamma}_HF \cdot G + F\cdot \widetilde{\Gamma}_H G+ 2\big\langle dF, H  (dG) \big\rangle  \Big),
\end{aligned}
$$
or equivalently, using the multiplication mapping $\mathcal{M}$ and the operator $\Gamma$ defined in \eqref{eq_2_def_Gamma_operator},
$$
    \widetilde{\Gamma}_H \circ \mathcal{M} = \mathcal{M} \circ \Big( id\otimes \widetilde{\Gamma}_H + \widetilde{\Gamma}_H\otimes id + {\Gamma}_H \Big).
$$
Finally, we have
$$
\begin{aligned}
    \alpha_H \Big( \alpha_H^{-1}(F) \star \alpha_H^{-1}(G)\Big)&=e^{\frac{i}{2}\widetilde{\Gamma}_H}\Big( e^{-\widetilde{\Gamma}_{H}}(F)\star e^{-\widetilde{\Gamma}_{H}}(G)\Big)\\
    &=e^{\frac{i}{2}\widetilde{\Gamma}_H}\circ \mathcal{M}\Big( e^{\frac{i}{2}{\Gamma}_{\mathcal{G}} -id\otimes \widetilde{\Gamma}_H - \widetilde{\Gamma}_H\otimes id }(F\otimes G)\Big)\\
    &=e^{\frac{i}{2}\widetilde{\Gamma}_H}\circ \mathcal{M}\Big( e^{-id\otimes \widetilde{\Gamma}_H - \widetilde{\Gamma}_H\otimes id +\frac{i}{2}{\Gamma}_{\mathcal{G}}}(F\otimes G)\Big)\\
    &= \mathcal{M}\circ \Big( e^{id\otimes \widetilde{\Gamma}_H + \widetilde{\Gamma}_H\otimes id +{\Gamma}_{H}-id\otimes \widetilde{\Gamma}_H - \widetilde{\Gamma}_H\otimes id +\frac{i}{2}{\Gamma}_{\mathcal{G}}}(F\otimes G)\Big)\\
    &= \mathcal{M} \circ \Big( e^{{\Gamma}_{H} +\frac{i}{2}{\Gamma}_{\mathcal{G}}}(F\otimes G)\Big)\\
    &= F \star_H G
\end{aligned}
$$
where in the third equality we used that the various exponentials commute as can be inferred directly from their definitions \eqref{eq_2_def_Gamma_operator}, \eqref{eq_2_def_Gamma_tilde_operator}.
\end{proof}

Lemma \ref{lemma_2_star_to_star_H_product} can also be interpreted as follows: using that $\alpha_{-H}=e^{\widetilde{\Gamma}_{-H}} = e^{-\widetilde{\Gamma}_H}=\alpha_H^{-1}$, we have
\begin{equation}\label{eq_2_abstract_star_product}
        \big(\alpha_H^{-1}(F)\star \alpha_H^{-1}(G)\big)=\alpha_H^{-1}\big(F\star_{H}G\big).
\end{equation}
We notice that \eqref{eq_2_abstract_star_product} is well defined for $F, \ G\in \mathcal{F}_{reg}(M,h)$ since $F\in \mathcal{F}_{reg}(M,h) \Rightarrow \alpha_H^{\pm 1}(F)\equiv e^{\pm\widetilde{\Gamma}_H}(F)\in \mathcal{F}_{reg}(M,h)$; however, $F\star_{H}G$ is well defined for microcausal functionals as well, therefore an extension of the $\star$ products could be obtained if we were able to define $e^{\pm\widetilde{\Gamma}_H}(F)$ for microcausal functionals. Of course, the drawback is that any definition of the star product relying on \eqref{eq_2_abstract_star_product}, will depend on the symmetric part $H$ of the Hadamard state $\omega$, which is not uniquely fixed by the background geometry. We shall first address the latter issue and then come back to the former.\\

We denote by $\mathcal{F}_{reg}(M,h,H)$, $\mathcal{F}_{\mu c}(M,h,H)$ the algebra generated by $\mathcal{F}_{reg}(M,h)$, $\mathcal{F}_{\mu c}(M,h)$ with the product $\star_H$ defined in \eqref{eq_2_star_G_H_product_microcausal}, since both algebras are of the form $\bigoplus_{j}\hbar^j\mathcal{F}_{\cdot}(M,h)$ we can endow them with the topology of pointwise convergence, \textit{i.e.} a sequence of formal series $F_k$ converge to $F$ if and only if for each order $j$ of $\hbar$, the sequence of functional $(F_j)_k \to F_j$ in the strong convenient topology (\textit{c.f.} \eqref{eq_1_strong_conv_seminorms_1}). For notational convenience, we also denote by $\alpha_H$ the operator $e^{\Gamma_H}$.

\begin{lemma}\label{lemma_2_regular_algebra_isomorphism}
	Let $H$, $H'$ be Hadamard quasifree states, there is a canonical isomorphism
$$
	\alpha_{H,H'}: \mathcal{F}_{{reg}}(M,h,H')\rightarrow  \mathcal{F}_{{reg}}(M,h,H): \alpha_{H'}^{-1}(F)\mapsto \alpha_{H}^{-1}\big(   \widetilde{ F} \big) 
$$
with
\begin{equation}\label{eq_algebra_isomorphism}
	\begin{aligned}
		\widetilde{F}(\varphi) & =\big(e^{\widetilde{\Gamma}_d}F\big)(\varphi)
	\end{aligned}
\end{equation}
where $d=H'-H\in C^{\infty}(M\times M)$.
\end{lemma}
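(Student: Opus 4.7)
The plan is to prove the lemma in three conceptual steps. The decisive microlocal fact is that, although neither $H$ nor $H'$ is individually smooth, their difference $d = H'-H$ is; once this is established, the algebraic identification of $\alpha_{H,H'}$ becomes essentially formal, since contracting a smooth $d$ against functional derivatives introduces no new wave-front directions.

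First I would establish that $d \in C^{\infty}(M\times M)$. By Definition \ref{def_2_hadamard_state_1}, both two-point functions $\omega_2 = H + \tfrac{i}{2}\mathcal{G}_{(M,h)}$ and $\omega'_2 = H' + \tfrac{i}{2}\mathcal{G}_{(M,h)}$ satisfy the Hadamard wave-front prescription \eqref{eq_WF_Hadamard_state}, whose antisymmetric part is state-independent. Hence the difference $d = \omega'_2 - \omega_2$ has $\mathrm{WF}(d)$ contained in the one-sided cone $C_+ = \{(x,y;\xi,-\eta)\in \dot T^*M^2 : (x,\xi)\sim_g(y,\eta),\ g^{\sharp}\xi \in V_g^+(x)\}$. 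On the other hand, since $H$ and $H'$ are real symmetric, so is $d$, and the wave-front set of a real distribution is invariant under total covector reversal $(x,y;\xi,\eta)\mapsto(x,y;-\xi,-\eta)$. The image of $C_+$ under this reversal is the past-directed cone, whose intersection with $C_+$ is empty. Therefore $\mathrm{WF}(d) = \emptyset$, i.e., $d$ is smooth.

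Second, with $d$ smooth, I would check that the operator $\widetilde{\Gamma}_d(F)(\varphi) = \langle d, d^2F[\varphi]\rangle$ is a well-defined endomorphism of $\mathcal{F}_{reg}(M,h)$: this uses the standard wave-front calculus that contraction against a smooth kernel introduces no new singularities (so regularity, and indeed microcausality, is preserved). Each iterate $\widetilde{\Gamma}_d^{\,j}$ then makes sense, and the formal exponential $e^{\widetilde{\Gamma}_d} = \sum_{j\geq 0} \widetilde{\Gamma}_d^{\,j}/j!$, understood order by order in $\hbar$, is an $\mathbb{R}[[\hbar]]$-linear automorphism of $\mathcal{F}_{reg}(M,h)[[\hbar]]$ with inverse $e^{-\widetilde{\Gamma}_d}$. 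In particular the prescription $\alpha_{H,H'}(\alpha_{H'}^{-1}(F)) = \alpha_H^{-1}(e^{\widetilde{\Gamma}_d}F)$ yields a well-defined map $\mathcal{F}_{reg}(M,h,H') \to \mathcal{F}_{reg}(M,h,H)$.

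Third, I would verify the homomorphism property. By Lemma \ref{lemma_2_star_to_star_H_product}, both $\alpha_H$ and $\alpha_{H'}$ are algebra isomorphisms from the Moyal algebra $(\mathcal{F}_{reg},\star)$ onto the Hadamard-deformed algebras $(\mathcal{F}_{reg},\star_H)$ and $(\mathcal{F}_{reg},\star_{H'})$ respectively; therefore $\alpha_H\circ\alpha_{H'}^{-1}$ is automatically an algebra isomorphism between the two Hadamard algebras. To match the explicit formula in the statement, one uses the Leibniz-type commutation identity $\widetilde{\Gamma}_{\bullet}\circ\mathcal{M} = \mathcal{M}\circ(\widetilde{\Gamma}_{\bullet}\otimes\mathrm{id} + \mathrm{id}\otimes\widetilde{\Gamma}_{\bullet} + \Gamma_{\bullet})$ already established in the proof of Lemma \ref{lemma_2_star_to_star_H_product}, the mutual commutativity of $\widetilde{\Gamma}_H, \widetilde{\Gamma}_{H'}, \widetilde{\Gamma}_d$ as kernel operators in the same variables, and the additivity $\Gamma_{H'} = \Gamma_H + \Gamma_d$. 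Bijectivity follows by constructing the inverse $\alpha_{H',H}$ through the symmetric prescription (with $-d$ in place of $d$). The hard part is the first step: the algebraic work of Steps 2 and 3 is routine once the smoothness of $d$ has secured the well-definedness of everything, but that smoothness is the genuine microlocal content of the lemma.
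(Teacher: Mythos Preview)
Your proof is correct and follows essentially the same algebraic route as the paper's, using commutativity of the $\widetilde{\Gamma}_\bullet$ operators and the Leibniz identity from Lemma~\ref{lemma_2_star_to_star_H_product}. Two minor differences are worth noting. First, you devote Step~1 to proving $d\in C^\infty(M\times M)$, whereas the paper treats this as already established from the Radzikowski discussion preceding the lemma; your direct wave-front argument (reality plus one-sidedness of $C_+$) is valid but not where the paper places the emphasis. Second, for the homomorphism property you give the conceptual shortcut ``$\alpha_H\circ\alpha_{H'}^{-1}$ is a composition of isomorphisms, hence an isomorphism,'' while the paper instead verifies $e^{\widetilde\Gamma_{H'-H}}(F\star_H G)=\widetilde F\star_{H'}\widetilde G$ by explicit computation; your shortcut is cleaner, though you then correctly note that matching the explicit formula $\widetilde F=e^{\widetilde\Gamma_d}F$ still requires the additivity $\widetilde\Gamma_{H}+\widetilde\Gamma_{d}=\widetilde\Gamma_{H'}$ and commutativity, which is exactly the first line of the paper's proof. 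Your closing remark that ``the hard part is the first step'' slightly misreads the lemma's role in the paper: there the microlocal content is offloaded to the prior Radzikowski citation, and the lemma itself is purely the algebraic identification.
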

\begin{proof}
Using \eqref{eq_algebra_isomorphism}, and the fact that $e^{\widetilde{\Gamma}_H}e^{\widetilde{\Gamma}_d}= e^{\widetilde{\Gamma}_{H+d}}=e^{\widetilde{\Gamma}_{H'}}$ we have
\begin{align*}
	&e^{\widetilde{\Gamma}_{H'}}(F)= e^{\widetilde{\Gamma}_{H+d}}(F)=e^{\widetilde{\Gamma}_H}e^{\widetilde{\Gamma}_d}(F)=e^{\widetilde{\Gamma}_H}(\widetilde F).
\end{align*}
By the very same calculation it is possible to show that $\alpha_{H,H'}\circ \alpha_{H',H}=id$. Thus the mapping so constructed is a bijection. Finally we claim that
$$
	e^{\widetilde{\Gamma}_{H'-H}}\big(F \star_H G \big)= \Big(e^{\widetilde{\Gamma}_{H'-H}}(F)\Big)\star_{H'} \Big(e^{\widetilde{\Gamma}_{H'-H}}(G)\Big)=\widetilde F \star_{H'} \widetilde G.
$$
which would imply that $\alpha_{H,H'}$ is a $\star$-isomorphism. By Lemma \ref{lemma_2_star_to_star_H_product},
$$
	F\star_H G= e^{\widetilde{\Gamma}_H}\Big( e^{-\widetilde{\Gamma}_H}(F) \star e^{-\widetilde{\Gamma}_H}(G)\Big),
$$
therefore
$$
\begin{aligned}
    e^{\widetilde{\Gamma}_{H'-H}}\Big(e^{\widetilde{\Gamma}_H}(F) \star_H e^{\widetilde{\Gamma}_H}(G) \Big)
    &= e^{\widetilde{\Gamma}_{H'-H}}\Big(e^{\widetilde{\Gamma}_H}\circ \mathcal{M}\Big(e^{-\widetilde{\Gamma}_H\otimes id - id \otimes \widetilde{\Gamma}_H+ \Gamma_{\frac{i}{2}\mathcal{G}}}\big( F \otimes G \big)\Big)\Big)\\ 
    &= e^{\widetilde{\Gamma}_{H'}}\circ \mathcal{M}\Big(e^{-\widetilde{\Gamma}_H\otimes id - id \otimes \widetilde{\Gamma}_H+ \Gamma_{\frac{i}{2}\mathcal{G}}}\big( F \otimes G \big)\Big)\Big)\\ 
    &= \mathcal{M}\Big(e^{-\widetilde{\Gamma}_{H-H'}\otimes id - id \otimes \widetilde{\Gamma}_{H-H'}+ \Gamma_{H'-\frac{i}{2}\mathcal{G}}}\big( F \otimes G \big)\Big)\Big)\\ 
    &= \widetilde F \star_{H'} \widetilde G.
\end{aligned}
$$
\end{proof}
This helps addressing the issue of dependence on the Hadamard state chosen to perform the $\star_H$ product. In fact, let $\mathrm{Had}(M,h)$\footnote{From here we will identify elements $\omega$ of $\mathrm{Had}(M,h)$ with their symmetric parts due to the fact that the causal propagator $\mathcal{G}_{(M,h)}$ is uniquely defined. When we will be investigating Wick powers, we will also deliberately identify H with the Hadamard parametrix \eqref{eq_2_Hadamard_distribution} since microlocal functionals need just $H$ to be defined along the diagonal, in which case, by \cite[Theorem 5.1]{radzikowski1996micro} it coincides with the expression \eqref{eq_2_Hadamard_distribution}, therefore identifying $\mathrm{Had}(M,h)$ with $\mathrm{Par}(M,h)$.} be the collection of Hadamard states of $\mathcal{F}_{reg}(M,h)$, define the abstract algebra of Wick ordered regular functionals
\begin{equation}\label{eq_2_abstract_regular_algebra}
\mathfrak{A}_{reg}(M,h)= \bigsqcup_{\omega\in \mathrm{Had}(M,h)}  \alpha^{-1}_H\big(\mathcal{F}_{reg}(M,h,H)\big)/\sim ,
\end{equation}
where $H$ is the symmetric part of $\omega$ and $\sim$ is the equivalence relation induced by the isomorphism $\alpha_{H,H'}$ of Definition \ref{lemma_2_regular_algebra_isomorphism}. The latter algebra naturally inherits the $\star$ product 
\begin{equation}\label{eq_2_abstract_regular_algebra_product}
	\star : \mathfrak{A}_{reg}(M,h) \times \mathfrak{A}_{reg}(M,h)\to \mathfrak{A}_{reg}(M,h): (A,B)\mapsto  \alpha_H^{-1}\big( \alpha_H(A)\star_H\alpha_H(B)\big).
\end{equation}
We stress that $\alpha_H: (\mathfrak{A}_{reg}(M,h),\star) \to (\mathcal{F}_{reg}(M,h,H),\star_H)$ is an algebra isomorphism, therefore  $\mathfrak A_{reg}(M,h)\simeq \mathcal{F}_{reg}(M,h)$ inherits the topology of pointwise convergence previously described. 

The advantage of using the abstract $*$-algebra $\mathfrak A_{reg}(M,h)$ lies in the possibility of extending it to microcausal functionals in such a way that the $\star$-product is independent from the $H$ chosen for its construction at the functional algebra level (see \eqref{eq_2_star_G_H_product_microcausal}). Recall that, by \ref{lemma_1_regular_density}, $\mathcal{F}_{reg}(M,h)\subset \mathcal{F}_{\mu c}(M,h)$ is dense, we then define the abstract algebra of Wick ordered microcausal functional as the sequential completion of $\mathfrak{A}_{reg}(M,h)$, that is
\begin{equation}\label{eq_2_abstract_muc_algebra}
	 \mathfrak{A}_{\mu c}(M,h)\doteq \big\{\lim_{n\rightarrow \infty}\alpha^{-1}_H(F_n): \mathfrak{A}_{reg}(M,h)\simeq \mathcal{F}_{reg}(M,h) \supset \{F_n\} \to F \in \mathcal{F}_{\mu c}(M,h) \big\},
\end{equation}
with $\star$ product \eqref{eq_2_abstract_regular_algebra_product}. We shall denote by $\alpha^{-1}_H(F)$, $F\in  \mathcal{F}_{\mu c}(M,h)$, the elements of this algebra. Unfortunately those elements fail to remain microcausal functionals (or formal series thereof), as an example
\begin{equation}\label{eq_2_Wick_powers}
	\phi^k_{(M,h,f,H)}=\alpha_H^{-1}\big( \phi^k_{(M,h)}(f) \big)= \int_M f(x):\phi^k:_H(x)d\mu_g(x),
\end{equation}
where $:\phi^k:_H$ denotes the usual ordering prescription of Wick powers of fields, fails to be a functional since evaluation at a configuration $\varphi\in C^{\infty}(M)$ yields a divergent product of distributions. The case $k=2$, is emblematic of this since $:\phi^2:_H(x)=\int_M \delta(x,y)(\phi(x)\phi(y)-H(x,y))d\mu_g(y)$ is not a well defined distribution in $\mathcal{D}'(M)$. We can therefore interpret the quantity $:\phi^k:_H(x)$ as a \textit{abstract algebra} element.\\

A potential issue with the prescription in \eqref{eq_2_abstract_muc_algebra} is that Wick ordering might differ when we use different sequences converging to the same microcausal functional, namely given $F_k \rightarrow F \leftarrow F'_k$ does it implies $e^{\Gamma_H}(F_k-F'_k)\to 0$?
\begin{lemma}\label{lemma_2_unique_conv}
    The mapping $\alpha_H^{-1}: \mathcal{F}_{reg}(M,h) \to \mathfrak A_{reg}(M,h) $ is sequentially continuous in the strong convenient topology.
\end{lemma}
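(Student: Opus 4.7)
The strategy is to expand $\alpha_H^{-1}$ as a formal power series in $\hbar$ and verify sequential continuity term by term. Since by definition
$$
\alpha_H^{-1}(F) = \sum_{j\geq 0}(-i\hbar)^j\,T_j(F),\qquad T_j(F)(\varphi)\doteq \tfrac{1}{j!}\bigl\langle H^{\otimes j},\,d^{2j}F[\varphi]\bigr\rangle,
$$
where the pairing is well posed because for $F\in\mathcal{F}_{reg}(M,h)$ the Bastiani derivative $d^{2j}F[\varphi]$ has a smooth, compactly-supported integral kernel $f^{(2j)}[\varphi]\in C^\infty_c(M^{2j})$, each $T_j:\mathcal{F}_{reg}(M,h)\to\mathcal{F}_{reg}(M,h)$ is a well-defined linear operator. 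The topology of $\mathfrak{A}_{reg}(M,h)\simeq\mathcal{F}_{reg}(M,h)$ being that of pointwise convergence in $\hbar$, the problem reduces to showing each $T_j$ is sequentially continuous in the strong convenient topology, which by linearity amounts to verifying $T_j(F_n)\to 0$ whenever $F_n\to 0$.

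Next, I would differentiate to obtain, for $\varphi\in C^\infty(M,\mathbb R)$ and $\psi_1,\ldots,\psi_k\in C^\infty(M)$,
$$
d^kT_j(F)[\varphi](\psi_1,\ldots,\psi_k)=\tfrac{1}{j!}\bigl\langle H^{\otimes j},\,\Phi^F_{\varphi,\psi}\bigr\rangle,\qquad \Phi^F_{\varphi,\psi}(x)\doteq\int_{M^k} f^{(2j+k)}[\varphi](x,y)\prod_{i=1}^k\psi_i(y_i)\,dy,
$$
with $\Phi^F_{\varphi,\psi}\in C^\infty_c(M^{2j})$. For a fixed seminorm $p_{[a,b],\gamma,\mathcal B,k}$, the supports of all the $\Phi^F_{\gamma(t),\psi}$ (as $t\in[a,b]$ and $\psi\in\mathcal B$ vary, and as $F$ ranges over a sequence whose supports lie in a single fixed compact) are contained in one compact $K_0\subset M^{2j}$. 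Since $H^{\otimes j}$ is a distribution of locally finite order, there exist $N\in\mathbb N$ and $C>0$ such that $|\langle H^{\otimes j},\Phi\rangle|\leq C\,\|\Phi\|_{C^N_{K_0}}$ for every $\Phi\in C^\infty_{K_0}(M^{2j})$, and consequently
$$
p_{[a,b],\gamma,\mathcal B,k}\bigl(T_j F\bigr)\leq C'\sup_{t,\psi,x,\,|\alpha|\leq N}\bigl|\partial^{\alpha}_x\Phi^F_{\gamma(t),\psi}(x)\bigr|.
$$

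The main obstacle is now to control this right-hand side by strong convenient seminorms on $F$. Expanding $\partial^{\alpha}_x\Phi^F_{\gamma(t),\psi}(x)=\int\partial^{\alpha}_xf^{(2j+k)}[\gamma(t)](x,y)\prod_i\psi_i(y_i)\,dy$ shows that the partial derivatives in $x$ amount to pairing $d^{2j+k}F[\gamma(t)]$ against distributional test objects of the form $\partial^{\alpha}\delta_{x_1}\otimes\delta_{x_2}\otimes\cdots\otimes\delta_{x_{2j}}\otimes\psi_1\otimes\cdots\otimes\psi_k$, which do not sit in any bounded subset of $C^\infty(M)^{2j+k}$ and are thus not directly captured by the strong convenient seminorms. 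The resolution is to exploit that the sequence $\{F_n-F\}$ is bounded in the strong convenient topology (being convergent in a locally convex space) and that, for regular functionals with supports contained in a fixed compact, the Fréchet space $C^\infty_{K^{2j+k}}(M^{2j+k})$ is nuclear: combined with the uniform support bound, the weak sequential convergence afforded by the strong convenient seminorms (against bounded families of smooth test sections in $C^\infty(M)^{2j+k}$) upgrades to convergence in the Fréchet topology of the kernels themselves, giving the required uniform $C^{N}_{K_0}$ control on $\Phi^{F_n}_{\gamma(t),\psi}$ and hence the sequential continuity of $T_j$.
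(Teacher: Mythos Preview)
Your reduction to the individual operators $T_j$ and the identification of the obstacle in step~4 are both correct. The gap is in your resolution. You assert that nuclearity of $C^{\infty}_{K^{2j+k}}(M^{2j+k})$, together with boundedness of $\{F_n\}$ in the strong convenient topology and the ``weak'' convergence the seminorms afford, upgrades convergence of the kernels $f_n^{(2j+k)}[\gamma(t)]$ to convergence in the Fr\'echet topology of $C^{\infty}_K$. But the strong convenient seminorms of order $2j+k$ only control pairings of $f_n^{(2j+k)}[\gamma(t)]$ against bounded subsets of $C^{\infty}(M^{2j+k})$; that is convergence (and boundedness) in the strong topology of $\mathcal E'(M^{2j+k})$, \emph{not} boundedness in $C^{\infty}_K(M^{2j+k})$. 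The Montel argument you invoke would need the latter to extract Fr\'echet convergence from weak convergence, and it is precisely that which is unavailable. A concrete obstruction: $g_n(x)=\chi(x)\cos(nx_1)$ with $\chi\in C^{\infty}_c$ converges to zero in $\mathcal E'$ uniformly on bounded (hence relatively compact) subsets of $C^{\infty}$ by Riemann--Lebesgue plus equicontinuity, yet $\|g_n\|_{C^0}\not\to 0$. So strong-$\mathcal E'$ convergence of smooth compactly supported kernels simply does not imply $C^{\infty}_K$ convergence, nuclear or not.

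The paper sidesteps the obstacle rather than resolving it. Instead of bounding $|\langle H^{\otimes j},\Phi\rangle|$ via a $C^N$ norm of $\Phi$ and then trying to control that norm, it mollifies $H$ to a \emph{smooth} $\widetilde H_l=H\circ(S_l\otimes S_l)\in C^{\infty}(M^2)$ using the same approximate-identity operators $S_l$ as in the density lemma. Then
\[
d^k T_j^{\widetilde H_l}(F)[\gamma(t)](\psi_1,\ldots,\psi_k)=d^{2j+k}F[\gamma(t)]\bigl((\widetilde H_l)^{\otimes j},\psi_1,\ldots,\psi_k\bigr)
\]
is a direct evaluation of the $(2j{+}k)$-th Bastiani derivative at a genuinely smooth test element; since a single element $(\widetilde H_l)^{\otimes j}\in C^{\infty}(M^{2j})$ is trivially bounded, the right-hand side is dominated by a strong convenient seminorm of order $2j+k$ on $F$ (recall the paper takes $\mathcal B\subset C^{\infty}(M^k)\simeq C^{\infty}(M)^k$). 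The trade is exactly the one your obstacle flags: rather than asking the kernel to be small in $C^N$ (which the seminorms do not enforce), one keeps the pairing against smooth test data (which they do). The remaining passage from $\widetilde H_l$ to $H$ is argued via the approximation $F_{j,\widetilde H,l}\to(\alpha_H F)_j$ borrowed from Lemma~\ref{lemma_1_regular_density}.
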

\begin{proof}
For simplicity we consider a sequence $\{F_k\}\subset \mathcal{F}_{reg}(M,h)$ with $F_k\to 0$, then we have to show that at each perturbative order $j\in \mathbb N$, $(\alpha_H(F_k))_j=\langle d^{2j}F_k[\cdot], \otimes^j H \rangle \to 0$. 
Similarly to what we did in the proof of Lemma \ref{lemma_1_regular_density}, if we consider the sequence $S_l$ of mollifiers (see \cite[Theorem 12 pp.68]{de2012differentiable}) strongly converging to the identity mapping and denote 
$ F_{j,\widetilde H,l}(\varphi)=\langle d^{2j}F[\varphi], \otimes^j \widetilde H_l \rangle$, 
where $\widetilde H_l = H\circ (S_l\otimes S_l) \in C^{\infty}(M\times M)$. By the same argument we used in the proof of Lemma \ref{lemma_1_regular_density},
$$
     F_{j,\widetilde H,l} \to (\alpha_H(F_k))_j. 
$$
Thus if we show that, as $l \to 0 $, $ F_{j,\widetilde H,l}\to 0$ for all $j\in \mathbb{N}$, we conclude. The latter is a more convenient expression since given any $[a,b]\subset \mathbb{R}$, $\gamma\in C^{\infty}(\mathbb{R},\mathcal{U})$ with $\mathcal{U}\subset C^{\infty}(M,\mathbb{R})$ $CO$-open, $\mathcal{B}\subset C^{\infty}(M)^l$ bounded subset, we see that $\widetilde H$ is trivially a bounded subset of $C^{\infty}(M^2)$, therefore
$$
\begin{aligned}
    \sup_{\substack{t\in[a,b]\subset \mathbb{R}\\ \gamma\in C^{\infty}(\mathbb{R},\mathcal{U})\\(\psi_1,\ldots,\psi_k)\in \mathcal{B}}}\Big|\big( d^kF_{j,\widetilde H,l}\big)[\gamma(t)](\psi_1,\ldots,\psi_k)\Big|&= \sup_{\substack{t\in[a,b]\subset \mathbb{R}\\ \gamma\in C^{\infty}(\mathbb{R},\mathcal{U})\\(\psi_1,\ldots,\psi_k)\in \mathcal{B}}}\Big|\big( d^{k+2j}F\big)[\gamma(t)](\psi_1,\ldots,\psi_k,(\widetilde H_l)^j)\Big| \\
    \leq & \sup_{\substack{t\in[a,b]\subset \mathbb{R}\\ \gamma\in C^{\infty}(\mathbb{R},\mathcal{U})\\(\psi_1,\ldots,\psi_k,h_1,\ldots,h_{2j})\in \mathcal{B}'}}\Big|\big( d^{k+2j}F\big)[\gamma(t)](\psi_1,\ldots,\psi_k,h_1,\ldots,h_{2j})\Big| 
\end{aligned}
$$
which is arbitrarily small.
\end{proof}
\begin{remark*}
Notice that by Lemma \ref{lemma_2_unique_conv}, it is easy to show that the limit performed in \eqref{eq_2_abstract_muc_algebra} is independent from the sequence chosen to perform the limit. Indeed let $\{F_k\}_{k\in \mathbb{N}}$, $\{F'_k\}_{k\in \mathbb{N}}$ be sequences of regular functionals converging to some microcausal functional $F$. Then for each perturbative order $j\in \mathbb{N}$ in $\hbar$,  
$$
\lim_{k\rightarrow \infty}\Big(\alpha^{-1}_H\big((F_j)_k-(F'_j)_k\big) \Big) = 0.
$$
Therefore, $\lim_{k \to \infty} (\alpha_H(F_k)-\alpha_H(F'_k))=0$.
\end{remark*}

Combining Lemma \ref{lemma_2_unique_conv} with Lemma \ref{lemma_2_regular_algebra_isomorphism} we find that the mapping $\alpha_{HH'}$ can be extended to an \textit{abstract} isomorphism of algebras
$$
	\mathcal{F}_{\mu c}(M,h,H')\to \mathcal{F}_{\mu c}(M,h,H).
$$
Furthermore, by the proof of Lemma \ref{lemma_2_unique_conv}, $\alpha_H$ extends by continuity to a $\star$-algebra isomorphism
$$
    \alpha_H : \big( \mathfrak{A}_{\mu c}(M,h),\star_H \big) \to \big( \mathcal{F}_{\mu c}(M,h),\star_H \big).
$$

In \cite{drago}, was given a slightly different, nonetheless equivalent, characterization of the algebra $\mathfrak{A}_{\mu c}(M,h)$: let $\mathrm{Had}(M,h)$ be the set of Hadamard states of the algebra $\mathfrak A_{reg}(M,h)$. We define the bundle of microcausal algebras over $\mathrm{Had}(M,h)$ as
$$
	B_{\mu c}(M,h)=\bigsqcup_{H\in \mathrm{Had}(M,h)} \big( \mathfrak{A}_{\mu c}(M,h,H),\star_H\big),
$$
then
\begin{equation}\label{eq_2_muc_equivariant_algebra}
	\mathfrak{A}_{\mu c} (M,h)= \big\{ A \in \Gamma(\mathrm{Had}(M,h) \leftarrow B_{\mu c}(M,h,H)) : \alpha_{H,H'} (A(H')) =A(H) \big\},
\end{equation}
therefore elements of the abstract algebra $\mathfrak{A}_{\mu c} (M,h)$ are $\alpha$-equivariant sections of the bundle $B_{\mu c}(M,h)$. The connection between \eqref{eq_2_muc_equivariant_algebra} and \eqref{eq_2_abstract_muc_algebra} is that $A(H)=\alpha^{-1}_H(F)$ for some $F\in \mathcal{F}_{\mu c}(M,h)$. Elements of \eqref{eq_2_muc_equivariant_algebra} posses a $\star$ product operation defined by
\begin{equation}\label{eq_2_muc_equivariant_algebra_prod}
	(A\star B)(H)= A(H)\star_H B(H).
\end{equation}
This is a well defined product due to Lemma \ref{lemma_2_regular_algebra_isomorphism}. 

Call $\mathcal{D}: \mathfrak{M} \rightarrow \mathfrak{LCS}$, the functor that associates to each manifold $M$ the locally convex space $\mathcal{D}(M)\equiv C^{\infty}_c(M)$, and to each causally convex embedding $\psi:M\rightarrow M'$ the mapping $\mathcal{D}(\psi): \mathcal{D}(M) \rightarrow \mathcal{D}(M'): f \mapsto f \circ \psi^{-1} $. We remark that the dual functor of $\mathcal{D}$ is the contravariant functor $\mathcal{E}: \mathfrak{M} \rightarrow \mathfrak{LCS}$, defined by $\mathcal{E}:M\mapsto \mathcal{E}(M)\equiv C^{\infty}(M): \psi \mapsto \mathcal{E}(\psi)$ where $\mathcal{E}(\psi): \mathcal{E}(M')\rightarrow \mathcal{E}(M):f'\mapsto \psi^{*}f'=f'\circ \psi$.

\begin{definition}\label{def_2_net_of_observables}
A functor 
$$
    \mathfrak{A}: \mathfrak{BckG} \rightarrow \mathfrak{Alg}
$$
from the category of background geometries to the category of *-algebras with unity and *-morphisms with the following property:\\	
\textbf{Scaling} if $S_{\lambda} :\Gamma^{\infty}(M\leftarrow HM) \rightarrow \Gamma^{\infty}(M\leftarrow HM) $ represents physical scaling (see \eqref{eq_2_physical_scaling} for the definition), then there is a $*$-isomorphism $\sigma_\lambda: \mathfrak{A}(M,h)\rightarrow \mathfrak{A}(M,S_{\lambda}(h))$. The assignment of a $*$-algebra $\mathfrak A(M,h)$ to each background geometry $(M,h)$ together with the $*$-morphism $\mathfrak A(\chi):\mathfrak{A}(M,h) \to \mathfrak{A}(M',h') $ to each morphism $\chi: (M,h) \to (M',h')$ creates a \textit{net of algebras of observables}.
\end{definition}

The latter property can be reformulated in categorical language as follows. Let $\Lambda: \mathfrak{BckG}\rightarrow \mathfrak{BckG}$ be the the functor implementing physical scaling on background geometries, then $\sigma : \Lambda \Rightarrow \mathfrak{A}$ is a natural transformation. Note that since $S_{\lambda}\circ S_{\lambda^{-1}}=id$, then each $\sigma_{\lambda}$ conveniently tuns out to be a $*$-isomorphism. Similarly, we can introduce another natural transformation $s:\Lambda\Rightarrow \mathfrak{D}$ defined by
$$
	s_{\lambda}: \mathcal{D}(M) \rightarrow \mathcal{D}(M) : f \mapsto \lambda^n f.
$$
Then again each $s_{\lambda}$ defines an isomorphism with inverse $s_{\lambda^{-1}}$.\\

\begin{definition}\label{def_2_covariant_quantum_field}
A \textit{locally covariant scalar quantum field} $\Phi$ is a natural transformation
$$
    \Phi : \mathcal{D} \Rightarrow \mathfrak{A}.
$$
such that, if we fix a background geometry $(M,h)$, then 
$$
    \Phi_{(M,h)}: D(M) \rightarrow \mathfrak{A}(M,h)
$$
is an algebra valued distribution. 
\end{definition}

In particular, Definition \ref{def_2_covariant_quantum_field}, entails that $\Phi_{(M,h)}$ satisfies the following commutative diagrams
\begin{center}
\begin{tikzcd}
	(M,h) \arrow[rrr, "\chi"] \arrow[ddd]  \arrow[dr,"\mathcal{D}"] &  &   & (M'h') \arrow[ddd] \arrow[dl,"\mathcal{D}"]\\
		 & \mathcal{D}(M)\arrow[d, "\Phi_{(M,h)}"] \arrow[r,"\mathcal{D}(\chi)"]  & \mathcal{D}(M')  \arrow[d, "\Phi_{(M',h')}"] &  \\
		 & \mathfrak{A}(M,h) \arrow[r,"\mathfrak{A}(\chi)"] & \mathfrak{A}(M',h')  &\\
	(M,h) \arrow[ur,"\mathfrak{A}"] \arrow[rrr, "\chi"] &  & & (M',h') \arrow[ul,"\mathfrak{A}"] 	 
\end{tikzcd}
\end{center}	
In particular if we consider the scaling transformation $S_{\lambda}$ as an element of $ \mathrm{Hom}(HM)$ to the above diagram the scaling transformation induces the following diagram:
\begin{center}
\begin{tikzcd}
	(M,h) \arrow[rrr, "\Lambda \equiv S_{\lambda}"] \arrow[ddd]  \arrow[dr,"\mathcal{D}"] &  &   & (M,h_{\lambda}) \arrow[ddd] \arrow[dl,"\mathcal{D}"]\\
		 & D(M)\arrow[d, "\Phi_{(M,h)}"]  \arrow[r,"s_{\lambda}"] & D(M)    \arrow[d, "\Phi_{(M,h_{\lambda})}"] &  \\
		 & \mathfrak{A}(M,h) \arrow[r,"\sigma_{\lambda}"]  & \mathfrak{A}(M,h_{\lambda})  &\\
	(M,h) \arrow[ur,"\mathfrak{A}"] \arrow[rrr, "\Lambda \equiv S_{\lambda}"] &  & & (M,h_{\lambda}) \arrow[ul,"\mathfrak{A}"] 	 
\end{tikzcd}
\end{center}
We see that using the above diagram is always possible to compare the scaled field $\Phi_{(M,h_\lambda)}$ to the unscaled field $\Phi$ in the algebra $\mathfrak{A}(M,h)$ by considering the new field $\sigma_{\lambda}^{-1}\circ \Phi_{(M,h)} \circ s_{\lambda}$.\\

An immediate consequence of Definition \ref{def_2_covariant_quantum_field} is that naturality of $\Phi$ implies the following conditions:
\begin{itemize}
\item \textbf{Locality} If $\chi:M \rightarrow M'$ is an inclusion, then $\mathfrak{A}(\chi)$ is injective.
\item \textbf{Covariance} If $\chi: (M,h) \rightarrow (M'.h')$ is a causality preserving, isometric embedding then $\mathfrak{A}(M,h) \subset \mathfrak{A}(M',h')$
\end{itemize}

\begin{proposition}\label{prop_net_obs}
    The assignment $\mathfrak{A}_{\mu c}: \mathfrak{BckG}\rightarrow \mathfrak{Alg}:(M,h)\rightarrow \mathfrak{A}_{\mu c}(M,h)$ is a functor as per Definition \ref{def_2_net_of_observables}. 
\end{proposition}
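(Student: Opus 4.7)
The plan is to verify, in turn, the three requirements embedded in Definition~\ref{def_2_net_of_observables}: the assignment on objects is already in place via \eqref{eq_2_abstract_muc_algebra}, so what remains is (a) the assignment on morphisms $\chi:(M,h)\to (M',h')$, (b) functoriality, and (c) the scaling natural transformation $\sigma_\lambda$. To construct $\mathfrak{A}_{\mu c}(\chi)$ I first define, at the level of functionals, the pullback
\begin{equation*}
    \chi^{*}:\mathcal{F}_{\mu c}(M',h')\longrightarrow \mathcal{F}_{\mu c}(M,h),\quad (\chi^{*}F)(\varphi)\doteq F(\chi_{*}\varphi),
\end{equation*}
where $\chi_{*}\varphi\in C^\infty(M')$ is any smooth extension of $\varphi\circ\chi^{-1}$ outside the causally convex image $\chi(M)$ (the extension is immaterial because microcausal functionals have compact spacetime support and only smooth curves enter through the chart construction of Section~\ref{section_observables}). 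The first checks are that $\chi^{*}F$ has compact support in $M$, is Bastiani smooth, and satisfies the microcausal wave-front condition. Compact support and smoothness follow directly from the chain rule, while the wave-front statement reduces to \Cref{thm_0_WF_pullback} applied componentwise to the kernels of $d^k F[\chi_{*}\varphi]$, using that $\chi$ is an isometric embedding so $T^{*}\chi$ maps $\Upsilon_{k,g}$ into $\Upsilon_{k,g'}$.

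The second step is compatibility with the products. Since $\chi$ is a causally convex, time-oriented isometric embedding, the uniqueness part of Theorem~\ref{thm_1_existence_adv/ret_fundamental_solutions} gives $\chi^{*}\mathcal{G}_{(M',h')}=\mathcal{G}_{(M,h)}$ on $\chi(M)\times\chi(M)$; likewise, if $H'\in\mathrm{Par}(M',h')$ then $\chi^{*}H'\in\mathrm{Par}(M,h)$ by Theorem~\ref{thm_2_existence_of_parametrices} because the recursive coefficients $U,V,W,Z$ are built pointwise from the pulled-back geometric data. Setting $H\doteq\chi^{*}H'$ and expanding the star product \eqref{eq_2_star_G_H_product_microcausal}, the identity
\begin{equation*}
   \chi^{*}(F\star_{H'}G)=(\chi^{*}F)\star_{H}(\chi^{*}G)
\end{equation*}
follows first on regular functionals by a direct kernel computation, and then extends by sequential continuity to microcausal functionals thanks to Lemma~\ref{lemma_1_regular_density} and Lemma~\ref{lemma_2_unique_conv}. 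Realising $\mathfrak{A}_{\mu c}(M,h)$ via the equivariant-section picture \eqref{eq_2_muc_equivariant_algebra}, I then define
\begin{equation*}
   \bigl(\mathfrak{A}_{\mu c}(\chi)A\bigr)(H')\doteq \chi^{*,-1}\!\bigl(A(\chi^{*}H')\bigr),
\end{equation*}
where the abstract inverse $\chi^{*,-1}$ is the pushforward of Wick-ordered elements, well-defined because the difference of two Hadamard parametrices is smooth and $\alpha_{H_1,H_2}$ commutes with pullback. Equivariance of this new section, unitality, $\ast$-preservation and functoriality $\mathfrak{A}_{\mu c}(\chi_2\circ\chi_1)=\mathfrak{A}_{\mu c}(\chi_2)\circ\mathfrak{A}_{\mu c}(\chi_1)$ are then bookkeeping consequences of the construction.

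For the scaling, I would exploit the explicit transformation laws under $S_\lambda$ of \eqref{eq_2_physical_scaling}: a direct computation using $g\mapsto\lambda^{-2}g$, $m^2\mapsto\lambda^2 m^2$ shows that the Klein--Gordon operator scales as $P_{S_\lambda h}=\lambda^{2}P_h$, hence $\mathcal{G}_{(M,S_\lambda h)}=\lambda^{n-2}\mathcal{G}_{(M,h)}$, and the Hadamard series of Theorem~\ref{thm_2_existence_of_parametrices} scales coefficient-by-coefficient. Defining $\sigma_\lambda$ first on $\mathcal{F}_{reg}(M,h)$ by rescaling the arguments of fields (equivalently, by the relabelling $\varphi\mapsto\varphi$ but with the new background) and then lifting to $\mathfrak{A}_{\mu c}$ through \eqref{eq_2_muc_equivariant_algebra}, one obtains a $*$-isomorphism with inverse $\sigma_{\lambda^{-1}}$. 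I expect the main obstacle to lie in the second step: verifying in full detail that $\chi^{*}$ intertwines the microlocal $\star$-products, since one must combine Hörmander's pullback theorem with the inductive wave-front estimates of Theorem~\ref{thm_1_peierls_closedness} and the sequential-completion arguments of Lemma~\ref{lemma_2_unique_conv}, and show that the estimates remain uniform on the bounded sets defining the strong convenient seminorms \eqref{eq_1_strong_conv_seminorms_1}--\eqref{eq_1_hormander_seminorm_scalar_case}. Once this microlocal continuity is secured, the remaining functorial and scaling assertions become formal.
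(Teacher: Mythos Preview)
Your overall strategy matches the paper's: use the equivariant-section description \eqref{eq_2_muc_equivariant_algebra}, pull back parametrices along $\chi$, and build the scaling isomorphism from a field rescaling. However, there is a directional tangle in your construction of $\mathfrak{A}_{\mu c}(\chi)$ that the paper avoids. You first define a map $\chi^{*}:\mathcal{F}_{\mu c}(M',h')\to\mathcal{F}_{\mu c}(M,h)$ by \emph{extending} $\varphi\in C^\infty(M)$ to $\chi_{*}\varphi\in C^\infty(M')$; but this is only well-defined when $\mathrm{supp}(F)\subset\chi(M)$, and compactness of support alone does not guarantee that. You then try to repair this by invoking an ``abstract inverse'' $\chi^{*,-1}$, which is really just the pushforward you needed from the start. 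The paper instead works directly in the covariant direction: given $A\in\mathfrak{A}_{\mu c}(M,h)$, set
\[
A'(H')(\varphi')\doteq A(\chi^{*}H')(\chi^{*}\varphi'),\qquad \chi^{*}\varphi'=\varphi'\circ\chi,
\]
which requires only \emph{restriction} of $\varphi'$ and $H'$ from $M'$ to $M$ and therefore sidesteps all extension ambiguities. Once you go in the correct direction, the $\star$-homomorphism check, equivariance under $\alpha_{H,H'}$, and functoriality are exactly the bookkeeping you outline.

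For the scaling, the paper is more concrete than your sketch: it introduces an explicit field rescaling $\widetilde{S}_\lambda:(x,\varphi)\mapsto(x,\lambda^{(n-2)/2}\varphi)$, verifies that the Klein--Gordon action is invariant under the simultaneous substitution $h\mapsto S_\lambda h$, $\varphi\mapsto\widetilde{S}_\lambda\varphi$, and then defines $\sigma_\lambda^{-1}\bigl(\alpha_H^{-1}(F)\bigr)=\alpha_{H_\lambda}^{-1}(F\circ\widetilde{S}_\lambda)$. Your idea of ``relabelling $\varphi\mapsto\varphi$ but with the new background'' misses this compensating field rescaling, which is exactly what makes $\sigma_\lambda$ intertwine the $\star$-products.
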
 
\begin{proof}
The construction of the algebra relying on \eqref{eq_2_muc_equivariant_algebra} defines a mapping 
$$
	(M,h) \rightarrow \mathfrak{A}_{\mu c}(M,h),
$$
if $\chi: (M,h) \rightarrow  (M',h')$ is a causality preserving, isometric embedding, let $A\in \mathfrak{A}_{\mu c}(M,h)$, then for each $H\in \mathrm{Had}(M,h)$, $A(H)\in \mathcal{F}_{\mu c}(M,h,H)$ has compact support contained in $M$. Since $\chi$ is an isometry, for any Hadamard state $H'$ of $(M',h')$, $H=\chi^{*}H'$ is a Hadamard state for $(M,h)$. Therefore we define an element of  $A\in \mathfrak{A}_{\mu c}(M',h')$ by 
$$
	A'(H')(\varphi')= A(\chi^{*}H')(\chi^{*}\varphi')
$$
clearly $\mathrm{supp}(A'(H'))\subset \chi(M)$ for all $H' \in \mathrm{Had}(M',h')$, as a result we can identify $A(H)$ as an element of the bigger algebra $A_{\mu c}(M',h')$.
Next we tackle scaling. We have to construct an algebra isomorphism $\sigma_{\lambda}$. Consider
\begin{equation}\label{eq_scaling_varphi}
	\widetilde{S}_{\lambda}: M\times \mathbb{R}\rightarrow M\times \mathbb{R} :(x,\varphi) \mapsto (x,\varphi_{\lambda})=(x,\lambda^{\frac{n-2}{2}}\varphi).
\end{equation}
it is possible to show that the action
$$
	A_f(\mathcal{\varphi})=\int_Mf(x)j^1\varphi^*\mathbb{L}=\int_Mf(x)\big( g^{\mu\nu}\nabla_{\mu}\varphi \nabla_{\nu}\varphi +m^2\varphi^2+\kappa R(g)\varphi^2\big)(x) d\mu_g(x)
$$
is invariant under the transformations $h\mapsto S_{\lambda}h$, $\varphi \mapsto \widetilde{S}_{\lambda}(\varphi)$, that is
$$
	A_f[M,h](\mathcal{\varphi})=A_{f}[M,S_{\lambda}(h)](\widetilde{S}_{\lambda}(\varphi)).
$$
Defining
\begin{equation}\label{eq_scaling_phields}
	\sigma^{-1}_{\lambda}(\alpha^{-1}_{H} (F))= \alpha^{-1}_{H_{\lambda}} (F\circ \widetilde{S}_{\lambda}),
\end{equation}
$\sigma_{\lambda}$ becomes the sought isomorphism.
\end{proof}


Let $ \mathfrak{A}$ be a functor as per Definition \ref{prop_net_obs}.
\begin{itemize}
    \item A \textit{C-number} field $C$ is a locally covariant scalar quantum field satisfying the following property: for any background geometry $(M,h)$ and any $H\in \mathrm{Had}(M,h)$,
\begin{equation}\label{eq_C-number_field}
	ev_{\varphi}d\Big( \alpha_H\big( C_{(M,h)}(f)\big)\Big) =0, \ \ \forall \varphi \in C^{\infty}(M),\ \forall f \in \mathcal{D}(M).
\end{equation}
    \item Moreover, we say that a scalar quantum field $\Phi$ is \textit{linear} if for each background geometry $(M,h)$, $f \in \mathcal{D}(M)$ and $H\in \mathrm{Had}(M,h)$ the derivative field $\alpha_H^{-1}\big(d\alpha_H\big(\Phi_{(M,h,H)}(f)\big)\big)$ is a C-number.
\end{itemize}

It's easy to verify that the quantum field defined in \eqref{eq_2_Wick_powers} is a linear scalar quantum field for $k=1$. For notational sake let us denote by $\mathfrak{A}_{\mu loc}(M,h)$ the set 
\begin{equation}\label{eq_2_abstract_muloc_algebra}
    \Big\{\lim_{n\rightarrow \infty}\alpha^{-1}_H(F_n): \mathfrak{A}_{reg}(M,h)\simeq \mathcal{F}_{reg}(M,h) \supset \{F_n\} \to F \in \mathcal{F}_{\mu loc}(M,h) \Big\}.
\end{equation}
Of course, we have $\mathfrak{A}_{\mu loc}(M,h) \subset \mathfrak{A}_{\mu c}(M,h)$
\begin{definition}\label{def_2_Wick_quantum_powers}
Given the functor $\mathfrak{A}_{\mu c}$ on the category of background geometry $\mathfrak{BckG}$ and a locally covariant linear scalar quantum field $\Phi$, its Wick powers, $\lbrace {\Phi}^k \rbrace_{k \in \mathbb{N}}$, are locally covariant scalar quantum fields satisfying the following axioms:
\begin{itemize}
\item[$(i)$] \textbf{(Locality and covariance)}: each $\Phi^k $ is a natural transformation $\Phi^k: \mathcal{D} \Rightarrow \mathfrak{A}_{\mu c}$ such that for each background geometry $(M,h)$ we have $\Phi^0_{(M,h)}=1_{\mathfrak{A}}$ and $\Phi^1_{(M,h)}\equiv \Phi_{(M,h)}$;
\item[$(ii)$] \textbf{(Scaling)}: each $\Phi^k_{(M,h,H)}$ is almost homogeneous of degree $k\frac{n-2}{2}$ with respect to physical scaling, that is there exists some $l \in \mathbb{N}$ and some other locally covariant scalar quantum fields $\{\psi^j \}_{j=1,2,\ldots , l}$ scaling almost homogeneously with degree $k\frac{n-2}{2}$ such that
$$
	\sigma_{\lambda}\big(\Phi^k_{(M,h_{\lambda})}\big) =\lambda^{k\frac{n-2}{2}} \Phi^k_{(M,h)} + \lambda^{k\frac{n-2}{2}} \sum_{j=1}^l \ln^j(\lambda) \psi^j_{ (M,h)} \ ;
$$
\item[$(iii)$] \textbf{(Algebraic)}: given any background geometry $(M,h)$, each ${\Phi}^k$ satisfies the hermiticity condition
$$
	{\Phi}^k_{(M,h)}(f)^*={\Phi}_{(M,h)}^k(\bar{f}),
$$
and
$$
	d{\Phi}^k_{(M,h)}(f)(\psi) = k \Phi^{k-1}_{(M,h)}(\psi f),
$$
for all $\psi \in C^{\infty}(M)$ and all $ f \in \mathcal{D}(M)$;

\item[$(iv)$] \textbf{(Parameterized microlocal spectrum condition)}: given any compactly supported smooth variation of the background geometry $\mathbb{R}^d \ni s \mapsto h_s\in \Gamma^{\infty}(M\leftarrow HM)$ with $h_0=h$, any $H\in \mathrm{Had}(M,h)$; identify $ \mathfrak{A}_{\mu loc}(M,h_s)$ with $ \mathfrak{A}_{\mu loc}(M,h)$\footnote{We will be more precise in the proof of Theorem \ref{thm_2_existence_Wick_polynomials} on how to construct this isomorphism of off-shell microlocal algebras. We stress that in the on-shell case, in \cite[Definition 3.5]{khavkine2016analytic}, this isomorphism can be constructed even for the whole microcausal algebra (\textit{e.g.} see Lemma 4.1 in \cite{hollands2001local}. Then for any $f\in \mathcal{D}(M)$, $\alpha_H\big(\Phi^k_{(M,h_s)}(f)\big) \in \mathcal{F}_{loc}(M,h)$). This is however not true in our case for in general two algebras of microcausal functionals with respect to different metric are in general not isomorphic. For details see \cite[Remark 3.5]{drago2017generalised}.} and consider the distribution
$$
    \mathcal{D}(M\times \mathbb R^d) \ni f\otimes z \mapsto \Big\langle \alpha_H\big(\Phi^k_{(M,h_s)}\big)(\varphi), f\otimes z\Big\rangle=\int_{M\times \mathbb R^d}\Phi^k_{(M,h_s,H,\varphi)}(s,x)f(x)z(s)d\mu_{g}(x);
$$
then its integral kernel, $\Phi^k_{(M,h_s,H,\varphi)}(s,x)$, is jointly smooth in $(s,x)$ for each $\varphi \in C^{\infty}(M)$ and any $H\in \mathrm{Had}(M,h)$.
\end{itemize} 
\end{definition}
We remark that condition $(iii)$ above implies that Wick powers are local (in the sense of Lemma \ref{def_1_func_classes}) for if we calculate the second derivative in $(\psi_1,\psi_2)$ with disjoint support, then $\Phi^{k-2}_{(M,h)}(\psi_1\psi_2f)\equiv \Phi^{k-2}_{(M,h)}(0)=0$. Thus in $(iv)$ we are legitimized in assuming that $\Phi^k_{(M,h_s)}(f)\in \mathfrak{A}_{\mu loc}(M,h)$. We will make explicit  the identification $ \mathfrak{A}_{\mu loc}(M,h_s)\simeq \mathfrak{A}_{\mu loc}(M,h)$ in the proof of Theorem \ref{thm_2_existence_Wick_polynomials}. Finally, let us remark that condition $(iv)$ can be also framed by requiring that $\Phi^k_{(M,h_s,H,\varphi)}(s,x)$ is a smooth distribution in $\mathcal{D}'(M\times \mathbb{R}^d )$ that is an element of $C^{\infty}(M\times \mathbb{R}^d)$. The purpose of $(iv)$ is to replace analytic dependence of Wick powers of the metric $g$ from the parameter of the deformation $s$ originally stated in \cite{hollands2001local}. We also stress that the Definition \ref{def_2_Wick_quantum_powers} differs slightly from that given in \cite[Definition 2.2]{khavkine2016analytic} since our axiom holds off-shell.\\

\subsection{Uniqueness of Wick powers}\label{section_Wick_uniqueness}

Under the hypothesis of Definition \ref{def_2_Wick_quantum_powers}, one can precisely characterize how much two families of \textit{off-shell} Wick powers on curved spacetime are allowed to vary: in particular, we find below that their difference is tightly constrained by \eqref{eq_Moretti_Kavhkine_Wick}. 

\begin{theorem}[Theorem 3.1 \cite{khavkine2016analytic}]\label{thm_2_Moretti_Kavhkine}
	Let $\{ \widetilde{{\Phi}}^k\}_{k \in \mathbb{N}}$, $\{{{\Phi}}^k\}_{k \in \mathbb{N}}$ be two families of Wick powers with respect to the same linear scalar quantum field $\Phi$ in a net of algebras $\mathfrak{A}$ as in Definition \ref{def_2_Wick_quantum_powers}. The the difference between the two families can be parameterized as follows:
\begin{equation}\label{eq_Moretti_Kavhkine_Wick}
	\widetilde{{\Phi}}_{(M,h)}^k={{\Phi}}_{(M,h)}^k+\sum_{j=0}^{k-1}\binom{k}{j}C_{k-j}(h){\Phi}^j_{(M,h)}.
\end{equation}
Where $C_{k-j}(h):x \mapsto  C_{k-j}(h)(x)$ and 
$$
	C_{k-j}(h)(x)= C_{k-j}\left(x^{\mu},g_{\mu\nu},S^{\beta}_{\mu\nu},R^{\beta}_{\alpha\mu\nu},\ldots,\nabla_{\alpha_3\ldots\alpha_r}R^{\beta}_{\alpha_1\mu\alpha_2},m^2,\ldots,\nabla_{\alpha_1\ldots,\alpha_r}m^2,\kappa,\ldots,\nabla_{\alpha_1\ldots,\alpha_r}\kappa\right)
$$
the latter being functions of scalar polynomials constructed from its coordinates. 
\end{theorem}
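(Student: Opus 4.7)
The plan is to proceed by induction on $k$, with the base cases $k=0$ (trivial, both equal to $1_{\mathfrak{A}}$) and $k=1$ (both equal to the same linear field $\Phi$) handled by axiom (i). Assume the claim holds for all orders strictly less than $k$, and set $\Delta^k_{(M,h)}(f) := \widetilde{\Phi}^k_{(M,h)}(f) - \Phi^k_{(M,h)}(f)$. The algebraic axiom (iii) applied to both families gives
\[
 d\Delta^k_{(M,h)}(f)(\psi) = k\bigl(\widetilde{\Phi}^{k-1}_{(M,h)} - \Phi^{k-1}_{(M,h)}\bigr)(\psi f).
\]
Feeding in the inductive expression for $\widetilde{\Phi}^{k-1}-\Phi^{k-1}$ and using the derivation identity $\Phi^i(\psi C_{k-1-i}(h)f) = \tfrac{1}{i+1}d\Phi^{i+1}(C_{k-1-i}(h)f)(\psi)$ in reverse, together with the binomial identity $k\binom{k-1}{i}/(i+1)=\binom{k}{i+1}$, one rewrites the right-hand side as $\sum_{j=1}^{k-1}\binom{k}{j}\, d\Phi^j_{(M,h)}(C_{k-j}(h)f)(\psi)$. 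Hence the field
\[
 R^k_{(M,h)}(f) \;:=\; \Delta^k_{(M,h)}(f) - \sum_{j=1}^{k-1}\binom{k}{j}\Phi^j_{(M,h)}(C_{k-j}(h)f)
\]
has vanishing functional derivative in $\varphi$, so by \eqref{eq_C-number_field} it is a locally covariant $C$-number field; write it as $R^k_{(M,h)}(f)=\int_M C_k(h)(x)f(x)\,d\mu_g$ for a distribution $C_k(h)\in\mathcal{D}'(M)$.

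The next step is to extract the regularity and geometric form of $C_k(h)$. Smoothness of $C_k(h)$ as a function of $x$ is first shown by combining Hermiticity from (iii) and the inductive expression for Wick powers of $k-1$, which already establishes that $\Delta^k_{(M,h)}(f)$ is $c$-number valued and, by axiom (iii), depends locally on $f$; the inductive Wick powers appearing in the correction term are microlocal so their sum with $\Delta^k$ is again microlocal. Axiom (iv) (parametrized microlocal spectrum condition) applied to a smooth compactly supported variation $s\mapsto h_s$ of the background then shows that $C_k(h_s)(x)$ is jointly smooth in $(s,x)$: indeed the integral kernel of $\Delta^k_{(M,h_s)}$ minus that of the inductive correction is, by axiom (iv) applied to both families, a smooth function of $(s,x)$, so $C_k(h)$ is a smooth function and, by locality and covariance, depends smoothly on $h$ in the sense that the assignment $h\mapsto C_k(h)$ is a local, weakly regular differential operator from $\Gamma^\infty(M\leftarrow HM)$ to $C^\infty(M)$.

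By the Peetre--Slov\'ak theorem (Theorem~\ref{thm_A_Peetre_Slovak}), the operator $h\mapsto C_k(h)$ is of locally bounded order, so $C_k(h)(x)$ is a smooth function of finitely many jet coordinates of $h$ at $x$. Diffeomorphism covariance of $\Phi^k$ and $\widetilde{\Phi}^k$, together with the construction of $C_k(h)$, implies that $C_k(h)(x)$ is a $\mathrm{Diff}(M)$-covariant function on $J^r H M$ for some $r$. Applying the change of coordinates of Proposition~\ref{proposition_2_covariant_coordinates} and the covariance identity of Theorem~\ref{thm_2_covariant_identity} forces $C_k(h)(x)$ to depend only on scalar combinations of $g_{\mu\nu}$, the curvature $R^{\alpha}_{\beta\mu\nu}$ and its covariant derivatives, and on $m^2$, $\kappa$ and their covariant derivatives. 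Axiom (ii) (almost homogeneous scaling of degree $k\tfrac{n-2}{2}$) applied to both sides of the inductive equation fixes the scaling weight of each $C_{k-j}(h)$ and, combined with the finite order of the dependence, singles out polynomial expressions in the listed scalar invariants (possibly times finitely many logarithms of $\lambda$, absorbed into the $\psi^j$ in axiom (ii)).

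The main obstacle is the passage through Peetre--Slov\'ak: one must show that $h\mapsto C_k(h)$ is local (germ-dependent) and weakly regular in the sense of Section~\ref{section_Peetre_slovak}. Locality follows from the support and naturality properties of Wick powers combined with the algebraic axiom (iii); weak regularity is precisely the content of axiom (iv), which is why it was introduced in Definition~\ref{def_2_Wick_quantum_powers} in place of the stronger analyticity hypothesis of \cite{hollands2001local}. Once these two properties are verified, the remaining classification of $C_k(h)$ as a polynomial in the stated scalar invariants is a direct consequence of Theorem~\ref{thm_2_covariant_identity} and the scaling axiom, completing the induction.
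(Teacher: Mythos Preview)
Your inductive setup, the derivation of the vanishing functional derivative via axiom~(iii), the identification of the remainder $R^k$ as a $C$-number field, and the subsequent application of Peetre--Slov\'ak together with Theorem~\ref{thm_2_covariant_identity} to reduce $C_k(h)(x)$ to a function of the scalar curvature invariants, $m^2$, $\kappa$ and their covariant derivatives are all correct and match the paper's approach (this is essentially the content of Lemma~\ref{lemma_2_Moretti_Kavhkine} plus Steps~1--3 of the main proof).

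The gap is in the final claim that ``axiom~(ii)\ldots combined with the finite order of the dependence, singles out polynomial expressions.'' Almost homogeneous scaling under \emph{physical} scaling together with finite jet order does \emph{not} by itself force polynomial dependence: for instance a function such as $\sin\!\bigl(m^2/R(g)\bigr)$ is scale-invariant and of finite order but is not polynomial. The paper closes this gap with two further, genuinely different arguments (Steps~4 and~5). First, one exploits covariance under the one-parameter family of \emph{coordinate} dilations $\psi_s:x^\mu\mapsto sx^\mu$ (a local diffeomorphism, not the physical scaling $S_\lambda$) and lets $s\to 0$; this kills the logarithmic terms $\ln^j(g^{-1/(2n)})$ for $j\ge 1$ and, via a Taylor expansion argument, forces the dependence on the curvature, $m^2$, and their covariant derivatives to be polynomial with only finitely many monomials whose total coordinate-scaling weight matches $k\tfrac{n-2}{2}$. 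Second, the remaining smooth dependence on $g^{-1/n}g_{\mu\nu}$ is reduced to polynomial form by $GL(n,\mathbb{R})$-equivariance, invoking a separate classification lemma (Lemma~2.8 in \cite{khavkine2016analytic}). Without these two ingredients the passage from ``smooth scalar invariant of finite order with fixed almost-homogeneous scaling degree'' to ``polynomial in the listed invariants'' is unjustified.
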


\begin{lemma}[Lemma 3.2 \cite{khavkine2016analytic}]\label{lemma_2_Moretti_Kavhkine}
    In the same hypothesis of Theorem \ref{thm_2_Moretti_Kavhkine}, the smooth functions $x\mapsto C_{k-j}(h)(x)$ have image depending only on the germ of $h $ at $x\in M$ define weakly regular differential operators $\Gamma^{\infty}(M\leftarrow HM) \rightarrow C^{\infty}(M)$. Moreover, they satisfy the covariance identity $\chi^{*}C_{k-j}(h)(x)=C_{k-j}(\chi^{*}h)(x)$ for any $\chi \in \mathrm{Hom}(\mathscr{BckG})$, and each $C_{j}$ scales almost homogeneously with degree $j(n-2)/2$ under physical scaling.
\end{lemma}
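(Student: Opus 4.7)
The strategy will be to first extract the coefficients $C_{k-j}(h)$ uniquely from the relation of Theorem~\ref{thm_2_Moretti_Kavhkine} and then verify each of the four claimed properties by invoking the corresponding axiom of Definition~\ref{def_2_Wick_quantum_powers}. For $k=1$, axiom~(i) forces $\widetilde{\Phi}^1 = \Phi^1 = \Phi$, so $C_1 \equiv 0$. Proceeding inductively in $k \geq 2$, once $C_1,\ldots,C_{k-1}$ are known, the defining relation rearranges to
\[
	C_k(h)\,\mathbf{1} = \widetilde{\Phi}^k_{(M,h)} - \Phi^k_{(M,h)} - \sum_{j=1}^{k-1}\binom{k}{j}C_{k-j}(h)\,\Phi^j_{(M,h)},
\]
and by axiom~(iii) the right-hand side is a central element of $\mathfrak{A}_{\mu c}(M,h)$ proportional to $\mathbf{1}$, which uniquely determines the smooth function $x \mapsto C_k(h)(x)$.

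Germ dependence will follow by applying naturality to the inclusion $\iota:U \hookrightarrow M$ of an open neighborhood of $x$: if $h_1,h_2$ agree on $U$ and $f\in \mathcal{D}(U)$, then writing $\tilde f = \mathcal{D}(\iota)(f) \in \mathcal{D}(M)$, naturality gives $\mathfrak{A}(\iota)\big(\Phi^k_{(U,h_i|_U)}(f)\big) = \Phi^k_{(M,h_i)}(\tilde f)$ for $i=1,2$ with coincident left-hand sides (since $h_1|_U = h_2|_U$), and the inductive expression above then forces $C_{k-j}(h_1)(x) = C_{k-j}(h_2)(x)$. The covariance identity $\chi^*C_{k-j}(h') = C_{k-j}(\chi^*h')$ for $\chi\in\mathrm{Hom}(\mathfrak{BckG})$ is then obtained by applying $\mathfrak{A}(\chi)$ to the defining relation, using naturality of both families $\Phi^k$ and $\widetilde{\Phi}^k$, and invoking injectivity of $\mathfrak{A}(\chi)$ guaranteed by locality. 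Almost homogeneous scaling of degree $(k-j)(n-2)/2$ will be derived by applying $\sigma_\lambda$ to the defining relation for $h_\lambda$ and matching the expansions in $\lambda^{k(n-2)/2}$ and powers of $\ln\lambda$ supplied by axiom~(ii), which recursively fixes the scaling of each $C_{k-j}$.

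The main obstacle will be weak regularity, and the plan is to exploit axiom~(iv) of Definition~\ref{def_2_Wick_quantum_powers}. Given a compactly supported smooth variation $s\mapsto h_s$ of $h$ with $h_s = h$ outside a compact $K\subset M$, fix a Hadamard parametrix $H$ and a configuration $\varphi\in C^{\infty}(M)$, and use the identification of $\mathfrak{A}_{\mu loc}(M,h_s)$ with $\mathfrak{A}_{\mu loc}(M,h)$ provided by axiom~(iv) to view the relevant integral kernels as distributions on a common space. Axiom~(iv) then states that $(s,x)\mapsto \Phi^k_{(M,h_s,H,\varphi)}(s,x)$ and $(s,x)\mapsto \widetilde{\Phi}^k_{(M,h_s,H,\varphi)}(s,x)$ are jointly smooth. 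The inductive formula displayed above expresses $(s,x)\mapsto C_{k-j}(h_s)(x)$ as a finite algebraic combination of such jointly smooth kernels with smooth-function coefficients, hence it is itself jointly smooth in $(s,x)$; germ dependence further implies $C_{k-j}(h_s)(x) = C_{k-j}(h_0)(x)$ whenever $x\notin K$, so the $s$-variation is compactly supported. Combined, these two facts deliver weak regularity of $h \mapsto C_{k-j}(h)$, and the Peetre-Slov\'ak theorem (Theorem~\ref{thm_A_Peetre_Slovak}) then identifies $C_{k-j}$ as a differential operator of locally bounded order on $\Gamma^{\infty}(M\leftarrow HM)$.
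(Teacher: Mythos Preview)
Your proposal is correct and follows essentially the same approach as the paper: both argue by induction on $k$, use axiom~(iii) to show the inductively defined remainder has vanishing functional derivative (hence is a $C$-number field identifying $C_k(h)$), derive germ dependence and covariance from axiom~(i) via naturality under embeddings, obtain almost homogeneous scaling by applying $\sigma_\lambda$ and matching powers using axiom~(ii) together with the inductive hypothesis, and deduce weak regularity from the joint smoothness in $(s,x)$ supplied by axiom~(iv). Your invocation of Peetre--Slov\'ak at the end is a slight addition (the paper defers that step to the proof of Theorem~\ref{thm_2_Moretti_Kavhkine} proper), but it is harmless and consistent with the lemma statement.
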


We first recall\footnote{See \cite{khavkine2016analytic} Definition 2.2} that given any mapping $D:\Gamma^{\infty}(M\leftarrow B_1)\rightarrow\Gamma^{\infty}(M\leftarrow B_2)$ between sections of two fiber bundles, then $D$ is weakly regular if given any compactly supported variation $\sigma_s$\footnote{A compactly supported variation of $\sigma_0\in \Gamma^{\infty}(M \leftarrow B_1)$ is a smooth family of sections $\sigma_s\in \mathrm{pr}_2^{*}B_1 \rightarrow\mathbb{R}^d\times M$, with $\mathrm{pr}_2:\mathbb{R}^d\times M \rightarrow M:(s,x)\mapsto x$, such that there is a compact subset $K$ of $M$ for which $\sigma\vert_{\mathbb{R}^d\times M \backslash \mathrm{pr}_2^{-1}(K)}\equiv \sigma_0$ is constant along the $\mathbb{R}^d$ factor.} of $\sigma_0 \in \Gamma^{\infty}(M\leftarrow B_1)$, $D(\sigma_s) $ is a smooth compactly supported variation for $D(\sigma_0)$.

\begin{proof}
We start by showing Lemma \ref{lemma_2_Moretti_Kavhkine} using induction on $k$. When $k=1$, condition $(i)$ implies $\widetilde{\Phi}_{(M,h)}-\Phi_{(M,h)}=0$, therefore their difference can be expressed in terms of a smooth function, the zero section, which satisfies all conditions above. Then suppose that all counterterms \eqref{eq_Moretti_Kavhkine_Wick} have been calculated up to $C_{k-1}(h)$, and satisfy all the properties of Lemma \ref{lemma_2_Moretti_Kavhkine}. Then consider
\begin{equation}\label{eq_2_Psi_kernel_KM}
    \Psi_{(M,h,H,\varphi)}(x)=\widetilde{\Phi}_{(M,h,H,\varphi)}^k(x)-{{\Phi}}_{(M,h,H,\varphi)}^k(x)+\sum_{j=0}^{k-2}\binom{k}{j}C_{k-j}(h)(x){\Phi}^j_{(M,h,H,\varphi)}(x).
\end{equation}

Using the induction hypothesis we claim that $d\Psi_{(M,h,H,\varphi)}=0$ for all $\varphi \in C^{\infty}(M)$. 
$$
\begin{aligned}
	d\Psi_{(M,h,H,\varphi)}(x)&= k \left(\widetilde{\Phi}^{k-1}_{(M,h,H,\varphi)}(x)-\Phi^{k-1}_{(M,h,H,\varphi)}(x)\right)\\ 
	&-j \sum_{j=1}^{k-2}\binom{k}{j}C_{k-j}(h)(x)\Phi^{j-1}_{(M,h,H,\varphi)}(x)\\
	&= k \left(  \sum_{j=0}^{k-3}\binom{k-1}{j}C_{k-j-1}(h)(x)\Phi^{j-1}_{(M,h,H,\varphi)}(x)\right) \\ 
	&- \sum_{j=1}^{k-2}\binom{k}{j}C_{k-j}(h)(x)\Phi^{j-1}_{(M,h,H,\varphi)}(x)\\
	&= \sum_{j=1}^{k-2}\binom{k-1}{j}C_{k-j}(h)(x)\Phi^{j-1}_{(M,h,H,\varphi)}(x) \\
 &-\sum_{j=1}^{k-2}\binom{k}{j}C_{k-j}(h)(x)\Phi^{j-1}_{(M,h,H,\varphi)}(x)=0
\end{aligned}
$$
Therefore $\Psi_{(M,h,H)}$ ought to be the image through $\alpha_H^{-1}$ of a functional with empty support, with local and covariant dependence on the geometric data such that $\Psi_{(M,h,H)} (\varphi): \mathcal{D}(M) \to \mathbb R$ is a distribution; therefore we can identify it with a functional, independent from $\varphi$ and therefore from $H$, of the form
$$
	\Psi_{(M,h)}=\int_Mf(x) C(h)(x)d\mu_g(x) 1_{\mathfrak A_{\mu c}(M,h)}.
$$
By construction, $\Psi$ is a locally covariant field with the right scaling property, and, using the parameterized microlocal spectrum condition, we can represent $C$ as a differential operator $\Gamma^{\infty}(M\leftarrow HM)\ni h\mapsto C(h)\in C^{\infty}(M)$. Moreover, by condition $(i)$ we deduce the covariance relation $\chi^{*}C_{k-j}(h)(x)=C_{k-j}(\chi^{*}h)(x)$, thus taking smaller and smaller neighborhood $U$ of $x$ with their relative embeddings $\chi:U \to M$ implies that $ C_k(h)(x)$ just depends on the germ of $h$ at $x$. Finally, if $\mathbb{R}^d\ni s \mapsto h_s$ is a compactly supported variation of $h_0$, a second application of condition $(iv)$,implies that $C_k(h_s)(x)$ is a jointly smooth mapping in $(s,x)$.

Next we check the scaling property of $C_k(h)$. Both $\widetilde{\Phi}_{(M,h)}^k$ and $\Phi_{(M,h)}^k$ do scale almost homogeneously with degree $k(n-2)/2$ by $(ii)$, in addition $C_{j}(h)$ for $0\leq j<k$ scales almost homogeneously with degree $(k-j)(n-2)/2$ by induction hypothesis.
Now, $\Psi_{(M,h)}=C_{k}(h)1_{\mathfrak{A}_{\mu c}(M,h)}$, moreover \eqref{eq_scaling_phields} implies $\sigma_{\lambda}\big(1_{\mathfrak{A}_{\mu c}(M,h_{\lambda})}\big)=1_{\mathfrak{A}_{\mu c}(M,h)}$; consequently, we must have $\sigma_{\lambda}\Psi_{(M,h_{\lambda})}=C_{k}(h_{\lambda})1_{\mathfrak{A}_{\mu c}(M,h)}$. On the other hand, scaling both sides in \eqref{eq_2_Psi_kernel_KM}, we get
$$
\begin{aligned}
     \sigma_{\lambda}\Psi_{(M,h_{\lambda})}&= \lambda^{k\frac{n-2}{2}}\Big( \Phi^k_{(M,h,f)} + \sum_{q=1} \ln^q(\lambda) \psi^q_{ (M,h,f)}\Big)+ \lambda^{k\frac{n-2}{2}} \Big(\widetilde\Phi^k_{(M,h,f)} + \sum_{q=1} \ln^q(\lambda) \widetilde{\psi}^q_{ (M,h,f)}\Big)\\
    &\quad+ \sum_{j=0}^{k-2}\binom{k}{j} \lambda^{(k-j)\frac{n-2}{2}} \Big(C_{k-j}(h)+  \sum_{q=1}\ln^q(\lambda)D_{k-j,q}(h)\Big) \lambda^{j\frac{n-2}{2}}\Big( \Phi^{j}_{(M,h,f)} +\sum_{q=1} \ln^q(\lambda) \psi^q_{ (M,h,f)}\Big).\\
\end{aligned}
$$
for some locally covariant quantum fields $(\psi^q_{k-j})_{(M,h)}$, $(\widetilde\psi^q_{k-j})_{(M,h)}$. Simplifying terms in the above expression and comparing with we arrive at
$$
\begin{aligned}
     C_{k}(h_{\lambda})1_{\mathfrak{A}_{\mu c}(M,h)} =\sigma_{\lambda}\Psi_{(M,h_{\lambda})}&=\lambda^{k\frac{n-2}{2}} C_{k}(h) 1_{\mathfrak{A}_{\mu c}(M,h)}+\lambda^{k\frac{n-2}{2}} \Big( \sum_{q=1}^l(\theta_{q,k})_{(M,h)}\Big).
\end{aligned}
$$
This forces the quantum fields $\theta_{q,k}$ to be $C$-number fields.

\end{proof}

Next we show Theorem \ref{thm_2_Moretti_Kavhkine}. Instead of giving a full account of the proof we just sketch it and refer to \cite{khavkine2016analytic, khavkine2019wick} for the full proof.

\begin{proof}


The proof is divided into six steps. We start by applying Peetre-Slov\'ak's theorem to the mapping $C_k:\Gamma^{\infty}(M\leftarrow HM)\to C^{\infty}(M)$, this will imply that in a neighborhood $U$ of any $x\in M$ $C_k$ has locally bounded order $r$ whenever the image of $j^rh$ belongs to $ U\times V^r\subset J^rHM$. The rest of the proof aims at enlarging the open subset $V^r$ to contain the whole $\pi^{-1}(U)$. Once this is done, we use diffeomorphism covariance to characterize $C_k$ in the whole $M$ and the scaling properties to globally bound the order $r$ of $C_k$. For future convenience we recall that a key asset in the proof of Theorem \ref{thm_2_covariant_identity} is that we can classify coordinates in $J^rH$ as
\begin{equation}\label{eq_2_covariant_coordinates}
\begin{aligned}
    \big( & x^{\mu},g_{\mu\nu},S^{\beta}_{\mu\nu},S^{\beta}_{\alpha\mu\nu},R^{\beta}_{\alpha\mu\nu},\ldots,S^{\beta}_{\alpha_1\ldots\alpha_r\mu\nu},\nabla^{A}R^{\beta}_{\alpha\mu\nu}, m^2,\ldots,\nabla^{A}m^2,\kappa,\ldots,\nabla^{A}\kappa\big).
\end{aligned}    
\end{equation}
For future convenience we also highlights other coordinates: 
\begin{equation}\label{eq_2_scaled_coordinates}
\begin{aligned}
    \big( & x^{\mu},g,g^{-\frac{1}{n}}g_{\mu\nu},g^{\frac{1}{n}+\frac{1}{n}|A| }\nabla^{A}g^{\mu\nu},g^{-\frac{1}{n}}m^2,\kappa,\ldots,g^{-\frac{1}{n}}\nabla_{A}m^2,\nabla_{A}\kappa\big),
\end{aligned}
\end{equation}
where $A$ is a multi-index, of the $n(n+1)/2$ coordinates describing the metric tensor $g_{\mu\nu}$ we consider only $n(n+1)/2-1$ and the remaining one is expressed in terms of the determinant of $g$. The nice feature of those coordinates is that, except $g$, all others are invariant under physical scaling. Finally, combining \eqref{eq_2_covariant_coordinates} and \eqref{eq_2_scaled_coordinates} we get a third set of coordinates
\begin{equation}\label{eq_2_scaled_covarinat_coordinates}
\begin{aligned}
    \big( & x^{\mu},g,g^{-\frac{1}{n}}g_{\mu\nu},S^{\beta}_{\mu\nu}, S^{\beta}_{\alpha\mu\nu},R^{\beta}_{\alpha\mu\nu}, \ldots,S^{\beta}_{\alpha_1\ldots\alpha_r\mu\nu},g^{\frac{|A|}{n}}\nabla^{A}R^{\beta}_{\alpha\mu\nu}, \\
    & g^{-\frac{1}{n}}m^2,\ldots,g^{\frac{1+|A|}{n}}\nabla^{A}m^2,\kappa,\ldots,g^{\frac{|A|}{n}}\nabla^{A}\kappa\big),
\end{aligned}
\end{equation}
We shall employ those coordinates at various stages of the proof.
 
\textbf{Step 1.} In the proof of Lemma \ref{lemma_2_Moretti_Kavhkine}, we found that the coefficients $C_k:\Gamma^{\infty}(M\leftarrow HM) \rightarrow C^{\infty}(M):h \mapsto C_k(h)$ are weakly regular (due to $(iv)$ in Definition \ref{def_2_Wick_quantum_powers}) and depends only on the germ of $h$ (due to $(i)$ in Definition \ref{def_2_Wick_quantum_powers}), therefore we can apply Peetre-Slov\'ak's Theorem \ref{thm_A_Peetre_Slovak} and conclude that $C_k$ is a differential operator of locally bounded order. This means that for any point $x\in M$ and any section $h_0\in \Gamma^{\infty}(M\leftarrow HM) $ there is a compact neighborhood $U$ and a $CO$-open neighborhood $\mathcal{U}=\{ j^rh(U)\subset V^r\}$ of $h_0$, such that for all $j^rh \in \mathcal{U}$ (or equivalently all $j^r_xh \in U\times V^r\subset J^rHM$),
$$
    C(h)(x) = F_k(j^r_xh).
$$
To start we shall choose any $x\in M$ and the section $h_0=(g_0,0,0)$ for a generic Lorentzian metric $g_0$. We denote by $U_0$, $V^r_0$ the corresponding open subsets.

\textbf{Step 2.} In this step we want to further characterize the functional form of $F_k$ and enlarge the set $V^r_0$ using scaling invariance. Notice that to the fibered isomorphism $S_{\lambda}:HM \to HM$ of physical scaling, we can associate a vector field 
$$
    \rho = \frac{d}{d\lambda}\Big|_{\lambda=1} S_{\lambda}
$$
which in the coordinates \eqref{eq_2_scaled_coordinates} reads $\rho=-2ng\frac{\partial}{\partial g}$. One can show (\textit{c.f.} \cite[Lemma 2.3]{khavkine2016analytic}) that the operator $(\rho -k\frac{n-2}{2})^l: C^{\infty}(J^rHM)\to C^{\infty}(J^rHM)$ annihilates functions scaling almost homogeneously with degree $k\frac{n-2}{2}$ and order $l-1$. According to Lemma \ref{lemma_2_Moretti_Kavhkine}, each $C_k$ scales almost homogeneously with degree $k\frac{n-2}{2}$ and finite order $l$; therefore by \cite[Lemma 2.4]{khavkine2016analytic}, we get that there exists functions $H_j:V^r_0\to \mathbb{R}$, $0\leq j\leq l$ such that  

$$
	F_k(j^r_xh)=g^{-\frac{k(n-2)}{4n}}\sum_{j=0}^l \ln^j\left(g^{-\frac{1}{2n}}\right) H_j(x^{\mu},g^{-\frac{1}{n}}g_{\mu\nu},g^{\frac{1}{n}+\frac{1}{n}|A| }\nabla^{A}g^{\mu\nu},g^{-\frac{1}{n}}m^2,\kappa,\ldots,g^{-\frac{1}{n}}\nabla^{A}m^2,\nabla^{A}\kappa).
$$
We can notice that the dependence on the metric determinant $g$ has been factored out of $F_k$ introducing ad hoc powers of $g$ and $\ln(g)$ terms. We are then able to extend the open subset $V^r_0$ to $V^r_1=\mathbb{R}_+\times W^r_1$, by simply declaring that if $\lambda g$ falls outside $V_0^r$, then 
$$
    F_k(\ldots,\lambda g,\ldots)=(\lambda g)^{-\frac{k(n-2)}{4n}}\sum_{j=0}^l \ln^j\left((\lambda g)^{-\frac{1}{2n}}\right) H_j(\ldots)
$$
using that, other then $g$, the coordinates in \eqref{eq_2_scaled_coordinates} are invariant under physical scaling.

\textbf{Step 3.} In this step we further extend to $(\pi^r)^{-1}(U)$ the domain of $F_k$ using diffeomorphism invariance. Since $\mathrm{Diff}(M)$ acts transitively on $M$, we can repeat the previous two steps for each $x'\in U$, by naturality of the bundle $HM$ (see $(ii)$ Definition \ref{def_0_natural_bundles}), we can lift all such local diffeomorphism to local isomorphism of $HM$, then we are able to enlarge $V^r_1$ to a neighborhood $V^r_2$ of the $\mathrm{Diff}(M)$ orbit of $h_0=(g_0,0,0)$. Notice that by covariance of $C_k$, $\psi^*C_k(h)=C_k(\psi^*h)$, which implies that the order of $C_k$ as a differential operator remains the initial $r$. In $\pi^{-1}_r(U)\cap  U \times V_2^r$, we can use Theorem \ref{thm_2_covariant_identity} to characterize the diffeomorphism invariant nature of $F_k$ in coordinates \eqref{eq_2_covariant_coordinates}. In particular we conclude that 
$$
    F_k(j^r_xh)=g^{-\frac{k(n-2)}{4n}}\sum_{j=0}^l \ln^j\left(g^{-\frac{1}{2n}}\right) H_j,
$$
where in coordinates \eqref{eq_2_scaled_covarinat_coordinates}, 
$$
    H_j=G_j\big(g^{-\frac{1}{n}}g_{\mu\nu},R^{\beta}_{\alpha\mu\nu}, \ldots,g^{-\frac{|A|}{n}}\nabla^{A}R^{\beta}_{\alpha\mu\nu}, m^2,\ldots,g^{\frac{2+|A|}{n}}\nabla^{A}m^2,\kappa,\ldots,g^{\frac{|A|}{n}}\nabla^{A}\kappa\big).
$$
Then we can extend $V_2^r=\mathbb{R}_+\times W_2^r$ to a set $V^r_3=\mathbb{R}_+\times \eta(1,n-1)(U) \times \mathbb{R}^{a}\times Q^r_3$, where $\eta(1,n-1)(U)$ is the space of non degenerate symmetric tensor in $\mathbb{R}^n$ with signature $(1,n-1)$, and the other pieces project as follows:
\begin{center}
\begin{tikzcd}
	 U \arrow[d,"\pi^r"] &   \eta(1,n-1)(U) \arrow[d] & \mathbb{R}^{a} \arrow[d] &  Q^r_3 \arrow[d]\\
	(x^{\mu}) & (g,g^{-\frac{1}{n}}g_{\mu\nu})   & (S^{\alpha}_{\mu\nu},S^{\alpha}_{\beta\mu\nu},\ldots) & (R^{\beta}_{\alpha\mu\nu},g^{\frac{1}{n}}m^2,\kappa,\ldots)
\end{tikzcd}
\end{center}
Then again, we stress that since the functions $G_j$ do not depend explicitely on the non covariant coordinates of \eqref{eq_2_scaled_covarinat_coordinates}, we are allowed to extend maximally the domain of such coordinates to $\mathbb{R}^a$.

\textbf{Step 4.} We further simplify the expression of $F_k$ using invariance for local diffeomorphism of the form $\psi_{s}:(x^{\mu})\mapsto (s x^{\mu}) $. By $(ii)$ in Definition \ref{def_0_natural_bundles}, the latter local diffeomorphism can be lifted to a local fibered morphism of $HM$, whose Jacobian matrix is $s\delta^{\mu}_{\nu}$, and the action on coordinates \eqref{eq_2_scaled_covarinat_coordinates} is given by
\begin{equation}\label{eq_2_coordinate_scaling_action}
\begin{aligned}
    &g^{-\frac{1}{n}} g_{\mu\nu}\to g^{-\frac{1}{n}} g_{\mu\nu},\ g\to s^{2n}g, \  g^{\frac{1}{n}} \nabla^A m\to s^{2+|A|} g^{\frac{1+|A|}{n}} \nabla^A m, \\
    &\nabla^A\kappa \to s^{|A|}\nabla^A \kappa, g^{\frac{|A|}{n}}\nabla^A R^{\alpha\beta}_{\ \ \mu\nu} \to s^{|A|}g^{\frac{|A|}{n}}\nabla^A R^{\alpha\beta}_{\ \ \mu\nu}. 
\end{aligned}
\end{equation}
For $s\in [0,1]$ the mapping $\psi_s$ does not leave $U$, thus it ought to define an covariant transformation for $F_k$, \textit{i.e.}
$$
\begin{aligned}
    F_k(g,g^{-\frac{1}{n}}g_{\mu\nu},y,z)&=g^{-\frac{k(n-2)}{4n}}\sum_{j=0}^l \ln^j\left(g^{-\frac{1}{2n}}\right) G_j(y,\psi_s^*z)\\
    &=s^{-\frac{k(n-2)}{2}}g^{-\frac{k(n-2)}{4n}}\sum_{j=0}^l \ln^j\left(s^{-1}g^{-\frac{1}{2n}}\right) G_j(g^{-\frac{1}{n}}g_{\mu\nu},y,\psi_s^*z)    
\end{aligned}
$$
where $(y,z)$ identify the coordinates in \eqref{eq_2_scaled_covarinat_coordinates}, other than the metric determinant $g$ and $g^{-\frac{1}{n}}g_{\mu\nu}$, which have the property $\psi_s^*y=y$, $\psi_s^*z=s^{p}z$ for some appropriate positive number $p\in \mathbb{Q}$; confronting with \eqref{eq_2_coordinate_scaling_action} we see that $y=\kappa,$ and $z$ are the remaining ones. Then
$$
    s^{\frac{k(n-2)}{2}}F_k(g,g_{\mu\nu},\kappa,z) =g^{-\frac{k(n-2)}{4n}}\sum_{j=0}^l \ln^j\left(s^{-1}g^{-\frac{1}{2n}}\right) G_j(g^{-\frac{1}{n}}g_{\mu\nu},\kappa,\psi_s^*z).
$$
However, as $s\to 0$, the right hand side of the above equation diverges unless $G_j=0$ whenever $j\neq 0$. Thus we find
\begin{equation}\label{eq_2_step_4_form_of_F}
    F_k(j^r_xh)=g^{-\frac{k(n-2)}{4n}}G_0(g^{-\frac{1}{n}}g_{\mu\nu},\kappa,\psi_s^*z).
\end{equation}
We can still squeeze some more information out of $G_0$. If we take its Taylor expansion on the coordinates $z$, in $z=0$, we find
$$
	 F_k(j^r_xh)=g^{-\frac{k(n-2)}{4n}}G_0(g^{-\frac{1}{n}}g_{\mu\nu},\kappa,z)= \sum_{|I|<N}g^{-\frac{k(n-2)}{4n}}G_I(g^{-\frac{1}{n}}g_{\mu\nu},\kappa,0)z^I+o(|z|^N),
$$
where $I\in \mathbb{N}^{3r-1}$ is a multi-index, $z^I$ a polynomial in the coordinates $z$ to the power $I$, and $G_I=\frac{\partial G}{\partial z^I}$. Again, applying to the above equality the coordinate scaling \eqref{eq_2_coordinate_scaling_action}, and taking the limit as $s\to 0$, we can see that the right hand side diverges, or goes to 0, unless the cumulative scaling power $s^{\sum_{i\in I} p_i}$ of $z^I$ balances exactly that of $g^{-\frac{k(n-2)}{4n}}$, \textit{i.e.} $s^{k\frac{n-2}{2}}$. However, this is extremely consequential, since there are only finitely many $z^I$ with this property. As a consequence,
\begin{equation}\label{eq_2_step_4/5_form_of_F}
    F_k(j^r_xh)=\sum_{I}g^{-\frac{k(n-2)}{4n}}G_I(g^{-\frac{1}{n}}g_{\mu\nu},\kappa)z^I,
\end{equation}
We are now in a position to further enlarge the domain $ V^r_3$ to $ V^r_4=\mathbb{R} \times \eta(1,n-1)(U) \times \mathbb{R}^{a}\times P^r_4 \times \mathbb{R}^b$, where each component projects as
\begin{center}
\begin{tikzcd}
    \eta(1,n-1)(U) \arrow[d] & \mathbb{R}^{a} \arrow[d] &  P^r_4 \arrow[d] & \mathbb{R}^b \arrow[d] \\
	 (g,g^{-\frac{1}{n}}g_{\mu\nu})   & (S^{\alpha}_{\mu\nu},S^{\alpha}_{\beta\mu\nu},\ldots)& (\kappa) & (z) 
\end{tikzcd}
\end{center}

\textbf{Step 5.} In this part we will furthermore characterize the dependence on $g^{-\frac{1}{n}}g_{\mu\nu}$ for the functions $G_I$. \\

Note that being $HM$ a natural bundle we have a canonical action of $Gl^r(n,\mathbb{R})$ on $HM$, which however, in the coordinates on which \eqref{eq_2_step_4/5_form_of_F} depends, reduces to the one of $Gl(n,\mathbb{R})$. Covariance of $F_k$ then prescribes once again that the above action leaves unaltered $F_k$ itself, therefore for all $A\in Gl(n,\mathbb{R})$,
$$
	g^{-\frac{k(n-2)}{4n}} \sum_{I}G_I(g^{-\frac{1}{n}}g_{\mu\nu},\kappa)z^I=A^{\frac{k(n-2)}{2n}}g^{-\frac{k(n-2)}{4n}} \sum_{I}G_I(A^{\frac{2}{n}}g^{-\frac{1}{n}}A^{\alpha}_{\mu}g_{\alpha\beta}A^{\beta}_{\nu},\kappa)(\rho_A(z))^I.
$$
where $\rho_A$ is the $Gl(n,\mathbb{R})$ action\footnote{For example for the Riemann tensor $R^{\alpha\beta}_{\ \ \mu\nu}$, the action by the matrix $A$ is given by $A^{\alpha}_{\alpha'}A^{\beta}_{\beta'}\bar A^{\mu'}_{\mu}\bar A^{\nu'}_{\nu} R^{\alpha'\beta'}_{\quad \  \mu'\nu'}$, where $\bar A $ denotes the inverse matrix of $A$.} on the polynomials $z^I$. If we fix $\kappa$ and think of $G_I(\cdot,\kappa)$ as a mapping $:\eta(1,n-1)(U)\rightarrow \mathbb{R}[z]$, then $G_I$ must be equivariant with respect to the $Gl(n,\mathbb{R})$ action induced on $\eta(1,n-1)(U)$ and $ \mathbb{R}[z]$. By \cite[Lemma 2.8]{khavkine2016analytic}, we deduce that the coefficients $G_I$ must be a polynomial involving the components of the metric tensor $g_{\mu\nu}$, the Levi-Civita tensor $\epsilon_{\mu_1\cdots \mu_n}$ up to an overall multiple of the density $g=\mathrm{det}(g_{\mu\nu})$ which multiplied with the corresponding $z^I$ produces a scalar. Thus
\begin{equation}\label{eq_2_step_5_form_of_F}
     F_k(j^r_xh)=\sum_{I}g^{-\frac{k(n-2)}{4n}}\widetilde G_I(\kappa)P_I\big(g_{\mu\nu},\epsilon_{\mu_1\cdots \mu_n},\nabla^{A}R^{\beta}_{\alpha\mu\nu},\nabla^{A}m^2,\nabla^{A}\kappa\big)
\end{equation}
globally defined on $U \times V^r_5$ with $V^r_5=\eta(1,n-1)(U) \times \mathbb{R}^{a+b}\times P^r_4 $, with the $P^r_4$ an open subset of $0$ in the coordinate $\kappa$. 


\textbf{Step 6.} To conclude, we will extend the domain of $F_k$ on the whole $HM$. First notice that we can further extend the function $F_k$ in $\kappa$ by repeating Steps 1)-5) for neighborhoods of sufficiently many $(g_0,0,\kappa_0)$ and the form \eqref{eq_2_step_5_form_of_F} will not be affected since the coordinate $\kappa$ transforms trivially under diffeomorphisms and does not scale under physical scaling. Then we have extended $F_k$ to the whole $\pi^{-1}_r(U)$. By transitivity of the action of $\mathrm{Diff}(M)$ on $M$, we can extend this functional form of $F_k$ to the whole $HM$, in each chart $U_{\alpha}$ of $M$ we will have produced a local $(F_k)_{\alpha}:J^{r_{\alpha}}\to \mathbb{R}$, whose order $r_{\alpha}$ cannot be unbounded for if so, there will be dependence on a derivative of the Riemann tensor $\nabla^AR$, the mass $\nabla^Bm$ or the coupling constant $\nabla^C\kappa$, for which the physical scaling by $S_{\lambda}$ exceeds the amount $k\frac{n-2}{2}$ globally fixed by Lemma \ref{lemma_2_Moretti_Kavhkine}. Therefore, the bound on the jet order $r$ enable us to safely enlarge the domain of $F_k$ to $J^{r_0}HM$ for this maximal order $r_0$.
\end{proof}

\subsection{Existence of Wick powers}\label{section_Wick_existence}
 
In trying to show the existence of Wick powers, functional formalism provides us with the natural candidates for Wick powers: namely the elements $\phi^k_{(M,h)} : f \mapsto \phi^k_{(M,h)}(f)$ which evaluated at $H\in \mathrm{Had}(M,h)$ are defined by \eqref{eq_2_Wick_powers}. We will thereafter prove that they satisfy all conditions given in Definition \ref{def_2_Wick_quantum_powers}. Prior to this, we need some technical results which come from Cartesian closedness (\textit{c.f.} Theorem \ref{thm_A_cart_closedness}).

\begin{proposition}\label{prop_2_equivalence_conv-smooth_w-regularity}
    Let $\pi:B\rightarrow M$ be a fiber bundle over $M$. Then any mapping $D:\Gamma^{\infty}(M \leftarrow B)\rightarrow C^{\infty}(M):\sigma \mapsto D(\sigma)$, where $D(\sigma)(x)$ depends only on the germ of $\sigma$ at $x$, is weakly regular if and only if it is conveniently smooth.
\end{proposition}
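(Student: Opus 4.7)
The plan is to exploit the Cartesian closedness of convenient calculus (\Cref{thm_A_cart_closedness}) together with the characterizations of smooth curves into $C^{\infty}(M)$ (\Cref{prop_A_conv_structures_on_C_infty(M)}) and into $\Gamma^{\infty}(M\leftarrow B)$, the latter amounting via the slice charts of \eqref{eq_1_tilde_exp} to the fact that a smooth curve $\gamma$ restricted to any compact interval $I$ is a compactly supported variation (\Cref{coro_1_WO^0-curves}, \Cref{prop_A_conv_sections_of_vector_bndl}). In this way, both convenient smoothness and weak regularity of $D$ reduce to joint smoothness of $(t,x)\mapsto D(\Phi_t)(x)$ on $\mathbb R\times M$; the remaining discrepancy is the uniform-in-$t$ support condition on $M$, which the germ-dependence assumption handles directly.

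For conveniently smooth $\Rightarrow$ weakly regular, let $\Phi$ be a compactly supported variation of $\Phi_0$ with variation compact $H\subset M$. Since each $\Phi_t$ differs from $\Phi_0$ only on $H$, the entire image $t\mapsto\Phi_t$ lies in the slice chart $\mathcal U_{\Phi_0}$, and its chart representation $t\mapsto u_{\Phi_0}(\Phi_t)$ is a smooth curve in the Fréchet space $\Gamma^{\infty}_H(M\leftarrow \Phi_0^{*}VB)$ by \Cref{prop_A_conv_sections_of_vector_bndl}; hence $\Phi^{\vee}:t\mapsto \Phi_t$ is a smooth curve in $\Gamma^{\infty}(M\leftarrow B)$. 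Convenient smoothness of $D$ gives a smooth curve $D\circ\Phi^{\vee}:\mathbb R\to C^{\infty}(M)$ which, by \Cref{prop_A_conv_structures_on_C_infty(M)}, amounts to joint smoothness of $(t,x)\mapsto D(\Phi_t)(x)$. The support condition is immediate from germ-dependence: if $x\notin H$, an open neighbourhood $U\ni x$ lies in $M\setminus H$ where $\Phi_t|_U\equiv \Phi_0|_U$ for all $t$, so $\mathrm{germ}_x(\Phi_t)=\mathrm{germ}_x(\Phi_0)$ and thus $D(\Phi_t)(x)=D(\Phi_0)(x)$ is constant in $t$.

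For the converse, fix a smooth curve $\gamma:\mathbb R\to\Gamma^{\infty}(M\leftarrow B)$; smoothness of $D\circ\gamma$ may be tested on each compact interval $I=[a,b]$. By \Cref{coro_1_WO^0-curves} there exists $\epsilon>0$ and a compact $K\subset M$ with $\gamma(t)(x)=\gamma(a)(x)$ for all $x\notin K$ and $t\in [a-\epsilon,b+\epsilon]$, and by continuity one may shrink $\epsilon$ so that the entire curve $\gamma([a-\epsilon,b+\epsilon])$ lies in the slice chart $\mathcal U_{\gamma(a)}$; let $\vec X(t):=u_{\gamma(a)}(\gamma(t))\in \Gamma^{\infty}_K(M\leftarrow \gamma(a)^{*}VB)$. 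Choose $\chi\in C^{\infty}_c(\mathbb R)$ with $0\le\chi\le 1$, $\chi\equiv 1$ on $I$ and $\mathrm{supp}\,\chi\subset (a-\epsilon,b+\epsilon)$, and set
\[
  \tilde\Phi(t,\cdot)=
  \begin{cases}
    u_{\gamma(a)}^{-1}\bigl(\chi(t)\vec X(t)\bigr), & t\in(a-\epsilon,b+\epsilon),\\
    \gamma(a), & \text{otherwise}.
  \end{cases}
\]
Taking the chart image $u_{\gamma(a)}(\mathcal U_{\gamma(a)})$ $0$-star-shaped (which one arranges by shrinking the tubular neighbourhood in the construction of $\widetilde\exp$), the rescaled section $\chi(t)\vec X(t)$ stays in the chart, so $\tilde\Phi$ is well defined, jointly smooth, and a compactly supported variation with variation compact $K$ satisfying $\tilde\Phi|_{I\times M}=\gamma^{\wedge}|_{I\times M}$. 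Weak regularity yields joint smoothness of $(t,x)\mapsto D(\tilde\Phi_t)(x)$, which by \Cref{prop_A_conv_structures_on_C_infty(M)} corresponds to a smooth curve $\mathbb R\to C^{\infty}(M)$ whose restriction to $I$ equals $D\circ\gamma|_I$; since $I$ is arbitrary, $D\circ\gamma$ is smooth.

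The main technical obstacle lies in the converse direction and is precisely the construction of the cutoff variation $\tilde\Phi$: because $B$ is only a fibre bundle, one cannot scale $\gamma(t)$ directly, and the cutoff must be performed inside the vector-valued coordinates of the slice chart built from the fibre-respecting exponential $\widetilde\exp$ of \eqref{eq_1_tilde_exp}. Ensuring that $\chi(t)\vec X(t)$ remains in the chart image relies on that image being $0$-star-shaped, a condition one achieves by shrinking the tubular neighbourhood underlying $\widetilde\exp$.
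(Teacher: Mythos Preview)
Your proof follows the same essential strategy as the paper: both directions hinge on Cartesian closedness (\Cref{thm_A_cart_closedness}) combined with the identification of smooth curves into $C^{\infty}(M)$ and into section spaces via \Cref{prop_A_conv_sections_of_vector_bndl} and \Cref{prop_A_conv_structures_on_C_infty(M)}, and the germ-dependence of $D$ handles the support clause.

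There is one structural difference in the direction weakly regular $\Rightarrow$ conveniently smooth. The paper tests convenient smoothness chartwise: it shows that each $D_\sigma := D\circ u_\sigma^{-1}$ is conveniently smooth by taking smooth curves $\gamma$ already valued in $\Gamma^{\infty}_c(M\leftarrow\sigma^*VB)$, observes via \Cref{prop_A_conv_sections_of_vector_bndl} that on each compact time-interval these are compactly supported variations, and then invokes weak regularity directly (without explicitly extending to all of $\mathbb R$). You instead work with curves valued in $\Gamma^{\infty}(M\leftarrow B)$ and manufacture a bona fide global compactly supported variation $\tilde\Phi$ via a time cutoff $\chi$ applied inside the chart, which obliges you to argue that the chart image can be taken $0$-star-shaped. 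Your route is more explicit about feeding an honest $\mathbb R$-parametrized variation into the weak-regularity hypothesis, at the price of this extra technical assumption; the paper's route is cleaner but tacitly treats the restriction to $[a,b]$ as already a compactly supported variation.

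One caveat on your side: the sentence ``by continuity one may shrink $\epsilon$ so that the entire curve $\gamma([a-\epsilon,b+\epsilon])$ lies in the slice chart $\mathcal U_{\gamma(a)}$'' is not quite right as stated---shrinking $\epsilon$ controls only the behaviour near the endpoints, not whether $\gamma([a,b])$ itself lands in a single chart. The paper's chart-level formulation sidesteps this automatically; in your formulation you should subdivide $[a,b]$ into finitely many subintervals each contained in a chart (by compactness and continuity of $\gamma$) and glue. In the other direction both arguments share the implicit assumption that the variation sits in a single slice chart, which in general requires an interpolation argument as in \Cref{lemma_1_interpolation_of_sections}; the paper flags this with a ``without loss of generality'', whereas you assert it outright.
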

\begin{proof}
We remark that $\Gamma^{\infty}(M \leftarrow B)$ is a smooth infinite dimensional manifold with the smooth structure described in  \cite[Theorem 42.1]{convenient} which has the same charts as the one described in \Cref{thm_1_mfd_mappings}, but a different topology. In particular in a neighborhood of any section $\sigma \in \Gamma^{\infty}(M \leftarrow B)$, there are charts $\{\mathcal{U}_{\sigma},u_{\sigma}\}$, with $u_{\sigma}:\mathcal{U}_{\sigma} \to \Gamma^{\infty}_c(M\leftarrow \sigma^{*}VB)$ where the target space has the canonical limit Fréchet vector space topology. Meanwhile recall that $C^{\infty}(M)\equiv \mathcal{E}(M)$ has the Fréchet space topology described in \Cref{ex_A_smooth_function_frechet_space}. As a result, the smoothness requirement can be tested in a generic chart $\mathcal{U}_{\sigma}$.

Suppose that $D$ is weakly regular, then $D$ is conveniently smooth if and only if for any $\sigma_0 \in \Gamma^{\infty}(M \leftarrow B)$, $\sigma\in \mathcal{U}_{\sigma}$,
$$
	D_{\sigma}=D \circ u_{\sigma_1}^{-1} : u_{\sigma}(\mathcal{U})\subset \Gamma^{\infty}_c(M\leftarrow \sigma^{*}VB) \rightarrow C^{\infty}(M,)
$$
is conveniently smooth. Equivalently, see Definition \ref{def_A_convenient_smooth_map}, given any smooth curve $\gamma: \mathbb{R} \rightarrow \Gamma^{\infty}_c(M\leftarrow \sigma^{*}VB)$ with $\gamma(0)=\sigma$, we have to show that
$$
	D_{\sigma}\circ {\gamma}:\mathbb{R}\ni t \mapsto D_{\sigma}(\gamma(t)) \in C^{\infty}(M)
$$
is smooth. By $(ii)$ Proposition \ref{prop_A_conv_sections_of_vector_bndl}, smoothness of $\gamma$ implies that for any compact interval $[a,b]\in \mathbb{R}$ there is a compact subset $K\in M$ such that $\gamma(t)$ is constant outside $K$ for each $t\in [a,b]$. Then since $D$ is weakly regular, there is a compact subset $K'\subset M$ outside of which $D_{\sigma}(\gamma(t))$ is constant for each $t\in[a,b]$ and jointly smooth in $(t,x)\in \mathbb R \times M$. Then by $(i)$ Proposition \ref{prop_A_conv_sections_of_vector_bndl} we get that $D_{\sigma}\circ {\gamma}$ is smooth as well. We remark that in this argument the locality of the operator is not required.

On the other hand, suppose that $D$ is conveniently smooth, consider any $d$-parametric compactly supported variation $\Sigma$ of $\sigma$, \textit{i.e.} a mapping $\Sigma: \mathbb{R}^d \times M \rightarrow B$ such that there exists a compact region $K\subset M$ for which $\Sigma(s,x)=\sigma(x)$ for all $(s,x) \in \mathbb{R}^d\times (M\backslash K)$. Without loss of generality we can assume that $\Sigma(s) \in \mathcal{U}_{\sigma}$ for each $s=(s^1,\ldots,s^d)$.

To show weak regularity of $D$ we have to prove that $D(\Sigma(s))\in C^{\infty}(\mathbb{R}^d\times M)$ and that $D(\Phi(s))(x)= D(\sigma)(x)$ whenever $(s,x)\in \mathbb{R}^d \times (M \backslash K')$ for some compact subset $K'$ of $M$. The second assertion follows immediately using the fact that $D$ evaluated at a point just depends on germs of the source section evaluated at that point, therefore we even get $D(\Sigma(s))(x)= D(\sigma)(x)$ outside the original compact subset $K$. For the other assertion, let $s_0\in \mathbb{R}^d$, we show joint smoothness in a neighborhood $I$ of $s_0$. Assume that the dimension of parameter space is $d=1$, then the mapping $ I \ni s \rightarrow u_{\sigma}(\Sigma(s))$ is a smooth curve in $\Gamma^{\infty}_c(M\leftarrow \sigma^{*}VB) $ provided $I$ is a small enough neighborhood of $s_0\in \mathbb{R}$, thus by \Cref{thm_A_cart_closedness}, Proposition \ref{prop_A_conv_structures_on_C_infty(M)} we have
$$
\begin{aligned}
	 &D(\Sigma): I_0 \rightarrow C^{\infty}(M,\mathbb{R}) \space\ \mathrm{is} \space\ \mathrm{smooth}\\
	\Leftrightarrow & D(\Sigma) : I_0 \rightarrow C^{\infty}(v_{\alpha}(V_{\alpha}),\mathbb{R})  \space\ \mathrm{is} \space\ \mathrm{smooth} \space\ \forall \ \{V_{\alpha},v_{\alpha}\} \space\ \mathrm{chart} \space\ \mathrm{of} \space\ M ,\\
	\Leftrightarrow  & D(\Sigma)^{\wedge} : v_{\alpha}(V_{\alpha})\times I_0 \rightarrow \mathbb{R} \space\ \mathrm{is} \space\ \mathrm{smooth} \space\ \forall \ \{V_{\alpha},v_{\alpha}\} \space\ \mathrm{chart} \space\ \mathrm{of} \space\ M.\\
	\Leftrightarrow  & D(\Sigma)^{\wedge} : M\times I_0 \rightarrow \mathbb{R} \space\ \mathrm{is} \space\ \mathrm{smooth}.
\end{aligned}
$$
By arbitrariness of $s_0$ we conclude that $D(\Sigma) : M\times \mathbb{R} \rightarrow \mathbb{R}$ is smooth as well. In case $d>1$ it suffices to iterate $d$ times this argument: at the $l$th iteration we fix the previous $d-l-1$ parameters and take a smooth curve $D(\Sigma): \mathbb{R}\to C^{\infty}(M\times\mathbb{R}^{l-1}) $ with domain the $1$th factor of $\mathbb{R}^d$ while keeping the remaining $d-l-1$ constant. Finally, by the above argument we infer that $D(\Sigma)^{\wedge}\in C^{\infty}(M\times \mathbb{R}^{l})$.
\end{proof}

We remark that the key feature of this proof is the Cartesian closedness of convenient calculus \textit{i.e.} Theorem 3.12 of \cite{convenient}, which essentially enable us to "move back and forth" the $M$ factor in the smoothness of mappings. This property is in general true for the so-called convenient calculus on vector spaces; however, in general, the $c^\infty$-topology is not compatible with the vector space structure of the underlying space being often too fine (see \cite[Corollary 4.6]{convenient}). Our case is special though: the $c^\infty$-topology coincides with the Fréchet vector space topology. We also stress that part of this result is very similar to Lemma \ref{lemma_A_locality&bastiani_implies_w-reg}, where we showed that Bastiani smoothness of the differential operator implies weak regularity; however, the opposite assertion is not generally valid due to the lack of cartesian closedness for Bastiani calculus.\\ 

We further remark that Proposition \ref{prop_2_equivalence_conv-smooth_w-regularity} can be generalized to vector bundles, using Proposition 30.1, Lemma 30.3, and Lemma 30.8 of \cite{convenient}.\\



Let us use two other auxiliary results. To better introduce them, we shall use the following notation: if $s\mapsto h_s=(g_s,m_s,\kappa_s)$ is the compactly supported variation of $h\equiv h_0=(g_0,m_0,\kappa_0)$ we denote the corresponding variation of the Klein-Gordon differential operator $P$ by 
\begin{equation}\label{eq_2_P_s}
    P_s=\sqrt{\frac{g_s}{g_0}}\big(g_s^{\mu\nu}\nabla_{\mu}(s)\nabla_{\nu}(s)+m^2_s+\kappa_sR(g_s)\big).
\end{equation}
Notice that $P_s$ differs from the normally hyperbolic operator $ P(h_s)=g_s^{\mu\nu}\nabla_{\mu}(s)\nabla_{\nu}(s)+m^2_s+\kappa_sR(g_s)$ by a factor $\sqrt{\frac{g_s}{g_0}}$, this extra factor will come in handy in the calculations below. Then we call $\mathcal{G}^{\pm}(M,h_s)$ the advanced retarded Green operators associated to $P(h_s)$, and set 
\begin{equation}\label{eq_2_G_s^+-}
    \mathcal{G}_s^{\pm}(x,y)= \sqrt{\frac{g_0(x)}{g_s(x)}}\sqrt{\frac{g_0(y)}{g_s(y)}}\mathcal{G}^{\pm}_{(M,h_s)}(x,y),
\end{equation}
by \cite[Theorem 3.8]{bar}, we see that $\mathcal{G}_s^{\pm}:C^{\infty}_c(M)\to C^{\infty}(M,\mathbb R)$ are Green-hyperbolic operator and, given any $f \in C^{\infty}_c(M)$,
$$
\begin{aligned}
    \int_M \mathcal{G}_s^{\pm}(x,y)P_sf(y)d\mu_{g_0}(y)=f(x),\\
    \int_M \mathcal{G}_s^{\pm}(x,y)P_sf(x)d\mu_{g_0}(x)=f(y).
\end{aligned}
$$
The reason for this notation is essentially to simplify subsequent formulas which otherwise will have factors $\sqrt{\frac{g_s}{g_0}}$. Notice that $P_0=P(h_0)$, $\mathcal{G}_0^{\pm}=\mathcal{G}^{\pm}_{(M,h_0)}$. Notice also that since $P_s\neq P_0$ only in a compact subset of $M$, then 
\begin{equation}\label{eq_2_id_G_s-G_0}
    \mathcal{G}_s^{\pm} -\mathcal{G}_0^{\pm} =- \mathcal{G}_s^{\pm} (P_s-P_0)\mathcal{G}_0^{\pm}  
\end{equation}

\begin{lemma}\label{lemma_2_propagator_smothness}
	Let $h_0$ be a background geometry and consider a smooth curve $h_s$, $ \mathbb{R}\ni s \mapsto h_s \in \Gamma^{\infty}(M\leftarrow HM)$. Then the associated space-time propagators $\mathbb{R}\ni s\mapsto\mathcal{G}^{\pm}_{(M,h_s)}(f) \in C^{\infty}(M)$ are smooth for each $f\in \mathcal{D}(M)$. 
\end{lemma}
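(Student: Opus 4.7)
The plan is to exploit the resolvent-type identity \eqref{eq_2_id_G_s-G_0} together with the fact that $P_s - P_0$ is a differential operator of order at most two whose coefficients vanish outside the compact region where $h_s$ differs from $h_0$. Since $s \mapsto h_s$ is a smooth curve in $\Gamma^{\infty}(M\leftarrow HM)$, Corollary \ref{coro_1_WO^0-curves} (applied via the underlying $\mathrm{WO}^0$-topology) guarantees that for any bounded interval $I\subset\mathbb{R}$ there exists a common compact $K\subset M$ outside of which $h_s$ is constant in $s\in I$; in particular $P_s - P_0$ is supported in $K$ and its coefficients depend jointly smoothly on $(s,x)$.

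First I would establish continuity of $s \mapsto \mathcal{G}_s^{\pm}(f)$ in $C^\infty(M)$. Given $f\in \mathcal{D}(M)$, $\mathcal{G}_0^{\pm}(f)$ is smooth with support in $J^{\pm}(\mathrm{supp}(f))$, and $(P_s - P_0)\mathcal{G}_0^{\pm}(f)$ is compactly supported (its support is contained in $K\cap J^\pm(\mathrm{supp}(f))$, which is compact by global hyperbolicity). The identity \eqref{eq_2_id_G_s-G_0} then expresses $\mathcal{G}_s^\pm(f) - \mathcal{G}_0^\pm(f)$ as $\mathcal{G}_s^\pm$ applied to a compactly supported smooth source that vanishes as $s\to 0$ in the LF-topology of $\mathcal{D}(M)$; combined with the continuity of each $\mathcal{G}_s^\pm$ stated in item $(i)$ of Theorem \ref{thm_1_properties_of_Green_functions} and the uniform control on supports provided by $K$, this yields continuity in the Fréchet topology of $C^\infty(M)$.

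Next I would compute the first derivative. Dividing \eqref{eq_2_id_G_s-G_0} by $s$ gives
$$
\frac{\mathcal{G}_s^{\pm}(f) - \mathcal{G}_0^{\pm}(f)}{s} = -\mathcal{G}_s^{\pm}\!\left[\frac{P_s-P_0}{s}\,\mathcal{G}_0^{\pm}(f)\right].
$$
Joint smoothness of the coefficients of $P_s$ in $(s,x)$ implies that $\frac{P_s-P_0}{s}\,\mathcal{G}_0^{\pm}(f) \to \dot{P}_0\,\mathcal{G}_0^\pm(f)$ in $\mathcal{D}(M)$, where $\dot{P}_0 = \partial_s P_s|_{s=0}$ is again a second-order differential operator with coefficients supported in $K$. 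Combined with the continuity of $s\mapsto \mathcal{G}_s^\pm$ obtained in the previous step (applied to a jointly smooth family of sources with common compact support) one concludes
$$
\left.\frac{d}{ds}\right|_{s=0} \mathcal{G}_s^{\pm}(f) = -\mathcal{G}_0^{\pm}\dot{P}_0\,\mathcal{G}_0^{\pm}(f) \in C^\infty(M).
$$

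Finally, I would iterate. The right-hand side has the same structural form as $\mathcal{G}_s^\pm(f)$ itself: a Green operator acting on a compactly supported smooth section depending smoothly on the parameter. Differentiating once more via \eqref{eq_2_id_G_s-G_0} and a Leibniz-type argument produces, at each order $n$, a finite sum of compositions of the form $\mathcal{G}_0^\pm P^{(k_1)}_0 \mathcal{G}_0^\pm P^{(k_2)}_0 \cdots \mathcal{G}_0^\pm(f)$ with $k_1+\cdots+k_j = n$, each an element of $C^\infty(M)$. The main obstacle is controlling convergence in the Fréchet topology of $C^\infty(M)$ (uniform convergence with all derivatives on compact subsets) rather than in some weaker topology; this is handled precisely by the continuity of $\mathcal{G}_s^\pm:\mathcal{D}(M)\to C^\infty(M)$ from Theorem \ref{thm_1_properties_of_Green_functions} together with the fact that the propagation of singularities and causal support properties ensure all intermediate terms live in spaces on which these operators act continuously.
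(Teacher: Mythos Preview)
Your proposal is correct and follows essentially the same strategy as the paper: both derive the formula $\left.\tfrac{d}{ds}\right|_{s=0}\mathcal{G}_s^{\pm}(f) = -\mathcal{G}_0^{\pm}\dot{P}_0\,\mathcal{G}_0^{\pm}(f)$ using that $\dot{P}_0\,\mathcal{G}_0^{\pm}(f)$ is compactly supported (because $h_s$ differs from $h_0$ only on a compact set), and then iterate. The only cosmetic difference is that you start from the resolvent identity \eqref{eq_2_id_G_s-G_0} and divide by $s$, whereas the paper applies $P_s$ to the difference quotient and uses $P_s\mathcal{G}_s^{+}=\mathrm{id}$ to reach the same expression; your extra continuity step is a welcome bit of care that the paper leaves implicit.
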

\begin{proof}
	We just show the claim for the retarded propagator but for the advanced propagator we can repeat the same argument. We start by evaluating
$$
	\lim_{s\to 0} \frac{1}{s}\Big(\mathcal{G}_s^+(f)-\mathcal{G}_0^+(f) \Big),
$$
for some $f \in C^{\infty}_c(M)$. By construction the differential operator $P_s(x)=\sqrt{\frac{g_s}{g_0}}g_s^{\mu\nu}\nabla_{\mu}(s)\nabla_{\nu}(s)+m^2(s)+\kappa(s)R(g(s))$ is smooth in $s$, therefore we consider 
$$
\begin{aligned}
	&P_s(x) \lim_{s\to 0} \frac{1}{s}\Big(\mathcal{G}_s^+(x,f)-\mathcal{G}_0^+(x,f) \Big)=\lim_{s\to 0} \frac{1}{s}\Big(P_s(x)\mathcal{G}_s^+(x,f)-P_s(x)\mathcal{G}_0^+(x,f) \Big)\\
	& \quad = \lim_{s\to 0} \frac{1}{s}\Big(\delta(x,f)-P_s(x)\mathcal{G}_0^+(x,f) \Big)=\lim_{s\to 0} \frac{1}{s}\Big(P_0(x)\mathcal{G}_0^+(x,f)-P_s(x)\mathcal{G}_0^+(x,f) \Big)\\
	&\quad =\lim_{s\to 0} \frac{1}{s}\Big(P_0(x)-P_s(x)\Big)\mathcal{G}_0^+(x,y) =- \dot{P}_s(x)\mathcal{G}_0^+(x,f) 
\end{aligned}
$$
since $h_s\neq h_0$ in just a compact subset of $M$, the differential operator $\dot{P}_s(x)\equiv \frac{d}{ds}P_s(x)\neq 0$ only inside that compact, therefore the quantity $ \dot{P}_s(x)\mathcal{G}_0^+(x,f) $ has compact support  for any $f\in \mathcal{D}(M)$ and we can write
\begin{equation}\label{eq_derivative_G+}
\begin{aligned}
    \frac{d}{ds}\Big\vert_{s=0} \mathcal{G}_s(x,f)&=\lim_{s\to 0} \frac{1}{s}\Big(\mathcal{G}_s^+(x,f)-\mathcal{G}_0^+(x,f) \Big)\\
    &=- \int_M \mathcal{G}_0^+(x,z) \dot{P}_s(z)\mathcal{G}_0^+(z,f)d\mu_{g_0}(z). 
\end{aligned}
\end{equation}
Using \eqref{eq_derivative_G+} one can show that all iterated derivatives of $\mathcal{G}_0^+$ exists and are continuous by \Cref{thm_1_properties_of_Green_functions}, thus showing smoothness. Finally, since $\mathcal{G}_s(x,f) =\mathcal{G}_s^+(x,f) -\mathcal{G}_s^-(x,f) $, the causal propagator is smooth as well.
\end{proof}

\begin{proposition}\label{prop_2_properties_of_R}
    Using the notation introduced in \eqref{eq_2_P_s}, \eqref{eq_2_G_s^+-} define 
    \begin{equation}\label{eq_2_R_+-}
        R_{\pm} :C^{\infty}(M)\to C^{\infty}(M) : \psi \mapsto \psi - \mathcal{G}_s^{\pm} (P_s-P_0) \psi, 
    \end{equation}
    then the operators $R_{\pm}$ are continuous and satisfy:
    \begin{itemize}
        \item[$(i)$] $R_{\pm}$ is an isomorphism and the inverse is given by 
        \begin{equation}\label{eq_2_R_+-_inverse}
            R_{\pm}^{-1}= id_{C^{\infty}(M)}+\mathcal{G}_0^{\pm} (P_s-P_0); 
        \end{equation}
        \item[$(ii)$] $P_sR_{\pm}=P_0$.
    \end{itemize}
Moreover, if we denote by $R_{\pm}^{\dagger}:C_c^{\infty}(M)\to C^{\infty}_c(M)$ the adjoint of $R_{\pm}$ with respect to the paring
$$
    C^{\infty}(M)\otimes C^{\infty}_c(M) \to \mathbb R, \ (\psi,f)\to \int_M f\psi d\mu_{g_0},
$$
we have
    \begin{itemize}
        \item[$(iii)$] $R_{\pm}^{\dagger}= id_{C_c^{\infty}(M)}- (P_s-P_0)\mathcal{G}_s^{\pm}$, and $ (R^{\dagger}_{\pm})^{-1}=( R^{-1}_{\pm})^{\dagger}$;
        \item[$(iv)$] If $\mathcal{G}_s$, $\mathcal{G}_0$ are the causal propagator associated to $\mathcal{G}_s^{\pm}$, $\mathcal{G}_0^{\pm}$, then
        $$
            R_{\pm} \mathcal{G}_0 R^{\dagger}_{\pm}= \mathcal{G}_s;
        $$
        
    \end{itemize}
\end{proposition}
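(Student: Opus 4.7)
The proof is essentially algebraic, pivoting on the defining identity \eqref{eq_2_id_G_s-G_0} together with a symmetric twin $\mathcal{G}_s^{\pm} - \mathcal{G}_0^{\pm} = -\mathcal{G}_0^{\pm}(P_s-P_0)\mathcal{G}_s^{\pm}$, which I would derive first by writing $\mathcal{G}_s^{\pm} - \mathcal{G}_0^{\pm} = \mathcal{G}_0^{\pm}P_0\mathcal{G}_s^{\pm} - \mathcal{G}_0^{\pm}P_s\mathcal{G}_s^{\pm}$; the inner compositions are legitimate because $(P_s-P_0)\mathcal{G}_s^{\pm}f$ is compactly supported whenever $f$ is, since $P_s$ and $P_0$ differ only on the compact set $K\supset\mathrm{supp}(h_s-h_0)$. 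This same compact-support remark shows at the outset that $R_{\pm}$ is a well-defined continuous operator on $C^{\infty}(M)$: for any $\psi\in C^{\infty}(M)$ the section $(P_s-P_0)\psi$ lies in $C^{\infty}_c(M)$, and $\mathcal{G}_s^{\pm}$ maps this space continuously into $C^{\infty}(M)$ by Theorem \ref{thm_1_properties_of_Green_functions}.

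\textbf{Verification of (i) and (ii).} Assertion (ii) follows at once from $P_s\mathcal{G}_s^{\pm}=\mathrm{id}$ applied to the compactly supported section $(P_s-P_0)\psi$, giving $P_sR_{\pm}\psi=P_s\psi-(P_s-P_0)\psi=P_0\psi$. For (i) I would multiply out $R_{\pm}(\mathrm{id}+\mathcal{G}_0^{\pm}(P_s-P_0))$ and collect the three nontrivial terms $\mathcal{G}_0^{\pm}(P_s-P_0)-\mathcal{G}_s^{\pm}(P_s-P_0)-\mathcal{G}_s^{\pm}(P_s-P_0)\mathcal{G}_0^{\pm}(P_s-P_0)$; a single application of \eqref{eq_2_id_G_s-G_0} shows that these cancel, and the reversed composition is handled identically using the twin identity.

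\textbf{Adjoints (iii) and causal propagator (iv).} The Klein--Gordon operator $P(h_s)$ is formally self-adjoint for $d\mu_{g_s}$, and the factor $\sqrt{g_s/g_0}$ built into $P_s$ transfers this to the pairing $(\psi,f)\mapsto\int f\psi\,d\mu_{g_0}$, so $P_s-P_0$ is self-adjoint and Lemma \ref{lemma_1_duality_green} yields $(\mathcal{G}_s^{\pm})^{\dagger}=\mathcal{G}_s^{\mp}$. Taking the adjoint of $R_{\pm}=\mathrm{id}-\mathcal{G}_s^{\pm}(P_s-P_0)$ then produces the stated formula for $R_{\pm}^{\dagger}$, and $(R_{\pm}^{\dagger})^{-1}=(R_{\pm}^{-1})^{\dagger}$ follows by adjointing the inverse relations in (i). For (iv) I would first observe that (i) combined with \eqref{eq_2_id_G_s-G_0} gives $R_{\pm}\mathcal{G}_0^{\pm}=\mathcal{G}_0^{\pm}-\mathcal{G}_s^{\pm}(P_s-P_0)\mathcal{G}_0^{\pm}=\mathcal{G}_s^{\pm}$, and dually (using the twin identity) $\mathcal{G}_0^{\pm}R_{\pm}^{\dagger}=\mathcal{G}_s^{\pm}$. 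Splitting $\mathcal{G}_0=\mathcal{G}_0^{+}-\mathcal{G}_0^{-}$ and expanding $R_{\pm}\mathcal{G}_0R_{\pm}^{\dagger}$ two-sidedly, the cross terms of the form $\mathcal{G}_s^{\pm}(P_s-P_0)\mathcal{G}_s^{\mp}$ that appear in both the $+$ and $-$ blocks cancel pairwise, leaving $\mathcal{G}_s^{+}-\mathcal{G}_s^{-}=\mathcal{G}_s$.

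\textbf{Main obstacle.} The only genuine subtlety is the sign bookkeeping in (iii): because $(\mathcal{G}_s^{+})^{\dagger}=\mathcal{G}_s^{-}$, the natural pairing adjoint of $R_{+}$ involves $\mathcal{G}_s^{-}$, and one must be careful to ensure that the matching of $\pm$ signs on the two sides of (iii) is consistent with the convention that makes (iv) collapse telescopically. Once this is pinned down the whole proposition reduces to linear algebraic manipulations of \eqref{eq_2_id_G_s-G_0} and its twin, with no further analytic input beyond the Green operator properties already established in Theorem \ref{thm_1_properties_of_Green_functions}.
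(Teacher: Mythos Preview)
Your proposal is correct and follows essentially the same algebraic strategy as the paper: everything reduces to repeated applications of \eqref{eq_2_id_G_s-G_0} and its companion identity $\mathcal{G}_s^{\pm}-\mathcal{G}_0^{\pm}=-\mathcal{G}_0^{\pm}(P_s-P_0)\mathcal{G}_s^{\pm}$, with continuity coming from Theorem~\ref{thm_1_properties_of_Green_functions}. Your treatment of (i) and (ii) is identical to the paper's; for (iii) the paper performs the adjoint computation by an explicit kernel calculation rather than your more conceptual route via $(\mathcal{G}_s^{\pm})^{\dagger}=\mathcal{G}_s^{\mp}$, but both land on the same formula.

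One point to pin down: your dual identity as written, $\mathcal{G}_0^{\pm}R_{\pm}^{\dagger}=\mathcal{G}_s^{\pm}$, carries precisely the sign ambiguity you flag at the end. Taking the adjoint of $R_{\pm}\mathcal{G}_0^{\pm}=\mathcal{G}_s^{\pm}$ actually gives $\mathcal{G}_0^{\mp}R_{\pm}^{\dagger}=\mathcal{G}_s^{\mp}$ (with the $\mp$ on the propagators), and it is this version that makes the telescoping in (iv) work: splitting $\mathcal{G}_0=\mathcal{G}_0^{+}-\mathcal{G}_0^{-}$ and applying $R_{+}$ from the left to the first summand and $R_{+}^{\dagger}$ from the right to the second yields $\mathcal{G}_s^{+}R_{+}^{\dagger}-R_{+}\mathcal{G}_s^{-}$, whose cross terms $\mathcal{G}_s^{+}(P_s-P_0)\mathcal{G}_s^{-}$ cancel exactly. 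This is a cleaner route than the paper's direct brute-force expansion of $R_{+}\mathcal{G}_0 R_{+}^{\dagger}$, which unfolds the full product and simplifies term by term using \eqref{eq_2_id_G_s-G_0} several times; both arrive at $\mathcal{G}_s$, but your organization via the intermediate identities makes the cancellation more transparent once the signs are fixed.
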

\begin{proof}
If $C^{\infty}(M)$ has the Fréchet space topology, continuity of $R_{\pm}$ follows directly from \Cref{thm_1_properties_of_Green_functions} and the continuity of the differential operator $P_s$. Let then $\psi\in C^{\infty}(M)$, $f\in C_c^{\infty}(M)$ be arbitrary functions. Using \eqref{eq_2_id_G_s-G_0}, 
$$
\begin{aligned}
    R_{\pm}R_{\pm}^{-1}(\psi)&= R_{\pm}(\psi- \mathcal{G}_0^{\pm}(P_s-P_0)\psi)\\
    &= \psi-  \mathcal{G}_s^{\pm}(P_s-P_0)\psi +\mathcal{G}_0^{\pm}(P_s-P_0)\psi- \mathcal{G}_s^{\pm} \big( (P_s-P_0)\mathcal{G}_0^{\pm}(P_s-P_0)\big)\psi\\
    &= \psi- \textcolor{blue}{\mathcal{G}_s^{\pm}(P_s-P_0)\psi} +\textcolor{red}{\mathcal{G}_0^{\pm}(P_s-P_0)\psi}+  \textcolor{blue}{\mathcal{G}_s^{\pm}(P_s-P_0)\psi}- \textcolor{red}{\mathcal{G}_0^{\pm}(P_s-P_0)\psi}\\
    &=\psi.
\end{aligned}
$$
A similar argument works for the inverse composition, thus we conclude that \eqref{eq_2_R_+-_inverse} in $(i)$ is actually the inverse of $R_{\pm}$. Directly from \eqref{eq_2_R_+-},
$$
    P_s R_{\pm} \psi = P_s \psi - P_s\mathcal{G}_s^{\pm}(P_s-P_0)\psi= P_s\psi-P_s\psi+P_0\psi=P_0\psi.
$$
to show $(iii)$ we have to verify that 
$$
    \int_{M^2} d\mu_{g_0}(x,y) R^{\dagger}_{\pm}(x,y)f(x)\psi(y)=\int_{M^2} d\mu_{g_0}(x,y) R_{\pm}(x,y)f(y)\psi(x).
$$
expanding the right hand side and using that $\mathcal{G}_{\cdot}^{\mp}$ is the adjoint of $\mathcal{G}_{\cdot}^{\pm}$ (see Lemma \ref{lemma_1_duality_green}) yields
$$
\begin{aligned}
    \int_{M^2} d\mu_{g_0}(x,y) R_{\pm}(x,y)&f(y)\psi(x)= \int_{M^2} d\mu_{g_0}(x,y)f(y) \big[ \delta(x,y) - \mathcal{G}_s^{\pm}(x,y)(P_s-P_0)\big]\psi(x)\\
    &= \int_{M^2} d\mu_{g_0}(y)f(y)\psi(y) - \int_{M^2} d\mu_{g_0}(x,y)f(y)\big[\mathcal{G}_s^{\pm}(x,y)(P_s-P_0)\psi(x)\big]\\
    &= \int_{M^2} d\mu_{g_0}(y)f(y)\psi(y) - \int_{M^2} d\mu_{g_0}(x,y)\mathcal{G}_s^{\mp}(y,x)f(y)(P_s-P_0)\psi(x)\\
    &= \int_{M^2} d\mu_{g_0}(y)f(y)\psi(y) - \int_{M^2} d\mu_{g_0}(x,y)\big[(P_s-P_0)\mathcal{G}_s^{\mp}(y,x)f(y)\big]\psi(x)\\
    &=  \int_{M^2} d\mu_{g_0}(x,y)\big[\delta(x,y)- (P_s-P_0)\mathcal{G}_s^{\mp}(y,x)f(y)\big]\psi(x)\\
\end{aligned}
$$
Incidentally, using the same argument on $R_{\pm}^{-1}$ shows the other part of $(iii)$. Finally, using repeatedly \eqref{eq_2_id_G_s-G_0}, 
$$
\begin{aligned}
    R_{+}\mathcal{G}_0R^{\dagger}_+ &= R_+ \big( \mathcal{G}_0 -  \mathcal{G}_0(P_s-P_0) \mathcal{G}_s^-\big)=R_+ \big( \mathcal{G}_0 -  \mathcal{G}_0^+(P_s-P_0) \mathcal{G}_s^-+\mathcal{G}_0^--\mathcal{G}_s^-\big)\\
    &= R_+ \big( \mathcal{G}_0^+ -\mathcal{G}_s^- -  \mathcal{G}_0^+(P_s-P_0) \mathcal{G}_s^-\big)   = \mathcal{G}_0^+ -\mathcal{G}_s^- -  \mathcal{G}_0^+(P_s-P_0) \mathcal{G}_s^- \\
    & \quad - \big[ \mathcal{G}_s^+(P_s-P_0)\mathcal{G}_0^+-\mathcal{G}_s^+(P_s-P_0)\mathcal{G}_s^--\mathcal{G}_s^+(P_s-P_0)\mathcal{G}_0^+(P_s-P_0)\mathcal{G}_s^-\big]\\
    &= \textcolor{red}{\mathcal{G}_0^+} -\mathcal{G}_s^- -  \textcolor{green}{\mathcal{G}_0^+(P_s-P_0) \mathcal{G}_s^-} -\textcolor{red}{\mathcal{G}_0^+}+ \mathcal{G}_s^+ +\textcolor{blue}{\mathcal{G}_s^+(P_s-P_0)\mathcal{G}_s^-} + \textcolor{green}{\mathcal{G}_0^+(P_s-P_0) \mathcal{G}_s^-}\\ &- \textcolor{blue}{\mathcal{G}_s^+(P_s-P_0)\mathcal{G}_s^-}=\mathcal{G}_s.
\end{aligned}
$$
\end{proof}

\begin{theorem}\label{thm_2_existence_Wick_polynomials}
    Let $(M,h)$ be a background geometry and $H\in \mathrm{Had}(M,h)$ be any Hadamard state associated to the data $h=(g,m,\kappa)$, then the quantum field $\phi^k_{(M,h,H)}(f)$ is the $k$th order Wick power according to Definition \ref{def_2_Wick_quantum_powers}.
\end{theorem}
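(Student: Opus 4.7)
The plan is to verify the four axioms of Definition \ref{def_2_Wick_quantum_powers} in turn, with axiom $(iv)$ being the real technical core; axioms $(i)$ and $(iii)$ are essentially already built into the definition \eqref{eq_2_Wick_powers} via $\alpha_H^{-1}$. For $(i)$, locality of $\phi^k_{(M,h)}(f)$ as a classical microlocal functional, together with the $\alpha_{H,H'}$-equivariance (Lemma \ref{lemma_2_regular_algebra_isomorphism}) and the fact that $\chi^{*}H'$ is Hadamard on $(M,h)$ whenever $H'$ is Hadamard on $(M',h')$, gives the naturality diagrams directly. The initial conditions $\phi^0=1$ and $\phi^1=\phi$ are immediate. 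For $(iii)$, hermiticity is obvious since $H$ is real and $\varphi$ is real-valued, while the derivative identity $d\phi^k_{(M,h)}(f)[\varphi](\psi)=k\int_M f\psi\varphi^{k-1}d\mu_g = k\phi^{k-1}_{(M,h)}(\psi f)[\varphi]$ is just the Leibniz rule at the classical level, and this commutes with $\alpha_H^{-1}$ since $\alpha_H^{-1}$ acts linearly on classical derivatives.

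For axiom $(ii)$ we use the explicit form of the Hadamard parametrix from Theorem \ref{thm_2_existence_of_parametrices}. Under physical scaling $h\mapsto h_\lambda=(\lambda^{-2}g,\lambda^2 m,\kappa)$ the square geodesic distance transforms as $\sigma\mapsto \lambda^{-2}\sigma$, the Van~Vleck determinant $U=\Delta^{1/2}$ is scale invariant, and each coefficient $V_j,\,W_j$ inherits a definite scaling degree from its recursive defining equations \eqref{eq_2_def_V_j(x,y)_normal_coord}, \eqref{eq_2_def_W_j(x,y)_normal_coord}. The prefactors $U/\sigma^d$ and $V\sigma^j$ scale homogeneously; the only obstruction to strict homogeneity is the $\ln(\sigma+i\epsilon t+\epsilon^2)/\mu^2)$ term in \eqref{eq_2_Hadamard_distribution}, which picks up an additive $\ln\lambda$ on rescaling. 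Tracking these contributions across the definition $\alpha_H^{-1}\bigl(\int f\varphi^k d\mu_g\bigr)$ and using \eqref{eq_scaling_phields} produces almost homogeneous scaling of degree $k(n-2)/2$ with at most $\lfloor k/2\rfloor$ powers of $\ln\lambda$, as required.

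Axiom $(iv)$ is the main obstacle and is where the new parameterized microlocal spectrum condition really has to be proved. First, the identification $\mathfrak{A}_{\mu loc}(M,h_s)\simeq \mathfrak{A}_{\mu loc}(M,h)$ is constructed at the classical level via the operator $R_+$ of Proposition \ref{prop_2_properties_of_R}: since $R_+\mathcal{G}_0 R^\dagger_+=\mathcal{G}_s$ by $(iv)$ of that proposition, pullback along $R_+$ is a Poisson $*$-isomorphism between $(\mathcal{F}_{\mu c}(M,h_s),\{\,,\,\}_{(M,h_s)})$ and $(\mathcal{F}_{\mu c}(M,h),\{\,,\,\}_{(M,h)})$, which by construction sends local functionals to local functionals and is smooth in $s$ (Lemma \ref{lemma_2_propagator_smothness}). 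This lifts to an $s$-smooth identification of the abstract algebras on $\mathfrak{A}_{\mu loc}$. Second, because $\varphi\mapsto \varphi^k$ has only finitely many non-zero derivatives, the quantity $\alpha_H\bigl(\phi^k_{(M,h_s)}(f)\bigr)(\varphi)$ is a \emph{finite} polynomial
\[
\sum_{2j\le k}\binom{k}{2j}(2j-1)!!\,\hbar^j\int_M f(x)\,\varphi(x)^{k-2j}\,[H_s{-}H]^j_{\Delta}(x)\,d\mu_{g_s}(x),
\]
where $[H_s-H]^j_\Delta$ denotes the $j$-fold coinciding-point diagonal of the \emph{smooth} function $H_s-H\in C^\infty(M\times M)$ (smooth because two Hadamard parametrices differ by a smooth kernel, and $H_s$ depends smoothly on $s$ by the explicit inductive construction of $U_s,V_s,W_s$ in Theorem \ref{thm_2_existence_of_parametrices}). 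Each summand is therefore an integral of smooth data against $f$, and is jointly smooth in $(s,x)$ by standard smoothness of parameter-dependent integrals. The final step is to package this into the required statement about the integral kernel in $\mathcal{D}'(M\times\mathbb{R}^d)$; this is precisely the content of Proposition \ref{prop_2_equivalence_conv-smooth_w-regularity}, which equates weak regularity of the germ-dependent differential operator $s\mapsto \Phi^k_{(M,h_s,H,\varphi)}$ with convenient smoothness, and hence with the joint smoothness established above. The hardest book-keeping will be making the identification $\mathfrak{A}_{\mu loc}(M,h_s)\simeq\mathfrak{A}_{\mu loc}(M,h)$ simultaneously compatible with the $s$-dependent $\star_{H_s}$ products and with a single fixed $H$ used in $(iv)$; this is exactly what the intertwining $P_s R_+=P_0$ of Proposition \ref{prop_2_properties_of_R}$(ii)$ is designed to supply.
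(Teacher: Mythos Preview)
Your treatment of axioms $(i)$--$(iii)$ is essentially the paper's Step~1, though you omit one point the paper makes explicit: to obtain genuine local covariance one must replace $H$ by $H-W_0$ (subtracting the non-geometric smooth term of the Hadamard expansion, as in \cite{BFV03}), since the full $H$ depends on global data that does not pull back correctly under embeddings.

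The real gap is in axiom $(iv)$. Your displayed polynomial involves the diagonal value $[H_s-H]^j_\Delta$, which you justify by ``two Hadamard parametrices differ by a smooth kernel''. That statement is true only when both are parametrices for the \emph{same} normally hyperbolic operator. Here $H_s$ is a parametrix for $P(h_s)$ and $H$ for $P(h_0)$; these operators have different principal symbols (different metrics $g_s$ and $g_0$) and hence different light-cone singularity structures. The difference $H_s-H$ is genuinely singular away from the diagonal on the symmetric difference of the two light cones, and its coincidence limit $\lim_{y\to x}(H_s-H)(x,y)$ does not exist: the leading $U_s/\sigma_s - U/\sigma$ term behaves like $(\sigma-\sigma_s)/(\sigma\sigma_s)\sim (g_0-g_s)(v,v)/|v|^4$, which diverges. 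So your formula, as written, is meaningless. The Remark immediately following the paper's proof addresses exactly this Lorentzian subtlety.

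The paper repairs this by using $R_+$ not to pull back functionals, but to \emph{transport the Hadamard state}: from $\omega_s=H_s+\tfrac{i}{2}\mathcal{G}_{(M,h_s)}$ one forms $\widetilde\omega=\omega_s\circ\bigl((\mathcal{R}^\dagger_+)^{-1}\otimes(\mathcal{R}^\dagger_+)^{-1}\bigr)$ and then \emph{proves} (this is the nontrivial Step~2) that $\widetilde\omega$ is a Hadamard two-point function for $P(h_0)$, using $P_sR_+=P_0$ to show $P_0$ smoothens $\widetilde H$ and then invoking Radzikowski's theorem. Only after this does one have $\widetilde H\in\mathrm{Had}(M,h_0)$, so that $\widetilde H-H$ is smooth and your polynomial formula becomes valid with $\widetilde H$ in place of $H_s$. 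Your final paragraph correctly senses that this bookkeeping is the crux, but the explicit formula you wrote down skips precisely the step that makes it work.
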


\begin{proof}
We dived the proof in three steps: in the first we verify that $\phi^k_{(M,h,H)}$ are linear scalar quantum field when $k=1$ and satisfy property $(i),\ (ii), \ (iii)$ in Definition \ref{def_2_Wick_quantum_powers}; in the second we shall define an isomorphism $\tau_{(M,h_s\to h_0)} : \mathfrak A_{\mu loc}(M,h_s) \to  \mathfrak A_{\mu loc}(M,h_0)$ using the result of Proposition \ref{prop_2_properties_of_R}; whereas in the final part we will show that condition $(iv)$ in Definition \ref{def_2_Wick_quantum_powers} holds. 

\textbf{Step 1.}
Firstly, we notice that each $\phi^k_{(M,h,H)}: f \mapsto \phi^k_{(M,h,H)}(f)$ is an algebra valued distribution since it is linear in $f$ and continuous with respect to the topology of $\mathfrak A_{\mu c }(M,h)$ with the completion of the topology of $\hbar$ convergence in $\mathfrak A_{reg }(M,h)$. Moreover, $\phi$ is a linear scalar quantum field if and only if $d\phi_{(M,h,H)}$ is a $c$-number field. By direct inspection
$$
	d\phi_{(M,h,H)}(f)=\bigg(\int_Mf(x) d\mu_g(x)\bigg)1_{\mathfrak{A}_{\mu c}(M,h)}
$$
clearly yields a $c$-number, therefore it is a linear quantum field as claimed, thus we can proceed to check whether elements of the form \eqref{eq_2_Wick_powers} are effectively Wick powers. The locality and covariance property for Wick powers are equivalent to require that given any $\chi \in \mathrm{Hom}(\mathfrak{BckG})$ there is an algebraic morphism
$$
	\mathfrak{A}(\chi) \big[\phi_{(M,h,\chi^*H')}^k(f))=\phi^k_{(M',h',H')}(\chi_{*}f).
$$
To fulfill this condition we follow \cite[$\S$6]{BFV03} and substitute $H(x,y)-W_0(x,y)$ in place of the Hadamard parametrix $H(x,y)$ that we use to construct Wick polynomials. Here, $W_0(x,y)$ is the non geometrical, smooth term in the Hadamard expansion \eqref{eq_2_Hadamard_distribution}. Notice that, although $W_0$ is well defined only on a open neighborhood of the diagonal in $M^2$, the coincidence limit of $W_0$ does still give rise to a smooth function $M\ni x\mapsto W_0(x,x)$ by a partition of unity argument. the covariant version of Wick powers is then
\begin{equation}\label{eq_2_covariant_Wick_powers}
	\phi^k_{(M,h,H)}\doteq \alpha^{-1}_{H-W_0}(\phi^k_{(M,h)}).
\end{equation}
By Lemma \ref{lemma_2_regular_algebra_isomorphism}, this is still in the same algebra as the original power $\phi^k_{(M,h,H)}$. In the sequel we will abuse the notation and write directly $\alpha_H$ instead of $\alpha_{H-W_0}$. Explicit calculation of $\alpha^{-1}_{H}\big(\phi^k_{(M,h)}(f)\big)$ yields the abstract algebra element
$$
	\int_Mf(x) :\phi^k:_{H}(x) d\mu_g(x)
$$
where, as usual, $:\phi:_H(x)=\phi(x)$ and 
\begin{equation}\label{eq_Wick_power_expansion}
	:\phi^k:_{H}(x)=\delta(x,\ldots,x_k) \bigg( :\phi(x_1)\cdots \phi(x_{k-1}):_H\phi(x_k)-\sum_{i=1}^{k-1} :\phi(x_1)\cdots \widehat{\phi(x_i)}\cdots  \phi(x_{k-1}):_H \bigg).
\end{equation}
By the scaling behavior of each field $\varphi$ given by \eqref{eq_scaling_varphi}, using \eqref{eq_2_Hadamard_distribution}, we see that
$$
	H_{\lambda}(x,y)=\lambda^{n-2}H(x,y)+2\lambda^{n-2}\ln(\lambda)V(x,y)
$$
which implies that $\phi^k_{(M,h,H)}(f)$ scales almost homogeneously with degree $k\frac{n-2}{2}$ under physical scaling for any $f\in C^{\infty}_c(M)$.

The algebraic condition follows immediately by taking into account that 
$$
    \phi^k_{(M,h,H)}(f)^{*}=\alpha^{-1}_H\big(\phi^k_{(M,h)}(f)^{*}\big)=\alpha^{-1}_H\big(\phi^k_{(M,h)}(\bar{f})\big)
$$
and $d\phi^k_{(M,h,H)}(f)=\alpha^{-1}_H\big(d\phi^k_{(M,h)}(f)\big)$. 

\textbf{Step 2.} Suppose $\mathbb{R} \ni s \rightarrow h_s\in \Gamma^{\infty}(M\leftarrow HM)$ defines a compactly supported variation of the background geometry $(M,h)$. To each $h_s$ we can associate its differential operator $P_s= g_s^{\mu\nu}\nabla_{\mu}(s)\nabla_{\nu}(s)+m^2(s)+\kappa(s)R(g(s))$, which is still a Green hyperbolic operator according to the hypothesis and is moreover conveniently smooth as a mapping $\mathbb{R}\ni s\mapsto P_s \in C^{\infty}\big(C^{\infty}(M,\mathbb R),C^{\infty}(M,\mathbb R)\big)$. Moreover, to each $P_s$ we associate its Green operators $\mathcal{G}_s^{\pm}$, which thanks to Lemma \ref{lemma_2_propagator_smothness} are conveniently smooth as well. Given the background geometry $(M,h_s)$, we can associate a Hadamard two point function
$$
    \omega_{s}(x,y)=H_s(x,y)+ \frac{i}{2}\mathcal{G}_{(M,h_s)} 
$$
where $H_s$, $\mathcal{G}_{(M,h_s)} $ are respectively a Hadamard parametrix and the causal propagator canonically associated to the geometry $h_s=(g_s,m_s,\kappa_s)$. Setting 
\begin{equation}\label{eq_2_Hadamard_state_R^2}
    \widetilde {\omega}= \omega_s \circ\big(\big(R^{\dagger}_+\big)^{-1}\otimes\big(R^{\dagger}_+\big)^{-1}\big),
\end{equation}
we claim that \eqref{eq_2_Hadamard_state_R^2} is a Hadamard state for the background geometry $(M,h)$. From $(iv)$ in Proposition \ref{prop_2_properties_of_R}, setting $\mathcal{R}^{\dagger}_+=\sqrt{{g_0}/{g_s}}R^{\dagger}_+$, we see that 
$$
\mathcal{G}_{(M,h_0)}=\mathcal{G}_{(M,h_s)}\circ\Big((\mathcal{R}^{\dagger}_+)^{-1}\otimes(\mathcal{R}^{\dagger}_+)^{-1}\Big).
$$
We therefore have just to verify that $\widetilde H\equiv H_s\circ\Big(\sqrt{{g_0}/{g_s}}(R^{\dagger}_+)^{-1}\otimes \sqrt{{g_0}/{g_s}}(R^{\dagger}_+)^{-1}\Big)$ gives a well defined Hadamard parametrix. Notice that, by $(ii)$ in Proposition \ref{prop_2_properties_of_R}, using \eqref{eq_2_P_s} and \eqref{eq_2_G_s^+-} to handle the density quotients; for all $f_1,\ f_2 \in C^{\infty}_c(M)$
$$
\begin{aligned}
    \widetilde H(f_1, P(h_s)f_2) &=  W_{s,0}\big((\mathcal{R}^{\dagger}_+)^{-1}(f_1),f_2\big),\\
    \widetilde H( P(h_s)f_1,f_2) &=  W_{s,0}\big(f_1,(\mathcal{R}^{\dagger}_+)^{-1}(f_2)\big),\\   
\end{aligned}
$$
where $W_{s,0}$ is the smooth part of the Hadamard parametrix $H_s$ constructed with the background geometry $h_s$. We can estimate
$$
    \mathrm{WF}((\mathcal{R}^{\dagger}_+)^{-1})\subseteq \mathrm{WF}(\mathcal{G}_{(M,h_s)}^-),
$$
however, since $\mathrm{WF}(W_{s,0})=\emptyset$, by \cite[Theorem 8.2.14]{hormanderI} 
$$
    \mathrm{WF}\big(W_{s,0}\circ \big(id_{C^{\infty}_c(M)}\otimes (\mathcal{R}^{\dagger}_+)^{-1}\big) \big)=\emptyset, \quad \mathrm{WF}\big(W_{s,0}\circ \big((\mathcal{R}^{\dagger}_+)^{-1} \otimes id_{C^{\infty}_c(M)}\big) \big)=\emptyset.
$$
Applying \cite[Theorem 5.1]{radzikowski1996micro} establishes that 
$$
    \widetilde{\omega}= \widetilde H +\frac{i}{2} \mathcal{G}_{(M,h_0)}
$$
is the two point function of a well defined Hadamard state. We can thus define
$$
    \tau_{(M,h_s\to h_0)} : \mathfrak A_{\mu loc}(M,h_s) \to \mathfrak  A_{\mu loc}(M,h_0), \quad A \mapsto \tau_{(M,h_s\to h_0)}(A).
$$
where, given any parametrix $H$ related to the background geometry $(M,h_0)$, 
\begin{equation}\label{eq_2_tau_isomorphism}
    \tau_{(M,h_s\to h_0)}(A)(H)= A(H'_s)
\end{equation}
where $H'_s= H \circ\big(\mathcal{R}^{\dagger}_+\otimes \mathcal{R}^{\dagger}_+\big) $ is a Hadamard parametrix for $(M,h_s)$. Notice that selecting another parametrix $H'$ amounts to shift everything by a smooth function $d=H'-H$, which by Lemma \ref{lemma_2_regular_algebra_isomorphism} and \ref{lemma_2_unique_conv} does not affect our isomorphism.
The action of $\tau_{(M,h_s\to h_0)}$ on Wick powers (\textit{c.f.} equation \eqref{eq_2_Wick_powers}), is then given by
$$
    \tau_{(M,h_s\to h_0)}\Big(\alpha^{-1}_{H_s}\big(\phi^k_{(M,h_s)}\big) \Big)= \alpha^{-1}_{\widetilde H}\big(\phi^k_{(M,h_s)}\big) 
$$
\textbf{Step 3.}
Finally, if $h_s$ is a compactly supported variation of $h=h_0$, by an argument similar to the one in the proof of Lemma \ref{lemma_A_locality&bastiani_implies_w-reg}, then $s\mapsto h_s\in \Gamma^{\infty}(M\leftarrow HM)$ is a smooth mapping when the latter is equipped with the convenient smooth structure of \cite[THeorem 42.1]{convenient}. Therefore $s\mapsto H_s\in \mathcal{D}'(M\times M)$ is smooth as well for any Hadamard parametrix $H_s$ associated to $h_s$. By Lemma \ref{lemma_2_propagator_smothness} we deduce that the propagators are smooth as well, thus
$$
    s \mapsto \widetilde{\omega}= \omega_s \circ\big(\big(R^{\dagger}_+\big)^{-1}\otimes\big(R^{\dagger}_+\big)^{-1}\big)\in \mathcal{D}'(M\times M)
$$
is smooth as well. Moreover, fixed $s\in \mathbb R^d$, for any $H \in \mathrm{Had}(M,h)$ the distribution
$$
    \mathcal{D}'(M) \ni f\mapsto \alpha_H\Big(\phi^k_{(M,h,\widetilde H)}\Big)(f)(\varphi) \in \mathbb R
$$
has integral kernel $\alpha_H\Big(\phi^k_{(M,h,\widetilde H)}\Big)(\varphi)$ in $C^{\infty}(M)\subset \mathcal{D}'(M)$. Using the smoothness properties just established and the fact that $\widetilde H - H \in C^{\infty}(M\times M)$, 
we get that 
\begin{equation}\label{eq_Wick_test_smooth}
	s\mapsto \alpha_H\Big(\alpha^{-1}_{\widetilde H}\big(\phi^k_{(M,h)}\big)\Big)(\varphi)= e^{\widetilde{\Gamma}_{H-\widetilde H}}\big(\phi^k_{(M,h)}\big)\big|_{\varphi} \in C^{\infty}(M)
\end{equation}
is smooth. We have therefore shown that the family of mappings $\Gamma^{\infty}(M\leftarrow HM)\ni h\to e^{\Gamma_{H-\widetilde H}}\big(\phi^k_{(M,h)}\big)\big|_{\varphi} \in C^{\infty}(M)$ maps smooth curves $s\mapsto h_s$ in $C^{\infty}\big(\mathbb R,\Gamma^{\infty}(M\leftarrow HM)\big)$\footnote{Notice that, by \cite[Remark 42.2]{convenient} the smooth curves $:\mathbb R \to \Gamma^{\infty}(M\leftarrow HM)$ are the same whether the latter space is endowed with the convenient smooth structure of \cite[Theorem 42.1]{convenient} or with the Bastiani smooth of \Cref{thm_1_mfd_mappings}. Therefore, using an argument similar to that employed in the proof of Lemma \ref{lemma_A_locality&bastiani_implies_w-reg}, we can show that all regular variation are smooth curves $:\mathbb R \to \Gamma^{\infty}(M\leftarrow HM)$.} into smooth curves $C^{\infty}\big(\mathbb R,C^{\infty}(M)\big)$ of the form \eqref{eq_Wick_test_smooth}. By Definition \ref{def_A_convenient_smooth_map} this is conveniently smooth and using Proposition \ref{prop_2_equivalence_conv-smooth_w-regularity} we conclude.
\end{proof}

\begin{remark*}
A key difference with the Euclidean case treated in \cite{drago} is that, in the Lorentzian case, the singularity structure of both $\mathcal{G}_s^{\pm}$ and the Hadamard parametrix $H_s$ does depend on $s$, in the sense that it has a singularity component $\{(x,x;\xi,-\xi)\}$ along the diagonal which is independent from the metric variation plus a part along bicharacteristics of lightlike geodesics which depends on the metric variation; in the Euclidean case, only the former is present, therefore one can conclude the above proof by the smooth dependence on the parameter $s$ and the fact that the singularity structure is independent from $s$ itself without needing cartesian closedness. In the Lorenzian case, this is not possible; for example, consider the family of distributions $s\cdot\delta_0 $ with $s\in \mathbb R$, $\delta_0\in \mathcal{D}'(\mathbb R)$ the Dirac's distribution, then this is a family of distributions which is smooth in $s$, but the wave front set is
$$
    \mathrm{WF}=\{(s,0;0,\xi)\in \dot T^*\mathbb R^2, s\neq 0 \}\cup \{(0,0,\zeta,\xi)\in \dot T^*\mathbb R^2 \}.
$$
The reason is that when $s=0$ the above distribution in $x$ is the zero distribution which is of course smooth, however, the moment $s$ changes, new $\delta$-like singularities appear at $x=0$. Therefore in the Lorentzian case, where the wave front set depends on $s$, it is not automatic to conclude that the integral kernel of \eqref{eq_Wick_test_smooth} is jointly smooth in $(s,x)$ and some extra argument is needed.
\end{remark*}

\chapter{ Functional formalism in quantum field theory: Time ordered products}\label{chapter_TO}
\thispagestyle{plain}


In this chapter we continue our discussion of quantum free field theories by studying their \textit{time ordered products}. We can see it as a second product $\cdot_T$ on the algebra $\mathfrak A_{\mu loc}(M,h)\subset \mathfrak A_{\mu c}(M,h)$, this enable us to define the $S$-matrix which is crucial in perturbation theories to treat interactions. In particular, once a time ordered product has been obtained we can set 
$$
    S(V)=\exp_{\cdot_T}(V)= 1_{\mathfrak A_{\mu c}(M,h)}+\sum_{j\geq 1} \frac{1}{j!}\bigg(\frac{i}{\hbar}\bigg)^j \underbrace{{V\cdot_T \ldots \cdot_T V}}_{\hbox{$j$ times}}.
$$
In general, the series defining $S(V)$ is not convergent, however, when $V\in \mathfrak A_{reg}(M,h)$, we shall see that this formula still be interpreted as a formal power series in $\hbar$. The above formula is however not physically interesting unless $V\in \mathfrak A_{\mu loc}(M,h)$, then the latter can be thought as the interaction term in the Lagrangian, \textit{i.e.} 
$$
    \mathcal{L}=\underbrace{{\mathcal{L}_0}}_{\hbox{free part}}\ +\ \underbrace{{\alpha_H(V)}}_{\hbox{interacting part}}.
$$
This however causes an important issue: time ordered products as defined for $V\in \mathfrak A_{reg}(M,h)$ suffer from the so-called \textit{UV-divergences}, those are \textit{singularities} resulting from the impossibility of multiplying certain distributions supported in the diagonal. Notice that in principle there could be other divergences as well: namely \textit{IR-divergences} which are related to the possibility of the interaction being arbitrarily extended throughout spacetime; however, in our setting, functionals have compact support and this is therefore avoided. We remark however that when the $UV$-problem is solved, one can study more general interactions via the \textit{adiabatic limit} which consists in applying a cutoff function $f$ to the non-compact interactions and then study the behaviour of the $S$-matrix (or equivalently the time ordered products) as $f\to 1$.\\

As mentioned in the Introduction, we shall use a Epstein-Glaser renormalization scheme which gives us the possibility of constructing time ordered products at each order in $\hbar$ consistently with a series of physical constraints (see Definition \ref{def_3_TO_products}) on the renormalization scheme itself. We shall then use the local Wick expansion Lemma \ref{lemma_3_Wick_expansion} to characterize the renormalization of time ordered products into the problem of extending certain distributions subject to other requirements. The argument for extending those distributions essentially follows that of \cite{hollands2002existence}, with the main difference that instead of the analyticity requirement (see \cite[$\S$ 2, Conditions T4-T6]{hollands2002existence}), we use a generalized off-shell version of the parameterized microlocal condition introduced in \cite[Definition 3.5]{khavkine2016analytic}. The argument for the extension of the aforementioned distributions (\Cref{thm_3_TO_renormalization_1} and \Cref{thm_3_TO_renormalization_2}) does benefit from the recent results in \cite{D16} which are thoroughly analysed in \Cref{section_dang}. Finally, in \Cref{section_TO_uniqueness}, we will show, in \Cref{thm_2_uniqueness_TO}, that the newly introduced parameterized microlocal condition is consistent with the Main Theorem of renormalization (for reference see  \cite[Theorem 4.1]{BDF09} or \cite[Theorem 3.6.3]{dutsch2019classical}); moreover, in Corollary \ref{coro_2_TO_uniqueness} we obtain that the difference of two time ordered products satisfying the prescription Definition \ref{def_3_TO_products}, has the usual form 
$$
\begin{aligned}
    \phi^{k_1}_{(M,h)}\cdot_T\ldots\cdot_T\phi^{k_p}_{(M,h)}-\phi^{k_1}_{(M,h)}\cdot_{\widetilde T}\ldots\cdot_{\widetilde T}\phi^{k_p}_{(M,h)}= \sum_{J\leq K}\binom{K}{J}C^{K-J}\phi^{J}_{(M,h)},
\end{aligned}
$$
where $J,\ K$ are multiindices, $C_J: (M,h)\mapsto C_J[h] \in C^{\infty}(M)$ are coefficients, analogously to \Cref{thm_2_Moretti_Kavhkine}, such that each $C_J[h]$ is a polynomial constructed out of scalars made by all tensor-like objects of $j^rh$ for some finite order $r$, $\phi^{J}_{(M,h)}=\phi^{j_1}_{(M,h)}\cdot \ldots \cdot \phi^{j_p}_{(M,h)}$ is the classical product of Wick powers.\\

\section{The extension of distributions on manifolds}\label{section_dang}

As mentioned before, in the Section we follow \cite{D16}.\\

Let $M$ be a smooth $n$-dimensional manifold, $I\subset M$ a closed embedded submanifold and $t\in \mathcal{D}'_{\Gamma}(M\backslash I)$ a distribution. We seek an answer to the following problems:
\begin{itemize}
    \item can we construct an extension $\overline{t}\in \mathcal{D}'(M)$?
    \item do we have some form of control on the wave front set $\mathrm{WF}(\overline{t})$?
\end{itemize}

Both questions admit a positive answer with the proper assumptions on $t$. Notice that the extension problem is far from trivial: indeed, by Hahn-Banach one can always devise an \textit{extension} of $t$ as a linear mapping $\mathcal{D}(M)\to \mathbb{R}$, however, there is no information about its continuity. \\

Suppose $I$ is a closed $\iota$-dimensional submanifold, then there are slice charts $\{(U_{\alpha},(x^i,h^j)_{\alpha})\}$ such that points $x\in I\cap U_{\alpha}$ can be characterized by $\{h^j=0\}$. Below, $U\subset M$ will be an open subset of $M$. Denote by $\mathcal{I}(U)=\{ f \in \mathcal{E}(U) : \space f|_{U\cap I}=0 \}$ the ideal of smooth functions vanishing on $I$, we say that a vector field $X\in \mathfrak{X}(M)$ is an \textit{Euler vector field} if 
\begin{equation}\label{eq_3_Euler_vector_field}
    \forall f \in \mathcal{I}(U), \ X(f)-f^2 \in \mathcal{I}^2(U),
\end{equation}
where $\mathcal{I}^2(U)$ is the set $\{f \cdot g \in C^{\infty}(U) : \ f,g \in \mathcal{I}(U)\}$. One can easily show that Euler vector fields can be characterized locally as
$$
    X= h^j\frac{\partial}{\partial h^j} + h^j A^i_j(x,h)\frac{\partial}{\partial x^i} +  h^jh^k A^i_{jk}(x,h)\frac{\partial}{\partial h^i},
$$
with $A,\ B$ smooth coefficients. Ideally, Euler vector field flow transversally out of the region $I$; this is no coincidence since their flow can be used to define the scaling degree of distributions near $I$. For notation's sake we will denote 
\begin{equation}\label{eq_3_logarithmic_flow}
    F^X_{\lambda}=\mathrm{Fl}^X_{\ln{\lambda}},\ \ \lambda\in (0,1]
\end{equation}
the logarithmic flow.

A simple example of an Euler field with $M=\mathbb{R}^{n+\iota}$, $I=\{0\}\times \mathbb{R}^{\iota}$ is $X= h^j\frac{\partial}{\partial h^j}$. Then the associated logarithmic flow satisfies $(F^X_{\lambda})^*f(x,h)=f(x,\lambda h)$ for all $f\in \mathcal{E}(M)$. We therefore see how \eqref{eq_3_logarithmic_flow} is the natural candidate to induce the coordinate scaling in a geometrically consistent way. In the following Proposition we collect some properties about Euler vector fields.

\begin{proposition}\label{prop_3_properties_of_Euler_fields}
    The following assertion are true:
    \begin{enumerate}
        \item[(i)] Suppose $I \hookrightarrow M$, $I' \hookrightarrow M'$ be two closed embedded submanifolds. Let $\psi \in \mathrm{Diff}_{loc}(M,M')$ and consider open subsets $U, \ U'$ such that $U \simeq^{\psi} U'$ and $\psi(U\cap I)=U'\cap I'$, then if $X\in \mathfrak{X}(U)$ is Euler, $T\psi(X)\in \mathfrak X(U') $ is Euler as well.
        \item[(ii)] Let $X_1,\ X_2$ be Euler fields in a neighborhood of $x\in I$. Then there is a one parameter smooth family of diffeomorphism $\Psi_{\lambda}$ defined on a neighborhood of $x$ such that
        $$
            F^{X_2}_{\lambda}=F^{X_1}_{\lambda} \circ \Psi_{\lambda}.
        $$
        \item[(iii)] Let $X_1,\ X_2$ be Euler fields in a neighborhood of $x\in I$. Then there is a diffeomorphism $\psi$ defined in a neighborhood of $x$ such that 
        $$
            X_1=T\psi \circ X_2 \circ \psi^{-1}.
        $$
    \end{enumerate}
\end{proposition}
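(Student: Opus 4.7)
The plan is to treat the three statements in order, with (i) being essentially formal, (ii) the main technical point, and (iii) following from a standard Moser-type isotopy.

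For (i), I would argue directly from the definition. Since $\psi(U\cap I)=U'\cap I'$, the pullback $\psi^{*}:\mathcal{E}(U')\to\mathcal{E}(U)$ restricts to an isomorphism $\mathcal{I}(U')\to\mathcal{I}(U)$ and sends $\mathcal{I}^{2}(U')$ onto $\mathcal{I}^{2}(U)$. Setting $Y=T\psi\circ X\circ\psi^{-1}$, for $f'\in\mathcal{I}(U')$ we have $Yf'=(\psi^{-1})^{*}\bigl(X(\psi^{*}f')\bigr)$. Because $\psi^{*}f'\in\mathcal{I}(U)$, the Euler property of $X$ gives $X(\psi^{*}f')-(\psi^{*}f')^{2}\in\mathcal{I}^{2}(U)$, and pulling back through $(\psi^{-1})^{*}$ yields $Yf'-f'^{2}\in\mathcal{I}^{2}(U')$.

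For (ii), the natural candidate is $\Psi_{\lambda}\doteq(F^{X_{1}}_{\lambda})^{-1}\circ F^{X_{2}}_{\lambda}$, which by construction satisfies the required identity and is a diffeomorphism for each $\lambda\in(0,1]$. The content of the statement is the smooth extension to $\lambda=0$. I would work in slice coordinates $(x^{i},h^{j})$ around a point of $I$, where both Euler fields take the form $X_{a}=h^{j}\partial_{h^{j}}+h^{j}A^{i}_{j,a}\partial_{x^{i}}+h^{j}h^{k}A^{i}_{jk,a}\partial_{h^{i}}$ for $a=1,2$. The leading part $h^{j}\partial_{h^{j}}$ has logarithmic flow $(x,h)\mapsto(x,\lambda h)$, so I would first verify that $F^{X_{a}}_{\lambda}(x,h)=\bigl(x+O(h),\lambda h+O(\lambda h\cdot h)\bigr)$, with the remainders smooth jointly in $\lambda\in[0,1]$ by standard ODE dependence combined with the factorization of the right-hand side as $h$ times a smooth function. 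Inverting $F^{X_{1}}_{\lambda}$ (which contracts by $\lambda$ on $h$) and composing with $F^{X_{2}}_{\lambda}$ amounts to solving for $(x',h')$ with $\lambda h'=$ (the $h$-component of $F^{X_{2}}_{\lambda}$) up to $O(\lambda^{2}h^{2})$ corrections; dividing by $\lambda$ the corrections stay $O(\lambda h^{2})$, smooth at $\lambda=0$. A clean way to package this is to differentiate the defining identity in $\lambda$, obtaining
\begin{equation*}
TF^{X_{1}}_{\lambda}\circ\dot{\Psi}_{\lambda}=\tfrac{1}{\lambda}(X_{2}-X_{1})\circ F^{X_{1}}_{\lambda}\circ\Psi_{\lambda},
\end{equation*}
and then observing that $X_{2}-X_{1}$ lies in the submodule of vector fields whose components vanish to order two on $I$ (this is exactly what the Euler condition forces at second order), so pullback by $F^{X_{1}}_{\lambda}$ gains a factor $\lambda^{2}$ in the $h$-directions, absorbing the $1/\lambda$ and leaving a smooth ODE for $\Psi_{\lambda}$ down to $\lambda=0$.

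For (iii), I would use a Moser-type argument. First I would check that the convex combinations $X_{t}\doteq(1-t)X_{1}+tX_{2}$, $t\in[0,1]$, are again Euler fields in a common neighborhood of $x$: the relation $X(f)-f^{2}\in\mathcal{I}^{2}$ is affine in $X$, so the cross terms $(1-t)f^{2}+tf^{2}-f^{2}=0$ cancel. Then I would seek a time-dependent vector field $Y_{t}$, vanishing on $I$, whose flow $\phi_{t}$ satisfies $\phi_{t}^{*}X_{t}=X_{1}$; differentiating in $t$ reduces this to the cohomological equation $[Y_{t},X_{t}]=X_{1}-X_{2}$. The right-hand side vanishes to second order on $I$ (same observation as in (ii)), and in slice coordinates one can invert $\mathrm{ad}_{X_{t}}$ by the explicit formula $Y_{t}=-\int_{0}^{1}\mu^{-1}(F^{X_{t}}_{\mu})_{*}(X_{1}-X_{2})\,d\mu$, the integral converging at $\mu=0$ precisely because of the second-order vanishing. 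Setting $\psi\doteq\phi_{1}$ then gives the required conjugation $X_{1}=T\psi\circ X_{2}\circ\psi^{-1}$, and $\psi$ is a local diffeomorphism since $Y_{t}|_{I}=0$.

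The main obstacle is part (ii): the quotient $\tfrac{1}{\lambda}(X_{2}-X_{1})\circ F^{X_{1}}_{\lambda}$ is a priori singular at $\lambda=0$, and one must use the precise second-order vanishing encoded in the Euler condition, not merely $X_{1}|_{I}=X_{2}|_{I}=0$, to conclude smoothness at the boundary $\lambda=0$. Once this is in place the rest reduces to verifying standard identities for the logarithmic flow.
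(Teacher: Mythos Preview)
The paper does not supply its own proof of this proposition; it is stated without proof and attributed implicitly to \cite{D16}, which the entire section follows. Your proposal therefore cannot be compared against the paper's argument, but it stands on its own as essentially correct, with one imprecision worth flagging.

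In part (ii) you assert that $X_{2}-X_{1}$ has \emph{all} components vanishing to second order on $I$. This is not what the local normal form gives: from
\[
X_{a}=h^{j}\partial_{h^{j}}+h^{j}A^{i}_{j,a}\partial_{x^{i}}+h^{j}h^{k}A^{i}_{jk,a}\partial_{h^{i}}
\]
one reads off that the $\partial_{x^{i}}$-components of $X_{2}-X_{1}$ vanish only to first order, while the $\partial_{h^{i}}$-components vanish to second order. Your ODE for $\dot{\Psi}_{\lambda}$ still yields a smooth right-hand side at $\lambda=0$, but the bookkeeping must separate the two blocks: after composing with $F^{X_{1}}_{\lambda}$ the $x$-components of $\tfrac{1}{\lambda}(X_{2}-X_{1})$ are $O(1)$ and the $h$-components are $O(\lambda)$; then $(TF^{X_{1}}_{\lambda})^{-1}$ is $O(1)$ on $x$-directions and $O(\lambda^{-1})$ on $h$-directions, so both blocks of $\dot{\Psi}_{\lambda}$ end up $O(1)$ and smooth down to $\lambda=0$. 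Once you phrase it this way the argument closes. Parts (i) and (iii) are fine as written; the Moser deformation in (iii) is the standard route, and the convergence of your integral formula for $Y_{t}$ at $\mu=0$ again relies on exactly this mixed first/second-order vanishing of $X_{1}-X_{2}$ rather than uniform second-order vanishing.
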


This result is very important since it states that the property of being Euler vector fields is invariant under the action of diffeomorphism and that given two Euler vector fields their coordinate scaling (and thus the vector field themselves) are diffeomorphic to each other. An open subset $U\subset M$ is said to be $X$-\textit{stable} if it is stable under the logarithmic flow of the Euler vector field $X$, that is $(F^X_{\lambda})^*U \subseteq U$.

\begin{definition}\label{def_3_w_homogeneous_distributions_1}
    Let $U\subset M$ be an $X$-stable open subset, denote by $E_s^X$ the set of distributions $t\in \mathcal{D}'(U)$ such that
    $$
        \forall f \in \mathcal{D}(U), \ \exists \ C>0 , \sup_{\lambda\in(0,1]}\big| \big\langle\lambda^{-s}(F^X_{\lambda})^*t,f \big\rangle \big|\leq C.
    $$
    Finally, given $x\in I$ denote by $E_{s,x}^X$, the set of distributions $t\in \mathcal{D}'(U)$ such that there is a neighborhood $U\ni x$ for which $t\in E_s^X$.
\end{definition}
If we represent the distribution $t$ as the integral kernel $t(x,h)$ in the slice coordinates $(x,h)$ then the action of the coordinate scaling $F^X_{\lambda}$ is given by $(F^X_{\lambda})^*t(x,h)\equiv t_{\lambda}(x,h)=t(x,\lambda h)$. The above condition states that elements of $E_{s,x}^X$ are those distributions $t$ for which $\lambda^{-s}t_{\lambda}$ is bounded in $\mathcal{D}'(V)$ for a suitable neighborhood $V$ of $x$. We can see that the value $s$ is the opposite of the Steinmann scaling degree defined in \cite{steinmann1971perturbation}.

Using $(ii)$ of Proposition \ref{prop_3_properties_of_Euler_fields}, we can show that the space $E^X_{s,x} $ does not depend on the Euler field chosen to perform the scaling. To wit, given Euler fields $X_1, \ X_2$, consider 
$$
    \lambda^{-s} (F^{X_2}_{\lambda})^*t =  \lambda^{-s} (F^{X_1}_{\lambda} \circ \Psi_{\lambda})^*t =  \lambda^{-s} (\Psi_{\lambda})^*(F^{X_1}_{\lambda} )^*t ,
$$
since $\Psi_{\lambda}$ is a jointly smooth diffeomorphism with inverse $\Psi_{-\lambda}$ we get that the right hand is bounded whenever paired with $f \in \mathcal{D}(U)$ if and only if the left hand side is. From now on, we shall omit the Euler vector field from the spaces of distributions in Definition \ref{def_3_w_homogeneous_distributions_1}. Finally, we say that $t \in E_{x,I}(U)$ if for each $x\in {U} \cap I$ there exists an Euler field $X$ with $t\in E^X_{s,x}$. 

\begin{proposition}\label{prop_3_w_hom_diff_invariance}
    The space $E_{s,I}(U)$ satisfies the following properties:
    \begin{enumerate}
        \item[(i)] suppose that $V_i$ is an open cover for $U\subset M$ and $t\in \mathcal{D}'(\cup_iV_i)$, then $t\in E_{s,I}(V_i)$ for all $i$ implies $t\in E_{s,I}(U)$;
        \item[(ii)] suppose that $I \hookrightarrow M$, $I' \hookrightarrow M'$ are two closed embedded submanifolds, $\psi \in\mathrm{Diff}_{loc}(M,M') $ such that $ \psi: U \to U'$ is a diffeomorphism for the open subsets $U, \ U'$ having $\psi(U\cap I)=U'\cap I'$. Then $\psi^*E_{s,I'}(U')=E_{s,I}(U)$. 
    \end{enumerate}
\end{proposition}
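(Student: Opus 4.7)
\textbf{Plan for the proof of Proposition \ref{prop_3_w_hom_diff_invariance}.}

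For part $(i)$, the plan is to observe that membership in $E_{s,I}(U)$ is defined \emph{pointwise}: $t \in E_{s,I}(U)$ iff for every $x \in U \cap I$ there exists an Euler vector field $X$ defined on a neighborhood of $x$ and a (possibly smaller) neighborhood $V \ni x$ on which the boundedness condition of Definition~\ref{def_3_w_homogeneous_distributions_1} holds. Given an open cover $\{V_i\}$ of $U$ with $t \in E_{s,I}(V_i)$ for each $i$, any $x \in U \cap I$ lies in some $V_i$, and the witnessing Euler field $X$ and neighborhood $V \subset V_i \subset U$ provided by $t \in E_{s,I}(V_i)$ witnesses $t \in E_{s,I}(U)$ as well. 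So this part is essentially automatic from the definition, modulo noting that the neighborhood can be taken inside $V_i \subset U$ without loss.

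For part $(ii)$, the main idea is to use the functoriality of the Euler-field construction under diffeomorphisms, namely item $(i)$ of Proposition~\ref{prop_3_properties_of_Euler_fields}, and naturality of the flow. I will prove one inclusion $\psi^* E_{s,I'}(U') \subseteq E_{s,I}(U)$, the other one following from the same argument applied to $\psi^{-1}$. So let $t' \in E_{s,I'}(U')$ and pick $x \in U \cap I$. Then $\psi(x) \in U' \cap I'$, so there is an Euler field $X'$ on some $V' \ni \psi(x)$ and a neighborhood $W' \subseteq V'$ of $\psi(x)$ on which $t' \in E^{X'}_{s,\psi(x)}$. Setting $X \doteq (T\psi)^{-1} \circ X' \circ \psi$ on $V \doteq \psi^{-1}(V')$, Proposition~\ref{prop_3_properties_of_Euler_fields}$(i)$ guarantees $X$ is an Euler field for $I$ near $x$, and from the ODE defining the flow one gets the intertwining $\psi \circ F^{X}_\lambda = F^{X'}_\lambda \circ \psi$, so $(F^X_\lambda)^* \psi^* = \psi^* (F^{X'}_\lambda)^*$ as operators on distributions.

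The calculation then reduces to chasing this identity through the pairing: for $f \in \mathcal{D}(\psi^{-1}(W'))$,
\begin{equation*}
\bigl\langle \lambda^{-s} (F^{X}_\lambda)^* \psi^* t', f \bigr\rangle
= \bigl\langle \lambda^{-s} (F^{X'}_\lambda)^* t', \psi_* f \bigr\rangle,
\end{equation*}
where $\psi_* f$ is the usual pushforward of a compactly supported density. Since $\psi_* f \in \mathcal{D}(W')$, the right-hand side is bounded uniformly in $\lambda \in (0,1]$ by the hypothesis $t' \in E^{X'}_{s,\psi(x)}$, which yields $\psi^* t' \in E^{X}_{s,x}$ and hence $\psi^* t' \in E_{s,I}(U)$.

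The only mild technical point, and what I expect to require a bit of care, is handling the Jacobian determinant hidden in the distributional pullback $\psi^*$ and ensuring the intertwining $\psi \circ F^X_\lambda = F^{X'}_\lambda \circ \psi$ is valid on a common domain on which both flows are defined for $\lambda \in (0,1]$; this is where we may need to shrink the neighborhoods $V$ and $W'$ to subsets that are simultaneously $X$-stable and $X'$-stable (possible by flowing a small slice transverse to $I$). Once these domains are fixed, the remaining estimates are purely algebraic manipulation of the pairing.
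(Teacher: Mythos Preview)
Your proposal is correct and follows essentially the same approach as the paper: part $(i)$ is immediate from the pointwise nature of the definition, and part $(ii)$ uses Proposition~\ref{prop_3_properties_of_Euler_fields}$(i)$ together with the naturality of flows under pushforward, $\psi \circ F^X_\lambda = F^{X'}_\lambda \circ \psi$, to transport the boundedness condition. The paper's argument is more compressed and glosses over the Jacobian and domain-stability issues you flag, but the underlying mechanism is identical.
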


\begin{proof}
    The first assertion follows easily for if $x\in {U}\cap I$, then $x\in V_i$ for some $i$ (since $\{V_i\}$ is an open cover of $U$, $\overline{U}\subset \cup_iV_i$) and $t\in E_{s,I}(V_i)$. The latter implies that there is a neighborhood $x\in W_i \subset V_i\cap U$ such that $\lambda^{-s}t_{\lambda}$ is bounded in $\mathcal{D}'(W_i)$, but then $t\in E_{s,x}(U)$ as well. For the second, simply note that by (i) in Proposition \ref{prop_3_w_hom_diff_invariance} we can localize the problem and show the corresponding claim for points $I'\cap U' \ni x'= \psi(x), \ x \in U\cap I$. Suppose there is an Euler field $X$ for which $\lambda^{-s} (F^X_{\lambda})^*t$ is bounded in $\mathcal{D}'(V)$, then
    $$
    \begin{aligned}
        & \lambda^{-s} (F^X_{\lambda})^*t =  \lambda^{-s} (F^X_{\lambda})^* \psi_*\psi^*t\  \mathrm{is} \ \mathrm{bounded} \ \mathrm{in}  \ \mathcal{D}'(V) \\
        \Leftrightarrow & \ \lambda^{-s} \psi^*(F^X_{\lambda})^* \psi_*\psi^*t \mathrm{is} \ \mathrm{bounded } \ \mathrm{in} \ \mathcal{D}'(\psi^{-1}(V))\\
        \Leftrightarrow &\  \lambda^{-s} (F^{T\psi(X)}_{\lambda})\psi^*t \mathrm{is} \ \mathrm{bounded } \ \mathrm{in} \ \mathcal{D}'(\psi^{-1}(V)).
    \end{aligned}
    $$
    By $(iii)$ in Proposition \ref{prop_3_properties_of_Euler_fields}, $T\psi(X)$ is still Euler and we conclude.
\end{proof}

We can now start addressing the problem of the extension of distributions, since our goal is to control the wave front set of the extension, it seems natural to strengthen the requirement of \ref{def_3_w_homogeneous_distributions_1}, in particular to control boundedness of $\lambda^{-s}t_{\lambda}$ in the H\"ormander topology (see \eqref{eq_1_hormander_seminorm}) and hope that the extra requirement allow us to control the wave front of the extension. Let $U\subset M$ be open, $\Gamma$ be a cone in $\dot{T}^*U$\footnote{For here on we shall denote by $\dot{T}^*U$ the cotangent space of $U$ minus the graph of the null section.}.

\begin{definition}\label{def_3_w_homogeneous_distributions_2}
    Let $U,\ \Gamma\subset \dot{T}^*U$ as above, we say that $t\in \mathcal{D}_{\Gamma}'(U\backslash I)$ is \textit{weakly homogeneous of degree} $s$ if for all $f\in \mathcal{D}(U\backslash I),\ \chi\in \mathcal{D}(U\backslash I), \ q\in \mathbb{N}, \ \Upsilon\subset \mathbb{R}^n$ having $\Gamma\cap \mathrm{supp}(\chi)\times \Upsilon=\emptyset $, there is an Euler vector field $X$ and positive constants $c, C$ with
    \begin{equation}
        \begin{aligned}
            &\sup_{\lambda \in (0,1]} \left\vert \lambda^{-s} t_{\lambda}\right\vert\leq c, \ \sup_{\lambda \in (0,1]} \left\Vert \lambda^{-s} t_{\lambda}\right\Vert_{\chi,q,\Upsilon}\leq C
        \end{aligned}
    \end{equation}
    We shall denote $E_{s,I,\Gamma}(U\backslash I)$ the set of such distributions.
\end{definition}

Note that we have a projection $E_{s,I,\Gamma}(U\backslash I)\to E_{s,I}(U\backslash I)$, moreover as $\Gamma\to \dot{T}^*U$, then $E_{s,I,\Gamma}(U\backslash I)\to E_{s,I}(U\backslash I)$.\\

Suppose that $M=\mathbb{R}^{n}\equiv \mathbb{R}^{l+\iota}$, $I= \{0\}\times \mathbb{R}^{\iota}$, consider a bump function $\chi\in C^{\infty}(\mathbb{R}^{l+\iota}) $ which is $1$ in $[-1/2,1/2]^l\times I$ and identically zero outside $[-1,1]^l\times I$. Set 
$$
    \Psi_{\Lambda}(x,h) = 1- \chi(x,\Lambda h).
$$
Then, for all $t\in \mathcal{D}(U\backslash I)$, $\lim_{\Lambda\to \infty} \Psi_{\Lambda}\cdot t=t$ that is, for all $f\in \mathcal{D}(U\backslash I)$, $\lim_{\Lambda\to \infty} \langle\Psi_{\Lambda}\cdot t,f\rangle=\langle t,f\rangle$. Since it will be needed in the upcoming results we give a short proof of the former claim. Notice that we can write
$$
\begin{aligned}
    1- \chi(x,\Lambda h)&= 1-\chi(x, h) +\chi(x, h)-\chi(x,\Lambda h) =  \Psi_{1}(x,h)+ \int_{\Lambda}^1\frac{d}{d\lambda} \chi(x,\lambda h)d\lambda\\
    &= \Psi_{1}(x,h)+ \int_{\Lambda}^1h^j\frac{\partial}{\partial h^j}\chi(x,\lambda h)d\lambda= \Psi_{1}(x,h)- \int_{\Lambda^{-1}}^1(X\chi)(x,\lambda^{-1} h)\frac{d\lambda}{\lambda}
\end{aligned}
$$
where $X=h^j\frac{\partial}{\partial h^j}$ is an Euler vector field. Calling $X\chi=\psi$, $\epsilon=\Lambda^{-1}$, $\epsilon\to 0$ as $\Lambda\to \infty$; therefore 
$$
\begin{aligned}
    \lim_{\epsilon\to 0} \langle\big(1- \chi_{\Lambda}\big)t, f \rangle &= \lim_{\epsilon\to 0} \Big\langle \int_{\epsilon}^1 \frac{d\lambda}{\lambda} \psi_{\lambda^{-1}}t,f\Big\rangle+\big\langle\big(1- \chi\big)t, f \big\rangle \\
    &= \lim_{\epsilon\to 0} \int dxdh \int_{\epsilon}^1 \frac{d\lambda}{\lambda} \psi(x,\lambda^{-1}h)t(x,h)f(x,h)+\int dxdh \big(1- \chi(x, h)\big)t(x,h) f(x,h) \\
    &= \langle t, f\rangle;
\end{aligned}
$$
combining the above expression with the fact that $\chi_{\Lambda}\to 1$  as $\Lambda\to \infty$ in the Fréchet space $C^{\infty}(\mathbb{R}^{n})$, we get the initial claim.\\

We are now ready to state the first extension result. We shall see that under suitable conditions we can extend $t$ by $\lim_{\epsilon\to 0}(1-\chi_{\epsilon^{-1}})t$ and control its wave front set.

\begin{theorem}\label{thm_3_w_homogeneous_extension_1}
    Let $t\in E_{s,I,\Gamma}(U\backslash I)$ and suppose that $s+\iota>0$, $\Gamma\subset \dot{T}^*U$ is stable under the scaling induced by Euler fields of $I$. Then $\overline{t}=\lim_{\epsilon\to  0}(1-\chi_{\epsilon^{-1}})t\in \mathcal{D}'(U)$ and 
    $$
        \mathrm{WF}(\overline{t})=\mathrm{WF}({t})\cup N^{*}I\cup \Xi,
    $$
    where $N^*I$ is the conormal bundle to $I$ and 
    $$
        \Xi=\big\{(x,0,\xi,\eta)\in \dot{T}^*U : \ \exists (x,h,\xi,0) \in \Gamma\cap T^*U|_{\mathrm{supp}(\psi)} \big\}.
    $$
    Moreover, we have $\overline{t}\in E_{s,I,\Gamma\cup N^*I \cup \Xi}$.
\end{theorem}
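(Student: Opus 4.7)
By part (i) of Proposition \ref{prop_3_w_hom_diff_invariance} the statement is local on $U\cap I$, and by part (ii) combined with the diffeomorphism invariance of the wave front set I may work in the model situation $M=\mathbb{R}^{l+\iota}$, $I=\{0\}\times\mathbb{R}^\iota$, with the canonical Euler field $X=h^j\partial_{h^j}$ and logarithmic flow $F^X_\lambda(x,h)=(x,\lambda h)$. The basic tool is the identity
\begin{equation*}
1-\chi(x,\Lambda h)=\bigl(1-\chi(x,h)\bigr)-\int_{\Lambda^{-1}}^{1}(X\chi)(x,\lambda^{-1}h)\,\frac{d\lambda}{\lambda},
\end{equation*}
already derived in the excerpt; setting $\psi=X\chi\in\mathcal D(U\setminus I)$ and $\epsilon=\Lambda^{-1}$, both assertions of the theorem reduce to the analysis of the integral $\int_{\epsilon}^{1}\psi_{\lambda^{-1}}t\,\frac{d\lambda}{\lambda}$.

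\textbf{Convergence of the extension.} For a test function $f\in\mathcal D(U)$ the change of variable $h=\lambda u$ yields
\begin{equation*}
\langle\psi_{\lambda^{-1}}t,f\rangle=\lambda^{\iota}\langle t_\lambda,\,\psi\cdot f_\lambda\rangle=\lambda^{s+\iota}\bigl\langle\lambda^{-s}t_\lambda,\,\psi\cdot f_\lambda\bigr\rangle,
\end{equation*}
where $f_\lambda(x,u)=f(x,\lambda u)$. Since $\psi$ is supported in an annulus bounded away from $I$, every $C^k$ seminorm of $\psi f_\lambda$ is uniformly bounded in $\lambda\in(0,1]$, and Definition \ref{def_3_w_homogeneous_distributions_2} then provides a uniform bound on $\langle\lambda^{-s}t_\lambda,\psi f_\lambda\rangle$. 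Hence $|\langle\psi_{\lambda^{-1}}t,f\rangle|\leq C\lambda^{s+\iota}$, and the hypothesis $s+\iota>0$ makes $\int_{0}^{1}\lambda^{s+\iota-1}d\lambda$ convergent; this proves that $(1-\chi_\Lambda)t$ converges in $\mathcal D'(U)$ to a well-defined extension $\bar t$.

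\textbf{Wave front set analysis (the main obstacle).} This is the delicate step. Off $I$ one has $\bar t|_{U\setminus I}=t$, giving the inclusion on $\dot T^{*}(U\setminus I)$ immediately. Fix now $(x_0,0,\xi_0,\eta_0)\notin\mathrm{WF}(t)\cup N^{*}I\cup\Xi$; the plan is to exhibit a cutoff $\rho\in\mathcal D(U)$ near $(x_0,0)$ and a conic neighbourhood of $(\xi_0,\eta_0)$ in which $\widehat{\rho\bar t}$ decays rapidly. Using the same change of variable, $\rho\psi_{\lambda^{-1}}t$ becomes $\lambda^{\iota}\rho_{\lambda^{-1}}\psi\cdot t_\lambda$, whose H\"ormander seminorms the scaling invariance of $\Gamma$ and Definition \ref{def_3_w_homogeneous_distributions_2} control uniformly in $\lambda$; multiplying by $\lambda^{-s}$ and integrating against the weight $\lambda^{s+\iota-1}$ preserves rapid decay in every direction. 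The only directions that genuinely fail this test are of two kinds: those in $N^{*}I$, produced by the transversal jump of the cutoff itself and detected because integration of $\rho_{\lambda^{-1}}\psi$ across scales generates conormal oscillation, and those in $\Xi$, which arise when an existing singularity of $t$ at $(x,h,\xi,0)$ with $h$ close to $I$ contributes, via the scaling $h\mapsto\lambda h$, to directions with non-vanishing $\eta$-component as $\lambda\to 0$. The main technical work is to match the bounds from Definition \ref{def_3_w_homogeneous_distributions_2} with the oscillatory integral estimates for $\widehat{\rho_{\lambda^{-1}}\psi\cdot t_\lambda}$ uniformly on $\lambda\in(0,1]$ outside these exceptional directions.

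\textbf{Weak homogeneity of the extension.} Finally, to see $\bar t\in E_{s,I,\Gamma\cup N^{*}I\cup\Xi}$, I would apply $(F^X_\mu)^{*}$ to the defining formula and split $\bar t=\Psi_1 t-\int_{0}^{1}\psi_{\lambda^{-1}}t\,\frac{d\lambda}{\lambda}$: the first term inherits the bound on $\mu^{-s}(F^X_\mu)^{*}t$ directly from the hypothesis, while a rescaling in $\lambda$ inside the integral together with the absolute estimate $\int_{0}^{1}\lambda^{s+\iota-1}d\lambda<\infty$ gives a uniform bound on the second. The same argument carried out in the H\"ormander topology, with the enlarged cone $\Gamma\cup N^{*}I\cup\Xi$ coming from step three, yields the required seminorm bounds of Definition \ref{def_3_w_homogeneous_distributions_2} and closes the proof.
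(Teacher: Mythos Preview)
Your convergence argument matches the paper's Step 1. The substantive differences and issues are in the last two parts.

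For the wave front set, you propose a direct seminorm estimate, whereas the paper takes a different route: it rewrites $\langle(\chi-\chi_{\epsilon^{-1}})t,f\rangle$ as a composition of kernels in the enlarged space $\mathbb{R}_\lambda\times U\times U$, namely $A_\epsilon(\lambda,x,h)\,\delta(x-x',\lambda h-h')$ with $A_\epsilon=1_{[\epsilon,1]}\lambda^{s+\iota-1}\psi\,\lambda^{-s}t_\lambda$, estimates $\mathrm{WF}(A_\epsilon)$ via \cite[Lemma 10.3]{D16}, and then applies H\"ormander's composition theorem \cite[Theorem 8.2.14]{hormanderI} to read off the wave front of the composition over the fibre $h'=0$. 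This produces $N^*I\cup\Xi$ by a purely algebraic computation on cones, avoiding the oscillatory-integral bookkeeping you allude to. Your direct approach could in principle be carried through, but as written it is only a heuristic: you have not explained which H\"ormander seminorm of $\lambda^{-s}t_\lambda$ (relative to $\Gamma$) controls the Fourier decay of $\rho\bar t$ in a given direction $(\xi_0,\eta_0)$, nor why the exceptional directions are exactly $N^*I$ and $\Xi$ rather than something larger.

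Your weak-homogeneity step contains a genuine gap. After applying $(F^X_\mu)^*$ to the integral term and performing the rescaling you indicate, one obtains (as the paper computes)
\[
\int_0^{1/\mu}\frac{d\lambda}{\lambda}\,\lambda^{s+\iota}\,\bigl\langle(\mu\lambda)^{-s}\psi\,t_{\lambda\mu},\,f_\lambda\bigr\rangle .
\]
The pairing is uniformly bounded by hypothesis, but the weight integral $\int_0^{1/\mu}\lambda^{s+\iota-1}d\lambda=(s+\iota)^{-1}\mu^{-(s+\iota)}$ diverges as $\mu\to 0$; your appeal to $\int_0^1\lambda^{s+\iota-1}d\lambda<\infty$ refers to the wrong domain of integration. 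The paper closes this gap by a support argument: since $\mathrm{supp}(\psi)\subset\{a\le|h|\le b\}$ with $a>0$ and $\mathrm{supp}(f_\lambda)\subset\{|h|\le R/\lambda\}$, the integrand vanishes for $\lambda>R/a$, so the effective upper limit is $R/a$ independently of $\mu$, and the bound follows. The same support observation, combined with the kernel wave-front computation of Step 2, is what handles the H\"ormander-seminorm bounds.
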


\begin{proof}
We divide the proof in three steps: in the first we show that $\lim_{\epsilon\to  0}(1-\chi_{\epsilon^{-1}})t$ converges in $\mathcal{D}'(U)$; in the second, we show that the family $\{(1-\chi_{\epsilon^{-1}})t\}_{\epsilon}$ is bounded in $\mathcal{D}'_{\mathrm{WF}({t})\cup N^{*}I\cup \Xi}(U)$, then this entails the convergence of $(1-\chi_{\epsilon^{-1}})t$ in $\mathcal{D}'_{\mathrm{WF}({t})\cup N^{*}I\cup \Xi}(U)$ as $\epsilon\to 0$; finally, we show that the extension is in $ E_{s,I,\Gamma\cup N^*I \cup \Xi}$.
    \textbf{Step 1.} Since 
    $$
    \begin{aligned}
        \langle\big(1- \chi_{\epsilon^{-1}}\big)t, f \rangle &=  \int_{\epsilon}^1 \frac{d\lambda}{\lambda}\langle  \psi_{\lambda^{-1}}t,f\rangle+\big\langle\big(1- \chi\big)t, f \big\rangle\\
        &= \int dxdh \int_{\epsilon}^1 \frac{d\lambda}{\lambda} \psi(x,\lambda^{-1}h)t(x,h)f(x,h)+\big\langle\big(1- \chi\big)t, f \big\rangle\\
        &= \int dxdh \int_{\epsilon}^1 \frac{d\lambda}{\lambda} \lambda^{s+\iota} \psi(x,h)\lambda^{-s} t(x,\lambda h)f(x,\lambda h)+\big\langle\big(1- \chi\big)t, f \big\rangle
    \end{aligned}
    $$
    we see that the only term requiring attention is the first one for the other is a constant for any $f\in \mathcal{D}(U)$. Notice that since $s+\iota>0$ the term $\lambda^{s+\iota}$ is integrable, moreover, by assumption $\lambda^{-s} t(x,\lambda h)f(x,\lambda h)$ is bounded. Therefore the limit exists. To show that it is a distribution let $f\in \mathcal{D}(U)$ be supported inside the compact subset $K$, then
    $$
    \begin{aligned}
        & \Big|\int dxdh \int_{\epsilon}^1 \frac{d\lambda}{\lambda} \lambda^{s+\iota} \psi(x,h)\lambda^{-s} t(x,\lambda h)f(x,\lambda h) \Big|\\
        & \leq \Big|\int dxdh \int_{\epsilon}^1 \frac{d\lambda}{\lambda} \lambda^{s+\iota}\sup_{\lambda\in (0,1]} \psi(x,h)\lambda^{-s} t(x,\lambda h)f(x,\lambda h) \Big|,
    \end{aligned}
    $$
    however, $\langle \lambda^{-s} t_{\lambda} \psi f_{\lambda} \rangle\leq C_{K,\psi} p_{K,j}(f)$ by assumption, thus
    $$
        |\langle (\chi-\chi_{\epsilon^{-1}})t,f \rangle |\leq C_{K,\psi} p_{K,j}(f)\Big| \int_{\epsilon}^1 \frac{d\lambda}{\lambda} \lambda^{s+\iota}\Big| \leq \frac{1-\epsilon^{s+\iota}}{s+\iota}C_{K,\psi} p_{K,j}(f)
    $$
    stays bounded when $\epsilon \to 0$.
    \textbf{Step 2.} We already know that outside $I$ the family $(\chi-\chi_{\epsilon^{-1}})t$ is bounded in $\mathcal{D}'_{\mathrm{WF}(t)}(U\backslash I)$, in order to study the latter problem in $I$, we write 
    $$
    \begin{aligned}
        \langle\big(\chi- \chi_{\epsilon^{-1}}\big)t, f \rangle= \int dx'dh' \int dxdh \int_{0}^{\infty} 1_{[\epsilon,1]}(\lambda) \frac{d\lambda}{\lambda} \lambda^{s+\iota} \psi(x,h)\lambda^{-s} t(x,\lambda h)\delta(x-x',\lambda h-h')f(x', h'),
    \end{aligned}
    $$
    where $1_{[\epsilon,1]}$ is the characteristic function of the set $[\epsilon,1]\subset \mathbb{R}$ and $\delta $ the Dirac distribution. By \cite[Lemma 10.3]{D16}, we conclude that $A_{\epsilon}=1_{[\epsilon,1]}(\lambda) \frac{d\lambda}{\lambda} \lambda^{s+\iota} \psi(x,h)\lambda^{-s} t(x,\lambda h)$ is a well defined distribution whenever $\epsilon\in (0,1]$ and 
    $$
        \mathrm{WF}(A_{\epsilon}) \subset \{(\lambda,x,h;\zeta, \xi,\eta)\in \dot{T}^*(\mathbb{R}\times U): \ (x,h)\in \mathrm{supp}(\psi), \ (x,h;\xi,\eta)\in \mathrm{WF}(t)\cup 0\}.
    $$
    Moreover,
    $$
        \mathrm{WF}(\delta) \subset \{(\lambda,x,h,x',h';-\langle\eta,h\rangle, -\xi,-\lambda\eta,\xi,\eta)\in \dot{T}^*(\mathbb{R}\times U\times U):\  x'=x, \ h'=\lambda h, \  (\xi,\eta)\neq 0\}.
    $$
    therefore by \cite[Theorem 8.2.14]{hormanderI}
    $$
    \begin{aligned}
        \mathrm{WF}(A_{\epsilon}\delta) \subset  \{ & (x',h'; \xi',\eta')\in \dot{T}^*U: \ \exists (\lambda,x,h;-\zeta, -\xi,-\eta) \in \mathrm{WF}(A_{\epsilon}), \\ 
        & (\lambda,x,h,x', h';\zeta, \xi,-\eta,\xi',\eta')\in  \mathrm{WF}(\delta)\}\\
        & \cup \{ (x',h'; \xi',\eta')\in \dot{T}^*U: \ \exists (\lambda,x,h,x', h';0,0,0,\xi',\eta')\in  \mathrm{WF}(\delta)\}
    \end{aligned}
    $$
    By direct inspection, setting $h'=0$, the above set becomes 
    $$
    \begin{aligned}
        &\{(x',0,\xi',\eta')\in \dot{T}^*U : \ \exists (x',h',\xi',0) \in \Gamma\cap T^*U|_{\mathrm{supp}(\psi)}\} \cup \{(x',0,\xi',\eta')\in \dot{T}^*U : \xi'=0\} = \Xi\cup N^{*}I
    \end{aligned}
    $$
    \textbf{Step 3.} Consider the family $\mu^{-s}\overline{t}_{\mu}$, to complete the proof we have to show that it is bounded in $\mathcal{D}'_{\mathrm{WF}(t)\cup \Xi \cup N^*I}(U)$. Since $\overline{t}=\lim_{\epsilon\to 0} (1-\chi_{\epsilon^{-1}})t$, it suffice to study the family $\{\mu^{-s}\big((1-\chi_{\epsilon^{-1}})t\big)_{\mu}\}_{(\epsilon,\mu)\in (0,1]}$. In integral notation we can write
    $$
    \begin{aligned}
        \mu^{-s}\big\langle \big((1-\chi_{\epsilon^{-1}})t\big)_{\mu},f\big\rangle&=  \int_{\epsilon}^1  \frac{d\lambda}{\lambda} \big\langle  \mu^{-s}\psi_{\mu\lambda^{-1}}t_{\mu},f\big\rangle+\big\langle\mu^{-s}\big(1- \chi\big)t_{\mu}, f \big\rangle\\
        &=   \int_{\epsilon}^1 \frac{d\lambda}{\lambda} \mu^{-s-\iota} \big\langle \psi_{\lambda^{-1}}t,f_{1/\mu} \big\rangle+\big\langle\mu^{-s}\big(1- \chi\big)t_{\mu}, f \big\rangle\\
        &=   \int_{\epsilon}^1 \frac{d\lambda}{\lambda} \frac{\lambda^{s+\iota}}{\mu^{s+\iota}} \big\langle \lambda^{-s} \psi t_{\lambda},f_{\lambda/\mu} \big\rangle+\big\langle\mu^{-s}\big(1- \chi\big)t_{\mu}, f \big\rangle\\
         &=   \int_{\epsilon/\mu}^{1/\mu} \frac{d\lambda}{\lambda} {\lambda^{s+\iota}} \big\langle (\mu\lambda)^{-s} \psi t_{\lambda\mu},f_{\lambda} \big\rangle+\big\langle\mu^{-s}\big(1- \chi\big)t_{\mu}, f \big\rangle.
    \end{aligned}
    $$
    Of the two terms above, the second does not pose any problem since it is bounded by assumption and independent from $\epsilon$, however the first might for, even though $\big\langle (\mu\lambda)^{-s} \psi t_{\lambda\mu},f_{\lambda} \big\rangle$ is bounded by assumption, the lower bound of integration might not be. Therefore, suppose that $\mathrm{supp}(f)\subset B_R$ for a suitable ball of radius $R$ centered at the origin of $U\subset \mathbb{R}^n$, then $\mathrm{supp}(f)\subset \{ (|x|^2+|h|^2)^{1/2}<R \} \subset \{|h| < R \}$ and $\mathrm{supp}(f_{\lambda})\subset \{ (|x|^2+|\lambda h|^2)^{1/2}<R \} \subset \{|h| < R/\lambda \}$. Furthermore, we can take $\psi=X(\chi)$ with $X=h^j\partial_j$ and $\chi=1_x\otimes \widetilde{\chi} $ where $\widetilde{\chi}$ is a bump function in the $h$ variable localized near $\{h=0  \}$; then the projection on the $h$ factor of $\mathrm{supp}(\psi)$ will be contained in an interval $(a,b)$ with $0<a<b$. Thus the above integral can be written as
    $$
        \int_0^{\infty} \frac{d\lambda}{\lambda} 1_{[\epsilon/\mu, R/a]}(\lambda) {\lambda^{s+\iota}} \big\langle (\mu\lambda)^{-s} \psi t_{\lambda\mu},f_{\lambda} \big\rangle
    $$
    It's easy to see that the above quantity is bounded as $\epsilon\to 0$ when tested with $f\in \mathcal{D}(U)$. It remain to show that 
    $$
        (1+|\xi|+|\eta|)^q\Big|\int_0^{\infty} \frac{d\lambda}{\lambda} 1_{[\epsilon/\mu, R/a]}(\lambda) {\lambda^{s+\iota}} \big\langle (\mu\lambda)^{-s} \psi t_{\lambda\mu},e^{i(\xi x + \eta h)}\big\rangle\Big|
    $$
    is bounded for all $q\in \mathbb N, f\in C^{\infty}_c(M)$, $\xi,\ \eta \in \Upsilon$, $\Upsilon\cap  \mathrm{WF}(\overline{t})=\emptyset$. The above expression can be written as
    $$
    \begin{aligned}
        (1+|\xi|+|\eta|)^q\Big|\int_0^{\infty} d\lambda g(\lambda)  \mathcal{F}(f\psi (\mu\lambda)^{-s} t_{\lambda\mu})(\xi,\eta) \Big|
    \end{aligned}
    $$
    where $g(\lambda)=1_{[\epsilon/\mu, R/a]}(\lambda) {\lambda^{s+\iota-1}}$ is an integrable function and $\mathcal{F}$ denotes the Fourier transform. By assumption, since $\mathrm{{supp}}(f\psi)\times \Upsilon \cap \mathrm{WF}(\overline{t})=\emptyset$, the above family is bounded $\forall \mu \in (0,1]$ as $\epsilon\to 0$.
\end{proof}

This theorem is a generalization of the result obtained in \cite{BF97}, where the authors used extra hypothesis on top of those in Theorem \ref{thm_3_w_homogeneous_extension_1}. We shall obtain the same result in Theorem \ref{thm_3_w_homogeneous_extension_conormal}. We remark that this result is somehow optimal for constraining the wave front set of the unextended scaled distribution near the singular hypersurface: in fact it produces a well defined extension with control of the wave front set coming from boundedness of the scaled distribution in the H\"ormander topology. The other condition we assumed was that $s+\iota>0$, this combined with Theorem 5.2.3 in \cite{hormanderI} implies that the extension obtained is unique in $E_{s,I,\Gamma}(U)$. We can show however that we can nonetheless derive a similar result for arbitrary $s\in \mathbb{R}$, however uniqueness of the extension will be lost. To treat this more general case, suppose $-m-1<s+\iota\leq -m$ and let 
\begin{equation}\label{eq_3_Taylor_reminder}
    I_m: \mathcal{D}(U)\to \mathcal{D}(U) : f \mapsto \frac{1}{m!} \sum_{|\alpha|= m+1} h^{\alpha}\int_0^1(1-\mu)^m\frac{\partial^{\alpha}}{\partial h^{\alpha}} f(x,\mu h)d\mu.
\end{equation}
Notice that $I_m$ can be thought of as the projection associating to each $f$ a new function $I_m(f)$ which vanishes of order $m+1$ near $I$. We can now state the extension theorem.

\begin{theorem}[Theorem 4.7 in \cite{D16}]\label{thm_3_w_homogeneous_extension_2}
    Let $t\in E_{s,I,\Gamma}(U\backslash I)$ and suppose that there is $m\in \mathbb{N}$ with $-m-1<s+\iota\leq -m$, $\Gamma\subset \dot{T}^*U$ is stable under the scaling of Euler fields of $I$. Then $\overline{t}=\lim_{\epsilon\to  0}(1-\chi_{\epsilon^{-1}})t\in \mathcal{D}'(U)$ and 
    $$
        \mathrm{WF}(\overline{t})=\mathrm{WF}({t})\cup N^{*}I\cup \Xi.
    $$
    Moreover, we have $\overline{t}\in E_{s,I,\Gamma\cup N^*I \cup \Xi}$.
\end{theorem}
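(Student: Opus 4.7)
The idea is to reduce the statement to the convergent case already treated in \Cref{thm_3_w_homogeneous_extension_1} by means of the Taylor remainder operator $I_m$ defined in \eqref{eq_3_Taylor_reminder}. By Taylor's formula one has the decomposition $f = T_mf + I_m(f)$, where $T_mf$ is the order-$m$ Taylor polynomial of $f$ in the transverse variable $h$ and $I_m(f)(x,h) = O(|h|^{m+1})$ uniformly on compacts. Under the logarithmic flow one therefore has $I_m(f)_\lambda(x,h) = \lambda^{m+1}\widetilde{f}_\lambda(x,h)$ with $\widetilde{f}_\lambda$ jointly bounded in $\lambda\in(0,1]$ and smooth, which is the extra input needed to compensate the negative value of $s+\iota$.

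The first step is to define the extension $\overline{t}$ on the subspace $I_m(\mathcal{D}(U))\subset \mathcal{D}(U)$ by the same integral representation used in the proof of \Cref{thm_3_w_homogeneous_extension_1}. Writing
$$
\langle (1-\chi_{\epsilon^{-1}})t, I_m(f)\rangle = \int_{\epsilon}^{1}\frac{d\lambda}{\lambda}\,\lambda^{s+\iota+m+1}\,\bigl\langle \lambda^{-s}t_\lambda\,\psi,\,\widetilde{f}_\lambda\bigr\rangle + \bigl\langle (1-\chi)t,I_m(f)\bigr\rangle,
$$
the integrand is now bounded by assumption and the prefactor $\lambda^{s+\iota+m}$ is integrable because $s+\iota+m>-1$. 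Hence the limit $\epsilon\to 0$ exists in the dual of $I_m(\mathcal{D}(U))$, and a continuity estimate of the form $|\langle \overline{t},I_m(f)\rangle|\le C\, p_{K,j+m+1}(f)$ follows exactly as in Step 1 of the previous theorem. To extend to all of $\mathcal{D}(U)$ one fixes any continuous linear section of the surjection $\mathcal{D}(U)\twoheadrightarrow I_m(\mathcal{D}(U))$ (equivalently, a splitting of the Taylor jet map at $I$), which amounts to prescribing the finite number of coefficients of a distribution supported on $I$ of transverse order $\leq m$. The formula $\overline{t}=\lim_{\epsilon\to 0}(1-\chi_{\epsilon^{-1}})t$ in the statement is to be read with this renormalization built in.

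For the wave front set and weak-homogeneity parts I would recycle Steps 2 and 3 of the proof of \Cref{thm_3_w_homogeneous_extension_1} essentially verbatim. The rewriting
$$
1-\chi_{\epsilon^{-1}} = (1-\chi) + \int_{\epsilon}^{1}\frac{d\lambda}{\lambda}\,\psi_{\lambda^{-1}},
$$
is unaffected by the sign of $s+\iota$, so the application of \cite[Lemma 10.3]{D16} together with \cite[Theorem 8.2.14]{hormanderI} still yields the same inclusion $\mathrm{WF}(\overline{t})\subset \mathrm{WF}(t)\cup N^*I\cup \Xi$. For the Hörmander boundedness of $\mu^{-s}\overline{t}_\mu$, the same change of variables as before reduces everything to a $\lambda$-integral, the only modification being that one picks up the extra factor $\lambda^{m+1}$ from $I_m(f)_\lambda$, which keeps the integrand integrable uniformly in $\mu\in(0,1]$. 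Hence $\overline{t}\in E_{s,I,\Gamma\cup N^*I\cup \Xi}$.

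The main obstacle, compared with \Cref{thm_3_w_homogeneous_extension_1}, is the loss of uniqueness. When $s+\iota>0$, \cite[Theorem 5.2.3]{hormanderI} forces any two elements of $E_{s,I,\Gamma}$ agreeing off $I$ to coincide; in the present range the difference of two admissible extensions is a $w$-homogeneous distribution of degree $s$ supported on $I$, and such distributions form a finite-dimensional space spanned by transverse derivatives of $\delta_I$ of order at most $m$. Consequently the limit $\lim_{\epsilon\to 0}(1-\chi_{\epsilon^{-1}})t$ depends on the choice of the cutoff $\chi$ (and a logarithmic subtraction is needed at the threshold $s+\iota=-m$); different choices differ precisely by elements of that finite-dimensional space, all of which have wave front contained in $N^*I$ and hence do not spoil the stated estimate on $\mathrm{WF}(\overline{t})$.
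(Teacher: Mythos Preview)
Your overall strategy---use the Taylor remainder $I_m$ to restore integrability at $\lambda\to 0$ and then recycle the machinery of \Cref{thm_3_w_homogeneous_extension_1}---is exactly the paper's, and your treatment of the convergence of $\langle(1-\chi_{\epsilon^{-1}})t,I_m(f)\rangle$ is correct. There are, however, two places where ``verbatim recycling'' does not go through.

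The minor one concerns the wave front estimate: since the extension is now defined through composition with $I_m$, one needs the wave front set of the integral kernel $\widetilde I_m(x,h,x',h')$ of the remainder operator (with its internal $\mu$-integral) before invoking \cite[Theorem~8.2.14]{hormanderI}. The paper devotes a separate step to this computation; it is routine, but it is a genuinely new ingredient absent from \Cref{thm_3_w_homogeneous_extension_1}.

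The real gap is in your argument for $\overline t\in E_{s,I,\Gamma\cup N^*I\cup\Xi}$. You assert that after the change of variables one ``picks up the extra factor $\lambda^{m+1}$ from $I_m(f)_\lambda$, which keeps the integrand integrable uniformly in $\mu$''. But $I_m(f)=f-P_m f$ is \emph{not} compactly supported in the transverse variable $h$, because the Taylor polynomial $P_m f$ is polynomial in $h$. Consequently the support argument that produced the fixed upper limit $R/a$ in Step~3 of \Cref{thm_3_w_homogeneous_extension_1} no longer applies to the pairing with $(I_m f)_{\lambda/\mu}$, and the $\lambda$-integral effectively runs up to order $1/\mu$. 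The $\lambda^{m+1}$ behaviour is only the leading order near $\lambda=0$; for large $\lambda$ one has $(I_m f)_{\lambda/\mu}\sim -(P_m f)_{\lambda/\mu}=O((\lambda/\mu)^{m})$, and a naive bound on the integrand is not uniform in $\mu$. The paper resolves this by splitting the $\lambda$-integral at $r\mu/a$: on $[\,\epsilon,\,r\mu/a\,]$ the support of $f_{\lambda/\mu}$ still meets $\mathrm{supp}(\psi)$ and the argument of \Cref{thm_3_w_homogeneous_extension_1} applies after the substitution $\lambda\to\lambda\mu$; on $[\,r\mu/a,\,1\,]$ the $f$-part vanishes and only $-P_m f$ survives, whose monomials $h^\alpha$ with $|\alpha|\le m$ contribute $(\lambda/\mu)^{|\alpha|}$, and then the key inequality $s+\iota+|\alpha|<0$ makes $\displaystyle\int_{r\mu/a}^{1}\frac{\lambda^{s+\iota+|\alpha|}}{\mu^{s+\iota+|\alpha|}}\,\frac{d\lambda}{\lambda}$ bounded as $\mu\to 0$. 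This splitting and the use of $s+\iota+|\alpha|<0$ for the Taylor-polynomial tail are the missing ideas in your sketch.
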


\begin{proof}
Then again we proceed by steps: in the first we study the convergence of 
\begin{equation}\label{eq_3_extension_decomposition_2}
     \int_{\epsilon}^1 \frac{d\lambda}{\lambda}\big\langle  \psi_{\lambda^{-1}}t,I_m(f)\big\rangle+\big\langle\big(1- \chi\big)t, f \big\rangle    
\end{equation}
as $\epsilon \to 0$, in the second we estimate the wave front of the Taylor remainder operator $I_m$, in the third we study boundedness of the limit distribution in $\mathcal{D}'_{\mathrm{WF}({t})\cup N^{*}I\cup \Xi}(U)$; finally we show that the extension is in $ E_{s,I,\Gamma\cup N^*I \cup \Xi}$.
    \textbf{Step 1.} Looking at \eqref{eq_3_extension_decomposition_2}, we see that the only term which might misbehave as $\epsilon \to 0$ is the first. Writing $I_m$ with the compact notation $\frac{1}{m!}\sum_{\alpha}h^{\alpha}R_{\alpha}$, then
    $$
    \begin{aligned}
        \int_{\epsilon}^1 \frac{d\lambda}{\lambda}\big\langle  \psi_{\lambda^{-1}}t,I_m(f)\big\rangle &= \int_{\epsilon}^1 \frac{d\lambda}{\lambda}\lambda^{s+\iota+m+1}\big\langle  \psi t_{\lambda},I_m(f)_{\lambda}\big\rangle\\
        &= \sum_{|\alpha|=m+1}\int_{\epsilon}^1 \frac{d\lambda}{\lambda}\lambda^{s+\iota}\big\langle  \psi t_{\lambda},(\lambda h)^{\alpha} (R_{\alpha}(f))_{\lambda}\big\rangle\\
        &= \sum_{|\alpha|=m+1}\int_{\epsilon}^1 \frac{d\lambda}{\lambda}\lambda^{s+\iota+m+1}\big\langle  \psi t_{\lambda}, h^{\alpha} (R_{\alpha}(f))_{\lambda}\big\rangle
    \end{aligned}
    $$
    If we repeat the same argument of \textbf{Step 1} in the proof of Theorem \ref{thm_3_w_homogeneous_extension_1}, we arrive at
    $$
        |\langle (\chi-\chi_{\epsilon^{-1}})t,I_m(f) \rangle |\leq \widetilde{C}_{K,\psi} p_{K,m+j}(f)\Big| \int_{\epsilon}^1 d\lambda \lambda^{s+\iota+m}\Big| \leq \frac{1-\epsilon^{s+\iota+m+1}}{s+\iota+m+1}\widetilde{C}_{K,\psi} p_{K,j+m}(f),
    $$    
    which is bounded as $\epsilon \to 0$.
    \textbf{Step 2.} Let us now look at the operator $I_m$, since we wish to apply the same arguments of the proof of Theorem \ref{thm_3_w_homogeneous_extension_1}, we write it as
    $$
    \begin{aligned}
        I_m(f)(x,h)&= \frac{1}{m!} \sum_{|\alpha|= m+1} h^{\alpha}\int dx'dh'\int_0^1(1-\mu)^m \partial'_{\alpha} \delta(x-x',h'-\lambda h)f(x', h')d\mu\\
        &= \frac{1}{m!} \sum_{|\alpha|= m+1} \int dx'dh'\widetilde{I}_m(x,h,x',h')f(x', h').
    \end{aligned}
    $$
    To study the wave front set of this object we observe that
    $$
    \begin{aligned}
        \mathcal{F}(g \widetilde{I}_m)(\xi,\eta,\xi',\eta')&= \int dxdhdx'dh' \int_0^1(1-\mu)^m \partial'_{\alpha} \delta(x-x',h'-\lambda h)g(x,h,x', h')e^{i(x\xi+h\eta+x'\xi'+h'\eta')}d\mu\\
        &=  \int dxdh \int_0^1(1-\mu)^m \partial_{\alpha} g(x,h,x,\lambda h)e^{i\big(x(\xi+\xi')+h(\eta+\lambda\eta')\big)}d\mu
    \end{aligned}
    $$
    From there we deduce that
    \begin{equation}
        \mathrm{WF}(\widetilde{I}_m)=\{(x,h,x',h'; -\xi',-\lambda\eta',\xi',\eta') \in \dot{T}^*(U\times U) : \ x=x',\ \lambda h=h', \ (\xi,\eta) \neq (0,0)\}.
    \end{equation}
    \textbf{Step 3.} We study the wave front set of $(\chi-\chi_{\epsilon^{-1}})t I_m(x,h) $ as $\epsilon\to 0$; in integral notation,
    $$
    \begin{aligned}
        (\chi-\chi_{\epsilon^{-1}})t I_m(x,h) = \frac{1}{m!} \sum_{|\alpha|= m+1} h^{\alpha} \int dx'dh' \int_{0}^1  & 1_{[\epsilon,1]}(\lambda)\lambda^{s+\iota+m} \lambda^{-s}t(x',\lambda h') \psi(x',h') \\ 
        & \lambda^{-m-1}\widetilde{I}_m(x,h,x',h') 
    \end{aligned}
    $$
    setting $B_{\epsilon}(\lambda,x',h')= 1_{[\epsilon,1]}(\lambda)\lambda^{s+\iota+m} \lambda^{-s}t(x',\lambda h') \psi(x',h') $, we have
    $$
        \mathrm{WF}(B_{\epsilon})\subset \big\{(\lambda,x',h';\zeta, \xi',\eta')\in \dot{T}^*([0,1]\times U): \ (x',h') \in \mathrm{supp}(\psi),\ (x',h';\zeta,\xi',\eta')\in \mathrm{WF}(t)\cup \{0\}\big\}.
    $$
    If we consider the multiplication mapping $(\lambda,x,h,x',h')\mapsto (x,h,x',\lambda h') $, we can estimate 
    $$
    \begin{aligned}
        \mathrm{WF}(\lambda^{-m-1}\widetilde{I}_m)\subset \big\{ & (\lambda,x,h,x',h'; \langle h',\eta' \rangle, -\xi',-\lambda\eta',\xi',\eta')   \in \dot{T}^*([0,1]\times U\times U) \\
        & : \ x=x',\ \lambda h=h', \ (\xi,\eta) \neq (0,0)\big\}.
    \end{aligned}
    $$
    Using \cite[Theorem 8.2.14]{hormanderI} to compose distributions, we get
    $$
    \begin{aligned}
        \mathrm{WF}(B_{\epsilon}\lambda^{-m-1}\widetilde{I}_m) \subset  \{ & (x',h'; \xi',\eta')\in \dot{T}^*U: \ \exists (\lambda,x,h;-\zeta, -\xi,-\eta) \in \mathrm{WF}(B_{\epsilon}), \\ 
        & (\lambda,x,h,x', h';\zeta, \xi,\eta,\xi',\eta')\in  \mathrm{WF}(\lambda^{-m-1}\widetilde{I}_m)\}\cup  \\
        & \{ (x',h'; \xi',\eta')\in \dot{T}^*U: \ \exists (\lambda,x,h,x', h';0,0,0,\xi',\eta')\in  \mathrm{WF}(\lambda^{-m-1}\widetilde{I}_m)\}
    \end{aligned}
    $$
    Then again, setting $h'=0$, the above set becomes 
    $$
    \begin{aligned}
        &\{(x',0,\xi',\eta')\in \dot{T}^*U : \ \exists (x',h',\xi',0) \in \Gamma\cap T^*U|_{\mathrm{supp}(\psi)}\} \cup \{(x',0,\xi',\eta')\in \dot{T}^*U : \xi'=0\} =\Xi\cup N^{*}I.
    \end{aligned}
    $$
    \textbf{Step 4.} Consider the family $\mu^{-s}\overline{t}_{\mu}$, to complete the proof we have to show that it is bounded in $\mathcal{D}'_{\mathrm{WF}(t)\cup \Xi \cup N^*I}$. Since $\overline{t}=\lim_{\epsilon\to 0} (1-\chi_{\epsilon^{-1}})t$, it suffice to study the family $\{\mu^{-s}\big((1-\chi_{\epsilon^{-1}})t\big)_{\mu}\}_{(\epsilon,\mu)\in (0,1]}$. In integral notation we can write
    $$
    \begin{aligned}
        \mu^{-s}\big\langle \big((1-\chi_{\epsilon^{-1}})t\big)_{\mu},f\big\rangle&=  \int_{\epsilon}^1  \frac{d\lambda}{\lambda} \big\langle  \mu^{-s}\psi_{\mu\lambda^{-1}}t_{\mu},I_m(f)\big\rangle+\big\langle\mu^{-s}\big(1- \chi\big)t_{\mu}, f \big\rangle\\
        &=   \int_{\epsilon}^1 \frac{d\lambda}{\lambda} \mu^{-s-\iota} \big\langle \psi_{\lambda^{-1}}t,I_m(f)_{1/\mu} \big\rangle+\big\langle\mu^{-s}\big(1- \chi\big)t_{\mu}, f \big\rangle\\
        &=   \int_{\epsilon}^1 \frac{d\lambda}{\lambda} \frac{\lambda^{s+\iota}}{\mu^{s+\iota}} \big\langle \lambda^{-s} \psi t_{\lambda},I_m(f)_{\lambda/\mu} \big\rangle+\big\langle\mu^{-s}\big(1- \chi\big)t_{\mu}, f \big\rangle\\
    \end{aligned}
    $$
    As in the proof of Theorem \ref{thm_3_w_homogeneous_extension_1}, the second term is bounded, so we focus on the first. We can write $f=P_m(f)+I_m(f)$ Taylor's formula, where $P_m$ is the operator defining the Taylor polynomial in $h$ up to order $m$ and $I_m$ the reminder defined in \eqref{eq_3_Taylor_reminder}. The test function $f$ will be supported inside $\{|h|\leq r\}$ for $r>0$ big enough, thus $f_{\lambda/\mu} $ is supported inside $\{|h|\leq \frac{r\mu}{\lambda}\}$, moreover $\psi$ is supported in $\{|h|>a\}$ for some small enough positive constant $a$. Then we must have $\frac{r\mu}{a}\geq \lambda$, and 
    $$
    \begin{aligned}
        \mu^{-s}\big\langle \big((1-\chi_{\epsilon^{-1}})t\big)_{\mu},f\big\rangle&= \langle I^{\mu}_1,f\rangle+\langle I^{\mu}_2,f\rangle;\\
        \langle I^{\mu}_1,f\rangle&=  \int_{\epsilon}^{\frac{r\mu}{a}} \frac{d\lambda}{\lambda} \frac{\lambda^{s+\iota}}{\mu^{s+\iota}} \big\langle \lambda^{-s} \psi t_{\lambda},I_m(f)_{\lambda/\mu} \big\rangle,\\
        \langle I^{\mu}_2,f\rangle&=  \int_{\frac{r\mu}{a}}^1 \frac{d\lambda}{\lambda} \frac{\lambda^{s+\iota}}{\mu^{s+\iota}} \big\langle \lambda^{-s} \psi t_{\lambda},I_m(f)_{\lambda/\mu} \big\rangle.
    \end{aligned}
    $$
    The advantage is that by writing $I_m(f)=f-P_m(f)$ in $I_2^{\mu}$, the $f$-part is identically zero by the above support properties. Moreover, $\langle I^{\mu}_1,f\rangle$ is bounded by the same argument used in \textbf{Step 3.} of the proof of Theorem \ref{thm_3_w_homogeneous_extension_2}. Expanding $P_m(f)_{\lambda/\mu}$ in $I_2^{\mu}$ we can write 
    $$
    \begin{aligned}
        \langle I^{\mu}_2,f\rangle &=  \int_{\frac{r\mu}{a}}^1 \frac{d\lambda}{\lambda} \frac{\lambda^{s+\iota}}{\mu^{s+\iota}} \big\langle \lambda^{-s} \psi t_{\lambda},-P_m(f)_{\lambda/\mu} \big\rangle\\
        I^{\mu}_2(x',h') &=\int  dxdh \sum_{|\alpha|\leq m+1} \frac{h^{\alpha}}{|\alpha|!}\frac{d\lambda}{\lambda} 1_{[\frac{r\mu}{a},1]}\frac{\lambda^{s+\iota+|\alpha|}}{\mu^{s+\iota+|\alpha|}}\psi(x,y)\lambda^{-s}t_{\lambda}(x,h)\partial'_{\alpha}\delta(x-x',h-\lambda h')
    \end{aligned}
    $$
    Notice that when $s+\iota+|\alpha|<0$, the integral 
    $$
        \int 1_{[\frac{r\mu}{a},1]}\frac{\lambda^{s+\iota+|\alpha|}}{\mu^{s+\iota+|\alpha|}}\frac{d\lambda}{\lambda} = \frac{1}{s+\iota+|\alpha|}\Big(\frac{1}{\mu^{s+\iota+|\alpha|}}-\frac{r^{s+\iota+|\alpha|}}{a^{s+\iota+|\alpha|}}\Big)
    $$
    is bounded as $\mu \to 0$. Now we can combine \cite[Lemma 10.3]{D16} with \textbf{Step 3.} in the proof of Theorem \ref{thm_3_w_homogeneous_extension_2}, to conclude that $I^{\mu}_2$ is bounded in $\mathcal{D}'_{\mathrm{WF}(t)\cup \Xi \cup N^*I}(U)$.
\end{proof}

We are now in a position to derive a more specific result which is equivalent to the extension result of \cite{BF97}. First we introduce the so-called \textit{conormal landing condition}. Let $U$ be an open neighborhood of $I$, a closed conic subset $\Gamma\subset \dot{T}^*(U\backslash I)$ satisfies the conormal landing condition if 
\begin{equation}\label{eq_3_conormal_landing_condition}
    \overline{\Gamma} \cap T^{*}U|_{I} \subseteq N^{*}I.
\end{equation}
The idea behind \eqref{eq_3_conormal_landing_condition} is to further constrain the wave front set of the extended distribution by getting rid of the $\Xi$ part in Theorems \ref{thm_3_w_homogeneous_extension_1}, \ref{thm_3_w_homogeneous_extension_2}. This condition prompts us to define a new set of distributions.

\begin{definition}\label{def_3_w_homogeneous_distributions_conormal}
    Let $\Gamma\subset \dot{T}^*(U\backslash I)$ be a cone satisfying \eqref{eq_3_conormal_landing_condition}, $x\in I$. Then $t\in E_{s,N^*I,x}^X$ if there is an Euler vector field $X$, an open neighborhood $U$ of $x$ stable under the logarithmic flow of $X$, for which the family $\{\lambda^{-s}t_{\lambda}\}_{\lambda\in (0,1]}$ is bounded in $\mathcal{D}'_{\Gamma}(U\backslash I)$ equipped with the H\"ormander topology. 
    \noindent
    Moreover, we say that $t\in E_{s,N^*I}^X(U)$ if $t\in E_{s,N^*I,x}^X$ for all $x\in I\cap U$.
\end{definition}

We remark that combining Propositions \ref{prop_3_properties_of_Euler_fields}, \ref{prop_3_w_hom_diff_invariance} we can show that Definition \ref{def_3_w_homogeneous_distributions_conormal} is independent from the Euler vector field chosen, moreover if $\psi:U\to U'$ is a diffeomorphism satisfying the hypothesis of $(ii)$ Proposition \ref{prop_3_w_hom_diff_invariance}, then $\psi^*E_{s,N^*I}(U)=E_{s,N^*I'}(U')$.

\begin{theorem}\label{thm_3_w_homogeneous_extension_conormal}
    Let $U$ be an open neighborhood of $I$, $t\in E_{s,N^*I}(U\backslash I)$. Then there exists an extension $\overline{t}\in \mathcal{D}'(U)$ having $ \mathrm{WF}(\overline{t})=\mathrm{WF}({t})\cup N^{*}I$.
    Moreover, we have $\overline{t}\in  E_{s',N^*I}(U)$ where $s'=s$ if $s+\iota\notin \mathbb{N}$, $s'<s$ otherwise. Finally if $s+\iota>0$ the extension is unique.
\end{theorem}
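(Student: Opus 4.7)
The plan is to reduce Theorem~\ref{thm_3_w_homogeneous_extension_conormal} to the two previous extension results, Theorems \ref{thm_3_w_homogeneous_extension_1} and \ref{thm_3_w_homogeneous_extension_2}, by exploiting the conormal landing condition to collapse the parasitic cone $\Xi$ into $N^{*}I$. First I would localize: for each $x\in I\cap U$ choose an $X$-stable open neighborhood $V\subset U$ witnessing $t\in E_{s,N^{*}I,x}$, together with a slice chart identifying $V$ with an open subset of $\mathbb{R}^{l}\times \mathbb{R}^{\iota}$ and $V\cap I$ with $\{0\}\times \mathbb{R}^{\iota}$. A partition of unity subordinated to such a cover, combined with the patching principle (i) of Proposition~\ref{prop_3_w_hom_diff_invariance} and the diffeomorphism invariance (ii) of the same proposition, will allow the reconstruction of a global extension once the local problem is solved.

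On $V\backslash I$, the conic hypothesis of Definition~\ref{def_3_w_homogeneous_distributions_conormal} together with \eqref{eq_3_conormal_landing_condition} gives a closed cone $\Gamma$ with $t\in E_{s,I,\Gamma}(V\backslash I)$ in the sense of Definition~\ref{def_3_w_homogeneous_distributions_2}. Depending on whether $s+\iota>0$ or $-m-1<s+\iota\leq -m$ for some $m\in\mathbb{N}$, I would then invoke Theorem~\ref{thm_3_w_homogeneous_extension_1} or Theorem~\ref{thm_3_w_homogeneous_extension_2} respectively to produce an extension $\overline{t}=\lim_{\epsilon\to 0}(1-\chi_{\epsilon^{-1}})t\in \mathcal{D}'(V)$ satisfying $\mathrm{WF}(\overline{t})\subseteq \mathrm{WF}(t)\cup N^{*}I\cup \Xi$, with the Hörmander-topology boundedness of $\{\lambda^{-s}\overline{t}_{\lambda}\}$ (respectively $\{\lambda^{-s'}\overline{t}_{\lambda}\}$) already supplied by those theorems.

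The main technical step is showing that the conormal landing condition forces $\Xi\subseteq N^{*}I$. By definition an element $(x,0;\xi,\eta)\in \Xi$ arises from some $(x,h;\xi,0)\in \Gamma\cap T^{*}V|_{\mathrm{supp}(\psi)}$ with $\psi=X(\chi)$. Exploiting the freedom to rescale $\chi$ by the logarithmic flow, the support of $\psi$ can be driven arbitrarily close to $I$, so any such $\xi$ arises as a limit of covectors along a sequence of points approaching $I$; by closedness, these limits lie in $\overline{\Gamma}\cap T^{*}V|_{I}$, which by \eqref{eq_3_conormal_landing_condition} is contained in $N^{*}I$. In slice coordinates this forces $\xi=0$ at the base point on $I$, hence $\Xi\subseteq N^{*}I$. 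Care is required to upgrade this pointwise statement to the corresponding boundedness in the H\"ormander topology: one must revisit Step~3 of the proof of Theorem~\ref{thm_3_w_homogeneous_extension_1} with the conormal-landing hypothesis in hand, picking a conic neighborhood $\Upsilon$ disjoint from $\mathrm{WF}(t)\cup N^{*}I$ and showing that the integrand $\langle (\mu\lambda)^{-s}\psi t_{\lambda\mu}, e^{i(\xi x+\eta h)}\rangle$ remains rapidly decreasing uniformly in $(\lambda,\mu)\in (0,1]^{2}$.

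For the uniqueness assertion when $s+\iota>0$, the difference of two extensions in $E_{s,N^{*}I}(U)$ is supported on $I$, so by Hörmander's structure theorem for distributions supported on a submanifold it is a finite sum of transverse derivatives of the Dirac measure along $I$; each such term has scaling degree at most $-\iota$, incompatible with $s+\iota>0$, forcing the difference to vanish (this is essentially Theorem~5.2.3 in~\cite{hormanderI}). For the scaling degradation when $s+\iota$ hits a resonant value, the drop $s'<s$ will come from the integral $\int_{\epsilon}^{1}\lambda^{s+\iota-1}d\lambda$ producing a $\ln(1/\epsilon)$ which, under the scaling $\mu^{-s}(\cdot)_{\mu}$, contributes a $\ln\mu$ spoiling strict $s$-homogeneity while preserving $s'$-boundedness for every $s'<s$. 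I expect the main obstacle to be the uniform control of the H\"ormander seminorms through the partition of unity, and specifically the verification that the collapse of $\mathrm{supp}(\psi)$ onto $I$ eliminates $\Xi$ not just pointwise but topologically; the pointwise containment is short, but the uniform Fourier-decay estimate near the conormal directions must be extracted carefully from the conormal landing condition.
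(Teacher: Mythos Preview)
Your approach is correct and essentially the same as the paper's: reduce to Theorems~\ref{thm_3_w_homogeneous_extension_1} and~\ref{thm_3_w_homogeneous_extension_2}, then use the conormal landing condition and the freedom in choosing $\chi$ to kill $\Xi$. The paper executes the key step slightly more directly, via a compactness argument showing that there is a fixed neighborhood $V$ of $I$ on which $\mathrm{WF}(t)$ contains no point of the form $(x,h;\xi,0)$, so that choosing $\chi$ supported in $V$ gives $\Xi=\emptyset$ outright; your limiting phrasing reaches the weaker but sufficient conclusion $\Xi\subseteq N^{*}I$ and is a small rewording of the same idea.
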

\begin{proof}
    The proof essentially mirrors those of Theorems \ref{thm_3_w_homogeneous_extension_1}, \ref{thm_3_w_homogeneous_extension_2} with the only difference the we have to show that condition \eqref{eq_3_conormal_landing_condition} effectively extends the wave front set of the unextended distribution $t$ by an amount $N^*I$. In practice we show that there are $V$ open neighborhood of $I$ and $\chi \in \mathcal{D}(U)$ such that
    \begin{itemize}
        \item if $(x,h;\xi,\eta )\in \mathrm{WF}(t)\cap \dot{T}^*V$ then $\eta \neq 0$;
        \item $\chi \in C^{\infty}_V(U)$ with $\chi=1$ near $I$.
    \end{itemize}
    Suppose there is no such $V$, let $K$ be any compact set\footnote{We are justified in taking a compact subset since the wave front set is evaluated by multiplying the distribution with a test function.} intersecting $I$, $V_n=\{|h|\leq 1/n\}$, by assumption there are $(x_n,h_n;\xi_n,0 )\in \mathrm{WF}(t)\cap \dot{T}^*(K_n\cap V_n)U$, since the latter is a distribution in a compact subset, we can extract a convergent subsequence $(x_n,h_n;\xi_n,0 ) \to (x,0,\xi,0)\in \mathrm{WF}(t)$, however, the conormal landing condition (\textit{c.f.} \eqref{eq_3_conormal_landing_condition}) implies that $\overline{\mathrm{WF}(t)}\cap \dot{T}^*(K_n\cap V_n)U\subseteq N^*I=\{(x,0,0,\eta)\}$. Finally we can easily construct $\chi$ as claimed above, and $\psi=-X(\chi)$ has support in $0<a\leq |h|\leq b$, but for no $(x,h)$ with $a\leq |h|\leq b$ there is $(x,h,\xi,0)\in \mathrm{WF}(t)$; thus we can choose an extension of $t$ with $\Xi=\emptyset$.
\end{proof}

Often in quantum field theory it 
is common to extend the product of, say, Feynman propagators, we therefore state the following result which helps in this regard.

\begin{theorem}\label{thm_3_renormalized_products}
    Suppose that $U$ is a neighborhood of $I$, $\Gamma_1,\ \Gamma_2 $ are cones in $\dot{T}^*(U\backslash I)$ satisfying the conormal landing condition and $\Gamma_1\cap -\Gamma_2=\emptyset$. There there exists a bilinear mapping 
    $$
        \mathcal{R}: \mathcal{D}'_{\Gamma_1}(U\backslash I)\times  \mathcal{D}'_{\Gamma_2}(U\backslash I) \to E_{s,N^*I},\ s<s_1+s_2
    $$
    such that 
    \begin{itemize}
        \item $\mathcal{R}(u_1u_2)=u_1u_2$ in $U\backslash I$,
        \item $\mathcal{R}(u_1u_2)\in \mathcal{D}'_{(\Gamma_1+\Gamma_2)\cup \Gamma_1\cup \Gamma_2 \cup N^*I}$.
    \end{itemize}
\end{theorem}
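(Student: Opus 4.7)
The plan is to combine the classical Hörmander product of distributions with the extension result of Theorem \ref{thm_3_w_homogeneous_extension_conormal}. First, the hypothesis $\Gamma_1\cap -\Gamma_2=\emptyset$ is exactly the requirement in Lemma \ref{lemma_8.2.10} that makes the pointwise product $u_1\cdot u_2=i^*(u_1\otimes u_2)$ (where $i$ is the diagonal embedding of $U\setminus I$) well defined as an element of $\mathcal{D}'(U\setminus I)$, with the standard estimate
$$
\mathrm{WF}(u_1 u_2)\subseteq \Gamma_1\cup\Gamma_2\cup(\Gamma_1+\Gamma_2)\doteq\Gamma.
$$
Thus the first step is to realize $\mathcal{R}(u_1,u_2)$ away from $I$ as this product; bilinearity and the agreement with $u_1u_2$ off $I$ come for free from Hörmander's construction.

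Next, I would check that $\Gamma$ satisfies the conormal landing condition. For $\Gamma_1,\Gamma_2$ this is the hypothesis, and for $\Gamma_1+\Gamma_2$ one argues as follows: any element of $\overline{\Gamma_1+\Gamma_2}\cap T^*U|_I$ is obtained as a limit of covectors $\xi_n+\eta_n$ attached to points $(x_n,h_n)\to(x,0)$ with $(x_n,h_n;\xi_n)\in\Gamma_1$ and $(x_n,h_n;\eta_n)\in\Gamma_2$. After separately rescaling and passing to a subsequence, the individual limits (whenever they exist) must lie in $N^*I$ by the landing condition on $\Gamma_i$, and the components tangent to $I$ vanish; since the sum preserves vanishing of tangential components, the limit $\xi+\eta$ is again conormal. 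The more delicate case, where one of the $\xi_n,\eta_n$ blows up after normalization, is handled by normalizing the pair jointly and exploiting the closedness of $\Gamma_i$ under conic rescaling to show the tangential part of the limit still vanishes.

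The key analytic step is to show that $u_1u_2\in E_{s,N^*I}(U\setminus I)$ in the sense of Definition \ref{def_3_w_homogeneous_distributions_conormal}, for every $s<s_1+s_2$, with respect to the cone $\Gamma$. Since the Euler flow $F^X_\lambda$ is a diffeomorphism in a neighbourhood stable under the flow, pullback commutes with pointwise product, so $(u_1u_2)_\lambda=(u_1)_\lambda(u_2)_\lambda$. Writing
$$
\lambda^{-s}(u_1u_2)_\lambda=\lambda^{\,s_1+s_2-s}\cdot\big(\lambda^{-s_1}(u_1)_\lambda\big)\cdot\big(\lambda^{-s_2}(u_2)_\lambda\big),
$$
the two rightmost factors are uniformly bounded in $\mathcal{D}'_{\Gamma_1}$ and $\mathcal{D}'_{\Gamma_2}$ respectively in the Hörmander topology. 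One then invokes the continuity of the multiplication map
$$
\mathcal{D}'_{\Gamma_1}(U\setminus I)\times\mathcal{D}'_{\Gamma_2}(U\setminus I)\longrightarrow\mathcal{D}'_{\Gamma}(U\setminus I)
$$
(which holds under the transversality hypothesis $\Gamma_1\cap-\Gamma_2=\emptyset$, via bilinear estimates on the seminorms $p_{q,\chi,\Upsilon}$) together with the prefactor $\lambda^{s_1+s_2-s}\to 0$ to absorb any residual loss. The strictness $s<s_1+s_2$ appears here precisely to ensure that these bilinear bounds, which involve convolutions of conic test functions in frequency space, can be made uniform in $\lambda\in(0,1]$. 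Once this boundedness is established, Theorem \ref{thm_3_w_homogeneous_extension_conormal} delivers the desired extension $\mathcal{R}(u_1,u_2)\in\mathcal{D}'(U)$ with
$$
\mathrm{WF}\bigl(\mathcal{R}(u_1,u_2)\bigr)\subseteq \Gamma\cup N^*I=\Gamma_1\cup\Gamma_2\cup(\Gamma_1+\Gamma_2)\cup N^*I,
$$
lying in $E_{s',N^*I}(U)$ with $s'=s$ or slightly below according to whether $s+\iota$ is a non-positive integer.

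The main obstacle is the uniformity in $\lambda$ of the bilinear Hörmander estimate: the standard multiplication bounds are proved for fixed distributions, and one needs to verify that the constants depend only on the seminorm controls of $\lambda^{-s_i}(u_i)_\lambda$, not on $\lambda$ itself. The margin provided by the prefactor $\lambda^{s_1+s_2-s}$ is what makes this work, but justifying it rigorously requires revisiting the frequency-space construction of the product distribution and controlling how the cut-off functions $\chi$ and the excluded frequency cones $\Upsilon$ transform under the Euler rescaling, exploiting the scale-invariance of the Euler flow near $I$.
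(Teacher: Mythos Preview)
Your approach is essentially the same as the paper's: form the H\"ormander product on $U\setminus I$, check that $u_1u_2$ lies in $E_{s,N^*I}(U\setminus I)$, and then invoke Theorem~\ref{thm_3_w_homogeneous_extension_conormal}. The paper's proof is two lines and dispatches your ``main obstacle'' by citing a single result: the product of distributions
\[
\mathcal{D}'_{\Gamma_1}(U\setminus I)\times\mathcal{D}'_{\Gamma_2}(U\setminus I)\longrightarrow\mathcal{D}'_{\Gamma}(U\setminus I)
\]
is \emph{hypocontinuous} (Brouder et al., \cite[Theorem~7.1]{brouder2014continuity}). Hypocontinuity of a bilinear map means precisely that it sends pairs of bounded sets to bounded sets, so the bounded families $\{\lambda^{-s_1}(u_1)_\lambda\}$ and $\{\lambda^{-s_2}(u_2)_\lambda\}$ are mapped to a bounded family $\{\lambda^{-s_1-s_2}(u_1u_2)_\lambda\}$ in $\mathcal{D}'_\Gamma$, with constants depending only on the seminorm bounds of the inputs and not on $\lambda$. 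This is exactly the uniform bilinear estimate you worried about; you need not revisit the frequency-space construction, and the prefactor $\lambda^{s_1+s_2-s}$ is not needed to absorb losses but only to handle the possible logarithmic correction (compare the $s'<s$ clause in Theorem~\ref{thm_3_w_homogeneous_extension_conormal}).

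Your discussion of the conormal landing condition for $\Gamma_1+\Gamma_2$ is a reasonable point the paper leaves implicit; it is indeed required for Theorem~\ref{thm_3_w_homogeneous_extension_conormal} to apply.
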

\begin{proof}
Setting $\Gamma=(\Gamma_1+\Gamma_2)\cup \Gamma_1\cup \Gamma_2 $, by \cite[Theorem 7.1]{brouder2014continuity} the product of distribution is hypocontinuous when defined, thus $u_1u_2\in \mathcal{D}'_{\Gamma}(U\backslash I)$. Then we can apply Theorem \ref{thm_3_w_homogeneous_extension_conormal} and, setting $\mathcal{R}(u_1u_2)=\overline{u_1u_2}$, we conclude. 
\end{proof}

We mention how the results given above do generalize to manifolds. It is well known (see for instance Theorems 2.2.1, 2.2.4 in \cite{hormanderI}) that distributions can be localized; in the case of the manifold $M$, $t\in \mathcal{D}'(U\backslash I)$ for some open subset $U\subset M$, we can localize $t$ in the slice charts, then since the spaces $E_{s,I,\Gamma}(U\backslash I),\ E_{s,I,\Gamma}(U\backslash I)$ are diffeomorphism invariant we can represent the localized distributions in open subsets of $\mathbb{R}^n$ and extend them by means of Theorems \ref{thm_3_w_homogeneous_extension_1}, \ref{thm_3_w_homogeneous_extension_2}, \ref{thm_3_w_homogeneous_extension_conormal}. Then we can use again diffeomorphism invariance to pull back the extended distributions and the gluing property (\textit{c.f.} $(i)$ in Proposition \ref{prop_3_w_hom_diff_invariance}) to obtain a global extended distribution $\overline{t} \in \mathcal{D}'(U)$. \\

We finish this section with an important example that will be used in the sequel for the existence of time ordered products.

\begin{lemma}\label{lemma_Feynman_conormal_landing}
    Let $(M,h)$ be a background geometry; then 
    \begin{itemize}
        \item[$(i)$] any power of the Feynman propagator $H_F[M,h]^k$, 
        seen as a distribution in $M^2\backslash\Delta_2(M)$, satisfies the conormal landing condition; therefore, any extension $\overline{H}_F[M,h]^k$ has
    $$
        \mathrm{WF}\big(\overline{H}_F[M,h]^k\big)\vert_{\Delta_2(M)}\subseteq N^*(\Delta_2(M)).
    $$
        \item[$(ii)$] if $h_s$ is a compactly supported variation of the background geometry (M,h) with parameter $s\in \mathbb{R}^d$ and $H_F[M,h_s]^k$ is the associated distribution in $\mathbb{R}^d\times(M^2\backslash\Delta_2(M))$, then it satisfies the conormal landing condition; therefore, any extension $\overline{H}_F[M,h_s]^k$ has
    $$
        \mathrm{WF}\big(\overline{H}_F[M,h_s]^k\big)\vert_{\mathbb{R}^d\times\Delta_2(M)}\subseteq N^*(\mathbb{R}^d\times\Delta_2(M)).
    $$
    \end{itemize}
\end{lemma}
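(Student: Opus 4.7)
\textbf{Strategy for (i).} My approach is to first establish the conormal landing condition \eqref{eq_3_conormal_landing_condition} for the single Feynman parametrix $H_F[M,h]$ on $M^2\setminus\Delta_2(M)$, and then to show that it is preserved under pointwise powers. From the Hadamard construction of Theorem~\ref{thm_2_existence_of_parametrices}, combined with the microlocal characterization $H_F[M,h] = H + \tfrac{i}{2}\mathcal{G}_{(M,h)}$ and \cite[Theorem 5.1]{radzikowski1996micro}, $\mathrm{WF}(H_F[M,h])$ on $M^2\setminus \Delta_2(M)$ consists of pairs $(x,y;\xi,-\eta)$ with $(x,\xi)\sim_g(y,\eta)$ along a lightlike geodesic bicharacteristic. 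I would take a convergent sequence of such wave front elements with $(x_n,y_n)\to (x_0,x_0)$, and $(\xi_n,-\eta_n)\to(\zeta,-\zeta')$; the geodesic joining $x_n$ to $y_n$ shrinks to a point and parallel transport along it degenerates to the identity, forcing $\xi_n-\eta_n\to 0$ in $T_{x_0}^*M$. Hence the limit covector $(\zeta,-\zeta)\in N^*\Delta_2(M)$, which is the required landing property.

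\textbf{From $H_F$ to $H_F^k$ and the extension.} Assuming $H_F^k$ is a well-defined distribution on $M^2\setminus \Delta_2(M)$ with
\[
\mathrm{WF}(H_F^k)\subseteq \Bigl\{\bigl(x,y;{\textstyle\sum_{j\in J}\xi^{(j)},\ -\sum_{j\in J}\eta^{(j)}}\bigr):\ J\subseteq\{1,\dots,k\},\ (x,y;\xi^{(j)},-\eta^{(j)})\in\mathrm{WF}(H_F)\Bigr\},
\]
which follows by iterating Lemma~\ref{lemma_8.2.10}, I would apply the geometric argument above termwise: each summand satisfies $\xi_n^{(j)}-\eta_n^{(j)}\to 0$, and linearity gives $\zeta+\rho=0$ for any limit point $(x_0,x_0;\zeta,\rho)$. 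The conormal landing condition for $H_F^k$ then follows, and Theorem~\ref{thm_3_w_homogeneous_extension_conormal} produces an extension $\overline{H}_F[M,h]^k$ with $\mathrm{WF}\bigl(\overline{H}_F[M,h]^k\bigr)\big\vert_{\Delta_2(M)}\subseteq N^*\Delta_2(M)$, proving (i).

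\textbf{Part (ii) and the main obstacle.} The strategy for (ii) parrots that of (i), applied to the singular submanifold $I=\mathbb{R}^d\times\Delta_2(M)\subset \mathbb{R}^d\times M^2$, whose conormal bundle is $\{(s,x,x;0,\xi,-\xi)\}$. The new ingredient is the smooth dependence on $s$: using Lemma~\ref{lemma_2_propagator_smothness} for the causal propagator together with the explicitly smooth recursive construction of the Hadamard coefficients in Theorem~\ref{thm_2_existence_of_parametrices}, one obtains that for each $f\in\mathcal{D}(M^2\setminus \Delta_2(M))$ the map $s\mapsto \langle H_F[M,h_s],f\rangle$ is smooth, and hence the joint wave front set of the family has trivial $\mathbb R^d$-component away from $\mathbb R^d\times \Delta_2(M)$. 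The conormal landing analysis of (i) then transfers fibrewise in $s$, and Theorem~\ref{thm_3_w_homogeneous_extension_conormal} again yields the claim. The principal technical obstacle lies precisely in rigorously verifying this joint smoothness of $(s,x,y)\mapsto H_F[M,h_s](x,y)$ as a distribution on $\mathbb R^d\times(M^2\setminus \Delta_2(M))$. Concretely, one must establish smoothness in $s$ for each coefficient $U_l,V_l,W_l$ appearing in the Hadamard expansion \eqref{eq_2_Borel_series_parametrix}, and control the parameter dependence of the convergence-forcing cutoff sequence $\{k_j\}$ chosen in the proof of Theorem~\ref{thm_2_existence_of_parametrices}, uniformly on compact families of background geometries. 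Once this smooth dependence is in hand, the remaining estimates reduce to routine applications of Lemma~\ref{lemma_8.2.10} and the extension theorems of Section~\ref{section_dang}.
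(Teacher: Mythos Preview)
Your argument for part~(i) is essentially sound and parallels the paper's: both establish the conormal landing condition by exploiting the bicharacteristic structure of $\mathrm{WF}(H_F)$ off the diagonal and the fact that parallel transport along a shrinking geodesic tends to the identity. One omission: to invoke Theorem~\ref{thm_3_w_homogeneous_extension_conormal} you must check that $H_F^k\in E_{s,N^*\Delta_2(M)}(M^2\setminus\Delta_2(M))$, i.e.\ that the scaled family $\lambda^{-s}(F^X_\lambda)^*H_F^k$ is bounded in the H\"ormander topology. The paper handles this by observing that $H_F$ scales almost homogeneously with degree $n-2$ and then appeals to Theorem~\ref{thm_3_renormalized_products}; you should add this step.

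Part~(ii), however, has a genuine gap. Your inference ``for each $f\in\mathcal{D}(M^2\setminus\Delta_2(M))$ the map $s\mapsto\langle H_F[M,h_s],f\rangle$ is smooth, hence the joint wave front set has trivial $\mathbb{R}^d$-component'' is incorrect. Smoothness in $s$ for each fixed $f$ only yields $\mathrm{WF}(\widetilde{H}_F)\cap\{(s,x,y;\zeta,0,0)\}=\emptyset$ (cf.\ Theorem~\ref{thm_8.2.12}); it does \emph{not} force $\zeta=0$ for general wave front elements. Indeed, the bicharacteristic relation $\sim_{g_s}$ varies with $s$, so the singular support moves in $M^2$ as $s$ changes, and this produces nontrivial $\zeta$-components---the paper's estimate \eqref{eq_WF_H_F(s)_outside_diagonal} explicitly allows them, and the Remark following Theorem~\ref{thm_2_existence_Wick_polynomials} gives the counterexample $s\cdot\delta_0$ illustrating exactly this phenomenon. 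Consequently your ``fibrewise transfer'' of the conormal landing analysis does not establish the parameterized conormal landing condition with respect to $N^*(\mathbb{R}^d\times\Delta_2(M))$.

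The paper takes a different and necessarily more hands-on route: it works in normal coordinates $(s,x,\dot{x})$ adapted to the diagonal, writes out $\widetilde{H}_F(s,x,\lambda\dot{x})$ explicitly via the Hadamard expansion, and verifies directly that $\widetilde{H}_F\in E_{2,N^*(\mathbb{R}^d\times\Delta_2(M))}(\mathbb{R}^d\times(\mathcal{O}\setminus\Delta_2(M)))$. The key computations are \eqref{eq_3_scaled_U_derivatives}--\eqref{eq_3_scaled_W_derivatives}, which show that each $\dot{x}$-derivative of $U,V_j,W_j$ evaluated at $\lambda\dot{x}$ carries an explicit factor $\lambda$; this permits integration by parts to bound the H\"ormander seminorms $\|\lambda^2(F^X_\lambda)^*\widetilde{H}_F\|_{\chi,k,\Upsilon}$ uniformly in $\lambda\in(0,1]$ for $\Upsilon$ disjoint from the wave front. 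Theorem~\ref{thm_3_w_homogeneous_extension_1} then gives the extension with $\Xi=\emptyset$, precisely because no element $(s,x,\dot{x};\zeta,\xi,0)$ survives. So the obstacle you identify---uniform smooth dependence of the Hadamard coefficients and the Borel cutoffs $\{k_j\}$---is not the operative one; what is needed is the scaling control in the H\"ormander topology, and that requires the explicit normal-coordinate analysis rather than an abstract smoothness argument.
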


\begin{proof}

For $(x,y)\in\mathcal{O}$ the Feynman propagator can always be expressed as 
\begin{equation}
    H_F(x,y)= H(x,y) + \frac{i}{2}(\mathcal{G}^+(x,y)+ \mathcal{G}^-(x,y)),
\end{equation}
where $H$ is the Hadamard parametrix (see \eqref{eq_2_Hadamard_distribution}) and $\mathcal{G}^{\pm}$ the retarded and advanced propagators. Equivalently, we can write
\begin{equation}
    H(x,y)= \frac{U(x,y)}{\sigma(x,y)+i\epsilon t(x,y)+\epsilon^2}+ {V}(x,y)\ln\bigg(\frac{\sigma(x,y)+i\epsilon t(x,y)+\epsilon^2}{\mu^2}\bigg)+{W}(x,y)
\end{equation}
where $t$ is the temporal function associated to the globally hyperbolic metric $g$ and $\mu$ an appropriate length scale. Finally, recall that its wave front set is given by
\begin{equation}
\begin{aligned}
    \Gamma_F\equiv\mathrm{WF}(H_F)&= \big\{(x,y;\xi,-\eta) \in \dot{T}^*M^2: (x,\xi)\sim (y,\eta), \ \xi\in \overline{V^{\pm}_g(x)} \ \mathrm{if} \ y\in J^{\pm}(x)   \big\}\\
    & \quad\cup \big\{(x,x;\xi,-\xi) \in \dot{T}^*M^2 : (x,\xi)\in \dot{T}^*M\big\}
\end{aligned}
\end{equation}
We stress that the second part of the wave front set is the one responsible for the impossibility of directly using powers of $H_F\vert{M^2\backslash \Delta_2(M)}$ as distributions. However, if we consider $H_F$ as a distribution on $M^2\backslash \Delta_2(M)$, then we can actually define powers of $H_F$ and 
\begin{equation}\label{eq_WF_powers_of_H_F}
\begin{aligned}
    &\mathrm{WF}\big((H_F\vert_{M^2\backslash \Delta_2(M)})^k\big) \subseteq \bigcup_{j=1}^k \Upsilon_j\times (M^{k-j}\times \{0\});\\
    & \Upsilon_j= \Big\{(x,y;\xi,-\eta) \in \dot{T}^*M^2: \xi=\sum_{i=1}^j\xi_i, \ \eta=\sum_{i=1}^j\eta_i, \ (x,\xi_i)\sim (y,\eta_i), \ \xi\in \bar{V^{\pm}_g(x)} \ \mathrm{if} \ y\in J^{\pm}(x)   \Big\} .
\end{aligned}
\end{equation}
Therefore, if $y\to x$, we get that 
$$
    \overline{\mathrm{WF}\big((H_F\vert_{M^2\backslash \Delta_2(M)})^k\big)}\vert_{\Delta_2(M)}\cap T^{*}\Delta_2(M)\subseteq N^*\Delta_2(M).
$$
Taking into account that $H_F[M,h]$ scales almost homogeneously with degree $n-2$, both under physical scaling of parameters and under transversal coordinate scaling, we can show that it is a weakly homogeneous distribution in $E_{n-2,\Gamma_F}(\mathcal{O})$. Thus we can apply \Cref{thm_3_renormalized_products}, which guarantees the existence of an extension to the diagonal of products of $H_F$ and bounds the wave front set of the extension along the diagonal by $\mathrm{WF}(H_F^k)\subset N^*\Delta_2(M)$.\\

To show $(ii)$, let $\mathbb{R}^d \ni s \to h_s$ be the compactly supported variation of $h=h_0$. Calling $\widetilde{H}_F$ the associated distribution in $\mathbb{R}^d\times \mathcal{O}$, we can estimate
\begin{equation}\label{eq_WF_H_F(s)_outside_diagonal} 
\begin{aligned}
    \mathrm{WF}&\big(\widetilde{H}_F\vert_{\mathbb{R}^d\times ( M^2  \backslash\Delta_2(M))}\big)\subset   \big\{(s,x,x;\zeta,\xi,-\xi) \in {T}^*(\mathbb{R}^d \times M^2): (x,\xi)\in \dot{T}^*M  \big\}\\
    &\quad \cup \big\{(s,x,y;\zeta,\xi,-\eta) \in \dot{T}^*(\mathbb{R}^d\times M^2): (x,\xi)\sim_{g_s} (y,\eta), \ \xi\in \overline{V^{\pm}_{g_s}(x)} \ \mathrm{if} \ y\in J^{\pm}_{g_s}(x)   \big\}\\
\end{aligned}
\end{equation}
we want to prove something more:
\begin{equation}\label{eq_WF_H_F(s)_inside_diagonal} 
\begin{aligned}
    \mathrm{WF}\big(\widetilde{H}_F\vert_{\mathbb{R}^d\times \Delta_2(M)}\big)= \big\{(s,x,x;0,\xi,-\xi) \in {T}^*(\mathbb{R}^d\times M^2): (x,\xi)\in \dot{T}^*M  \big\}.
\end{aligned}
\end{equation}
To do so, we use the notations of \Cref{section_Hadamard_parametrix} and study the problem in normal coordinates $(x,\dot{x})$, where if $y \in M$ is sufficiently close to $x$, we can write $\dot{x}=\bar{e}_x(y)$ where $\bar{e}_x : U\subset M \to V \subset T_xM$ is the Riemannian exponential of the metric $g_s$. Locally, we can write the geometric part of the Feynman propagator as
$$
    \widetilde{H}_F(s,x,\dot{x}) ={H}_F(s,x,e_x(\dot{x}))= \frac{U(s,x,\dot{x})}{\sigma(s,x,\dot{x})}+V(s,x,\dot{x})\ln(\sigma(s,x,\dot{x}))+W(s,x,\dot{x})
$$
where 
$$
\begin{aligned}
    &\sigma(s,x,\dot{x})= g_{\alpha\beta}(s,x)\dot{x}^{\alpha}\dot{x}^{\beta},\ U(s,x,\dot{x})= \bigg(\frac{\mathrm{det}(g(s))|_x}{\mathrm{det}(g(s))|_{\dot{x}}}\bigg)^{1/4},\\
    & V(s,x,\dot{x})= \sum_{j\geq 0} V_j(s,x,\dot{x})\sigma^j(s,x,\dot{x}), , \ V_{-1}(s,x,\dot{x}) \equiv  U(s,x,\dot{x})\\
    & V_{j+1}(s,x,\dot{x})=-\frac{1}{4} U(s,x,\dot{x})\int_0^1 v^{j+1}\frac{PV_j(s,x,v\dot{x})}{U(s,x,v\dot{x})}dv\\
    & W(s,x,\dot{x})= \sum_{j\geq 1} W_j(s,x,\dot{x})\sigma^j(s,x,\dot{x}),\ W_0(s,x,\dot{x}) \equiv  0\\
    & W_{j}(s,x,\dot x)=\frac{U(s,x,\dot x)}{4j}\int_0^1 v^j \frac{PV_{j-1}(s,x,v\dot x^{\mu})-jPW_{j-1}(s,x,v\dot x^{\mu})}{U(s,x,v\dot x^{\mu})}dv \\
    & \qquad \qquad  \quad \ \ -jU(s,x,\dot x)\int_0^1 v^j \frac{V_{j}(s,x,v\dot x^{\mu})}{U(s,x,v\dot x^{\mu})}dv.
\end{aligned}
$$

We claim that $\widetilde{H}_F(s,x,\lambda\dot{x})$ belongs to $E_{2,N^*(\mathbb{R}^d\times \Delta_2(M))}\big(\mathbb{R}^d\times (\mathcal{O}\backslash \Delta_2(M))\big)$. Writing this distribution in integral notation explicitly
$$
    \lambda^{-2} \frac{U(s,x,\lambda\dot{x})}{\sigma(s,x,\dot{x})}+V(s,x,\lambda\dot{x})\ln(   \lambda^2\sigma(s,x,\dot{x}))+W(s,x,\lambda\dot{x}).
$$
We see that each $\lambda$ inside $\sigma$ factors out, and the remaining terms depending on $\lambda$ are jointly smooth functions in all their variables including $\lambda\in [0,1]$. Thus $\lambda^2\langle F_{\lambda}\widetilde{H}_F, f\rangle$ is bounded for $\lambda\in [0,1]$. Next we have to show that
$$
    \sup_{\lambda \in (0,1]} \left\Vert \lambda^{2}\big(F(\lambda)^X\big)^*\widetilde{H}_F\right\Vert_{\chi,k,\Upsilon}\leq C
$$
for all $\chi\in \mathcal{D}\big(\mathbb{R}^d\times (M^2\backslash\Delta_2(M))\big), \ q\in \mathbb{N}, \ \Upsilon\subset \mathbb{R}^n, \ \Upsilon'\subset \mathbb{R}^n$ such that $\mathrm{WF}(\widetilde{H}_F)\cap \mathrm{supp}(\chi)\times \Upsilon =\emptyset$. Consider
$$
\begin{aligned}
    &\int dxd\dot x \lambda^2 \widetilde{H}_F(s,x,\lambda \dot x) \chi(s, x,\dot x) e^{i(\zeta s + \xi x+ \eta \dot x)}\\
    &=\int dxd\dot x \Big\{  \frac{U(s,x,\lambda\dot{x})}{\sigma(s,x,\dot{x})}+ \lambda^2\ln{\lambda}\Big( \sum_{j\geq 0} \lambda^{2j} V_j(s,x,\lambda\dot{x})\sigma^j(s,x,\dot{x})\Big) \\ 
    & \qquad \qquad+ \lambda^2 \Big( \sum_{j\geq 1} \lambda^{2j}W_j(s,x,\lambda\dot{x})\sigma^j(s,x,\dot{x}) \Big)\Big\} \chi(s, x,\dot x) e^{i(\zeta s + \xi x+ \eta \dot x)}\\
\end{aligned}
$$
Noticing that derivatives of $\dot{x}$ on $U$ can be written as
$$
    \frac{\partial}{\partial \dot{x}^{\mu}} U(s,x,\lambda\dot{x})= \frac{1}{2}  \frac{\partial}{\partial \dot{x}^{\mu}} \ln(g(s,\lambda\dot x))U(s,x,\lambda \dot x)=\lambda U(s,x,\lambda \dot x)  \frac{\Delta_{\mu}(s,\lambda \dot x) }{\Delta(s,\lambda \dot x)},
$$
where $\Delta_{\mu }=\nabla_{\mu}\Delta$. Thus $ \frac{\partial}{\partial \dot{x}^{\mu}} U(s,x,\lambda\dot{x})$ is of the form $\lambda A^{(1)}_{\mu}$ for a suitable function $A^{(1)}_{\mu}$ jointly smooth in $(s,\lambda,x,\dot{x})$. Taking repeatedly derivatives of $U$ in $\dot x$ yields  
\begin{equation}\label{eq_3_scaled_U_derivatives}
    \frac{\partial^l}{\partial \dot{x}^{\mu_1}\cdots \partial \dot{x}^{\mu_l}} U(s,x,\lambda\dot{x})= \lambda^l A^{(l)}_{\mu}(\lambda,s,x,\dot x).
\end{equation}
The $V_j, \ W_j$ factors can be treated analogously, for instance
$$
\begin{aligned}
     \lambda^{2{j}}\frac{\partial}{\partial \dot{x}^{\mu}} V_{j}(s,x,\lambda\dot x) &=   \frac{1}{2}  \frac{\partial}{\partial \dot{x}^{\mu}} \ln(g(s,\lambda\dot x)) V_j(s,x,\lambda\dot x)\\& \quad- \frac{1}{4} U(s,x,\lambda\dot{x}) \lambda^j\int_0^{\lambda}\bigg[v^{j}\frac{\partial}{\partial \dot{x}^{\mu}}\frac{PV_{j-1}(s,x,v\dot{x})}{U(s,x,v \dot{x})}\bigg]dv    \\
     &= \lambda B^{(1)}_{\mu}(\lambda,s,x,\dot x) U(s,x,\lambda\dot{x}) .
\end{aligned}
$$
where $ B^{(1)}_{\mu}$ is jointly smooth in its arguments. Then
\begin{equation}\label{eq_3_scaled_V_derivatives}
    \frac{\partial^l}{\partial \dot{x}^{\mu_1}\cdots \partial \dot{x}^{\mu_l}} V_j(s,x,\lambda\dot{x})= \lambda^l B^{(l)}_{\mu}(\lambda,s,x,\dot x).
\end{equation}
Applying a similar argument to $W_j$ yields 
\begin{equation}\label{eq_3_scaled_W_derivatives}
    \frac{\partial^l}{\partial \dot{x}^{\mu_1}\cdots \partial \dot{x}^{\mu_l}} W_j(s,x,\lambda\dot{x})= \lambda^l C^{(l)}_{\mu}(\lambda,s,x,\dot x).
\end{equation}
Using \eqref{eq_3_scaled_U_derivatives}, \eqref{eq_3_scaled_V_derivatives}, \eqref{eq_3_scaled_W_derivatives}, we can integrate by parts terms in 
$$
\begin{aligned}
    &\int dxd\dot x \Big\{  \frac{U(s,x,\lambda\dot{x})}{\sigma(s,x,\dot{x})}+ \lambda^2\ln{\lambda}\Big( \sum_{j\geq 0} \lambda^{2j} V_j(s,x,\lambda\dot{x})\sigma^j(s,x,\dot{x})\Big) \\ 
    & \qquad + \lambda^2 \Big( \sum_{j\geq 1} \lambda^{2j}W_j(s,x,\lambda\dot{x})\sigma^j(s,x,\dot{x}) \Big)\Big\} \chi(s, x,\dot x) e^{i(\zeta s + \xi x+ \eta \dot x)}
\end{aligned}
$$
so as to make 
$$
    \int dxd\dot x \lambda^2 (1+|\zeta|+|\xi|+|\eta|)^k\mathcal{F}\big(\widetilde{H}_F\chi\big)(\zeta,\xi,\eta) 
$$
bounded, for all $\lambda\in [0,1]$, whenever $(\zeta,\xi,\eta)\in \Upsilon$ and $\mathrm{supp}(\chi)\times \Upsilon \cap \mathrm{WF}(\widetilde H_F)=\emptyset$. Since the spacetime dimension is $4>2$ the extension of $\widetilde{H}_F|_{\mathbb{R}^d\times (M^2\backslash\Delta_2(M))} $, which is unique and coincides with $\widetilde{H}_F$ itself, must have, by Theorem \ref{thm_3_w_homogeneous_extension_1}, wave front contained in 
$$
	\mathrm{WF}(\widetilde{H}_F)\subset \mathrm{WF}\big(\widetilde{H}_F|_{\mathbb{R}^d\times (M^2\backslash\Delta_2(M))}\big) \cup N^{*}(\mathbb{R}^d\times \Delta_2(M)) \cup \Xi.
$$
In our case,
$$
	\Xi= \{ (s,x,0;\zeta,\xi,\eta) \in \dot{T}^{*}(M^2\backslash\Delta_2(M)) : (s,x,\dot{x};\zeta,\xi,0)\in \mathrm{WF}\big(\widetilde{H}_F|_{\mathbb{R}^d\times (M^2\backslash\Delta_2(M))}\big) \}.
$$
There is, however, no element $(s,x,\dot{x};\zeta,\xi,0)$ in $\mathrm{WF}\big(\widetilde{H}_F|_{\mathbb{R}^d\times (M^2\backslash\Delta_2(M))}\big)$, thus $\Xi=\emptyset$.
Finally it's easy to verify that products of $\widetilde H_F$ are well defined away from the diagonal, and satisfy the conormal landing condition (\textit{c.f.} \eqref{eq_3_conormal_landing_condition}); then by Theorem \ref{thm_3_renormalized_products} we can extend such products to $\mathbb{R}^d\times \Delta_2(M)$ so that they satisfy again \eqref{eq_3_conormal_landing_condition}. 
\end{proof}

\section{Time ordered products}\label{section_TO}

In this section we shall define time ordered products on the algebra of quantum fields $\big(\mathfrak{A}_{\mu c},\star\big)$. In Definition \ref{def_3_TO_products} we modify the definition given in \cite{hollands2002existence, hollands2005conservation} by removing the analytic requirement and substituting it with a generalization of the parameterized microlocal spectrum condition in \cite{khavkine2016analytic, khavkine2019wick}, moreover, instead of working with the algebras generated by Wick polynomials of quantum fields, we state our axioms for general elements of the algebra $\mathfrak A_{\mu c}$. In Theorem \ref{thm_3_TO_renormalization_1}, \ref{thm_3_TO_renormalization_2}, we modify the proof given originally in \cite{hollands2002existence} to account for the modified axioms. Finally, in Theorem \ref{thm_2_uniqueness_TO} and Corollary \ref{coro_2_TO_uniqueness} we show that the new axioms introduced produce the same characterization of time ordered products already known in the literature (\textit{c.f.} \cite{hollands2001local, BF97, dutsch2019classical}).\\

The time ordered product can be seen as another product $\cdot_T$, other than the $\star$ product, on the quantum algebra $\mathfrak{A}_{\mu c}(M,h)$. We can consider a simpler situation by restricting to the algebra of regular functionals $\mathcal{F}_{reg}(M,h)$. We require that 
\begin{equation}\label{eq_3_product_ansaz_1_TO}
F \cdot_T G=
\begin{cases}
    F \star G \quad \mathrm{supp}(F)>_{\Sigma} \mathrm{supp}(G),\\
    G \star F \quad \mathrm{supp}(G)>_{\Sigma} \mathrm{supp}(F);
\end{cases}    
\end{equation}

where with the notation $\mathrm{supp}(F)>_{\Sigma} \mathrm{supp}(G)$ we mean that there exists a Cauchy hypersurface separating the supports of $F$ and $G$ and $\mathrm{supp}(F)\subseteq J^{+}\big( \mathrm{supp}(G)\big)$. Similarly to what we did with the star product, we postulate
$$
    F \cdot_T G= \sum_j \hbar^j \Pi_j(F,G).
$$
In particular, if we take $f_1,f_2\in \mathcal{D}(M)$ with $\mathrm{supp}(f)|_{\Sigma} \mathrm{supp}(g)$, we obtain the consistency conditions
$$
\begin{aligned}
    \Pi_0\big(\phi_{(M,h)}(f_1),\phi_{(M,h)}(f_2)\big)&=\phi_{(M,h)}(f_1) \cdot \phi_{(M,h)}(f_2)\\
    \Pi_1\big(\phi_{(M,h)}(f_1),\phi_{(M,h)}(f_2)\big)&= \frac{i\hbar}{2}\mathcal{G}_{(M,h)}(f_1,f_2)=\frac{i\hbar}{2}\mathcal{G}_{(M,h)}^+(f_1,f_2)
\end{aligned}
$$
\begin{lemma}\label{lemma_3_T_product_regular_algebra}
    Consider the algebra of regular functionals $\mathcal{F}_{reg}(M,h)$ with the topology of strong convenient convergence generated by seminorms \eqref{eq_1_strong_conv_seminorms_1}. Then the time ordered product is defined, at each order of $\hbar$, by
    \begin{equation}\label{eq_2_regular_T_product}
	\big(F \cdot_T G\big) (\varphi)=F(\varphi)G(\varphi) +\sum_{j\geq 1}\frac{\hbar^j}{j!} \bigg(\frac{i}{2}\bigg)^j\Big\langle d^jF[\varphi], D_{(M,h)}^j \big(d^jG[\varphi]\big) \Big\rangle ,
    \end{equation}
    where $D_{(M,h)}=\mathcal{G}_{(M,h)}^++\mathcal{G}_{(M,h)}^-$ is called \textit{Dirac's propagator}. Alternatively, we can write 
    $$
        F\cdot_T G= e^{\frac{i}{2}\widetilde{\Gamma}_{D}}\Big( e^{-\frac{i}{2}\widetilde{\Gamma}_{D}}(F)\cdot e^{-\frac{i}{2}\widetilde{\Gamma}_{D}}(G)\Big).
    $$
    where
    $$
        \widetilde{\Gamma}_D=\frac{\hbar}{2}\int_{M^2}d\mu_g(x,y) D_{(M,h)}(x,y) \frac{\partial^2}{\partial \varphi(x)\partial \varphi(y)},\quad \widetilde{\Gamma}_{D}(F)(\varphi)= \hbar \big\langle d^2F[\varphi],D_{(M,h)}\big\rangle,
    $$
\end{lemma}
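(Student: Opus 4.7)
My plan is to take the explicit formula \eqref{eq_2_regular_T_product} as the \emph{definition} of $\cdot_T$ on $\mathcal{F}_{reg}(M,h)$ and then verify the consistency condition \eqref{eq_3_product_ansaz_1_TO} together with the equivalent exponential presentation. Well-definedness is essentially free: for $F,G\in\mathcal{F}_{reg}(M,h)$ each derivative $d^jF[\varphi]$, $d^jG[\varphi]$ is a compactly supported smooth section, so the pairing $\langle d^jF[\varphi],D^j\,d^jG[\varphi]\rangle$ is meaningful; the series is understood as a formal power series in $\hbar$, and smoothness together with closure under compact space-time support of each order follows exactly as in the proof of Proposition \ref{prop_1_Peierls_1} combined with Lemma \ref{lemma_2_Weyl_regular_algebra}.

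Next I would check the consistency requirement \eqref{eq_3_product_ansaz_1_TO}. The key observation is that $\mathrm{supp}(\mathcal{G}^{\pm})\subset\{(x,y):x\in J^{\pm}(y)\}$ by Theorem \ref{thm_1_properties_of_Green_functions}, so that on the regions of interest the integral kernels $D=\mathcal{G}^++\mathcal{G}^-$ and $\mathcal{G}=\mathcal{G}^+-\mathcal{G}^-$ coincide pointwise up to sign. Moreover $D$ is symmetric (by Lemma \ref{lemma_1_duality_green}, since $\mathcal{G}^{\pm}$ are mutual adjoints) while $\mathcal{G}$ is antisymmetric. If $\mathrm{supp}(F)>_{\Sigma}\mathrm{supp}(G)$, then on $\mathrm{supp}(dF)\times\mathrm{supp}(dG)$ one has $\mathcal{G}^-\equiv 0$, whence $D(x,y)=\mathcal{G}(x,y)$, and the proposed formula reduces term by term to $F\star G$. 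If instead $\mathrm{supp}(G)>_{\Sigma}\mathrm{supp}(F)$, then $\mathcal{G}^+\equiv 0$ on the relevant region, so $D(x,y)=-\mathcal{G}(x,y)$; each factor of $\mathcal{G}$ in $G\star F$ is flipped by transposing the two arguments (using the antisymmetry of $\mathcal{G}$ and the symmetry of $D$), and one checks that the $j$-th term in $G\star F$ matches the $j$-th term in the proposed formula. Care is required with multi-point kernels $\mathcal{G}^j$ vs $D^j$, but these factor as products on the causally-ordered configurations, so the identification holds factor-by-factor.

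To derive the alternative exponential expression I would mimic the argument of Lemma \ref{lemma_2_star_to_star_H_product}. Writing $\Gamma_D=\hbar\int D(x,y)\,\partial_{\varphi(x)}\otimes\partial_{\varphi(y)}$, the series formula rewrites as $F\cdot_T G=\mathcal{M}\circ e^{\frac{i}{2}\Gamma_D}(F\otimes G)$. Applying the Leibniz rule to $d^2(FG)=d^2F\cdot G+dF\odot dG+F\cdot d^2G$ and using the symmetry of $D$ yields the intertwining relation
\begin{equation*}
\widetilde{\Gamma}_D\circ\mathcal{M}=\mathcal{M}\circ\bigl(\widetilde{\Gamma}_D\otimes\mathrm{id}+\mathrm{id}\otimes\widetilde{\Gamma}_D+\Gamma_D\bigr),
\end{equation*}
with $\widetilde{\Gamma}_D$ given by the integral-kernel definition in the lemma statement. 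Since the three derivations on the right commute pairwise, the exponentials factorise and we obtain
\begin{equation*}
e^{\frac{i}{2}\widetilde{\Gamma}_D}\circ\mathcal{M}\circ e^{-\frac{i}{2}(\widetilde{\Gamma}_D\otimes\mathrm{id}+\mathrm{id}\otimes\widetilde{\Gamma}_D)}=\mathcal{M}\circ e^{\frac{i}{2}\Gamma_D},
\end{equation*}
which is the desired identity after rewriting the left-hand side as $e^{\frac{i}{2}\widetilde{\Gamma}_D}\bigl(e^{-\frac{i}{2}\widetilde{\Gamma}_D}F\cdot e^{-\frac{i}{2}\widetilde{\Gamma}_D}G\bigr)$.

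The computations are essentially mechanical once the commutation relation is in place; the main subtlety, and where I expect the only real care to be required, is the support-based verification of \eqref{eq_3_product_ansaz_1_TO}. One must argue that the identification $D=\pm\mathcal{G}$ on the causally-ordered region is strong enough to equate $F\cdot_T G$ with $F\star G$ (or $G\star F$) order by order in $\hbar$, handling the multi-point kernels $D^j,\mathcal{G}^j$ and reconciling the antisymmetry of $\mathcal{G}$ with the symmetry of $D$ under exchange of the two arguments; the symmetry/antisymmetry interplay is what makes the signs work out precisely and leaves the exponential with the correct factor of $i/2$.
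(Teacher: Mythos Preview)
Your proposal is correct and follows exactly the route the paper indicates: the paper's proof consists of the single sentence ``completely analogous to that of Lemmas \ref{lemma_2_Weyl_regular_algebra} and \ref{lemma_2_star_to_star_H_product} with $D_{(M,h)}$ substituted in place of $\mathcal{G}_{(M,h)}$,'' and you have simply unpacked that analogy explicitly, including the support argument for \eqref{eq_3_product_ansaz_1_TO} and the intertwining identity for the exponential form. The sign bookkeeping you flag (symmetry of $D$ versus antisymmetry of $\mathcal{G}$) is exactly right and is the only place where the substitution requires a moment's thought.
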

\
The proof is completely analogous to that of Lemmas \ref{lemma_2_Weyl_regular_algebra} and \ref{lemma_2_star_to_star_H_product} with $D_{(M,h)}$ substituted in place of $\mathcal{G}_{(M,h)}$.\\

In analogy with what we did in Section \ref{section_Wick_powers} we seek to extend this product to microlocal functionals, \textit{i.e.} those functional supported along the diagonal to which physically relevant functionals belong. Then again, powers of $D_{(M,h)}$ are not well defined, however this time powers of \textit{Feynman propagator} $H_F\doteq H+\frac{i}{2}D_{(M,h)}$ are not well defined ether due to the lingering diagonal part $\{(x,x;-\xi,\xi)\in \dot{ T}^*M^2 \}$ of the wave front set of $D_{(M,h)}$. Thus, to define the extension of $\cdot_T$  to the algebra $\mathfrak A_{\mu loc}(M,h)$, one needs a locally covariant prescription to extend the powers of $H_F$ seen as a distribution outside the diagonal $\Delta_2(M)$ of $M^2$. Finally, we note that the problem of extending distributions has not always a unique answer (see \textit{e.g.} Theorem \ref{thm_3_w_homogeneous_extension_conormal}), we therefore expect that the prescription for defining Time ordered products will not be unique and therefore would need to undergo some sort of characterization.\\ 

Motivated by this, we begin by giving an axiomatic definition with the properties that the time ordered product must satisfy. Our definition differs from that given in \cite{hollands2002existence} or \cite{hollands2005conservation} by the introduction of a different microlocal spectral condition. 

\begin{definition}{\textbf{Time ordered products}}\label{def_3_TO_products}
A family of elements $\{T_p \}_{p\in \mathbb{N}}$ such that for each $(M,h)\in \mathfrak{Bckg}$ defines linear mappings
\begin{equation}\label{eq_abstract_TO_product}
	T_p[M,h]: \otimes^p \mathfrak{A}_{\mu loc}(M,h)\rightarrow \mathfrak{A}_{\mu c}(M,h):(A_1,\ldots,A_p)\mapsto T_p[M,h](A_1,\ldots,A_p),
\end{equation}
with 
\begin{equation}\label{eq_3_H_TO_product}
	\alpha_H\Big(T_p[M,h](A_1,\ldots,A_p)\Big) \equiv  T^H_p\big( \alpha_H\big(A_1\big),\ldots,\alpha_H\big(A_p\big)\big);
\end{equation}
is called a time ordered product if it satisfies the following properties:
\begin{itemize}
\item[$(i)$]\textbf{Locality and Covariance}. Let $\chi:(M,h)\rightarrow (M',h')$ be a causality preserving isometric embedding, then for each $A_1,\ldots,A_p\in \mathfrak{A}_{\mu loc}(M,h)$, we have 
$$
	T_p[M',h']\big(\mathfrak{A}_{\chi}(A_1),\ldots,\mathfrak{A}_{\chi}(A_p)\big) =T_p[M,\chi^{*}h'](A_1,\ldots,A_p);
$$
\item[$(ii)$] \textbf{Causal factorization}. Let $A_1,\ldots,A_p\in \mathfrak{A}_{\mu loc}(M,h)$, suppose there exists a subset $I\subset \{1,\ldots,p\}$ such that for each $i\in I$ $\mathrm{supp}(A_i)\notin J^{-}_M\left(\mathrm{supp}(A_j)\right)$ for all $j\in I^c$, then
$$
	T_p[M,h](A_1,\ldots,A_p)=T_{|I|}\big(\otimes_{i\in I}A_i\big)\star T^H_{|I^c|}\big(\otimes_{j\in I^c}A_j\big);
$$ 
\item[$(iii)$] \textbf{Symmetry}. The time ordered product is symmetric under permutation of its arguments, that is, given $A_1,\ldots,A_p\in \mathfrak{A}_{\mu loc}(M,h)$, and any permutation $\sigma$ of the set $\{1,\ldots,p\}$, 
$$
	T_p[M,h]\big(A_{\sigma(1)},\ldots,A_{\sigma(p)}\big)=T_p[M,h](A_1,\ldots,A_p);
$$
\item[$(iv)$] \textbf{Initial values}. $T_0[M,h]=1_{\mathfrak{A}_{\mu c}(M,h)}$, $T_1[M,h]=id_{\mathfrak{A}_{\mu c}(M,h)}$.
\item[$(v)$] \textbf{Scaling}. If each $A_j\in \mathfrak{A}_{\mu loc}(M,h)$ scales almost homogeneously with degree $\rho_j$ under the rescaling of parameters $h\to S_{\lambda}h$, then $T_p[M,h](A_1,\ldots,A_p)\in \mathfrak{A}_{\mu c}(M,h)$ scales almost homogeneously with scaling degree $\sum_{j=1}^p \rho_j$ under physical scaling.
\item[$(vi)$] \textbf{Field independence}. Given $A_1,\ldots,A_p\in \mathfrak{A}_{\mu loc}(M,h)$ we have
$$
	dT_p[M,h](A_1,\ldots,A_p)[\varphi]=\sum_{j=1}^pT_p[M,h](A_1,\ldots , dA_i[\varphi],\ldots ,\ldots,A_p);
$$
\item[$(vii)$] \textbf{Unitarity}. If $A_1,\ldots,A_p\in \mathfrak{A}_{\mu loc}(M,h)$, then
$$
	T_p[M,h](A_1,\ldots,A_p)^{*}= \sum_{I_1,\ldots,I_q}(-1)^{p+q}T_{|I_1|}[M,h]\left(\prod_{i\in I_1}A_i\right)\star \cdots\star T_{|I_q|}[M,h]\left(\prod_{j\in I_q}A_j\right),
$$
where ${I_1,\ldots,I_q}$ is a partition of $\{1,\ldots,p\}$, and the sum is understood to be made on all possible such partitions;
\item[$(viii)$] \textbf{Action Ward identity}. Let $A_1,\ldots,A_p\in \mathfrak{A}_{\mu loc}(M,h)$ such that at least one, say $A_1$, has the form $A_1=\alpha_H^{-1}(F_1)$ where 
$$
	F_1(\varphi)=\int_M j^{r}\varphi^{*}(d \theta)
$$
for some $n-1$-form $\theta \in \Omega_{n-1}(J^r(M\times \mathbb{R}))$, then
$$
	T_p[M,h](A_1,\ldots,A_p)=0;
$$
\item[$(ix)$] \textbf{$\varphi$-Locality} Let $A_1,\ldots,A_p \in \mathfrak{A}_{\mu loc}(M,h)$, then denote by $A_i^{[k]}$ the Wick ordering of the $k$th order truncated Taylor series of $F_i\equiv \alpha_H(A_i)$ in $0\in C^{\infty}(M)$, we have 
$$
	T_p[M,h]\left(A_1,\ldots,A_p\right)= T_p[M,h]\left(A_1^{[k_1]},\ldots,A^{[k_p]}_p\right)+O\big(\hbar^{\lfloor \sum_ik_i/2 \rfloor}\big);
$$
\item[$(x)$] \textbf{Microlocal spectrum condition} ($\mu$SC). Let $\Gamma^T_q(M,h)\subset \dot{T}^{*}M^q$ be composed by elements $	(x_1,\ldots,x_q,\xi_1,\ldots,\xi_q)$ with the following property: any point $x_i$ is connected at least to some other point $x_j$ via a lightlike geodesic $\gamma_{i\to j}$ and 
$$
	g^{\sharp}\xi_i= \sum_{e\in \{1,\ldots,q\}} \dot{\gamma}_{e\to i}(x_i) - \sum_{s\in \{1,\ldots,q\}}\dot{\gamma}_{i\to s}(x_i),
$$
where the first sum is taken on all future directed lightlike geodesics starting at some other point $x_e$ and ending at $x_i$, while the second is taken over all future directed lightlike geodesics starting at $x_i$ and ending at some other point $x_s$. Then for each $A_1,\ldots,A_p\in \mathfrak{A}_{\mu loc}(M,h)$ and all $\varphi \in C^{\infty}(M)$, consider the integral kernel associated to $d^k\Big(\alpha_H\big(T_p[M,h](A_1,\ldots,A_p)\big)\Big)[\varphi]$, we require that 
\begin{equation}\label{eq_mulocal_spectrum_condition}
\mathrm{WF}\Big(d^kT^H_p\big(\alpha_H(A_1),\ldots,\alpha_H(A_p)\big)[\varphi]\Big) \subset \Gamma^T_{k}(M,h);
\end{equation}
\item[$(xi)$] \textbf{Parameterized microlocal spectrum condition (P$\mu$SC)}. Given any background geometry $(M,h)$ and any smooth compactly supported variation, $\mathbb{R}\ni s \mapsto h_s\in \Gamma^{\infty}(HM)$, we require that for every $\varphi\in C^{\infty}(M)$ and every $A_1,\ldots,A_p \in \mathfrak{A}_{\mu loc}(M,h)\simeq \mathfrak{A}_{\mu loc}(M,h_s)$, we have 
$$
	\begin{aligned}
		& \mathrm{WF}\Big(d^kT^{H_s}_p\big(\alpha_{H_s}(A_1),\ldots,\alpha_{H_s}(A_p)\big)[\varphi]\Big) \\ &\subset \big\{ (s,x_1,\ldots,x_k,\zeta,\xi_1,\ldots, \xi_k) : \  (x_1,\ldots,x_n,\xi_1,\ldots, \xi_p )\in \Gamma^T_{k}(M,h_s)\big\}.
	\end{aligned}
$$
\end{itemize}
\end{definition}

We remark that a key feature of Definition \ref{def_3_TO_products} is the fact that time ordered products have an abstract characterization given by \eqref{eq_abstract_TO_product} while their "functional" form given by \eqref{eq_3_H_TO_product}. The two can be related by the following commutative diagram:
\begin{center}
\begin{tikzcd}
		 & \mathfrak{A}_{\mu loc}(M,h) \arrow[d,"\otimes^{p}\alpha_H"] \arrow[r,"T_p"]  & \mathfrak{A}_{\mu c}(M,h)  \arrow[d, "\alpha_H"] \\
		 & \otimes^p\mathcal{F}_{\mu loc}(M,h,H) \arrow[r,"T_p^H"] & \mathcal{F}_{\mu c}(M,h,H) 
\end{tikzcd}
\end{center}	
In light of this correspondence, the field independence condition has to be imposed at the level of functionals and then pulled back to the algebra level by the map $\alpha_H^{-1}$; this translates to requiring
\begin{equation}
	dT_p(A_1,\ldots,A_p)[M,h][\varphi] \doteq \alpha_{H}^{-1} \Big( dT_p^{H}\big(\alpha_{H}^{-1}(A_1),\ldots,\alpha_{H}^{-1}(A_p) \big)[\varphi](\psi) \Big),
\end{equation}
which combined with the Field Independence property yields 
$$
	dT_p^{H}(F_1,\ldots,F_p)[\varphi](\psi)=\sum_{i=1}^p T_p^{H}(F_1,\ldots,dF_i[\cdot](\psi),\ldots,F_p)(\varphi).
$$
We stress that the above relation is well defined since if $F\in \mathcal{F}_{loc}(M,h,H)$ then $dF[\cdot](\psi)\in \mathcal{F}_{loc}(M,h,H)$ for each $\psi \in\mathcal{D}(M)$. Furthermore we stress that each $T_p(M,h)\in \mathfrak{A}_{\mu c}(M,h)$ is $\mathrm{Had}(M,h)$-equivariant according to \eqref{eq_2_abstract_muc_algebra}, therefore 
\begin{equation}\label{eq_3_time_ordered_had_equivariance}
    T^H_p(F_1,\ldots,F_p) = T^{H'}_p\big(\alpha_{H'-H}(F_1),\ldots,\alpha_{H'-H}(F_p)\big).
\end{equation}
\\

Therefore by field independence and Faà di Bruno's formula we can write the integral kernel of $d^kT^H_p\big(F_1,\ldots,F_p\big)[\varphi]$ as a sum of terms of the form
$$
	T^H_p\big(d^{k_i}F_1[\varphi],\ldots,d^{k_p}F_p[\varphi]\big)(x_1,\ldots,x_k);
$$
then, by locality of $F_i=\alpha_H(A_i)$, smearing with diagonal delta yields
\begin{equation}\label{eq_3_mulocal_spectrum_condition_2}
	\mathrm{WF}\bigg(T^H_p\big(d^{k_i}\alpha_H(A_1)[\varphi],\ldots,d^{k_p}\alpha_H(A_p)[\varphi]\big)\bigg) \subset \Gamma^T_{\beta(k,p)}(M,h)
\end{equation}
where $1\leq \beta(k_1,\ldots,k_p)\leq p$ is the number of $k_i$ which are different from zero.\\ 

Finally note that in the case $p=1$ Definition \ref{def_3_TO_products} reduces to Definition \ref{def_2_Wick_quantum_powers}, whose existence has already been established in Theorem \ref{thm_2_existence_Wick_polynomials}, whereas the case $p=2$, setting $T_2^H(F_1,F_2)=F_1\cdot_{T}F_2$ is consistent with the requirements of \eqref{eq_3_product_ansaz_1_TO}

\subsection{Existence of time ordered products}\label{section_TO_existence}

We turn to show the existence of time ordered products. The proof will proceed by steps and closely follows the one given in \cite{hollands2002existence}. We remark that the difference will be the absence of the condition on the analytic wave front set, given in $\S$2 of \cite{hollands2002existence}, replaced by the parameterized microlocal spectrum condition: $(xi)$ in Definition \ref{def_3_TO_products}.\\

The first step is to state and prove the so-called Wick expansion for time ordered products, which implies that, up to some order in $\hbar$ the time ordered products can be completely characterized by a sum of products of distributions and Wick powers. 

\begin{lemma}\label{lemma_3_Wick_expansion}
Given any time ordered product satisfying the algebraic and parameterized microlocal spectrum conditions and any quasifree Hadamard state $H$, there exists a geodesically convex set $\Omega\subset M$ and functionals $A_1,\ldots, A_p \in \mathfrak{A}_{\mu loc}(M,h)$ such that $\mathrm{supp}(A_i) \subset \Omega$ for each $i=1,\ldots,p$, then   
\begin{equation}\label{eq_3_Wick2}
\begin{aligned}
    T_p[M,h](A_1,\ldots,A_p)=\sum_{\substack{k\leq 2N-1\\j_1+\ldots+j_p= k}} \int_{M^p} &\hbar^{\lfloor |J|/2\rfloor}t_{J}[M,h]\left(f_1^{(j_1)},\ldots,f_p^{(j_p)} \right)(x_1,\ldots, x_{p})\\
    &\phi^{J}_{(M,h,H)}(x_1,\ldots,x_p)d\mu_g(x_1,\ldots,x_p) + O(\hbar^N)
\end{aligned}
\end{equation}
where $J=(j_1,\ldots,j_p)$, $\lfloor |J|/2\rfloor$ the integer part of $|J|/2$, each $t_{J}[M,h]\left(f_1^{(j_1)},\ldots,f_p^{(j_p)} \right)\in \mathcal{D}'(\Omega^{p})$ and the integral is to be intended in the abstract algebra sense. Moreover, those distribution satisfies additional conditions:
\begin{itemize}
\item[$(i)$] \textbf{Locality and Covariance.} If $\chi :M\rightarrow M'$ is a causally preserving isometric embedding, $\Omega\subset M$ is as above and $f\in \mathcal{D}(\Omega^{p})$, then we can assume $\Omega'=\chi(\Omega)$ to remain causally convex, and 
$$
	t_p[M',h'](\chi_{*}f)=t_p[M,\chi^{*}h'](f).
$$
\item[$(ii)$] \textbf{Scaling.} Each $t_{J}[M,h]\big(f_1^{(j_1)},\ldots,f_p^{(j_p)} \big)$ scales almost homogeneously with degree 
$$
	sd\Big(T_p[M,h]\big(A_1,\ldots,A_p\big)\Big)-\sum_{i=1}^{p}j_i\frac{n-2}{2}.
$$
\item[$(iii)$] \textbf{Microlocal Spectrum Condition.} The distributions $t_{J}[M,h]\big(f_1^{(j_1)},\ldots,f_p^{(j_p)} \big)$ belong to $\mathcal{D}'_{\Gamma^T_p(M,h)}(\Omega^{p})$.

\item[$(iv)$] \textbf{Parameterized Microlocal Spectrum Condition.} If $\mathbb{R}^d\ni s \mapsto h_s \in \Gamma^{\infty}(HM)$ is compactly supported variation of the background geometry $(M,h)$, then 
$$
\begin{aligned}
    &\mathrm{WF}\bigg(t_{J}[M,h_s]\big(f_1^{(j_1)},\ldots,f_p^{(j_p)} \big)\bigg)\\ &\subset \big\{ (s,x_1,\ldots,x_p,\zeta,\xi_1,\ldots, \xi_k) : \  (x_1,\ldots,x_p,\xi_1,\ldots, \xi_p )\in \Gamma^T_{p}(M,h_s)\big\}.
\end{aligned}
$$
\end{itemize}
\end{lemma}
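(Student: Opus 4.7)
The plan is to combine axiom (ix) ($\varphi$-locality) with axiom (vi) (field independence) and the algebraic structure of Wick powers developed in Chapter \ref{chapter_Wick} in order to reduce the expansion to finitely many distributions extracted by functional-derivative calculus at $\varphi=0$. First, for any fixed target order $N$, I would apply axiom (ix) to replace each $A_i$ with its Wick-ordered truncated Taylor series $A_i^{[k_i]}$, with the orders $k_i$ chosen so that $\lfloor (k_1+\cdots+k_p)/2\rfloor \geq N$, absorbing the remainder into $O(\hbar^N)$. After shrinking to a geodesically convex chart $\Omega$ containing the supports of all $A_i$, Proposition \ref{porop_1_muloc_charachterization} lets us write each $A_i^{[k_i]}$ as a finite sum
\begin{equation*}
A_i^{[k_i]} = \sum_{j_i=0}^{k_i} \frac{1}{j_i!}\int_M f_i^{(j_i)}(x)\,\phi^{j_i}_{(M,h,H)}(x)\,d\mu_g(x),
\end{equation*}
where each $f_i^{(j_i)}$ is the smooth diagonal density associated to the $j_i$-th functional derivative of $F_i=\alpha_H(A_i)$ at the origin.

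Next I would exploit multilinearity of $T_p$ to expand $T_p(A_1^{[k_1]},\ldots,A_p^{[k_p]})$ as a finite sum over multi-indices $J=(j_1,\ldots,j_p)$ and \emph{define} $t_J[M,h](f_1^{(j_1)},\ldots,f_p^{(j_p)})$ as the coefficient of $\phi^J_{(M,h,H)}$ in the resulting expansion. Concretely, $t_J$ is identified by repeated use of axiom (vi) (field independence) at $\varphi=0$: each functional derivative peels off one field, field independence passes the derivative inside $T_p$, and after $|J|$ steps what remains is the purely distributional coefficient. The factor $\hbar^{\lfloor|J|/2\rfloor}$ tracks the number of Wick contractions that survive in the deformation-quantization expansion; axiom (ix) ensures that contributions beyond this order are absorbed in $O(\hbar^N)$.

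The four properties of $t_J$ would then inherit directly from the corresponding axioms in Definition \ref{def_3_TO_products}. Locality/covariance (i) follows from axiom (i), since $\alpha_H$, the Wick powers $\phi^{j_i}_{(M,h,H)}$, and the extraction at $\varphi=0$ are all natural under causality-preserving isometric embeddings. Scaling (ii) follows from axiom (v), with the shift $\sum_i j_i (n-2)/2$ coming from the almost-homogeneous scaling degree of the Wick powers established in Theorem \ref{thm_2_existence_Wick_polynomials}. The $\mu$SC (iii) follows from axiom (x) applied to $d^{|J|}T_p^H[\varphi=0]$, since the wave front bound $\Gamma^T_p(M,h)$ survives restriction to the small diagonal of $M^{|J|}$ mapped onto $\Omega^p$. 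Finally, P$\mu$SC (iv) follows analogously from axiom (xi).

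The main obstacle will be the clean extraction of $t_J$ from the abstract algebra expression, since Wick ordering depends on the Hadamard parametrix $H$ and interacts with functional derivatives in a non-trivial combinatorial way. The key ingredient is the local Wick expansion relating Wick powers, functional derivatives, and contractions with $H$; the $\mathrm{Had}(M,h)$-equivariance \eqref{eq_3_time_ordered_had_equivariance} then ensures that the choice of parametrix $H$ only affects smooth transfer terms and hence does not enlarge the wave front sets of the $t_J$. For (iv) one must additionally track the $s$-dependence of $H_s$ and the variation of Wick ordering; here the joint smoothness of $s\mapsto H_s$ used in the proof of Theorem \ref{thm_2_existence_Wick_polynomials}, combined with equivariance, guarantees that varying the background geometry contributes only smooth terms in the parameter direction, so that the wave front constraint of axiom (xi) transfers verbatim to $t_J$.
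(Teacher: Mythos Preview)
Your proposal is correct and follows essentially the same route as the paper: Taylor-expand $T_p^H(F_1,\ldots,F_p)$ at $\varphi=0$ using field independence (axiom (vi)), invoke $\varphi$-locality (axiom (ix)) to truncate with an $O(\hbar^N)$ remainder, exploit the diagonal support of $f_i^{(j_i)}[0]$ to reduce to distributions on $\Omega^p$, and then read off properties (i)--(iv) from the corresponding axioms together with $\mathrm{Had}(M,h)$-equivariance \eqref{eq_3_time_ordered_had_equivariance}. The only presentational difference is that the paper Taylor-expands $T_p^H$ directly rather than first expanding each $A_i$ and then invoking multilinearity, but since field independence makes these two expansions coincide term by term, the content is the same.
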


\begin{proof}
Note that given any local functional $F\equiv \alpha_H(A) \in \mathcal{F}_{loc}(M,h,H)$ we can consider its Taylor series in $0$:
$$
\begin{aligned}
	F(\varphi) &= F(0)+\sum_{k\leq N} \frac{1}{k!}d^kF[0](\otimes^k \varphi)+R_N(F)\\
	&= F(0)+\sum_{k\leq N} \frac{1}{k!}\int_{M^k}f^{(k)}[0](x_1,\ldots,x_k)\varphi(x_1)\cdots \varphi(x_k)d\mu_g(x_1,\ldots,x_k) +R_N(F)  \end{aligned}
$$
where each compactly supported distribution $f^{(k)}[0](x_1,\ldots,x_k)$ has support along the diagonal. Using \cite[Theorem 5.2.3]{hormanderI}, by a slight abuse of notation we shall write $f^{(k)}[0](x_1,\ldots,x_k)=f^{(k)}[0](x_1)\delta(x_1,\ldots,x_k)$ where $f^{(k)}[0](x_1)\in C^{\infty}_c(M)$\footnote{The more general form of distributions $f^{(k)}[0](x_1,\ldots,x_k)$ does however contain derivatives, in which case $f^{(k)}[0](x_1,\ldots,x_k)=\sum f^{(k)}[0](x_1)\nabla^{j_1}(x_1)\cdots \nabla^{j_k}(x_k)\delta(x_1,\ldots,x_k)$, however using the Action Ward identity we can bring those derivative outside the time ordered product and later on put them back. The proof of the consistency of the Action Ward Identity axiom with the other renormalization conditions can be found in Proposition 3.1 pp. 21 of \cite{hollands2005conservation}}. Applying the Wick ordering operator $\alpha^{-1}_H$ yields
$$
	\alpha^{-1}_H(F)=F(0)+\sum_{k\leq N}\frac{1}{k!}\int_{M^k}f^{(k)}[0](x_1)\phi_{(M,h,H)}^k(x_1)d\mu_g(x_1) +R_N^H(F).
$$
Next we evaluate the Taylor expansion of the functional form of the time ordered product, taking into account field independence, we can replicate the procedure and Taylor-expand the microcausal functional $T^H_p\big(\alpha_H(A_1),\ldots,\alpha_H(A_p)\big)$ and subsequently applying $\alpha_H^{-1}$. Setting $F_i=\alpha_H(A_i)$, by Lemma 3.2 in \cite{BDF09} we can always assume that all their supports are small enough to be contained in a small neighborhood of the diagonal of $M^p$, otherwise we can use the causal factorization axiom and repeat the subsequent argument for some $T_{p'}$ having $p'<p$. For later convenience we shall assume that this neighborhood of the diagonal has the form $\Omega^p$ with $\Omega$ geodesically convex set for the metric $g$.
$$
\begin{aligned}
	& T^H_p(F_1,\ldots, F_p)(\varphi) \\
	&= \sum_{\substack{k\leq 2N-1\\j_1+\ldots+j_p= k}}\frac{\hbar^{\lfloor |J|/2\rfloor}}{J!}\int_{M^k}T^H_p\Big(F_1^{(j_1)},\ldots,F_p^{(j_p)} \Big)[0](x_1,\ldots,x_{k}) \varphi(x_1)\cdots\varphi(x_k) d\mu_g(x_1,\ldots,x_k)\\ 
	&\quad + \hbar^{N} R^H_{2N-1}\left(T^H_p(F_1,\ldots,F_p)\right)\\
	&= \sum_{\substack{k\leq 2N-1\\j_1+\ldots+j_p= k}}\frac{\hbar^{{\lfloor |J|/2\rfloor}}}{J!}\int_{\Omega^p}t_{J}[M,h](x_1,\ldots,x_p) \varphi^{j_1}(x_1)\cdots\varphi^{j_p}(x_p)d\mu_g(x_1,\ldots,x_p)\\ 
	&\quad + \hbar^{N} R^H_{2N-1}\left({T}^H_p(F_1,\ldots,F_p)\right).
\end{aligned}
$$
Where in the last step we combined the fact that the only nontrivial term of $d^{j_i}F_i[0]$ is, by the field independence and the fact that $T_p(0,A_2,\ldots,A_p)=0$, $f_i^{(j_i)}[0](x_1)\delta(x_1,\ldots,x_{j_i})$. Moreover, taking into account linearity and $\varphi$-locality of the time ordered product, the reminder in above expression is of higher order in $\hbar$. Applying $\alpha_H^{-1}$ to both sides of the above expansion, we arrive at
\begin{equation}\label{eq_t_coeff}
	\begin{aligned}
		& T_p[M,h](A_1,\ldots, A_p)(\varphi)\\
		&=\sum_{\substack{k\leq 2N-1\\j_1+\ldots+j_p= k}}\frac{\hbar^{\lfloor |J|/2\rfloor}}{J!}\int_{\Omega^k}t_{J}[M,h](x_1,\ldots, x_{p}) \phi^{J}_{(M,h,H)}(x_1,\ldots,x_p)d\mu_g(x_1,\ldots,x_p)\\ 
		& \quad+\hbar^N R_{2N}\big({T}_p(A_1,\ldots,A_p)\big),
	\end{aligned}
\end{equation}
where $\phi^{J}_{(M,h,H)}(x_1,\ldots,x_p)$ is the formal integral kernel of the $\mathfrak A_{\mu c}$-valued distribution associated to the multilocal functional $\phi^{j_1}_{(M,h)}(x_1)\cdots \phi^{j_p}_{(M,h)}(x_p)$. The coefficients 
$$
\begin{aligned}
    t_{J}[M,h](x_1,\ldots,x_p)\doteq \int & \frac{1}{J!} T^H_p\left(F_1^{(j_1)},\ldots,F_p^{(j_p)}\right)[0](x_1,X_{J_1},\ldots,x_p,X_{J_p})\\
    &\delta(X_{J_1})\cdots \delta(X_{J_p})d\mu_g(X_{J_1},\ldots,X_{J_p})
\end{aligned}
$$
will then be distributions in $\mathcal{D}'(\Omega^p)$. By \eqref{eq_3_time_ordered_had_equivariance} and the fact that the action of $\alpha_{H'-H}$ leaves each $t_J[M,h]$ invariant, we can assume that each $t_J[M,h]$ does not depend on the choice of the symmetric part $H$ of the Hadamard state. Furthermore, 
by $(x)$ in Definition \ref{def_3_TO_products}, we have
$$
	\mathrm{WF}\big(t_{J}[M,h]\big) \subset \Gamma^T_{p}(M,h).
$$
A similar reasoning shows that if we consider a compactly supported variation of the background geometry $h_s$, we obtain
$$
\begin{aligned}
    \mathrm{WF}\bigg(t_{J}[M,h_s]\big(f_1^{(j_1)},\ldots,f_p^{(j_p)} \big)\bigg)\subset \big\{ (s,x_1,\ldots,x_p,\zeta,\xi_1,\ldots, \xi_p) : \  (x_1,\ldots,x_p,\xi_1,\ldots, \xi_p )\in \Gamma^T_{p}(M,h_s)\big\}.
\end{aligned}
$$
By locality and covariance of the time ordered products, given $\chi :M\rightarrow M'$ causally preserving isometric embedding, assuming that $\Omega'=\chi(\Omega)$ remains causally convex, we must have
$$
	t_{J}[M',h'](\chi_{*}f)=t_{J}[M,\chi^{*}h'](f),
$$
for all $f \in\mathcal{D}(\Omega^k)$. The scaling properties of $T_p$ and those of $\phi^{k_i}_{(M,h,H)}$ can be used to straightforwardly compute the scaling properties of $t_{J}[M,h]$ under physical scaling.
\end{proof}

The above lemma will also be used later to precisely frame the problem of Epstein-Glaser renormalization, which consists in taking a time ordered product inductively constructed up to the diagonal and then extending it to the diagonal itself. In the next step we shall show that this problem is equivalent to start extend the distributions $t_{J}[M,h]$ defined up to the diagonal of $\Omega^p$ such that conditions $(i)-(iv)$ in \ref{lemma_3_Wick_expansion} remain valid for the extensions. For this we use an inductive construction, on the order $p$ of the product, up to the small diagonal of $M^p$. The techniques which we will 
employ for this construction are already present in the known literature (\textit{c.f.} \cite{BF97, hollands2002existence, hollands2005conservation, BDF09}), we will therefore go through them rapidly to adapt the notation and be specific just on those results concerning directly the newly introduced parameterized microcausal spectrum condition. As mentioned before, we suppose by induction that time ordered products have been defined up to some order $p$ and construct the order $p+1$.

\begin{lemma}\label{lemma_causal_cover}
    Let $I\subset \{1,\ldots,p+1\} $ be nonempty, then $\{U_I\}_{I\subset \{1,\ldots,p+1\}}$, with
    $$
        U_I=\left\{ (x_1,\ldots,x_{p+1}) \in M^{p+1} : \forall i\in I \space\  x_{i}, \notin J^-(x_j) \ \forall j \in I^c\right\},
    $$
    is a cover of $M^{p+1}\backslash \Delta_{p+1}(M)$.
\end{lemma}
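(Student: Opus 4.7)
The plan is to establish two things: that each $U_I$ is open in $M^{p+1}$, and that every tuple in $M^{p+1}\setminus \Delta_{p+1}(M)$ lies in some $U_I$ with $I$ a proper nonempty subset of $\{1,\ldots,p+1\}$ (the case $I=\{1,\ldots,p+1\}$ trivially gives $U_I=M^{p+1}$, so attention is on partitions with $I^c\neq\emptyset$).

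\textbf{Openness.} First I would rewrite
$$
U_I \;=\; \bigcap_{i\in I,\ j\in I^c}\pi_{ij}^{-1}\bigl(M^2\setminus \mathcal{J}^{-}\bigr),
$$
where $\mathcal{J}^{-}=\{(x,y)\in M^2:\ x\in J_M^{-}(y)\}$ is the causal relation and $\pi_{ij}:M^{p+1}\to M^2$ projects onto the $i$-th and $j$-th factors. Because $(M,g)$ is globally hyperbolic (Definition~\ref{def_1_global_hyperbolicity}), the causal relation $\mathcal{J}^{-}$ is closed in $M^2$ (a standard equivalent formulation of global hyperbolicity, consistent with compactness of the causal diamonds in Theorem~\ref{thm_1_global_hyperbolicity}). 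Each factor in the intersection is therefore open, and so is the finite intersection $U_I$.

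\textbf{Covering.} Next, I would fix $x=(x_1,\ldots,x_{p+1})\notin \Delta_{p+1}(M)$ and build a suitable $I$ combinatorially. Introduce on the index set $\{1,\ldots,p+1\}$ the equivalence relation $i\sim j\iff x_i=x_j$, and denote its equivalence classes by $[\,\cdot\,]$. Since $x$ is off the small diagonal, there are at least two such classes. On the (finite) quotient set I would define the relation
$$
[i]\preceq[j] \iff x_i\in J_M^{-}(x_j),
$$
which is well posed since it depends only on the values $x_i,x_j$. Reflexivity and transitivity are immediate from the corresponding properties of $J_M^{-}$; antisymmetry follows from the causality of $(M,g)$ (built into Definition~\ref{def_1_global_hyperbolicity}), which prevents nondegenerate closed causal curves. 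Thus $(\{[i]\},\preceq)$ is a finite poset with at least two elements and therefore admits a maximal class $[j_0]$. Setting $I\doteq [j_0]$, the subset $I\subsetneq\{1,\ldots,p+1\}$ is nonempty and proper. For every $j\in I^c$ one has $[j_0]\neq[j]$ and, by maximality of $[j_0]$, $[j_0]\preceq[j]$ fails, i.e.\ $x_{j_0}\notin J_M^{-}(x_j)$. Because every index in $I$ gives the same point $x_{j_0}$, this reads $x_i\notin J_M^{-}(x_j)$ for all $i\in I$ and $j\in I^c$, hence $x\in U_I$.

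\textbf{Main obstacle.} There is no real obstacle: the two ingredients actually used — closedness of $\mathcal{J}^{-}\subset M^2$ and antisymmetry of the induced partial order on classes — are both direct consequences of the global hyperbolicity of $(M,g)$, which has been standing throughout the chapter. The only point worth care is the step establishing antisymmetry, which is where the causality assumption of Definition~\ref{def_1_causal_spacetime} enters; everything else is bookkeeping on a finite poset.
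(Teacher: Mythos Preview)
Your argument is correct. The covering step via the poset of equivalence classes is clean: reflexivity and transitivity of $\preceq$ are immediate, antisymmetry uses only the causality of $(M,g)$ from Definition~\ref{def_1_global_hyperbolicity}, and picking a maximal class yields a proper nonempty $I$ with the required property. You also supply openness of each $U_I$, which the paper does not prove explicitly here but implicitly relies on when taking a partition of unity subordinate to $\{U_I\}$ in \eqref{eq_T_product_off_diag}.

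The paper's route is different: it builds $I$ and $I^c$ greedily, starting from two distinct points $x_1\neq x_2$ with $x_2\notin J^-(x_1)$, placing $1\in I^c$, $2\in I$, and then processing the remaining indices one by one (an index goes to $I$ iff its point lies outside $J^-(x_j)$ for all $j$ in the current $I^c$). The hidden invariant making this work is that every $x_j$ with $j\in I^c$ lies in $J^-(x_1)$, so by transitivity the test against all of $I^c$ collapses to the test against $x_1$ alone, and indices already placed in $I$ remain valid when $I^c$ grows. Your maximal-class argument makes the role of transitivity and causality explicit and avoids having to track this invariant; the paper's iterative construction is a bit more hands-on but reaches the same conclusion. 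Either approach is fine.
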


\begin{proof}
    Take any $(x_1,\ldots,x_{p+1})\in M^{p+1}\backslash \Delta_{p+1}$, then there will be at least two different points, say $x_1$, $x_2$ with $x_1\neq x_2$. Without loss of generality we can assume that $x_2 \notin J^-(x_1)$ then we have $\{1\} \subset I^c$, $\{2\} \subset I$. Next take $x_3$, either $x_3 \in J^-(x_1)$ or $x_3 \notin J^-(x_1)$. In the first case we add $\{3\}$ to $I^c$, in the second we add $\{3\}$ to $I$. Then we consider $x_4$, if $x_4 \notin J^-(x_j) \forall j \in  I^c $ we add $\{4\}$ to $I$, otherwise we add it to $I^c$. Iterating this procedure we arrive at $(x_1,\ldots,x_{p+1})\in U_I$ for the $I$ constructed. Note that the assumption $x_1\neq x_2$ implies $I \neq \emptyset$.
\end{proof}

Combining $\varphi$-locality with Lemma \ref{lemma_3_Wick_expansion} and the Action Ward Identity (condition $(viii)$ in Definition \ref{def_3_TO_products}), we will henceforth assume that all our local functionals are monomials in the field without derivatives. Using this we show that, up to some order in $\hbar$, the time ordered product $T_{p+1}$ can be uniquely constructed outside the diagonal of $M^{p+1}$ as a $\star$-product of lower order T-product using the causal factorization axiom. Employing \cite[Lemma 2.4]{acftstructure} and linearity of the time ordered products we can localize each $T(A_0,\ldots,A_p)$ inside a small enough neighborhood $\Omega^{p+1}$ of the diagonal in order to apply Lemma \ref{lemma_3_Wick_expansion}. As a result, modulo higher order terms in $\hbar$, we just have to compute the abstract integral kernels 
$$
	T_{p+1}[M,h]\big(\phi_{(M,h,H)}^{k_0},\ldots,\phi_{(M,h,H)}^{k_{p}}\big)(x_0,\ldots,x_{p})
$$
where the above expression is the algebra valued distribution 
$$
    (f_1,\ldots,f_p) \mapsto T_{p+1}[M,h]\big(\phi_{(M,h,H)}^{k_0}(f_1),\ldots,\phi_{(M,h,H)}^{k_{p}}(f_p)\big)
$$
with $(x_0,\ldots,x_{p}) \notin \Delta_{p+1}(\Omega)$. Notice that, by Lemma \ref{lemma_3_Wick_expansion} and property $(iii)$ in Definition \ref{def_2_Wick_quantum_powers}, the above relation is consistent with
$$
\begin{aligned}
	T^0_{p+1}[M,h]\big(\phi_{(M,h,H)}^{k_0}(x_0),\ldots,\phi_{(M,h,H)}^{k_{p}}(x_{p})\big)= \sum_{J\leq K} t_{J}^0[M,h](x_0,\ldots, x_{p})\phi^{j_0}_{(M,h,H)}(x_0)\cdots\phi^{j_{p}}_{(M,h,H)}(x_{p})
\end{aligned}
$$
with $t_{J}^0[M,h] \in \mathcal{D}'\big(\Omega^{p+1}\backslash\Delta_{p+1}(\Omega) \big)$ satisfying
\begin{itemize}
\item[$(i')$] \textbf{Locality and Covariance.} If $\chi :M\rightarrow M'$ is a causally preserving isometric embedding, $\Omega\subset M$ is as above and $f\in \mathcal{D}\big(\Omega^{p+1}\backslash \Delta_{p+1}(\Omega)\big)$, then we can assume $\Omega'=\chi(\Omega)$ to remain causally convex, and 
$$
	t^0_{J}[M',h'](\chi_{*}f)=t^0_{J}[M,\chi^{*}h'](f).
$$
\item[$(ii')$] \textbf{Scaling.} Each $t^0_{J}$ scales almost homogeneously with degree $	\sum_{i=0}^{p}(k_i-j_i)\frac{n-2}{2}$ under physical scaling of the background geometry, that is
$$
	S_{\lambda}t^0_{J}[M,h]=\lambda^{\sum_{i=0}^{p}(k_i-j_i)\frac{n-2}{2}}\big( t^0_{J}[M,h]+\sum_l \ln^l(\lambda) w^0_{l}[M,h]\big)
$$
where each $w^0_{l}[M,h]\in \mathcal{D}'\big(\Omega^{p+1}\backslash \Delta_{p+1}(\Omega)\big)$ scales almost homogeneously with respect to physical scaling.
\item[$(iii')$] \textbf{Microlocal Spectrum Condition.} The distributions $t^0_{J}[M,h]$ belong to $\mathcal{D}'\big(\Omega^{p+1}\backslash\Delta_{p+1}(\Omega) \big)$ and have 
\begin{equation}\label{eq_WF_t^0[M,h]}
    \mathrm{WF}\big(t^0_{J}[M,h]\big)\subset {\Gamma^{T}_{p+1}(M,h)}
\end{equation}
\item[$(iv')$] \textbf{Parameterized Microlocal Spectrum Condition.} If $\mathbb{R}^d\ni s \mapsto h_s \in \Gamma^{\infty}(HM)$ is any compactly supported variation of the background geometry, then 
$t^0_{J}[M,h_s]\in  \mathcal{D}'\big(\Omega^{p+1}\backslash\Delta_{p+1}(\Omega) \big)$ and have
\begin{equation}\label{eq_WF_t^0[M,h_s]}
\begin{aligned}
    &\mathrm{WF}\bigg(t^0_{J}[M,h_s]\big(f_1^{(j_1)},\ldots,f_p^{(j_p)} \big)\bigg)\\
    &\quad \subset \big\{ (s,x_1,\ldots,x_p,\zeta,\xi_1,\ldots, \xi_p) : \  (x_1,\ldots,x_p,\xi_1,\ldots, \xi_p )\in \Gamma^T_{p}(M,h_s)\big\}.
\end{aligned}
\end{equation}
\end{itemize} 

If $(x_0,\ldots,x_{p}) \in U_I\cap\Omega^{p+1}$, set
$$
\begin{aligned}
    	T_{p+1}[M,h]\big(\phi_{(M,h,H)}^{k_0}(x_0),\ldots,\phi_{(M,h,H)}^{k_{p}}(x_{p})\big)\doteq  & T_{|I|}[M,h]\bigg(\bigotimes_{i\in I} \phi_{(M,h,H)}^{k_i}(x_i) \bigg)\\
     &\star T_{|I^c|}[M,h]\bigg(\bigotimes_{j\in I^c} \phi_{(M,h,H)}^{k_j}(x_j) \bigg).
\end{aligned}
$$
For arbitrary points in $\Omega^{p+1}\backslash \Delta_{p+1}(\Omega)$, let $\psi_I$ be a partition of unity of $\Omega^{p+1}\backslash \Delta_{p+1}(\Omega)$ and let
\begin{equation}\label{eq_T_product_off_diag}
\begin{aligned}
	T^0_{p+1}[M,h]\big(\phi_{(M,h,H)}^{k_0}(x_0),\ldots,\phi_{(M,h,H)}^{k_{p}}&(x_{p})\big)\doteq \sum_I  \psi_I(x_0,\ldots,x_{p}) \\
	& T_{|I|}[M,h]\bigg(\bigotimes_{i\in I} \phi_{(M,h,H)}^{k_i}(x_i) \bigg)\star T_{|I^c|}[M,h]\bigg(\bigotimes_{j\in I^c} \phi_{(M,h,H)}^{k_j}(x_j) \bigg).
\end{aligned}
\end{equation}
one can show, for instance see \cite[Proposition 4.2, Theorem 4.3]{BF97}, that the above quantity is independent from the partition of unity used and moreover satisfies all requirements of Definition \ref{def_3_TO_products} as long as we are off diagonal. \eqref{eq_T_product_off_diag} is usually called the \textit{unrenormalized} time ordered product, meaning that the renormalization procedure is aimed at extending this prescription to the diagonal of $\Omega^{p+1}$. Therefore, renormalization of time ordered products is equivalently framed as extending the distributions $t_{J}^0[M,h]$, satisfying properties $(i')-(iv')$ above, to some distributions $t_{J}[M,h]$ which satisfies properties $i)-(iv)$ in Lemma \ref{lemma_3_Wick_expansion}. By the result of Section \ref{section_dang} one could proceed similarly to \cite{BF97} and extend the distribution $t_J[M,h]$ to the diagonal right away (in particular see Theorem 5.2 and 5.3 in \cite{BF97}), however if the scaling degree were too big, the extension would not be unique and would therefore fail to be local due to the presence of a fixed cutoff function localized in a neighborhood of the diagonal. To circumvent this problem we follow \cite{hollands2002existence}. The idea is to take full advantage of the microlocal spectrum condition and expand the generic unextended distribution $t^0_{J}[M,h](x,\cdot)$ in suitable truncated Taylor series of the metric deformation parameter, so that the remainder of this expansion can be extended directly (hence locally and covariantly) while the other terms can be written in relative coordinates about the point $x\in \Omega$ transforming the covariance requirement into an $SO(1,n-1)$-invariance requirement. \\

In the following arguments we shall write $t^0[M,h]$ in place of $t^0_{J}[M,h]$ meaning that the argument ought to be repeated for each $J=(j_1,\ldots,j_k)$ without substantial modifications. We start by noting that if we fix a point $x\equiv x_0\in \Omega$, then there exists normal coordinates $( \{\dot{x}_1^a\},\ldots,\{\dot{x}^a_p\})$ for the point $(x_1,\ldots, x_{p})\in \Omega^{p}$ by setting, for each $1\leq i\leq p$,
\begin{equation}\label{eq_normal_coordinates}
	\dot{x}_{i}^a\equiv \bar{e}_{x_0}(x_i)= -e^{a}_{\mu}(x_i)g^{\mu\nu}(x_i)\nabla^{(i)}_{\nu}\sigma(x_0,x_i),
\end{equation}
where $e^a_{\mu}$ is a vielbein, $\sigma$ the geodesic distance function between its arguments and the covariant derivative is carried out for the local coordinates $\{x_i^{\mu}\}$. We remark that inside any convex normal neighborhood of $x_0$, $\Omega$, the mapping defined above $(e_{x_0})^p:\Omega^p\ni(x_1,\ldots,x_{p}) \mapsto (\dot{x}_{1},\ldots,\dot{x}_p)\in V^p$ is a diffeomorphism into the open subset $V^p\subset T_{x_0}\Omega^p$. The presence of the vielbein 
is key in transforming the covariance requirement for the extension into $SO(1,n-1)$-invariance. Let now $C_{x_0}\doteq \{x_0\}\times (\Omega\backslash \{x_0\})^p$, by combining the microlocal spectrum condition with Theorem 8.2.4. in \cite{hormanderI}, we see that the distributions $t^0_{J}[M,h]$ admits a restriction to $C_{x_0}$, since $\Gamma^T(M,h)\cap N^{*}C_{x_0} =\emptyset$, $t^0_{J}[M,h](x_0,\cdot)$ which is smooth in the first variable. The coordinates $(x,\dot{x}_1,\ldots, \dot{x}_p) $ are suited to identify the diagonal of $M^{p+1}$ as the submanifold $(x,0,\ldots,0)$ setting up the extension problem for the distributions $t^0[M,h]$ in terms of $\S$ \ref{section_dang}.\\

Next, suppose that there is a compact subset $K$ contained in $V$ and let $\chi\in \mathcal{D}(V)$ be a cutoff function on $V$ with support contained in $K$, then consider the compactly supported vector field $X=\sum_iX_i=\sum_{i=1}^p\chi(\dot{x}_{i}^a)\dot{x}_{i}^a\partial/\partial x_i^a$ and denote its logarithmic flow by $F^{X}_{\rho}$. Define a compactly supported variation of $h$ by $h_{\rho}=(g_{\rho},m^2,\kappa)$, $g_{\rho}(\dot{x}_i)=\rho^{-2}\circ (F^{X_i}_{\ln(\rho)})^{*}g(x)$, we obtain $(g_{\rho})_{\mu\nu}(\dot{x}_i)=g_{\mu\nu}(\rho\dot{x}_i)$. We remark that this vector field is Euler according to \eqref{eq_3_Euler_vector_field}, and that $V^p$ is invariant under the scaling induced by each $X_i$, \textit{i.e.} $\lambda V^p\backslash \{0\} \subset V^p\backslash \{0\} $ for all $\lambda \in (0,1]$. Finally, consider $t^0[h_{\rho}](x,f)$ for some $f \in \mathcal{D}\big(V^p \backslash\{0\}\big)$, by condition $(iv)$ in Lemma \ref{lemma_3_Wick_expansion}, the latter is a smooth function $\mathbb{R} \times \Omega \rightarrow \mathbb{R}$. We thus define distributions on $C_{x_0}$ by 
\begin{equation}\label{eq_3_T_tau_development}
	\tau^0_k[M,h](x_0,\cdot) \doteq  \frac{d^k}{d\rho^k}\Big|_{\rho=0} t^0[M,h_{\rho}](x_0,\cdot).
\end{equation}
then 
\begin{align}\label{eq_3_T_development}
	t^0[M,h](x_0,\cdot) &= \sum_{k=0}^N \tau^0_{k}[M,h](x_0\cdot)+ r^0_N[M,h](x_0,\cdot),
\end{align}
\begin{align}\label{eq_3_r_term}
	r^0_N[M,h](x_0,\cdot) &= \frac{1}{N!}\int_0^1 (1-\rho)^N \frac{d^{N+1} t^0[M,h_{\rho}]}{d\rho^{N+1}}(x_0,\cdot) d\rho.
\end{align}

Those distributions enjoy the following properties (Theorem 4.1 in \cite{hollands2002existence}):

\begin{theorem}\label{thm_3_TO_renormalization_1}
Let $x\in \Omega$ be a fixed point, then
\begin{itemize}
    \item[$(i)$] The distributions $\tau^0_{k}[M,h](x,\cdot)$ and $r^0_N[M,h](x,\cdot)$ are well defined distribution on $V^{p}\backslash \{0\}$, are local and covariant in the sense that for each $\chi:M\rightarrow M'$ causally convex isometric embedding, $\chi^{*}\tau^0_k[M',h'](x,\cdot)=\tau^0_k[M,\chi^{*}h'](x,\cdot)$ and $\chi^{*}r^0_N[M',h'](x,\cdot)=r^0_N[M,\chi^{*}h'](x,\cdot)$. Their wave front set are contained in 
$$
\begin{aligned}
    \bigg\{&(\dot{x}_1,\ldots,\dot{x}_{p},\eta_1,\ldots,\eta_{p})\in \dot{T}^*(V^{p}\backslash \{0\}) \\
    &: \Big(x,e_{x}(\dot{x}_1),\ldots,e_x(\dot{x}_{p}),\xi-\sum_i \frac{\partial \bar{e}_x}{\partial \dot{x}_i}\eta_i ,\frac{\partial \bar{e}_x}{\partial \dot{x}_1}\eta_1,\ldots,\frac{\partial \bar{e}_x}{\partial \dot{x}_p}\eta_{p}\Big)\in \Gamma^T_{p+1}(M,h)\bigg\};
\end{aligned}
$$
    \item[$(ii)$] For any background geometry $(M,h)$ the integral kernel of each $\tau^0_k$ may be represented as
\begin{equation}\label{eq_tau_expression}
    \tau^0_k[M,h](x,\dot{x}_1,\ldots,\dot{x}_p)= \sum_rC^{a_1\ldots a_r}[h](x)e_x^{*}u^0_{a_1\ldots a_r}(\dot{x}_1,\ldots,\dot{x}_p) ,
\end{equation}
where each $C^{a_1\ldots a_r}[h](x)$ is a covariant tensor constructed out of the metric $g$, curvature tensor $R$ and their covariant derivative at $x$ up to some fixed order; $u^0_{a_1,\ldots,a_r}$ are tensor valued $SO(1,n-1)$-equivariant distributions defined everywhere except at the origin of $V^p$;
    \item[$(iii)$] $\tau^0_{k}[M,h]$ and $r^0_N[M,h]$ scale almost homogeneously, under physical scaling, with scaling degree equal to that of $t[M,h]$; we call this number $D$ in the sequel;
    \item[$(iv)$] the distributions $u^0_{a_1\ldots a_r}$ and $r^0_N[M,h](x,\cdot)$ belong respectively to $E_{k-D}( V^p\backslash \{0\})$ and $E_{N-D}(V^p\backslash \{0\})$; that is, given the logarithmic flow $F_{\lambda}^X$ of the Euler vector field $X= \sum_{j=1}^p \sum_a \dot{x}^a_j \frac{\partial}{\partial \dot{x}^a_j}$, there are positive constants $c, \ c'$ such that for all $f\in\mathcal{D}(V^p\backslash \{0\})$
$$
	\sup_{\lambda \in (0,1]} \left\vert \lambda^{k-D} \big(F^X_{\lambda}\big)^{*} u^0(f)\right\vert\leq c, \ \sup_{\lambda \in (0,1]} \left\vert \lambda^{N+1-D} \big(F^X_{\lambda}\big)^*r^0_N[M,h](x,f)\right\vert\leq c'
$$
moreover, $\forall \chi\in \mathcal{D}(V^p\backslash{0}), \ q\in \mathbb{N}, \ \Upsilon\subset \mathbb{R}^n, \ \Upsilon'\subset \mathbb{R}^n$ such that $\mathrm{WF}(u^0)\cap \mathrm{supp}(\chi)\times \Upsilon =\emptyset$, $\mathrm{WF}(r^0_N(x))\cap \mathrm{supp}(\chi)\times \Upsilon'=\emptyset$, there are constants $C,\ C' >0$ such that
$$
     \sup_{\lambda \in (0,1]} \left\Vert \lambda^{k-D} \big(F(\lambda)^X\big)^*u^0\right\Vert_{\chi,q,\Upsilon}\leq C, \ \sup_{\lambda \in (0,1]} \left\Vert \lambda^{N+1-D} \big(F(\lambda)^X\big)^*r^0_N[M,h](x,\cdot)\right\Vert_{\chi,q,\Upsilon'}\leq C',
$$
\end{itemize}
\end{theorem}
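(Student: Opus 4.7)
My plan is to treat the four assertions in order, exploiting systematically the parameterized microlocal spectrum condition $(iv')$ in \eqref{eq_WF_t^0[M,h_s]} together with the locality and covariance property $(i')$ of the unrenormalized coefficient distributions. For $(i)$, I would first observe that by $(iv')$ the family $\rho \mapsto t^0[M,h_\rho](x,\cdot)$ defines a distribution on $\mathbb{R} \times (\Omega^{p+1} \setminus \Delta_{p+1}(\Omega))$ whose wave front set has empty covector component along $d\rho$ at $\rho=0$, so iterated $\rho$-derivatives at $\rho=0$ yield well-defined distributions by \cite[Theorem 8.2.4]{hormanderI}, and the Taylor remainder \eqref{eq_3_r_term} is well-defined as an $\mathbb{R}$-integral of a jointly smooth family of distributions. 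Locality and covariance descend to $\tau^0_k$ and $r^0_N$ because the deformation $h \mapsto h_\rho$ uses only the natural normal-coordinate construction built from $h$ itself, hence is equivariant under any causality-preserving isometric embedding. The stated wave front bound in the $\dot{x}$-coordinates is finally obtained by pulling back the bound of $(iv')$ through the diffeomorphism $(x,x_1,\ldots,x_p) \mapsto (x,\dot x_1,\ldots,\dot x_p)$ induced by the vielbein at $x$, the Jacobian yielding precisely the transformation law $\xi \mapsto \xi - \sum_i (\partial \bar e_x/\partial \dot x_i)\eta_i$ on the cotangent fibre over $x$.

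For $(ii)$ the key ingredient is that $g_\rho(\dot x)$ is a smooth one-parameter deformation whose Taylor coefficients in $\rho$ at $\rho=0$ are symmetric polynomials in the partial derivatives of $g$ at $x$ read in normal coordinates. By Proposition \ref{proposition_2_covariant_coordinates} these symmetrised derivatives are polynomials in the components of the Riemann tensor, its covariant derivatives, the mass and the coupling constant evaluated at $x$ and contracted through the vielbein. Since the whole construction is $\mathrm{Diff}(M)$-covariant and all the $x$-dependence is concentrated in these curvature-like coefficients, after separating them the remaining $\dot x$-dependent factors must transform covariantly under the residual symmetry group that fixes $x$ and preserves the Minkowski form $\eta$, which is exactly $SO(1,n-1)$; this yields the decomposition \eqref{eq_tau_expression} with $u^0_{a_1\ldots a_r}$ tensor-valued and $SO(1,n-1)$-equivariant on $V^p \setminus \{0\}$.

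For $(iii)$ I would invoke $(ii')$: $t^0[M,h]$ scales almost homogeneously of degree $D$ under physical scaling $S_\lambda$. Because $h_\rho$ is built by a transversal coordinate rescaling of the metric which commutes with physical scaling of $(g,m^2,\kappa)$ at the point $x$, and because $\frac{d^k}{d\rho^k}|_{\rho=0}$ and the remainder integral are $\lambda$-independent operations, both $\tau^0_k$ and $r^0_N$ inherit the same almost homogeneous scaling degree $D$. For $(iv)$, the central observation is that the coordinate scaling $\dot x \mapsto \lambda \dot x$ induced by the Euler field $X$ is, through the definition $g_\rho(\dot x) = \rho^{-2}(F^X_{\ln\rho})^{*}g(x)$, equivalent to the replacement $\rho \mapsto \lambda \rho$ composed with a physical rescaling. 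Combining this identification with almost homogeneity of degree $D$ in physical scaling and with the $\rho^k$ prefactors produced by differentiating $k$ times in $\rho$ (respectively by the $(1-\rho)^N$ weight in the remainder), one reads off the homogeneous exponents $k-D$ for the $u^0$ factors and $N+1-D$ for $r^0_N$. The H\"ormander seminorm estimates follow from the same argument applied in a conic neighbourhood avoiding the wave front set, since the $(iv')$ bound is uniform in $\rho \in [0,1]$.

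The principal technical obstacle, to my eye, is keeping the wave front bounds uniform in the deformation parameter $\rho$ when differentiating in $\rho$ and when composing with the exponential diffeomorphism at $x$; this is precisely where the parameterized spectrum axiom $(xi)$ of Definition \ref{def_3_TO_products} does work that the ordinary $\mu$SC cannot do alone, and it is also what ensures that the extension we obtain after bounding the remainder by Theorem \ref{thm_3_w_homogeneous_extension_2} will still satisfy P$\mu$SC rather than merely $\mu$SC.
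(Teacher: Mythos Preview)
Your proposal is correct and follows essentially the same approach as the paper. The paper defers items $(i)$--$(iii)$ to Theorem 4.1 of \cite{hollands2002existence} and the argument of Theorem 3.1 in \cite{khavkine2016analytic}, while proving $(iv)$ in detail; your sketch of $(iv)$ identifies exactly the paper's key identity, namely that pulling back $\tau^0_k$ along the Euler flow $F^X_\lambda$ and using diffeomorphism covariance of $t^0$ converts the coordinate scaling into a physical scaling, yielding $\lambda^{-k}(F^X_\lambda)^*\tau^0_k[M,h]=\tau^0_k[M,S_{1/\lambda}h]$, from which the weak-homogeneity bounds follow by the almost-homogeneous scaling law $(ii')$. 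One minor remark: for the H\"ormander seminorm part your justification ``since the $(iv')$ bound is uniform in $\rho\in[0,1]$'' is slightly off---the paper does not invoke uniformity in $\rho$ but rather rewrites the scaled distribution as a $\lambda$-independent object (up to logarithms) via the same identity, and then the seminorm bound is immediate; the same holds for $r^0_N$ after the substitution $\rho\to\rho/\lambda$ in the remainder integral.
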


\begin{proof}
The arguments used in the proof are essentially those used in Theorem 4.1 of \cite{hollands2002existence} by means of which one can show items $(i)-(iii)$, we just remark that combining \eqref{eq_tau_expression} and the argument in the proof of Theorem 3.1 in \cite{khavkine2016analytic} directly obtain that the coefficients $C^{a_1\cdots a_r}[h](x)$ are of the form claimed in $(ii)$. We are therefore left with $(iv)$. Note that the action of the logarithmic flow $\big(F^X_{\lambda}\big)^*$ of the Euler vector field $X$, satisfies
$$
\begin{aligned}
    &\lambda^{-k} \sum C[h](x)e_x^*u^0(\lambda\dot{x}_1,\ldots,\lambda\dot{x}_p)= \lambda^{-k}\big(F^X_{\lambda}\big)^* \Big( \sum C[h](x)e_x^*u^0(\dot{x}_1,\ldots,\dot{x}_p)\Big)=\\
    &=\lambda^{-k}\big(F^X_{\lambda}\big)^* \Big( \tau^0_k[M,h](x,\cdot)\Big)= \frac{d^k}{d(s/\lambda)^k}\bigg|_{\rho=0}  t^0[M, ((s/\lambda)^{-2} \big(F^X_{s}\big)^*g, (s/\lambda)^2m^2,\kappa)](x,\cdot)\\
    &= \frac{d^k}{d\rho^k}\bigg|_{\rho=0}  t^0[M,(\lambda^{-2} g_s, \lambda^2m^2,\kappa)](x,\cdot) = \tau^0_k[M,S_{1/\lambda}h](x,\cdot)
\end{aligned}
$$
where we used the covariance of $t^0$ in the first and fourth equality to factor in and out the $\big(F^X_{\lambda}\big)^*$ term. Moreover, if $\lambda\in (0,1]$, $1/\lambda \in [1,\infty)$, thus by almost homogeneous scaling of $t^0[M,h]$ under physical scaling, $\tau^0_k[M,S_{1/\lambda}h](x,\cdot)= (1/\lambda)^D \big(\tau^0_k[M,h](x,\cdot) + \sum_i\ln^i(\lambda)w_i[M,h]\big)$, therefore we get that
\begin{equation}\label{eq_w-homogenuity_estimate_tau}
    \lambda^{-r}\sum C[h](x)e_x^*u^0(\lambda\dot{x}_1,\ldots,\lambda\dot{x}_p)= \lambda^{k-D-r} \big(\tau^0_k[M,h](x,\cdot) + \sum_i\ln^i(\lambda)w_i[M,h]\big).
\end{equation}
The quantity $\lambda^{-r} \langle\tau^0_k[M,h](x,\lambda\cdot), f \rangle$ is bounded for every $f\in \in\mathcal{D}(V^p\backslash \{0\})$ whenever $-r+k-D>0$. Similarly, when estimating the other seminorms we can take advantage of \eqref{eq_w-homogenuity_estimate_tau}: given any $\chi, \ l, \ V$ as above, we see that 
$$
\begin{aligned}
    &\left\Vert \lambda^{k-D} \big(F(\lambda)^X\big)^*u^0\right\Vert_{\chi,q,V}\leq C(h) \lambda^{-r} \Big\langle\chi\tau^0_k[M,h](x,\lambda\cdot), e^{i(\sum_j\dot{x}_j\eta_j)} \Big\rangle \\
    &= \lambda^{k-D-r} \Big( \Big\langle\chi \tau^0_k[M,h](x,\cdot) , e^{i(\sum_j\dot{x}_j\eta_j)} \Big\rangle +\sum\ln(\lambda) \Big\langle\chi w_i[M,h], e^{i(\sum_j\dot{x}_j\eta_j)} \Big\rangle\Big).
\end{aligned}
$$
Each of the terms inside pairings is rapidly decreasing along the direction in the cone $V$ by hypothesis and is independent of $\lambda$; then again whenever $-r+k-D>0$ the above quantity is bounded, thus we conclude that the scaled distributions $\tau^0_k[M,h](x,\cdot)$, and therefore each of the $u^0$ belong in $E_{k-D}( V^p\backslash \{0\})$. The argument for the proof for the distributions $r^0[M,h]$ is similar for boundedness in the H\"ormander topology follows from the following calculation:
$$
\begin{aligned}
    &N! r^0_N[M,h](x,\lambda\dot{x}_1,\ldots,\lambda\dot{x}_p)= \int_0^1 (1-\rho)^N \frac{d^{N+1}}{d\rho^{N+1}}\big(F_{\lambda}^X\big)^*t^0[M,h_{\rho}](x,\dot{x}_1,\ldots,\dot{x}_p)d\rho\\
    &=  \int_0^1 (1-\rho)^N \frac{d^{N+1}}{d\rho^{N+1}} t^0\big[M,\big(s^{-2} \big(F_{s}^X\big)^*g,s^2m^2,\kappa\big)\big](x,\dot{x}_1,\ldots,\dot{x}_p)d\rho   \\
    &=\lambda^{N} \int_0^{\lambda} (1-s/\lambda)^N  \frac{d^{N+1}}{d\rho^{N+1}} t^0[M,S_{1/\lambda}h_{\rho}](x,\dot{x}_1,\ldots,\dot{x}_p)d\rho\\
\end{aligned}
$$
finally, using the almost homogeneous scaling of $r^0_{N}$ with respect to the scaling of the background geometry, we obtain 
\begin{equation}\label{eq_w-homogenuity_estimate_r}
\begin{aligned}
    \lambda^{-r}r_{N+1}^0[M,h](x,\lambda\dot{x}_1,\ldots,\lambda\dot{x}_p)&= \lambda^{N-D-r} \Big( \int_0^{\lambda} (1-s/\lambda)^N  \frac{d^{N+1}}{d\rho^{N+1}} t^0[M,h_{\rho}](x,\dot{x}_1,\ldots,\dot{x}_p) \frac{1}{\lambda}d\rho \\ 
    &\quad+ \sum_j\ln(\lambda)^j\int_0^{\lambda} (1-s/\lambda)^N   \frac{d^{N+1}}{d\rho^{N+1}} w^0[M,h_{\rho}](x,\dot{x}_1,\ldots,\dot{x}_p) \frac{1}{\lambda}d\rho\Big).
\end{aligned}
\end{equation}
Notice that the integral contributions give rise to distributions in the variables $(\lambda,x, \dot{x}_1,\ldots,\dot{x}_p)$, which however are smooth in $\lambda$, therefore when calculating
$$
    \sup_{\lambda \in (0,1]} \left\vert \lambda^{N-D} \big(F^X_{\lambda}\big)^*r^0_N[M,h](x,f)\right\vert\leq c' ,\ \sup_{\lambda \in (0,1]} \left\Vert \lambda^{N-D} \big(F(\lambda)^X\big)^*r^0_N[M,h](x,\cdot)\right\Vert_{\chi,q,\Upsilon'}\leq C'
$$
we can estimate the $\lambda$ dependence with appropriate constants. Thus $r^0_{N}[M,h](x,\cdot)\in E_{N-D}(V^p\backslash\{0\})$. 
\end{proof}

This proof allow us to extend the distributions $u^0, \ r^0_{N+1}[M,h] \in \mathcal{D}'(V^p\backslash\{0\})$ to distributions in $\mathcal{D}'(V^p)$ according to \Cref{thm_3_w_homogeneous_extension_1} and \Cref{thm_3_w_homogeneous_extension_2}. In this case we need not a precise estimation of the wave front set of the extension, thus in $(iv)$ of \Cref{thm_3_TO_renormalization_1} we could have just controlled boundedness of $u^0$, $r^0[M,h](x)$ in the standard topology of $\mathcal{D}'(V^p\backslash 0)$, however, the arguments developed there will prove important for the next result.

\begin{theorem}\label{thm_3_TO_renormalization_2}
	Let $\tau^0_k[M,h](x,\cdot)$, $r^0_N[M,h](x,\cdot) \in \mathcal{D}' (V^p\backslash \{0\})$ be the distributions defined as in \eqref{eq_3_T_tau_development} and \eqref{eq_3_T_development}, then there exists extensions $\tau_k[M,h](x,\cdot)$, $r_N[M,h](x,\cdot)$ such that
	$$	
		t_{J}[M,h](x,\cdot) \doteq  \sum_{k=0}^N \tau_{k}[M,h](x,\cdot)+ r_N[M,h](x,\cdot)
	$$
	satisfies all requirements given in Lemma \ref{lemma_3_Wick_expansion}. Moreover, if $\mathbb{R}^d \ni s \mapsto h_s$ is a compactly supported variation of the background geometry, the distributions $\tau^0_k[M,h_s](x,\cdot)$, $r^0_N[M,h_s](x,\cdot)\in \mathcal{D}'\big(\mathbb{R}^d\times (V^p\backslash \{0\})\big)$ admits extensions $\tau_k[M,h_s](x,\cdot)$, $r_N[M,h_s](x,\cdot)$ such that
	$$	
		t_{J}[M,h_s](x,\cdot) \doteq  \sum_{k=0}^N \tau_{k}[M,h_s](x,\cdot)+ r_N[M,h_s](x,\cdot)
	$$
	with $t_{J}[M,h_0]=t_{J}[M,h]$ and
	$$
	\mathrm{WF}(t_{J}[M,h_s])\subset \{(s,x,x_1,\ldots,x_{p},\zeta,\xi_1,\ldots,\xi_{p+1}): \ (x_1,\ldots,x_{p+1},\xi_1,\ldots,\xi_{p+1}) \in \Gamma^T_{p+1}{(M,h_s)}\}.
	$$
\end{theorem}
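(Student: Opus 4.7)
I would combine the decomposition of Theorem~\ref{thm_3_TO_renormalization_1} with the extension machinery of Section~\ref{section_dang} to extend $\tau^0_k[M,h]$ and $r^0_N[M,h]$ separately across the origin of $V^p$, then reassemble the sum $t_J[M,h]$ on $\Omega^{p+1}$ via the normal-coordinate diffeomorphism $(e_{x_0})^p$. The singular submanifold is $I=\{0\}\subset V^p$ and the Euler field $X=\sum_{i,a}\dot x_i^a\,\partial/\partial \dot x_i^a$ is $SO(1,n-1)$-invariant, which matters for preserving equivariance.

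\textbf{Step 1 (extension of $\tau^0_k$).} By point $(ii)$ of Theorem~\ref{thm_3_TO_renormalization_1} each $\tau^0_k$ reads $\sum_r C^{a_1\ldots a_r}[h](x)\,e_x^{*}u^0_{a_1\ldots a_r}$, the coefficients being smooth covariant polynomials in curvature tensors at $x$ while the $u^0_{a_1\ldots a_r}$ are geometry-independent, $SO(1,n-1)$-equivariant, tensor-valued distributions on $V^p\backslash\{0\}$. It therefore suffices to extend the $u^0_{a_1\ldots a_r}$. Point $(iv)$ places them in $E_{k-D,\{0\},\Gamma}(V^p\backslash\{0\})$ for a cone $\Gamma$ obtained by pulling back $\Gamma^T_{p+1}(M,h)$ through $(e_x)^p$. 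When the scaling weight exceeds the critical value I would invoke Theorem~\ref{thm_3_w_homogeneous_extension_1}, which yields a unique, hence automatically $SO(1,n-1)$-equivariant, extension; otherwise I would use Theorem~\ref{thm_3_w_homogeneous_extension_2} and average over the compact group $SO(1,n-1)$, an operation that preserves the weakly-homogeneous class because $X$ is invariant under that group.

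\textbf{Step 2 (extension of the remainder) and verification of Lemma~\ref{lemma_3_Wick_expansion}.} Choosing $N$ large enough that the weight $N-D$ is above the critical value, point $(iv)$ gives $r^0_N[M,h](x,\cdot)\in E_{N-D,\{0\},\Gamma}(V^p\backslash\{0\})$ with $\Gamma$ satisfying the conormal landing condition~\eqref{eq_3_conormal_landing_condition}, because the explicit form of $\Gamma^T_{p+1}(M,h)$ at the total diagonal only accumulates covectors transverse to $\{0\}$, exactly as for the Feynman propagator in Lemma~\ref{lemma_Feynman_conormal_landing}. Theorem~\ref{thm_3_w_homogeneous_extension_conormal} then produces a unique $r_N[M,h](x,\cdot)\in\mathcal{D}'(V^p)$ with wave front set inside $\mathrm{WF}(r^0_N)\cup N^{*}\{0\}$; uniqueness transfers the local covariance of $r^0_N$ to $r_N$. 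Setting $t_J[M,h]\doteq\sum_{k=0}^N\tau_k[M,h]+r_N[M,h]$ and pulling back to $\Omega^{p+1}$, the locality and covariance axiom $(i)$ of Lemma~\ref{lemma_3_Wick_expansion} follows from the covariance of the $C^{a_1\ldots a_r}[h]$, the geometry-independence of the $u_{a_1\ldots a_r}$, and the uniqueness of $r_N$; almost-homogeneous scaling $(ii)$ is preserved by the extension theorems, since the logarithmic terms generated by extending a weakly-homogeneous distribution at integer negative weight are precisely the almost-homogeneous freedom; and the microlocal spectrum condition $(iii)$ follows from the wave-front bound $\mathrm{WF}(t_J[M,h])\subset\mathrm{WF}(t_J^0[M,h])\cup N^{*}\Delta_{p+1}(\Omega)$, combined with the fact that $N^{*}\Delta_{p+1}(\Omega)$ is already contained in $\Gamma^T_{p+1}(M,h)$.

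\textbf{Step 3 (parameterized version).} This is the step I expect to be the main obstacle, and where the axiom system of the present thesis genuinely differs from the analytic framework of \cite{hollands2002existence}. Given a compactly supported variation $h_s$, $s\in\mathbb{R}^d$, I would repeat the Taylor expansion of Theorem~\ref{thm_3_TO_renormalization_1} with the Euler field $X$ extended trivially in the $s$-direction on $\mathbb{R}^d\times V^p$, so that the singular submanifold becomes $\mathbb{R}^d\times\{0\}$. The crucial point to verify is that $\tau^0_k[M,h_s]$ and $r^0_N[M,h_s]$ lie in the parameterized weakly-homogeneous class over $\mathbb{R}^d\times(V^p\backslash\{0\})$ with wave front sets that do not contain covectors of the form $(s,\dot x;\zeta,0)$ with $\zeta\neq 0$; this in turn follows from the parameterized microlocal spectrum condition satisfied inductively by the lower-order time-ordered products and from the smoothness of $s\mapsto h_s$, by the same smoothness-in-parameter argument used for the Feynman propagator in Lemma~\ref{lemma_Feynman_conormal_landing}. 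Applying Theorems~\ref{thm_3_w_homogeneous_extension_1}--\ref{thm_3_w_homogeneous_extension_conormal} along $\mathbb{R}^d\times\{0\}$ then yields extensions whose added wave-front contribution $N^{*}(\mathbb{R}^d\times\{0\})$ consists solely of $\dot x$-covectors, preserving the required form. Specialising to $s=0$ recovers the non-parametric construction of the previous steps, completing the proof.
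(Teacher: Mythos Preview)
Your overall architecture matches the paper's proof closely: extend the $u^0$'s to get $\tau_k$, extend $r^0_N$ uniquely by taking $N$ large, verify the conormal landing condition to control the wave front at the diagonal, then rerun everything with the parameter $s$. Two points deserve correction, one serious and one structural.

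\textbf{The Lorentz group is not compact.} In Step~1 you propose, in the non-unique case, to ``average over the compact group $SO(1,n-1)$'' to restore equivariance of the extension $u$. But $SO(1,n-1)$ is the Lorentz group and is non-compact, so Haar averaging is not available. The paper does not average; it invokes Lemma~4.1 of \cite{hollands2002existence}, which produces an $SO(1,n-1)$-equivariant extension by a different argument (essentially choosing the Taylor subtraction map $I_m$ in a Lorentz-covariant way). You need to replace the averaging step with a reference to that lemma or reproduce its content.

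\textbf{The conormal landing condition for $r^0_N$ requires the inductive causal factorization.} In Step~2 you assert that $r^0_N$ satisfies the conormal landing condition ``because the explicit form of $\Gamma^T_{p+1}(M,h)$ at the total diagonal only accumulates covectors transverse to $\{0\}$''. This is morally right but not quite an argument: $r^0_N$ is defined as a Taylor remainder in the deformation parameter $\rho$, and its wave front is not read off from $\Gamma^T_{p+1}$ directly. The paper instead proves the conormal landing condition for $t^0[M,h]$ itself (and hence for $r^0_N=t^0-\sum\tau^0_k$) by \emph{induction on $p$}: on each $U_I$ one writes $t^0=\sum t^0_I\,t^0_{I^c}\prod H_F^{a_{ij}}$ via causal factorization, applies the inductive hypothesis to $t^0_I,t^0_{I^c}$ and Lemma~\ref{lemma_Feynman_conormal_landing} to the Feynman powers, and takes the closure at the diagonal. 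The same inductive step, with the parameterized Feynman estimate $(ii)$ of Lemma~\ref{lemma_Feynman_conormal_landing}, is what drives Step~3. Your Step~3 gestures at this (``inductively by the lower-order time-ordered products''), but you should make explicit that the mechanism is the causal factorization formula, not an abstract appeal to $\Gamma^T$.
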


\begin{proof}
We will proceed in three steps: first we extend the distributions $\tau^0_k[M,h]$, $r_N[M,h]$ and $\tau^0_k[M,h_s]$, $r_N[M,h_s]$ as distributions on $\mathcal{D}'(\Omega \times V^p)$ and $\mathcal{D}'(\mathbb R^d \Omega \times V^p)$ respectively, then in the remaining steps we establish the wave front set of those extensions.\\

\textbf{Step 1.} Each $\tau^0_k[M,h]$ can be extended to the diagonal by extending the distributions $u^0$. Indeed, by property $(iv)$ in Theorem \ref{thm_3_TO_renormalization_1}, the divergence degree of those distributions will be controlled, thus by Theorem \ref{thm_3_TO_renormalization_2} we can find an extension $u$ of $u^0$. Lemma 4.1 in \cite{hollands2002existence} ensures that $u$ can be chosen to be $SO(1,n-1)$-equivariant. As a result, we define the extension $\tau_k[M,h](x,x_1,\ldots,x_p)=\sum C^{a_1,\ldots,a_r}[h](x)e^*_x u_{a_1,\ldots,a_r}(\dot{x}_1,\ldots,\dot{x}_p)$, which with the appropriate choice of $u$ is local and covariant. If we chose $N$ big enough, \textit{i.e.} $N+1-D+n\cdot p>0$. By Theorem \ref{thm_3_w_homogeneous_extension_1} we can extend uniquely the distribution $r^0_N[M,h](x)$ directly by
\begin{equation}\label{eq_r^0_ext}
	r_N[M,h](x)\doteq \lim_{\epsilon\to  0}(1-\chi_{\epsilon^{-1}})r^0_N[M,h](x).
\end{equation}
In particular, uniqueness of the extension will imply locality and covariance. The scaling properties of the extended distributions $r_N[M,h]$ and each $\tau_k[M,h]$ remain the same as those of $t[M,h]$. The extension of the distributions $\tau^0_k[M,h_s]$, $r^0_N[M,h_s]$ works in the same way noticing the following facts
\begin{itemize}
    \item the expansion 
    \begin{align}\label{eq_3_T_development_s_variation}
	t^0[M,h_s](x,\cdot) &= \sum_{k=0}^N \tau^0_{k}[M,h_s](x_0\cdot)+ r^0_N[M,h_s](x,\cdot),
    \end{align}
    where 
    \begin{equation}\label{eq_3_T_tau_development_s_variation}
	\tau^0_k[M,h_s](x,\cdot) =\sum_rC^{a_1\ldots a_r}[h_s](x)e_x(s)^{*}u^0_{a_1\ldots a_r}  ,
    \end{equation}
    \begin{align}\label{eq_3_r_term_s_variation}
	r^0_N[M,h_s](x,\cdot) &= \frac{1}{N!}\int_0^1 (1-\rho)^N \frac{d^{N+1} t^0[M,(h_{s})_\rho]}{d\rho^{N+1}}(x,\cdot) d\rho,
\end{align}
    can be performed in analogy with \eqref{eq_3_T_development} due to the fact that $t^0[M,h_s](x,f)$ is jointly smooth in $(s,x)$ by the induction hypothesis and $(xi)$ in Definition \ref{def_3_TO_products};
    \item the coefficients $C^{a_1\ldots a_r}[h_s](x)$ are jointly smooth in $(s,x)$ and the distributions $r^0_N[M,h_s](x)$, $u^0_{a_1\ldots a_r} $ satisfy bounds 
    $$
	\sup_{\lambda \in (0,1]} \left\vert \lambda^{k-D} \big(F^X_{\lambda}\big)^{*} u^0(f)\right\vert\leq c, \ \sup_{\lambda \in (0,1]} \left\vert \lambda^{N+1-D} \big(F^X_{\lambda}\big)^*r^0_N[M,h_s](x,f)\right\vert\leq c'.
    $$
    for any $f\in \mathcal{D}(V^p\backslash 0)$, repeating the same arguments used for the proof of $(iv)$ in \Cref{thm_3_TO_renormalization_1}.
\end{itemize}
We remark that $r^0_N[M,h_s]|_{s=0}\equiv r^0_N[M,h]$, $\tau_k[M,h_s]|_{s=0}\equiv\tau_k[M,h]$.

\textbf{Step 2.} Next we look at the wave front sets of $r_N[M,h]$ and each $\tau_k[M,h]$. Since those are known off the diagonal we have to estimate just the newly added singularities in $\Delta_{p+1}(M)$. We claim that 
$$
\begin{aligned}
    &{\mathrm{WF}(\tau_k[M,h])}\vert_{\Delta_{p+1}(\Omega)} \subset N^{*}\Delta_{p+1}(\Omega), \\
    &{\mathrm{WF}(r_{N}[M,h])} \vert_{\Delta_{p+1}(\Omega)}\subset N^{*}\Delta_{p+1}(\Omega).
\end{aligned}
$$
Notice that this time we are actually thinking of $\tau_k[M,h] $ as a distribution in $\mathcal{D}'(\Omega^{p+1})$. By construction $\tau_k[M,h](x, x_1,\ldots,x_p)=\sum C^{a_1,\ldots,a_r}[h](x)e^*_{x} u_{a_1,\ldots,a_r}(\dot{x}_1,\ldots,\dot{x}_p)$, consider the (local) diffeomorphism $(x,x_i)\to (x, \bar{e}_{x}(x_i))$, if $u$ has a certain wave front set $\mathrm{WF}(u)$, by \cite[Theorem 8.4.2]{hormanderI} the pull-back distribution $C[h](x)e^*_{x} u(\dot{x}_1,\ldots,\dot{x}_p)$ will have wave front set given by
$$
\begin{aligned}
    \mathrm{WF}\big(C[h](x)e^*_{x} u\big)\subseteq \Big\{ & \Big(x,e_{x}(\dot{x}_1),\ldots,e_{x}(\dot{x}_p);\xi + \sum_{i=1}^p\frac{\partial \bar{e}_{x}(x_i)}{\partial x} \eta_i,\frac{\partial \bar{e}_{x}(x_1)}{\partial x_1}\eta_1,\ldots,\frac{\partial \bar{e}_{x}(x_p)}{\partial x_p}\eta_p\Big) \\ &
    \in \dot{T}^*\Omega^{p+1}: (\dot{x}_1,\ldots,\dot{x}_p;\eta_1\ldots,\eta_p)\in \mathrm{WF}\big(u\big)\Big\}
\end{aligned}
$$
By the microlocal spectrum condition, we know that $C[h]u$ is smooth in the $x$ variable, thus $\xi=0$. Moreover we have the identity
$$
    \frac{\partial \bar{e}^{\mu}_{x}(x_i)}{\partial x^{\nu}}= -\frac{\partial {e}_{x_i}^{\mu}(\dot{x}_i)}{\partial \dot{x_i}^{\alpha}}\frac{\partial \bar{e}_{x_i}^{\alpha}(x)}{\partial x^{\nu}}
$$
which evaluated in the coincidence limit $x_i \to x$ yields
$$
    \frac{\partial \bar{e}^{\mu}_{x}(x_i)}{\partial x^{\nu}}=-\delta^{\mu}_{\alpha} \frac{\partial \bar{e}^{\alpha}_{x}(x_i)}{\partial x_i^{\nu}}.
$$
As a result, we get 
\begin{equation}\label{eq_tau^0_conormal_landing}
    \overline{\mathrm{WF}(\tau_k[M,h])}\vert_{\Delta_{p+1}(\Omega)} \subseteq \Big\{ \Big(x,\ldots,x;- \sum_{i=1}^p\xi_i,\xi_1,\ldots,\xi_p\Big)
    \in \dot{T}^*\Omega^{p+1}: (x_1,\ldots,x_p;\xi_1,\ldots,\xi_p)\in \dot{T}^*\Omega^p\Big\}.
\end{equation}

It remains to show that ${\mathrm{WF}(r_N[M,h])} \vert_{\Delta_{p+1}(\Omega)} \subseteq N^{*}\Delta_{p+1}(\Omega)$. Here we proceed as follows: first we note that $r^0[M,h](x,\cdot)\in \mathcal{D}'(V^p\backslash\{0\})$ 
can be seen as a distribution $r^0[M,h]\in \mathcal{D}'\big(\Omega\times (V^p\backslash\{0\})\big)$ depending smoothly on the first factor. Notice that $r^0[M,h]\in E_{N-D}\big(\Omega\times (V^p\backslash\{0\})\big)$, for the vector field $X=\sum_{i=1}^p \dot{x}_i^a\frac{\partial}{\partial \dot{x}_i^a}$ is an Euler vector field for the submanifold $\Omega\times \{0\}\subset \Omega\times V^p$, therefore we can adapt the proof of Theorem \ref{thm_3_TO_renormalization_1} and get $r^0[M,h]\in E_{N-D}\big(\Omega\times (V^p\backslash\{0\})\big)$. We claim that $r^0[M,h]$ satisfies the conormal landing condition
$$
    \overline{\mathrm{WF}(r^0_{N}[M,h])}\cap T^*\big(\Omega\times V^p)\vert_{\Omega\times \{0\}}\subset N^{*}(\Omega\times \{0\}).
$$
If that was true, since
$$
    \overline{e}_{\cdot}^p:\Omega^{p+1}\ni(x,x_1,\ldots,x_{p}) \mapsto (x,\dot{x}_{1},\ldots,\dot{x}_p)=(x,\overline{e}_{x}(x_1),\ldots,\overline{e}_{x}(x_p))\in \Omega\times V^p
$$
is a diffeomorphism, $(ii)$ in Proposition \ref{prop_3_w_hom_diff_invariance} implies that  
$$
    E_{N-D, N^*(\Omega\times \{0\})}\big(\Omega\times (V^p\backslash\{0\})\big)\simeq E_{N-D, N^*(\Delta_{p+1}(\Omega))}\big(\Omega^{p+1}\backslash\Delta_{p+1}(\Omega))\big),
$$
thus by uniqueness of the extension combined with Theorem \ref{thm_3_w_homogeneous_extension_2} we obtain ${\mathrm{WF}(r_{N}[M,h])} \vert_{\Delta_{p+1}(\Omega)}\subset N^{*}\Delta_{p+1}(\Omega).$\\

We are thus left with showing that $r^0[M,h]$ satisfies the conormal landing condition. Instead of doing this directly, we show that $\overline{\mathrm{WF}t^0[M,h])} \vert_{\Delta_{p+1}(\Omega)} \subseteq N^{*}\Delta_{p+1}(\Omega)$ which together with \eqref{eq_tau^0_conormal_landing} implies our claim. To do so we proceed by induction over $p$. When $p=0$, we are in the case of Wick powers, where the dependence is smooth; so we assume that the condition holds for any $n<p$. By Theorem 4.3 in \cite{BF97} and equation (31) therein, in $ \Omega^{p+1}\backslash \Delta_{p+1}(\Omega)$ we can write 
\begin{equation}\label{eq_distributional_causal_factorization}
    t[M,h](x_0,x_1,\ldots,x_p)=\sum_{I} t^0_I(x_I)t^0_{I^c}(x_{I^c})\prod_{i\in I,j\in I^c} H_F^{a_{ij}}(x_i,x_j)    
\end{equation}
where $I\cup I_c=\{0,\ldots,p\}$ is any partition with $I\neq \emptyset,\{0,\ldots,p\} $ and $H_F^{a_{ij}}$ the $a_{ij}$th power of the Feynman propagator. We stress that the latter distribution is well defined since we are avoiding the diagonal. By (i) in Lemma \ref{lemma_Feynman_conormal_landing} we have that $\overline{\mathrm{WF}(H_F^{a_{ij}})}\vert_{\Delta_2(\Omega)} \subset N^*\Delta_2(\Omega)$, moreover by the induction hypothesis each $t_I$ satisfies
$$
    \overline{\mathrm{WF}t^0_I[M,h])} \vert_{\Delta_{|I|}(\Omega)} \subseteq N^{*}\Delta_{|I|}(\Omega).
$$
If we denote by $\mathrm{pr}_I$ the projection mapping $M^{p+1}\ni(x,\ldots,x_p)\mapsto x_I \in M^{|I|}$; then again, by \cite[Theorem 8.4.2]{hormanderI}, we have
\begin{equation}\label{eq_WF_distributional_causal_factorization}
    \mathrm{WF}(t^0[M,h])\vert_{U_I}\subset \Big\{ \mathrm{pr}_I^*\big( \mathrm{WF}(t^0_I[M,h]\big) + \mathrm{pr}_{I^c}^*\big( \mathrm{WF}(t^0_{I^c}[M,h]\big)+ \sum_{i\in I,j\in I^c} \mathrm{pr}_{ij}^*\big(\mathrm{WF}(H_F^{a_{ij}}) \big)\Big\}.
\end{equation}
Taking the closure of \eqref{eq_WF_distributional_causal_factorization} in ${T}^*U_I$ for each $I$, yields
$$
    \overline{\mathrm{WF}(t^0[M,h])}\vert_{\Delta_{p+1}(\Omega)}\subset N^*\Delta_{p+1}(\Omega).
$$
Consequently $r^0_N[M,h]$ satisfies the conormal landing condition and together with Theorem \ref{thm_3_w_homogeneous_extension_2} we obtain 
$$
    {\mathrm{WF}(r_N[M,h])}\vert_{\Delta_{p+1}(\Omega)}\subseteq N^{*}\Delta_{p+1}(\Omega).
$$
\textbf{Step 3.} Lastly, we are left with showing that the parameterized extensions $r_N[M,h_s]$, $\tau_k[M,h_s]$ described in the first step do satisfy the parameterized microlocal spectrum condition $(iv)$ in Lemma \ref{lemma_3_Wick_expansion}. We claim even more: if $h_s$ is a compactly supported variation with $s\in \mathbb{R}^d$, then
$$
\begin{aligned}
    &\overline{\mathrm{WF}(\tau_k^0[M,h_s])}\cap {T}^*(\mathbb{R}^d\times\Omega^{p+1})\vert_{\mathbb{R}^d\times\Delta_{p+1}(\Omega)} \subset N^{*}(\mathbb{R}^d\times\Delta_{p+1}(\Omega)), \\
    &\overline{\mathrm{WF}(r_N^0[M,h_s])}\cap {T}^*(\mathbb{R}^d\times\Omega^{p+1})\vert_{\mathbb{R}^d\times\Delta_{p+1}(\Omega)} \subset N^{*}(\mathbb{R}^d\times\Delta_{p+1}(\Omega)),.
\end{aligned}
$$
To show this claim we can repeat the previous argument noting that:
\begin{itemize}
    \item $\tau^0_k[M,h_s](x,\cdot)=\sum C^{a_1,\ldots,a_r}[h_s](x)e_x(s)^* u^0_{a_1,\ldots,a_r}(\cdot)$; therefore the dependence on $s\in \mathbb{R}^d$ is localized in the coefficient $C[h_s]$ (which is a scalar constructed out of tensors built out of $j^rh_s$ for some order $r\in\mathbb{N}$) and in the mapping $e_x(s): V\subset T_x \Omega^p \to \Omega_x\times \cdots \times \Omega_x : (\dot{x}_1,\ldots,\dot{x}_p)\mapsto (e_x(s)(\dot{x}_1),\ldots,e_x(s)(\dot{x}_p))$. The combination induces a diffeomorphism $\mathbb R^d\times \Omega \times \Omega^p \to \mathbb R^d\times \Omega \times V^p$, therefore studying the pullback of the wave front set of $\mathrm{WF}(u^0)$ as done for \eqref{eq_tau^0_conormal_landing}, yields
\begin{equation}\label{eq_tau^0_s_conormal_landing}
\begin{aligned}
    \overline{\mathrm{WF}(\tau_k^0[M,h_s])}\vert_{\mathbb{R}^d\times \Delta_{p+1}(\Omega)} \subseteq \Big\{ & \Big(s, x,\ldots,x;0,- \sum_{i=1}^p\xi_i,\xi_1,\ldots,\xi_p\Big)
    \in \dot{T}^*\Delta_{p+1}(\Omega)\\
    &: (x_1,\ldots,x_p;\xi_1,\ldots,\xi_p)\in \dot{T}^*\Omega^p\Big\}.    
\end{aligned}
\end{equation}
    \item Similarly, to establish that $r_N^0[M,h_s]$ satisfies the conormal landing condition, we apply the same induction argument used above, noting that the basic case $p=0$, is handled by Wick powers, and that for generic $p$ the causal factorization 
    \begin{equation}\label{eq_variational_distributional_causal_factorization}
      t[M,h_s](x,x_1,\ldots,x_p)=\sum_{I} t^0_I[M,h_s](x_I)t^0[M,h_s]_{I^c}(x_{I^c})\prod_{i\in I,j\in I^c} H_F[M,h_s]^{a_{ij}}(x_i,x_j),
    \end{equation}
    still holds. Combining the inductive hypothesis $\overline{\mathrm{WF}t^0_I[M,h_s])} \vert_{\Delta_{|I|}(\Omega)} \subseteq N^{*}\big(\mathbb{R}^d\times \Delta_{|I|}(\Omega)\big)$ and (ii) in Lemma \ref{lemma_Feynman_conormal_landing}, we can apply Theorem \ref{thm_3_w_homogeneous_extension_2} and obtain ${\mathrm{WF}(r_N[M,h_s])}\vert_{\mathbb{R}^d\times\Delta_{p+1}(\Omega)} \subseteq N^{*}\big(\mathbb{R}^d\times\Delta_{p+1}(\Omega)\big)$.
\end{itemize}
This concludes the proof for the existence.
\end{proof}

\subsection{Uniqueness of time ordered products}\label{section_TO_uniqueness}

To conclude this section we establish the uniqueness result for time ordered products. In the proof of the existence of such products we constructed uniquely the product up to the diagonal, therefore different prescriptions for, say the $p$th order product, ought to differ only in $\Delta_p(M)$; that is, they differ by a renormalization choice. Theorem \ref{thm_2_uniqueness_TO} and Corollary \ref{coro_2_TO_uniqueness} strictly characterize such choices.

\begin{theorem}\label{thm_2_uniqueness_TO}
Let $\mathfrak{A}_{\mu loc}(M,h)\subset \mathfrak{A}_{\mu c}(M,h)$ the algebra of microlocal functionals in the background geometry $(M,h)$. 
\begin{enumerate}
\item[$(a)$] Given two prescriptions $\{T_p\}_{p \in \mathbb{N}}$, $\{\widetilde{T}_p\}_{p \in \mathbb{N}}$ for time ordered products, there exists a family of natural transformations $\{Z_p:\otimes^p\mathfrak{A}_{ \mu loc}\Rightarrow \mathfrak{A}_{\mu loc}  \}_{p \in \mathbb{N}}$ inducing linear mappings $Z_p[M,h]:\otimes^p\mathfrak{A}_{\mu loc}(M,h) \rightarrow \mathfrak{A}_{\mu loc}(M,h) $, uniquely defined by 
\begin{equation}\label{eq_def_Z}
\begin{aligned}
	  \left(\frac{i}{\hbar}\right)^{p}\widetilde{T}_p[M,h](A_1,\ldots, A_p) = \sum_{\substack{\pi \in \mathcal{P}(\{1,\ldots,p\})}}\left(\frac{i}{\hbar}\right)^{|\pi|}T_{|\pi|}[M,h]\left(\otimes_{I \in \pi} Z_{|I|}[M,h](\otimes_{i\in I}A_i) \right)
\end{aligned}
\end{equation}
where $\pi$ is a partition of the set $\{1,\ldots,p\}$ into $|\pi|$ smaller subsets $I_1,\ldots, I_{|\pi|}$. Satisfying the following conditions:
\begin{itemize}
\item[$(i)$]\textbf{Covariance}. Let $\chi:(M,h)\rightarrow (M',h')$ be a causality preserving isometric embedding, then for each $A_1,\ldots,A_p \in \mathfrak{A}_{\mu loc}(M,h)$, we have 
$$
	Z_p[M,\chi^{*}h'](A_1,\ldots,A_p) = Z_p[M',h'](\mathfrak{A}_{\chi}(A_1),\ldots,\mathfrak{A}_{\chi}(A_p));
$$
\item[$(ii)$] \textbf{Support properties}. Given $A_i \in \mathfrak{A}_{\mu loc}(M,h)$ with $i=1,\ldots,p$, we have
$$
	\mathrm{supp}\left(Z_p[M,h](A_1,\ldots,A_p)\right)=\bigcap_{j=1}^p \mathrm{supp}(A_j);
$$ 
\item[$(iii)$] \textbf{Symmetry}. The mapping $Z_p$ is symmetric under permutation of its arguments, that is, given any permutation $\sigma$ of the set $\{1,\ldots,p\}$,
$$
	Z_p[M,h]\big(A_{\sigma(1)},\ldots,A_{\sigma(p)}\big)=Z_p[M,h](A_1,\ldots,A_p);
$$
\item[$(iv)$] \textbf{Initial value}. $Z_0[M,h]=0$ for all background geometries $(M,h)$;
\item[$(v)$] \textbf{Scaling}. If each $A_j\in \mathfrak{A}_{\mu loc}(M,h)$ scales almost homogeneously with degree $\rho_j$ under the rescaling of parameters $h\to S_{\lambda}h$, then $Z_p[M,h](A_1,\ldots,A_p)\in \mathfrak{A}_{\mu loc}(M,h)$ scales almost homogeneously with scaling degree $\sum_{j=1}^p \rho_j$;
\item[$(vi)$] \textbf{Field independence}. Given $A_1,\ldots,A_p \in \mathfrak{A}_{\mu loc}(M,h)$, we have
$$
	dZ_p[M,h](A_1,\ldots,A_p)[\varphi]=\sum_{j=1}^p Z_p[M,h](A_1,\ldots,dA_j[\varphi],\ldots,A_p);
$$
\item[$(vii)$] \textbf{Unitarity}. If $A_1,\ldots,A_p \in \mathfrak{A}_{\mu loc}(M,h)$, then
$$
	Z_p[M,h](A_1,\ldots,A_p)^{*}=Z_p[M,h](A_1^{*},\ldots,A_p^{*}).
$$
\item[$(viii)$] \textbf{Action Ward identity}. Let $A_1,\ldots,A_p\in \mathfrak{A}_{\mu loc}(M,h)$ such that at least one, say $A_1$, has the form $A_1=\alpha_H^{-1}(F_1)$ where 
$$
	F_1(\varphi)=\int_M j^{r}\varphi^{*}(d \theta)
$$
for some $n-1$-form $\theta \in \Omega_{n-1}(J^rM)$, then
$$
	Z_p(A_1,\ldots,A_p)=0;
$$
\item[$(ix)$] \textbf{$\varphi$-Locality} Let $A_1,\ldots,A_p \in \mathfrak{A}_{\mu loc}(M,h)$, denote by $A_i^{[k_1]}$ the Wick ordering of the $k$th order truncated Taylor series of $F_i=\alpha_H(A_i)$ in $0\in C^{\infty}(M)$, then 
$$
	Z_p[M,h]\big(A_1,\ldots,A_p\big)= Z_p[M,h]\left(A_1^{[k_1]},\ldots,A^{[k_p]}_p\right)+O\big(\hbar^{\lfloor p(k+1)/2\rfloor+1}\big);
$$
\item[$(x)$] \textbf{Microlocal spectrum condition} ($\mu$SC). Given any $A_1,\ldots,A_p\in \mathfrak{A}_{\mu loc}(M,h)$, consider the formal integral kernel associated to the functional derivative $d^k\Big(\alpha_H\big(Z_p[M,h](A_1,\ldots,A_p)\big)\Big)[\varphi]$, 
we require that for every $\varphi\in C^{\infty}(M)$,
\begin{equation}\label{eq_3_mulocal_spectrum_condition_Z}
	\mathrm{WF}\bigg(d^k\Big(\alpha_H\big(Z_p[M,h](A_1,\ldots,A_p)\big)\Big)[\varphi]\bigg) \subset N^{*}\Delta_{k}(M)
\end{equation}
\item[$(xi)$] \textbf{Parameterized microlocal spectrum condition} (P$\mu$SC). If $(M,h)$ is a background geometry and $ \mathbb{R}^d\ni s \mapsto h_s\in \Gamma^{\infty}(HM)$ any compactly supported variation of $h$; we require that for every $\varphi\in C^{\infty}(M)$ and every $A_1,\ldots,A_p \in \mathfrak{A}_{\mu loc}(M,h)$,
$$
	\begin{aligned}
		& \mathrm{WF}\bigg(d^k\Big(\alpha_H\big(Z_p[M,h](A_1,\ldots,A_p)\big)\Big)[\varphi]\bigg)  \\ &\subset \Big\{ (s,x_1,\ldots,x_k,\zeta,\xi_1,\ldots, \xi_k ): \ (x_1,\ldots,x_k,\xi_1,\ldots, \xi_k )\in N^{*}\Delta_{k}(M) \Big\}.
	\end{aligned}
$$
\end{itemize}
\item[$(b)$] If $\{T_p \}_{p \in\mathbb{N}}$ is a family satisfying all requirements of Definition \ref{def_3_TO_products}, and $\{Z_p\}_{p \in \mathbb{N}}$ another family satisfying properties $(i)$-$(xi)$ in $(a)$ above; given any $A_1,\ldots,A_p \in \mathfrak{A}_{\mu loc}(M,h)$, setting
$$
	\begin{aligned}
		\left(\frac{i}{\hbar}\right)^{p}\widetilde{T}_p[M,h](A_1,\ldots, A_p) \doteq \sum_{\substack{\pi \in \mathcal{P}(\{1,\ldots,p\})}}\left(\frac{i}{\hbar}\right)^{|\pi|}T_{|\pi|}[M,h]\left(\otimes_{I \in \pi} Z_{|I|}[M,h](\otimes_{i\in I}A_i) \right),
	\end{aligned}
$$
yields another prescription for a time ordered product according to Definition \ref{def_3_TO_products}.
\end{enumerate}
\end{theorem}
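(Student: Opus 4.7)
The plan is to handle parts (a) and (b) by parallel inductions on the perturbative order $p$, using the fact that the defining relation \eqref{eq_def_Z} is a combinatorial identity that can be inverted by isolating the partition consisting of a single block. Since $T_1=\mathrm{id}$, singling out $\pi=\{\{1,\ldots,p\}\}$ in \eqref{eq_def_Z} gives
\[
Z_p[M,h](A_1,\ldots,A_p)=\widetilde{T}_p[M,h](A_1,\ldots,A_p)-\sum_{\pi\neq\{\{1,\ldots,p\}\}}\left(\frac{i}{\hbar}\right)^{|\pi|-p}T_{|\pi|}[M,h]\Big(\otimes_{I\in\pi}Z_{|I|}(\otimes_{i\in I}A_i)\Big),
\]
which determines $Z_p$ uniquely in terms of $Z_1,\ldots,Z_{p-1}$; this is the familiar exponential/logarithm relation from cumulant expansions.

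For part (a), the properties of locality and covariance (i), symmetry (iii), initial values (iv), scaling (v), field independence (vi), unitarity (vii), the Action Ward identity (viii), and $\varphi$-locality (ix) all transfer from the analogous axioms shared by $T$ and $\widetilde{T}$ by direct induction on the recursive definition above; for example (vi) follows by differentiating the recursion and re-indexing the sum over partitions. The non-trivial statement is the support property (ii): if $\mathrm{supp}(A_i)\cap\mathrm{supp}(A_j)=\emptyset$ for some pair $i,j$, one chooses a Cauchy surface separating the two disjoint supports and applies the causal factorization axiom to both $T_p$ and $\widetilde{T}_p$; matching the two sides of \eqref{eq_def_Z} then shows that only partitions respecting the causal split contribute to $\widetilde{T}_p$, forcing $Z_p$ itself to vanish on such configurations. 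Hence $\mathrm{supp}(Z_p)\subseteq\cap_j\mathrm{supp}(A_j)$.

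The microlocal spectrum conditions (x) and (xi) then follow by applying the Wick expansion of Lemma \ref{lemma_3_Wick_expansion} to $Z_p$, producing coefficient distributions $z_J[M,h]$ on a geodesically convex $\Omega^p$ whose supports, by (ii), lie on the total diagonal $\Delta_p(\Omega)$. The locality and weak regularity of the $Z_p$'s (inherited via $T$ and $\widetilde T$) allow invoking \Cref{thm_A_Peetre_Slovak}, showing that these $z_J$ are smooth densities on the diagonal; hence their smearing against Wick powers has wave front set contained in $N^*\Delta_p(M)$, yielding (x). The parameterized condition (xi) follows similarly since smooth dependence on the compactly supported variation $h_s$ is guaranteed by the analogous condition on both $T_p$ and $\widetilde{T}_p$.

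For part (b), the verification of the Definition \ref{def_3_TO_products} axioms for the candidate $\widetilde{T}_p$ is again mostly a combinatorial transfer from the corresponding axioms of $T$ and the postulated properties of $Z$; the main obstacle will be causal factorization (ii) of Definition \ref{def_3_TO_products}. Given a causal partition $I\sqcup I^c=\{1,\ldots,p\}$, one substitutes the partition formula for $\widetilde{T}_p$ and exploits the support property (ii) of the $Z_{|J|}$'s: any partition block $J$ that straddles both $I$ and $I^c$ forces $Z_{|J|}$ to be supported where all its arguments overlap, which is incompatible with the causal separation, so only partitions refining $I\sqcup I^c$ survive. The causal factorization of $T_{|\pi|}$ then separates the sum into a $\star$-product of the two partial partition sums, which are by definition $\widetilde{T}_{|I|}$ and $\widetilde{T}_{|I^c|}$. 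The $\mu$SC and P$\mu$SC conditions for $\widetilde{T}_p$ finally follow from the corresponding conditions for $T$ combined with the conormal-type wave front estimates for $Z$ from (x)--(xi), using the H\"ormander calculus of wave front sets under smearing and products already exploited in \Cref{thm_3_TO_renormalization_2}.
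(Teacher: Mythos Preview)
Your inductive definition of $Z_p$ by isolating the single-block partition, and your treatment of properties (i)--(ix) as well as part (b), match the paper's proof essentially verbatim. In particular, your argument for the support property (ii) via causal factorization of both $T$ and $\widetilde T$, and for causal factorization of $\widetilde T$ in (b) via the support property of $Z$, is exactly how the paper proceeds.

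The point of divergence is your treatment of the microlocal conditions (x) and (xi). The paper does \emph{not} go through a Wick expansion of $Z_p$ and Peetre--Slov\'ak. Instead it argues directly on the recursive formula: one differentiates the defining relation for $Z_p$ using Fa\`a di Bruno and field independence, so that $d^k Z_p^{H}$ is expressed as a difference of $d^k\widetilde T_p^{H}$ and compositions of $d^{|\sigma|}T_{|\pi|}^{H}$ with lower $d^{|J|}Z_{|I|}^{H}$. The $\mu$SC (resp.\ P$\mu$SC) axiom for $T$ and $\widetilde T$ bounds the wave front sets of these pieces inside $\Gamma^T_k$ (resp.\ its parameterized version); H\"ormander's composition theorem (8.2.14) controls the compositions, using the inductive hypothesis on the $Z_{|I|}$ with $|I|<p$. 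The support property (ii) then forces everything to live over the diagonal, and the crucial observation is that $\Gamma^T_k|_{\Delta_k(M)}\subset N^*\Delta_k(M)$, which yields (x) and (xi) directly.

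Your proposed route has a gap as written. You want to invoke Peetre--Slov\'ak on the coefficient distributions $z_J[M,h]$ to conclude they are smooth on the diagonal, but Peetre--Slov\'ak needs a local \emph{and weakly regular} map $h\mapsto z_J[h]$ between spaces of sections. Weak regularity of this map is exactly the kind of joint smoothness in $(s,x)$ that (xi) encodes, so you risk circularity. Moreover, a distribution supported on $\Delta_p$ does not automatically have wave front set in $N^*\Delta_p$: the structure theorem writes it as transversal derivatives of $\delta_{\Delta_p}$ with coefficients that are a priori only distributions on $M$, and you need an independent argument to show those coefficients are smooth. The paper avoids this entirely by never isolating the coefficients and instead bounding the wave front set of the full kernel via H\"ormander's calculus; the Peetre--Slov\'ak machinery only enters later, in Corollary~\ref{coro_2_TO_uniqueness}, \emph{after} (x) and (xi) have already been established, to pin down the explicit polynomial form of the $C^{K-J}$.
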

 
\begin{proof}
For notational sake we shall omit the $[M,h]$ form $T_p[M,h], Z_p[M,h]$. We start with $(a)$. We define the family $\{Z_p\}$ by induction over $p$. First we note that $(iv)$ in Definition \ref{def_3_TO_products} is consistent with 
$$
	\widetilde{T}_0(A)= 1_{\mathfrak{A}_{\mu c}}={T}_0(A).
$$
Next, suppose we have constructed $\{Z_k\}_{k<p}$, from \eqref{eq_def_Z} define
$$
	\begin{aligned}
		Z_p(A_1\ldots, A_p)\doteq &	\left(\frac{i}{\hbar}\right)^{p-1}\widetilde{T}_p(A_1,\ldots, A_p)-	\left(\frac{i}{\hbar}\right)^{p-1}T_p(A_1,\ldots, A_p)\\
		 & - \sum_{\substack{\pi \in \mathcal{P}(\{1,\ldots,p\})\\ |\pi| \neq n}}\left(\frac{i}{\hbar}\right)^{|\pi|-1}T_{|\pi|}\left(\bigotimes_{I \in \pi} Z_{|I|}\Big(\bigotimes_{i\in I}A_i\Big) \right).
	\end{aligned}
$$
The right hand side can always be constructed since it involves the mappings $\{Z_k\}_{k<p}$ and the two time ordered products. We stress that for $p=1$, \eqref{eq_def_Z} yields $Z_1=id_{\mathfrak{A}_{\mu loc}}$ which is consistent since $T_1(A)=A=\widetilde{T}_1(A)$ for any $A \in \mathfrak{A}_{\mu loc}(M,h)$. The element $Z_p$ constructed above will be local and covariant by $(i)$ in \ref{def_3_TO_products} and the inductive hypothesis applied to $\{Z_k\}_{k<p}$. This argument also shows that properties $(iii)$, $(v)$, $(vi)$, $(viii)$, $(ix)$ in $(a)$ Theorem \ref{thm_2_uniqueness_TO} hold. To show $(ii)$, suppose that at least a pair of elements $A_1$, $A_2$ have mutually disjoint supports, \textit{i.e.} $S_1=\mathrm{supp}(A_1)\cap\mathrm{supp}(A_2)=S_2=\emptyset$. Then there is $I\subset \{1,\ldots,p\}$ for which any support $S_i=\mathrm{supp}(A_i)$, $i\in I$ has $S_i \notin J^{-}(S_j)$ for all $j\in I^c$; by assumption, at the very least $\{1\}\subset I$, $\{2\}\subset I^c$. Combining the causal factorization axiom of time ordered products with the inductive assumptions we get
$$
\begin{aligned}
    &\left(\frac{i}{\hbar}\right)^{1-p} Z_p(A_i,\ldots, A_p) \\
    &= \widetilde{T}_{|I|}\Big(\bigotimes_{i\in I}A_i\Big)\star \widetilde{T}_{|I^c|}\Big(\bigotimes_{j\in I^c}A_j\Big)-{T}_{|I|}\Big(\bigotimes_{i\in I}A_i\Big)\star{T}_{|I^c|}\Big(\bigotimes_{j\in I^c}A_j\Big)\\ 
    & \quad- \sum_{\substack{\pi_1 \in \mathcal{P}(I), \space\ \pi_2\in \mathcal{P}(I_c)\\ |\pi_1| \neq 1 \&  |\pi_2| \neq 1}}\left(\frac{i}{\hbar}\right)^{|\pi_1|+|\pi_2|-p}T_{|\pi_1|}\left(\bigotimes_{I_1 \in \pi_1} Z_{|I_1|}\Big(\bigotimes_{i\in I_1}A_i\Big) \right)\star T_{|\pi_2|}\left(\bigotimes_{I_2 \in \pi_2} Z_{|I_2|}\Big(\bigotimes_{j\in I_2}A_j\Big) \right)\\
    &= \left( \sum_{\substack{\pi_1 \in \mathcal{P}(I)\\ }}\left(\frac{i}{\hbar}\right)^{|\pi_1|-|I|}T_{|\pi_1|}\left(\bigotimes_{I_1 \in \pi_1} Z_{|I_1|}\Big(\bigotimes_{i\in I_1}A_i\Big)\right) \right) \\
    & \quad\star \left( \sum_{\substack{\pi_2 \in \mathcal{P}(I^c)\\ }}\left(\frac{i}{\hbar}\right)^{|\pi_2|-|I^c|}T_{|\pi_2|}\left(\bigotimes_{I_2 \in \pi_2} Z_{|I_2|}\Big(\bigotimes_{j\in I_2}A_j\Big)\right) \right)-{T}_{|I|}\Big(\bigotimes_{i\in I}A_i\Big)\star {T}_{|I^c|}\Big(\bigotimes_{j\in I^c}A_j\Big)\\ 
    & \quad- \sum_{\substack{\pi_1 \in \mathcal{P}(I), \space\ \pi_2\in \mathcal{P}(I_c)\\ |\pi_1| \neq 1 \&  |\pi_2| \neq 1}}\left(\frac{i}{\hbar}\right)^{|\pi_1|+|\pi_2|-p}T_{|\pi_1|}\left(\bigotimes_{I_1 \in \pi_1} Z_{|I_1|}\Big(\bigotimes_{i\in I_1}A_i\Big) \right)\star T_{|\pi_2|}\left(\bigotimes_{I_2 \in \pi_2} Z_{|I_2|}\Big(\bigotimes_{j\in I_2}A_j\Big) \right).\\
\end{aligned}
$$
The above quantity is identically zero since the last two terms cancel the first one. The argument for $(x)$ and $(xi)$ is similar so we just give the latter: consider a compactly supported variation $h_s$ of the background geometry $h$,
since each $A_i$ is local the wave front set of $d^k\alpha_{H_s}(A_i)$ will be in $N^{*}(\mathbb{R}^d\times \Delta_k(M))$. Using \eqref{eq_def_Z},
$$
\begin{aligned}
	&\left(\frac{i}{\hbar}\right)^{1-p}d^k\big(\alpha_{H_s}\big(Z_p(A_1,\ldots,A_p)\big)\big)[\varphi](x_1,\ldots,x_k)=d^k\big(\alpha_{H_s}\big(\widetilde{T}_p(A_1,\ldots,A_p)\big)\big)[\varphi](x_1,\ldots,x_k)\\ 
	& \quad -\sum_{\substack{\pi \in \mathcal{P}(\{1,\ldots,p\})}}\bigg(\frac{i}{\hbar}\bigg)^{|\pi|-p}d^k\bigg( \alpha_{H_s} \bigg( T_{|\pi|}\bigg(\bigotimes_{I \in \pi} Z_{|I|}\bigg(\bigotimes_{i\in I}A_i\bigg) \bigg)\bigg)\bigg)[\varphi](x_1,\ldots,x_k)\\
\end{aligned}
$$
By Faà di Bruno's formula and the field independence of time ordered products (property $(vi)$ in Definition \ref{def_3_TO_products}), we can write the terms involving $Z_{|I|}$'s on the right hand side of the above equation as $$
    d^{|\sigma|}T_{|\pi|}^{H_s}\Bigg(\bigotimes_{j=1}^{|\pi|} d^{|J_j|}Z^{H_s}_{|I_j|}\bigg(\bigotimes_{i\in I}\alpha_{H_s}\big(A_i\big) \bigg)[\varphi]\Bigg),
$$
where $I_1,\ldots,I_{|\pi|}\in \pi $ are the elements of the partition of $\{1,\ldots,p\}$, $\sum_i |J_i|=k$ and $T_p^H=\alpha_H\circ T_p \circ \otimes^p\alpha_H^{-1}$, $Z_p^H\doteq \alpha_H\circ Z_p \circ \otimes^p\alpha_H^{-1}$. Since the right hand side of the above equality is zero whenever $\mathrm{supp}(A_i)\cap \mathrm{supp}(A_j)=\emptyset$ for some $i,j \in \{1,\ldots,p\}$, we can estimate
$$
\begin{aligned}
	\mathrm{WF}\Big( d^k Z^{H_s}_p(\alpha_{H_s}(A_1),&\ldots,\alpha_{H_s}(A_p))[\varphi]\Big) \subset \mathrm{WF} \big(d^k\widetilde{T}^{H_s}_n(\alpha_{H_s}(A_1),\ldots,\alpha_{H_s}(A_p))\big)\Big\vert_{\mathbb{R}^d\times\Delta_k(M)} \\
	&\bigcup \ \mathrm{WF} \Bigg( d^{|\sigma|}T_{|\pi|}^{H_s}\Bigg(\bigotimes_{j=1}^{|\pi|} d^{|J_j|}Z^{H_s}_{|I_j|}\bigg(\bigotimes_{i\in I}\alpha_{H_s}\big(A_i\big) \bigg)[\varphi]\Bigg)\Bigg) \Bigg\vert_{\mathbb{R}^d\times\Delta_k(M)} 
\end{aligned}
$$
That the first gives the right wave front set when restricted comes readily from property $(xi)$ in Definition \ref{def_3_TO_products}, the other terms are composition of distributions
$$
	d^{|\sigma|}T_{|\pi|}^{H_s}\Bigg(\bigotimes_{j=1}^{|\pi|} d^{|J_j|}Z^{H_s}_{|I_j|}\bigg(\bigotimes_{i\in I}\alpha_{H_s}\big(A_i\big) \bigg)[\varphi]\Bigg)
$$
which are well defined for
$$
	\{(s,x_i,\zeta,\xi_i)\} \in \mathrm{WF}_{\mathrm{pr}_i}\Big(d^{|\sigma|}T_{|\pi|}^{H_s}\Big(G_1,\ldots,G_{|\pi|}\Big)[\varphi]\Big)= \emptyset, \ \ \forall \ i=1,\ldots,|\sigma|.
$$
Arguing using Theorem 8.2.14 in \cite{hormanderI} we can estimate
$$
\begin{aligned}
	& \mathrm{WF} \bigg( d^{|\sigma|}T_{|\pi|}^{H_s}\bigg(\bigotimes_{j=1}^{|\pi|} d^{|J_j|}Z^{H_s}_{|I_j|}\bigg(\bigotimes_{i\in I}\alpha_{H_s}\big(A_i\big) \bigg)[\varphi]\bigg)\bigg) \bigg\vert_{\mathbb{R}^d\times \Delta_{k}(M)}\\
	& \quad \subset   \{ (s,x_1,\ldots,x_{k},\zeta,\xi_1,\ldots, \xi_{k} ): \ (x_1,\ldots,x_{k},\xi_1,\ldots, \xi_{k} )\in N^{*}\Delta_{k}(M)\backslash\{0\} \}.
\end{aligned}
$$
Next we show $ii)$. Let $A_j\in \mathfrak{A}_{\mu loc}(M,h)$ for $j=1,\ldots,p$, define
\begin{equation}\label{eq_def_tilde_T}
	  \widetilde{T}_p(A_1,\ldots,A_p)=  \sum_{\substack{\pi \in \mathcal{P}(\{1,\ldots,p\})}}\bigg(\frac{i}{\hbar}\bigg)^{|\pi|-p}T_{|\pi|}\left(\otimes_{I \in \pi} Z_{|I|}(\otimes_{i\in I}A_i) \right),
\end{equation}
using that by hypothesis properties $1)$, $(iii)$, $(iv)$, $(v)$, $(vi)$, $(vii)$, $(viii)$, $(ix)$ hold in the right hand side of \eqref{eq_def_tilde_T}, then they trivially hold on the left hand side as well. For $(x)$, note that the term appearing to the right hand side of \eqref{eq_def_tilde_T} is of the form
$$
    d^{|\sigma|}T_{|\pi|}^{H}\bigg(\bigotimes_{j=1}^{|\pi|}d^{|J_{jl}|} Z^{H}_{|I_j|}\bigg(\bigotimes_{i\in I}\alpha_{H}\big(A_i\big) \bigg)[\varphi]\bigg)
$$
where $\sum_{j=1}^{|\pi|}\sum_{l=1}^{|\sigma|}|J_{jl}|=k$. By Theorem 8.2.14 in \cite{hormanderI} we have 
$$
\begin{aligned}
	& \mathrm{WF} \Big( d^kT^H_{|\pi|}\Big(\otimes_{I \in \pi} Z^H_{|I|}\big(\otimes_{i\in I}\alpha_H (A_i)\big) \Big)[\varphi]\Big) \subset \Big\{ (x_1,\ldots,x_k, \xi_1,\ldots,\xi_k) \in T^{*}M^k\backslash 0 : \\ & \qquad \exists (y_1,\ldots,y_{|\sigma|},\eta_1,\ldots,\eta_{|\sigma|})\in \mathrm{WF}(d^{|\sigma|}T^H_{|\pi|}),  \ (x_{J_{il}},y_l,\xi_{J_{jl}},\eta_l)\in \mathrm{WF}\big(Z^{H}_{|I_j|}\big(\bigotimes_{i\in I}\alpha_{H}\big(A_i\big) \big)[\varphi]\big)\Big\}\\
	&\qquad\subset \Gamma^T_k(M,h).
\end{aligned}
$$
A similar reasoning shows that $\widetilde{T}^H$ does indeed satisfy also the parameterized microlocal spectrum condition as well. Finally we show that causality holds. Suppose $ A_1,\ldots, A_p \in \mathfrak{A}_{\mu loc}(M,h)$ and $I\subset \{1,\ldots,p\}$ with the property that each $i\in I$ has $\mathrm{supp}(A_i) \notin J^{-}\big(\mathrm{supp}(A_j)\big) \ \forall j \in I^c$, then we evaluate 
$$
\begin{aligned}
	  &\widetilde{T}_p(A_1,\ldots, A_p)=\sum_{\substack{\pi \in \mathcal{P}(\{1,\ldots,p\})}}\bigg(\frac{i}{\hbar}\bigg)^{|\pi|-p}T_{|\pi|}\bigg(\bigotimes_{I \in \pi} Z_{|I|}\bigg(\bigotimes_{i\in I}A_i\bigg) \bigg)\\
	  &= \sum_{\substack{ \pi=\pi'\cup\pi''\\ \pi'\in \mathcal{P}(I), \pi'' \in \mathcal{P}(I^c)}}\bigg(\frac{i}{\hbar}\bigg)^{|\pi|-p}T_{|\pi|}\bigg(\bigotimes_{I' \in \pi'} Z_{|I'|}\bigg(\bigotimes_{i\in I'}A_i\bigg) \bigotimes_{I'' \in \pi''} Z_{|I''|}\bigg(\bigotimes_{j\in I''}A_j\bigg)\bigg)\\
	  &=\sum_{\substack{ \pi'\in \mathcal{P}(I)\\ \pi'' \in \mathcal{P}(I^c) }}\bigg(\frac{i}{\hbar}\bigg)^{|\pi'|-|I|}T_{|\pi'|}\bigg(\bigotimes_{I' \in \pi'} Z_{|I'|}\bigg(\bigotimes_{i\in I'}A_i\bigg)\bigg)\star  \bigg(\frac{i}{\hbar}\bigg)^{|\pi''|-|I^c|} T_{|\pi''|} \bigg(\bigotimes_{I'' \in \pi''} Z_{|I''|}\bigg(\bigotimes_{j\in I''}A_j\bigg)\bigg) \\
	  &= \widetilde{T}_{|I|}\bigg(\bigotimes_{i\in I}A_i\bigg)\star \widetilde{T}_{|I^c|}\bigg(\bigotimes_{j\in I^c}A_j\bigg)
\end{aligned}
$$
where, in the second equality, we used that each $Z_{k}$ is supported along the diagonal effectively canceling all terms with mixed indices from $I$ and $I^c$ and in the last we applied the causality axiom in \ref{def_3_TO_products}.
\end{proof}

\begin{corollary}\label{coro_2_TO_uniqueness}
In the hypothesis of $(a)$ Theorem \ref{thm_2_uniqueness_TO}, combining $\varphi$-locality with field independence, for a fixed perturbative order $\hbar$, the mapping $Z_p$ is completely determined by the action on Wick powers. Moreover 
\begin{equation}\label{eq_Z_induction}
	\begin{aligned}
		& Z_p[M,h]\big(\phi^{k_1}_{(M,h,H)},\ldots,\phi^{k_p}_{(M,h,H)}\big)(x_1,\ldots,x_p) \\
		&= \sum_{J\leq K}\binom{K}{J}C^{K-J}_{(M,h)}(x_1)\delta(x_1,\ldots,x_p)\phi^{J}_{(M,h,H)}(x_1,\ldots,x_p),
	\end{aligned}
\end{equation}
where $K=(k_1,\ldots,k_p)$, $J=(j_1,\ldots,j_p)$ are multi-indices, $\phi^{J}_{(M,h,H)}(x_1,\ldots,x_p)$ is the formal integral kernel of the $\mathfrak A_{\mu c}$-valued distribution associated to the multilocal functional $\phi^{j_1}_{(M,h)}(x_1)\cdots \phi^{j_p}_{(M,h)}(x_p)$. The coefficients $C^{K-J}_{(M,h)}(x)$ are polynomials in scalars constructed out all possible tensor-like object of $j^rh\in \Gamma^{\infty}(J^rHM)$ for some fixed order $r$. Finally, each $C^{K-J}_{(M,h)}$ scales homogeneously with degree $\sum_{i=1}^p(k_i-j_i)\frac{n-2}{2}$ under physical scaling defined in \eqref{eq_2_physical_scaling}.
\end{corollary}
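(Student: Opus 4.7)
The plan is to mirror, at the level of the maps $Z_p$, the arguments already carried out for Wick powers in Theorem \ref{thm_2_Moretti_Kavhkine}, but starting from the axioms of Theorem \ref{thm_2_uniqueness_TO}(a). First I would fix $A_1,\ldots,A_p \in \mathfrak{A}_{\mu loc}(M,h)$ and observe that, thanks to $\varphi$-locality (condition $(ix)$) combined with field independence (condition $(vi)$), at each fixed order in $\hbar$ the functional derivatives of $Z_p(A_1,\ldots,A_p)$ are completely determined by evaluating on sufficiently high truncated Taylor series of $\alpha_H(A_i)$ about $\varphi=0$. Since such truncated Taylor pieces are precisely smeared Wick monomials, it suffices to characterise the abstract algebra-valued distribution
$$
    (x_1,\ldots,x_p)\mapsto Z_p[M,h]\bigl(\phi^{k_1}_{(M,h,H)}(x_1),\ldots,\phi^{k_p}_{(M,h,H)}(x_p)\bigr).
$$

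Next I would perform on $Z_p$ the same local Wick expansion used in Lemma \ref{lemma_3_Wick_expansion}, writing it as a sum of terms of the form $z_J[M,h](x_1,\ldots,x_p)\,\phi^{J}_{(M,h,H)}(x_1,\ldots,x_p)$ with $J=(j_1,\ldots,j_p)\leq K=(k_1,\ldots,k_p)$ and $z_J\in\mathcal{D}'(\Omega^p)$. The key step is now to show that each $z_J$ is a smooth function supported on the total diagonal $\Delta_p(M)$, of the form $z_J(x_1,\ldots,x_p)=\binom{K}{J}C^{K-J}_{(M,h)}(x_1)\,\delta(x_1,\ldots,x_p)$. The support localisation follows immediately from the support property $(ii)$ applied to mutually disjoint pairs of test functions: if two of the arguments have disjoint supports, then $Z_p$ vanishes, which forces $\mathrm{supp}(z_J)\subset\Delta_p(M)$. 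The $\mu$SC condition $(x)$ yields $\mathrm{WF}(z_J)\subset N^{*}\Delta_p(M)$, hence by \cite[Theorem 5.2.3]{hormanderI} each $z_J$ is a distribution conormal to $\Delta_p(M)$, and using the action Ward identity $(viii)$ (to reabsorb the derivative pieces) one reduces it to $C^{K-J}_{(M,h)}(x)\,\delta(x_1,\ldots,x_p)$ for a single smooth function $x\mapsto C^{K-J}_{(M,h)}(x)$.

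Now I would characterise $h\mapsto C^{K-J}_{(M,h)}$. Locality and covariance $(i)$ together with the P$\mu$SC $(xi)$ ensure that, regarded as a map $\Gamma^{\infty}(M\leftarrow HM)\to C^{\infty}(M)$, it is covariant, its value at $x$ depends only on the germ of $h$ at $x$, and it is weakly regular in the sense of \Cref{section_Peetre_slovak}: smoothness in a compactly supported variation $s\mapsto h_s$ follows by testing the P$\mu$SC-controlled distribution $C^{K-J}_{(M,h_s)}(x)\cdot 1_{\mathfrak A_{\mu c}}$ with $f(x)z(s)$ exactly as in the last step of the proof of \Cref{thm_2_existence_Wick_polynomials}. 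The Peetre-Slovák theorem (\Cref{thm_A_Peetre_Slovak}) then implies $C^{K-J}_{(M,h)}$ is a differential operator of locally bounded order. Applying now Proposition \ref{proposition_2_covariant_coordinates} and the covariance identity of \Cref{thm_2_covariant_identity} in exactly the same way as in the proof of \Cref{thm_2_Moretti_Kavhkine}, one concludes that $C^{K-J}_{(M,h)}(x)$ is a polynomial in the curvature invariants $R^{\alpha}_{\ \beta\mu\nu}, \nabla^{A}R, m^2, \nabla^A m^2, \kappa, \nabla^A\kappa$ evaluated at $x$, i.e.\ in the scalars built out of the tensorial coordinates of $j^rh$.

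Finally, the scaling degree claim is a direct consequence of property $(v)$: since $\phi^k_{(M,h,H)}$ scales almost homogeneously with degree $k(n-2)/2$ and $Z_p(\phi^{k_1},\ldots,\phi^{k_p})$ with degree $\sum_i k_i(n-2)/2$, comparing the two sides of \eqref{eq_Z_induction} under $S_{\lambda}$ forces $C^{K-J}_{(M,h)}$ to scale almost homogeneously with degree $\sum_i(k_i-j_i)(n-2)/2$; the extra logarithmic terms are ruled out by the same argument used at the end of the proof of \Cref{lemma_2_Moretti_Kavhkine}, where one compares the two expressions and uses that $1_{\mathfrak A_{\mu c}}$ is invariant under $\sigma_\lambda$, forcing the logarithmic coefficients to be $c$-number fields which vanish by the initial value condition $(iv)$. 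The main obstacle I expect is the bookkeeping in going from $\varphi$-locality plus field independence to a well-posed Wick expansion for $Z_p$ at each order in $\hbar$: one has to check carefully that the truncation orders $k_i$ in condition $(ix)$ can be chosen uniformly so that the expansion \eqref{eq_Z_induction} is exact (not only modulo higher orders in $\hbar$) once $|J|\leq |K|$ is enforced, and that the ensuing $z_J$ are genuinely independent of the Hadamard parametrix $H$ used to perform the Wick expansion; this last independence follows from the $\mathrm{Had}(M,h)$-equivariance as in \eqref{eq_3_time_ordered_had_equivariance}, but must be invoked explicitly.
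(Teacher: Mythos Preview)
Your proposal is correct and follows essentially the same route as the paper. Both arguments reduce to Wick monomials via $\varphi$-locality and the Action Ward identity, then show that the coefficients $z_J$ (your notation) are smooth functions times $\delta$ on the diagonal, and finally invoke Peetre--Slov\'ak together with the covariance identity (i.e.\ the machinery behind Theorem~\ref{thm_2_Moretti_Kavhkine}) to pin down the form of $C^{K-J}_{(M,h)}$.

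The only organisational difference is in how the $c$-number nature of each coefficient is established. You do a direct Wick expansion \`a la Lemma~\ref{lemma_3_Wick_expansion} and then use the support property plus $\mu$SC plus H\"ormander's Theorem~5.2.3 to force $z_J=C^{K-J}(x_1)\delta$. The paper instead runs an explicit induction on the multi-index $K$: it defines $\Psi_p^0$ as the difference between $Z_p^K$ and the already-constructed lower-order terms, computes $d\Psi_p^0[\varphi]=0$ using field independence and the binomial identity $j\binom{k}{j}=k\binom{k-1}{j-1}$, and concludes that $\Psi_p^0$ is a $c$-number. This inductive step is what produces the binomial coefficient $\binom{K}{J}$ naturally; in your approach you would need to unwind the combinatorics of the Taylor expansion to see it, which you gloss over. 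Both routes are valid; the paper's induction is slightly cleaner for bookkeeping, while yours leans more directly on the Wick-expansion technology already in place.

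One minor correction: your argument for ruling out logarithms via ``the initial value condition $(iv)$'' is off---condition $(iv)$ says $Z_0=0$, which concerns $p=0$ and has nothing to do with the $\ln(\lambda)$ terms in scaling. The actual suppression of logarithms comes from the coordinate-scaling argument in Step~4 of the proof of Theorem~\ref{thm_2_Moretti_Kavhkine}, which you correctly say you will re-use; just don't attribute it to $(iv)$.
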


\begin{proof}
By $\varphi$-locality, axiom $(ix)$ in Theorem \eqref{thm_2_uniqueness_TO}, arguing as in the proof of Lemma \ref{lemma_3_Wick_expansion} evaluating $Z_p(A_1,\ldots,A_p)$, up to some fixed order in $\hbar$, depends only on the evaluation of elements of the form $Z_p^{K}\big(\phi^{k_1}_{(M,h,H)},\ldots,\phi^{k_p}_{(M,h,H)}\big)$, where $K=(k_1,\ldots,k_p)$ is a multi-index. Again we can forget about algebra elements carrying derivatives using the Action-Ward identity (axiom $(viii)$ in Theorem \ref{thm_2_uniqueness_TO}).
We stress that by $\phi^{j_i}_{(M,h,H)}(x_i)$, in \eqref{eq_Z_induction}, we mean the integral kernel associated to the algebra value distribution $\phi^{j_i}_{(M,h,H)}:\mathcal{D}(M)\to \mathfrak{A}_{\mu c}(M,h)$. The proof goes through double induction over the multi-index $K$ and the number $p$ of variables. First we assume that $Z^{J}_{q}$ have been defined for all $J\leq K$ and $q\leq p$, then \eqref{eq_def_Z} defines $Z^{K}_{p+1}$ in terms of the previous coefficients and the time ordered products $T$, $\widetilde{T}$. Fix a value of $p$ and increase the multi-index $K$ by $1$ in one of its components. Let 
\begin{equation}\label{eq_psi_induction}
\begin{aligned}
	\Psi_p^{{0}}(x_1,\ldots,x_p)\doteq & Z_p^{K}\big(\phi^{k_1}_{(M,h,H)},\ldots,\phi^{k_p}_{(M,h,H)}\big)(x_1,\ldots,x_p) \\
	 &- \sum_{J<K}\binom{K}{J}C^{K-J}_{(M,h)}(x_1)\delta(x_1,\ldots,x_p) \phi^{J}_{(M,h,H)}(x_1,\ldots, x_p),
\end{aligned}
\end{equation}
where $J< K$ if $j_{i_0}<k_{i_0}$ for at least one $i_0 \in \{1,\ldots,p\}$ and $j_i\leq k_i$ for all $i \in \{1,\ldots,p\} $, $i\neq i_0$. Consider therefore 
\begin{align*}
	&d\Psi_p^{{0}}[\varphi](x_1,\ldots,x_p,y) = dZ_p^{K}\big(\phi^{k_1}_{(M,h,H)},\ldots,\phi^{k_p}_{(M,h,H)}\big)[\varphi](x_1,\ldots,x_p,y)\\
	& \ \ -\sum_{J<K}\binom{K}{J}C^{K-J}_{(M,h)}(x_1)\delta(x_1,\ldots,x_p)d \Big(\phi^{J}_{(M,h,H)}(x_1,\ldots,x_p)\Big)[\varphi](y)\\
	&=\sum_{i=1}^p\bigg(k_i Z_p^{K-1_i}(\phi^{k_1}_{(M,h,H)},\ldots,\phi^{k_i-1}_{(M,h,H)},\ldots,\phi^{k_p}_{(M,h,H)})(x_1,\ldots,x_p,y) \\
	& \ \ - \sum_{J<K}j_i\binom{K}{J}C^{K-J}_{(M,h)}(x_1)\delta(x_1,\ldots,x_p)\phi^{J-1_i}_{(M,h,H)}(x_1,\ldots,x_p) \bigg)\delta(x_i,y).
\end{align*}
Taking into account \eqref{eq_psi_induction} and using the binomial identity $j\binom{k}{j}= k\binom{k-1}{j-1}$ for each index $i=1,\ldots,p$, we get that 
\begin{align*}
	&d\Psi_p^{{0}}[\varphi](x_1,\ldots,x_p,y)  \\
	&=\sum_{i=1}^p\Bigg(\sum_{J\leq K-1_i}k_i\binom{K-1_i}{J}C^{K-J-1_i}_{(M,h)}(x_1)-\sum_{J<K}j_i\binom{K}{J}C^{K-J}_{(M,h)}(x_1) \Bigg)\phi^{J-1_i}_{(M,h,H)}(x_1,\ldots,x_p)\delta(x_1,\ldots,x_p,y).\\
	&=0.
\end{align*}
Thus implying that $\Psi_p^{{0}}(x_1,\ldots,x_p)$ is a $c$-number. In particular given the support property of $Z_p$ we necessarily have (see Theorem 5.2.3 in \cite{hormanderI}) $\Psi_p^{{0}}[M,h](x_1,\ldots,x_p)=C_{K}[M,h](x_1)\delta(x_1,\ldots,x_p)1_{(M,h)}$. Confronting with \eqref{eq_psi_induction} $C_{K}[M,h_t](x_1)$ must scale homogeneously with degree $\frac{n-2}{2}(\sum_{i=1}^p k_i-j_i)$. Taking $f\in C^{\infty}(M^{p-1})$ with $f\vert_{\Delta_{p-1}(M)}\equiv 1$, we define
\begin{equation}\label{eq coeff}
	C_{K}[M,h](x)=\int_{M^{p-1}} f(x_2,\ldots,x_p)\Psi_p^{{0}}[M,h](x_1,\ldots,x_p)d\mu_g(x_1,\ldots,x_p).
\end{equation}
A standard propagation of singularity argument shows that the coefficient defined by \eqref{eq coeff} is smooth, that is $C_{K}[M,h]\in C^{\infty}(M)$. The mapping $\Gamma^{\infty}(HM) \ni h \mapsto C_{K}[M,h]\in C^{\infty}(M)$ is weakly regular since the right hand side of \eqref{eq_psi_induction} smeared with $f$ defines a weakly regular mapping $\Gamma^{\infty}(HM) \ni h \mapsto \Psi_p^0[h](\cdot,f) \in C^{\infty}(\mathbb{R}^d\times M)$ by axiom $(xi)$ in Theorem \ref{thm_2_uniqueness_TO}. Moreover, by locality and covariance (property $1)$ in Theorem \ref{thm_2_uniqueness_TO}), $\Psi_p^0[h](x,f) $ does depend at most on the germ of $h$ at $x$. We can now apply Theorem \ref{thm_2_Moretti_Kavhkine} to show that the coefficients $C_{K}[M,h](x)$ are polynomials in the scalars obtained from coefficients of $j^rh$, $h \in \Gamma^{\infty}(M\leftarrow HM)$, with finite jet order $r$ bounded globally and with homogeneous scaling having degree $\frac{n-2}{2}(\sum_{i=1}^p k_i-j_i)$.
\end{proof}

\cleardoublepage
\addcontentsline{toc}{chapter}{Bibliography}
\bibliography{Report}


\cleardoublepage
\pagenumbering{gobble}
\thispagestyle{plain}			
\section*{Acknowledgements}
I would like to thank my PhD supervisor Romeo Brunetti for guidance during the PhD and the choice of a project fitting my strengths and, more importantly, my tastes. Moreover, I would like to thank the whole Mathematical Physics group of Trento, and especially Nicolò Drago, for many discussions and clarifications regarding an uncountable amount of topics. I also express my gratitude to the referees for reading this thesis, especially Pedro Lauridsen Ribeiro, who also gave a clarifying explanation of the various topologies for microcausal functionals, suggested the structure of the proof of \Cref{lemma_1_regular_density} and offered many other valuable inputs regarding the classical project. Many thanks to Klaus Fredenhagen as well for inviting me to Hamburg, discussing the thesis projects at an earlier stage and giving many suggestions regarding the quantum project. Finally, this PhD experience would not have been such a lively and unforgettable experience without all the PhD students I have met in Trento, G\"ottingen and Leipzig.

\vspace{1.5cm}
\hfill
Andrea Moro, Trento, July 2023

\newpage				
\thispagestyle{empty}
\mbox{}

\end{document}